\declaretheorem[numberwithin=section]{theorem}
\declaretheorem[sibling=theorem]{lemma, proposition, corollary, claim, goal, conjecture}
\declaretheorem[style=remark, sibling=theorem]{remark}
\newcommand{\sidenote}{\footnote}
\patchcmd{\@maketitle}{\global\topskip42\p@\relax}
  {\global\topskip42\p@\relax \vspace*{-1ex}}
  {}{}
\crefname{equation}{}{}
\newcommand{\step}[1]{\smallskip\noindent\emph{#1}}
\newcommand{%
    \def\svgwidth{\columnwidth}
    \import{.}{.pdf_tex}
}[2]{%
    \def\svgwidth{#1\columnwidth}
    \import{.}{#2.pdf_tex}
}
\newcommand{\dletter}{\mathrm{d}}
\newcommand{\dd}{\mathop{}\!\dletter}
\newcommand{\mdd}[2][]{
\ifthenelse{\equal{#1}{}}
    {\def\temp {\dletter}}
    {\def\temp {\dletter^{{#1}}}}
\mathop{}\!\temp #2}
\newcommand{\mpd}[2][]{
\ifthenelse{\equal{#1}{}}
    {\def\temp {\partial}}
    {\def\temp {\partial^{{#1}}}}
\temp #2}
\newcommand{\dv}[3][]{\frac{\mdd[#1]{#2}}{\mdd{#3}\ifthenelse{\equal{#1}{}}
    {\def\temp {}}
    {\def\temp {^{#1}}}\temp}}
\newcommand{\pdv}[3][]{\frac{\mpd[#1]{#2}}{\mpd{#3}\ifthenelse{\equal{#1}{}}
    {\def\temp {}}
    {\def\temp {^{#1}}}\temp}}
\newcommand{\Grad}{\nabla}
\DeclarePairedDelimiter{\norm}{\lVert}{\rVert}
\DeclarePairedDelimiter{\abs}{\lvert}{\rvert}
\DeclarePairedDelimiter{\floor}{\lfloor}{\rfloor}
\DeclarePairedDelimiter{\set}{\{}{\}}
\DeclarePairedDelimiterX{\braket}[2]{\langle}{\rangle}{#1 \delimsize\vert #2}
\renewcommand{\vec}[1]{\if#1\relax\bm{#1}\else\mathbf{#1}\fi}
\newcommand{\conj}[1]{\overline{#1}}
\DeclareMathOperator{\SO}{SO}
\DeclareMathOperator{\Supp}{supp}
\DeclareMathOperator{\Ric}{Ric}
\newcommand{\card}{\#}
\newcommand{\Z}{\mathbf{Z}}
\newcommand{\R}{\mathbf{R}}
\newcommand{\side}[2]{\text{\textup{(#1 of }}\cref{#2}\text{\textup{)}}}
\newcommand{\genlegendre}[4]{%
  \genfrac{(}{)}{}{#1}{#3}{#4}%
  \if\relax\detokenize{#2}\relax\else_{\!#2}\fi
}
\renewcommand*\env@matrix[1][*\c@MaxMatrixCols c]{%
  \hskip -\arraycolsep
  \let\@ifnextchar\new@ifnextchar
  \array{#1}}
\newcommand*\owedge{\mathpalette\@owedge\relax}
\newcommand*\@owedge[1]{%
  \mathbin{%
    \ooalign{%
      $#1\m@th\bigcirc$\cr
      \hidewidth$#1\m@th\wedge$\hidewidth\cr
    }%
  }%
}
\let\oldfrac\frac
\newcommand{\groupit}[1]{#1}
\newcommand{\nogroupit}[1]{#1}
\renewcommand{\frac}[2]{%
  \setbox\z@\hbox{$#1$}
  \setbox\tw@\hbox{$#2$}
  \ifdim\wd\z@>1em \let\groupornot@i\groupit\else\let\groupornot@i\nogroupit\fi
  \ifdim\wd\tw@>1em \let\groupornot@ii\groupit\else\let\groupornot@ii\nogroupit\fi
  \mathchoice
  {\oldfrac{#1}{#2}}
  {\oldfrac{#1}{#2}} 
    {\groupornot@i{#1}/\groupornot@ii{#2}}
    {\groupornot@i{#1}/\groupornot@ii{#2}}
}
\def\XXint#1#2#3{{\setbox0=\hbox{$#1{#2#3}{\int}$ }
\vcenter{\hbox{$#2#3$ }}\kern-.6\wd0}}
\newcommand{\Rc}{R_{\bullet}}
\numberwithin{equation}{section}
\author{Onyx Gautam}
\date{\today}
\title{Late-time tails and mass inflation for the spherically symmetric Einstein--Maxwell--scalar field system}
\begin{document}

\maketitle
\begin{abstract}
We establish a decay result in the black hole exterior region of spherically
symmetric solutions to the Einstein--Maxwell--scalar field system arising from
compactly supported admissible data. Our result allows for large initial data,
and it is the first decay statement for higher order derivatives of the scalar
field.

Solutions to this model generically develop a singularity in the black hole
interior. Indeed, Luk--Oh \cite{luk-oh-scc1,luk-oh-scc2} identify a generic class
of initial data that produces \(C^2\)-future-inextendible solutions. However,
they leave open the question of mass inflation: does the Hawking mass become
identically infinite at the Cauchy horizon? By work of
Luk--Oh--Shlapentokh-Rothman \cite{luk2022scattering}, our decay result implies
mass inflation for sufficiently regular solutions in the generic class
considered by Luk--Oh \cite{luk-oh-scc1,luk-oh-scc2}.

Together with the methods and results of Luk--Oh \cite{luk-oh-tails}, our
estimates imply a late-time tails result for the scalar field. This result
provides another proof of generic mass inflation, through a result of Dafermos
\cite{dafermos05}. Another application of our late-time tails result, due to Van
de Moortel \cite{maxime-spacelike-null}, is the global construction of two-ended
black holes that contain null and spacelike singularities.
\end{abstract}
\tableofcontents
\section{Introduction}
\label{sec:orgc849934}
The Einstein--Maxwell--scalar field system models gravity in the presence of an
electromagnetic field and a scalar field that do not interact except through the
underlying geometry. A solution to this system is a quadruple
\((\mathcal{M},g,\varphi{},F)\) consisting of a \(4\)-dimensional manifold
\(\mathcal{M}\) with Lorentzian metric \(g\), a real-valued scalar field
\(\varphi{}\), and a \(2\)-form \(F\) (the Maxwell field) that satisfy the
following equations:
\begin{equation}\label{einstein-maxwell}
\begin{cases}
\Ric(g) - \frac{1}{2}gR(g) = 2(T^{\mathrm{(sf)}} + T^{\text{(em)}}), \\
T^{\text{(sf)}}_{\alpha{}\beta{}} = \partial{}_\alpha{}\varphi{}\partial{}_\beta{}\varphi{} - \frac{1}{2}g_{\alpha{}\beta{}}\partial{}^\mu{}\varphi{}\partial{}_\mu{}\varphi{}, \\
T^{(\text{em})}_{\alpha{}\beta{}} = F_\alpha{}^\nu{}F_{\beta{}\nu{}} - \frac{1}{4}g_{\alpha{}\beta{}}F^{\mu{}\nu{}}F_{\mu{}\nu{}}, \\
\Box{}_g\varphi{} = 0, \qquad \dd{}F = 0, \qquad \mathrm{div}_g F = 0.
\end{cases}
\end{equation}
Here \(\Ric(g)\) and \(R(g)\) are the Ricci curvature tensor and Ricci scalar
curvature of the metric \(g\), and \(\Box_g\) and \(\mathrm{div}_g\) are the
Laplace--Beltrami operator and divergence operator associated to \(g\). We
consider \cref{einstein-maxwell} in spherical symmetry (see \cref{sec:emsf}), when
\(\mathcal{M}\) admits an \(\SO(3)\)-action that preserves \(g\), \(\varphi{}\),
and \(F\).

The explicit family of Reissner--Nordström spacetimes are spherically symmetric
solutions to \cref{einstein-maxwell} with vanishing scalar field: in local
coordinates, the Maxwell field is \(2r^{-2}\mathbf{e}^2
\dd{}t\wedge{}\!\dd{}r\), and the metric takes the form
\begin{equation}\label{reissner-nordstrom-metric}
g_{M,\mathbf{e}} = -\Bigl(1 - \frac{2M}{r} + \frac{\mathbf{e}^2}{r^2}\Bigr)\dd{}t^2 + \Bigl(1 - \frac{2M}{r} + \frac{\mathbf{e}^2}{r^2}\Bigr)^{-1}\dd{}r^2 + r^2g_{\mathbf{S}^2},
\end{equation}
where \(M\) and \(\mathbf{e}\) are real-valued constants to be thought of as
mass and charge, respectively, and \(g_{\mathbf{S}^2}\) is the round metric on
the sphere of radius \(1\). Special cases of the Reissner--Nordström metrics
include the Schwarzschild metrics (\(\mathbf{e} = 0\)) and the flat Minkowski
metric (\(M = \mathbf{e} = 0\)). A Reissner--Nordström spacetime with \(0 <
\abs{\mathbf{e}}<M\) is said to be subextremal. We restrict our attention to
solutions of \cref{einstein-maxwell} that settle down to a subextremal
Reissner--Nordström spacetime (namely the so-called future-admissible solutions,
which are discussed in \cref{sec:admissible-data}).
\subsection{Statement of main theorem}
\label{sec:org5c00429}
\begin{theorem}[Main theorem, rough version]
Solutions to \eqref{einstein-maxwell} arising from smooth and compactly supported
future-admissible spherically symmetric Cauchy data satisfy the following
estimate on the event horizon:
\begin{equation}
\abs{(v\underline{\partial{}}_v)^k\varphi{}}|_{\mathcal{H}} \lesssim_{\epsilon{},k,\varphi{}} v^{-1+\epsilon{}}\text{ for } \epsilon{}>0\text{ and }k\ge 0.
\end{equation}
Here \(v\) is an outgoing null coordinate normalized appropriately on a curve of
constant area-radius \(r\) (see \cref{sec:coordinate-systems}) and
\(\underline{\partial{}}_v\) is the associated coordinate derivative in
\((r,v)\) coordinates.
\label{main-theorem-rough}
\end{theorem}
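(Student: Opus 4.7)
The plan is to combine the polynomial convergence of the exterior geometry to a subextremal Reissner--Nordström reference with an $r^p$-weighted energy method applied to iterated $v\underline{\partial}_v$-commuted wave equations, and then convert bulk energy decay into pointwise decay on $\mathcal{H}$ via the redshift. First I would fix a double null gauge $(u,v)$ in which $\underline{\partial}_v$ realizes the coordinate derivative of the theorem statement on curves of constant $r$ and is asymptotically the stationary Killing field of the final Reissner--Nordström spacetime along $\mathcal{H}$. The scalar wave equation for spherically symmetric $\varphi$ reduces to
\begin{equation*}
\partial_u\partial_v(r\varphi) = \frac{\partial_u\partial_v r}{r}(r\varphi),
\end{equation*}
where $\partial_u\partial_v r$ is governed by the Einstein--Maxwell constraints and decays to its Reissner--Nordström value at a polynomial rate that must either be inherited from the admissibility framework or tracked in a coupled bootstrap alongside the scalar field estimates.

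For the base case $k=0$, I would run the Dafermos--Rodnianski $r^p$-hierarchy for $p\in[0,2)$ on outgoing null cones, losing at most $\epsilon$ in the top weight to absorb the non-stationarity of the background. This yields integrated-in-$v$ energy decay at rate $v^{-2+\epsilon}$, and a mean value argument combined with the redshift vector field transversal to $\mathcal{H}$ promotes this to the pointwise bound $|\varphi|_{\mathcal{H}} \lesssim v^{-1+\epsilon}$.

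For the inductive step, assume the estimate for all orders $0\le j<k$ and commute the wave equation with $v\underline{\partial}_v$. Since $\underline{\partial}_v$ is exactly Killing only on the Reissner--Nordström reference, the commutator produces errors of schematic form
\begin{equation*}
[v\underline{\partial}_v,\Box_g]\psi = \sum_{j<k}(\text{decaying geometric coefficient})\cdot(v\underline{\partial}_v)^j\psi,
\end{equation*}
where the geometric coefficients inherit polynomial $v$-decay from the convergence of the metric to Reissner--Nordström. Rerunning the $r^p$-hierarchy for $(v\underline{\partial}_v)^k\varphi$ with these error terms as inhomogeneity closes provided the source decays with enough margin over the target rate, which is precisely what the inductive hypothesis and the geometric convergence supply. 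The redshift near $\mathcal{H}$ again converts the bulk decay into the desired pointwise bound.

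The main obstacle is the large-data regime: the decay rates of the geometric coefficients are themselves driven by the scalar field and its derivatives, so the argument must be run as a coupled bootstrap in which higher regularity of the metric difference $g-g_{M,\mathbf{e}}$ is propagated alongside that of $\varphi$. A secondary difficulty is preventing the loss parameter $\epsilon$ from degrading with $k$: the $r^p$-estimates for each commuted equation must be arranged so that the exponent $\epsilon$ in the final bound depends only on the initial choice of weights and absorbs the commutator losses once and for all, rather than accumulating further loss at each commutation level.
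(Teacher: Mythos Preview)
Your commutation schematic is the central gap. The claim that $[v\underline{\partial}_v,\Box_g]\psi$ reduces to a sum of $(v\underline{\partial}_v)^j\psi$ for $j<k$ with decaying coefficients is false: even on exact Reissner--Nordstr\"om, where $\underline{\partial}_v$ is Killing and only the weight $v$ generates error, $[\Box,v\underline{\partial}_v]$ is a second-order operator whose top-order part contains ingoing derivatives (schematically $\partial_u\partial_v$, which after substituting the wave equation still leaves first-order $\partial_u$-terms). On the dynamical background the commutator with $S=v\underline{\partial}_v$ genuinely produces $U^2$-terms (with $U=(-\partial_ur)^{-1}\partial_u$) near the horizon and $\overline{\partial}_r^2$-terms with borderline $r$-weights in the far region; neither is controlled by an inductive hypothesis on $(v\underline{\partial}_v)^{<k}\varphi$ alone. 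The paper handles this by introducing two auxiliary commutators $U$ and $V$ and a strict ordering $U\prec V\prec S$: control of $S^k\varphi$ requires prior control of all mixed products $U^aV^bS^c\varphi$ with smaller multi-index, and the top-order $U^2$-term is absorbed not by decay of its coefficient but by the good \emph{sign} of the redshift factor $\varpi-\mathbf{e}^2/r>0$ in the energy identity. Your scheme, which commutes only with $v\underline{\partial}_v$, has no mechanism to generate or close estimates for these transverse derivatives.

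The bootstrap you propose for the large-data regime is also unlikely to close. Without smallness there is nothing to beat the constants in the improvement step, and some differentiated geometric quantities (for instance $S\varpi$ and $rS\log(-\gamma)$) do not decay but grow slowly in $r$---contrary to your assumption of ``decaying geometric coefficients.'' The paper avoids any bootstrap by arranging the commutation errors so that top-order contributions either carry the redshift sign or are supported in $\{r\ge R_\bullet\}$ where they can be absorbed by choosing $R_\bullet$ large, while all remaining terms are strictly lower in the $(U,V,S)$ hierarchy; this yields a clean induction with no circularity, which is precisely what makes the large-data statement go through.
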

A precise version of \cref{main-theorem-rough} is given in
\cref{sec:reduction-to-characteristic}. For a version of \cref{main-theorem-rough}
stated in an Eddington--Finkelstein-type gauge (where the outgoing null
coordinate is normalized on the event horizon), see
\cref{main-theorem-double-null}.
\begin{remark}[Applications]
We already mention that our main theorem implies mass inflation (see
\cref{sec:mass-inflation}). Moreover, the estimates used to prove
\cref{main-theorem-rough}, when combined with the methods and results of Luk--Oh
\cite{luk-oh-tails}, imply a late-time tails result (see
\cref{sec:intro:late-time-tails}), which gives another proof of mass inflation (see
\cref{mass-inflation-alternate}) and has a further application to the co-existence
of spacelike and null singularities in the interior of dynamical spherically
symmetric black holes (see \cref{sec:coexistence}).
\end{remark}
\begin{remark}[Coordinate derivatives associated to \(v\)]
The vector field \(\underline{\partial{}}_v\) referenced in \cref{main-theorem-rough} is
tangent to surfaces of constant \(r\). It is timelike in the exterior region and
is generically not tangent to the event horizon (which is not a surface of
constant \(r\) in evolution). On the other hand, \(\partial_{\bar{v}}\) is null and
always tangent to the event horizon. On Reissner--Nordström, \(\underline{\partial{}}_v\)
is the stationary Killing field, which coincides with \(\partial_{\bar{v}}\) on the event
horizon.
\end{remark}
\begin{remark}[Large data]
Our theorem does not have a smallness assumption, so we can handle large data.
In the proof it is crucial to observe certain reductive structure in the error
terms arising from commutation (see the discussion in
\cref{sec:structure-of-proof}).
\end{remark}
\begin{remark}[Decay in a finite-\(r\) region]
Our proof of \cref{main-theorem-rough} in fact shows that, for arbitrarily large \(R > 0\), we have
\(\abs{(v\underline{\partial{}}_v)^k\varphi{}}\lesssim_{\epsilon{},k,\varphi{},R}
v^{-1 + \epsilon{}}\) in a finite-\(r\) region \(\set{r\le R}\) contained in the
exterior of the black hole. This is
because the scaling vector field \(S\) (see \cref{sec:commutator-vector-fields}) is
equal to \(v\underline{\partial{}}_v\) in this region. We do not expect the same
result to hold if \(\underline{\partial{}}_v\) is replaced by \(\partial{}_v\).
\end{remark}
\subsection{Applications and related work}
\label{sec:org67d0f92}
\subsubsection{Strong cosmic censorship}
\label{sec:org15eecaa}
The Reissner--Nordström spacetimes are geodesically incomplete. In physical
terms, the fate of an observer who enters the black hole region is determined
only for finite time. The geodesic incompleteness of the Schwarzschild
spacetimes is due to a singularity in the black hole interior (at
``\(\set{r=0}\)'' in the coordinates of \cref{reissner-nordstrom-metric}) that
destroys incoming observers. Sbierski \cite{sbierski-schwarzschild} rigorously
established the strength of this singularity by showing that Schwarzschild is
\(C^0\)-inextendible, namely that it does not isometrically embed as a proper
subset of Lorentzian manifold with continuous metric. In contrast, the
subextremal Reissner--Nordström spacetimes admit infinitely many inequivalent
extensions across the Cauchy horizon as smooth solutions to
\cref{einstein-maxwell}. Physically speaking, an observer falling through a
Reissner--Nordström black hole continues unimpeded, but Einstein's equations do
not uniquely specify how they continue. Penrose conjectured that examples such
as Reissner--Nordström are unstable to perturbations, so that they would not
appear in nature. We now state a modern formulation of this conjecture.
\begin{conjecture}[Strong cosmic censorship]
A generic asymptotically flat solution to the Einstein equations is future
inextendible as a suitably regular Lorentzian manifold.
\label{strong-cosmic-censorship}
\end{conjecture}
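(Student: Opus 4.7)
The plan is to follow the spherically symmetric reduction strategy pioneered by Christodoulou and by Dafermos, and then to describe the obstructions to going beyond symmetry. First I would fix a precise meaning of ``suitably regular''; the standard choices are the $C^2$-formulation, which asks that no $C^2$-extension exist, and the stronger Christodoulou $H^1_{\mathrm{loc}}$-formulation, which asks that no extension with locally square-integrable Christoffel symbols exist. The stronger the formulation, the sharper the quantitative upper and lower bounds needed at the Cauchy horizon.

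Specializing to the spherically symmetric Einstein--Maxwell--scalar field system (vacuum spherical symmetry is trivial by Birkhoff), I would first establish that future-admissible data yield a maximal development of the expected global structure: an exterior region settling to a subextremal Reissner--Nordström end state with definite final mass and charge, and an interior region terminating at a Cauchy horizon. I would then invoke \cref{main-theorem-rough} to obtain polynomial decay of $\varphi$ and its higher derivatives along the event horizon, and propagate this decay into the interior using the scattering framework of \cite{luk2022scattering}.

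With exterior decay in hand, the dynamical core of the argument is mass inflation at the Cauchy horizon: the identical blow-up of the Hawking mass there for generic initial data. By \cite{dafermos05} this follows from suitable polynomial lower bounds on the scalar radiation, which in turn come from the late-time tails estimates obtained by combining \cref{main-theorem-rough} with the methods of \cite{luk-oh-tails}. Mass inflation, together with the blueshift instability at the inner horizon, then obstructs $C^2$-extensions across the Cauchy horizon and yields the $C^2$-formulation of \cref{strong-cosmic-censorship} in this symmetry class and matter model.

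The hard part is going from spherical symmetry to genuine asymptotically flat data. Outside of symmetry there is no reduction to a $(1+1)$-dimensional double-null problem, the Hawking mass is replaced by more delicate quasi-local quantities such as Bondi-type masses, and the mode-by-mode exterior decay analysis on which \cref{main-theorem-rough} ultimately rests becomes substantially more involved because of the coupling to gravitational radiation. A second obstacle, present already in spherical symmetry, is sharpening the inextendibility class from $C^2$ down to Christodoulou's formulation, which would require matching pointwise upper and lower bounds at the Cauchy horizon with essentially no loss of powers.
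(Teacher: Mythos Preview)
This statement is a \emph{conjecture}, not a theorem, and the paper does not attempt to prove it. What you have written is a research program outline rather than a proof, and you yourself acknowledge the fatal gap: going from the spherically symmetric Einstein--Maxwell--scalar field model to ``a generic asymptotically flat solution to the Einstein equations'' is not a technical detail but the entire open problem. The paper uses \cref{strong-cosmic-censorship} purely as motivating background, then reports on the status of the conjecture for the specific spherically symmetric matter model \eqref{einstein-maxwell}, culminating in the Luk--Oh result (\cref{luk-oh-scc}) that the $C^2$-formulation holds for a generic class within that symmetry class. Even your spherically symmetric sketch has issues: you cite \cref{main-theorem-rough} and the tails of \cite{luk-oh-tails} as inputs to the $C^2$-inextendibility via \cite{dafermos05}, but the actual $C^2$-inextendibility result of Luk--Oh predates and does not rely on this paper's decay estimates; what the present paper contributes is \emph{mass inflation} (a stronger, separate statement) for the Luk--Oh generic class.

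In short, there is no proof to compare against because the paper offers none; the correct response to this statement is to recognize it as an open conjecture and move on.
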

The original motivation behind \cref{strong-cosmic-censorship} was that an observer
crossing the Cauchy horizon of a Reissner--Nordström black hole would
observe the outside universe shifted infinitely to the blue
\cite[p.~222]{osti_4125134}.\sidenote{Contrast this to the redshift effect by which an observer outside the
event horizon receives signals from an observer crossing the event horizon as
infinitely shifted to the red.} At the level of linear perturbations,
\cite{osti_4461850,mcnamara78,Chandrasekhar1982OnCT} argued that the blueshift
effect should manifest by exponentially amplifying gravitational disturbances
from the black hole exterior, and indeed \cite{Franzen_2016,Luk_2017} verified that
generic solutions to the linear wave equation on Reissner--Nordström remain
bounded but are not in \(H^{1}_{\text{loc}}\) near the Cauchy horizon. See
\cite{Luk:2015pay,Hintz_2017,Franzen_2020,ma2023precise} for extensions of
these results to subextremal Kerr and \cite{Hintz_2017_2} for the case of
positive cosmological constant. The instability of the subextremal Kerr Cauchy
horizon under the linearized Einstein vacuum equations has been established by
\cite{Sbierski_2023} and \cite{Ma_2023}. See
\cite{Van_de_Moortel_2018,kehle2022strong} for results on
\cref{strong-cosmic-censorship} in the context of the spherically symmetric
Einstein--Maxwell--Klein Gordon system, and \cite{rossetti2024strong} for the
case of positive cosmological constant.

We now survey the main results related to \cref{strong-cosmic-censorship} for the
nonlinear and spherically symmetric matter model \eqref{einstein-maxwell}, which provide a nearly complete
understanding of the problem. The strongest expectation would be that
perturbations of Reissner--Nordström are \(C^0\)-inextendible, like
Schwarzschild. In fact this expectation is false, but the conjecture holds in the higher
regularity class of \(C^2\).
\begin{theorem}[Strong cosmic censorship fails in \(C^0\); Dafermos \cite{dafermos03}, Dafermos--Rodnianski \cite{dafermos-rodnianski-price}]
Future-admissible\sidenote{The condition of future-admissibility ensures that the future development of the
Cauchy data has the same Penrose diagram as Reissner--Nordström (see
\cite[Def.~3.1]{luk-oh-scc1} for a precise definition).} solutions to \cref{einstein-maxwell} with non-vanishing charge are
\(C^0\)-extendible.
\end{theorem}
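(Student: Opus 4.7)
The plan is to propagate polynomial decay from the event horizon into the black hole interior and use it to construct a continuous extension of the metric, the area radius, and the scalar field across the would-be Cauchy horizon. I would organize the argument into three stages: exterior decay, interior propagation, and assembly of the extension.

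First, in the black hole exterior, I would appeal to a Price's-law-type estimate (\`a la Dafermos--Rodnianski) to obtain polynomial decay of $\varphi$ and $\partial_v \varphi$ along the event horizon $\mathcal{H}^+$, starting from compactly supported future-admissible data. This is the nonlinear analogue of the classical Price's law on Reissner--Nordstr\"om. The proof would combine the redshift vector field at $\mathcal{H}^+$ with an $r^p$-weighted hierarchy in the asymptotically flat region, iterated to upgrade uniform bounds to polynomial decay; nonlinear error terms are absorbed by a bootstrap on suitable weighted norms of $\varphi$.

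Next, I would work in double-null coordinates $(u,v)$ in the black hole interior with $\mathcal{H}^+ = \set{u = 0}$, and decompose the region between $\mathcal{H}^+$ and the Cauchy horizon into a red-shift region near $\mathcal{H}^+$, a no-shift region of bounded extent, and a blue-shift region near the Cauchy horizon. In the red-shift region the interior redshift effect would convert $v$-decay on $\mathcal{H}^+$ into pointwise decay for $\varphi$ and $\partial_u \varphi$ in an $\mathcal{H}^+$-neighborhood. The no-shift region is traversed in bounded affine parameter, so the wave equation $\Box_g \varphi = 0$, the Raychaudhuri equations for $\partial_u r$ and $\partial_v r$, and the propagation equation for $\log \Omega^2$ would give uniform bounds via Gr\"onwall. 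In the blue-shift region the monotonicity of the Hawking mass, combined with the polynomial decay inherited from $\mathcal{H}^+$, would be used to show that the characteristic integrals $\int |\partial_v \varphi|^2 \, \dd{v}$ and $\int |\partial_u \log \Omega^2|\, \dd{u}$ remain bounded up to the Cauchy horizon.

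Finally, I would reparametrize by $V = -\exp(-\kappa_- v)$, where $\kappa_-$ is the surface gravity of the inner horizon of the final Reissner--Nordstr\"om, and verify that in the $(u,V)$ gauge the metric coefficients, the area radius $r$ (which limits to a positive value $r_-$), and the scalar field $\varphi$ all extend continuously across $\set{V = 0}$. The hard step will be the blue-shift estimate: the exponential amplification of ingoing derivatives along ingoing null cones threatens to overwhelm the polynomial decay inherited from $\mathcal{H}^+$, so one has to pass to the renormalized $V$-gauge and use the precise polynomial rates with essentially no slack, exploiting the fact that $2m/r$ stays uniformly below $1$ in the blue-shift region in order to keep $\Omega^2_{(V)}$ from degenerating.
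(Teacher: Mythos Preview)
The paper does not supply a proof of this theorem: it is quoted in the introductory survey with attribution to Dafermos \cite{dafermos03} and Dafermos--Rodnianski \cite{dafermos-rodnianski-price}, and no argument is given beyond the citation. So there is nothing in the paper to compare your proposal against; your outline is essentially a reconstruction of the cited works themselves, and the three-stage architecture (exterior Price's law, red-shift/no-shift/blue-shift propagation in the interior, reparametrization to build the extension) is indeed how those papers proceed.

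That said, your final sentence contains a genuine error. You write that in the blue-shift region one ``exploit[s] the fact that $2m/r$ stays uniformly below $1$\ldots in order to keep $\Omega^2_{(V)}$ from degenerating.'' This is false, and in fact the opposite is the whole point of the surrounding discussion in the paper: generically the Hawking mass blows up at the Cauchy horizon (mass inflation), so $2m/r \to +\infty$ there. The $C^0$-extendibility result is robust precisely because it does \emph{not} rely on control of $m$; Dafermos's argument instead controls $r$, $\varphi$, and $\log\Omega^2 + 2\kappa_- v$ directly via the transport equations and the inherited polynomial decay, with the key observation being that the exponential blue-shift factor, once absorbed into the $V$-reparametrization, leaves integrable errors provided the horizon data decays fast enough. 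The mass (and hence the curvature) is allowed to diverge; only the metric coefficients in the regular gauge, the area radius, and the scalar field need to extend continuously. You should replace the $2m/r<1$ claim with the correct mechanism: monotonicity arguments give a positive lower bound for $r$, and the wave equation for $\log\Omega$ together with the decay of $\partial_u\varphi\,\partial_v\varphi$ gives the required control on $\Omega^2_{(V)}$.
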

Dafermos--Luk \cite{dafermos-2017-inter-dynam} showed an analogous result outside
symmetry, namely that perturbations of subextremal Kerr (as solutions to the
Einstein vacuum equations) remain \(C^0\)-extendible.
\begin{theorem}[Strong cosmic censorship holds in \(C^2\); Luk--Oh \cite{luk-oh-scc1,luk-oh-scc2}]
There is a class \(\mathcal{G}\) of future-admissible Cauchy data for
\cref{einstein-maxwell} such that
\begin{enumerate}
\item \(\mathcal{G}\) is generic, namely open (in a weighted \(C^1\) topology) and dense (in a weighted \(C^\infty\) topology),
\item solutions arising from data in \(\mathcal{G}\) are \(C^2\)-future-inextendible.
\end{enumerate}
\label{luk-oh-scc}
\end{theorem}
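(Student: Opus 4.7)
The argument naturally splits into two parts, corresponding to the two papers \cite{luk-oh-scc1,luk-oh-scc2}. The plan is first to isolate a quantitative non-degeneracy condition on the scalar field at the event horizon that is strong enough to force $C^2$-inextendibility, and second to show that this condition defines a generic class of Cauchy data.

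For the first part, I would define the candidate class $\mathcal{G}$ as those future-admissible data whose solutions satisfy a quantitative lower bound of the form $|\partial_v \varphi| \gtrsim v^{-p}$ along $\mathcal{H}^+$, for some appropriate $p > 0$ compatible with the exterior decay. I would then install a double-null gauge $(u,v)$ in the black hole interior and partition it into three standard regions: (i) a red-shift region adjacent to $\mathcal{H}^+$, (ii) a no-shift region of intermediate areal radius, and (iii) a blue-shift region approaching the Cauchy horizon $\mathcal{CH}^+$. The red-shift and no-shift regions can be controlled by now-classical estimates due to Dafermos: one propagates the upper bounds on $\varphi$ and its derivatives without loss, and the solution remains close to a Reissner--Nordström interior. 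The blue-shift region, where the geometry exponentially amplifies transverse derivatives, is the crux. Here one must run a bootstrap argument that simultaneously controls the metric (in a suitable gauge), the scalar field, and the Hawking mass, while propagating the lower bound from $\mathcal{H}^+$ to an explicit blow-up of the Hawking mass or of a suitable curvature scalar at $\mathcal{CH}^+$. The obstruction to $C^2$-extension then follows from a geodesic argument: any candidate $C^2$-extension would force the metric components and their first two derivatives to have finite limits along a curve transversal to $\mathcal{CH}^+$, contradicting the asserted blow-up.

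For the second part, I would verify openness and density of $\mathcal{G}$. Openness in a weighted $C^1$ topology follows from continuous dependence of the event-horizon lower bound on Cauchy data in that topology, together with the strict inequality defining $\mathcal{G}$. Density in a weighted $C^\infty$ topology is more delicate: given arbitrary data $\Psi$ and a $C^\infty$ neighborhood, one constructs a perturbation $\Psi + \epsilon \chi$ with $\chi$ chosen so that its leading effect on the event-horizon tail coefficient is nontrivial; the existence of such $\chi$ is established via a first-variation analysis around $\Psi$ and a study of scattering for the wave equation on Reissner--Nordström. The main obstacle throughout is the blue-shift bootstrap in the interior: unlike the exterior problem, which enjoys red-shift and dispersive mechanisms, the blue-shift region exhibits exponential amplification that couples strongly to the geometry through the constraint equations. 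Closing the bootstrap requires a delicate balance between showing that the scalar field's lower bound persists through the incipient blow-up of the Hawking mass, and ensuring that the metric degeneration is mild enough to preserve the wave-equation estimates until the Cauchy horizon is reached.
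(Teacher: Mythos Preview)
This theorem is not proved in the paper; it is stated as a known result and attributed to Luk--Oh \cite{luk-oh-scc1,luk-oh-scc2}, with no proof or sketch given. So there is nothing in the paper to compare your proposal against.

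That said, your sketch diverges from the actual Luk--Oh argument in two substantive ways that the surrounding text of the paper makes explicit. First, the non-degeneracy condition defining $\mathcal{G}$ is an $L^2$-averaged lower bound along $\mathcal{H}^+$ (see \cref{L2-lower-bound}), not a pointwise bound $|\partial_v\varphi|\gtrsim v^{-p}$; the pointwise lower bound \cref{dafermos-mass-inflation} was in fact left open by Luk--Oh and is only established as a consequence of the present paper's late-time tails result. Second, and more seriously, the mechanism for $C^2$-inextendibility in Luk--Oh is \emph{not} blow-up of the Hawking mass. The abstract and \cref{sec:mass-inflation} of the paper state plainly that Luk--Oh leave mass inflation open; their obstruction to $C^2$-extension goes instead through showing that the Christoffel symbols fail to lie in $L^2_{\mathrm{loc}}$ near $\mathcal{CH}^+$ (equivalently, the scalar field is not in $H^1_{\mathrm{loc}}$ there), which is weaker than mass inflation but still rules out any $C^2$ extension. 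Your proposed bootstrap that tracks Hawking-mass blow-up through the blue-shift region is therefore not the argument Luk--Oh ran, and closing such a bootstrap would amount to proving the mass-inflation result that is the main contribution of the paper under review.
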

Sbierski \cite{sbierski-holonomy} showed that small data solutions in the generic
class \(\mathcal{G}\) of Luk--Oh are moreover \(C^{0,1}\)-inextendible.
\subsubsection{Mass inflation}
\label{sec:org22749a9}
\label{sec:mass-inflation}
The first attempt to understand the instability of the Cauchy horizon in
Reissner--Nordström for a nonlinear model was due to Hiscock \cite{MR617171}, who
considered an explicit solution to the spherically symmetric Einstein--null dust
system (a simpler analogue of \cref{einstein-maxwell}) with one incoming dust. In
the context of this work, the metric remains continuous at the Cauchy horizon,
but its Christoffel symbols blow up in a parallelly propagated frame. In their
seminal works \cite{israel-poisson,PhysRevD.41.1796}, Poisson--Israel considered
a null dust solution with an additional, outgoing, null dust. They showed that,
generically, the Hawking mass becomes infinite at the Cauchy horizon, and they
named this scenario \emph{mass inflation}. Because the Hawking mass bounds the
Kretschmann scalar from below \cite{Kommemi2013TheGS} (namely the full trace
\(\mathrm{Riem}_{\alpha{}\beta{}\gamma{}\delta{}}\mathrm{Riem}^{\alpha{}\beta{}\gamma{}\delta{}}\)
of the Riemann curvature tensor), mass inflation immediately implies the \(C^2\)
formulation of strong cosmic censorship.\sidenote{From the form of the Hawking mass, mass inflation also implies
inextendibility in the class of \(C^1\) spherically symmetric Lorentzian
manifolds.} Many analytic and numerical
studies \cite{MR1118553,10.2307/52772,Brady_1995} have corroborated the
phenomenon of mass inflation. See also \cite{MR4227173} for a study of mass
inflation for the Einstein--Maxwell--Klein Gordon system and
\cite{Costa_2017,rossetti2024strong} for studies in the case of positive
cosmological constant.

We now return to our discussion of \cref{einstein-maxwell}. Mathematical results on
mass inflation for the Einstein--Maxwell--scalar field system require upper
bounds and generic lower bounds for the decay of the scalar field along the
horizon. Dafermos \cite{dafermos05} showed that mass inflation follows from the
pointwise upper bound of Dafermos--Rodnianski \cite{dafermos-rodnianski-price}
and a yet unproven pointwise lower bound (see \cref{tails-theorem}).
Luk--Oh--Shlapentokh-Rothman showed that mass inflation follows from the
integrated lower bound of \cite{luk-oh-scc1,luk-oh-scc2} and yet unproven
integrated upper bounds (see \cref{main-theorem-double-null}). We state these
results in the following two theorems. In these theorems, \(\bar{v}\) is an
(appropriately normalized) Eddington--Finkelstein type coordinate, and
\(\partial_{\bar{v}}\) is the associated coordinate derivative in double null
coordinates.
\begin{theorem}[Best known upper and lower bounds along the horizon]
The following estimates hold for future-admissible solutions to \cref{einstein-maxwell}:
\begin{enumerate}
\item (Pointwise upper bound; Dafermos--Rodnianski
   \cite{dafermos-rodnianski-price}) We have
\begin{equation}\label{price-law}
\abs{\varphi}|_{\mathcal{H}} + \abs{\partial{}_{\bar{v}}\varphi{}}|_{\mathcal{H}}\lesssim_{\epsilon{},\varphi{}} \bar{v}^{-3+\epsilon{}}.
\end{equation}
\item (\(L^2\) lower bound; Luk--Oh \cite{luk-oh-scc1,luk-oh-scc2}) Solutions in the generic class \(\mathcal{G}\) of \cref{luk-oh-scc} satisfy
\begin{equation}\label{L2-lower-bound}
  \int_{\mathcal{H}} \bar{v}^p(\partial{}_{\bar{v}}\varphi{})^2 \dd{}\bar{v} = \infty \text{ for }p > 7.
\end{equation}
\end{enumerate}
\end{theorem}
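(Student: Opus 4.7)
The two parts of this theorem are essentially independent and I would attack them by distinct methods.

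For the Dafermos--Rodnianski pointwise upper bound (part 1), my plan is to implement the $r^p$-hierarchy method together with a redshift vector field near the event horizon. First I would establish a Morawetz-type integrated local energy decay estimate on the exterior region, exploiting the subextremality of the end-state Reissner--Nordström geometry and the convergence of the solution to it as $r\to\infty$. The redshift vector field of Dafermos--Rodnianski then provides uniform control of transverse derivatives of $\varphi$ up to and on the horizon itself. On the asymptotic side, weighted multiplier identities with weights $r^p$ for $p\in\{1,2\}$ generate a hierarchy of degenerate energy estimates whose bulk error terms have the favorable sign to leading order; iterating from the top of the hierarchy yields decay of the nondegenerate energy flux at the rate $\bar{v}^{-2+\epsilon}$. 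To upgrade this to the sharp $\bar{v}^{-3+\epsilon}$ pointwise rate on $\mathcal{H}$, I would commute once more with the weighted asymptotic vector field and exploit the vanishing of a Newman--Penrose type constant at null infinity, which holds precisely because the data are compactly supported; the resulting energy decay is then converted to pointwise decay via trace and Sobolev estimates on the orbits of $\SO(3)$.

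For the Luk--Oh integrated lower bound (part 2), I would argue by contrapositive. Assuming $\int_{\mathcal{H}}\bar{v}^p(\partial_{\bar{v}}\varphi)^2\dd\bar{v}<\infty$ for some $p>7$, and combining with the upper bound of part 1, one would obtain improved decay of $\varphi|_{\mathcal{H}}$. Propagating this improved decay outward through a scattering-type argument from the event horizon to null infinity would force a Newman--Penrose type quantity determined by the solution to vanish. The crux of the argument is then to exhibit a class $\mathcal{G}$ of future-admissible Cauchy data on which the relevant invariant is nonvanishing, and to show that $\mathcal{G}$ is open in a weighted $C^1$ topology and dense in a weighted $C^\infty$ topology. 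Concretely I would write down an explicit functional of the initial data that captures the obstruction, show that its vanishing locus is of positive codimension in the data space, and verify that the nonlinear evolution preserves this nonvanishing.

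The hard step in (1) is that the solution is only asymptotically Reissner--Nordström, so none of the multipliers and commutators used above commutes exactly with $\Box_g$; the error terms generated at each level of the hierarchy must be carefully structured and absorbed by a bootstrap that runs over all weights simultaneously. In (2) the delicate point is the genericity argument, which requires a quantitative understanding of how the Newman--Penrose invariant depends on the Cauchy data through the nonlinear flow, and a verification that its zero locus is nowhere dense in the appropriate admissible data space.
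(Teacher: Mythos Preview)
This theorem is not proved in the paper at all. It is stated in the introduction as a summary of prior results, with each part explicitly attributed to the cited references: part~1 to Dafermos--Rodnianski \cite{dafermos-rodnianski-price} and part~2 to Luk--Oh \cite{luk-oh-scc1,luk-oh-scc2}. The paper uses these results as inputs (together with its own new estimates) to deduce mass inflation via \cref{mass-inflation-conditional}, but it does not reprove them. There is therefore no proof in the paper to compare your proposal against.

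Your sketch is a plausible outline of how one might approach the original results, but a few remarks are in order. For part~1, the statement as given applies to general future-admissible data, not just compactly supported data, so your appeal to the vanishing of a Newman--Penrose constant from compact support is not available in that generality; the original Dafermos--Rodnianski argument does not rely on this. For part~2, your contrapositive strategy via a scattering-type propagation and a Newman--Penrose invariant is close in spirit to what Luk--Oh do: they identify an explicit dynamical quantity $\mathfrak{L}[\varphi]$ (appearing in this paper in \cref{tails-theorem}) whose nonvanishing defines the generic class $\mathcal{G}$, and the genericity argument is carried out in \cite{luk-oh-scc1,luk-oh-scc2}. If you wish to produce actual proofs, you should consult those references rather than the present paper.
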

\begin{theorem}[Best known (conditional) results on mass inflation]
\label{mass-inflation-conditional}
Mass inflation holds for future-admissible solutions to \eqref{einstein-maxwell} satisfying either of the following conditions:
\begin{enumerate}
\item (Dafermos \cite{dafermos05}) the upper bound \eqref{price-law} and a pointwise lower bound:
\begin{equation}\label{dafermos-mass-inflation}
\liminf_{\bar{v}\to \infty} \bar{v}^p\abs{\partial{}_{\bar{v}}\varphi{}}|_{\mathcal{H}}(\bar{v}) > 0\,\, \text{for some } p < 9\text{;}
\end{equation}
\item (Luk--Oh--Shlapentokh-Rothman \cite{luk2022scattering}) or the lower bound \eqref{L2-lower-bound} and \(L^2\) upper bounds:
\begin{equation}\label{luk-oh-sr-mass-inflation}
\int_{\mathcal{H}} \bar{v}^4(\partial{}_{\bar{v}}\varphi{})^2\dd{}\bar{v} < \infty \quad \text{and}\quad \int_{\mathcal{H}}\bar{v}^8(\partial{}_{\bar{v}}^k\varphi{})^2\dd{}\bar{v}<\infty \,\, \text{for some }k\ge 2.
\end{equation}
\end{enumerate}
\end{theorem}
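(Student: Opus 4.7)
The plan is to prove the two parts of \cref{mass-inflation-conditional} separately, working throughout in a double-null gauge $(u, \bar{v})$ on the black hole interior in which the event horizon $\mathcal{H}$ corresponds to $\{u = 0\}$ and the Cauchy horizon to $\{\bar{v} = \infty\}$. The relevant structural equations for spherically symmetric $(g, \varphi, F)$ express $\partial_u m$ and $\partial_{\bar{v}} m$ as multiples of $(\partial_u \varphi)^2$ and $(\partial_{\bar{v}} \varphi)^2$ respectively, weighted by $r$ and the lapse, while $\varphi$ satisfies $\partial_u \partial_{\bar{v}}(r\varphi) = \mathrm{(lower\ order)}$. Mass inflation is then the assertion that $m(u, \bar{v}) \to \infty$ as $\bar{v} \to \infty$ for generic small $u > 0$, which one approaches by controlling $\partial_u \varphi$ and integrating the monotonicity identity for $m$ along ingoing null cones from a neighborhood of $\mathcal{H}$ towards the Cauchy horizon.

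For part (1), due to Dafermos, the key mechanism is the blueshift along ingoing null cones in the interior: although the outgoing flux $\partial_{\bar{v}}\varphi|_{\mathcal{H}}$ decays polynomially by \eqref{price-law}, its propagation across the interior is exponentially amplified by the surface gravity of the Cauchy horizon, so that $\partial_u \varphi$ becomes large near that horizon. The pointwise lower bound \eqref{dafermos-mass-inflation} with $p < 9$ ensures that the horizon data does not decay so fast that this polynomial loss overwhelms the exponential blueshift; the threshold $9$ arises from the precise pairing of the polynomial decay rate of $\varphi$ against the exponential blueshift weight and the geometric weight in the renormalized Hawking mass equation. Integrating $\partial_u m \gtrsim (\partial_u \varphi)^2$ along an ingoing cone then yields $m \to \infty$.

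For part (2), due to Luk--Oh--Shlapentokh-Rothman, the pointwise analysis is replaced by a weighted $L^2$ scattering-theoretic one. The hypotheses \eqref{luk-oh-sr-mass-inflation} are exactly the integrated upper bounds on $\varphi$ and its higher $\partial_{\bar{v}}$ derivatives needed to propagate horizon data through the interior with weighted $L^2$ control and to set up a scattering map for the wave equation on the Reissner--Nordström-like background. The integrated lower bound \eqref{L2-lower-bound} provides quantitative non-triviality that survives the blueshift amplification; combined with an application of Raychaudhuri's equation (which forces $m$ to be monotone along ingoing null cones), it produces divergence of $m$ at the Cauchy horizon in a generic, averaged sense.

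The main obstacle, however, is not the interior analysis itself but the supply of the conditional hypotheses on $\mathcal{H}$. The horizon lower bound \eqref{L2-lower-bound} is already available from \cite{luk-oh-scc1, luk-oh-scc2}, while the pointwise lower bound in \eqref{dafermos-mass-inflation} is open and presumably considerably harder than its $L^2$ analogue. The upper-bound hypotheses in \eqref{luk-oh-sr-mass-inflation} demand control of higher-order derivatives of $\varphi$ on $\mathcal{H}$ with integrable polynomial weights, and furnishing such control is precisely the role of \cref{main-theorem-rough} of the present paper, which then feeds into (2) to give an unconditional mass inflation statement in the generic class $\mathcal{G}$.
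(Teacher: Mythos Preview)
Your proposal misreads what the paper is doing with this theorem. \Cref{mass-inflation-conditional} is a \emph{statement of known results}, attributed to \cite{dafermos05} and \cite{luk2022scattering}; the paper does not prove part (1) at all, and for part (2) it only explains (in \cref{sec:mass-inflation-and-inextendibility}) how to extract the specific formulation \eqref{luk-oh-sr-mass-inflation} from the more general results in \cite{luk2022scattering}. That extraction is a short case analysis: writing $I_{p,k}$ for the finiteness of $\int_{\mathcal{H}} \bar v^p (\partial_{\bar v}^k\varphi)^2\,d\bar v$, the paper notes that \cite[Thm.~7.8]{luk2022scattering} yields mass inflation under \eqref{L2-lower-bound} in two scenarios---roughly, $I_{6,1}$ false but $I_{4,1}$ and $I_{6,k}$ true, or $I_{6,1}$ true and $I_{8,k}$ true---and then observes that the pair $(I_{4,1},I_{8,k})$ forces one of these regardless of whether $I_{6,1}$ holds. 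That is the entire content of the paper's argument for this theorem.

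Your write-up instead attempts a high-level narrative of the \emph{interior} analyses in \cite{dafermos05} and \cite{luk2022scattering} (blueshift, scattering map, Raychaudhuri monotonicity). Even as a sketch this is too vague to stand as a proof---the threshold $p<9$, for instance, is asserted rather than derived, and no concrete estimate is carried out---and in any case it is not what is being asked: the theorem is being quoted, not reproved. You also miss the one point the paper actually does justify, namely why the criterion in \eqref{luk-oh-sr-mass-inflation} (stated for general $k\ge 2$) follows from \cite{luk2022scattering}, which as written only covers $k=2$; see \cref{mass-inflation-from-main-theorem}.
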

\begin{remark}
See \cref{mass-inflation-from-main-theorem} for an explanation of how to extract
 statement (2) from the results of \cite{luk2022scattering} (which in particular states the
 second criterion in \cref{luk-oh-sr-mass-inflation} only for \(k = 2\)). Observe
 that the first criterion in \cref{luk-oh-sr-mass-inflation} holds by
 \cref{price-law}.
\end{remark}
We now give an application of our main theorem to mass inflation. From our main
theorem (\cref{main-theorem-rough}), we can obtain a decay result for higher
derivatives tangent to the event horizon.
\begin{corollary}[Translation of main theorem into an Eddington--Finkelstein-type gauge]
In the setting of \cref{main-theorem-rough}, we have
\begin{equation}\label{intro:main-thm-2}
\abs{(\bar{v}\partial{}_{\bar{v}})^k\varphi{}}|_{\mathcal{H}}\lesssim_{\epsilon{},k,\varphi{}} \bar{v}^{-1+\epsilon{}}\text{ for } \epsilon{}>0\text{ and }0\le k\le 4,
\end{equation}
where \(\bar{v}\) is an Eddington--Finkelstein-type coordinate normalized
appropriately on the event horizon (see \cref{sec:coordinate-systems}) and \(\partial_{\bar{v}}\)
is the coordinate derivative associated to \(\bar{v}\) in double null coordinates
\label{main-theorem-double-null}
\end{corollary}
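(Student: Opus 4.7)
The plan is to deduce the corollary from \cref{main-theorem-rough} by a careful comparison of the two coordinate systems on the event horizon $\mathcal{H}$. Since $v$ is normalized on a curve of constant area-radius in the exterior, while $\bar{v}$ is normalized on $\mathcal{H}$, the two coordinates are related on $\mathcal{H}$ by a reparametrization $\bar{v} = \bar{v}(v)$ whose derivative tends to a positive constant as $v \to \infty$; in particular $\bar{v} \sim Cv$ along $\mathcal{H}$. The subtlety is that $\underline{\partial}_v$, being tangent to the $\{r = \text{const}\}$ foliation, is not tangent to $\mathcal{H}$ in the dynamical spacetime, whereas $\partial_{\bar{v}}$ is. The change of coordinates from $(r,v)$ to the double null coordinates $(u,\bar{v})$ yields a representation
\begin{equation*}
\underline{\partial}_v = \alpha\,\partial_{\bar{v}} + \beta\,\partial_u,\qquad \beta = -\frac{\partial_{\bar{v}} r}{\partial_u r},
\end{equation*}
with $\alpha|_\mathcal{H}$ bounded away from zero and $\beta|_\mathcal{H}$ small because $\partial_{\bar{v}} r \to 0$ along $\mathcal{H}$ (quantitatively, from Price's law \eqref{price-law} together with the Raychaudhuri-type constraint), while $\partial_u r|_\mathcal{H}$ is uniformly nonzero in the subextremal regime.

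For $k = 1$, I would solve the above identity for $\partial_{\bar{v}}$ and multiply by $\bar{v}$ to write
\begin{equation*}
\bar{v}\,\partial_{\bar{v}}\varphi\big|_\mathcal{H} = (\bar{v}/v)\,\alpha^{-1}\,v\underline{\partial}_v\varphi\big|_\mathcal{H} - \bar{v}\,\alpha^{-1}\beta\,\partial_u\varphi\big|_\mathcal{H}.
\end{equation*}
The first term is controlled by \cref{main-theorem-rough}, the ratio $\bar{v}/v$ being asymptotically constant; the second is a lower-order error, because $\bar{v}\beta|_\mathcal{H}$ decays (from \eqref{price-law} one has $\partial_{\bar{v}} r|_\mathcal{H} \lesssim \bar{v}^{-3+\epsilon}$) and $\partial_u\varphi|_\mathcal{H}$ is uniformly bounded. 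For higher $k$, I would iterate: $(\bar{v}\partial_{\bar{v}})^k\varphi|_\mathcal{H}$ expands as a linear combination of products of $(v\underline{\partial}_v)^j\varphi|_\mathcal{H}$ for $0 \le j \le k$ with coefficients built from $\alpha^{-1}$, $\beta$, their tangential derivatives, and transverse derivatives $\partial_u^a\varphi|_\mathcal{H}$ for $a$ small. Each $(v\underline{\partial}_v)^j\varphi$ is controlled by \cref{main-theorem-rough}, and the coefficients are controlled by the near-horizon decay estimates that the paper already develops for the proof of the main theorem.

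The main obstacle is bookkeeping the regularity budget during the iteration: each additional $\partial_{\bar{v}}$ may fall on the coefficients $\alpha,\beta$ (producing derivatives of the metric quantities) or on transverse derivatives $\partial_u^a \varphi$, each of which is controlled only up to a finite order by the estimates available on $\mathcal{H}$. The restriction $k \le 4$ in the statement reflects exactly how much of this is available: after four commutations one exhausts the regularity of the geometric coefficients and the transverse derivatives $\partial_u^a\varphi|_\mathcal{H}$ that the proof of \cref{main-theorem-rough} delivers. The crucial quantitative point, which must be verified at each step, is that every $\bar{v}$-weight picked up from $(\bar{v}\partial_{\bar{v}})$ is absorbed by the decay of $\beta|_\mathcal{H}$ coming from Price's law, so that all error terms decay at least as fast as $\bar{v}^{-1+\epsilon}$, giving \eqref{intro:main-thm-2}.
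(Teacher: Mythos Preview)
Your strategy---expressing $\bar v\partial_{\bar v}$ in terms of $v\underline\partial_v$ and a transverse piece, then iterating---is correct and matches the paper's. The paper organizes it slightly differently: writing $S=v\underline\partial_v=v\partial_v+v\lambda U$ near the horizon and $\bar v\partial_{\bar v}=(\bar v/\kappa|_{\mathcal H})\partial_v$, one obtains
\[
S-\bar v\partial_{\bar v}=\rho\,S+(1-\rho)\,v\lambda\,U,\qquad \rho:=1-\frac{\bar v}{v\,\kappa|_{\mathcal H}},
\]
and then expands $S^k-(\bar v\partial_{\bar v})^k$ schematically as coefficient blocks (built from $S^{\le k-1}\rho$, $S^aU^b(v\lambda)$, etc.) acting on $\{U,S\}^{\le k}\varphi$. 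One point worth noting: the paper does not invoke \eqref{price-law}. The needed decay $\lambda|_{\mathcal H}\lesssim v^{-4+\epsilon}$ (and $|S^n\lambda|_{\mathcal H}\lesssim v^{-3+\epsilon}$) is derived self-containedly via a redshift ODE argument for $1-\mu$ along $\mathcal H$, using only the main theorem's $v^{-1+\epsilon}$ bound on $S\varphi$ as input.

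Your identification of the restriction $k\le4$, however, is wrong. It is not a regularity bottleneck: \cref{main-theorem-rough} controls $S^j\varphi$ for \emph{all} $j$, and the accompanying pointwise estimates control all mixed $U^aS^b\varphi$ on $\mathcal H$, so nothing is exhausted. The obstruction is a \emph{decay-rate} constraint. In the schematic expansion of $S^k-(\bar v\partial_{\bar v})^k$ one meets coefficient blocks of the form $v\lambda\cdot\bigl(U(v\lambda)\bigr)^{k-1}$; since $U\lambda=2r^{-2}(\varpi-\mathbf e^2/r)\kappa$ is bounded \emph{below} on $\mathcal H$ (this is the redshift) and $Uv=0$, the factor $U(v\lambda)=vU\lambda$ grows like $v$. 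With $v\lambda\lesssim v^{-3+\epsilon}$ such a block is $\lesssim v^{k-4+\epsilon}$, hence bounded only for $k\le4$. This is exactly what \cref{range-of-k} says: a faster rate for $S^j\varphi$ (and hence for $\lambda|_{\mathcal H}$) would raise the ceiling on $k$, while the expected polynomial tail of $\partial_{\bar v}r|_{\mathcal H}$ prevents $k$ from being arbitrary.
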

A precise version of \cref{main-theorem-double-null} is
given in \cref{sec:reduction-to-characteristic}.
\begin{remark}[Range of \(k\) in \cref{main-theorem-double-null}]
The range of \(k\) for which \cref{main-theorem-double-null} holds could be
increased given the analogue of \cref{main-theorem-rough} with a faster decay rate
(as proven in \cref{tails-theorem}). Note that \(k = 4\) is the minimal \(k\) for
which the decay rate \(\bar{v}^{-1 + \epsilon{}}\) for
\((\bar{v}\partial_{\bar{v}})^k\varphi{}|_{\mathcal{H}}\) implies the second
criterion in \cref{luk-oh-sr-mass-inflation}. However, we do not expect that \(k\)
can be taken arbitrarily large in \cref{main-theorem-double-null}, in view of the
expected inverse polynomial tail of \(\partial_{\bar{v}}r|_{\mathcal{H}}\) (a
consequence of the tail for \(\abs{\partial_{\bar{v}}\varphi{}}|_{\mathcal{H}}\)
demonstrated in \cref{tails-theorem}). That is, we do not expect each null
derivative along the horizon to gain a power of decay. See
\cref{sec:Stovdv-translation} for further discussion.
\label{range-of-k}
\end{remark}
Together with the already known \cref{price-law}, \cref{main-theorem-double-null}
implies the previously unknown estimate \cref{luk-oh-sr-mass-inflation} (with \(k =
4\)), which implies mass inflation.
\begin{corollary}[Generic mass inflation]
Mass inflation holds for solutions to \cref{einstein-maxwell} arising from
compactly supported Cauchy data in the generic class of Luk--Oh (see
\cref{luk-oh-scc}).
\end{corollary}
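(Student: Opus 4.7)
The plan is to verify both hypotheses of the second item in \cref{mass-inflation-conditional}, which is the conditional mass inflation result of Luk--Oh--Shlapentokh-Rothman \cite{luk2022scattering}, for solutions arising from compactly supported Cauchy data in the generic class $\mathcal{G}$ of \cref{luk-oh-scc}. Three ingredients need to be combined: the $L^2$ lower bound \cref{L2-lower-bound}, the first $L^2$ upper bound in \cref{luk-oh-sr-mass-inflation}, and the second $L^2$ upper bound in \cref{luk-oh-sr-mass-inflation} for some $k \ge 2$.

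First, since the data lies in $\mathcal{G}$, the lower bound \cref{L2-lower-bound} is immediate from \cref{luk-oh-scc}. For the first upper bound in \cref{luk-oh-sr-mass-inflation}, the Dafermos--Rodnianski pointwise estimate \cref{price-law} gives $\abs{\partial_{\bar{v}}\varphi}|_{\mathcal{H}} \lesssim \bar{v}^{-3+\epsilon}$, so $\bar{v}^4 (\partial_{\bar{v}}\varphi)^2 \lesssim \bar{v}^{-2+2\epsilon}$, which is integrable along $\mathcal{H}$ (away from a compact portion near the bifurcation sphere).

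The remaining ingredient is where the main theorem of this paper enters. I will verify the second criterion in \cref{luk-oh-sr-mass-inflation} with $k = 4$ using \cref{main-theorem-double-null}. The operators $(\bar{v}\partial_{\bar{v}})^j$ and $\bar{v}^j \partial_{\bar{v}}^j$ differ only by a triangular linear combination of lower-order scaling derivatives (a Stirling-type expansion of the iterated Euler operator), so by finite induction the estimates $\abs{(\bar{v}\partial_{\bar{v}})^j\varphi}|_{\mathcal{H}} \lesssim \bar{v}^{-1+\epsilon}$ for $0 \le j \le 4$ yield $\abs{\partial_{\bar{v}}^k \varphi}|_{\mathcal{H}} \lesssim \bar{v}^{-1-k+\epsilon}$ for $0 \le k \le 4$. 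In particular $\bar{v}^8 (\partial_{\bar{v}}^4 \varphi)^2 \lesssim \bar{v}^{-2+2\epsilon}$, which is integrable along $\mathcal{H}$, as required. This is precisely the reason why \cref{main-theorem-double-null} is stated up to $k = 4$; see also \cref{range-of-k}.

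With all three hypotheses of \cref{mass-inflation-conditional}(2) verified, mass inflation follows. The only subtlety in packaging is that \cite{luk2022scattering} literally states the conditional result for $k = 2$, so one must separately justify the use of $k = 4$; this extension is deferred to \cref{mass-inflation-from-main-theorem}. I do not expect any difficulty in this deduction itself, since it is essentially a bookkeeping exercise combining our decay estimate with \cref{price-law}, \cref{L2-lower-bound}, and \cref{mass-inflation-conditional}; the real work is upstream, in the proof of \cref{main-theorem-rough}.
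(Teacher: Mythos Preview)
Your proof is correct and follows exactly the approach taken in the paper: combine the generic $L^2$ lower bound \cref{L2-lower-bound}, the first integrated upper bound from the Price law \cref{price-law}, and the second integrated upper bound with $k=4$ obtained from \cref{main-theorem-double-null}, then invoke \cref{mass-inflation-conditional}(2). Your observation about the triangular relation between $(\bar{v}\partial_{\bar{v}})^j$ and $\bar{v}^j\partial_{\bar{v}}^j$ and your flagging of the $k=2$ versus $k=4$ issue (resolved in \cref{mass-inflation-from-main-theorem}) are both accurate and match the paper's treatment.
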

\subsubsection{Late-time tails}
\label{sec:org42fc504}
\label{sec:intro:late-time-tails} After implementing the spacetime elliptic
estimates of Luk--Oh \cite[Sec.~5.3]{luk-oh-tails}, the estimates used to prove
\cref{main-theorem-rough} are enough to satisfy the assumptions on the spacetime metric
and the scalar field in \cite[Sec.~2]{luk-oh-tails} and therefore prove the following
sharp decay result.
\begin{theorem}[Sharp Price's law result]
Solutions to \cref{einstein-maxwell} arising from smooth and compactly supported
future-admissible spherically symmetric Cauchy data satisfy the following
estimate on the event horizon:
\begin{equation}
\abs{(v\underline{\partial{}}_v)^k\varphi{}|_{\mathcal{H}} - C_k\mathfrak{L}[\varphi{}]v^{-3}} \lesssim v^{-3-\delta{}}\text{ for }k\ge 0,
\end{equation}
where \(\delta{} > 0\) is a small constant, \(C_k\neq{}0\) are explicit non-zero constants,
\(\mathfrak{L}[\varphi{}]\) is a dynamical constant that is non-zero exactly for
solutions arising from the generic class of data constructed by Luk--Oh (see
\cref{luk-oh-scc}), and the coordinate \(v\) and coordinate derivative
\(\underline{\partial{}}_v\) are as in \cref{main-theorem-rough}. Moreover, we have
\begin{equation}\label{tails-theorem-null}
\abs{(\bar{v}\partial{}_{\bar{v}})^k\varphi{}|_{\mathcal{H}} - C_k\mathfrak{L}[\varphi{}]\bar{v}^{-3}} \lesssim \bar{v}^{-3-\delta{}}\text{ for }0\le k\le 2,
\end{equation}
where the Eddington--Finkelstein-type coordinate and coordinate derivative
\(\bar{v}\) and \(\partial{}_{\bar{v}}\) are as in \cref{main-theorem-double-null}.
\label{tails-theorem}
\end{theorem}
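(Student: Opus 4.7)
The plan is to reduce \cref{tails-theorem} to the late-time tails theorem of Luk--Oh \cite{luk-oh-tails}, treating the latter as a black box. That theorem produces a sharp leading-order $v^{-3}$ asymptotic expansion on the event horizon with an explicit dynamical coefficient, provided one supplies (a) decay estimates for the scalar field and its weighted higher derivatives throughout the black hole exterior, and (b) matching control on the geometric quantities associated to the spacetime metric. My goal is to verify (a) and (b) using the estimates developed in the proof of \cref{main-theorem-rough}.

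First I would invoke \cref{main-theorem-rough} to obtain hypothesis (a) along the event horizon itself, namely $\abs{(v\underline{\partial}_v)^k\varphi}|_{\mathcal{H}} \lesssim v^{-1+\epsilon}$ for all $k \ge 0$, together with its analogue in the finite-$r$ region noted in the excerpt. Combined with the spacetime elliptic estimates of \cite[Sec.~5.3]{luk-oh-tails}---which upgrade horizon-tangential and $r$-weighted decay into transversal and bulk decay via the wave equation and commutator identities---this yields the full exterior scalar field decay demanded by the tails machinery. For hypothesis (b), the corresponding metric decay follows by integrating the Einstein--Maxwell--scalar field equations using scalar field decay as input: Raychaudhuri-type identities, the $r$-evolution equations, and the Hawking mass evolution equation provide the needed rates for metric coefficients and their derivatives.

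Once the hypotheses are verified, the Luk--Oh machinery outputs the expansion $(v\underline{\partial}_v)^k\varphi|_{\mathcal{H}} = C_k\mathfrak{L}[\varphi]v^{-3} + O(v^{-3-\delta})$, where $\mathfrak{L}[\varphi]$ is a dynamical constant written as a spacetime integral over the Cauchy development. Non-vanishing of $\mathfrak{L}[\varphi]$ on the generic class $\mathcal{G}$ of \cref{luk-oh-scc} is forced by the integrated lower bound \cref{L2-lower-bound}: if $\mathfrak{L}[\varphi]$ vanished, then applying the expansion with $k=1$ (and using $C_1\neq 0$) would give $\partial_{\bar{v}}\varphi|_{\mathcal{H}} = O(\bar{v}^{-4-\delta})$, making $\int_{\mathcal{H}}\bar{v}^{p}(\partial_{\bar{v}}\varphi)^2 \, \mathrm{d}\bar{v}$ finite for $p$ slightly above $7$, which contradicts \cref{L2-lower-bound}.

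To obtain \cref{tails-theorem-null}, I would translate the expansion from the $v$-gauge to the Eddington--Finkelstein-type $\bar{v}$-gauge by composing with the transition function and expanding $\partial_{\bar{v}}$ in terms of $\underline{\partial}_v$ modulo correction terms weighted by $\partial_{\bar{v}}r|_{\mathcal{H}}$, which themselves have only an inverse-polynomial tail. Each translation step introduces error terms that do not gain decay, so only $0\le k \le 2$ is accessible in \cref{tails-theorem-null}, matching the loss discussed in \cref{range-of-k}. The hardest part is likely the verification of the geometric hypothesis (b) together with the bookkeeping of the spacetime elliptic estimates: the margin between the $v^{-1+\epsilon}$ input and the desired $v^{-3}$ output is narrow, so meeting the precise weighted norms demanded by the Luk--Oh black box will require carefully chosen commutator weights and an involved accounting of nonlinear error terms, with the reductive structure of the commutation errors noted in the remarks playing an essential role in closing the hierarchy.
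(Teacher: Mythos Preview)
Your proposal is correct and follows the paper's approach: reduce to \cite[Main~Theorem~4]{luk-oh-tails} by verifying its hypotheses via the estimates behind \cref{main-theorem-rough} together with the spacetime elliptic estimates and a Klainerman--Sobolev inequality (to convert $S$-derivative control into $(r\overline{\partial}_r)$-derivative control and pass from $L^2$ to $L^\infty$), supplement with asymptotic expansions of $\varpi$ and $(-\gamma)$ for the metric assumptions in the wave zone, and then translate to the Eddington--Finkelstein gauge as in the proof of \cref{main-theorem-double-null}. The one deviation is your argument for $\mathfrak{L}\neq 0$ on $\mathcal{G}$ by contradiction with \cref{L2-lower-bound}; the paper instead identifies $\mathfrak{L} = \int_{\mathcal{I}} 2\varpi|_{\mathcal{I}}(u)\varphi(u)\,\mathrm{d}u$ and invokes \cite[Thm.~4.3]{luk-oh-scc2} directly---your route is valid but needs a short bootstrap through the $\lambda|_{\mathcal{H}}$ decay to push $\partial_{\bar{v}}\varphi|_{\mathcal{H}}$ past the $\bar{v}^{-4}$ threshold.
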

For a precise version of \cref{tails-theorem}, see \cref{tails-theorem-precise}.
\begin{remark}[Alternative proof of mass inflation]
Observe that \cref{tails-theorem} recovers the previously known Price's law result
\cref{price-law} of Dafermos--Rodnianski \cite{dafermos-rodnianski-price} and
implies the previously unknown \cref{dafermos-mass-inflation}, and these estimates
together imply mass inflation by the result of Dafermos \cite{dafermos05} (see
\cref{mass-inflation-conditional}).
\label{mass-inflation-alternate}
\end{remark}
\begin{remark}[Range of \(k\)]
As in \cref{main-theorem-double-null}, the range of \(k\) in \cref{tails-theorem-null}
could be increased (up to \(k = 7\)), but we do not expect that \(k\) can be taken arbitrarily
large (see \cref{range-of-k}).
\end{remark}
There is a vast literature on Price's law results for linear waves on
asymptotically flat spacetimes
\cite{Donninger_2011,donninger2011pointwise,tataru,MR2921169,MR3725885,ANGELOPOULOS2023108939,MR4365146,luk-oh-tails},
including subextremal Reissner--Nordström and Kerr. We also mention
\cite{Ma_2023,millet2023optimal} for Price's law results for the Teukolsky
equation on subextremal Kerr. The analogue of \cref{tails-theorem} for linear waves
on subextremal Reissner--Nordström was first established in \cite{MR3725885}. In
that setting, \(\mathfrak{L}\) is a linear form that vanishes on solutions
arising from a codimension one class of data. See \cite{rutgers-price} for a
late-time tails result for \eqref{einstein-maxwell} for small data solutions with
non-vanishing Newman--Penrose constant (note in particular that the compactly
supported solutions we consider have vanishing Newman--Penrose constant).
\subsubsection{Black holes with spacelike and null singularities}
\label{sec:orgaf603fd}
\label{sec:coexistence}
\begin{figure}[htbp]
\vspace{-3ex}
    \def\svgwidth{0.75\columnwidth}
    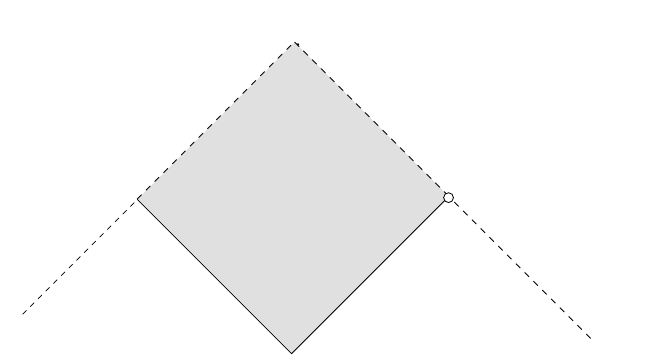

\caption{The a priori Penrose diagram for solutions to \cref{einstein-maxwell}. The achronal singular set \(\mathcal{S}\), to which the area-radius function \(r\) extends continuously to \(0\), may be empty, as it is in Reissner--Nordström. This characterization is due to \cite{Dafermos_2014,dafermos05,Kommemi2013TheGS}. In particular, Dafermos \cite{Dafermos_2014} proved that \(\mathcal{S}\) can be non-empty only for large perturbations of Reissner--Nordström.}
\label{fig:two-ended-penrose}
\end{figure}
The main result of \cite{maxime-spacelike-null} is a local result
\cite[Thm.~2.1]{maxime-spacelike-null} in the black hole interior (near the
junction between the Cauchy horizon and the spacelike singularity) providing
precise asymptotics for geometric quantities and for the scalar field. This
quantitative local result is combined with a gluing argument (based on
\cite{kehle2022gravitational}) to construct global two-ended spacetimes that
contain both null and spacelike singularities. We summarize this construction in
the following theorem, whose proof requires the decay estimates
\(\partial{}_{\bar{v}}\varphi{}|_{\mathcal{H}} = C\bar{v}^{-4} +
o(\bar{v}^{-4})\) and \(\partial{}_{\bar{v}}^2\varphi{}|_{\mathcal{H}} =
O(\bar{v}^{-5})\) that are provided by \cref{tails-theorem}.
\begin{theorem}[Van de Moortel, {\cite[Thm.~2.6(ii)]{maxime-spacelike-null}}]
There is a large class of two-ended asymptotically flat black hole spacetimes
solving \cref{einstein-maxwell} (namely those arising from an open subset of the
generic class of future-admissible Cauchy data constructed by Luk--Oh in
\cref{luk-oh-scc}) with the Penrose diagram in \cref{fig:two-ended-penrose}.

Both the Cauchy horizon \(\mathcal{CH}\) (to which \(r\) extends continuously as
a strictly positive function) and the achronal singular set \(\mathcal{S}\) (to
which \(r\) extends continuously to \(0\)) are non-empty, and \(\mathcal{S}\) is
spacelike in a neighbourhood of the junction between \(\mathcal{S}\) and the
Cauchy horizon \(\mathcal{CH}\) (in the topology of the Penrose diagram).
Moreover, there are quantitative asymptotics for geometric quantities and
estimates for the scalar field towards \(\mathcal{CH}\), as well as quantitative
Kasner asymptotics towards \(\mathcal{S}\).
\end{theorem}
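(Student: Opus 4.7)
The plan is to reduce this theorem to a black-box application of \cite[Thm.~2.6(ii)]{maxime-spacelike-null}, with \cref{tails-theorem} supplying the decay input along \(\mathcal{H}\) that was previously missing. First I would choose two-ended compactly supported future-admissible Cauchy data lying in an open subset of the Luk--Oh generic class \(\mathcal{G}\) (see \cref{luk-oh-scc}) on which the dynamical functional \(\mathfrak{L}[\varphi]\) appearing in \cref{tails-theorem} is non-zero on both ends. Such an open subset is non-empty because, by \cref{tails-theorem}, \(\mathfrak{L}\) is a non-zero dynamical quantity on all of \(\mathcal{G}\). Evolving the data and invoking the a priori structure theorem of Dafermos--Kommemi referenced in the caption of \cref{fig:two-ended-penrose} yields the Penrose diagram of \cref{fig:two-ended-penrose}, with a possibly-empty Cauchy horizon \(\mathcal{CH}\) and possibly-empty singular set \(\mathcal{S}\).

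Next I would apply \cref{tails-theorem-null} on each event horizon of the resulting two-ended spacetime to obtain
\begin{equation*}
\partial_{\bar{v}}\varphi|_{\mathcal{H}} = C_1 \mathfrak{L}[\varphi]\, \bar{v}^{-4} + o(\bar{v}^{-4}), \qquad \partial_{\bar{v}}^2\varphi|_{\mathcal{H}} = O(\bar{v}^{-5}),
\end{equation*}
together with the Price's law bound \(|\varphi||_{\mathcal{H}} \lesssim \bar{v}^{-3+\epsilon}\) for \(\varphi\) itself. These are precisely the asymptotic hypotheses on the characteristic data for the local interior theorem \cite[Thm.~2.1]{maxime-spacelike-null}, a quantitative analysis carried out in a neighbourhood of the junction between \(\mathcal{CH}\) and \(\mathcal{S}\). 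Feeding the Reissner--Nordström-like geometry along \(\mathcal{H}\) (guaranteed by future-admissibility) together with the above scalar-field asymptotics into that theorem directly produces the non-emptiness of \(\mathcal{S}\), its spacelike character near the junction, the Kasner asymptotics toward \(\mathcal{S}\), and the quantitative asymptotics toward \(\mathcal{CH}\), in a neighbourhood of the junction.

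To upgrade the local picture to the global two-ended one, I would then invoke the gluing construction of \cite[Thm.~2.6(ii)]{maxime-spacelike-null}, modelled on \cite{kehle2022gravitational}. This step patches the local neighbourhoods near the two junctions (one per end of the two-ended spacetime) into a globally admissible solution realising the Penrose diagram of \cref{fig:two-ended-penrose}, with both \(\mathcal{CH}\) and \(\mathcal{S}\) non-empty as claimed.

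The main obstacle --- the sharp asymptotics for \(\partial_{\bar{v}}\varphi\) with non-zero leading coefficient and the \(O(\bar{v}^{-5})\) bound on \(\partial_{\bar{v}}^2\varphi\) along \(\mathcal{H}\) --- has already been overcome by \cref{tails-theorem}; indeed it is precisely this part that was previously missing and is supplied by the present paper. The remaining work is essentially bookkeeping: identifying the open subset of \(\mathcal{G}\) on which \(\mathfrak{L}[\varphi]\) is simultaneously non-zero on both ends, and verifying that the decay exponents supplied by \cref{tails-theorem-null} match the quantitative hypotheses of \cite[Thm.~2.1]{maxime-spacelike-null} and its gluing extension.
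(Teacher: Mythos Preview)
Your proposal is correct and matches the paper's treatment. The paper does not give its own proof of this theorem; it is stated as Van de Moortel's result, with the explicit remark that the required event-horizon asymptotics \(\partial_{\bar v}\varphi|_{\mathcal{H}} = C\bar v^{-4} + o(\bar v^{-4})\) and \(\partial_{\bar v}^2\varphi|_{\mathcal{H}} = O(\bar v^{-5})\) are supplied by \cref{tails-theorem}, which is exactly the reduction you describe.
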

\newpage
\subsection{Ideas of the proof}
\label{sec:org9c4fe1c}
\label{sec:structure-of-proof}
\subsubsection{Use of a scaling vector field}
\label{sec:org297ef69}
Our strategy to prove \cref{main-theorem-rough} involves
a scaling vector field commutator \(S\) that equals
\(v\underline{\partial{}}_v\) along the horizon (see
\cref{sec:intro-commutator-vector-fields} for the definition of \(S\)). We briefly recall
the merits of such a vector field in other settings and outline the construction
of the vector field in our setting.

On Minkowski, the scaling vector field takes the form \(S_{\text{m}} =
u\partial_u + v\partial_v = t\partial_t + r\partial_r\). It is well known that
\(S_{\text{m}}\) is a useful vector field commutator. It is conformally Killing,
and \([S_{\text{m}},\Box_{\text{m}}] = 2\Box_{\text{m}}\), which means that any
decay estimates for a solution \(\varphi{}\) to \(\Box_{\text{m}}\varphi{} = 0\)
also hold for \(S_{\text{m}}\varphi{}\). Due to the \(t\)-weight in
\(S_{\text{m}}\), one can hope to prove better decay in time for \(\varphi{}\)
given estimates for \(S_{\text{m}}\varphi{}\). Indeed, control of
\(S_{\text{m}}\varphi{}\) together with \(\Gamma{}\varphi{}\) for \(\Gamma{}\) a
Killing vector field of Minkowski leads to decay by Klainerman's Sobolev-type
estimates. The use of a scaling vector field commutator on Minkowski goes back
to Klainerman \cite{MR784477} and Klainerman--Sideris \cite{MR1374174}, and its
use on black hole spacetimes appears in \cite{Luk_2010,MR2921169,tataru}.
\subsubsection{Ingredients of the proof}
\label{sec:orga23eba1}
\label{sec:proof-ingredients}
By the construction of the scaling vector field \(S\), the statement of
\cref{main-theorem-rough} is equivalent to the following estimate along the event
horizon:
\begin{equation}\label{S-phi-estimate}
\abs{S^k\varphi{}}|_{\mathcal{H}} \lesssim v^{-1+\epsilon{}}.
\end{equation}
The estimate \cref{S-phi-estimate} follows from a hierarchy of \(r^p\)-weighted
energy estimates (introduced in \cite{rp-method}) for \(S^k\varphi{}\) with
\(p\in (0,2)\). To perform energy estimates for \(S^k\varphi{}\), we must
control the coupling between the geometry and the scalar field arising from the
commutator \([\Box{},S^k]\).

Our proof uses an inductive argument, where we control derivatives of the
solution assuming control only of lower-order derivatives. In particular, we do
not use a bootstrap argument, which allows us to handle large data solutions.
The key ingredients that let us close the induction are:
\begin{enumerate}
\item Redshift effect for a subextremal black hole, as manifested by a uniform
lower bound for \(\varpi{} - \mathbf{e}^2/r\) along the horizon at late
times.
\item Monotonicity of \(\varpi{}-\mathbf{e}^2/r\) that propagates the lower bound in (1).
\item Energy decay and subsequent pointwise decay derived from \(r^p\)-weighted
energy estimates.
\item Reductive structure in the error terms arising from commutation.
\item Identification of weak and strong decay estimates required for geometric
quantities.
\end{enumerate}
The quantity \(\varpi{}\) we refer to is the renormalized Hawking mass, \(\mathbf{e}\)
is the charge parameter, and \(r\) is the area radius function, which appears in the
form of the metric in double null coordinates:
\begin{equation}
g = -\Omega{}^2\dd{}u\dd{}v + r^2g_{\mathbf{S}^2}.
\end{equation}
For further discussion of these quantities, see \cref{sec:emsf}.

Ingredients (1) and (2) are well-known and were used in
\cite{dafermos-rodnianski-price}. The techniques in (3) were introduced for
linear waves in \cite{rp-method} (and expanded in scope by \cite{moschidis-rp}),
but their implementation in our nonlinear spherically symmetric problem
(including commutation with weighted vector fields) is novel.

The main innovation of this work, which allows us to handle large data, is
ingredient (4). To capture what we call the reductive structure in (4), we
introduce two unweighted vector field commutators: an ingoing null vector field
\(U\) (``the global redshift vector field''), and a vector field \(V\) that is
outgoing null for large \(r\). We order products of the vector fields \(U\),
\(V\), and \(S\), which we call \(\Gamma^\alpha{}\) for a multi-index
\(\alpha{}\), in such a way that \([\Box{},\Gamma{}^\alpha{}]\) includes only
lower order terms and terms that are either small or have a good sign due to the
redshift effect (see Step 2 in \cref{sec:main-theorem-proof} as well as
\cref{wave-comm-formula,wave-comm-formula-UVS}). See
\cref{sec:intro-commutator-vector-fields} for an informal discussion of the
``reductive structure,'' and see Step 2 of \cref{sec:main-theorem-proof} for a more
precise discussion.

Ingredient (5) is important because in order to establish energy estimates at
order \(\alpha{}\) in (4), one needs control of geometric quantities of order
\(\alpha{}\) with weights in \(r\) (what we call ``weak'' estimates) as well as
control of geometric quantities of order \(< \alpha{}\) with weights in both
\(r\) and a time parameter \(\tau{}\) (what we call ``strong'' estimates). In
particular, one must first establish the weak geometric
estimates, then establish energy estimates, and then establish the strong
geometric estimates.
\subsubsection{Commutator vector fields}
\label{sec:org7ac66e0}
\label{sec:intro-commutator-vector-fields}
We use three commutator vector fields, which we call \(U\), \(V\), and \(S\).
They are defined as follows:
\begin{equation*}
U \coloneqq{} \frac{1}{(-\partial{}_ur)}\partial{}_u,\qquad V\coloneqq{} \chi{}_{r\lesssim \Rc}(r)\underline{\partial{}}_v + (1-\chi{}_{r\lesssim \Rc}(r))\overline{\partial}_r, \qquad S\coloneqq{} \chi{}_{r\lesssim \Rc}(r)v\underline{\partial{}}_v + (1-\chi{}_{r\lesssim \Rc}(r))(u\overline{\partial}_u + r\overline{\partial}_r).
\end{equation*}
Here \(\chi_{r\lesssim \Rc}(r)\) is a cutoff function supported in \(\set{r\le \Rc}\)
(for a large parameter \(\Rc > 0\) to be chosen in the course of the proof),
\((\underline{\partial{}}_r,\underline{\partial{}}_v)\) are the coordinate
derivatives in the \((r,v)\) coordinates, and
\((\overline{\partial}_u,\overline{\partial}_r)\) are the coordinate derivatives
in the Bondi--Sachs \((u,r)\) coordinates.

We now informally explain the ``reductive structure'' identified in ingredient (4)
of \cref{sec:proof-ingredients} in terms of the commutation formulas for our
commutator vector fields (see Step 2 in \cref{sec:main-theorem-proof} for a more
careful discussion). For this discussion we ignore the coupling between the
geometry and the scalar field.
\begin{enumerate}
\item (The commutator vector field \(U\))
\label{sec:org50957cd}
We use the redshift vector field \(U\) as a global commutator (see
\cite{MR3098640} for such a use on Schwarzschild). As is well-known, one can
commute with the redshift vector field even though \([\Box{},U]\) contains a
\(\partial{}U\) term, because this term comes with a good sign (which is a
manifestation of the redshift effect). We can commute with the redshift vector
field globally even in our nonlinear setting because in spherical symmetry, the
good sign that makes the redshift vector field a useful commutator near the
horizon is \emph{propagated} globally by the equations.
\item (The commutator vector field \(V\))
\label{sec:org950f222}
On Schwarzschild, our construction of \(V\) specializes to the timelike Killing
vector field \(\partial_t\) in a region of finite \(r\), and to
\(\frac{1}{1-2M/r}\partial_v\) in a region of large \(r\). The vector fields
\(\underline{\partial{}}_v\) and \(\overline{\partial}_r\) satisfy the commutation formula
\begin{equation}\label{intro:dv-commutation}
[\Box{},\underline{\partial{}}_v] = O(r^{-1})\Box{} + O(r^{-1})\partial{}U + O(r^{-2})\partial{},
\end{equation}
and
\begin{equation}
[\Box{},\overline{\partial}_r] = O(r^{-1})\Box{} + O(r^{-2})\overline{\partial}_r^2 + O(r^{-2})\partial{}.
\end{equation}
One now computes the commutation formula for \(V\):
\begin{equation}
[\Box{},V] = O(r^{-1})\Box{} + \mathbf{1}_{r\ge \Rc}O(r^{-2})\partial{}V +  \mathbf{1}_{r\le \Rc}\partial{}U + O(r^{-2})\partial{}
\end{equation}
The structure of the second order terms is crucial. Since \(V\) is only equal to
\(\underline{\partial{}}_v\) in a finite-\(r\) region, one does not worry about
the borderline \(r^{-1}\)-weight on the \(\partial{}U\) term in
\cref{intro:dv-commutation}; indeed, one can absorb this term using the bulk energy
associated to \(U\), which has already been controlled. The \(\partial{}V\) term
looks problematic, although it has a good \(r^{-2}\) weight, because a
\(\partial{}V\) term also appears on the left side of the energy estimate
associated to \(V\). However, the \(\partial{}V\) error term is supported in a
region of large \(r\), and the \(r^{-2}\) weight decays faster than the
\(r^{-1 + \epsilon{}}\) weight in the bulk term on the left, so this term can
be absorbed to the left of the energy estimate associated to \(V\).
\begin{remark}[Alternative choices for \(V\)]
Observe that \(V\) is gauge-invariant for large \(r\). It might seem natural to
construct \(V\) such that it is equal to the gauge-invariant Kodama vector field
\(T = \frac{1-\mu{}}{\partial_vr}\partial_v +
\frac{1-\mu{}}{(-\partial_ur)}\partial_u\) near the horizon, rather than to our
choice of \(\underline{\partial{}}_v = \frac{\partial_vr}{1-\mu{}}T\). In fact,
\([\Box{},T]\) contains a \(T^2\) term, which means that such a construction of
\(V\) would produce a \(\partial{}V\) term in \([\Box{},V]\) that is supported
near the horizon, and so cannot be absorbed as above.
\end{remark}
\item (The commutator vector field \(S\))
\label{sec:org80a2822}
On Schwarzschild, our construction of \(S\) specializes to \(vT\) in a finite
\(r\)-region-and to \(u\overline{\partial}_u + r\overline{\partial}_r\) in a
region of large \(r\), where \((\overline{\partial}_u,\overline{\partial}_r)\)
are the coordinate derivatives in the Bondi--Sachs \((u,r)\) coordinates. This
vector field is not conformally Killing, but we schematically have
\begin{equation}
[\Box{},S] = (2+O(r^{-1}))\Box{} + O(r^{-1+\epsilon{}})\overline{\partial}_r^2 +  \mathbf{1}_{r\le \Rc}\partial{}U + O(r^{-2+\epsilon{}})\partial{}.
\end{equation}
The last two terms are associated to energies lower in the hierarchy than \(S\),
and can be dealt with as before. The \(\overline{\partial}_r^2\) term has a
\(r^{-1+\epsilon{}}\) weight (which is too weak to be absorbed by the bulk
energy, which has an \(r^{-1-\epsilon{}}\) weight), but we can write
\(\overline{\partial}_r\psi{} =
r^{-1}\overline{\partial}_r(r\overline{\partial}_r\psi{}) -
r^{-1}\overline{\partial}_r\psi{}\). The term left to control, then, is
\(O(r^{-2+\epsilon{}})\overline{\partial}_r(r\overline{\partial}_r\psi{})\),
which can be written in the form
\(O(r^{-2+\epsilon{}})\overline{\partial}_r(rV\psi{})\) for large \(r\). This
term can then be estimated by the \(r^p\) bulk term associated to \(V\psi{}\), which
has already been controlled.
\end{enumerate}
\subsubsection{Gauge choice}
\label{sec:orgcccd6fb}
In \cref{main-theorem-rough}, we use a double null gauge \((u,v)\) in which
\(\frac{(-\partial_ur)}{1-\mu{}} = 1\) at null infinity and the quantity
\(\kappa{}\coloneqq{} \frac{\partial{}_vr}{1-\mu{}}\) is \(1\) along a curve of
constant \(r\), namely \(\set{r=r_{\mathcal{H}}}\), where \(r_{\mathcal{H}} =
\sup_{\mathcal{H}}r\) is the limiting value of the area radius function along
the event horizon. On dynamical spacetimes, our gauge differs from the
Eddington--Finkelstein-type \((u,\bar{v})\) gauge that is more common in the
literature (see for example \cite{dafermos-rodnianski-price,luk-oh-scc2}), in
which \(\frac{\partial{}_{\bar{v}}r}{1-\mu{}} = 1\) along the horizon.

The \((u,v)\) gauge is adapted to our commutator vector fields \(V\) and \(S\),
in the sense that \(V\kappa{}\) and \(S\kappa{}\) vanish on \(\set{r =
r_{\mathcal{H}}}\) (because in this region \(V\) and \(S\) are tangent to curves
of constant \(r\)). This enables an argument in which one integrates the
transport equation for \(S\kappa{}\) to \(\set{r=r_{\mathcal{H}}}\) and obtains
boundedness and decay for \(S\kappa{}\) globally (see Step 4 of
\cref{sec:main-theorem-proof}). On the other hand, the quantities
\(\bar{S}^n\bar{\kappa{}}\) (where the barred quantities are those associated to
the \((u,\bar{v})\) gauge) appear to grow along the horizon for sufficiently
large \(n\).
\subsubsection{Outline of the proof}
\label{sec:orgcefe292}
\label{sec:main-theorem-proof} We now describe the main estimates we establish, in
terms of the following nine schematic quantities (which we define precisely in
\cref{sec:norms}), which are parameterized by a multi-index \(\alpha{}\) (see
\cref{sec:multi-index-notation} for an explanation of our multi-index notation):
\begin{description}
\item[{\(\mathfrak{D}_N\)}] a gauge invariant \(r\)-weighted \(L^\infty\) norm on
characteristic initial data that controls the scalar field and its first
\(N+1\) derivatives.
\item[{\(\mathcal{D}_\alpha{}\)}] an \(r\)-weighted \(L^\infty\) norm on characteristic initial
data that controls \(\Gamma{}^{\le \alpha{}}\varphi{}\) and its first derivatives.
\item[{\(\mathfrak{b}_\alpha{}\) and \(\mathfrak{g}_\alpha{}\)}] ``weak'' geometric quantities
controlling \(\Gamma^{\le \alpha{}}\sigma{}\) with \(r\)-weights, where
\(\sigma{}\) is one of \(\varpi{}\),
\(\gamma{}\coloneqq{}\frac{\partial_ur}{1-\mu{}}\), or
\(\kappa{}\coloneqq{}\frac{\partial{}_vr}{1-\mu{}}\). The quantities
\(\mathfrak{b}_\alpha{}\) are shown to be bounded in the course of the proof,
while the quantities \(\mathfrak{g}_\alpha{}\) are only shown to grow slowly
in \(r\). Both \(\mathfrak{b}_\alpha{}\) and \(\mathfrak{g}_\alpha{}\) must be
controlled before the order-\(\alpha{}\) energy is controlled.
\item[{\(\mathcal{E}_\alpha{}\)}] an \(L^2\) energy norm that is non-degenerate at the
horizon and measures boundedness of a quantity \(\norm{r\partial{}\Gamma^{\le
  \alpha{}}\varphi{}}_{L^2(C)}\) on null curves \(C\).
\item[{\(\mathcal{E}_{\alpha{},p}\)}] an \(r^p\)-weighted \(L^2\) energy norm that controls
\(\mathcal{E}_\alpha{}\) and captures decay in \(\tau{}\) (see \cref{tau-def}) of the energy along
a suitable foliation. The quantity \(\mathcal{E}_{\alpha{},q}\) is stronger than the
quantity \(\mathcal{E}_{\alpha{},p}\) when \(p < q\).
\item[{\(\mathcal{P}_{\alpha{},p}\)}] an \(L^\infty\) norm that controls \(\Gamma^{\le \alpha{}}\varphi{}\) with
weights in both \(r\) and \(\tau{}\) (where the \(\tau{}\) weights capture the
decay that follows from \(r^p\)-weighted energy estimates), as well as
derivatives \(U\Gamma^{\le \alpha{}}\varphi{}\) and \(V\Gamma{}^{\le
  \alpha{}}\varphi{}\) with weights in \(r\). This norm includes a stronger
\(r\)-weight on outgoing null derivatives \(V\Gamma^{\le \alpha{}}\varphi{}\) than on ingoing null
derivatives. The quantity \(\mathcal{P}_{\alpha{},q}\) is stronger than the
quantity \(\mathcal{P}_{\alpha{},p}\) when \(p < q\).
\item[{\(\mathfrak{B}_\alpha{}\) and \(\mathfrak{G}_\alpha{}\)}] geometric quantities  stronger
than \(\mathfrak{b}_\alpha{}\) and \(\mathfrak{g}_\alpha{}\), respectively, that also
include \(\tau{}\)-weights on geometric quantities. The quantities
\(\mathfrak{B}_\alpha{}\) are shown to be bounded in the course of the proof,
while the quantities \(\mathfrak{G}_\alpha{}\) are only shown to grow slowly
in \(r\). Both \(\mathfrak{B}_\alpha{}\) and \(\mathfrak{G}_\alpha{}\) are controlled after
the order-\(\alpha{}\) energy is controlled.
\end{description}

In Steps 1--5, we will show that each quantity of order \(\alpha{}\) above is bounded
by lower order quantities and quantities of order \(\alpha{}\) that are higher
up on the list. We remark that the proofs of Steps 1--4 are independent from one
another.

\step{Step 0: Reduction to a characteristic problem
(\cref{sec:reduction-to-characteristic}).} We first restrict the region of interest
from the future of a Cauchy surface to a characteristic rectangle, which reduces
the proof of \cref{main-theorem-rough} to the proof of \cref{main-theorem}. This is
possible since \eqref{einstein-maxwell} is globally well posed by
\cite{Kommemi2013TheGS} and the data we consider is compactly supported.
Moreover, the subextremalilty of the black hole region and monotonicity
properties of \cref{einstein-maxwell} allow us to assume that the redshift
parameter \(\varpi{} - \mathbf{e}^2/r\) is uniformly positive in our
characteristic rectangle. In this rectangle we work with a future-normalized
double null gauge (see \cref{sec:coordinate-systems}).

\step{Step 1: Boundedness of initial data (\cref{sec:boundedness-initial-data}).} In this
step we show
\begin{equation}
\mathcal{D}_\alpha{}\le C(\mathfrak{b}_{<\alpha{}},r^{-1}\mathfrak{g}_{<\alpha{}})\mathfrak{D}_{\abs{\alpha{}}},
\end{equation}
which controls the gauge dependent data quantity \(\mathcal{D}_\alpha{}\) by the gauge invariant
norm \(\mathfrak{D}_{\abs{\alpha{}}}\).

\step{Step 2: Energy boundedness (\cref{sec:energy-estimates}).} The conclusion
of this step is an energy boundedness statement
\begin{equation}\label{intro:energy-boundedness}
\mathcal{E}_\alpha{}\le C(\eta{}_0,\mathfrak{b}_\alpha{},r^{-s}\mathfrak{g}_\alpha{},\mathcal{P}_{<\alpha{},1+\eta{}_0})[\mathcal{D}_\alpha{} + \mathcal{E}_{<\alpha{},4s}]
\end{equation}
for small \(s > 0\) and a small parameter \(\eta_0 > 0\). The constant on the right blows up as
\(\eta_0\) goes to \(0\).

The main difficulty in establishing \cref{intro:energy-boundedness} is to handle
the error terms arising from commutation. To illustrate how these terms are
treated, we will explain how to obtain \cref{intro:energy-boundedness} in the case
\(\abs{\alpha{}} = 1\). We specialize our discussion to an exterior region of a
Schwarzschild spacetime of mass \(M\), in the standard double null coordinates
\(u = t - r_\ast{}\) and \(v = t + r_\ast{}\), where \(r_\ast{} = r +
2M\log(r-2M)\). We use the following vector field multipliers to derive an
energy boundedness and integrated local energy decay estimate:
\begin{align}
T &=  \partial{}_u + \partial{}_v & \text{(Kodama vector field)}, \\
X &= f(r)(\partial{}_u - \partial{}_v) &\text{(Morawetz vector field)}, \\
Y &= \frac{\chi{}_{\mathcal{H}}(r)}{1-2M/r}\partial{}_u & \text{(Redshift vector field)}, \\
Z &=  e^{-M(u-v)}\partial{}_v &\text{(Irregular vector field)}.
\end{align}
Here \(\chi_{\mathcal{H}}(r)\) in the definition of \(Y\) is a cutoff function
localized to the horizon. The vector fields \(T\), \(X\), and \(Y\) are used in
a standard way to derive an energy estimate, while the irregular
vector field \(Z\) is used as in \cite{luk-oh-scc2} to generate a good bulk term
that helps control a certain quartic error term.

Let \(\varphi{}\) be a solution of the wave equation \(\Box{}\varphi{} = 0\),
and let \(\psi{}\) be a general function. Consider a spacetime region
\(\mathcal{R}\) with past boundary \(\Sigma_1\) and future boundary \(\Sigma_2\). We
derive an energy boundedness and integrated local energy decay estimate of the form
\begin{equation}\label{intro:energy-estimate}
E[\psi{},\Sigma{}_2] + E_{\text{bulk}}[\psi{},\mathcal{R}] \lesssim_{\eta{}_0} E[\psi{},\Sigma{}_1] +\underbrace{ \iint_{\mathcal{R}} W\psi{}\Box{}\psi{}\cdot r^2\Bigl(1-\frac{2M}{r}\Bigr)\dd{}u\dd{}v}_{\coloneqq{}\mathcal{E}[\psi{}]} +\cdots{}
\end{equation}
for an energy quantity \(E[\psi{},\Sigma{}]\) non-degenerate at the horizon whose integrand has the schematic
form \(r^2(\partial{}\psi{})^2\), a bulk term \(E_{\text{bulk}}[\psi{},\Sigma{}]\) with the control
\begin{equation}\label{intro:E-bulk}
\iint_{\mathcal{R}} r^{-1+\eta{}_0} (\partial{}\psi{})^2 r^2\Bigl(1-\frac{2M}{r}\Bigr)\dd{}u\dd{}v\lesssim E_{\text{bulk}}[\psi{},\mathcal{R}],
\end{equation}
where \(\eta_0 > 0\) is a small parameter, and a vector field multiplier \(W =
c_TT + c_XX + c_YY\) for \(c_T,c_X,c_Y > 0\) and \(f(r) > 0\) in the definition
of \(X\). We have written \(\cdots{}\) in \cref{intro:energy-estimate} to denote
error terms that we ignore in this discussion, since most of them can be dealt
with as in \cite{luk-oh-scc2}.

Observe that the error term \(\mathcal{E}[\varphi{}]\) vanishes (since \(\Box{}\varphi{} =
0\)), so an energy boundedness statement and an integrated local energy decay
statement for \(\varphi{}\) follows immediately from \cref{intro:energy-estimate}.
We now explain how to control the error term \(\mathcal{E}[\Gamma{}\varphi{}]\)
for \(\Gamma{}\in \set{U,V,S}\), in order to close the energy estimate
\cref{intro:energy-estimate} for \(\psi{} = \Gamma{}\varphi{}\) and establish
\cref{intro:energy-boundedness} in the case \(\abs{\alpha{}} = 1\).

\step{Step 2a: Commutation with \(U\).} Note that the \(\partial_u\) coefficient on the vector
field \(W\) appearing in \eqref{intro:energy-estimate} is positive, because the
same is true for each of the vector fields \(T\), \(X\), and \(Y\). We now explain
how this sign plays a role in our commutation scheme. Let \(U =
\frac{1}{1-2M/r}\partial_u\) be the global redshift vector field. We have
\begin{equation}\label{intro:U-comm}
\Box{}U\varphi{} = -f_U U^2\varphi{} + O(r^{-2})\partial{}\varphi{}.
\end{equation}
for a function \(f_U\) satisfying \(f_U > 0\) and \(f_U = O(r^{-2})\). On
Schwarzschild, the value of \(f_U\) is \(2M/r^2\); in the
general problem, \(f_U = 2r^{-2}(\varpi{}-\mathbf{e}^2/r)\), which is
bounded below by a positive multiple of \(r^{-2}\) (see Step 0). We now explain
how to establish energy estimates for \(U\varphi{}\) given energy estimates for
\(\varphi{}\) (and control of geometric quantities of order \(U\)). Write \(W =
W_UU + W_v\partial{}_v\), so that, by \cref{intro:U-comm}, the first error term in
energy estimate \cref{intro:energy-estimate} applied to \(\psi{} = U\varphi{}\) is
\begin{equation}
\mathcal{E}[U\varphi{}] = \iint_{\mathcal{R}} WU\varphi{}\Box{}U\varphi{} =  -\iint_{\mathcal{R}} W_Uf_U(U^2\varphi{})^2 + \iint_{\mathcal{R}} O(r^{-2})\partial{}_vU\varphi{}U^2\varphi{} +  \iint_{\mathcal{R}}O(r^{-2})\partial{}\varphi{}\partial{}U\varphi{},
\end{equation}
where we have omitted the volume form \(r^2(1-2M/r)\dd{}u\dd{}v\). The first term on
the right has a good sign and can be neglected. One can use the wave equation to
rewrite the second term into the form of the third term. The third term can be
handled with Young's inequality and the bulk term in the already established
energy estimate for \(\varphi{}\).

\step{Step 2b: Commutation with \(V\).} The vector field \(V\) is \(\overline{\partial}_r\) in
Bondi--Sachs \((u,r)\) coordinates (outgoing null) for large \(r\), and timelike
for small \(r\). Assuming that we have controlled geometric quantities of order
\(V\), we have a commutation formula
\begin{equation}\label{intro:V-comm}
\Box{}V\varphi{} = \mathbf{1}_{r\ge \Rc}O(r^{-2})\partial{}V\varphi{} + O_{\Rc}(r^{-2})[\partial{}\varphi{} + \partial{}U\varphi{}],
\end{equation}
where \(\Rc > 0\) is a parameter that can be chosen large. The term in \(\mathcal{E}[V\varphi{}]\) arising from
the second term on the right of \cref{intro:V-comm} can be dealt with as in Step
2a, using Young's inequality and the bulk terms associated to \(\varphi{}\) and
\(U\varphi{}\), which are considered lower order than \(V\varphi{}\) (and so
have already been controlled by the time one seeks control of \(V\varphi{}\)). The first
term on the right of \cref{intro:V-comm} is not small, so the error term it
produces in \(\mathcal{E}[V\varphi{}]\), namely
\begin{equation}\label{intro:V-error}
\iint_{\mathcal{R}\cap \set{r\ge \Rc}} O(r^{-2})(\partial{}V\varphi{})^2,
\end{equation}
is potentially problematic. However, the integral in \cref{intro:V-error} is taken
only over a large-\(r\) region, and the \(r\)-weight in the integrand decays
faster than in the bulk energy (see \cref{intro:E-bulk}). Thus the error term
\cref{intro:V-error} can be bounded by a small multiple of
\(E_{\text{bulk}}[V\varphi{},\mathcal{R}]\) by choosing \(\Rc\) large, and it
can be absorbed into the left side of the energy estimate
\cref{intro:energy-estimate}.

\step{Step 2c: Commutation with \(S\).} At this stage of the argument we have performed
energy estimates---including \(r^p\)-weighted estimates---for \(\varphi{}\),
\(U\varphi{}\), and \(V\varphi{}\), and we have obtained control of geometric
quantities associated to \(S\) in terms of lower order energies. The vector
field \(S\) is \(u\overline{\partial}_u + r\overline{\partial}_r\) in
Bondi--Sachs \((u,r)\) coordinates for large \(r\) and
\(v\underline{\partial{}}_v\) in \((v,r)\) coordinates near the horizon. We have a commutation formula
\begin{equation}\label{intro:S-comm}
\Box{}S\varphi{} = \mathbf{1}_{r\ge R}O(r^{-2+s})\partial{}_v(rV\varphi{}) + O(r^{-2+s})[\partial{}\varphi{} + \partial{}U\varphi{} + \partial{}V\varphi{}]
\end{equation}
for a large \(R > 0\) and small \(s > 0\). The term in \(\mathcal{E}[S\varphi{}]\) arising from the
first term on the right of \cref{intro:S-comm} is controlled (using Young's
inequality) by the \(r^p\)-weighted bulk quantity (with \(p = \eta{}_0\))
associated to the lower order term \(V\varphi{}\), as well as a small multiple
of the bulk energy \(E_{\text{bulk}}[S\varphi{},\mathcal{R}]\). We treat the
terms in \(\mathcal{E}[S\varphi{}]\) arising from the second term on the right
of \cref{intro:S-comm} as in Steps 2a and 2b.

\step{Step 3: Energy decay (\cref{sec:energy-estimates}).} The conclusion of this step is
an energy decay statement
\begin{equation}\label{intro:energy-decay}
\mathcal{E}_{\alpha{},2-\eta{}_0-C_\alpha{}s} \le C(\mathfrak{b}_{\alpha{}},r^{-s}\mathfrak{g}_{\alpha{}},\mathcal{P}_{\alpha{},0},\mathcal{P}_{<\alpha{},1+\eta{}_0})\mathcal{D}_\alpha{},
\end{equation}
for \(\eta_0 > 0\) as in Step 2, \(s > 0\) small (depending on \(\alpha{}\)), and explicit
constants \(C_\alpha{}\) depending on \(\alpha{}\). We adapt the method of
\(r^p\)-weighted energy estimates due to Dafermos--Rodnianski \cite{rp-method} to
establish energy decay along a suitable foliation.

We now explain why the maximum value of \(p < 2\) that we can take when
controlling \(\Gamma{}^\alpha{}\varphi{}\) depends on \(\alpha{}\) (namely \(p
\le 2-\eta_0-C_\alpha{}s\)). The \(r^p\) estimate for \(\psi{}\) contains an error
term of the form
\begin{equation}
\iint_{\set{r\ge R}} r^{p+3}\abs{\Box{}\psi{}}^2\dd{}u\dd{}v
\end{equation}
for a large \(R > 0\). When \(\psi{} = S\varphi{}\), this produces terms of the form
\begin{equation}
\iint_{\set{r\ge R}} r^{p-1}(S\varpi{})^2(\partial{}_v\varphi{})^2\dd{}u\dd{}v \quad \text{and}\quad \iint_{\set{r\ge R}} r^{p-1}(rS(-\gamma{}))^2(\partial{}_v(rV\varphi{}))^2\dd{}u\dd{}v.
\end{equation}
At this stage of the argument, the terms \(S\varpi{}\) and \(rS(-\gamma{})\) can only be shown
to grow slowly in \(r\), at a rate \(r^s\) (see Steps 5ab). Thus the terms above become
\begin{equation}
\iint_{\set{r\ge R}} r^{p-1+2s}[(\partial{}_v\varphi{})^2 + (\partial{}_v(rV\varphi{})^2)]\dd{}u\dd{}v.
\end{equation}
When \(p\le 2-\eta{}_0-2s\), the first term can be controlled by the Morawetz estimate
corresponding to \(\varphi{}\), and the second term is controlled by the bulk
term in the \(r^{p + 2s}\)-energy associated to \(V\varphi{}\). In general, control of the
\(r^p\)-energy associated to \(\Gamma{}^\alpha{}\varphi{}\) requires \(p\le 2-\eta_0-2s\) and control of
the \(r^{p + 2s}\) energy associated to \(\Gamma^{<\alpha{}}\varphi{}\). An induction argument shows
that, assuming the \(r^{2-\eta_0}\)-energy of \(\varphi{}\) is controlled, one can control
the \(r^p\)-energy of \(\Gamma^\alpha{}\varphi{}\) for \(p = 2-\eta_0-C_\alpha{}s\) for a constant \(C_\alpha{}\)
depending on \(\alpha{}\).

\step{Step 4: Pointwise estimates (\cref{sec:pointwise-estimates}).} In this step we
control a pointwise norm of the scalar field by data and the \(r^p\)-weighted
energy norm:
\begin{equation}
\mathcal{P}_{\alpha{},p}\le C(\mathfrak{b}_\alpha{},r^{-s}\mathfrak{g}_\alpha{})(\mathcal{D}_\alpha{} + \mathcal{E}_{\alpha{},p})
\end{equation}
for small \(s > 0\) (depending on \(\alpha{}\)). The redshift effect plays a role in
this section.

\step{Step 5: Estimates for geometric quantities (\cref{sec:geometric-estimates}).} This
is the most technical step of the proof. We show that
\begin{align}
\mathfrak{b}_\alpha{} &\le C(\alpha{},\mathfrak{b}_{<\alpha{}},\mathfrak{B}_{<\alpha{}},r^{-s}\mathfrak{G}_{<\alpha{}},\mathcal{E}_{<\alpha{},1},\mathcal{P}_{<\alpha{},1+\eta{}_0}), \label{intro:b-alpha} \\
\mathfrak{B}_\alpha{} &\le C(\alpha{},\mathfrak{b}_{\alpha{}},\mathfrak{B}_{<\alpha{}},r^{-s}\mathfrak{G}_{<\alpha{}},\mathcal{E}_{<\alpha{},1},\mathcal{P}_{<\alpha{},1}), \label{intro:B-alpha} \\
r^{-s}\mathfrak{g}_\alpha{} &\le C(\alpha{},\mathfrak{b}_{\le \alpha{}},r^{-s}\mathfrak{G}_{<\alpha{}},\mathcal{E}_{<\alpha{},1},\mathcal{P}_{<\alpha{},2-s+\eta{}_0}), \label{intro:g-alpha} \\
r^{-s}\mathfrak{G}_\alpha{} &\le C(\alpha{},\mathfrak{B}_{\le \alpha{}},r^{-s}\mathfrak{g}_{\le \alpha{}},r^{-s}\mathfrak{G}_{<\alpha{}},\mathcal{E}_{<\alpha{},1},\mathcal{P}_{<\alpha{},2-s+\eta{}_0}). \label{intro:G-alpha}
\end{align}
See \cref{sec:schematic-geom-quantities} for the definitions of the schematic
geometric quantities. In this section we do not discuss the estimates for
\(\mathfrak{B}_\alpha{}\) and \(\mathfrak{G}_\alpha{}\), as they are simpler
than the estimates for \(\mathfrak{b}_\alpha{}\) and \(\mathfrak{g}_\alpha{}\).

\step{Step 5a: Estimates for \(S\varpi{}\).} Write \(\varpi{}\) for the renormalized Hawking mass
(see \cref{renormalized-hawking-mass}). We explain why \(S\varpi{}\) is in
\(\mathfrak{g}_S\) and not \(\mathfrak{b}_S\), namely why we can only show that
\(S\varpi{}\) grows slowly in \(r\) (and not that it is bounded). We have
\begin{equation}
\abs{S\varpi{}}\lesssim v\abs{V\varpi{}} + \tau{}\abs{U\varpi{}},
\end{equation}
where \(\tau{}=u\) for large \(r\) and \(\tau{} \sim v\) for small \(r\). When \(r\le \tau{}/2\),
the transport equations for \(\varpi{}\) in the \(u\)- and \(v\)-directions yield
\begin{equation}
v\abs{V\varpi{}} + \tau{}\abs{U\varpi{}}\lesssim r^2\tau{}(D\varphi{})^2,
\end{equation}
where we have written \(D\) to stand for \(U\) or \(V\). Control of the
pointwise norm of \(D\varphi{}\) (which comes from Step 4 and ultimately from
the \(r^p\)-estimates of Step 3) gives just under \(3/2\) powers of decay,
namely \(\mathcal{P}_{2-s}^2[D\varphi{}] \le \mathcal{P}_{<\alpha{},2-s}^2\)
controls \(r\tau^{2-s}\abs{D\varphi{}}^2\) and \(r^2\tau^{1-s-\eta_0}\). By
interpolation, \(\mathcal{P}_{<\alpha{},2-s}^2\) controls
\(r^{2-s-\eta_0}\tau{}\). This shows that \(\abs{S\varpi{}}\lesssim r^{s +
\eta_0}\mathcal{P}_{<\alpha{},2-s}\).

\step{Step 5b: Estimates for \(S(-\gamma{})\).} Write \((-\gamma{}) \coloneqq{} \frac{\partial_ur}{1-2m/r}\) (see
\cref{sec:sph-symmetric-equations} for the definition of the Hawking mass \(m\)).
We now explain why \(rS(-\gamma{})\) is in \(\mathfrak{g}_S\) and not
\(\mathfrak{b}_S\), namely why we can only show that \(rS(-\gamma{})\) grows slowly in
\(r\) (and not that it is bounded) before controlling the energy associated to
\(S\varphi{}\).

As in Step 5a, we have
\begin{equation}
\abs{S\log (-\gamma{})}\lesssim v\abs{V\log (-\gamma{})} + \tau{}\abs{U\log (-\gamma{})}.
\end{equation}
The second term can be controlled by the estimate \(r\tau{}\abs{U\log (-\gamma{})}\lesssim 1\)
contained in the boundedness of the lower order quantity \(\mathfrak{B}_U\). For
the first term, we use the \(\partial{}_v\)-transport equation for \(\log (-\gamma{})\) to find
that, in the region \(r\gg 1\),
\begin{equation}
v\abs{V\log (-\gamma{})}\lesssim rv(\partial{}_v\varphi{})^2.
\end{equation}
In the region \(r\le \tau{}/2\), we have \(\tau{}\sim v\), so this term can be handled exactly as
in Step 5a. In the region \(r\ge \tau{}/2\), we have \(r\sim v\), so this term
is handled using the fact that \(\partial_v\)-derivatives decay with two powers of
\(r\) (namely \(r^2\abs{\partial_v\varphi{}}\le \mathcal{P}_{0,0}\)).

\step{Step 5c: Estimates for \(\Gamma^\alpha{}\kappa{}\).} Write \(\kappa{} \coloneqq{}
\frac{\partial{}_vr}{1-2m/r}\) (see \cref{sec:sph-symmetric-equations} for the
definition of the Hawking mass \(m\)). We will explain how to show that
\(S\kappa{}\) is bounded (given control of lower order energies and geometric quantities).

The gauge condition on the coordinate \(v\) fixes
\(\kappa{}|_{\set{r=r_{\mathcal{H}}}} = 1\), where \(r_{\mathcal{H}} \coloneqq{}
\sup_{\mathcal{H}}r\) is the limiting value of the area radius along the
horizon. We will explain how to obtain a boundedness estimate for \(S\log
\kappa{}\) from control of lower order quantities. Assume that the area radius
\(r\) of the point at which we want to estimate \(S\log \kappa{}\) satisfies
\(r\ge r_{\mathcal{H}}\) (the case \(r\le r_{\mathcal{H}}\) is similar).
Ignoring error terms arising from commuting \(S\) past \(U\), we use the
transport equation \(U\log \kappa{} = -r(U\varphi{})^2\) to obtain
\begin{equation}\label{intro:geom-kappa}
\abs{US\log \kappa{}} \lesssim r\abs{U\varphi{}}\abs{US\varphi{}}\lesssim rv\abs{U\varphi{}}(\abs{UV\varphi{}} + \abs{UU\varphi{}}),
\end{equation}
where we have used a statement \(\abs{S\psi{}}\lesssim v(\abs{U\psi{}} + \abs{V\psi{}})\). Integrating
\cref{intro:geom-kappa} to the constant-\(r\) curve \(\set{r=r_{\mathcal{H}}}\) (noting that \(U = -\underline{\partial{}}_r\) in \((r,v)\)
coordinates and \(S\log \kappa{}|_{\set{r=r_{\mathcal{H}}}} = 0\)), one obtains
\begin{equation}\label{intro:S-kappa}
\abs{S\log \kappa{}}(r)\lesssim v\int_{r_{\mathcal{H}}}^r r'\abs{U\varphi{}}\abs{UD\varphi{}}\dd{}r',
\end{equation}
where we write \(D\) for either \(U\) or \(V\). One establishes
\begin{equation}\label{intro:U-energy-estimate}
\int_{r_{\mathcal{H}}}^r r'(U\psi{})^2\dd{}r'\lesssim v^{-1}\mathcal{E}_1[\psi{}],
\end{equation}
where \(\mathcal{E}_1\) is an \(r^p\)-weighted energy norm with \(p = 1\) (see
\cref{u-energy-estimate}). The method is to split the integral into the regions
where \(\set{r\le v/2}\) and \(\set{r\ge v/2}\). For the first term one uses the
decay captured by the \(r^p\)-weighted energy quantity. The second term is
integrated over a region where \(r\sim v\), so one can use the \(r^2\)-weight in
the energy quantity \(E[\psi{}]\) controlled in Step 2, which is one power
stronger than the \(r\)-weight in \cref{intro:U-energy-estimate}. Now the integral
in \cref{intro:S-kappa} can be controlled using Cauchy--Schwarz and
\cref{intro:U-energy-estimate}, which leads to
\begin{equation}
\abs{S\log \kappa{}}(r)\lesssim \mathcal{E}_1[\varphi{}] + \mathcal{E}_1[D\varphi{}].
\end{equation}
In this way one obtains boundedness of \(S\log \kappa{}\) from boundedness of lower
order energies.

\step{Step 6: Putting it all together (\cref{sec:putting-it-all-together}).} Together with
certain zeroth order estimates (see \cref{sec:zeroth-order-geometric}), an
induction argument involving the results of Steps 1--5 shows that all our
schematic quantities of order \(\alpha{}\) are controlled by the gauge invariant
initial data norm \(\mathfrak{D}_{\abs{\alpha{}}}\). In particular, the control
in the pointwise norm \(\mathcal{P}_{\alpha{},2-2\epsilon{}}\) (for \(\alpha{} = kS\)) gives the
desired \cref{S-phi-estimate}.

\step{Step 7: Late-time tails (\cref{sec:late-time-tails}).} To prove the late-time tails
result in \cref{tails-theorem}, we appeal to \cite[Main Theorem 4]{luk-oh-tails}.
\Cref{sec:late-time-tails} is dedicated to satisfying the assumptions of
\cite{luk-oh-tails}. The additional ingredient (on top of the estimates obtained
in Step 6) is control of \((r\overline{\partial}_r)\)-derivatives of the scalar
field and the geometry. The observation of \cite{luk-oh-tails} (see also
\cite{MR1374174}) is that a linear combination of the wave operator \(\Box{}_g\)
and powers of the scaling vector field \(S\) is elliptic, which allows one to
obtain \(L^2\)-control of \((r\overline{\partial}_r)\)-derivatives of the scalar
field from \(L^2\)-control of \(S\)-derivatives of the scalar field. To upgrade
this to \(L^\infty\) control, we use the Klainerman's Sobolev type estimate in
\cite[Sec.~5.4]{luk-oh-tails}. These estimates are coupled to weak estimates for
the geometric quantities. Once we obtain control of
\((r\overline{\partial}_r)\)-derivatives of the scalar field in \(L^\infty\), we
establish stronger estimates, including asymptotics, for geometric quantities.
\subsection*{Acknowledgements}
The author thanks their advisor Jonathan Luk for introducing them to the problem
and for many useful discussions and comments on the manuscript. The author also
thanks Maxime Van de Moortel for his interest in this work. This project began
as the author's undergraduate thesis at Stanford University, and much of the
early work was carried out at the SURIM program. This material is based upon
work supported by the National Science Foundation Graduate Research Fellowship
under Grant No. DGE-2039656.
\section{Preliminaries}
\label{sec:orgfb6ad2c}
\subsection{Einstein--Maxwell--scalar field system in spherical symmetry}
\label{sec:orgf69cac0}
\label{sec:emsf}
\subsubsection{Spherically symmetric solutions}
\label{sec:org43eef39}
A \(4\)-dimensional Lorentzian manifold \((\mathcal{M},g)\) is called a
spherically symmetric spacetime if \(\mathcal{M} = \mathcal{Q}\times S^2\) for a
\(2\)-dimensional Lorentzian manifold with boundary \((\mathcal{Q},g_{\mathcal{Q}})\) that has
a global chart \((u,v)\) of ``double null coordinates,'' in which its metric takes
the form
\begin{equation}
g_{\mathcal{Q}} = -\Omega{}^2\dd{}u\dd{}v,
\end{equation}
and
\begin{equation}
g = g_{\mathcal{Q}} + r^2g_{\mathbf{S}^2}.
\end{equation}
Here \(g_{\mathbf{S}^2}\) is the round metric on the unit sphere \(r :
\mathcal{Q}\to [0,\infty)\) is the area-radius function, defined so that if
\(\pi{} : \mathcal{M}\to \mathcal{Q}\) is the quotient map, then
\(\pi{}^{-1}(p)\subset \mathcal{M}\) has area equal to that of a round sphere of radius \(r(p)\).
Observe that a spherically symmetric spacetime admits an action of \(\SO(3)\) by
isometries.

A solution \((\mathcal{M},g,\varphi{},F)\) to \cref{einstein-maxwell} is called spherically
symmetric if \((\mathcal{M},g)\) is a spherically symmetric spacetime,
\(\varphi{}\) is invariant under the \(\SO(3)\) action, \(F\) is invariant under
pullback by the \(\SO(3)\) action, and there exists \(\mathbf{e} : \mathcal{Q}\to
\R\) such that
\begin{equation}
F = \frac{\mathbf{e}}{2(\pi{}^\ast{}r)^2}\pi{}^\ast{}(\Omega{}^2\dd{}u\wedge{}\!\dd{}v).
\end{equation}
The equations imply that \(\mathbf{e}\) is a constant (see \cite[\S{}2]{dafermos03}).
\subsubsection{Equations in spherical symmetry}
\label{sec:org0ce2953}
\label{sec:sph-symmetric-equations}
Let \((u,v)\) be a double null coordinate system on \(\mathcal{Q}\). In
spherical symmetry, the Einstein--Maxwell--scalar field system becomes a system
of wave equations for \((r,\varphi{},\Omega{})\):
\begin{equation}
\begin{cases}
\vspace{1.5ex}\hfill \partial{}_u\partial{}_vr &= -\dfrac{\Omega{}^2}{4r} - \dfrac{\partial{}_ur\partial{}_vr}{r} + \dfrac{\Omega{}^2\mathbf{e}^2}{4r^3}, \\
\vspace{1.5ex}\hfill \partial{}_u\partial{}_v\varphi{} &= - \dfrac{\partial{}_vr\partial{}_u\varphi{}}{r} - \dfrac{\partial{}_ur\partial{}_v\varphi{}}{r}, \\
\hfill \partial{}_u\partial{}_v\log \Omega{} &= -2\partial{}_u\varphi{}\partial{}_v\varphi{} - \dfrac{\Omega{}^2\mathbf{e}^2}{r^4} + \dfrac{\Omega{}^2}{2r^2} + \dfrac{2\partial{}_ur\partial{}_vr}{r^2}.
\end{cases}
\end{equation}
We also have the following Raychaudhuri equations:
\begin{equation}\label{raychaudhuri-equations}
\begin{cases}
\vspace{2ex}\partial{}_v\Bigl(\dfrac{\partial{}_vr}{\Omega{}^2}\Bigr) = -\dfrac{r(\partial{}_v\varphi{})^2}{\Omega{}^2}, \\
\partial{}_u\Bigl(\dfrac{\partial{}_ur}{\Omega{}^2}\Bigr) = -\dfrac{r(\partial{}_u\varphi{})^2}{\Omega{}^2}.
\end{cases}
\end{equation}

We will work with a different set of equations written in terms of the Hawking
mass. Introduce the following notation for null derivatives of \(r\):
\begin{equation}
\lambda{} \coloneqq{} \partial_vr\qquad \nu{} \coloneqq{} \partial_ur.
\end{equation}
Define the Hawking mass \(m : \mathcal{Q}\to \R\) by
\begin{equation}\label{hawking-mass}
m \coloneqq{} \frac{r}{2}(1 - g_{\mathcal{Q}}(\Grad r,\Grad r)) = \frac{r}{2}\Bigl(1 + \frac{4\lambda{}\nu{}}{\Omega{}^2}\Bigr),
\end{equation}
as well as the renormalized Hawking mass
\begin{equation}\label{renormalized-hawking-mass}
\varpi{} \coloneqq{} m + \frac{\mathbf{e}^2}{2r}.
\end{equation}
The wave operator \(\Box{}\coloneqq{}-4\Box_g\) takes the form
\begin{equation}
\Box{} = \frac{1-\mu{}}{\lambda{}(-\nu{})}\Bigl(\partial{}_u\partial{}_v + \frac{\lambda{}}{r}\partial{}_u + \frac{\nu{}}{r}\partial{}_v\Bigr).
\end{equation}
Define
\begin{equation}\label{mu-definition}
\mu{} \coloneqq{} \frac{2m}{r} = \frac{2\varpi{}}{r} - \frac{\mathbf{e}^2}{r^2}
\end{equation}
and set
\begin{equation}
\kappa{} \coloneqq{} \frac{\lambda{}}{1-\mu{}}\qquad \gamma{} \coloneqq{} \frac{\nu{}}{1-\mu{}}.
\end{equation}
When \(1-\mu{}\neq{}0\), a spherically symmetric solution satisfies
\begin{equation}\label{sph-sym-equations-1}
\begin{cases}
\hfill \partial{}_u\partial{}_vr &= \partial{}_u\lambda{} = \partial{}_v\nu{} = \dfrac{2(\varpi{}-\mathbf{e}^2/r)}{r^2} \dfrac{\lambda{}\nu{}}{1-\mu{}} \\
\hfill \Box{}\varphi{} &= 0, \\
\hfill \partial{}_u\varpi{} &= -\dfrac{r^2}{2(-\gamma{})}(\partial{}_u\varphi{})^2, \\
\hfill \partial{}_v\varpi{} &= \dfrac{r^2}{2\kappa{}}(\partial{}_v\varphi{})^2, \\
\end{cases}
\end{equation}
as well as
\begin{equation}\label{sph-sym-equations-2}
\begin{cases}
\vspace{1ex}\hfill\partial{}_u\kappa{} &= -\dfrac{r}{(-\nu{})}(\partial{}_u\varphi{})^2\kappa{} \\
\partial{}_v(-\gamma{}) &= \dfrac{r}{\lambda{}}(\partial{}_v\varphi{})^2(-\gamma{}).
\end{cases}
\end{equation}
\subsection{Admissible initial data}
\label{sec:org06d6784}
\label{sec:admissible-data} See \cite[\S{}2]{luk-oh-scc1} for the definition of
(asymptotically flat) future admissible data that we use.
\subsection{Mass inflation criterion}
\label{sec:org2c2e414}
\label{sec:mass-inflation-and-inextendibility} We now explain how to obtain the
criterion \cref{luk-oh-sr-mass-inflation} for mass inflation from the work of
Luk--Oh--Shlapentokh-Rothman \cite{luk2022scattering}.
\begin{proof}[Proof of \cref{luk-oh-sr-mass-inflation} from \cite{luk2022scattering}]
Fix a solution to \cref{einstein-maxwell} arising from admissible data and let \(I_{p,k}\) be the statement that
\begin{equation}
\int_{\mathcal{H}\cap \set{v\ge 1}} v^p(\partial{}_v^k\varphi{})^2 < \infty.
\end{equation}
Our goal is to show the statement in \cref{luk-oh-sr-mass-inflation}, namely that
\cref{L2-lower-bound} together with \(I_{4,1}\) and \(I_{8,k}\) for some \(k\ge 2\)
implies mass inflation. Inspecting the proof of \cite[Thm.~7.8]{luk2022scattering}
(in particular the role of \cite[Cor.~7.2]{luk2022scattering} in establishing
\cite[Thm.~7.14]{luk2022scattering}) and noting \cite[Rem.~4.3]{luk2022scattering}
shows that mass inflation holds whenever \cref{L2-lower-bound} does if one of the
following statements holds:
\begin{enumerate}
\item \(I_{6,1}\) is false, but \(I_{4,1}\) is true, and \(I_{6,k}\) is true for
some \(k\ge 2\),
\item or \(I_{6,1}\) is true and \(I_{8,k}\) is true for some \(k\ge 2\).
\end{enumerate}
Hence the statement in \cref{luk-oh-sr-mass-inflation} holds, since \(I_{4,1}\) and
\(I_{8,k}\) for some \(k\ge 2\) imply one of the above statements (regardless of
the truth of \(I_{6,1}\)).
\label{mass-inflation-from-main-theorem}
\end{proof}
\begin{remark}
In fact \(I_{6,1}\) is true, in view of the late-time tails result in
\cref{tails-theorem}, but we provide this proof to emphasize that the non-sharp
decay established by \cref{main-theorem-rough} suffices to establish mass inflation.
\end{remark}
\subsection{Coordinate systems}
\label{sec:org3b22535}
\label{sec:coordinate-systems}
We will use four coordinate systems on \(\mathcal{Q}\):
\begin{description}
\item[{Double null gauge normalized on a curve of constant \(r\)}] The double null coordinates
\((u,v)\) defined by the normalizations
\(\frac{(-\partial{}_ur)}{1-\mu{}}|_{\mathcal{I}} =
  \frac{\partial{}_vr}{1-\mu{}}|_{\set{r{=r_{\mathcal{H}}}}} = 1\) (where
\(r_{\mathcal{H}} \coloneqq{} \sup_{\mathcal{H}} r\)), as well as \(u \equiv
  1\) on \(C^{\mathrm{out}}\) and \(v\equiv 1\) on \(C^{\text{in}}\). We write
\((\partial{}_u,\partial{}_v)\) for the coordinate derivatives.
\item[{Double null gauge normalized at the horizon (Eddington--Finkelstein-type)}] The double null coordinates \((u,\bar{v})\), where \(u\) is defined as in the
previous gauge and \(v\) is defined by
\(\frac{\partial_vr}{1-\mu{}}|_{\mathcal{H}} = 1\) and \(v\equiv 1\) on
\(C^{\text{in}}\). We write \((\partial_u,\partial_{\bar{v}})\) for the
coordinate derivatives. In terms of the \((u,v)\) gauge, we have
\(\partial_{\bar{v}}|_{(u,v)} = \kappa{}^{-1}|_{\mathcal{H}}(v)\partial_v|_{(u,v)}\).
\item[{\(\mathcal{I}\)-gauge}] The \((u,r)\)-coordinates, with coordinate
derivatives \((\overline{\partial}_u,\overline{\partial}_r)\). In terms of the \((u,v)\) gauge, we have \(\overline{\partial}_u = \partial_u + \frac{(-\nu{})}{\lambda{}}\partial_v\) and \(\overline{\partial}_r = \frac{1}{\lambda{}}\partial_v\).
\item[{\(\mathcal{H}\)-gauge}] The \((r,v)\)-coordinates, with coordinate
derivatives \((\underline{\partial{}}_r,\underline{\partial{}}_v)\). In terms of the \((u,v)\)
gauge, we have \(\underline{\partial{}}_r = \frac{1}{\nu{}}\partial_u\) and \(\underline{\partial{}}_v =
  \partial_v + \frac{\lambda{}}{(-\nu{})}\partial_u\).
\end{description}
\begin{remark}
The Eddington--Finkelstein-type gauge normalized so that
\(\frac{\partial_vr}{1-\mu{}}|_{\mathcal{H}} = 1\) is more common in the literature (see
for instance \cite{dafermos-rodnianski-price,luk-oh-scc2}) than the gauge
normalized on a curve of constant \(r\). We remark that a double null gauge
normalized on a curve of constant \(r\) is used in the recent work \cite{angelopoulos2024nonlinear}.
\end{remark}
The wave operator in these coordinates takes the following forms:
\begin{lemma}
In a double null gauge, we have
\begin{equation}\label{box-uv-gauge}
\Box{} = \frac{1}{\kappa{}(-\nu{})}\Bigl(\partial{}_u\partial{}_v + \frac{\lambda{}}{r}\partial{}_u + \frac{\nu{}}{r}\partial{}_v\Bigr).
\end{equation}
In the \(\mathcal{I}\)-gauge, we have
\begin{equation}\label{box-I-gauge}
\Box{} = -(1-\mu{})\overline{\partial}_r^2 + \frac{1}{(-\gamma{})}\overline{\partial}_r\overline{\partial}_u - \Bigl(\frac{2}{r} - \frac{2\varpi{}}{r^2}\Bigr)\overline{\partial}_r + \frac{1}{r}\frac{1}{(-\gamma{})}\overline{\partial}_u.
\end{equation}
In the \(\mathcal{H}\)-gauge, we have
\begin{equation}\label{box-H-gauge}
\Box{} = -(1-\mu{})\underline{\partial{}}_r^2 - \frac{1}{\kappa{}}\underline{\partial{}}_r\underline{\partial{}}_v - \Bigl(\frac{2}{r} - \frac{2\varpi{}}{r^2}\Bigr)\underline{\partial{}}_r - \frac{1}{r}\frac{1}{\kappa{}}\underline{\partial{}}_v.
\end{equation}
\label{wave-in-gauge}
\end{lemma}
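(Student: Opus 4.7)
The first formula \cref{box-uv-gauge} is simply the double null expression for $\Box = -4\Box_g$ written out in Section~\ref{sec:sph-symmetric-equations}; it arises from computing $\Box_g$ on the warped product $\mathcal{Q}\times \mathbf{S}^2$, extracting the radial part of the $\mathbf{S}^2$-trivial wave operator, and using the Hawking-mass identity $\Omega^2 = -\frac{4\lambda\nu}{1-\mu}$ (from \cref{hawking-mass,mu-definition}) to rewrite the $\Omega^{-2}$ prefactor as $\frac{1-\mu}{\lambda(-\nu)}$. The real content of the lemma is therefore the translation of the first formula into the $\mathcal{I}$- and $\mathcal{H}$-gauges by the chain rules from the bulleted list preceding the lemma.

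For the $\mathcal{I}$-gauge, I substitute $\partial_v = \lambda\overline{\partial}_r$ and $\partial_u = \overline{\partial}_u + \nu\overline{\partial}_r$ into \cref{box-uv-gauge} and collect. The top-order pieces immediately collapse: $\lambda\nu\overline{\partial}_r^2$ becomes $-(1-\mu)\overline{\partial}_r^2$ after multiplication by the prefactor, and $\lambda\overline{\partial}_u\overline{\partial}_r$ becomes $\frac{1}{(-\gamma)}\overline{\partial}_u\overline{\partial}_r$ using $(1-\mu)/(-\nu) = 1/(-\gamma)$. The term $\frac{\lambda}{r}\partial_u$ produces the advertised $\frac{1}{r(-\gamma)}\overline{\partial}_u$.

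The only step requiring care is the coefficient of $\overline{\partial}_r$, which collects contributions from $\partial_u$ hitting the factor $\lambda$ inside $\partial_v = \lambda\overline{\partial}_r$ and from the cross terms in $\frac{\lambda}{r}\partial_u + \frac{\nu}{r}\partial_v$. Using the chain rule identity $\overline{\partial}_u\lambda + \nu\overline{\partial}_r\lambda = \partial_u\lambda$, the raw coefficient is $\partial_u\lambda + \frac{2\lambda\nu}{r}$; substituting the evolution equation $\partial_u\lambda = \frac{2(\varpi - \mathbf{e}^2/r)}{r^2}\frac{\lambda\nu}{1-\mu}$ from \cref{sph-sym-equations-1} and multiplying by the prefactor $\frac{1-\mu}{\lambda(-\nu)}$ gives $-\frac{2(\varpi - \mathbf{e}^2/r)}{r^2} - \frac{2(1-\mu)}{r}$. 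Expanding $1-\mu = 1 - 2\varpi/r + \mathbf{e}^2/r^2$ produces the cancellation that collapses this to $-\bigl(\frac{2}{r} - \frac{2\varpi}{r^2}\bigr)$, matching \cref{box-I-gauge}.

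The $\mathcal{H}$-gauge formula \cref{box-H-gauge} follows from the symmetric computation with $\partial_u = \nu\underline{\partial}_r$ and $\partial_v = \underline{\partial}_v + \lambda\underline{\partial}_r$. The sign differences against the $\mathcal{I}$-gauge---namely the $-\frac{1}{\kappa}$ in front of the mixed second-order term and the overall sign on the $\underline{\partial}_v$ term---come from the prefactor $\frac{1-\mu}{\lambda(-\nu)}$ now being multiplied by $\nu<0$ in place of $\lambda>0$; the first-order coefficient closes under the same manipulation, using the chain rule identity $\nu\underline{\partial}_r\lambda = \partial_u\lambda$ (equivalently one could work with $\partial_v\nu$) together with the evolution equation from \cref{sph-sym-equations-1}. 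The main (and only) obstacle is careful sign bookkeeping in the first-order terms; there is no analytic difficulty.
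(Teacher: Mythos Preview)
Your proof is correct. The paper states this lemma without proof, treating it as a routine coordinate computation; your approach---substituting the change-of-variables formulas $\partial_v = \lambda\overline{\partial}_r$, $\partial_u = \overline{\partial}_u + \nu\overline{\partial}_r$ (resp.\ $\partial_u = \nu\underline{\partial}_r$, $\partial_v = \underline{\partial}_v + \lambda\underline{\partial}_r$) into \cref{box-uv-gauge} and simplifying the first-order coefficient via the evolution equation $\partial_u\lambda = \partial_v\nu$ from \cref{sph-sym-equations-1}---is precisely the natural one, and your cancellation yielding $-\bigl(\tfrac{2}{r} - \tfrac{2\varpi}{r^2}\bigr)$ is correct.
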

We introduce a ``time'' parameter \(\tau{}\) defined by
\begin{equation}\label{tau-def}
\tau{}(u,v) \coloneqq{} \begin{cases}
u_{R_0}(v) & r(u,v)\le R_0 \\
u & r(u,v)\ge R_0.
\end{cases}
\end{equation}
Here \(u_{R_0}(v)\) is the \(u\)-value such that \(r(u=u_{R_0}(v),v) = R_0\), and \(R_0\) is a quantity determined from characteristic initial data by
\begin{equation}
R_0\coloneqq{} r(1,1) = r|_{C^{\text{in}}\cap C^{\mathrm{out}}} = \sup_{C^{\text{in}}}r.
\end{equation}
Since the coordinates \(u,v\) are normalized so that \(u,v\ge 1\) in the future of
\(C^{\text{in}}\cup C^{\mathrm{out}}\), we have \(\tau{}\ge 1\). The weight
\(\tau{}\) will be used to quantify decay of energy along a suitable foliation
(\cref{sec:norms:energy}), of pointwise norms of the scalar field
(\cref{sec:norms:pointwise}), and of pointwise norms of the scalar field
(\cref{sec:norms:pointwise}).
\subsection{Commutator vector fields}
\label{sec:org2f17c61}
\label{sec:commutator-vector-fields} Let \(\Rc > 0\). The value of \(\Rc\) will be
fixed to be large in \cref{sec:energy-estimates}. Let \(\chi = \chi{}_{r\lesssim \Rc} :
[0,\infty)\to [0,1]\) be a smooth non-increasing function that equals \(1\) in
\(\set{r\le \Rc}\) and \(0\) in \(\set{r\ge 2\Rc}\) such that \(\abs{\chi^{(n)}(r)}\le
C_nr^{-n}\).

Define the commutator vector fields
\begin{equation}
U \coloneqq{} \frac{1}{(-\partial{}_ur)}\partial{}_u\qquad V\coloneqq{} \chi{}_{r\lesssim \Rc}(r)\underline{\partial{}}_v + (1-\chi{}_{r\lesssim \Rc}(r))\overline{\partial}_r \qquad S\coloneqq{} \chi{}_{r\lesssim \Rc}(r)v\underline{\partial{}}_v + (1-\chi{}_{r\lesssim \Rc}(r))(u\overline{\partial}_u + r\overline{\partial}_r).
\end{equation}
It is immediate that
\begin{equation}\label{Gamma-r-initial-calculation}
Ur = -1\qquad Vr = (1-\chi{}(r))\qquad Sr = r(1-\chi{}(r)).
\end{equation}
\subsection{Notation}
\label{sec:org620ecf8}
\subsubsection{Notation for functions}
\label{sec:org32215e1}
From this point on, fix a spherically symmetric solution
\((\mathcal{M},g,\varphi{},F)\). We will write \(\psi{}\) for a general
sufficiently regular function and write \(\varphi{}\) for the scalar field in the
solution.
\subsubsection{Constants}
\label{sec:orgf151797}
\label{sec:notation:constants} We write \(C\) for a large constant which can change
from line to line. Constants depend only on
the following quantities that can be determined from initial data: the initial
mass \(\varpi_i\), the minimum radius \(r_{\mathrm{min}}\), and the redshift
constant \(c_{\mathcal{H}}\). In view of the subextremalilty of the horizon
(i.e. \(\abs{\mathbf{e}} < \varpi_f\) for \(\varpi_f = \lim_{v\to \infty}\varpi{}|_{\mathcal{H}}(v)\) the
final mass) and the monotonicity properties of the
mass (i.e. \(\varpi{}_f \le \varpi{}_i\)), it follows that \(\abs{\mathbf{e}}\le
C\). We will not track the dependence of constants on \(R_0\), but we
will ensure that \(R_0\) depends only on \(\varpi_i\) (see
\cref{zeroth-order-geometric-bounds}).

Here are the global constants that we use:
\begin{center}
\begin{tabular}{lll}
  \(\varpi_i\) & initial mass of the spacetime (see \cref{sec:reduction-to-characteristic}) & \\[0pt]
  \(c_{\mathcal{H}}\) & redshift constant (see \cref{sec:reduction-to-characteristic}) & \\[0pt]
  \(\eta{}_0\) & global small constant (see \cref{sec:bulk-energy}) & \\[0pt]
  \(r_{\text{min}}\) & minimum radius achieved on the initial data hypersurface & \\[0pt]
  \(R_0\gg\varpi_i\) & radius of the vertex of \(\mathcal{R}_{\text{char}}\) (see \cref{sec:coordinate-systems}), fixed in \cref{zeroth-order-geometric-bounds} & \\[0pt]
  \(\Rc\gg R_0\) & \makecell[l]{radius at which commutator vector fields \(V\)
                   and \(S\) change from being adapted to the \\ horizon to being adapted to null infinity, fixed in \cref{sec:energy-estimates}}\\[0pt]
\end{tabular}
\end{center}
The main role of \(\eta_0\) is in the degeneration of the Morawetz estimate as the
weight in the bulk term approaches \(r^{-1}\) and in the degeneration of the
\(r^p\) estimates as \(p\) approaches \(2\). In \cref{sec:putting-it-all-together},
the constant \(\eta_0\) will be fixed based on the multi-index \(\alpha{}\) that
counts how many times we have commuted. The constant \(\Rc\) will be chosen
large based on the schematic geometric quantities \(\mathfrak{B}_0^{\circ }\)
and \(\mathfrak{g}_0\) (see \cref{sec:schematic-geom-quantities}) as well as the
multi-index \(\alpha{}\).

The notation \(A\lesssim B\) means that \(A\le CB\) for some positive constant \(C
> 0\). We do not require \(A\) or \(B\) to have a sign.
\subsubsection{Multi-index notation for derivatives}
\label{sec:org3813b65}
\label{sec:multi-index-notation}
Let \(\alpha{} = (\alpha{}_1,\alpha{}_2,\alpha{}_3) \in \Z^3\) be a triple of integers. We write \(\abs{\alpha{}} =
\alpha_1 + \alpha_2 + \alpha_3\). We introduce a total order on \(\Z^3\) such
that such that \(\alpha{} < \alpha{}'\) if either \(\abs{\alpha{}} <
\abs{\alpha{}'}\) or \(\alpha_i < \alpha_i'\) for the largest \(i\in
\set{1,2,3}\) such that \(\alpha_i\neq{}\alpha_i'\). Here are some examples of
this ordering:
\begin{equation}
(1,0,0) < (0,1,0) < (0,0,1)\qquad (0,2,0) < (1,0,1) \qquad (0,1,5) < (7,0,0)\quad (0,0,1) < (-1,0,2).
\end{equation}

We will use the commutator vector fields \(U\), \(V\), and \(S\), which were
defined in \cref{sec:commutator-vector-fields}. For this reason, we will often
label the components of a multi-index \(\alpha{}\) as \(\alpha{} =
(\alpha_U,\alpha_V,\alpha_S)\). When performing index arithmetic, we will
identify \(U\) with \((1,0,0)\), \(V\) with \((0,1,0)\), and \(S\) with
\((0,0,1)\). For example, \(\alpha{} + U\) has components \((\alpha_U +
1,\alpha_V,\alpha_S)\).
\begin{remark}
We will only write down multi-indices with non-negative entries (i.e. \(\alpha{}\in \Z_{\ge
0}^3\)), but our formalism is such that a multi-index with negative entries may
arise as the difference of two non-negative multi-indices.
\end{remark}
\subsubsection{Schematic notation}
\label{sec:org34bc048}
\label{sec:schematic-notation}
We use the symbol \(=_{\mathrm{s}}\) to indicate that an equation is to be understood
schematically,\sidenote{A similar schematic notation is used in \cite{dafermos-2017-inter-dynam}.} in the sense that we adopt the following conventions:
\begin{description}
\item[{Numerical constants}] Constants on the left side of a schematic equation are
exact. On the other hand, each term on the right side of a schematic equation
can be multiplied by an implicit constant (which may be zero). For example,
the equation \(\partial{}_v\varphi{} = r^{-1}\partial{}_v(r\varphi{}) -
  r^{-1}(\partial{}_vr)\varphi{}\) can be schematically written
\(\partial_v\varphi{} =_{\mathrm{s}} r^{-1}\partial_v(r\varphi{}) +
  r^{-1}(\partial_vr)\varphi{} + r^{-3}\). Another example is that \(1 - \mu{} =
  1 - 2\varpi{}/r + \mathbf{e}^2/r^2\) can be written \(1 - \mu{} =_{\mathrm{s}} 1 +
  r^{-1}\varpi{} + r^{-2}\), since \(\mathbf{e}^2\) is a constant (bounded by \(\varpi{}_i\)).
\item[{Particular expressions on the left}] When an expression such as \(\Gamma^\alpha{}\)
appears on the left side of a schematic identity, it stands for a particular
\(L\in \Gamma^\alpha{}\); on the right side, \(\Gamma^\alpha{}\) stands for a
sum over all \(L\in \Gamma^\alpha{}\).
\item[{Distributivity over square brackets}] Schematic expressions involving square
brackets should be fully expanded before being parsed. For example, the
schematic expression \(\lambda{}[(1-r) + r]\) involves \(\lambda{}(1-r)\) and
\(\lambda{}r\), and it is not schematically equal to \(\lambda{}\) (but it is
schematically equal to \(\lambda{} + \lambda{}r\)). We will use round brackets
to preserve information about the grouping of terms. For example, \(\lambda{}(5 -
  r + r) =_{\mathrm{s}} \lambda{}\).
\item[{Curly braces notation}] Curly braces are an alternative to the square bracket
notation. For example, we may write \(\lambda{}\set{\kappa{},(-\gamma{})}\) in
place of \(\lambda{}[\kappa{} + (-\gamma{})]\).
\item[{Derivatives (\(\Gamma^\alpha{}\) notation)}] If \(\alpha{}\in \Z_{\ge 0}^3\), then \(\Gamma^\alpha{}\) denotes a
product of \(\alpha_1\) many \(U\)'s, \(\alpha_2\) many \(V\)'s, and
\(\alpha_3\) many \(S\)'s, in any order. If \(\alpha_i < 0\) for some \(i \in
  \set{1,2,3}\), then we set \(\Gamma^\alpha{} = 0\). If \(n\in \Z_{\ge 0}\),
then we schematically write \(\Gamma^n\) for any \(\Gamma^\alpha{}\), where
\(\alpha{}\in \Z_{\ge 0}^3\) with \(\abs{\alpha{}} = n\). We write
\(\Gamma^{\le \alpha{}}\) to stand in for \(\Gamma^\beta{}\), where
\(\beta{}\) is a multi-index with \(\beta{}\le \alpha{}\), and we similarly
define \(\Gamma^{<\alpha{}}\). When such an expression appears on the right of
\(=_{\mathrm{s}}\), it stands for a sum over all possible differential operators it could
represent. For example, if \(\alpha{} = (1,1,0)\), then \(\Gamma^{\le
  \alpha{}}\) on the right of \(=_{\mathrm{s}}\) stands for the schematic expression
\(\set{1,U,V,UV,VU}\).
\item[{Derivatives (\(D^\alpha{}\) notation)}] When considering only \(U\) and \(V\), we
will use the symbol \(D\) in place of \(\Gamma{}\), and the same notational
conventions apply. For example, \(D^2S\) stands schematically for any of the
terms \(U^2S\), \(UVS\), \(VUS\), or \(V^2S\).
\end{description}
It will be convenient to abuse notation and write \(\Gamma^\alpha{}\) not only for a
particular instance of a product of vector fields, but also for the set of such
products of vector fields. For example, we may write \(VUV\in
\Gamma{}^{(1,2,0)}\) or consider the quantity \(\sum_{L\in \Gamma{}^\alpha{}}\abs{Lf}\) for some
function \(f\).

Observe that we can differentiate schematic equations.
\subsubsection{\texorpdfstring{\(\mathcal{O}\)}{O}-notation}
\label{sec:orgcef90a6}
\label{sec:O-notation}
We use expressions \(\mathcal{O}(\cdot )\) as an even coarser form of schematic
notation. We write \(\mathcal{O}(a_1,\ldots,a_k)\) to denote an expression of
the form \(f(r,a_1,\ldots,a_k)\) for a smooth function \(f :
[r_{\mathrm{min}},\infty)\times \R^k\to \R\) such that
\begin{equation}\label{O-notation-definiiton}
\abs{\partial_1^n
\partial_{>1}^mf(t,\set{b_i})}\le C(f,n,m,\max_i \abs{b_i})t^{-n}.
\end{equation}
Any further parameters on which the expression depends should be notated with a
subscript. For example, the cutoff function \(\chi{}_{r\lesssim \Rc}\) with
support in \(\set{r\le 2\Rc}\) (defined in \cref{sec:commutator-vector-fields})
satisfies \(\chi_{r\lesssim \Rc}(r) = \mathcal{O}(1)\) and \(r^k\chi_{r\lesssim
\Rc}(r) = \mathcal{O}_{k,\Rc}(1)\). We now discuss some features of this notation.
\begin{description}
\item[{Order of growth}] The quantitative dependence on parameters can be quite
poor. For example, \(\exp \exp \kappa{} = \mathcal{O}(\kappa{})\).
\item[{Negative powers of \(r\)}] This notation does not keep track of negative
powers of \(r\), since \(r^{-n} = \mathcal{O}_n(1)\). For example, the
equation \(\mu{} = 2\varpi{}/r - \mathbf{e}^2/r^2\) implies \(1 - \mu{} = \mathcal{O}(\varpi{})\)
and \(\mu{} = r^{-1}\mathcal{O}(\varpi{})\).
\item[{Specifying the support in \(r\)}] We write \(\mathbf{1}_{r\ge
  R}\mathcal{O}(\set{a_i})\) when the corresponding smooth function given by
\(f(\cdot ,\set{b_i})\) is supported in \(\set{r\ge R}\) for all arguments
\(b_i\). Similarly, we write \(\mathcal{O}(\mathbf{1}_{r\le
  R}a_1,a_2,\ldots,a_n)\) when \(f(r,a_1,\ldots,a_n) = f(r,0,a_2,\ldots,a_n)\)
for \(r > R\). Similarly we define \(\mathbf{1}_{r\le R}g\) in place of
\(\mathbf{1}_{r\le R}\mathcal{O}(1)g\). Under this convention, we have
\(\mathbf{1}_{r\ge R}\mathcal{O}(\set{a_i}) = \mathcal{O}(\set{\mathbf{1}_{r\le
  R}a_i})\).
\item[{Algebra of \(\mathcal{O}\)-notation}] We have \(\mathcal{O}(\set{a_i}) +
  \mathcal{O}(\set{b_j}),\,\mathcal{O}(\set{a_i})\mathcal{O}(\set{b_j}) =_{\mathrm{s}}
  \mathcal{O}(\set{a_i},\set{b_i})\).
\item[{Differentiating \(\mathcal{O}\)-notation}] Let \(\Gamma{}\) be one of \(U\),
\(V\), or \(S\). If the \(a_i\) are smooth, then we have \(\Gamma{}\mathcal{O}(\set{a_i})
  = \mathcal{O}(\set{a_i},\set{\Gamma{}a_i},r^{-1}\Gamma{}r)\) by the chain rule and
\cref{O-notation-definiiton}. Since \(\Gamma{}r = r\mathcal{O}(1)\) (see
\eqref{Gamma-r-initial-calculation}), we have
\(\Gamma{}^\alpha{}\mathcal{O}(\set{a_i}) = \mathcal{O}(\set{\Gamma{}^{\le
  \alpha{}}a_i})\).
\end{description}
\subsection{Norms and schematic geometric quantities}
\label{sec:orgcee1d42}
\label{sec:norms}
\subsubsection{Initial data norms}
\label{sec:orge4a2282}
For an integer \(n\ge 0\), define the following gauge invariant norm for
characteristic initial data:
\begin{equation}
\mathfrak{D}_n[\psi{}] \coloneqq{}\sup_{C^{\text{in}}} \abs{U^{\le n+1}\psi{}} + \sup_{C^{\mathrm{out}}} \abs{(r^2\overline{\partial}_r)^{\le 1}(r\overline{\partial}_r)^{\le n}(r\psi{})}.
\end{equation}
Write \(\mathfrak{D}_n \coloneqq{} \mathfrak{D}_n[\varphi{}]\). Define
\begin{equation}
\mathcal{D}_\alpha{} \coloneqq{} \mathfrak{D}_0[\Gamma{}^\alpha{}\psi{}].
\end{equation}
\subsubsection{Energy norms}
\label{sec:org3548c59}
\label{sec:norms:energy} Let \(p\ge 0\). For a spacetime region \(\mathcal{R}\), we
define the \(r^p\)-weighted bulk energy quantity
\begin{equation}
E_{p,\mathrm{bulk}}[\psi{},\mathcal{R}] \coloneqq{}\iint_{\mathcal{R}\cap \set{r\ge R_0}} \frac{r^{p-1}}{\lambda{}}(\partial{}_v(r\psi{}))^2(-\nu{})\dd{}u\dd{}v.
\end{equation}
Let \(\mathcal{R}_{\text{char}}\) be the future development of the
characteristic data. We define a norm \(\mathcal{E}_p[\psi{}]\) that has weights
in \(\tau{}\) (see \cref{tau-def}):
\begin{equation}\label{Ep-norm-definition}
\begin{split}
\mathcal{E}_p[\psi{}]^2 &\coloneqq{} \sup_{u,v\in [1,\infty)}\tau{}^p(u,v)\int_u^\infty \frac{r^2}{(-\nu{})}(\partial{}_u\psi{})^2(u',v)\dd{}u' + \sup_{u,v\in [1,\infty)}\tau{}^p(u,v)\int_v^\infty \frac{r^2}{\kappa{}}(\partial{}_v\psi{})^2(u,v')\dd{}v' \\
&\qquad + \mathbf{1}_{p>0}E_{p,\mathrm{bulk}}[\psi{},\mathcal{R}_{\text{char}}] + \mathbf{1}_{p\ge 1+\eta{}_0}\sup_{u\in [1,\infty]}\tau{}^{p-1-\eta{}_0}(u,v_{R_0}(u))\int_{v_{R_0}(u)}^\infty \frac{r^{1+\eta{}_0}}{\lambda{}}(\partial{}_v(r\psi{}))^2\dd{}v'. \\
\end{split}
\end{equation}
Note that \(\mathcal{E}_0[\psi{}]\) does not include a bulk
term. We allow ourselves to write \(\mathcal{E}[\psi{}]\) in place of \(\mathcal{E}_0[\psi{}]\):
\begin{equation}
\mathcal{E}[\psi{}]^2 \coloneqq{} \mathcal{E}_0[\psi{}] = \sup_{v\in [1,\infty)}\int_1^\infty \frac{r^2}{(-\nu{})}(\partial{}_u\psi{})^2(u,v)\dd{}u + \sup_{u\in [1,\infty)}\int_1^\infty \frac{r^2}{\kappa{}}(\partial{}_v\psi{})^2(u,v)\dd{}v. \\
\end{equation}
For \(\Gamma^\alpha{}\varphi{}\), we define the following norms:
\begin{equation}
\mathcal{E}_{\alpha{}}\coloneqq{}\sum_{\beta{}\le \alpha{}}\mathcal{E}[\Gamma{}^{\beta{}}\varphi{}], \qquad\mathcal{E}_{\alpha{},p}\coloneqq{}\sum_{\beta{}\le \alpha{}}\mathcal{E}_p[\Gamma{}^{\beta{}}\varphi{}].
\end{equation}
\subsubsection{Pointwise norm}
\label{sec:org81844e8}
\label{sec:norms:pointwise}
For \(p\ge 0\), define a pointwise norm \(\mathcal{P}_p[\psi{}]\):
\begin{align}
\mathcal{P}_p[\psi{}] &\coloneqq{}\norm{r\psi{}}_{L^\infty} + \norm{rU\psi{}}_{L^\infty} + \norm{r^2\partial{}_v\psi{}}_{L^\infty} + \norm{r^{1/2}\tau{}^{p/2}\psi{}}_{L^\infty} + \mathbf{1}_{p\ge 1+\eta{}_0}\norm{r\tau{}^{p/2-1/2-\eta{}_0/2}\psi{}}_{L^\infty},
\end{align}
where we write \(L^\infty = L^\infty(\mathcal{R}_{\text{char}})\).
\begin{remark}
By interpolation (splitting into regions \(r\le \tau{}\) and \(r\ge \tau{}\)),
\(\mathcal{P}_p[\psi{}]\) for \(p > 0\) controls the quantity \(\norm{r^{\rho{}}\tau^{p/2 +
1/2-\rho{}-\eta_0/2}\psi{}}_{L^\infty}\) for \(\rho{}\in{}[1/2,1]\). In particular,
\(\mathcal{P}_{2-s}[\psi{}]^2\) controls \(\norm{r^{2-s-\eta{}_0}\tau{}\psi{}^2}_{L^\infty}\) (take \(\rho{} = 1 - s/2 - \eta_0/2\)).
\end{remark}
For \(\Gamma^\alpha{}\varphi{}\), we define the following norm:
\begin{equation}
\mathcal{P}_{\alpha{},p} \coloneqq{} \sum_{\beta{}\le \alpha{}}\mathcal{P}_p[\Gamma{}^{\beta{}}\varphi{}].
\end{equation}
\subsubsection{Schematic geometric quantities}
\label{sec:org6b89832}
\label{sec:schematic-geom-quantities} We first define a quantity \(G\) in terms of
which we can compare the double null coordinates \(u\) and \(v\) with the radius
\(r\) and the time \(\tau{}\) (see \cref{v-u-r-compare}).
\begin{lemma}[Comparison of coordinates]
For each characteristic data set and each \(\epsilon{}\in (0,1)\), there is \(C_\epsilon{} > 0\)
depending on \(\epsilon{}\), \(\mathcal{E}_{0,1 + \eta_0}\), \(\eta_0\) and \(\Rc\) such that
\begin{equation}
C^{-1}_\epsilon{}\le G_\epsilon{}\le C_\epsilon{},
\end{equation}
where \(G_\epsilon{}\) is the following quantity computed from the future development
\(\mathcal{R}_{\mathrm{char}}\) of the characteristic data:
\begin{equation}\label{G-def}
G_\epsilon{} \coloneqq{} \sup_{\mathcal{R}_{\mathrm{char}}} [r/v + \mathbf{1}_{r\ge R_0}u/v + (\mathbf{1}_{r\le \epsilon{}^{-1}\Rc} + \mathbf{1}_{r\le (1-\epsilon{})v})v/u].
\end{equation}
\label{G-epsilon}
\end{lemma}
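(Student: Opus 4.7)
The lower bound $G_\epsilon \ge C_\epsilon^{-1}$ is immediate: evaluating the first summand $r/v$ at the vertex $(u,v) = (1,1)$ gives $r/v = R_0 > 0$, so it suffices to take $C_\epsilon^{-1} = R_0$. The substantive content is the upper bound, and my plan is to obtain it from elementary transport estimates along ingoing and outgoing null directions and along constant-$r$ curves, after collecting the zeroth-order bounds on $\kappa$, $(-\gamma)$, and $1-\mu$ referenced in \cref{zeroth-order-geometric-bounds}.

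First, I set up the geometric bounds. Integrating the Raychaudhuri equation $\partial_v(-\gamma) = r(\partial_v\varphi)^2(-\gamma)/\lambda$ inward from the gauge normalization $(-\gamma)|_{\mathcal{I}} = 1$, and controlling the exponential weight by means of the assumed energy $\mathcal{E}_{0,1+\eta_0}$, yields $c \le (-\gamma) \le 1$ on $\mathcal{R}_{\mathrm{char}}$. The analogous argument for $\kappa$, using the gauge $\kappa|_{\{r = r_{\mathcal{H}}\}} = 1$ and the monotonicity $\partial_u\kappa \le 0$ from its Raychaudhuri equation, gives $c \le \kappa \le C$, and moreover $\kappa \ge 1$ in the region to the past of $\{r = r_{\mathcal{H}}\}$ (in particular on $C^{\mathrm{out}}$). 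The $\partial_u$- and $\partial_v$-transport equations for $\varpi$ in \cref{sph-sym-equations-1}, combined with the energy, give $\abs{\varpi} \le C$, so $\abs{1-\mu} \le C$ globally and $1-\mu \ge c > 0$ on $\{r \ge R_0\}$ once $R_0$ is fixed sufficiently large depending only on initial data.

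With these in hand, the first two pieces of $G_\epsilon$ are handled directly. For $r/v$, I integrate $\partial_v r = \kappa(1-\mu) \le C$ from $C^{\mathrm{in}}$ (where $r \le R_0$) to obtain $r \le R_0 + C(v-1) \lesssim v$. For $u/v$ on $\{r \ge R_0\}$, I use $-\partial_u r = (-\gamma)(1-\mu) \ge c > 0$ and integrate in $u$ from $C^{\mathrm{out}}$ at fixed $v$ to get $u - 1 \le c^{-1} r(1,v) \lesssim v$. For the bound on $v/u$ in the third summand, I trace the constant-$r$ curve through $(u,v)$ to the past, on which $du/dv = \kappa/(-\gamma) \in [c, C]$. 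If $r(u,v) \le R_0$ this curve meets $C^{\mathrm{in}}$ at some $(u_0, 1)$ and yields $u \ge u_0 + c(v-1) \gtrsim v$; otherwise it meets $C^{\mathrm{out}}$ at some $(1, v_0)$ and yields $u - 1 \ge c(v - v_0)$, reducing the claim to showing $v - v_0 \gtrsim_\epsilon v$.

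The main obstacle is this last reduction in the subregion $\{r \le (1-\epsilon)v\}$. On the companion subregion $\{r \le \epsilon^{-1}\Rc\}$, the monotonicity of $r$ along $C^{\mathrm{out}}$ (with $\partial_v r \le C$) bounds $v_0$ above by a constant $V_0 = V_0(\epsilon, \Rc)$ and the estimate is immediate. On $\{r \le (1-\epsilon)v\}$ one instead needs a quantitative asymptotic lower bound $r(1, v_0)/v_0 \ge 1 - \delta$ with $\delta < \epsilon$ for $v_0$ sufficiently large. This follows from the observations above that $\kappa \ge 1$ on $C^{\mathrm{out}}$ and $1 - \mu \to 1$ as $r \to \infty$, so $\partial_v r \ge 1 - o(1)$ along $C^{\mathrm{out}}$ for $r$ large; then $r(1, v_0) = r_0 \le (1-\epsilon)v$ forces $v_0/v \le (1-\epsilon)/(1-\delta) \le 1 - \epsilon'$ for some $\epsilon' = \epsilon'(\epsilon) > 0$, giving $v - v_0 \gtrsim_\epsilon v$ as desired. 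The case of bounded $v$ is handled trivially by the gauge normalization $u, v \ge 1$.
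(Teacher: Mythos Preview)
Your argument is correct and takes a genuinely different route from the paper. The paper first establishes the sharp comparison $\abs{v-u-r}\le C\log r$ in $\{r\ge R_0\}$ (equation \eqref{v-u-r-estimate}, obtained from $\abs{\log(-\nu)}\le Cr^{-1}$), and then all three bounds $r\lesssim v$, $u\lesssim v$, $v\lesssim u$ follow by algebra: for instance, $r\le(1-\epsilon)v$ and $v=u+r+O(\log r)$ immediately give $\epsilon v\le u+O(\log v)$. Your approach bypasses \eqref{v-u-r-estimate} entirely, instead tracing constant-$r$ curves (using only the one-sided bound $du/dv=\kappa/(-\gamma)\ge c$, which follows from the global facts $\kappa\ge c$ and $(-\gamma)\le 1$) together with the asymptotic $\lambda|_{C^{\mathrm{out}}}\to 1$ that you extract from $\kappa\ge 1$ on $C^{\mathrm{out}}$ and $1-\mu\to 1$. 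This is arguably more self-contained for the present lemma, while the paper's route is cleaner once \eqref{v-u-r-estimate} is in hand (and that estimate is needed elsewhere anyway, e.g.\ in the definition of $\mathfrak{B}_0$). Two small remarks: your stated two-sided bound $du/dv\in[c,C]$ is not available near the horizon (where $(-\gamma)$ need not be bounded below), but you only ever use the lower bound, so nothing is lost; and in the $\{r\le\epsilon^{-1}\Rc\}$ paragraph the parenthetical should read $\partial_v r\ge c$ on $C^{\mathrm{out}}$ rather than $\le C$, since it is a \emph{lower} bound on $\lambda$ that forces $v_0$ to be bounded.
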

The proof will be given in \cref{sec:zeroth-order-geometric} (see \cref{G-estimate}).

With the notation introduced in \cref{sec:schematic-notation,sec:O-notation} in
mind, we define the following schematic geometric quantities:
\begin{align}
\mathfrak{b}_0 &\coloneqq{}\mathcal{O}_{\varpi{}_i,c_{\mathcal{H}},r_{\text{min}},R_0,\eta{}_0}(1,\kappa{},\kappa{}^{-1},r^{-1}\varpi{},\mathbf{1}_{r\ge R_0}\set{(-\gamma{}),(-\gamma{})^{-1},(1-\mu{})^{-1}}), \label{b0-def} \\
\mathfrak{B}_0^{\circ } &\coloneqq{}\mathcal{O}(\mathfrak{b}_0,\set{r,\mathbf{1}_{r\ge R_0}u,v}\set{(1-\kappa{}),(1-\kappa{}^{-1}),\mathbf{1}_{r\ge R_0}\set{(1-(-\gamma{})),(1-(-\gamma{})^{-1})}}), \label{B0-circ-def}\\
\mathfrak{B}_0 &\coloneqq{}\mathcal{O}_{G_{\eta{}_0},G_{\eta{}_0}^{-1},\Rc}(\mathfrak{B}_0^{\circ },\mathbf{1}_{\Rc\le r\le 2\Rc}(v-u-r)), \label{B0-def}\\
\mathfrak{g}_0 &\coloneqq{}\mathfrak{b}_0\set{1,\varpi{}}, \label{g0-def} \\
\mathfrak{G}_0 &\coloneqq{} \mathfrak{g}_0. \label{G0-def}
\end{align}
For \(\abs{\alpha{}}\ge 1\), define
\begin{align}
\mathfrak{b}_\alpha{} &\coloneqq{}\mathcal{O}_\alpha{}(\Gamma{}^\beta{}\mathfrak{B}_{\beta{}'}|_{\beta{}+\beta{}'<\alpha{}},r^{-1}\mathfrak{G}_{<\alpha{}},r^{-1}\Gamma{}^\alpha{}\varpi{},\Gamma{}^\alpha{}\kappa{},\mathbf{1}_{\alpha{}_{U}>0}r\Gamma{}^\alpha{}\kappa{},\Gamma{}^\alpha{}(-\gamma{})), \label{b-alpha-def}\\
\mathfrak{B}_\alpha{} &\coloneqq{}\mathcal{O}_\alpha{}(\Gamma{}^\beta{}\mathfrak{b}_{\beta{}'}|_{\beta{}+\beta{}'\le \alpha{}},\set{r,\mathbf{1}_{r\ge R_0}u,v}\Gamma{}^\alpha{}\kappa{},r\Gamma{}^{\alpha{}+U}\kappa{},\set{r,\mathbf{1}_{r\ge R_0}ru,v}\mathbf{1}_{r\ge R_0}\Gamma{}^\alpha{}(-\gamma{})), \label{B-alpha-def}\\
\mathfrak{g}_\alpha{} &\coloneqq{}\mathfrak{b}_\alpha{}\set{1,\Gamma{}^\beta{}\mathfrak{G}_{\beta{}'}|_{\beta{}+\beta{}'<\alpha{}},\Gamma{}^\alpha{}\varpi{},r\Gamma{}^\alpha{}(-\gamma{})}, \label{g-alpha-def}\\
\mathfrak{G}_\alpha{} &\coloneqq{}\mathfrak{B}_\alpha{}\set{1,\Gamma{}^\beta{}\mathfrak{g}_{\beta{}'}|_{\beta{}+\beta{}'\le \alpha{}},\set{r,\mathbf{1}_{r\ge R_0}ru,\mathbf{1}_{r\le 2\Rc}v}\mathbf{1}_{\alpha{}_U>0}\Gamma{}^\alpha{}\kappa{}}. \label{G-alpha-def}
\end{align}
Define
\begin{equation}\label{C-alpha-def}
\mathcal{C}_\alpha{} \coloneqq{}\mathcal{O}(\mathfrak{B}_{\alpha{}},r^{-1}\mathfrak{G}_{\alpha{}}).
\end{equation}
The quantities \(\mathcal{C}_\alpha{}\) appear in the computation of the commutators
\([U,S]\) and \([V,S]\) (see \cref{D-comm}). Observe that \(\mathcal{C}_{<\alpha{}}
=_{\mathrm{s}} \mathfrak{b}_\alpha{}\). Finally, define
\begin{equation}
\mathcal{G}_{\alpha,s}\coloneqq{}\mathcal{O}_s(\mathfrak{b}_\alpha{},r^{-s}\mathfrak{g}_\alpha{}).
\end{equation}
The quantities \(\mathcal{G}_{\alpha{},s}\) appear in pointwise bounds for \(\Box{}\Gamma{}^\alpha{}\varphi{}\).
\begin{remark}[Boundedness and growth of schematic geometric quantities]
We will show in our proof that the quantities \(\mathfrak{b}_\alpha{}\) and
\(\mathfrak{B}_\alpha{}\) are bounded, but we only show that the
\(\mathfrak{g}_\alpha{}\) and \(\mathfrak{G}_\alpha{}\) terms grow at a slow
rate (namely \(r^{C_\alpha{}\eta_0}\) for some sequence of constants
\(C_\alpha{}\) that is increasing in \(\alpha{}\)). Observe that the quantities
\(\mathfrak{B}_\alpha{}\) are stronger than the quantities
\(\mathfrak{b}_\alpha{}\), and \(\mathfrak{G}_\alpha{}\) is stronger than
\(\mathfrak{g}_\alpha{}\).
\end{remark}
\begin{remark}[Relation of schematic geometric quantities to the energy]
The weak geometric quantities \(\mathfrak{b}_\alpha{}\) and \(\mathfrak{g}_\alpha{}\) must be
controlled before the order-\(\alpha{}\) energy \(\mathcal{E}_\alpha{}\) is,
because these quantities appear in the commutator
\(\Box{}\Gamma^\alpha{}\varphi{}\) that arises as an error term in the energy
estimate. Once \(\mathcal{E}_\alpha{}\) is controlled, we control the strong
geometric quantities \(\mathfrak{B}_\alpha{}\) and \(\mathfrak{G}_\alpha{}\).
\end{remark}
\begin{remark}[Algebra of schematic geometric quantities]
We have arranged that \(\mathfrak{b}_\alpha{} =_{\mathrm{s}} \mathfrak{B}_\alpha{} =_{\mathrm{s}} \mathfrak{G}_\alpha{}\)
and \(\mathfrak{b}_\alpha{} \equiv_s \mathfrak{g}_\alpha{} =_{\mathrm{s}}
\mathfrak{G}_\alpha{}\). Moreover, the quantities become stronger as
\(\alpha{}\) increases, in the sense that, for example, \(b_\beta{} =_{\mathrm{s}}
\mathfrak{b}_\alpha{}\) whenever \(\beta{}\le \alpha{}\). Finally, the
quantities are compatible with differentiation by the vector fields
\(\Gamma{}\), in the sense that \(\Gamma^\alpha{}\mathfrak{b}_{\alpha{}'}=_{\mathrm{s}}
\mathfrak{b}_{\alpha{} + \alpha{}'}\) (and similarly for the other quantities).
\end{remark}
\subsection{Vector field commutator calculations}
\label{sec:org3ba3a95}
In this section we collect useful calculations related to our vector field
commutators. We relegate the proofs to \cref{sec:vf-commutator-proofs}.
\subsubsection{Comparing vector fields in different coordinate systems}
\label{sec:org4d24805}
\begin{lemma}[Coordinate change associated to \(U\)]
\begin{equation}\label{U-du-coordinate-change}
\mathbf{1}_{r\ge R_0}U =_{\mathrm{s}} \mathfrak{b}_0\partial{}_u
\end{equation}
\end{lemma}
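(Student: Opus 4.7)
The plan is to unfold the definition of $U$ and to match the resulting coefficient to the list of terms appearing in $\mathfrak{b}_0$. Recall $U = \frac{1}{(-\partial_u r)}\partial_u = \frac{1}{-\nu}\partial_u$, and since $\gamma = \nu/(1-\mu)$ we have $-\nu = (1-\mu)(-\gamma)$, so
\[
U = \frac{1}{(1-\mu)(-\gamma)}\,\partial_u.
\]
This rewriting is the only nontrivial step; multiplying by the indicator $\mathbf{1}_{r\ge R_0}$ does not change the identity.

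Next I would observe that on the region $\{r \ge R_0\}$, both $(1-\mu)^{-1}$ and $(-\gamma)^{-1}$ are explicitly listed among the arguments of the $\mathcal{O}$-expression defining $\mathfrak{b}_0$ in \cref{b0-def} (they appear inside the $\mathbf{1}_{r\ge R_0}\{\cdot\}$ factor). By the algebra of $\mathcal{O}$-notation recorded in \cref{sec:O-notation}, a product of two $\mathcal{O}(\mathfrak{b}_0\text{-arguments})$ quantities is again $\mathcal{O}$ of those arguments, i.e.\ belongs to $\mathfrak{b}_0$. Thus the scalar coefficient $\mathbf{1}_{r\ge R_0}/[(1-\mu)(-\gamma)]$ is schematically $\mathfrak{b}_0$, and this gives \cref{U-du-coordinate-change}.

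The main (and only) obstacle is purely bookkeeping: one has to be careful that the factor $\mathbf{1}_{r\ge R_0}$ is what allows us to invoke $(1-\mu)^{-1}$ and $(-\gamma)^{-1}$ at all, since outside this region the quantity $1-\mu$ can vanish along the horizon and these reciprocals need not be controlled. Once the region is fixed the proof is a one-line unfolding followed by a citation of the product rule for $\mathcal{O}$-notation.
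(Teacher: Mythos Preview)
Your proof is correct and is exactly the natural argument: unfold $U=(-\nu)^{-1}\partial_u=(1-\mu)^{-1}(-\gamma)^{-1}\partial_u$ and observe that on $\{r\ge R_0\}$ both $(1-\mu)^{-1}$ and $(-\gamma)^{-1}$ appear in the definition of $\mathfrak{b}_0$. The paper does not spell out a proof for this lemma (it is evidently regarded as immediate from the definitions), and your write-up is precisely what one would fill in.
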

\begin{lemma}[Coordinate change associated to \(V\)]
We have
\begin{align}
V &=_{\mathrm{s}} \mathfrak{b}_0[\partial{}_v + \mathbf{1}_{r\le 2\Rc}U], \label{V-dv-coordinate-change-1} \\
V - \overline{\partial}_r &=_{\mathrm{s}} \mathbf{1}_{r\le 2\Rc}\mathfrak{b}_0[\overline{\partial}_r + U], \label{V-dr-coordinate-change} \\
\partial{}_v &=_{\mathrm{s}} \mathfrak{b}_0[V + \mathbf{1}_{r\le 2\Rc}U]\label{dv-in-terms-of-V-and-U}.
\end{align}
\label{V-dv-coordinate-change}
\end{lemma}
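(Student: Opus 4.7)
The plan is straightforward: unpack the definition $V = \chi\,\underline{\partial}_v + (1-\chi)\overline{\partial}_r$, where $\chi \coloneqq \chi_{r\lesssim \Rc}$, and rewrite everything in the $(\partial_u, \partial_v)$ frame using the formulas from \cref{sec:coordinate-systems}, namely $\underline{\partial}_v = \partial_v + \frac{\lambda}{-\nu}\partial_u$, $\overline{\partial}_r = \frac{1}{\lambda}\partial_v$, together with $\partial_u = (-\nu)U$. A direct substitution gives
\begin{equation*}
V = \Bigl[\chi + (1-\chi)\tfrac{1}{\lambda}\Bigr]\partial_v + \chi\lambda\, U, \qquad V - \overline{\partial}_r = \chi\bigl[(\lambda - 1)\overline{\partial}_r + \lambda U\bigr],
\end{equation*}
and solving the first relation for $\partial_v$ expresses it as a $\mathfrak{b}_0$-combination of $V$ and (a cutoff of) $U$.

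The remainder of the proof is verifying that each coefficient lies in $\mathfrak{b}_0$ and carries the correct support factor. The key observation is that $\lambda = (1-\mu)\kappa$, and since $1 - \mu = 1 - 2\varpi/r + \mathbf{e}^2/r^2 = \mathcal{O}(1, r^{-1}\varpi)$ (as $\mathbf{e}$ is bounded), we have $\lambda =_{\mathrm{s}} \mathfrak{b}_0$ directly from the definition \eqref{b0-def}. Conversely $\lambda^{-1} = (1-\mu)^{-1}\kappa^{-1}$ appears only where $1 - \chi \neq 0$, i.e.\ where $r \ge \Rc \ge R_0$, so the factor $(1-\mu)^{-1}$ enters via its $\mathbf{1}_{r\ge R_0}$-qualified occurrence in $\mathfrak{b}_0$. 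The $U$-coefficient $\chi\lambda$ is automatically supported in $\{r \le 2\Rc\}$, producing the indicator $\mathbf{1}_{r\le 2\Rc}$ in \eqref{V-dv-coordinate-change-1} and \eqref{V-dr-coordinate-change}. For \eqref{dv-in-terms-of-V-and-U}, the coefficient $A \coloneqq \chi + (1-\chi)\lambda^{-1}$ equals $1$ on $\{r \le \Rc\}$ and $\lambda^{-1}$ on $\{r \ge 2\Rc\}$, so $A^{-1}$ is again $\mathfrak{b}_0$ globally, and multiplying $A\,\partial_v = V - \chi\lambda U$ by $A^{-1}$ finishes the argument.

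There is no substantive obstacle; the lemma is essentially bookkeeping for a change of frame. The only point requiring attention is to confirm that every invocation of $(1-\mu)^{-1}$ takes place in $\{r \ge R_0\}$, which is guaranteed by the standing requirement $\Rc \gg R_0$ and the fact that this factor only appears together with $1-\chi$.
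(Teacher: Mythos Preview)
Your proof is correct and follows essentially the same route as the paper: both expand $V$ as $(\chi + (1-\chi)\lambda^{-1})\partial_v + \chi\lambda U$, compute $V - \overline{\partial}_r = \chi(\lambda-1)\overline{\partial}_r + \chi\lambda U$, and invert the coefficient $A = \chi + (1-\chi)\lambda^{-1}$ to solve for $\partial_v$. Your additional discussion of why each coefficient lies in $\mathfrak{b}_0$ and carries the correct support is more explicit than the paper's version, but the underlying argument is identical.
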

\begin{lemma}[Coordinate change associated to \(S\)]
We have
\begin{align}
S &=_{\mathrm{s}} \mathfrak{b}_0[\set{\mathbf{1}_{r\ge \Rc}u,\mathbf{1}_{r\le 2\Rc}v}U + \set{r,\mathbf{1}_{r\ge \Rc}u,v}V] \label{S-in-terms-of-U-V}\\
\mathbf{1}_{r\ge R_0}(S - r\overline{\partial}_r) &=_{\mathrm{s}} \set{1,\mathbf{1}_{r\ge \Rc}u,\mathbf{1}_{r\le 2\Rc}\mathfrak{b}_0v}\overline{\partial}_u. \label{S-rdr-comparison}
\end{align}
\end{lemma}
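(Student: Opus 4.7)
The proof is a bookkeeping argument that splits the computation into regions determined by the cutoff $\chi = \chi_{r\lesssim \Rc}$: the inner region $\{r \le \Rc\}$ where $\chi = 1$, the outer region $\{r \ge 2\Rc\}$ where $\chi = 0$, and the transition band $\{\Rc \le r \le 2\Rc\}$. In each region I compute $S$ by expressing $\underline{\partial}_v$, $\overline{\partial}_u$, and $\overline{\partial}_r$ in $(u,v)$ coordinates through the formulas in \cref{sec:coordinate-systems}, and then translate the resulting $\partial_u, \partial_v$ into $U, V$ using the preceding coordinate-change lemmas \eqref{U-du-coordinate-change}, \eqref{V-dv-coordinate-change-1}, and \eqref{dv-in-terms-of-V-and-U}.

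For \cref{S-in-terms-of-U-V}, the inner region is immediate: $S = v\underline{\partial}_v = vV$ since $V = \underline{\partial}_v$ on $\{r \le \Rc\}$. In the outer region $S = u\overline{\partial}_u + r\overline{\partial}_r$; substituting $\overline{\partial}_u = \partial_u + \frac{-\nu}{\lambda}\partial_v$ and then using $\partial_u = (-\nu)U$ (by definition of $U$), $\partial_v = \lambda V$ (since $V = \overline{\partial}_r$ in this region), together with the bounds $(-\nu), \frac{-\nu}{\lambda}, \lambda \in \mathfrak{b}_0$ valid on $\{r \ge R_0\}$, collects everything into $\mathfrak{b}_0[uU + (u + r)V]$. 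In the transition band both pieces of $S$ contribute but $r$ is bounded; expanding $\chi v\underline{\partial}_v = \chi v[\partial_v + \frac{\lambda}{-\nu}\partial_u]$ and applying \eqref{U-du-coordinate-change} and \eqref{dv-in-terms-of-V-and-U} produces a $\mathbf{1}_{r \le 2\Rc}v$ weight on $U$, while the outer piece contributes a $\mathbf{1}_{r \ge \Rc}u$ weight, yielding the claimed $\{\mathbf{1}_{r\ge \Rc}u, \mathbf{1}_{r\le 2\Rc}v\}U$ structure; all other coefficients lie in $\mathfrak{b}_0$.

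For \cref{S-rdr-comparison}, the decisive observation is that $\underline{\partial}_v$ and $\overline{\partial}_u$ are both tangent to curves of constant $r$, yielding the pointwise identity $\underline{\partial}_v = \frac{\lambda}{-\nu}\overline{\partial}_u = \frac{\kappa}{-\gamma}\overline{\partial}_u$ (by direct calculation from $\partial_v = \lambda\overline{\partial}_r$ and $\partial_u = \overline{\partial}_u + \nu\overline{\partial}_r$). Substituting into the definition of $S$ and subtracting $r\overline{\partial}_r$ gives
\[
S - r\overline{\partial}_r = \bigl[\chi v\tfrac{\kappa}{-\gamma} + (1-\chi)u\bigr]\overline{\partial}_u - \chi r\overline{\partial}_r.
\]
On $\{r \ge R_0\}$ the factor $\kappa/(-\gamma)$ lies in $\mathfrak{b}_0$, so the $\overline{\partial}_u$-coefficient is of precisely the claimed schematic form; the residual $-\chi r\overline{\partial}_r$ is supported in $\{r \le 2\Rc\}$ where $r$ is bounded, and is absorbed into the $\{1\}\overline{\partial}_u$ placeholder after rewriting $\overline{\partial}_r$ via \eqref{V-dr-coordinate-change}.

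The main obstacle is the careful tracking of the supports of $\chi$ and $1-\chi$, combined with the distinction between $\mathfrak{b}_0$-bounded factors available globally and those valid only on $\{r \ge R_0\}$. In particular, the $\mathbf{1}_{r \ge R_0}$ prefactor in \cref{S-rdr-comparison} is precisely what allows the inversion of $(-\gamma)$ (and hence of $-\nu$) in the identity $\underline{\partial}_v = \frac{\lambda}{-\nu}\overline{\partial}_u$, which is the mechanism by which the $v\underline{\partial}_v$ part of $S$ becomes a pure $\overline{\partial}_u$-derivative.
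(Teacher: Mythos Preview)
Your treatment of \cref{S-in-terms-of-U-V} is correct and essentially matches the paper's proof: the paper writes the single-line identity
\[
S = \chi(v\partial_v + v\lambda U) + (1-\chi)\bigl((r + u(-\nu))\overline{\partial}_r + u(-\nu)U\bigr)
\]
and invokes \eqref{dv-in-terms-of-V-and-U}, whereas you split into the three regions and arrive at the same conclusion.

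For \cref{S-rdr-comparison}, your computation up to
\[
S - r\overline{\partial}_r = \bigl[\chi v\tfrac{\kappa}{-\gamma} + (1-\chi)u\bigr]\overline{\partial}_u - \chi r\,\overline{\partial}_r
\]
is correct, and the identification $\underline{\partial}_v = \tfrac{\kappa}{-\gamma}\overline{\partial}_u$ is exactly the right observation. However, the final step---absorbing $-\chi r\,\overline{\partial}_r$ into the $\{1\}\overline{\partial}_u$ placeholder via \eqref{V-dr-coordinate-change}---does not work. Lemma \eqref{V-dr-coordinate-change} expresses $V-\overline{\partial}_r$ in terms of $\overline{\partial}_r$ and $U$, not $\overline{\partial}_u$; more fundamentally, $\overline{\partial}_u$ and $\overline{\partial}_r$ are the coordinate basis for the $(u,r)$ chart and are linearly independent, so no scalar coefficient (schematic or otherwise) can convert $\overline{\partial}_r$ into a multiple of $\overline{\partial}_u$. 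The residual $-\chi r\,\overline{\partial}_r$ is supported in $\{R_0\le r\le 2\Rc\}$ with bounded coefficient, but it is genuinely a $\overline{\partial}_r$-term and cannot be absorbed into the stated right-hand side. The paper's appendix in fact only proves \cref{S-in-terms-of-U-V} and is silent on \cref{S-rdr-comparison}; the identity as written appears to be missing a term of the form $\mathbf{1}_{r\le 2\Rc}\overline{\partial}_r$ on the right (and it is never invoked elsewhere in the paper).
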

\subsubsection{Rearranging commutator vector fields}
\label{sec:org2f10b5a}
\begin{lemma}[Commuting with \(\overline{\partial}_r\)]
We have
\begin{align}
\mathbf{1}_{r\ge R_0}[\overline{\partial}_r,U] &=_{\mathrm{s}} \mathbf{1}_{r\ge R_0}r^{-2}\mathfrak{g}_0[\overline{\partial}_r + U],\label{dr-U-commutation} \\
\mathbf{1}_{r\ge R_0}[\overline{\partial}_r,V] &=_{\mathrm{s}} \mathbf{1}_{R_0\le r\le 2\Rc}\mathfrak{b}_V[\overline{\partial}_r + U], \label{dr-comm-1-V}\\
\mathbf{1}_{r\ge R_0}[\overline{\partial}_r,S] &=_{\mathrm{s}} \mathbf{1}_{r\ge \Rc}\overline{\partial}_r + \mathbf{1}_{R_0\le r\le 2\Rc}\mathfrak{B}_V[\overline{\partial}_r + U]\label{dr-S-comm}.
\end{align}
\label{dr-comm-1}
\end{lemma}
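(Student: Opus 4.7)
The plan is to work consistently in the Bondi--Sachs $(u,r)$ coordinates on $\{r\ge R_0\}$, where $\overline{\partial}_r$ and $\overline{\partial}_u$ are commuting coordinate vector fields. The key structural identity I would use is the conversion formula $\overline{\partial}_u = (-\nu)(\overline{\partial}_r + U)$ obtained from the definition $U = \frac{1}{-\nu}\partial_u$ together with $\partial_u = \overline{\partial}_u + \nu\overline{\partial}_r$. This is what produces the ``$\overline{\partial}_r + U$'' factor on the right-hand side of all three identities: whenever a commutator $[\overline{\partial}_r, L]$ reduces (via $[\overline{\partial}_r,\overline{\partial}_u]=0$) to $\overline{\partial}_r(\text{coef})\cdot \overline{\partial}_u + \overline{\partial}_r(\text{coef})\cdot \overline{\partial}_r$, the $\overline{\partial}_u$ part is rewritten using the identity above.

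For \eqref{dr-U-commutation}, I would expand $U = \frac{1}{-\nu}\overline{\partial}_u - \overline{\partial}_r$ in $(u,r)$ coordinates and use $[\overline{\partial}_r,\overline{\partial}_u]=0$ to reduce $[\overline{\partial}_r,U]$ to $\overline{\partial}_r(\tfrac{1}{-\nu})\cdot \overline{\partial}_u$. The $\overline{\partial}_r$-derivative of $(-\nu)$ is computed from the wave equation for $r$ (the first identity of \eqref{sph-sym-equations-1}) rewritten in the form $\overline{\partial}_r(-\nu) = \frac{1}{\lambda}\partial_v \nu = \frac{2(\varpi - \mathbf{e}^2/r)}{r^2(1-\mu)}(-\nu)$. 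After rewriting $\overline{\partial}_u = (-\nu)(\overline{\partial}_r + U)$, the $(-\nu)$ factors cancel and one is left with a scalar of size $r^{-2}\mathcal{O}(\varpi,(1-\mu)^{-1}) =_{\mathrm{s}} r^{-2}\mathfrak{g}_0$ multiplying $(\overline{\partial}_r + U)$.

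For \eqref{dr-comm-1-V} and \eqref{dr-S-comm}, I would first write $V$ and $S$ in the frame $(\overline{\partial}_r, \overline{\partial}_u)$. Using $\underline{\partial}_v = \frac{\lambda}{-\nu}\overline{\partial}_u$ and $\overline{\partial}_u = (-\nu)(\overline{\partial}_r + U)$, a calculation gives $V = (\chi\lambda + 1-\chi)\overline{\partial}_r + \chi\lambda\, U$ and, similarly, $S = B\,\overline{\partial}_r + A\,\overline{\partial}_u$ with $B = (1-\chi)r$ and $A = \chi v\frac{\lambda}{-\nu} + (1-\chi)u$. In $\{r\ge 2\Rc\}$ these simplify to $V = \overline{\partial}_r$ and $S = r\overline{\partial}_r + u\overline{\partial}_u$, giving $[\overline{\partial}_r,V]=0$ and $[\overline{\partial}_r,S]=\overline{\partial}_r$, which accounts for the $\mathbf{1}_{r\ge \Rc}\overline{\partial}_r$ piece in \eqref{dr-S-comm} (after extending by the cutoff). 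In the transition region $R_0 \le r \le 2\Rc$, the commutator is generated by $\overline{\partial}_r$ applied to the coefficient functions, which involve $\chi'(r)$, the functions $\lambda, \kappa, (-\gamma), \varpi$, and the coordinate functions $u, v$ (with $\overline{\partial}_r v = 1/\lambda$, $\overline{\partial}_r u = 0$).

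The main obstacle is checking that all $\overline{\partial}_r$-derivatives of geometric coefficients appearing above genuinely land in $\mathfrak{b}_V$ (respectively $\mathfrak{B}_V$). Quantities like $\overline{\partial}_r \lambda$, $\overline{\partial}_r(-\gamma)$ do not match $V\lambda, V(-\gamma)$ directly, but since $V = a\overline{\partial}_r + bU$ with $a = \chi\lambda + (1-\chi) \ge c > 0$ on the relevant region and $b = \chi\lambda \in \mathfrak{b}_0$, I can invert to write $\overline{\partial}_r(\cdot) = a^{-1}(V(\cdot) - b U(\cdot))$, which expresses $\overline{\partial}_r\lambda$, $\overline{\partial}_r(-\gamma)$, $\overline{\partial}_r\kappa$, $\overline{\partial}_r\varpi$ in terms of $V$- and $U$-derivatives of the geometric quantities that are by definition in $\mathfrak{b}_V$. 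The $\chi'(r)$ terms are $\mathcal{O}_{\Rc}(1)$ by construction. Combining the two, all contributions fit into $\mathfrak{b}_V$ for \eqref{dr-comm-1-V}; for \eqref{dr-S-comm} the additional $v$- and $u$-weights supplied by the factor $v\frac{\lambda}{-\nu}$ (and by $\overline{\partial}_r v = 1/\lambda$) are precisely those built into $\mathfrak{B}_V$ as opposed to $\mathfrak{b}_V$, which is why the stronger schematic class appears in the last identity.
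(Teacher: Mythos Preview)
Your approach is essentially correct and closely parallels the paper's, with the minor difference that you work throughout in the $(\overline{\partial}_r,\overline{\partial}_u)$ frame while the paper computes $[\overline{\partial}_r,\underline{\partial}_v]$ and $[\overline{\partial}_r,v\underline{\partial}_v]$ directly before converting. For \eqref{dr-U-commutation} and \eqref{dr-comm-1-V} your outline is complete.

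There is, however, a gap in your justification of \eqref{dr-S-comm}. You write that ``the $\chi'(r)$ terms are $\mathcal{O}_{\Rc}(1)$ by construction'' and that the $v$- and $u$-weights land in $\mathfrak{B}_V$ via its $v\Gamma^V\kappa$, $u\Gamma^V(-\gamma)$ entries. But the $\chi'$-contribution to $\overline{\partial}_r A$ is $\chi'\bigl(v\tfrac{\lambda}{-\nu}-u\bigr)$, which after multiplying by $(-\nu)$ becomes $\chi'\bigl(v\lambda - u(-\nu)\bigr)$. This is a \emph{zeroth-order} expression, not a first $V$- or $U$-derivative of a geometric quantity, so it is not captured by the $v\Gamma^V\kappa$-type arguments of $\mathfrak{B}_V$. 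Expanding $\lambda=\kappa(1-\mu)$ and $(-\nu)=(-\gamma)(1-\mu)$ gives
\[
v\lambda - u(-\nu) = (1-\mu)\bigl[(v-u-r) + r - v(1-\kappa) + u(1-(-\gamma))\bigr],
\]
and the crucial point is that the unbounded piece $v-u$ combines with $r$ to give $(v-u-r)$, which is bounded only because $\mathbf{1}_{\Rc\le r\le 2\Rc}(v-u-r)$ is \emph{explicitly} one of the arguments of $\mathfrak{B}_0$ (see \eqref{B0-def}). The paper isolates exactly this term (see \eqref{v-u-r-vector-field}). Once you supply this cancellation, the rest of your argument goes through; without it, the claim that the $\chi'$-terms lie in $\mathfrak{B}_V$ is not justified.
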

\begin{lemma}[Commuting with one commutator vector field]
We have
\begin{align}
[U,V] &=_{\mathrm{s}} \mathbf{1}_{r\ge \Rc}r^{-2}\mathfrak{g}_0D, \label{U-V-comm}\\
[U,S] &=_{\mathrm{s}} \mathbf{1}_{r\ge \Rc}U + \mathbf{1}_{r\ge \Rc}r^{-1}\mathfrak{G}_UD, \label{U-S-comm-1}\\
[V,S] &=_{\mathrm{s}} V + \mathbf{1}_{\Rc\le r\le 2\Rc}\mathfrak{B}_VD.
\end{align}
\label{D-comm-1}
\end{lemma}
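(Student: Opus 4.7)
The plan is to verify each of the three commutator identities by working region-by-region, taking advantage of the cutoff structure used to define $V$ and $S$. The spacetime decomposes into three regimes: the near-horizon region $\{r \le \Rc\}$ where $V = \underline{\partial}_v$ and $S = v\underline{\partial}_v$; the large-$r$ region $\{r \ge 2\Rc\}$ where $V = \overline{\partial}_r$ and $S = u\overline{\partial}_u + r\overline{\partial}_r$; and the transition region $\{\Rc \le r \le 2\Rc\}$ where derivatives of $\chi$ contribute extra terms (which are supported in this set). In each regime I would compute the commutator using the cleanest coordinate description, and then reassemble using the coordinate-change identities of \cref{V-dv-coordinate-change} and the preceding \cref{dr-comm-1}.

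For $[U,V]$, I would first observe that in $\{r \le \Rc\}$ one can express $U = -\underline{\partial}_r$ in the $\mathcal{H}$-gauge (this follows from $Ur = -1$ and $Uv = 0$), so $[U, \underline{\partial}_v] = [-\underline{\partial}_r, \underline{\partial}_v] = 0$. In $\{r \ge 2\Rc\}$ the commutator reduces to $-[\overline{\partial}_r, U]$, which is $\mathbf{1}_{r\ge R_0}r^{-2}\mathfrak{g}_0 D$ by \eqref{dr-U-commutation}. In the transition region one must track $U\chi(r) = -\chi'(r) = \mathcal{O}_{\Rc}(r^{-1})$, but the resulting extra terms are supported in $\{\Rc \le r \le 2\Rc\}$ (where $r \sim \Rc$) so they are absorbable into the stated $r^{-2}\mathfrak{g}_0 D$ schematic, possibly after invoking \eqref{V-dr-coordinate-change} to write $\underline{\partial}_v - \overline{\partial}_r =_{\mathrm{s}} \mathfrak{b}_0[\overline{\partial}_r + U]$.

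For $[U,S]$, in $\{r \le \Rc\}$ the identity $Uv = 0$ again yields $[U, v\underline{\partial}_v] = v[U, \underline{\partial}_v] = 0$, consistent with the indicator $\mathbf{1}_{r \ge \Rc}$ on the right. In $\{r \ge 2\Rc\}$ I would compute $[U, u\overline{\partial}_u + r\overline{\partial}_r]$ by expanding: the $r\overline{\partial}_r$ piece contributes $(Ur)\overline{\partial}_r - r[\overline{\partial}_r, U] = -\overline{\partial}_r + r \cdot r^{-2}\mathfrak{g}_0 D$, while the $u\overline{\partial}_u$ piece produces $(Uu)\overline{\partial}_u + u[U, \overline{\partial}_u]$; using \cref{U-du-coordinate-change} and the relation between $\overline{\partial}_u$ and the $(u,v)$-basis, both contributions assemble schematically as $U$ plus an $r^{-1}\mathfrak{G}_U D$ correction (the extra factor of $u$ in the $\overline{\partial}_u$ term is what forces the use of $\mathfrak{G}_U$ rather than $\mathfrak{g}_0$, because $\mathfrak{G}_U$ is exactly the quantity designed to absorb $u$-weights via \eqref{G-alpha-def}). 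The cutoff-derivative terms in the transition region again fit this schematic.

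For $[V, S]$, the two clean regions both yield the main term $V$: in $\{r \le \Rc\}$, $[\underline{\partial}_v, v\underline{\partial}_v] = \underline{\partial}_v = V$; in $\{r \ge 2\Rc\}$, using $\overline{\partial}_r u = 0$ and $\overline{\partial}_r r = 1$, $[\overline{\partial}_r, u\overline{\partial}_u + r\overline{\partial}_r] = \overline{\partial}_r = V$. The remaining contributions live entirely in $\{\Rc \le r \le 2\Rc\}$, arising from $V\chi(r) = \chi'(r) Vr$ (and similarly for $S$), and are collected in the $\mathbf{1}_{\Rc \le r \le 2\Rc}\mathfrak{B}_V D$ term; the $\mathfrak{B}_V$ rather than $\mathfrak{b}_V$ appears because the cutoff region is compact in $r$ but the $v$-weight from $S = v\underline{\partial}_v$ part of the transition must be absorbed, which is exactly the content of \eqref{B-alpha-def}.

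The main obstacle is bookkeeping in the transition region, where both cutoffs participate simultaneously and the computation must be carried out in a mixed gauge (so that both $\underline{\partial}_v$ and $\overline{\partial}_r$ appear); one must check carefully that the cutoff-derivative contributions slot into the stated schematic weights (in particular that $r^{-2}\mathfrak{g}_0$, $r^{-1}\mathfrak{G}_U$, and $\mathfrak{B}_V$ are the correct coefficient classes, not stronger or weaker ones), and that terms like $\overline{\partial}_r$ arising from $[U, r\overline{\partial}_r]$ are correctly re-expressed as $U$ or $D$ using \eqref{V-dr-coordinate-change} and \eqref{S-rdr-comparison}.
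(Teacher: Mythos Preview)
Your approach is correct and essentially the same as the paper's. Both exploit the global identity $U = -\underline{\partial}_r$ (so $[U,\underline{\partial}_v] = [U, v\underline{\partial}_v] = 0$ everywhere, not just in $\{r\le \Rc\}$), expand the cutoff structure of $V$ and $S$, and reduce to the $[\overline{\partial}_r,U]$ formula from \cref{dr-comm-1} plus $\chi'$-supported transition terms; for $[U,S]$ the paper makes the exact identity $-\overline{\partial}_r + (-\nu)^{-1}\overline{\partial}_u = U$ explicit, which is precisely the ``assemble schematically as $U$'' step you describe.
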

\begin{lemma}[Commuting with a product of commutator vector fields]
Let \(\alpha{}\ge 0\) and let \(L\in \Gamma^\alpha{}\). We have
\begin{align}
[U,L] &=_{\mathrm{s}} \mathcal{C}_{<\alpha{}}[U\Gamma{}^{\le \alpha{}-S} + r^{-2}\mathfrak{G}_{<\alpha{}}D\Gamma{}^{\le \alpha{}-V} + r^{-1}\mathfrak{G}_{<\alpha{}}D\Gamma{}^{\le \alpha{}-S}] , \label{U-comm-weak} \\
[V,L] &=_{\mathrm{s}} V\Gamma{}^{\le \alpha{}-S} + \mathcal{C}_{<\alpha{}}r^{-2}D\Gamma{}^{\le \alpha{}-U}, \label{V-comm-weak} \\
[D,L] &=_{\mathrm{s}} \mathcal{C}_{<\alpha{}}D\Gamma{}^{\le \alpha{}-U}. \label{D-comm-weak}
\end{align}
Moreover, we have
\begin{equation}\label{V2-comm}
[V^2,L] =_{\mathrm{s}} V^2\Gamma{}^{\le \alpha{}-S} + \mathcal{C}_{<\alpha{}+V}r^{-2}D\Gamma{}^{\le \alpha{}-U+V}.
\end{equation}
\label{D-comm}
\end{lemma}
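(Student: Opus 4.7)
The plan is to prove all four identities simultaneously by induction on $|\alpha|$. The base case $|\alpha| = 1$ (i.e.\ $L \in \{U, V, S\}$) reduces to \cref{D-comm-1}: the formulas for $[U,V]$, $[U,S]$, $[V,S]$ recorded there give the first three identities directly once one checks that the coefficients indeed lie in $\mathcal{C}_0$. The base case of \cref{V2-comm} follows from the identity $[V^2, L] = V[V, L] + [V, L]V$ combined with the just-established $[V, L]$ formula (and $[V, V] = 0$).

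For the inductive step, I would factor $L = \Gamma_0 L'$ with $\Gamma_0 \in \{U, V, S\}$ and $L' \in \Gamma^{\alpha - \Gamma_0}$, and apply the derivation identity
\begin{equation}
[A, \Gamma_0 L'] = [A, \Gamma_0] L' + \Gamma_0 [A, L']
\end{equation}
for $A \in \{U, V, D, V^2\}$. The first summand $[A, \Gamma_0] L'$ is handled by the base case, yielding a product of a $\mathcal{C}_0$-class coefficient with $L' \in \Gamma^{\le \alpha - \Gamma_0}$. The second summand $\Gamma_0[A, L']$ is handled by the inductive hypothesis, after which I would expand by the schematic Leibniz rule, writing $\Gamma_0(f X) = (\Gamma_0 f) X + f (\Gamma_0 X)$ for a schematic coefficient $f$ and a differential operator $X$. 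Two facts keep the bookkeeping under control: (i) $\Gamma_0 r^{-k} = r^{-k} \mathcal{O}(1)$ by \cref{Gamma-r-initial-calculation}, so the $r$-weights are preserved, and (ii) $\Gamma_0$ applied to a schematic geometric quantity of order $\beta$ produces one of order $\beta + \Gamma_0$, so that $\Gamma_0 \mathcal{C}_{<\alpha - \Gamma_0} \subset \mathcal{C}_{<\alpha}$ and likewise for $\mathfrak{G}_{<\alpha - \Gamma_0}$ (this uses $|\alpha - \Gamma_0| < |\alpha|$).

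The main technical obstacle is to verify that each of the four expanded right-hand sides lands in precisely the asserted schematic class. The most delicate identity is \cref{U-comm-weak} for $[U, L]$, whose right-hand side is a sum of three pieces distinguished by the location of the free $U$- or $D$-derivative and by their $r$-weights ($\mathcal{C}_{<\alpha} U\Gamma^{\le \alpha - S}$, $\mathcal{C}_{<\alpha} r^{-2} \mathfrak{G}_{<\alpha} D \Gamma^{\le \alpha - V}$, and $\mathcal{C}_{<\alpha} r^{-1} \mathfrak{G}_{<\alpha} D \Gamma^{\le \alpha - S}$); I must check that all base-case contributions from $[U, \Gamma_0]$ and all inductive contributions from $\Gamma_0[U, L']$ fit into one of these three classes without cross-contamination, which amounts to matching up the three base-case formulas (for $[U,V]$, $[U,S]$, and the trivial $[U,U]=0$) with the three right-hand-side pieces. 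For \cref{V2-comm}, I would use $[V^2, L] = V[V, L] + [V, L]V$ and apply \cref{V-comm-weak} just established; commuting the rightmost $V$ past $\Gamma^{\le \alpha - S}$ via \cref{V-comm-weak} produces an extra $V$-differentiation, which raises both the coefficient class to $\mathcal{C}_{<\alpha + V}$ and the derivative factor to $\Gamma^{\le \alpha - U + V}$, exactly as stated.
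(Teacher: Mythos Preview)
Your approach is correct and essentially matches the paper's, with two minor differences worth noting. First, the paper actually proves sharper intermediate identities (labelled \cref{U-comm} and \cref{V-comm} there) that track the geometric coefficient classes more precisely, and then reads off \cref{U-comm-weak}--\cref{D-comm-weak} as immediate corollaries; you propose to prove the weak versions directly, which also works. Second, the paper phrases the induction as a ``doubling'' step (assume the identities for $\le\alpha$ and $\le\alpha'$, prove them for $\le\alpha+\alpha'$), whereas you use the equivalent ``add one'' factorisation $L=\Gamma_0L'$.

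One point your sketch underemphasises: in the term $\Gamma_0[A,L']$, after you expand $[A,L']$ by induction and apply the Leibniz rule $\Gamma_0(fX)=(\Gamma_0 f)X+f(\Gamma_0 X)$, the piece $\Gamma_0 X$ begins with $\Gamma_0$ acting on a leading $U$, $D$, or $V^2$. To put this back into the required form (leading $U$, $D$, or $V^2$ followed by $\Gamma^{\le\beta}$) you must commute $\Gamma_0$ past that leading derivative, which re-introduces a base-case commutator $[\Gamma_0,U]$ or $[\Gamma_0,V]$ from \cref{D-comm-1}. This extra commutator must then be checked to land in the asserted schematic class---this is the precise content of the ``matching'' you allude to, and it is exactly what the paper does explicitly (writing $\Gamma^{\le\alpha}X\Gamma^{\le\alpha'-\beta}=[\Gamma^{\le\alpha},X]\Gamma^{\le\alpha'-\beta}+X\Gamma^{\le\alpha+\alpha'-\beta}$ and controlling the commutator via the already-established $[X,\Gamma^{\le\alpha}]$). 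Once you carry out this commutation step, your argument closes.
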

\begin{lemma}[Bringing \(U\) to the front]
Fix \(\alpha{}\ge 1\) such that \(\alpha_U > 0\). Let \(L\in \Gamma^\alpha{}\). Then there is \(L'\in
\Gamma^{\alpha{}-U}\) such that
\begin{equation}
\begin{split}
L - UL' &=_{\mathrm{s}}  \mathcal{O}(\mathfrak{B}_{\alpha{}-U-S+V},r^{-1}\mathfrak{G}_{\alpha{}-S})[U\Gamma{}^{\le \alpha{}-U-V} + r^{-2}\mathfrak{G}_{\alpha{}-U-V}V\Gamma{}^{\le \alpha{}-U-V} + r^{-1}\mathfrak{G}_{\alpha{}-S}V\Gamma{}^{\le \alpha{}-U-S}] \\
&=_{\mathrm{s}} \mathcal{C}_{<\alpha{}}[U\Gamma{}^{<\alpha{}} + r^{-2}\mathfrak{G}_{<\alpha{}}V\Gamma{}^{\le \alpha{}-U-V} + r^{-1}\mathfrak{G}_{<\alpha{}}V\Gamma{}^{\le \alpha{}-U-S}] \\
\end{split}
\end{equation}
\label{U-rearrangement-formula}
\end{lemma}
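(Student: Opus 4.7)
The idea is to peel a $U$ factor from the interior of $L$ and bring it to the front via a direct application of the iterated commutator formula \cref{U-comm-weak}. For the base case $|\alpha| = 1$, the hypothesis $\alpha_U > 0$ forces $L = U$, so we take $L' = \Id$ and the error vanishes.

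For general $|\alpha|$, write $L = L_1 U L_2$, where $L_1 \in \Gamma^{\beta_1}$ is the maximal prefix of $L$ containing no $U$ factor and $L_2 \in \Gamma^{\beta_2}$ is the remaining suffix, so $\beta_1 + U + \beta_2 = \alpha$ with $(\beta_1)_U = 0$. Setting $L' \coloneqq L_1 L_2 \in \Gamma^{\alpha - U}$, we have
\begin{equation*}
L - UL' = L_1 U L_2 - U L_1 L_2 = [L_1, U] L_2 = -[U, L_1] L_2,
\end{equation*}
and \cref{U-comm-weak} applied to $L_1$ gives
\begin{equation*}
[U, L_1] =_{\mathrm{s}} \mathcal{C}_{<\beta_1}\bigl[U\Gamma^{\le \beta_1 - S} + r^{-2}\mathfrak{G}_{<\beta_1} D \Gamma^{\le \beta_1 - V} + r^{-1}\mathfrak{G}_{<\beta_1} D \Gamma^{\le \beta_1 - S}\bigr].
\end{equation*}
Multiplying this expansion on the right by $L_2 \in \Gamma^{\beta_2}$ and commuting the coefficient functions past $L_2$ (producing derivative errors that are themselves absorbed into the same schematic classes via the compatibility of $\mathfrak{b}$, $\mathfrak{g}$, $\mathfrak{B}$, $\mathfrak{G}$ with differentiation) yields terms that fit into the three summands of the claim: $U$-headed contributions land in $U\Gamma^{\le \alpha - U - V}$, and the two $D$-headed contributions land in $r^{-2}\mathfrak{G}_{\alpha - U - V} V \Gamma^{\le \alpha - U - V}$ and $r^{-1}\mathfrak{G}_{\alpha - S} V \Gamma^{\le \alpha - U - S}$ respectively (with any residual $U$-headed terms arising from the $D$ factor further absorbed into the first summand by enlarging the coefficient class to $\mathfrak{B}_{\alpha - U - S + V}$). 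The second equality in the conclusion is then obtained from the first by the schematic dominations $\mathfrak{B}_{\alpha - U - S + V}, r^{-1}\mathfrak{G}_{\alpha - S} =_{\mathrm{s}} \mathcal{C}_{<\alpha}$ and $\mathfrak{G}_{\alpha - U - V} =_{\mathrm{s}} \mathfrak{G}_{<\alpha}$, all of which follow from monotonicity of the schematic geometric quantities since each indicated shift produces a multi-index strictly less than $\alpha$.

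\textbf{Main obstacle.} The principal technical difficulty is the bookkeeping of multi-indices and coefficient classes: after applying \cref{U-comm-weak} and right-multiplying by $L_2$, one must check that every term produced fits precisely into one of the three schematic slots of the claim with the specified coefficient classes. The specific subscripts $\alpha - U - S + V$ on $\mathfrak{B}$ and $\alpha - S$ on $\mathfrak{G}$, as well as the ranges $\Gamma^{\le \alpha - U - V}$ and $\Gamma^{\le \alpha - U - S}$ on the differential operators, are calibrated precisely to absorb the contributions produced when $L_2$ commutes through $\mathcal{C}_{<\beta_1}$ and $\mathfrak{G}_{<\beta_1}$ and to match the output of applying \cref{U-V-comm} and \cref{U-S-comm-1} repeatedly; verifying this calibration term-by-term is the most delicate step of the proof.
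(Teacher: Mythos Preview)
Your overall strategy matches the paper's: write $L = L_1 U L_2$, set $L' = L_1 L_2$, compute $L - UL' = -[U,L_1]L_2$, and invoke the commutator formula for $[U,L_1]$. However, you have misidentified the main obstacle. There is no need to ``commute the coefficient functions past $L_2$'': when you substitute the schematic expansion $[U,L_1] =_{\mathrm{s}} \sum_j f_j B_j$ (with $f_j$ functions and $B_j$ differential operators) into $-[U,L_1]L_2$, you get $-\sum_j f_j (B_j L_2)$ directly. The coefficient functions are already on the left; only the differential operators $B_j$ compose with $L_2$, and $B_j L_2 \in \Gamma^{\le \beta_j + \beta_2}$ falls immediately into the right schematic class by index arithmetic (using $\beta_1 + \beta_2 = \alpha - U$). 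No derivative errors are generated and no repeated applications of \cref{U-V-comm} or \cref{U-S-comm-1} are needed.

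The paper's proof is accordingly a one-liner: it applies the sharper form \cref{U-comm} (rather than \cref{U-comm-weak}) to $L_1$, right-multiplies by $L_2$, and reads off the indices. Your extra restriction $(\beta_1)_U = 0$ is harmless but unnecessary; any occurrence of $U$ in $L$ will do. Using \cref{U-comm} rather than \cref{U-comm-weak} is what delivers the precise coefficient classes $\mathfrak{B}_{\alpha - U - S + V}$ and $r^{-1}\mathfrak{G}_{\alpha - S}$ in the first line; with only \cref{U-comm-weak} you would get the coarser second line directly but would have to unpack $\mathcal{C}_{<\beta_1}$ to recover the first.
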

\begin{lemma}[Rearrangement formula]
Let \(\alpha{}\ge 0\), and let \(L,L'\in \Gamma^\alpha{}\). We have
\begin{equation}
L-L' =_{\mathrm{s}} \mathcal{C}_{\le \alpha{}-S}D\Gamma{}^{\le \alpha{}-U-V}.
\end{equation}
\label{rearrangement-formula}
\end{lemma}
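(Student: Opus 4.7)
I would prove this by induction on $|\alpha|$. The base cases $|\alpha|\le 1$ are immediate, because in these cases $\Gamma^\alpha$ has only one element (up to the schematic identifications), so $L - L' = 0$. For the inductive step, observe that any two orderings $L,L'\in\Gamma^\alpha$ of the same multiset of letters in $\{U,V,S\}$ are related by a finite sequence of adjacent transpositions. Writing the difference as a telescoping sum reduces the claim to the case where $L$ and $L'$ differ by a single adjacent swap: $L = L_1\,\Gamma_k\Gamma_{k+1}\,L_2$ and $L' = L_1\,\Gamma_{k+1}\Gamma_k\,L_2$ with $\Gamma_k\neq\Gamma_{k+1}$ and $L_1\in\Gamma^{\beta_1}$, $L_2\in\Gamma^{\beta_2}$, $|\beta_1|+|\beta_2|=|\alpha|-2$. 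Then $L - L' = L_1\,[\Gamma_k,\Gamma_{k+1}]\,L_2$.

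By Lemma~\ref{D-comm-1}, the single commutator $[\Gamma_k,\Gamma_{k+1}]$ is schematically of one of two shapes: either (a) a cutoff multiple of $c\cdot D$ for a coefficient $c\in\mathcal{C}_0$ (which is the entire commutator when $\{\Gamma_k,\Gamma_{k+1}\}=\{U,V\}$); or (b) a first-order vector field $\Gamma'\in\{U,V\}$ plus such a $c\cdot D$ term (which is what arises in $[U,S]$ and $[V,S]$). In either case, after substitution we obtain sums of terms of the two forms $L_1\,(cD)\,L_2$ and $L_1\,\Gamma'\,L_2$, each a product of $|\alpha|-1$ vector fields (with or without a coefficient in front).

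For the term $L_1\,(cD)\,L_2$, I use the Leibniz rule to move $c$ to the front:
\begin{equation*}
L_1\,(cD)\,L_2 =_{\mathrm{s}} \sum_{\beta\le \beta_1}(\Gamma^\beta c)\,\Gamma^{\beta_1-\beta}D\,L_2,
\end{equation*}
where each $\Gamma^\beta c$ lies in the $\mathcal{C}_\beta$-quantity obtained by applying the differentiation rules for $\mathfrak{b},\mathfrak{g},\mathfrak{B},\mathfrak{G}$ (recorded in the remark on algebra of schematic geometric quantities). The trailing factor $\Gamma^{\beta_1-\beta}D\,L_2$ is a product of at most $|\alpha|-1$ vector fields, and the inductive hypothesis applied at order $|\alpha|-1$ rearranges it into the desired form $D\,\Gamma^{\le\gamma}$ modulo error terms absorbed into $\mathcal{C}_{<\alpha}\,D\,\Gamma^{<\alpha}$. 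The term $L_1\,\Gamma'\,L_2$ is handled directly by the inductive hypothesis, since it is already a product of $|\alpha|-1$ vector fields containing the distinguished $\Gamma'\in\{U,V\}$.

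The main obstacle is the multi-index bookkeeping: one must verify that the coefficients produced all lie in $\mathcal{C}_{\le\alpha-S}$ and the trailing $\Gamma$-products all lie in $\Gamma^{\le\alpha-U-V}$. The key observations are: (i)~each commutator swap trades a pair of letters for a string of length $|\alpha|-1$, which after one further inductive rearrangement produces $D\,\Gamma^{\le\alpha-U-V}$; (ii)~the coefficients $c$ supplied by Lemma~\ref{D-comm-1} live in $\mathcal{C}_0$, and applying $|\beta|\le|\beta_1|\le|\alpha|-2$ derivatives to $c$ yields quantities in $\mathcal{C}_{\le\alpha-S}$; (iii)~all errors from applying the inductive hypothesis have multi-indices strictly below $\alpha$, and are therefore absorbed by monotonicity of the schematic quantities. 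The bookkeeping is the most delicate part, but it is mechanical once the total order on $\mathbf{Z}^3$ is respected.
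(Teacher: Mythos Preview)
Your approach is correct and follows the same overall strategy as the paper: reduce to a single adjacent swap, invoke \cref{D-comm-1} for the commutator, and use the Leibniz rule to move the coefficient past $L_1$. There are two places where the paper's execution is cleaner than yours.

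First, you split $[\Gamma_k,\Gamma_{k+1}]$ into a ``bare'' vector field $\Gamma'\in\{U,V\}$ plus a term $c\cdot D$, treating each separately. The paper instead observes that the bare term is already of the form $1\cdot D$ (with $1\in\mathcal{C}_0$), so all three commutators in \cref{D-comm-1} can be written uniformly as $[\Gamma_1,\Gamma_2]=_{\mathrm{s}}\mathcal{C}_{\le\beta_1+\beta_2-S}D$. This collapses your two cases into one, and note the coefficients are not all in $\mathcal{C}_0$ as you claim---for $[U,S]$ and $[V,S]$ they live in $\mathcal{C}_U$ and $\mathcal{C}_V$ respectively.

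Second, after the Leibniz step you set up an internal induction on $|\alpha|$ to commute the remaining $D$ past $\Gamma^{\beta_1-\beta}$. The paper bypasses this entirely: it directly applies the already-established \cref{D-comm-weak}, namely $[D,\Gamma^{\le\alpha_1}]=_{\mathrm{s}}\mathcal{C}_{<\alpha_1}D\Gamma^{\le\alpha_1-U}$, to conclude $\Gamma^{\le\alpha_1}D\Gamma^{\le\alpha_2}=_{\mathrm{s}}\mathcal{C}_{<\alpha_1}D\Gamma^{\le\alpha_1+\alpha_2}$ in one line. Since \cref{D-comm} was proved earlier (by its own induction), there is no need for a fresh induction inside the rearrangement lemma. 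Finally, the paper notes that when $\beta_1\neq\beta_2$ one has $\beta_1+\beta_2\ge U+V$, which immediately gives $\alpha-\beta_1-\beta_2\le\alpha-U-V$ and finishes the bookkeeping without further work.
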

\subsubsection{Commuting with the wave operator}
\label{sec:orgda1dc48}
\label{sec:commute-with-box}
\begin{lemma}[Wave equations for commutator vector fields]
We have
\begin{equation}\label{UV-box}
UV,VU =_{\mathrm{s}} \mathfrak{b}_0[\Box{} + r^{-1}D + \mathbf{1}_{r\le 2\Rc}U^2]
\end{equation}
and
\begin{equation}\label{vU-box}
\partial{}_vU =_{\mathrm{s}} \mathfrak{b}_0[\Box{} + r^{-1}D].
\end{equation}
\label{UV-box-lemma}
\end{lemma}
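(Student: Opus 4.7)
The plan is to prove both identities by directly expanding the compositions and eliminating the mixed second derivative $\partial_u\partial_v$ using the double-null form of the wave operator \cref{box-uv-gauge}. Since $UV - VU = [U,V]$ satisfies $[U,V] =_{\mathrm{s}} \mathbf{1}_{r\ge \Rc}r^{-2}\mathfrak{g}_0 D$ by \cref{U-V-comm}, and $\mathfrak{g}_0 = \mathfrak{b}_0\{1,\varpi\}$ with $\varpi$ bounded (so $r^{-2}\mathfrak{g}_0 \in \mathfrak{b}_0 r^{-1}$), this commutator is absorbed into the claimed error class $\mathfrak{b}_0 r^{-1}D$. Hence it suffices to establish \cref{UV-box} for one of $UV$ or $VU$; I will do $UV$.

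\textbf{Main computation for $UV$.} Using \cref{V-dv-coordinate-change-1}, write $V =_{\mathrm{s}} \mathfrak{b}_0\partial_v + \mathbf{1}_{r\le 2\Rc}\mathfrak{b}_0 U$, and apply $U = (-\nu)^{-1}\partial_u$ to $V\psi$ via the Leibniz rule. The critical piece is $U\partial_v\psi = (-\nu)^{-1}\partial_u\partial_v\psi$; by \cref{box-uv-gauge},
\begin{equation*}
\partial_u\partial_v = \kappa(-\nu)\,\Box - \tfrac{\lambda}{r}\partial_u - \tfrac{\nu}{r}\partial_v,
\end{equation*}
so $U\partial_v =_{\mathrm{s}} \kappa\,\Box - \tfrac{\lambda}{r(-\nu)}\partial_u - \tfrac{1}{r}\partial_v \cdot \text{sign} =_{\mathrm{s}} \mathfrak{b}_0\,\Box + r^{-1}\mathfrak{b}_0 D$, once we note that $\kappa,\kappa^{-1},(-\nu)^{-1},\lambda\in\mathfrak{b}_0$ (the first three by definition of $\mathfrak{b}_0$, and $\lambda=\kappa(1-\mu)\in\mathfrak{b}_0$ using the identity $\mu = 2\varpi/r - \mathbf{e}^2/r^2$ to bound $1-\mu$), and that $\partial_u = -\nu\,U \in \mathfrak{b}_0 D$ while $\partial_v \in \mathfrak{b}_0 D + \mathbf{1}_{r\le 2\Rc}\mathfrak{b}_0 U \subseteq \mathfrak{b}_0 D$ by \cref{dv-in-terms-of-V-and-U}. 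The contribution from the $\mathbf{1}_{r\le 2\Rc}\mathfrak{b}_0 U$ piece of $V$ gives $\mathbf{1}_{r\le 2\Rc}\mathfrak{b}_0 U^2$ plus $(U\mathfrak{b}_0)U$, which is absorbed since $U\mathfrak{b}_0 \in \mathfrak{b}_0$ (because $\partial_u$ applied to each generator of $\mathfrak{b}_0$ remains in $\mathfrak{b}_0$ by the Raychaudhuri and transport equations \cref{sph-sym-equations-1,sph-sym-equations-2}). Similarly the Leibniz terms $(U\mathfrak{b}_0)\partial_v$ yield $\mathfrak{b}_0\partial_v \in \mathfrak{b}_0 D$, which fits. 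Collecting gives precisely $UV =_{\mathrm{s}} \mathfrak{b}_0[\Box + r^{-1}D + \mathbf{1}_{r\le 2\Rc}U^2]$.

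\textbf{Proof for $\partial_v U$.} Expand $\partial_v U\psi = \partial_v((-\nu)^{-1}\partial_u\psi) = (\partial_v(-\nu)^{-1})\partial_u\psi + (-\nu)^{-1}\partial_u\partial_v\psi$. The coefficient $\partial_v(-\nu)^{-1}$ is computed from the equation $\partial_v\nu = \partial_u\partial_v r = \tfrac{2(\varpi-\mathbf{e}^2/r)}{r^2}\tfrac{\lambda\nu}{1-\mu}$ of \cref{sph-sym-equations-1}, which gives $\partial_v(-\nu)^{-1} =_{\mathrm{s}} r^{-1}\mathfrak{b}_0$; combined with $\partial_u = \mathfrak{b}_0 D$ this yields $r^{-1}\mathfrak{b}_0 D$. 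The second term $(-\nu)^{-1}\partial_u\partial_v\psi$ is handled exactly as in the $UV$ computation above, producing $\mathfrak{b}_0\Box + r^{-1}\mathfrak{b}_0 D$. No $U^2$ contribution arises, since no $U$ factor is ever introduced. This gives the second identity.

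\textbf{Anticipated obstacle.} There is no analytic obstruction: the computation reduces to a single substitution (the wave operator identity for $\partial_u\partial_v$) plus Leibniz bookkeeping. The main care is in verifying that each coefficient encountered — namely $\kappa$, $\kappa^{-1}$, $(-\nu)^{-1}$, $\lambda$, and their $U$- or $\partial_v$-derivatives — genuinely lies in the class $\mathfrak{b}_0$, and in keeping the cutoff $\mathbf{1}_{r\le 2\Rc}$ correctly attached to the $U^2$ term (outside $\set{r\le 2\Rc}$, $V$ has no $U$-component, so no $U^2$ error appears there).
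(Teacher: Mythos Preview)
Your argument for \cref{vU-box} is correct and coincides with the paper's: expand $\partial_v U = (\partial_v(-\nu)^{-1})\partial_u + (-\nu)^{-1}\partial_u\partial_v$, substitute the double-null wave equation, and observe that $(\partial_v(-\nu)^{-1})\partial_u = \tfrac{2(\varpi-\mathbf{e}^2/r)}{r^2}\kappa\,U$, which lies in $r^{-1}\mathfrak{b}_0 D$.

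Your argument for \cref{UV-box}, however, has a genuine gap in the treatment of the Leibniz terms. You apply $U$ to the schematic decomposition $V =_{\mathrm{s}} \mathfrak{b}_0[\partial_v + \mathbf{1}_{r\le 2\Rc}U]$ and then assert that the resulting coefficient derivatives satisfy $U\mathfrak{b}_0 \in \mathfrak{b}_0$, citing the transport equations. This is false: for instance $U\kappa = -r(-\nu)^{-2}(\partial_u\varphi)^2\kappa$ depends on the scalar field and is not a function of the generators of $\mathfrak{b}_0$. More seriously, even if $U\mathfrak{b}_0 \subset \mathfrak{b}_0$ held, the Leibniz contribution $(U\mathfrak{b}_0)\partial_v$ would only give $\mathfrak{b}_0 D$, whereas the statement requires $\mathfrak{b}_0 r^{-1}D$ --- the $r^{-1}$ weight is essential and your argument does not produce it.

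The repair is to work with the \emph{explicit} coefficients $a = \chi + (1-\chi)\lambda^{-1}$ and $b = \chi\lambda$ rather than a generic $\mathfrak{b}_0$ placeholder. Using the wave equation for $r$ one computes $U\lambda = -\tfrac{2(\varpi-\mathbf{e}^2/r)}{r^2}\kappa$ directly (the scalar-field contributions from $U\kappa$ and $U(1-\mu)$ cancel), and then $Ua, Ub \in r^{-1}\mathfrak{b}_0$, which gives exactly the needed $r^{-1}\mathfrak{b}_0 D$ contribution. The paper sidesteps this issue entirely by computing $VU$ first --- writing $VU = \chi\,\underline{\partial}_v U + (1-\chi)\,\overline{\partial}_r U$ requires no differentiation of the coefficients of $V$, since $V$ acts on the function $U\psi$. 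The paper then handles $\underline{\partial}_v U$ via the $\mathcal{H}$-gauge form \cref{box-H-gauge} (here the $\kappa(1-\mu)U^2$ term appears, supported where $\chi\neq 0$) and $\overline{\partial}_r U$ via the already-proved \cref{vU-box}, and finally passes to $UV$ using $[U,V] =_{\mathrm{s}} r^{-1}\mathfrak{b}_0 D$.
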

\begin{lemma}[Commutators of vector field with \(\Box\)]
The following commutation formulas hold for \(\Rc\) large enough depending on
\(\mathfrak{B}_0^{\circ }\):
\begin{align}
[\Box{},U] + f_U U^2 &=_{\mathrm{s}}\mathfrak{b}_U[\Box{} + r^{-2}V + r^{-2}U], \label{U-comm-formula} \\
[\Box{},V] + f_VV^2 &=_{\mathrm{s}} \mathfrak{b}_V[\Box{} + r^{-2}\mathfrak{g}_VV + r^{-2}U + r^{-2}U^2], \label{V-comm-formula} \\
[\Box{},S] &=_{\mathrm{s}} \mathfrak{b}_S[\Box{} + r^{-1}\mathfrak{g}_SV^2 + r^{-2}\mathfrak{g}_SV + \mathbf{1}_{r\le 2\Rc}r^{-2}UU^{\le 1}]. \label{S-comm-formula}
\end{align}
where
\begin{equation}
0\le f_U = \frac{2(\varpi{}-\mathbf{e}^2/r)}{r^2}=_{\mathrm{s}} r^{-2}\mathfrak{g}_0\qquad f_V =_{\mathrm{s}} \mathbf{1}_{r\ge \Rc}r^{-2}\mathfrak{B}_0^{\circ }\mathfrak{g}_0
\end{equation}
\label{wave-comm-formula-UVS}
\end{lemma}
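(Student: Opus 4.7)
The plan is to establish each commutation formula by expressing $\Box$ in the coordinate gauge best adapted to the relevant vector field (via \cref{wave-in-gauge}), computing the commutator algebraically, and then rearranging the error terms using the coordinate-change and first-order commutator identities collected in \cref{V-dv-coordinate-change,S-rdr-comparison,dr-comm-1,D-comm-1,UV-box-lemma}. The structural observation driving the calculation is that the only second-order $r$-dependence in the coefficients of $\Box$ comes through $(1-\mu) = 1 - 2\varpi/r + \mathbf{e}^2/r^2$: differentiating this by a commutator vector field $\Gamma$ produces the explicit piece $\Gamma(2\varpi/r - \mathbf{e}^2/r^2)$ whose ``geometric'' contribution $\Gamma\varpi$ lives in $\mathfrak{b}_\Gamma$ or $\mathfrak{g}_\Gamma$ (by definition), and whose ``explicit'' contribution yields the leading $f_U U^2$ or $f_V V^2$ term.

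First, handle $[\Box, U]$ in the $\mathcal{H}$-gauge, where $U = -\underline{\partial}_r$ so that \cref{box-H-gauge} displays $-(1-\mu)U^2$ as the only $U^2$-term. Commuting $U$ through, the $U^2$ contribution is exactly $(U\mu) U^2 = (2U\varpi/r + f_U)U^2$; the first piece is absorbed via $U\varpi/r \in \mathfrak{b}_U$ after reducing $U^2$ to $\Box + r^{-1}D + UV$ via \cref{UV-box}, and the second piece matches the claimed $-f_U U^2$. The first-order remainders produce $U\kappa, U\varpi, U(-\gamma) \in \mathfrak{b}_U$ multiplying $U$ and $\underline{\partial}_v$, the latter rewritten as $\mathfrak{b}_U(V + \mathbf{1}_{r\le 2\Rc}U)$ via \cref{dv-in-terms-of-V-and-U}.

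Next, $[\Box, V]$ splits into the regions $r \le \Rc$ and $r \ge \Rc$. Near the horizon, $V = \underline{\partial}_v$, and since \cref{box-H-gauge} contains no $\underline{\partial}_v^2$-term, no $V^2$ is produced (consistent with $f_V$ being supported in $\set{r\ge \Rc}$), and the first-order coefficients contribute terms in $\mathfrak{b}_V$ as before. Far from the horizon, $V = \overline{\partial}_r$, and \cref{box-I-gauge} contributes $-(1-\mu)V^2$, from which the same $V\mu$ calculation isolates $f_V V^2$ with the claimed schematic size $\mathbf{1}_{r\ge \Rc}r^{-2}\mathfrak{B}_0^\circ\mathfrak{g}_0$, together with the $r^{-2}\mathfrak{g}_V V$ remainder. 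The mixed second-order term $(-\gamma)^{-1}\overline{\partial}_r\overline{\partial}_u$ is reduced to $\Box, U$ and $U^2$ contributions via \cref{vU-box}, yielding the $r^{-2}U$ and $r^{-2}U^2$ errors in the statement. Cutoff derivatives contribute a piece supported in $\Rc \le r \le 2\Rc$ that is absorbed into $\mathfrak{b}_V$.

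Finally, for $[\Box, S]$, decompose $S = \chi v\underline{\partial}_v + (1-\chi)(u\overline{\partial}_u + r\overline{\partial}_r)$ and apply the Leibniz rule $[\Box, fX] = f[\Box, X] + (\Box f)X - 2 g^{\mu\nu}(\partial_\mu f)\partial_\nu X$. In the near-horizon region, $v[\Box, \underline{\partial}_v]$ inherits the $V$-case structure (producing no $V^2$), while $\Box v$ and the $\partial_u v \cdot U$ cross-term feed into the $\mathbf{1}_{r\le 2\Rc}r^{-2}UU^{\le 1}$ contribution. In the far region, $u\overline{\partial}_u + r\overline{\partial}_r$ is the standard scaling vector field, for which the conformal-like structure of $\Box$ produces $2\Box$ modulo corrections arising from the non-scale-invariance of $\mu$ and $\gamma$; differentiation of $(1-\mu)$ by $r\overline{\partial}_r$ produces $r \cdot V\mu \cdot V^2 = (r f_V + 2V\varpi)V^2$, and the extra factor of $r$ is what converts the $r^{-2}$ weight in $[\Box, V]$ into the $r^{-1}\mathfrak{g}_S$ weight here. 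The transition-region $\Rc \le r \le 2\Rc$ corrections are absorbed using \cref{dr-S-comm}. The main obstacle throughout is careful bookkeeping to confirm that $S$-derivatives of the coefficients land in $\mathfrak{g}_S$ (not worse), and to extract the cutoff-region errors with the correct support; the choice of $\Rc$ large depending on $\mathfrak{B}_0^\circ$ ensures both that the transition-region contributions have the claimed structure and that $f_V$ can be localized to $\set{r\ge \Rc}$.
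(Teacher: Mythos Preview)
Your overall plan matches the paper's approach: compute each commutator in the gauge adapted to the vector field (the $\mathcal{H}$-gauge for $U$ and the near-horizon parts of $V$ and $S$; the $\mathcal{I}$-gauge for the far parts), split $V$ and $S$ by region via the cutoff $\chi$, and handle the transition via $[\Box,\chi]$. The paper carries this out by direct computation (writing out $[\Box,\underline{S}]$ and $[\Box,\overline{S}]$ explicitly rather than via a Leibniz rule for $[\Box,fX]$), but the structure is the same.

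There is, however, a genuine gap in your treatment of $[\Box,U]$. You correctly isolate the $U^2$ coefficient in $[\Box,U]$ as $-(U\mu)=-(2r^{-1}U\varpi+f_U)$, leaving a residual $-2r^{-1}U\varpi\cdot U^2$ after moving $f_U U^2$ to the left. But your proposed absorption --- ``reducing $U^2$ to $\Box + r^{-1}D + UV$ via \cref{UV-box}'' --- is invalid: \cref{UV-box} expresses $UV$ in terms that \emph{include} $\mathbf{1}_{r\le 2\Rc}U^2$, so it cannot be used to eliminate a $U^2$; and there is no schematic identity $\mathfrak{b}_U U^2 =_{\mathrm{s}} \mathfrak{b}_U[\Box+r^{-2}D]$, since inverting the $-(1-\mu)U^2$ piece of $\Box$ would require $(1-\mu)^{-1}$, which is not in $\mathfrak{b}_0$. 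What actually happens in the paper's computation is an exact cancellation: commuting $U$ through the mixed term $\kappa^{-1}U\underline{\partial}_v$ in $\Box$ produces a $(U\log\kappa)\Box$ contribution, whose own $U^2$ part is $-(1-\mu)(U\log\kappa)U^2$. The transport equations $U\log\kappa=-r(U\varphi)^2$ and $U\varpi=-\tfrac12 r^2(1-\mu)(U\varphi)^2$ give $-2r^{-1}U\varpi = -(1-\mu)U\log\kappa$, so this precisely accounts for the residual and no separate $U^2$ term survives. The same mechanism (with $(-\gamma)$ in place of $\kappa$) underlies the far-region computation for $[\Box,V]$.
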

\begin{lemma}[Main formula for commutation with \(\Box\)]
Let \(L\in \Gamma^\alpha{}\). For \(\Rc\) large enough depending on \(\mathcal{G}_0^{\circ }\), we
have
\begin{equation}\label{wave-comm-formula-equation}
\begin{split}
[\Box{},L] + \alpha{}_Uf_UUL + \alpha{}_Vf_VVL =_{\mathrm{s}} \mathfrak{b}_\alpha{}[\Gamma{}^{\le \alpha{}-1}\Box{} + r^{-1}\mathfrak{g}_\alpha{}V^2\Gamma{}^{\le \alpha{}-S} + r^{-2}\mathfrak{g}_\alpha{}V\Gamma{}^{\le \alpha{}-1} + r^{-2}\mathfrak{g}_\alpha{}U\Gamma{}^{<\alpha{}}].
\end{split}
\end{equation}
for \(f_U\) and \(f_V\) as in \cref{wave-comm-formula-UVS}.
\label{wave-comm-formula}
\end{lemma}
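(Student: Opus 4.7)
I would prove the lemma by induction on $|\alpha|$. The case $|\alpha|=0$ is trivial and $|\alpha|=1$ is exactly Lemma \ref{wave-comm-formula-UVS}. For the inductive step, factor $L = \Gamma M$ for some $\Gamma \in \{U,V,S\}$ and $M\in \Gamma^{\alpha-\Gamma}$, and decompose
\begin{equation*}
[\Box,L] = [\Box,\Gamma]\,M + \Gamma\,[\Box,M].
\end{equation*}
Expand $[\Box,\Gamma]M$ via Lemma \ref{wave-comm-formula-UVS} and $[\Box,M]$ via the inductive hypothesis. The proof then reduces to tracking how the redshift coefficients $\alpha_U,\alpha_V$ accumulate and to verifying that every remaining term fits into the four schematic slots on the right-hand side. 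The $\Rc$ largeness condition is inherited from the single invocation of Lemma \ref{wave-comm-formula-UVS} at each step.

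To see how the main term $-\alpha_U f_U UL - \alpha_V f_V VL$ emerges, consider the case $\Gamma = U$. The single commutator contributes $-f_U U^2 M = -f_U UL$ plus an admissible error, while the IH applied to $M$ contributes $-(\alpha_U - 1) f_U UM - \alpha_V f_V VM$ plus an admissible error. Applying $U$ and using $UVM = VUM + [U,V]M$ with $VUM = VL$, one obtains the combined main term $-\alpha_U f_U UL - \alpha_V f_V VL$, up to an additional error $-\alpha_V f_V [U,V]M$ controlled by Lemma \ref{D-comm-1}; the derivatives $Uf_U$ and $Uf_V$ are admissible since $f_U, f_V = \mathcal{O}(r^{-2}\mathfrak{g}_0)$. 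The case $\Gamma = V$ is symmetric, and for $\Gamma = S$ there is no main redshift contribution from the single commutator, so one only commutes $S$ past $UM$ and $VM$ using Lemma \ref{D-comm-1}, with $Sf_U$ and $Sf_V$ absorbed thanks to the $r$-decay of $f_U, f_V$.

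The principal difficulty is bookkeeping rather than a new idea: one must verify that every error term fits into the four allowed slots $\Gamma^{\le \alpha - 1}\Box$, $r^{-1}\mathfrak{g}_\alpha V^2 \Gamma^{\le \alpha - S}$, $r^{-2}\mathfrak{g}_\alpha V \Gamma^{\le \alpha - 1}$, and $r^{-2}\mathfrak{g}_\alpha U \Gamma^{<\alpha}$, each with the correct prefactor $\mathfrak{b}_\alpha$. The $V^2 \Gamma^{\le \alpha - S}$ slot is the most delicate: its structure is preserved because the IH places $V^2$ in front of $\Gamma^{\le \alpha - \Gamma - S}$, and multiplication by $\Gamma$ raises the upper bound to exactly $\alpha - S$. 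The $U \Gamma^{<\alpha}$ slot is preserved by Lemma \ref{U-rearrangement-formula}, which brings any interior $U$ to the front. The remaining manipulations---absorbing $\Gamma \mathfrak{g}_{\alpha - \Gamma}$ into $\mathfrak{g}_\alpha$, using $\mathcal{C}_{<\alpha} =_{\mathrm{s}} \mathfrak{b}_\alpha$, and distributing $\Gamma$ across $r$-weights and across the schematic coefficients of Lemma \ref{wave-comm-formula-UVS}---are routine consequences of the algebra of schematic quantities.
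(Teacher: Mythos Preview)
Your proposal is correct and follows essentially the same approach as the paper: induction with base case \cref{wave-comm-formula-UVS}, the decomposition $[\Box,LL'] = [\Box,L]L' + L[\Box,L']$, tracking the accumulation of the $f_U,f_V$ coefficients, and using the commutation lemmas (\cref{D-comm-1,D-comm}) to fit every remainder into the four schematic slots. The only cosmetic difference is that the paper runs the induction by splitting $L$ into two factors of arbitrary length (showing the formula for $\Gamma^{\le\alpha}$ and $\Gamma^{\le\alpha'}$ implies it for $\Gamma^{\le\alpha+\alpha'}$), whereas you always peel off a single vector field; your version is the special case $|\alpha'|=1$ and suffices.
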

In the following pointwise estimate \cref{wave-pointwise-2}, we make crucial use of the good sign \(f_U
\ge 0\) (see \cref{wave-comm-formula-UVS}).
\begin{corollary}
Let \(\abs{\alpha{}}\ge 0\) and let \(L\in \Gamma^\alpha{}\). We have
\begin{equation}\label{wave-pointwise-1}
\begin{split}
r^{2-s}\abs{\Box{}L\varphi{}}&\le C(\mathfrak{B}_0^{\circ },\mathfrak{g}_0,\alpha{})[\alpha{}_V\abs{\partial{}_vL\varphi{}} + \alpha{}_U\abs{UL\varphi{}}] \\
&\qquad + C(\mathcal{G}_{\alpha,s})[\mathbf{1}_{r\ge \Rc}\abs{\partial{}_v(r\Gamma{}^{<\alpha{}}\varphi{})} + \abs{\partial{}_v\Gamma{}^{<\alpha{}}\varphi{}} + \abs{U\Gamma{}^{<\alpha{}}\varphi{}}].
\end{split}
\end{equation}
Moreover, we have
\begin{equation}\label{wave-pointwise-2}
r^{2-s}UL\varphi{}\Box{}L\varphi{} \le C(\mathcal{G}_{\alpha,s})(\abs{UL\varphi{}} + \abs{\partial{}_vL\varphi{}})(\abs{U\Gamma{}^{<\alpha{}}\varphi{}} + \abs{\partial{}_v\Gamma{}^{<\alpha{}}\varphi{}} +  \mathbf{1}_{r\ge \Rc}\abs{\partial{}_v(r\Gamma{}^{<\alpha{}}\varphi{})}),
\end{equation}
and
\begin{equation}\label{wave-pointwise-3}
\begin{split}
r^{2-s}\abs{\partial{}_vL\varphi{}}\abs{\Box{}L\varphi{}}&\le \mathbf{1}_{r\ge \Rc}C(\mathcal{B}_0^{\circ },\mathfrak{g}_0,\alpha{})\abs{\partial{}_vL\varphi{}}(\abs{UL\varphi{}} + \abs{\partial{}_vL\varphi{}}) + \side{RHS}{wave-pointwise-2}.
\end{split}
\end{equation}
\label{wave-pointwise-inequalities}
\end{corollary}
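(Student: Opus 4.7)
The three pointwise inequalities all follow directly from the main commutation formula (\cref{wave-comm-formula}) together with the coordinate-change identities in \cref{V-dv-coordinate-change}. Since $\Box\varphi = 0$, we have $\Box L\varphi = [\Box, L]\varphi$, and moreover $\Gamma^{\le \alpha-1}\Box\varphi = 0$, so \cref{wave-comm-formula-equation} specializes to the schematic identity
\begin{equation*}
\Box L\varphi \;=_{\mathrm{s}}\; -\alpha_U f_U UL\varphi - \alpha_V f_V VL\varphi + \mathfrak{b}_\alpha\bigl[r^{-1}\mathfrak{g}_\alpha V^2\Gamma^{\le \alpha-S}\varphi + r^{-2}\mathfrak{g}_\alpha V\Gamma^{\le \alpha-1}\varphi + r^{-2}\mathfrak{g}_\alpha U\Gamma^{<\alpha}\varphi\bigr].
\end{equation*}
The plan is to multiply by $r^{2-s}$ (respectively by $r^{2-s}UL\varphi$ or $r^{2-s}|\partial_v L\varphi|$), estimate each piece using the sizes of $f_U$ and $f_V$ from \cref{wave-comm-formula-UVS} and the coordinate conversion $V =_{\mathrm{s}} \mathfrak{b}_0[\partial_v + \mathbf{1}_{r\le 2\Rc}U]$ from \cref{V-dv-coordinate-change}, and identify the result with the schematic categories appearing on the right-hand sides.

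For \cref{wave-pointwise-1}, the $f_U$ and $f_V$ leading contributions feed into the $C(\mathfrak{B}_0^\circ, \mathfrak{g}_0, \alpha)[\alpha_V|\partial_v L\varphi| + \alpha_U|UL\varphi|]$ term, since $r^{2-s}f_U = r^{-s}\mathcal{O}(\mathfrak{g}_0)$ and $r^{2-s}f_V = \mathbf{1}_{r\ge \Rc}r^{-s}\mathcal{O}(\mathfrak{B}_0^\circ, \mathfrak{g}_0)$. The $V\Gamma^{\le \alpha-1}\varphi$ and $U\Gamma^{<\alpha}\varphi$ pieces are already in $|\partial_v \Gamma^{<\alpha}\varphi| + |U\Gamma^{<\alpha}\varphi|$ form once $V$ is expanded in the $(\partial_v, U)$ basis. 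The delicate piece is $r^{-1}\mathfrak{g}_\alpha V^2\Gamma^{\le \alpha-S}\varphi$: after multiplication by $r^{2-s}$, the plan is to apply Leibniz in the form $rV^2\psi = V(rV\psi) - (Vr)V\psi$ and observe that $V\Gamma^{\le \alpha-S}\varphi$ is of order strictly less than $\alpha$ in the multi-index ordering (since its $\alpha_S$ component decreases by one). For $r\ge \Rc$ (where $V$ is purely outgoing null, up to a Jacobian factor) this recovers an expression of the form $\partial_v(r\Gamma^{<\alpha}\varphi)$, and for $r\le 2\Rc$ the $U$-component of $V$ produces additional $U\Gamma^{<\alpha}\varphi$ contributions, explaining the support convention $\mathbf{1}_{r\ge \Rc}$ in front of $|\partial_v(r\Gamma^{<\alpha}\varphi)|$ in \cref{wave-pointwise-1}. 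This rewriting is the structural heart of the argument.

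Inequalities \cref{wave-pointwise-2} and \cref{wave-pointwise-3} are obtained by multiplying the identity above by $UL\varphi$ (respectively $|\partial_v L\varphi|$), and both exploit the good sign $f_U \ge 0$ from \cref{wave-comm-formula-UVS}, in different ways. For \cref{wave-pointwise-2}, the contribution $-r^{2-s}\alpha_U f_U(UL\varphi)^2$ is non-positive and can be discarded in the upper bound; the remaining $f_V VL\varphi\cdot UL\varphi$ term is supported in $r\ge \Rc$, and expanding $VL\varphi$ via \cref{V-dv-coordinate-change} in terms of $\partial_v L\varphi$ and $\mathbf{1}_{r\le 2\Rc}UL\varphi$, together with Young's inequality and the comparability $f_V \lesssim f_U$ on the transition zone $\Rc\le r\le 2\Rc$ (which uses the uniform lower bound on $\varpi - \mathbf{e}^2/r$ from \cref{sec:reduction-to-characteristic}), allows the residual $|UL\varphi|^2$ contribution to be absorbed back into the good-sign term. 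What remains is of the schematic form $(|UL\varphi| + |\partial_v L\varphi|)\times(\text{lower-order})$. The proof of \cref{wave-pointwise-3} runs in parallel, except that multiplying by $|\partial_v L\varphi|$ does not yield a good-sign term to absorb the analogous $f_V$ contribution in $r\ge \Rc$, and this is precisely what produces the additional $\mathbf{1}_{r\ge \Rc}|\partial_v L\varphi|(|UL\varphi| + |\partial_v L\varphi|)$ summand on the right side of \cref{wave-pointwise-3}. The main obstacle throughout is the careful rewriting of $V^2\Gamma^{\le \alpha-S}\varphi$, since this step is what couples these pointwise bounds to the $\partial_v(r\Gamma^{<\alpha}\varphi)$-type quantities controlled by the $r^p$-weighted bulk energies in Step~3 of \cref{sec:main-theorem-proof}.
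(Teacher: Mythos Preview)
Your treatment of \cref{wave-pointwise-1} is correct and matches the paper. The gap is in \cref{wave-pointwise-2} (and by parallel reasoning \cref{wave-pointwise-3}): consider the cross term $\alpha_V f_V \cdot UL\varphi \cdot VL\varphi$. Your plan is to expand $VL\varphi =_{\mathrm{s}} \mathfrak{b}_0[\partial_v L\varphi + \mathbf{1}_{r\le 2\Rc}UL\varphi]$ and absorb the resulting $|UL\varphi|^2$ contribution into the good-sign term $-\alpha_U f_U (UL\varphi)^2$. This fails for two reasons. First, the good-sign term is absent when $\alpha_U = 0$ (e.g.\ $L = V$), so there is nothing to absorb into. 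Second, even when $\alpha_U > 0$, the expansion still leaves the top-order product $|UL\varphi||\partial_v L\varphi|$, and the right side of \cref{wave-pointwise-2} contains only products of the form (top-order)$\times$(strictly-lower-order); your proposal gives no mechanism to dispose of this remaining piece.

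The paper's argument is structurally different: since $\alpha_V > 0$ means $L$ contains a $V$, one brings that $V$ to the front via \cref{rearrangement-formula} and then uses the wave-equation identity $UV =_{\mathrm{s}} \mathfrak{b}_0[\Box + r^{-1}D + \mathbf{1}_{r\le 2\Rc}U^2]$ from \cref{UV-box-lemma} to conclude that $UL\varphi$ is \emph{itself} schematically lower-order (controlled by $\Box\Gamma^{<\alpha}\varphi$ and $D\Gamma^{<\alpha}\varphi$). This converts the cross term directly into (top)$\times$(lower). The same idea handles the $\alpha_U f_U |UL\varphi||\partial_v L\varphi|$ term in \cref{wave-pointwise-3}: bring $U$ to the front and use $\partial_v U =_{\mathrm{s}} \mathfrak{b}_0[\Box + r^{-1}D]$ from \cref{vU-box} (the absence of a $U^2$ term here is essential) to show that $\partial_v L\varphi$ is lower-order when $\alpha_U > 0$.
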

\section{Reduction to a characteristic problem}
\label{sec:org6ba4041}
\label{sec:reduction-to-characteristic}
For the rest of the paper, we study a characteristic problem.
\begin{theorem}[Decay for characteristic problem]
Consider characteristic data on \(C^{\mathrm{in}}\cup C^{\mathrm{out}}\) with
future \(\mathcal{R}_{\mathrm{char}}\). Assume the following:
\begin{enumerate}
\item The initial data norm \(\mathfrak{D}_k[\varphi{}]\) defined on \(C^{\mathrm{in}}\cup
   C^{\mathrm{out}}\) is finite.
\item The data vanishes on a neighbourhood of the outgoing null hypersurface
\(C^{\text{out}}\) (in particular on a portion of \(C^{\text{in}}\) near the
vertex).
\item There exists a double null gauge \((u,v)\) in which \(C^{\mathrm{in}} =
   [1,\infty)_u\times \set{1}_v\) and \(C^{\mathrm{out}} = \set{1}_u \times
   [1,\infty)_v\) (so \(\mathcal{R}_{\mathrm{char}}\) is given by
\([1,\infty)\times [1,\infty)\)) and is normalized such that
\(\frac{\partial_vr}{1-\mu{}}|_{\set{r=r_{\mathcal{H}}}} = 1\), where \(r_{\mathcal{H}} = \sup_{\mathcal{H}}r\),
and \(\lim_{v\to \infty}\frac{(-\partial_ur)}{1-\mu{}}(u,v) = 1\) for each
\(u\).
\item We have \(\partial_ur <0\) and \(\partial_vr \ge 0\) in \(\mathcal{R}_{\mathrm{char}}\) (note this is gauge independent).
\item We have \(\varpi{}\le \varpi{}_i\) in \(\mathcal{R}_{\mathrm{char}}\).
\item There is \(c_{\mathcal{H}} > 0\) such that \(\varpi{} - \mathbf{e}^2/r\ge
   c_{\mathcal{H}}\) in \(\mathcal{R}_{\mathrm{char}}\).
\item There is \(r_{\mathrm{min}}>0\) such that \(r\ge r_{\mathrm{min}}\) in \(\mathcal{R}_{\mathrm{char}}\).
\item The value \(R_0 =
   r|_{C^{\mathrm{in}}\cap C^{\mathrm{out}}}\), is sufficiently large based on
\(\varpi_i\) (as specified in \cref{zeroth-order-geometric-bounds}).
\end{enumerate}
Then for every \(\epsilon{} >0\) and integer \(k\ge 0\) there is \(C =
C(\epsilon{},k,\varpi_i,c_{\mathcal{H}},r_{\mathrm{min}},\mathfrak{D}_{k}[\varphi{}]) > 0\) such that
\begin{equation}
\abs{(v\underline{\partial{}}_v)^k\varphi{}}|_{\mathcal{H}} \le C\mathfrak{D}_k[\varphi{}]v^{-1+\epsilon{}}.
\end{equation}
For \(0\le k\le 4\), the same estimate holds with \(v\underline{\partial{}}_v\) replaced by
\(\bar{v}\partial_{\bar{v}}\).
\label{main-theorem}
\end{theorem}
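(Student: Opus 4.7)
The plan is to run an induction on the multi-index $\alpha = (\alpha_U, \alpha_V, \alpha_S)$ equipped with the total ordering of \cref{sec:multi-index-notation}, establishing at each order and in a prescribed sequence the schematic geometric quantities $\mathfrak{b}_\alpha, \mathfrak{g}_\alpha$, then the non-degenerate energy $\mathcal{E}_\alpha$, then the $r^p$-weighted energy $\mathcal{E}_{\alpha,p}$ for $p$ approaching $2$, then the pointwise norm $\mathcal{P}_{\alpha,p}$, and finally the strong geometric quantities $\mathfrak{B}_\alpha, \mathfrak{G}_\alpha$. Each inequality asserts that a given order-$\alpha$ quantity is controlled by strictly lower-order quantities together with earlier entries of this list at the same order. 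The target bound follows once $\mathcal{P}_{kS, 2-2\epsilon}$ is established: in the horizon region, where $S$ coincides with $v\underline{\partial}_v$, this yields $\abs{S^k\varphi}|_{\mathcal{H}} \lesssim v^{-1+\epsilon}$, and \cref{rearrangement-formula} converts this into the stated estimate for $(v\underline{\partial}_v)^k\varphi|_{\mathcal{H}}$. The Eddington--Finkelstein-type statement for $0 \le k \le 4$ then follows by an algebraic translation exploiting the boundedness of $\kappa$ along $\set{r = r_{\mathcal{H}}}$ and the low number of derivatives involved.

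The comparatively routine parts are Steps 0, 1, 4, and 5. The reduction to the characteristic rectangle uses global well-posedness \cite{Kommemi2013TheGS}, the compact support of the data, and the subextremality of the limiting black hole to supply the constants $\varpi_i, r_{\mathrm{min}}, c_{\mathcal{H}}$ listed in the hypotheses. Boundedness of $\mathcal{D}_\alpha$ in terms of $\mathfrak{D}_{\abs{\alpha}}$ follows from the coordinate-change lemmas \cref{U-du-coordinate-change,V-dv-coordinate-change-1,S-in-terms-of-U-V} by induction on $\abs{\alpha}$. Pointwise decay from the $r^p$-energy is standard once the energy norms are in hand, via a Sobolev argument on a $\tau$-foliation together with the redshift to pick up the extra $r^{1/2}$-weight, followed by interpolation in $r$ versus $\tau$. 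Step 5 is technical but mechanical: one differentiates \cref{sph-sym-equations-1,sph-sym-equations-2} to produce transport equations for $\Gamma^\alpha\varpi$, $\Gamma^\alpha\kappa$, $\Gamma^\alpha(-\gamma{})$, integrates these either from initial data or (for $\Gamma^\alpha\kappa$) to the curve $\set{r = r_{\mathcal{H}}}$ using the gauge condition $\kappa|_{\set{r=r_{\mathcal{H}}}} = 1$, and applies Cauchy--Schwarz using the already-established lower-order energy and pointwise bounds as outlined in Steps 5a--5c.

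The main obstacle is Step 2, the energy boundedness estimate, together with its $r^p$-weighted analogue in Step 3, which must be carried out at arbitrarily high order without a smallness assumption. The starting point is a combined Kodama--Morawetz--redshift--irregular multiplier estimate for $\psi = \Gamma^\alpha\varphi$ producing an error of the form $\iint W\psi\,\Box\psi$, where $W$ has strictly positive $\partial_u$-coefficient. By \cref{wave-comm-formula}, the commutator $[\Box, \Gamma^\alpha]\varphi$ decomposes into three kinds of terms. First, $\alpha_U f_U U\Gamma^\alpha\varphi$ with $f_U \ge 0$ combines with $W_U > 0$ to produce a term with the right sign that can be discarded; this is the global propagation of the redshift and depends crucially on the uniform lower bound $\varpi - \mathbf{e}^2/r \ge c_{\mathcal{H}}$ from Step 0. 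Second, $\alpha_V f_V V\Gamma^\alpha\varphi$ is supported in $\set{r \ge \Rc}$ with an $r^{-2}$ weight that decays strictly faster than the bulk weight $r^{-1-\eta_0}$, so choosing $\Rc$ large absorbs it into the bulk. Third, genuinely lower-order contributions $r^{-2+s}\mathfrak{g}_\alpha V^2\Gamma^{\le \alpha - S}\varphi$ and $r^{-2+s}\mathfrak{g}_\alpha D\Gamma^{<\alpha}\varphi$ are controlled by Young's inequality, bounded by a small multiple of the current-order bulk plus an $r^{p+2s}$-bulk of $V\Gamma^{<\alpha}\varphi$ and Morawetz bulks of $\Gamma^{<\alpha}\varphi$ already available from the induction. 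The structural feature that avoids any bootstrap is precisely the ordering $U < V < S$ in \cref{sec:multi-index-notation}: the problematic $V^2$-term in $[\Box, \Gamma^\alpha]$ always sits at one lower $S$-count, hence at a strictly earlier stage. Step 3 uses the same scheme with the Dafermos--Rodnianski $r^p$-method; each commutation costs $O(s)$ in the admissible $r^p$-weight, producing the loss $p = 2 - \eta_0 - C_\alpha s$, and choosing $s$ small depending on $\alpha$ keeps $\mathcal{E}_{\alpha, 2-\eta_0 - C_\alpha s}$ strong enough for Step 4 to return the desired $\mathcal{P}_{kS, 2-2\epsilon}$.
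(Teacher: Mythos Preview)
Your outline is essentially the paper's proof: the induction on $\alpha$ in the total order of \cref{sec:multi-index-notation}, the chain $\mathfrak{b}_\alpha,\mathfrak{g}_\alpha \to \mathcal{E}_\alpha \to \mathcal{E}_{\alpha,p} \to \mathcal{P}_{\alpha,p} \to \mathfrak{B}_\alpha,\mathfrak{G}_\alpha$, and the handling of the commutator via the good sign of $f_U$, $\Rc$-absorption for the $f_V$ term, and the $r^p$-bulk of $V\Gamma^{<\alpha}\varphi$ for the $V^2\Gamma^{\le\alpha-S}$ contribution all match. (A minor point: on the horizon $S = v\underline{\partial}_v$ exactly, so no rearrangement is needed for the first conclusion.)

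The one place your sketch is too thin to work as written is the Eddington--Finkelstein translation. ``Boundedness of $\kappa$ along $\{r=r_{\mathcal{H}}\}$'' is just the gauge condition $\kappa=1$ there and is not the operative mechanism; the horizon $\mathcal{H}$ is not the curve $\{r=r_{\mathcal{H}}\}$, and along $\mathcal{H}$ you need quantitative \emph{decay} of $\lambda|_{\mathcal{H}}$ and of $S^n\kappa|_{\mathcal{H}}$. The paper gets $\lambda|_{\mathcal{H}}\lesssim v^{-4+\epsilon}$ by an integrating-factor ODE for $(1-\mu)|_{\mathcal{H}}$, driven by the redshift lower bound $\varpi-\mathbf{e}^2/r\ge c_{\mathcal{H}}$ together with the already-established $\abs{\partial_v\varphi}|_{\mathcal{H}}\lesssim v^{-2+\epsilon}$; this yields $\abs{r_{\mathcal{H}}-r|_{\mathcal{H}}(v)}\lesssim v^{-3+\epsilon}$, and integrating $US^n\log\kappa$ over that short $r$-interval gives $\abs{S^n\kappa}|_{\mathcal{H}}\lesssim v^{-5+\epsilon}$. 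One then writes $S-\bar{v}\partial_{\bar{v}}=\rho S+(1-\rho)v\lambda U$ with $\rho\coloneqq 1-\bar{v}/(v\kappa|_{\mathcal{H}})$, computes $S^k-(\bar{v}\partial_{\bar{v}})^k$ schematically, and checks that every error term carries a factor of $S^{\le k-1}\rho$ or $S^{\le k-1}(v\lambda)$, each $\lesssim v^{-1}$. The restriction $k\le 4$ is precisely what makes this bookkeeping close with only the $v^{-3+\epsilon}$ rate available for $S^n\lambda$. Without the $\lambda$-decay step the translation does not go through.
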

\begin{figure}[ht]
    \centering
    \vspace{-4ex}
    \def\svgwidth{0.75\columnwidth}
    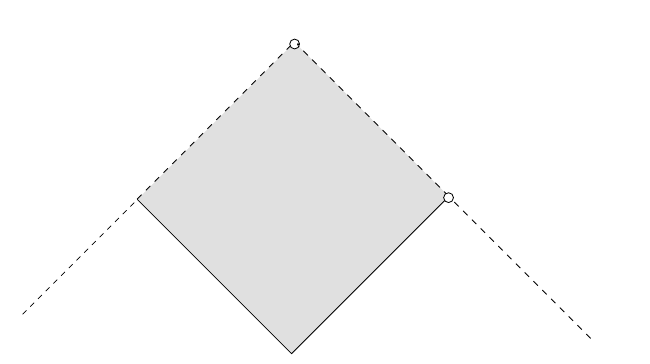

    \caption{The characteristic rectangle \(\mathcal{R}_\text{char}\) depicted on the Penrose diagram of a general solution to \cref{einstein-maxwell}. Note that the achronal singular set \(\mathcal{S}\) may be empty, as is the case in Reissner--Nordström. The statement that the diagram looks as depicted is due to \cite{Dafermos_2014,Kommemi2013TheGS}.}
    \label{fig:penrose}
\end{figure}
\begin{remark}
We have \(r_{\mathcal{H}} =: \sup_{\mathcal{H}}r\le C(\varpi{}_i)\), since by
\cite[Lem.~A.2]{luk-oh-scc1}, we have \(r_{\mathcal{H}} = \varpi_{\mathcal{H}} +
\sqrt{\varpi_{\mathcal{H}} - \mathbf{e}^2}\le 2\varpi_i\), where
\(\varpi_{\mathcal{H}} \coloneqq{} \sup_{\mathcal{H}} \varpi{}\).
\label{rH-size}
\end{remark}
We note that it is enough to consider only one of the ends of the spacetime; the argument is the
same for the other end.
We now show that the future of compactly supported future-admissible Cauchy data
contains a characteristic rectangle satisfying the assumptions of
\cref{main-theorem}, thus reducing the proof of \cref{main-theorem-rough} to
establishing \cref{main-theorem}.
\begin{proof}[Constructing a characteristic rectangle as in \cref{main-theorem} in the setting of \cref{main-theorem-rough}]
By the global well-posedness of \cref{einstein-maxwell} (see \cite{Kommemi2013TheGS}
or \cite[Thm.~4.1]{luk-oh-scc1}), we can first choose a characteristic rectangle in
the exterior region---where assumption (4) is satisfied (see
\cite{Kommemi2013TheGS}, or \cite[Lem.~A.1]{luk-oh-scc1})---and then construct the
gauge in assumption (2) once \(R_0\) is chosen as in assumption (8). Since the
Cauchy data is compactly supported, we can ensure that the data vanishes on a
neighbourhood of \(C^{\text{out}}\), which in particular implies assumptions (1)
and (2) (together with the global well-posedness of the system). By assumption
(4) and \cref{sph-sym-equations-1,sph-sym-equations-2}, we can take \(\varpi_i\) in
assumption (4) to be the supremum of \(\varpi{}\) on the Cauchy data. The future
admissibility condition on the data implies that the event horizon is eventually
subextremal by \cite{Kommemi2013TheGS} (see also \cite[Lem.~A.2]{luk-oh-scc1}), so
if we write \(\varpi{}_{\mathcal{H}}(v) \coloneqq{} \lim_{u\to \infty}
\varpi{}(u,v)\) and \(r_{\mathcal{H}}(v) = \lim_{u\to \infty}r(u,v)\), then
there is \(c_{\mathcal{H}} > 0\) such that \(\lim_{v\to
\infty}(\varpi{}_{\mathcal{H}}-\mathbf{e}^2/r_{\mathcal{H}})(v) =
2c_{\mathcal{H}} > 0\). In particular, there is \(v_\ast{} \ge 1\) such that
\((\varpi_{\mathcal{H}} - \mathbf{e}^2/r_{\mathcal{H}})(v_\ast{})\ge
c_{\mathcal{H}} > 0\). The monotonicity properties of (3) together with the
equations \cref{sph-sym-equations-1,sph-sym-equations-2} imply that \((\varpi{} -
\mathbf{e}^2/r)(u,v)\ge (\varpi_{\mathcal{H}} -
\mathbf{e}^2/r_{\mathcal{H}})(v_\ast{})\) for \(u\ge 1\) and \(v\ge v_\ast{}\).
After using the translation symmetry in the gauge, we can assume \(v_\ast{} =
1\). This settles assumption (6). Assumption (7) holds since
\(\mathcal{R}_{\text{char}}\) is in the exterior region. Finally, since \(r\to
\infty\) along \(C^{\text{out}}\), we can translate \(C^{\text{in}}\) towards
the future to satisfy assumption (8).
\end{proof}
\begin{remark}[Compact support of data]
The assumption that the data is compactly supported in \cref{main-theorem-rough} is
only used to reduce to a characteristic problem (in particular to ensure that we
can choose a characteristic rectangle for which \(\mathfrak{D}_k[\varphi{}]\) is
finite).
\end{remark}
From now on, we work in the region \(\mathcal{R}_{\text{char}}\) and pursue the
proof of \cref{main-theorem}.
\section{Estimates for undifferentiated geometric quantities}
\label{sec:orga2fd1de}
\label{sec:zeroth-order-geometric}
From now on, fix the value of \(R_0\) large enough that the estimates of
\cref{zeroth-order-geometric-bounds} hold and such that \(R_0\ge C(\varpi_i)\ge
r_{\mathcal{H}}\) (see \cref{rH-size}).
\begin{proposition}[Estimates for undifferentiated geometric quantities]\label{zeroth-order-geometric-bounds}
Let \(\eta > 0\). Let \(R\ge R_0\ge 1\). If \(R_0\) is sufficiently large (depending only on \(\varpi{}_i\)), then the following estimates hold (where we allow all constants to depend on \(c_{\mathcal{H}}\), \(r_{\mathrm{min}}\), \(\varpi_i\), and \(R\) in addition to any named parameters):
\begin{enumerate}
\item Globally, we have
\begin{align}
\varpi{}-\mathbf{e}^2/r\ge c_{\mathcal{H}} > 0, & \label{mass-redshift} \\
\varpi{}\le \varpi{}_i, & \label{mass-bounded} \\
0\le 1-\mu{}\le C,                               & \label{mu-estimate-weak} \\
\kappa > 0, \label{kappa-sign} \\
\log (-\gamma) \le 0, \label{gamma-sign} \\
\abs{\log \kappa{}}\le C\mathcal{E}_0^2,            & \label{kappa-estimate} \\
\abs{\log \kappa{}}\le C(\mathcal{E}_{0,1+\eta{}},\eta{})\mathcal{E}_{0,1}^2v^{-1}, &\label{kappa-estimate-strong} \\
0\le \lambda{}\le C\exp (C\mathcal{E}_0^2),         &  \label{lambda-bound} \\
0\le(-\nu)\le C. \label{nu-bound}
\end{align}
\item Away from the horizon, we have
\begin{align}
\abs{\log (-\gamma{})} \le Cr^{-1}\tau^{-p}\mathcal{E}_{0,p}^2 & \quad \text{in }\set{r\ge R}, \label{gamma-estimate} \\
\abs{\log (-\nu)} \le Cr^{-1} & \quad \text{in }\set{r\ge R}, \label{nu-estimate} \\
1-\mu{} \ge 1/2                                 & \quad \text{in }\set{r\ge R},  \label{mu-estimate}                      \\
\lambda \ge C\exp (-C\mathcal{E}_0^2)                                 & \quad \text{in }\set{r\ge R},  \label{lambda-lower-bound}                      \\
(-\nu) \ge 1/2                                 & \quad \text{in }\set{r\ge R}. \label{nu-lower-bound}
\end{align}
\item Near the horizon, there are constants \(c_\nu,c'_\nu,C> 0\) (where \(c'_\nu\) depends on \(\mathcal{E}_0\) and is independent of \(R\) and the other constants depend on \(\eta\) and \(\mathcal{E}_{0,1+\eta}\)) such that
\begin{equation}\label{nu-exponential-bound}
C^{-1}\exp (-c_\nu'(u-v))\le (-\nu{})\le C\exp (-c_\nu(u-v))\quad \text{in }\set{r\le R}.
\end{equation}
\item We have
\begin{equation} \label{v-u-r-estimate}
\abs{v-u-r}\le C(\mathcal{E}_{0,1+\eta{}},\eta{})\log r \quad  \text{in }\set{r\ge R}.
\end{equation}
\item Let \(\epsilon{}\in(0,1)\). For \(G_\epsilon{}\) as defined in \cref{G-epsilon}, there is \(C > 0\) depending on \(\mathcal{E}_{0,1+\eta{}}\), \(\eta\), \(\epsilon{}\), and \(\Rc\) such that
\begin{equation}\label{G-estimate}
C^{-1}\le G_\epsilon{}\le C
\end{equation}
\end{enumerate}
\end{proposition}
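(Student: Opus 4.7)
The plan is to prove the estimates in the order listed in the proposition, exploiting the fact that each assertion either follows from the hypotheses of \cref{main-theorem} or can be obtained by integrating a transport equation from known boundary data, with the gauge conditions $\kappa|_{\{r=r_{\mathcal{H}}\}} = 1$ and $(-\gamma)|_{\mathcal{I}} = 1$ providing the initial values. The mass inequalities \eqref{mass-redshift} and \eqref{mass-bounded} are literally assumptions (6) and (5). The two-sided bound \eqref{mu-estimate-weak} on $1-\mu$ follows from $\mu = 2\varpi/r - \mathbf{e}^2/r^2$ together with $\varpi \leq \varpi_i$, $r\geq r_{\min}$, $|\mathbf{e}|\leq C(\varpi_i)$, for the upper bound, and from the identity $1-\mu = -4\lambda\nu/\Omega^2$ combined with assumption (4) for the lower bound.

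Next I would handle the signs and global bounds for $\kappa$ and $(-\gamma)$. The Raychaudhuri equations \eqref{sph-sym-equations-2} have the form $\partial_u\log\kappa \leq 0$ and $\partial_v\log(-\gamma)\geq 0$, so integration from the gauge-normalizing curves propagates the signs in \eqref{kappa-sign}, \eqref{gamma-sign}. The bound \eqref{kappa-estimate} follows by integrating $|\partial_u \log\kappa| = r(\partial_u\varphi)^2/(-\nu)$ along a constant-$v$ curve from $\{r=r_{\mathcal{H}}\}$ and factoring out $r^{-1}_{\min}$ to recover an $\mathcal{E}_0^2$-integrand. The stronger estimate \eqref{kappa-estimate-strong} uses the extra $\tau^{p}$-weight in $\mathcal{E}_{0,1+\eta}$ to gain the $v^{-1}$ factor via splitting the $u$-integral into a region where $\tau \sim v$ and a region of finite extent. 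The bounds \eqref{lambda-bound}, \eqref{nu-bound} then follow from $\lambda = (1-\mu)\kappa$, $(-\nu) = (1-\mu)(-\gamma)$ combined with the previous estimates. For the away-from-horizon block, choosing $R_0 \geq 4\varpi_i$ gives \eqref{mu-estimate}; then \eqref{gamma-estimate} is obtained by integrating $\partial_v\log(-\gamma) = r(\partial_v\varphi)^2/\lambda$ backwards from $\mathcal{I}^+$ to $(u,v)$, using $\lambda \asymp 1$ and the decay of the $\mathcal{E}_{0,p}$-weighted integral in $\tau$; \eqref{nu-estimate}, \eqref{lambda-lower-bound}, \eqref{nu-lower-bound} then follow.

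The main obstacle is the redshift estimate \eqref{nu-exponential-bound}. The starting point is the transport equation derived from \eqref{sph-sym-equations-1},
\begin{equation*}
\partial_v\log(-\nu) = \frac{2(\varpi - \mathbf{e}^2/r)}{r^2}\kappa,
\end{equation*}
which in the region $\{r\leq R\}$ is bounded above and below by positive constants thanks to \eqref{mass-redshift}, \eqref{kappa-estimate}, and $r\in [r_{\min}, R]$. Fixing $u$ and integrating along the constant-$u$ ray from the reference value $v = v_R(u)$ (defined by $r(u,v_R(u)) = R$) produces $(-\nu)(u,v) \asymp (-\nu)(u,v_R(u))\,e^{-c(v_R(u)-v)}$. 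The value $(-\nu)(u,v_R(u))$ is bounded above and below by constants using the away-from-horizon estimates, so it remains to compare $v_R(u)-v$ with $u-v$. Along $\{r = R\}$ one has $dv/du = (-\nu)/\lambda$ bounded above and below by positive constants, so $v_R(u) - v_R(u_0) \asymp u - u_0$, yielding $v_R(u) - v \asymp (u-v) + O(1)$ uniformly, which gives the two-sided exponential bound.

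Finally, \eqref{v-u-r-estimate} is obtained by integrating $\partial_v r = \lambda = (1-\mu)\kappa$ and $\partial_u r = \nu$ in the region $\{r \geq R\}$, where $-\gamma$ and $\kappa$ differ from their limits by $O(1/r)$ (via \eqref{kappa-estimate-strong} and \eqref{gamma-estimate}) and $1-\mu = 1 + O(1/r)$; the logarithmic error accumulates from the $1/r$ correction. The estimate \eqref{G-estimate} for $G_\epsilon$ is then a direct consequence: the ratios $r/v$, $u/v$, $v/u$ are controlled in $\{r \geq R_0\}$ by \eqref{v-u-r-estimate} and in $\{r \leq R\}$ (which includes the near-horizon region needed for the last indicator) by the exponential redshift, which forces $u$ and $v$ to stay comparable as long as $r$ stays bounded. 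A mild bootstrap in the choice of $R_0$ is used to ensure $R_0$ depends only on $\varpi_i$, as promised.
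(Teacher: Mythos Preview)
Your proposal is correct and follows essentially the same approach as the paper, which defers most of parts (1)--(3) to \cite[Prop.~5.12]{luk-oh-scc2} and only writes out the arguments for \eqref{nu-estimate}, \eqref{v-u-r-estimate}, and \eqref{G-estimate}; your sketch simply spells out the standard transport-equation arguments that the cited reference contains.

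One minor imprecision: in your justification of \eqref{G-estimate} you say the exponential redshift ``forces $u$ and $v$ to stay comparable as long as $r$ stays bounded.'' The bound \eqref{nu-exponential-bound} on $(-\nu)$ does not by itself give $v\lesssim u$ in $\{r\le \epsilon^{-1}\Rc\}$; the paper instead uses \eqref{v-u-r-estimate} at the boundary $\{r=\epsilon^{-1}\Rc\}$ to obtain $v\lesssim u + \epsilon^{-1}\Rc + \log(\epsilon^{-1}\Rc)\lesssim u$, and handles the region $\{r\le (1-\epsilon)v\}$ by rewriting $v = u + r + O(\log r)\le u + (1-\epsilon)v + O(\log v)$ and absorbing. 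Your overall strategy is right, but the $v/u$ control comes from \eqref{v-u-r-estimate} rather than directly from the redshift.
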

\begin{proof}
The estimates
\cref{mass-redshift,mass-bounded,mu-estimate-weak,kappa-sign,kappa-estimate,kappa-estimate-strong,gamma-sign,lambda-bound,nu-bound,gamma-estimate,mu-estimate,lambda-lower-bound,nu-lower-bound,nu-exponential-bound}
follow as in the proof of \cite[Prop.~5.12]{luk-oh-scc2}. Although \cite{luk-oh-scc2}
uses the \((u,\tilde{v})\) gauge, the relevant arguments go through in the
\((u,v)\) gauge. We now show the remaining estimates
\cref{nu-estimate,v-u-r-estimate,G-estimate}.

First, \cref{nu-estimate} follows from the \(\partial{}_v\)-transport equation for
\((-\nu{})\) and \((-\nu{})|_{\mathcal{I}} = 1\) (which follows from \((-\nu{})
= (1-\mu{})(-\gamma{})\), the gauge condition \((-\gamma{})|_{\mathcal{I}} =
1\), and \(1 - \mu{} = 1 + O(r^{-1})\)).

The estimate \cref{v-u-r-estimate} follows from \cref{nu-estimate} (which implies
\(\abs{1 +\nu{}}\le Cr^{-1}\)) and an argument as in the proof of
\cite[Prop.~5.12]{luk-oh-scc2}.

Finally, we establish \cref{G-estimate}. For the remainder of the proof, we
allow the symbol \(\lesssim\) to include dependence on
\(C(\mathcal{E}_{0,1+\eta{}},\eta{})\) as in \cref{v-u-r-estimate}, as well as
\(\epsilon{}\) and \(\Rc\). It is enough to show
\begin{align}
r&\lesssim v, \label{Ge-rv}\\
u&\lesssim v \quad  \text{in }\set{r\ge R_0}, \label{Ge-uv}\\
v&\lesssim u \quad \text{in }\set{r\le \epsilon{}^{-1}\Rc}\cup \set{r\le (1-\epsilon{})v}.\label{Ge-vu}
\end{align}
Indeed, \cref{Ge-rv,Ge-uv,Ge-vu} imply the upper bound in \cref{G-estimate}, and the
lower bound \(G_\epsilon{} \ge 1\) is clear. In \(\set{r\ge R_0}\), we have
\(r-R_0\le v/2\) by \cref{lambda-lower-bound}. Thus \(r\le (R_0+1/2)v\), which
establishes \cref{Ge-rv}. In \(\set{r\ge R_0}\), by \cref{v-u-r-estimate}, we have
\(u\lesssim v + r + \log r\lesssim v\), which establishes \cref{Ge-uv}. If \(r\le
\epsilon{}^{-1}\Rc\), then \(v\lesssim u + \epsilon{}^{-1}\Rc + \log
(\epsilon{}^{-1}\Rc)\) by \cref{v-u-r-compare}, so \(v\lesssim u\) in this region.
If \(r\le  (1-\epsilon{})v\), then \(v = u + r + O(\log r) \le
u+(1-\epsilon{})v + O(\log v)\), so for \(v\ge v_\ast{}\) and \(v_\ast{}\) large
enough we have \(v\le 2\epsilon{}^{-1}u\). For \(v\le v_\ast{}\) we have \(v\le
v_\ast{}u\) since \(u\ge 1\). Thus \(v\le \max (2\epsilon{}^{-1},v_\ast{})u\) in this region.
\end{proof}
As an immediate corollary of \cref{zeroth-order-geometric-bounds}, we control the
zeroth order schematic geometric quantities.
\begin{corollary}[Estimates for zeroth order schematic geometric quantities]
We have
\begin{equation}\label{zeroth-geometric-bound}
\begin{split}
C(\mathfrak{b}_0)&\le C(\mathcal{E}_0,\varpi{}_i,c_{\mathcal{H}},r_{\mathrm{min}},R_0,\eta{}_0), \\
C(\mathfrak{B}_0^{\circ })&\le C(\mathfrak{b}_0,\mathcal{E}_{0,1}), \\
C(\mathfrak{B}_0)&\le C(\Rc,\mathfrak{B}_0^{\circ },\mathcal{E}_{0,1+\eta{}_0}), \\
C(\mathfrak{g}_0)&\le C(\mathfrak{b}_0), \\
C(\mathfrak{G}_0)&\le C(\mathfrak{b}_0). \\
\end{split}
\end{equation}
\label{zeroth-order-schematic-control}
\end{corollary}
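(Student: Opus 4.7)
The plan is to verify each of the five bounds by directly substituting the pointwise estimates of \cref{zeroth-order-geometric-bounds} into the definitions \cref{b0-def,B0-circ-def,B0-def,g0-def,G0-def} of the schematic geometric quantities. Since all five $\mathcal{O}(\cdot)$-expressions involve only \emph{undifferentiated} geometric quantities, no transport equation has to be integrated and no commutation is needed; the argument is essentially bookkeeping against the estimates already proven.

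First I would handle $\mathfrak{b}_0$, whose defining entries are all globally controlled by the proposition: $\kappa{}^{\pm 1}$ by the global bound \cref{kappa-estimate}, $r^{-1}\varpi{}$ by \cref{mass-bounded} together with $r\ge r_{\mathrm{min}}$, and the $\mathbf{1}_{r\ge R_0}$-restricted quantities $(-\gamma{})^{\pm 1}$ and $(1-\mu{})^{-1}$ by \cref{gamma-estimate,gamma-sign} and \cref{mu-estimate}, both valid in $\{r\ge R_0\}$. The bounds on $\mathfrak{g}_0$ and $\mathfrak{G}_0$ are then immediate: since $\mathfrak{g}_0 =_{\mathrm{s}} \mathfrak{b}_0\{1,\varpi{}\}$ and $\mathfrak{G}_0 = \mathfrak{g}_0$, the bound $\varpi{}\le\varpi{}_i$ from \cref{mass-bounded} gives $C(\mathfrak{g}_0),C(\mathfrak{G}_0)\le C(\mathfrak{b}_0)$ since $\varpi{}_i$ is already absorbed into $C(\mathfrak{b}_0)$.

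The substantive step is $\mathfrak{B}_0^\circ$, which records weighted combinations of the form $\{r,\mathbf{1}_{r\ge R_0}u,v\}\cdot\{\abs{1-\kappa{}^{\pm 1}},\mathbf{1}_{r\ge R_0}\abs{1-(-\gamma{})^{\pm 1}}\}$. I would apply the strong decay in \cref{kappa-estimate-strong} to get $\abs{1-\kappa{}^{\pm 1}}\lesssim v^{-1}$ globally, combined with the coordinate comparison from \cref{G-estimate} (namely $r\lesssim v$ and $\mathbf{1}_{r\ge R_0}u\lesssim v$) to bound the $\kappa{}$-terms. For the $\gamma{}$-terms, \cref{gamma-estimate} with $p=1$ gives $\abs{1-(-\gamma{})^{\pm 1}}\lesssim r^{-1}\tau{}^{-1}$ in $\{r\ge R_0\}$; then \cref{v-u-r-estimate} (which yields $v\le u+r+C\log r$) combined with $\tau{}=u$ on this region and the fact that $r,u\ge 1$ implies $r+u+v\lesssim ru$, from which each weighted product is bounded. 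Finally, $\mathfrak{B}_0$ adds only the extra term $\mathbf{1}_{\Rc\le r\le 2\Rc}(v-u-r)$, which by direct application of \cref{v-u-r-estimate} is bounded by $C(\mathcal{E}_{0,1+\eta{}_0},\eta{}_0)\log(2\Rc)$ on its support, justifying the dependence on $\Rc$ and $\mathcal{E}_{0,1+\eta{}_0}$.

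The main (mild) obstacle is the coordinate bookkeeping for $\mathfrak{B}_0^\circ$: one must verify that the weights $r,u,v$ in the various regions really are dominated by the decay rates $v^{-1}$ and $r^{-1}\tau{}^{-1}$ coming from the strong pointwise estimates. This is exactly what \cref{G-estimate} and \cref{v-u-r-estimate} are designed to provide, after which every term reduces to a routine inequality.
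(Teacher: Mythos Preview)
Your proposal is correct and matches the paper's approach, which simply records this as ``an immediate corollary of \cref{zeroth-order-geometric-bounds}'' without further argument. Your expansion of the bookkeeping is sound.

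One point worth flagging: for $\mathfrak{B}_0^{\circ}$ you invoke \cref{G-estimate}, whose constant in general depends on $\Rc$. This would be circular, since $\Rc$ is fixed later in the paper precisely in terms of $\mathfrak{B}_0^{\circ}$ (see the hypothesis of \cref{E-alpha-bound} and of \cref{wave-comm-formula-UVS}). However, the two specific consequences you actually use---$r\lesssim v$ and $\mathbf{1}_{r\ge R_0}u\lesssim v$---are exactly \cref{Ge-rv} and \cref{Ge-uv} from the proof of \cref{G-estimate}, and those two sub-estimates are established without any reference to $\Rc$ (the $\Rc$-dependence enters only through \cref{Ge-vu}). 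So the argument is fine once you cite those sub-estimates rather than the full \cref{G-estimate}. Alternatively, both comparisons follow directly from $\lambda\le C(\mathfrak{b}_0)$ and the fact that $dv/du\sim 1$ along $\{r=R_0\}$ (via \cref{kappa-estimate} and \cref{gamma-estimate} with $p=0$), avoiding \cref{G-estimate} altogether. Also note that $\mathfrak{B}_0$ carries $G_{\eta_0}^{\pm 1}$ as subscript parameters in its definition \cref{B0-def}, so bounding $\mathfrak{B}_0$ requires the full \cref{G-estimate}, not just the $(v-u-r)$ term; this is what accounts for the $\Rc$ and $\mathcal{E}_{0,1+\eta_0}$ dependence in the third line.
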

We now use \cref{v-u-r-compare} to compare the variables \(r\), \(u\), \(v\), and
\(\tau{}\).
\begin{lemma}
The following estimates hold for \(\epsilon{}\in (0,1)\):
\begin{align}
r + \mathbf{1}_{r\ge R_0}u\le G_\epsilon{}v, & \label{v-u-r-compare-1}\\
v\le G_\epsilon{}u,&\quad \text{in }\set{r\le \epsilon{}^{-1}\Rc}\cup \set{r\le (1-\epsilon{})v} \label{v-u-r-compare-2}\\
v\le (G_\epsilon{} + \epsilon{}^{-1})(u+r) \label{v-u-r-compare-3}.
\end{align}
Moreover, we have
\begin{equation}\label{tau-estimates}
\begin{split}
G_\epsilon^{-1}v \le \tau{} \le G_\epsilon{}v&\quad \text{in }\set{r\le \epsilon{}^{-1}\Rc}\cup \set{r\le (1-\epsilon{})v} \\
\tau{} = u&\quad\text{in } \set{r\ge R_0}.
\end{split}
\end{equation}
\label{v-u-r-compare}
\end{lemma}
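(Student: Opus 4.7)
The three inequalities \cref{v-u-r-compare-1,v-u-r-compare-2,v-u-r-compare-3} should fall out essentially from the definition \cref{G-def} of $G_\epsilon$ together with an elementary case split, so the plan is to unpack the definitions carefully.

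For \cref{v-u-r-compare-1,v-u-r-compare-2}: observe that each summand inside the supremum defining $G_\epsilon$ is, pointwise in $\mathcal{R}_{\mathrm{char}}$, bounded above by $G_\epsilon$. Thus $r/v \le G_\epsilon$ and $\mathbf{1}_{r\ge R_0}u/v \le G_\epsilon$ everywhere, giving \cref{v-u-r-compare-1}; similarly $v/u \le G_\epsilon$ on $\set{r\le \epsilon^{-1}\Rc}\cup \set{r\le (1-\epsilon)v}$, giving \cref{v-u-r-compare-2}. (These bounds are meaningful because \cref{G-estimate} in \cref{zeroth-order-geometric-bounds} guarantees $G_\epsilon$ is finite.)

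For \cref{v-u-r-compare-3}, I would split into the two regions whose union is all of $\mathcal{R}_{\mathrm{char}}$. In $\set{r \le (1-\epsilon)v}$, the bound \cref{v-u-r-compare-2} just established gives $v \le G_\epsilon u \le G_\epsilon(u+r)$. In the complementary region $\set{r > (1-\epsilon)v}$, rearranging gives $v < (1-\epsilon)^{-1}r \le \epsilon^{-1}(u+r)$ (the second inequality following from $u \ge 1 \ge 0$ and the elementary bound $(1-\epsilon)^{-1} \le \epsilon^{-1}$ when $\epsilon \le 1/2$, with a symmetric adjustment otherwise since $u, r \ge 0$). Adding the constants from the two cases yields the stated bound $v \le (G_\epsilon + \epsilon^{-1})(u+r)$.

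For \cref{tau-estimates}, the case $r \ge R_0$ is just the definition of $\tau$. For $r \le R_0$, the key observation is that the point $(u_{R_0}(v), v)$ at which $\tau$ is evaluated lies on the constant-$r$ curve $\set{r = R_0}$, and $R_0 \le \epsilon^{-1}\Rc$ because $\Rc \gg R_0$ (see \cref{sec:notation:constants}). Hence both indicator functions in \cref{G-def} see this point, so applying \cref{v-u-r-compare-1,v-u-r-compare-2} at $(u_{R_0}(v), v)$ gives simultaneously $u_{R_0}(v) \le G_\epsilon v$ and $v \le G_\epsilon u_{R_0}(v)$, i.e.\ $G_\epsilon^{-1}v \le \tau \le G_\epsilon v$ on $\set{r\le R_0}$; since $\set{r \le R_0} \subset \set{r \le \epsilon^{-1}\Rc}$, the first line of \cref{tau-estimates} then holds on all of $\set{r\le \epsilon^{-1}\Rc}\cup \set{r\le (1-\epsilon)v}$ once combined with the $\tau = u$ case handled via \cref{v-u-r-compare-1,v-u-r-compare-2} directly.

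\textbf{Main obstacle.} There isn't a genuine analytic obstacle here: this lemma is a bookkeeping consequence of the geometric setup and the already-established \cref{G-estimate}. The only point demanding care is the case split in \cref{v-u-r-compare-3}, where one must keep track of the numerical constants to land exactly at $G_\epsilon + \epsilon^{-1}$; this is the only spot where a small calculation needs to be executed honestly.
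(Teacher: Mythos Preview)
Your proposal is correct and follows essentially the same approach as the paper: read off \cref{v-u-r-compare-1,v-u-r-compare-2} from the definition of $G_\epsilon$, split \cref{v-u-r-compare-3} at $r = (1-\epsilon)v$, and deduce \cref{tau-estimates} by evaluating at $(u_{R_0}(v),v)$. Your derivation of \cref{v-u-r-compare-2} is in fact cleaner than the paper's, which inserts an unnecessary extension step to $\{r \le R_0\}$; and your awareness that the constant in \cref{v-u-r-compare-3} comes out as $(1-\epsilon)^{-1}$ rather than the stated $\epsilon^{-1}$ matches what the paper's own proof actually yields.
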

\begin{proof}
The definition of \(G\) implies \cref{v-u-r-compare-1}, as well as
\cref{v-u-r-compare-2} in the region \(\set{r\ge R_0}\). To extend
\cref{v-u-r-compare-2} to the region \(\set{r\le R_0}\), note that if \(r(u,v)\le
R_0\), then \(u\ge u_{R_0}(v)\), and so \(v\le Gu_{R_0}(v)\le Gu\). Next,
\cref{v-u-r-compare-3} follows from \cref{v-u-r-compare-2}, because
\begin{equation}
v\le \mathbf{1}_{r\le (1-\epsilon{})v}v + \mathbf{1}_{r\ge (1-\epsilon{})v}v \le G_\epsilon{}u + (1-\epsilon{})^{-1}r.
\end{equation}

By \cref{v-u-r-compare-1,v-u-r-compare-2}, we have
\begin{equation}
G_\epsilon^{-1}v \le u\le G_\epsilon{}v\quad \text{in }(\set{r\le \epsilon{}^{-1}\Rc}\cup \set{r\le (1-\epsilon{})v})\cap \set{r\ge R_0}.
\end{equation}
The estimates in \cref{tau-estimates} now follow from the definition of \(\tau{}\) (see
\cref{tau-def}).
\end{proof}
\section{Boundedness of initial data}
\label{sec:org8eaf06d}
\label{sec:boundedness-initial-data} In this section we explain how the initial data
norm \(\mathcal{D}_\alpha{} = \mathcal{D}[\Gamma^\alpha{}\varphi{}]\) (which is not
gauge invariant) is controlled by the gauge invariant initial data norm
\(\mathfrak{D}_{\abs{\alpha{}}}\). See \cref{sec:norms} for definitions of these
quantities.
\begin{proposition}[Boundedness of initial data]
If the data vanishes in a neighbourhood of \(C^{\text{out}}\), then we have
\begin{equation}\label{boundedness-initial-data-trivial}
\mathcal{D}_0 = \mathfrak{D}_0.
\end{equation}
and for \(\abs{\alpha{}}\ge 1\) we have
\begin{equation}\label{boundedness-initial-data-eq}
\mathcal{D}_\alpha{}\le C(\mathfrak{b}_{<\alpha{}},r^{-1}\mathfrak{g}_{<\alpha{}})\mathfrak{D}_{\abs{\alpha{}}}.
\end{equation}
\label{boundedness-initial-data}
\end{proposition}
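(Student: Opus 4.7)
The identity \cref{boundedness-initial-data-trivial} is immediate from the definitions, since $\mathcal{D}_0 = \mathfrak{D}_0[\varphi] = \mathfrak{D}_0$. For \cref{boundedness-initial-data-eq} the plan is to induct on $|\alpha|\ge 1$, handling the two pieces of
\begin{equation*}
\mathcal{D}_\alpha = \sup_{C^{\mathrm{in}}}\abs{U^{\le 1}\Gamma^\alpha\varphi} + \sup_{C^{\mathrm{out}}}\abs{(r^2\overline{\partial}_r)^{\le 1}(r\Gamma^\alpha\varphi)}
\end{equation*}
separately. The $C^{\mathrm{out}}$ contribution vanishes: the hypothesis that the characteristic data vanishes in a neighborhood of $C^{\mathrm{out}}$ means the data is zero on all of $C^{\mathrm{out}}$ and on an initial segment $C^{\mathrm{in}}\cap\set{u\le u_0}$, and uniqueness for the characteristic IVP associated to \cref{einstein-maxwell} (\cite{Kommemi2013TheGS}) then forces $\varphi\equiv 0$ throughout the full spacetime rectangle $[1, u_0]\times [1, \infty)\supset C^{\mathrm{out}}$, so $\Gamma^\alpha\varphi$ and all its derivatives vanish on $C^{\mathrm{out}}$.

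For the $C^{\mathrm{in}}$ piece, the key observation is that on $C^{\mathrm{in}}$ one has $v\equiv 1$ and $r\in[r_{\mathrm{min}}, R_0]\subset \set{r\le \Rc}$, so the cutoff $\chi$ equals $1$; hence $V|_{C^{\mathrm{in}}} = \underline{\partial{}}_v$ and $S|_{C^{\mathrm{in}}} = v\underline{\partial{}}_v = \underline{\partial{}}_v$. Writing $\underline{\partial{}}_v = \partial_v + (\lambda/(-\nu))\partial_u$ and $U = (-\nu)^{-1}\partial_u$ in $(u,v)$ coordinates, the Leibniz rule expands $U^{\le 1}\Gamma^\alpha\varphi|_{C^{\mathrm{in}}}$ into a polynomial combination of mixed partials $\partial_u^i\partial_v^j\varphi|_{C^{\mathrm{in}}}$ with $i+j\le |\alpha|+1$, multiplied by polynomials in $\lambda, (-\nu)^{\pm 1}$ and their spacetime derivatives. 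I then eliminate the transverse ($\partial_v$) derivatives using the wave equation in transport form $\partial_u(r\partial_v\varphi) = -\lambda\partial_u\varphi$: differentiating this identity tangentially and transversally, iterating, and integrating along $C^{\mathrm{in}}$ from the vertex (where all boundary terms vanish by the data-vanishing assumption) expresses each $\partial_u^i\partial_v^j\varphi|_{C^{\mathrm{in}}}$ in terms of $\partial_u^k\varphi|_{C^{\mathrm{in}}}$, $k\le i+j$, with coefficients algebraic in geometric quantities on $C^{\mathrm{in}}$. Converting $\partial_u^k\varphi = ((-\nu)U)^k\varphi$ and using $\sup_{C^{\mathrm{in}}}\abs{U^{\le |\alpha|+1}\varphi}\le \mathfrak{D}_{|\alpha|}$ then closes the estimate.

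The main subtlety is that the naive Leibniz expansion above produces coefficients involving spacetime derivatives of $\kappa, \varpi, (-\gamma)$ of order up to $|\alpha|$, which \emph{a priori} lie in $\mathfrak{b}_\alpha$ rather than in $\mathfrak{b}_{<\alpha}$. To keep every coefficient within $\mathfrak{b}_{<\alpha}$ and $r^{-1}\mathfrak{g}_{<\alpha}$, I invoke the constraint-type identities on $C^{\mathrm{in}}$: the Raychaudhuri equations \cref{raychaudhuri-equations} and the $\partial_u$-transport equations in \cref{sph-sym-equations-1,sph-sym-equations-2} express top-order tangential derivatives of $\kappa, \varpi, (-\gamma)$ as quadratic products of $\partial_u$-derivatives of $\varphi$ and lower-order geometric quantities, while the corresponding transverse derivatives are reduced by reusing the wave equation. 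Iteratively substituting these identities absorbs the would-be order-$\alpha$ geometric data into factors controlled by $\mathfrak{D}_{|\alpha|}$ and $C(\mathfrak{b}_{<\alpha}, r^{-1}\mathfrak{g}_{<\alpha})$. This systematic conversion—tracking the order of every geometric coefficient through every iteration of the wave and Einstein equations to ensure the dependence on $\mathfrak{b}_{<\alpha}$ rather than $\mathfrak{b}_\alpha$ is preserved—is the main combinatorial obstacle of the proof.
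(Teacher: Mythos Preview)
Your approach is correct in outline and shares the essential mechanism with the paper's proof: both exploit that data vanish on $C^{\mathrm{out}}$ (so that piece of $\mathcal{D}_\alpha$ is zero), that $S=V=\underline{\partial}_v$ on $C^{\mathrm{in}}$, and that the wave equation converts transverse derivatives into tangential ones by integration along $C^{\mathrm{in}}$ from the vertex. The organization, however, differs. The paper never expands $\Gamma^\alpha$ into coordinate partials; instead it peels off one $V$ at a time, applying the identity $UV\Gamma^{\alpha'}\varphi =_{\mathrm{s}} \mathcal{O}(\mathfrak{b}_{\alpha'},r^{-1}\mathfrak{g}_{\alpha'})[V\Gamma^{\le\alpha'}\varphi + U^{\le 2}\Gamma^{\le\alpha'}\varphi]$ (a direct consequence of \cref{UV-box-lemma} and \cref{wave-comm-formula}) to the commuted quantity $\Gamma^{\alpha'}\varphi$ with $|\alpha'|=|\alpha|-1$, integrates in $r$, and closes with Gr\"onwall on the same-order term $V\Gamma^{\le\alpha'}\varphi$. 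Because $|\alpha'|<|\alpha|$, the geometric coefficients are automatically of order $<\alpha$, so the ``subtlety'' you flag simply does not arise.

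Your route---expand in $\partial_u^i\partial_v^j\varphi$ and apply the wave equation to $\varphi$ itself---trades one difficulty for another. You avoid the Gr\"onwall step (since $\Box\varphi=0$ has no commutator error), but you must now compute transverse derivatives $\partial_v^k\lambda$, $\partial_v^k\nu$, $\partial_v^k\varpi$ on $C^{\mathrm{in}}$ by differentiating the Einstein equations and integrating further transport equations from the vertex. This works (the boundary values there are the Reissner--Nordstr\"om values, which are bounded), but the bookkeeping you allude to is substantial and not really carried out. The paper's approach is cleaner precisely because the schematic lemmas \cref{UV-box-lemma} and \cref{wave-comm-formula} already encode all of that Einstein-equation content, so the proof reduces to a single transport-plus-Gr\"onwall argument.
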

\begin{remark}
The restriction that the solution vanish in a neighbourhood of the outgoing null
part of the data hypersurface can be removed, but for the simplicity of the
argument we only give the proof in this case. Indeed, we have reduced to this
case in \cref{sec:reduction-to-characteristic} using the compact support of the
Cauchy data.
\end{remark}
\begin{proof}
First, \cref{boundedness-initial-data-trivial} is trivial. Define
\begin{equation}
\mathfrak{D}_n^{\mathrm{in}}[\psi{}] \coloneqq{} \sup_{C^{\mathrm{in}}} \abs{U^{\le n+1}\psi{}} \qquad \mathfrak{D}_n^{\mathrm{out}}[\psi{}] \coloneqq{} \sup_{C^{\mathrm{out}}} \abs{(r^2\overline{\partial}_r)^{\le 1}(r\overline{\partial}_r)^{\le n}(r\psi{})},
\end{equation}
so that
\begin{equation}
\mathfrak{D}_n[\psi{}] = \mathfrak{D}_n^{\mathrm{in}}[\psi{}] + \mathfrak{D}_n^{\mathrm{out}}[\psi{}].
\end{equation}
The vanishing assumption on the data implies by the domain of dependence
property that the solution vanishes in a neighbourhood of \(C^{\text{out}}\),
and so \(\mathfrak{D}_n^{\text{out}}[\Gamma^\alpha{}\varphi{}] = 0\). In
particular, it is enough to estimate
\(\mathfrak{D}_n^{\text{in}}[\Gamma^\alpha{}\varphi{}]\). The desired estimate
\cref{boundedness-initial-data-eq} follows from the \(n = 0\) case of
\cref{in-en-bound-step-3-1}:
\begin{align}
\mathfrak{D}_n^{\mathrm{in}}[\Gamma{}^\alpha{}\varphi{}]&\le C(\mathfrak{b}_{\alpha{}-1+nU},r^{-1}\mathfrak{g}_{\alpha{}-1+nU})\mathfrak{D}_{n+\abs{\alpha{}}}^{\text{in}}[\varphi{}]. \label{in-en-bound-step-3-1}
\end{align}
We now prove \cref{in-en-bound-step-3-1}.

\step{Step 1: Reduction to \cref{Din-bound-V}.} By induction, \cref{in-en-bound-step-3-1}
follows from
\begin{equation}
\mathfrak{D}_n^{\mathrm{in}}[\Gamma{}^\alpha{}\varphi{}]\le C(\mathfrak{b}_{\alpha{}-1+nU},r^{-1}\mathfrak{g}_{\alpha{}-1+nU})\mathfrak{D}_{n+1}^{\text{in}}[\Gamma{}^{\le \alpha{}-1}\varphi{}].
\end{equation}
Since \(v\equiv 1\) on \(C^{\text{in}}\) and
\(Uv = 0\), we have \(S = V\), and so we can assume \(\alpha_S = 0\). Moreover,
\(U\) commutes with \(V\) on \(C^{\text{in}}\subset \set{r\le \Rc}\). Clearly we have
\begin{equation}\label{Din-bound-U}
\mathfrak{D}_n^{\text{in}}[U\Gamma{}^\alpha{}\varphi{}] \le  \mathfrak{D}_{n+1}^{\text{in}}[\Gamma{}^\alpha{}\varphi{}],
\end{equation}
so it is enough to show that
\begin{equation}\label{Din-bound-V}
\mathfrak{D}_n^{\text{in}}[V\Gamma{}^\alpha{}\varphi{}]\le C(\mathfrak{b}_{\alpha{}+nU},r^{-1}\mathfrak{g}_{\alpha{}+nU})\mathfrak{D}_{n}^{\text{in}}[U\Gamma{}^\alpha{}\varphi{}].
\end{equation}

\step{Step 2: Proof of \cref{Din-bound-V}.} Use \cref{UV-box,wave-comm-formula} and an
induction argument to compute
\begin{equation}
UV\Gamma{}^\alpha{}\varphi{} =_{\mathrm{s}} \mathcal{O}(\mathfrak{b}_\alpha{},r^{-1}\mathfrak{g}_\alpha{})[V\Gamma{}^{\le \alpha{}}\varphi{} + U^{\le 2}\Gamma{}^{\le \alpha{}}\varphi{}]
\end{equation}
It follows from an induction argument that
\begin{equation}\label{Din-bound-1}
\abs{U^nUV\Gamma{}^\alpha{}\varphi{}} \le   C(\mathfrak{b}_{\alpha{}+nU},r^{-1}\mathfrak{g}_{\alpha{}+nU})[\abs{V\Gamma{}^{\le \alpha{}}\varphi{}} + \abs{U^{\le n+1}\Gamma{}^{\le \alpha{}}\varphi{}}]\le C(\mathfrak{b}_{\alpha{}+nU},r^{-1}\mathfrak{g}_{\alpha{}+nU})[\abs{V\Gamma{}^{\le \alpha{}}\varphi{}} + \mathfrak{D}_{n}^{\text{in}}[\Gamma{}^\alpha{}\varphi{}]].
\end{equation}
Integrate \cref{Din-bound-1} for \(n = 0\) to \(C^{\text{in}}\cap C^{\text{out}}\)
(where the data vanishes) to get
\begin{equation}
\begin{split}
\abs{V\Gamma{}^\alpha{}\varphi{}}|_{v=1}(r)&\le \abs{V\Gamma{}^\alpha{}\varphi{}}|_{v=1}(r=R_0) + \int_{r}^{R_0} \abs{UV\Gamma{}^\alpha{}\varphi{}}\dd{}r'\\
&\le C(\mathfrak{b}_\alpha{},r^{-1}\mathfrak{g}_\alpha{})\mathfrak{D}_{0}^{\text{in}}[U\Gamma{}^\alpha{}\varphi{}] + C(\mathfrak{b}_\alpha{},r^{-1}\mathfrak{g}_\alpha{})\int_{r}^{R_0} \abs{V\Gamma{}^{\le \alpha{}}\varphi{}}\dd{}r'.
\end{split}
\end{equation}
Grönwall's inequality and the finite \(r\)-range on \(C^{\text{in}}\) gives
\begin{equation}\label{Din-bound-2}
\sup_{C^{\text{in}}}\abs{V\Gamma{}^\alpha{}\varphi{}}\le C(\mathfrak{b}_\alpha{},r^{-1}\mathfrak{g}_\alpha{})\mathfrak{D}_{0}^{\text{in}}[U\Gamma{}^\alpha{}\varphi{}].
\end{equation}
Substitute \cref{Din-bound-2} into \cref{Din-bound-1} and use \cref{Din-bound-U} to
obtain
\begin{equation}\label{Din-bound-3}
\sup_{C^{\text{in}}}\abs{U^nUV\Gamma^\alpha{}\varphi{}} \le C(\mathfrak{b}_{\alpha{}+nU},r^{-1}\mathfrak{g}_{\alpha{}+nU})\mathfrak{D}_{n}^{\text{in}}[U\Gamma{}^\alpha{}\varphi{}].
\end{equation}
Now \cref{Din-bound-2,Din-bound-3} imply \cref{Din-bound-V}.
\end{proof}
\section{Energy boundedness and decay}
\label{sec:org75fffcb}
\label{sec:energy-estimates}
The goal of this section is to show the following proposition.
\begin{proposition}[Estimates for unweighted and weighted energy norms]
We have
\begin{equation}\label{E-alpha-bound-0}
\mathcal{E}_0\le C(\varpi{}_i,c_{\mathcal{H}},r_{\mathrm{min}})\mathcal{D}_0.
\end{equation}
and
\begin{equation}\label{E-0-p-bound-statement}
\mathcal{E}_{0,2-\eta{}_0}\le C(\mathfrak{b}_0,\mathfrak{g}_0,\mathcal{P}_{0,0})\mathcal{D}_0.
\end{equation}
Let \(\abs{\alpha{}}\ge 1\). If \(\Rc \ge C(\mathfrak{B}_0^{\circ },\mathfrak{g}_0,\alpha{})\), then for
\(s\ge \eta{}_0\) sufficiently small (depending on a numerical constant), we have
\begin{equation}\label{E-alpha-bound-higher}
\mathcal{E}_\alpha{} \le C(\mathfrak{b}_{\alpha{}},r^{-s}\mathfrak{g}_{\alpha{}},\mathcal{P}_{<\alpha{},1+\eta{}_0})[\mathcal{D}_\alpha + \mathcal{E}_{<\alpha{},4s}],
\end{equation}
and when \(s\) is moreover small depending on \(\alpha{}\), we have
\begin{equation}\label{E-alpha-p-bound-statement}
\mathcal{E}_{\alpha{},2-\eta{}_0-C_\alpha{}s} \le C(\mathfrak{b}_{\alpha{}},r^{-s}\mathfrak{g}_{\alpha{}},\mathcal{P}_{\alpha{},0},\mathcal{P}_{<\alpha{},1+\eta{}_0})\mathcal{D}_\alpha{}
\end{equation}
for explicit constants \(C_\alpha{}\) depending only on \(\alpha{}\).
\label{E-alpha-bound}
\end{proposition}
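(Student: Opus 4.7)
The plan is to derive all four estimates from a single vector-field energy identity on $\mathcal{R}_{\text{char}}$, combined with a Dafermos--Rodnianski-type $r^p$-weighted multiplier hierarchy \cite{rp-method,moschidis-rp}, and an induction on $|\alpha|$. At each order, the error terms produced by $[\Box, L]$ for $L\in \Gamma^\alpha$ will be classified, via the structural commutation formula \cref{wave-comm-formula}, into three harmless categories: (i) coercive top-order pieces carrying the good redshift sign from the $\alpha_U f_U U L$ coefficient; (ii) top-order pieces supported in $\set{r\ge \Rc}$ with an $r$-weight strictly sharper than the bulk weight $r^{-1+\eta_0}$, to be absorbed into the LHS by choosing $\Rc$ large; and (iii) sub-leading pieces of order strictly less than $\alpha$, which have already been controlled in the induction.

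For the zeroth-order estimates \cref{E-alpha-bound-0,E-0-p-bound-statement}, I would use the multiplier $W = c_T T + c_X X + c_Y Y$ combining a Kodama-type timelike vector ($T$), a Morawetz multiplier ($X$), and a redshift multiplier ($Y$) localized near the horizon. The uniform positivity $\varpi - \mathbf{e}^2/r\ge c_{\mathcal{H}}$ from \cref{mass-redshift}, together with the undifferentiated geometric control of \cref{zeroth-order-geometric-bounds}, gives coercive boundary fluxes and a coercive Morawetz bulk; since $\Box\varphi = 0$, no commutator error appears and \cref{E-alpha-bound-0} follows directly. The weighted estimate \cref{E-0-p-bound-statement} is then obtained by adding the outgoing multiplier $r^p\partial_v(r\cdot)$ in $\set{r\ge R_0\}$ up to $p = 2-\eta_0$, converting the $r^p$-bulk into $\tau^p$-decay by the standard dyadic iteration of \cite{rp-method}; the input $\mathcal{P}_{0,0}$ handles the coupling to geometric quantities in the volume form.

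For the commuted unweighted estimate \cref{E-alpha-bound-higher}, I would induct on $|\alpha|$ and apply the Step~1 identity to $L\varphi$ for each $L\in \Gamma^\alpha$. The new contribution $\iint W L\varphi\cdot \Box L\varphi$ decomposes via \eqref{wave-comm-formula-equation}: the top-order $-\alpha_U f_U UL$ piece is non-negative (since $f_U\ge 0$ and the $\partial_u$-coefficient of $W$ is positive), hence discardable; the top-order $-\alpha_V f_V VL$ piece is supported in $\set{r\ge \Rc}$ with weight $r^{-2}\mathfrak{B}_0^\circ \mathfrak{g}_0$, decaying strictly faster than $r^{-1+\eta_0}$ and absorbable into the LHS for $\Rc$ large enough (depending on $\mathfrak{B}_0^\circ, \mathfrak{g}_0, \alpha$). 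The sub-leading errors $\mathfrak{b}_\alpha \Gamma^{\le \alpha-1}\Box + r^{-1}\mathfrak{g}_\alpha V^2\Gamma^{\le \alpha - S} + r^{-2}\mathfrak{g}_\alpha V\Gamma^{\le \alpha-1} + r^{-2}\mathfrak{g}_\alpha U\Gamma^{<\alpha}$ are bounded by Young's inequality: the $V^2$-piece, of weight $r^{-1+s}$, is controlled by the $r^{4s}$-weighted bulk of $\mathcal{E}_{<\alpha, 4s}$ after writing $V^2\psi = V(V\psi)$ and exploiting the $\partial_v(rV\Gamma^{<\alpha}\varphi)$ form of the bulk; the $V$ and $U$ pieces use $\mathcal{P}_{<\alpha, 1+\eta_0}$ together with already-controlled lower-order fluxes; the $\Box$ piece is reinserted recursively via the wave equation. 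After \cref{rearrangement-formula} reduces all orderings to a single representative of $\Gamma^\alpha$, summation produces \cref{E-alpha-bound-higher}.

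For the weighted commuted estimate \cref{E-alpha-p-bound-statement}, I would repeat the Step~3 argument with the $r^p$-multiplier superimposed on $W$. Each error term then carries an extra $r^p$ factor, and the key constraint comes from the $V^2$-piece, whose weight becomes $r^{p-1}\mathfrak{g}_\alpha^2 \sim r^{p-1+2C_\alpha s}$; this is absorbable by the $r^{p+2C_\alpha s}$-bulk of $\mathcal{E}_{<\alpha, p+2C_\alpha s}$ precisely when $p\le 2-\eta_0-C_\alpha s$, which fixes the constants in the statement. Dyadic iteration on $\tau$ then converts the $r^p$-bulk to $\tau^p$-decay. The main obstacle is the combinatorial bookkeeping required to verify that every summand in $[\Box, L]$ falls into one of the three harmless categories; in particular, the argument depends crucially on the sharper weight $r^{-1}\mathfrak{g}_\alpha$ (not $r^0$) in front of the $V^2$-error in \cref{wave-comm-formula}, and on the separation of the $U$-error (which gains the redshift sign) from the $V$- and $V^2$-errors. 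This ``reductive structure'' of ingredient~(4) of \cref{sec:proof-ingredients} is the technical heart of the proof, and once it is extracted the induction closes without any smallness assumption on the data.
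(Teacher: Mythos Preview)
Your overall strategy matches the paper's: the same multiplier $W=c_TT+c_XX+c_YY$, the same three-way classification of $[\Box,L]$-errors (redshift-signed top order, large-$r$ absorbable top order, and strictly lower order), and the same $r^p$ hierarchy with dyadic iteration. The structural use of \cref{wave-comm-formula} and the role of $\Rc$ are described correctly.

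There is, however, one genuine gap. Your error classification addresses only the commutator $[\Box,L]\varphi$, but the basic energy identity itself produces an additional \emph{quartic} bulk term that has nothing to do with commutation: from the Kodama identity one gets $K^T[\psi]=r[-(D_u\psi)^2(D_v\varphi)^2+(D_u\varphi)^2(D_v\psi)^2]$, which vanishes for $\psi=\varphi$ but for $\psi=L\varphi$ leaves the uncontrolled term $\iint r(U\varphi)^2(\partial_v L\varphi)^2$. This is a top-order-in-$L\varphi$ nonlinear error with no smallness, no good sign, and no extra $r$-decay; it does not fit into any of your three categories. The paper treats it by (i) introducing a fourth ``irregular'' multiplier $Z=e^{-c(u-v)}\partial_v$ to manufacture a coercive near-horizon bulk (this is where $\mathcal{P}_{<\alpha,1+\eta_0}$ actually enters, via $(U\varphi)^2\lesssim\tau^{-1-\eta_0}\mathcal{P}_{U,1+\eta_0}^2$), and (ii) closing the residual integral $\int(u^{-1-\eta_0}+e^{-c_\nu u})E[L\varphi,C^{\mathrm{out}}(u)]\,du$ by a Gr\"onwall argument in $u$. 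Without this mechanism the induction does not close, and your sketch as written would only prove the linear statement on a fixed background.
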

\begin{proof}
The zeroth order estimate \cref{E-alpha-bound-0} follows from
\cref{energy-norm-relation}, \cref{energy-on-data}, and \cref{eb:no-bulk} from
\cref{energy-estimate}. The higher order estimate \cref{E-alpha-bound-higher} is
established in \cref{E-alpha-bound-2}. The boundedness statements
\cref{E-0-p-bound-statement,E-alpha-p-bound-statement} for the weighted energy are
proved in \cref{lem:E-alpha-p-bound}.
\end{proof}
\subsection{Hardy inequalities}
\label{sec:org4303dab}
Both of these lemmas and their proofs can be found in \cite[\S{}8.8]{luk-oh-scc1}.
\begin{lemma}
Let \(a\in \R\), \(1\le u_1<u_2\), and \(1\le v_1<v_2\). For any \(C^1\) function \(f\)
on \(\set{(u,v) : v\in [v_1,v_2]}\), we have
\begin{equation}
\begin{split}
& a\int_{v_1}^{v_2} r^a\lambda{}f^2(u,v)\dd{}v + \int_{v_1}^{v_2} \frac{r^a}{\lambda{}}(\partial{}_v(rf))^2(u,v)\dd{}v + r^{1+a}f^2(u,v_1) \\
&= \int_{v_1}^{v_2} \frac{r^{2+a}}{\lambda{}}(\partial{}_vf)^2(u,v)\dd{}v + r^{1+a}f^2(u,v_2).
\end{split}
\end{equation}
For any \(C^1\) function \(f\)
on \(\set{(u,v) : u\in [u_1,u_2]}\), we have
\begin{equation}
\begin{split}
& a\int_{u_1}^{u_2} r^a(-\nu{})f^2(u,v)\dd{}u + \int_{u_1}^{u_2} \frac{r^a}{(-\nu{})}(\partial{}_u(rf))^2(u,v)\dd{}u + r^{1+a}f^2(u_2,v) \\
&= \int_{u_1}^{u_2} \frac{r^{2+a}}{(-\nu{})}(\partial{}_uf)^2(u,v)\dd{}u + r^{1+a}f^2(u_1,v).
\end{split}
\end{equation}
\label{hardy-type-inequality}
\end{lemma}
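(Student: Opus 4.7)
\medskip

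The plan is to establish the $v$-direction identity by a direct integration-by-parts argument; the $u$-direction identity will follow by a completely symmetric computation after replacing $\lambda = \partial_v r$ with $\nu = \partial_u r$ (which will produce a sign flip absorbed into $(-\nu)$).

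First I would expand the square using $\partial_v r = \lambda$:
\begin{equation*}
(\partial_v(rf))^2 = (\lambda f + r\,\partial_v f)^2 = \lambda^2 f^2 + 2\lambda r f\,\partial_v f + r^2 (\partial_v f)^2.
\end{equation*}
Dividing by $\lambda$ and multiplying by $r^a$ gives
\begin{equation*}
\frac{r^a}{\lambda}(\partial_v(rf))^2 = r^a \lambda f^2 + 2 r^{a+1} f\,\partial_v f + \frac{r^{a+2}}{\lambda}(\partial_v f)^2.
\end{equation*}

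Next I would handle the cross term via the fundamental theorem of calculus, using
\begin{equation*}
\partial_v\bigl(r^{a+1} f^2\bigr) = (a+1)\,r^a \lambda f^2 + 2 r^{a+1} f\,\partial_v f,
\end{equation*}
so that integrating from $v_1$ to $v_2$ yields
\begin{equation*}
\int_{v_1}^{v_2} 2 r^{a+1} f\,\partial_v f\,\dd v = r^{a+1} f^2(u,v_2) - r^{a+1} f^2(u,v_1) - (a+1)\int_{v_1}^{v_2} r^a \lambda f^2\,\dd v.
\end{equation*}

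Finally, I would integrate the expansion of $\frac{r^a}{\lambda}(\partial_v(rf))^2$ in $v$, substitute the identity above, and rearrange the $(a+1)\int r^a \lambda f^2$ term to combine with the $\int r^a \lambda f^2$ term already present, producing the net coefficient $-a \int r^a \lambda f^2$ on one side. Moving this to the left and the boundary term $r^{a+1} f^2(u,v_2)$ to the right yields the claimed identity. The $u$-identity is obtained identically, using $\partial_u r = \nu$ and noting that the sign of $\nu$ is absorbed into the $(-\nu)$ appearing in the denominators (the squares $\nu^2 = (-\nu)^2$ and the cross term's sign is handled by the fundamental-theorem step, which is insensitive to the sign of $\partial_u r$).

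There is no real obstacle here: the lemma is a routine Hardy-type identity (an exact equality, not an inequality, with no error terms), and its proof amounts to expanding a square and applying the fundamental theorem of calculus to the cross term. The only mild care needed is bookkeeping of the boundary terms at $v_1$ and $v_2$ (respectively $u_1$ and $u_2$) so that the $f^2(u,v_1)$ term ends up on the left with the correct sign.
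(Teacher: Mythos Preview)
Your proof is correct: expanding $(\partial_v(rf))^2$, dividing by $\lambda$, and integrating the cross term via $\partial_v(r^{a+1}f^2)$ gives exactly the claimed identity, and the $u$-direction is indeed symmetric with $(-\nu)$ in place of $\lambda$. The paper does not supply its own proof here but simply cites \cite[\S 8.8]{luk-oh-scc1}; your argument is the standard one and is precisely what that reference contains.
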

\begin{lemma}
Let \(a\in \R\), \(1\le u_1<u_2\), and \(1\le v_1<v_2\). For any \(C^1\) function \(f\)
on \(\set{(u,v) : v\in [v_1,v_2]}\), we have
\begin{equation}
\begin{split}
&\frac{(a+1)^2}{4}\int_{v_1}^{v_2} r^a\lambda{}f^2(u,v)\dd{}v + \int_{v_1}^{v_2} \frac{r^a}{\lambda{}}\Bigl(r\partial{}_vf + \frac{a+1}{2}\lambda{}f\Bigr)^2(u,v)\dd{}v + \frac{a+1}{2}r^{1+a}f^2(u,v_1) \\
&= \int_{v_1}^{v_2} \frac{r^{2+a}}{\lambda{}}(\partial{}_vf)^2(u,v)\dd{}v + \frac{a+1}{2}r^{1+a}f^2(u,v_2).
\end{split}
\end{equation}
For any \(C^1\) function \(f\)
on \(\set{(u,v) : u\in [u_1,u_2]}\), we have
\begin{equation}
\begin{split}
&\frac{(a+1)^2}{4}\int_{u_1}^{u_2} r^a(-\nu{})f^2(u,v)\dd{}u + \int_{u_1}^{u_2} \frac{r^a}{(-\nu{})}\Bigl(r\partial{}_uf + \frac{a+1}{2}(-\nu{})f\Bigr)^2(u,v)\dd{}u + \frac{a+1}{2}r^{1+a}f^2(u_2,v) \\
&= \int_{u_1}^{u_2} \frac{r^{2+a}}{(-\nu{})}(\partial{}_uf)^2(u,v)\dd{}u + \frac{a+1}{2}r^{1+a}f^2(u_1,v).
\end{split}
\end{equation}
\label{hardy-inequality}
\end{lemma}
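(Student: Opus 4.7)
The identity is purely algebraic, and my strategy parallels the proof of \cref{hardy-type-inequality}: expand the perfect square on the left-hand side and perform a single integration by parts. Since the conclusion is an equality (rather than an inequality), no estimation is required at any step.

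My first step is to expand
\[
\int_{v_1}^{v_2} \frac{r^a}{\lambda}\Bigl(r\partial_v f + \tfrac{a+1}{2}\lambda f\Bigr)^2\dd v = \int_{v_1}^{v_2}\frac{r^{a+2}}{\lambda}(\partial_v f)^2\dd v + (a+1)\int_{v_1}^{v_2} r^{a+1} f\partial_v f\dd v + \frac{(a+1)^2}{4}\int_{v_1}^{v_2} r^a \lambda f^2\dd v.
\]
The first summand already matches the bulk term on the right-hand side; the third summand is a bulk $\int r^a\lambda f^2$ contribution that will cancel later. Only the cross term $(a+1)\int r^{a+1}f\partial_v f\dd v$ needs work.

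Next I integrate the cross term by parts using $f\partial_v f = \tfrac12\partial_v(f^2)$ together with $\partial_v(r^{a+1}) = (a+1)r^a\lambda$ (since $\partial_v r = \lambda$), which produces
\[
(a+1)\int_{v_1}^{v_2} r^{a+1}f\partial_v f\dd v = \tfrac{a+1}{2}\bigl[r^{a+1}f^2\bigr]_{v_1}^{v_2} - \tfrac{(a+1)^2}{2}\int_{v_1}^{v_2} r^a\lambda f^2\dd v.
\]
Summing the three bulk contributions $\tfrac{(a+1)^2}{4}\int r^a\lambda f^2$ (from the original LHS), $\tfrac{(a+1)^2}{4}\int r^a\lambda f^2$ (from the square expansion), and $-\tfrac{(a+1)^2}{2}\int r^a\lambda f^2$ (from integration by parts) gives exactly zero; combining the boundary contribution $\tfrac{a+1}{2}[r^{a+1}f^2]_{v_1}^{v_2}$ with the LHS boundary term $\tfrac{a+1}{2}r^{a+1}f^2(u,v_1)$ produces precisely the RHS boundary term $\tfrac{a+1}{2}r^{a+1}f^2(u,v_2)$.

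The second identity (along constant-$v$ hypersurfaces) is proved by the same calculation with $\partial_u r = \nu$ replacing $\partial_v r = \lambda$; the factors of $(-\nu)$ already appearing in the statement absorb the sign of $\nu$, which is non-positive by assumption (4) of \cref{main-theorem}. There is no substantive obstacle, since the completed-square form of the LHS has been engineered precisely so that after the integration by parts all of the $\int r^a\lambda f^2$ bulk contributions cancel exactly. The only delicate point is the bookkeeping of the three bulk $\int r^a\lambda f^2$ terms, which must combine with the correct signs; once this is verified, the identity follows immediately.
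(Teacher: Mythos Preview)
Your expand-and-integrate-by-parts argument for the $v$-identity is correct and is exactly the standard proof; the paper itself gives no argument here, merely citing \cite[\S 8.8]{luk-oh-scc1}.

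However, your claim that the $u$-identity ``is proved by the same calculation'' with the sign of $\nu$ ``absorbed'' by the factors of $(-\nu)$ glosses over a real discrepancy. When you actually carry out the integration by parts in the $u$-direction, $\partial_u(r^{a+1}) = (a+1)r^a\nu = -(a+1)r^a(-\nu)$ carries the \emph{opposite} sign to the $v$-case, so the cross term produces
\[
(a+1)\int_{u_1}^{u_2} r^{a+1}f\partial_u f\,\dd u = \tfrac{a+1}{2}\bigl[r^{a+1}f^2\bigr]_{u_1}^{u_2} + \tfrac{(a+1)^2}{2}\int_{u_1}^{u_2} r^a(-\nu)f^2\,\dd u,
\]
with a $+$ rather than a $-$ in front of the bulk integral. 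The three bulk contributions then sum to $(a+1)^2\int r^a(-\nu)f^2$ instead of zero, and the identity as printed fails (one checks directly that $\text{LHS}-\text{RHS}=2(a+1)\int r^{a+1}f\partial_u f\,\dd u$). This traces to a sign typo in the stated $u$-identity: the completed square should read $r\partial_u f - \tfrac{a+1}{2}(-\nu)f$ (equivalently $r\partial_u f + \tfrac{a+1}{2}\nu f$), consistent with the $\partial_u(rf)=r\partial_u f + \nu f$ appearing in \cref{hardy-type-inequality}. With that correction your argument goes through verbatim, but the sign is not simply ``absorbed'' as you assert.
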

\subsection{Energy quantities}
\label{sec:orgc72defb}
In order to state the energy boundedness estimate, we introduce notation for
energy quantities along null curves and over spacetime regions.
\subsubsection{Energy along piecewise null curves}
\label{sec:org6d981b3}
\label{sec:energy-along-null-curve}
For \(\Sigma{}\) a null curve, we define
\begin{equation}
E[\psi{},\Sigma{}] \coloneqq{} \begin{cases}
\vspace{1ex}\displaystyle\int_\Sigma{} \dfrac{r^2}{\kappa{}}(\partial_v\psi{})^2\dd{}v & \text{if }u\text{ is constant on }\Sigma, \\
\displaystyle\int_\Sigma{} \dfrac{r^2}{(-\nu{})}(\partial{}_u\psi{})\dd{}u & \text{if }v\text{ is constant on }\Sigma. \\
\end{cases}
\end{equation}
We extend the definition to piecewise null curves \(\Sigma{}\) in the natural way: if
\(\Sigma{} = \bigcup_{i}\Sigma_i\) for null curves \(\Sigma_i\) such that
\(\Sigma_i\cap \Sigma_j\) consists of at most one point when \(i\neq{}j\), then
we set
\begin{equation}
E[\psi{},\Sigma{}] \coloneqq{} \sum_{i}E[\psi{},\Sigma_i].
\end{equation}
\begin{remark}
The energy norm \(\mathcal{E}[\psi{}]\) defined in \cref{sec:norms:energy} is related to
\(E[\psi{},\Sigma{}]\) as follows:
\begin{equation}
\mathcal{E}[\psi{}]^2 = \sup_\Sigma{} E[\psi{},\Sigma{}],
\end{equation}
where the supremum is taken over all null curves \(\Sigma{}\subset \set{u,v\ge 1}\).
\label{energy-norm-relation}
\end{remark}
\begin{lemma}[Energy on initial data]
Let \(p\in C^{\mathrm{in}}\cup C^{\mathrm{out}}\). Then
\begin{equation}
E[\psi{},C^{\mathrm{in}}\cup C^{\mathrm{out}}] + r\psi{}^2(p)\le C(r_{\mathrm{min}},\varpi{}_i,R_0)\mathfrak{D}_0[\psi{}]^2.
\end{equation}
\label{energy-on-data}
\end{lemma}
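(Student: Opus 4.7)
The plan is to handle $C^{\mathrm{in}}$ and $C^{\mathrm{out}}$ separately, converting the control of the gauge invariant quantities appearing in $\mathfrak{D}_0[\psi]$ into the required bounds on the coordinate-expressed quantities $\partial_u\psi$ and $\partial_v\psi$. The estimate is quantitative but not deep; the only mild subtlety is the $C^{\mathrm{out}}$ contribution, where one must exploit both summands in $\mathfrak{D}_0[\psi]$ to extract enough $r$-decay for integrability at null infinity.

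\emph{Step 1 ($E[\psi,C^{\mathrm{in}}]$).} On $C^{\mathrm{in}} = [1,\infty)_u \times \{1\}_v$, we have $\partial_u\psi = (-\nu)\, U\psi$, so
\[
E[\psi,C^{\mathrm{in}}] = \int_1^\infty \frac{r^2}{(-\nu)}(\partial_u\psi)^2 \, du = \int_1^\infty r^2(-\nu)(U\psi)^2 \, du.
\]
Changing variables via $dr = \nu\, du = -(-\nu)\, du$, this becomes $\int_{r_{\mathrm{min}}}^{R_0} r^2 (U\psi)^2\, dr$, and the bound $|U\psi| \le \mathfrak{D}_0[\psi]$ on $C^{\mathrm{in}}$ (which is one of the terms in $\sup_{C^{\mathrm{in}}}|U^{\le 1}\psi|$) yields $E[\psi,C^{\mathrm{in}}]\le C(R_0)\mathfrak{D}_0[\psi]^2$.

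\emph{Step 2 ($E[\psi,C^{\mathrm{out}}]$).} On $C^{\mathrm{out}} = \{1\}_u \times [1,\infty)_v$ we have $\overline{\partial}_r = \lambda^{-1}\partial_v$ and $\lambda = \kappa(1-\mu)$, so changing variables to $r$ gives
\[
E[\psi,C^{\mathrm{out}}] = \int_{R_0}^\infty r^2(1-\mu)(\overline{\partial}_r\psi)^2 \, dr.
\]
The key identity is $r\,\overline{\partial}_r\psi = \overline{\partial}_r(r\psi) - \psi$. From the $C^{\mathrm{out}}$ piece of $\mathfrak{D}_0[\psi]$ we have both $|r\psi|\le \mathfrak{D}_0[\psi]$ (so $|\psi|\le r^{-1}\mathfrak{D}_0[\psi]$) and $|r^2\overline{\partial}_r(r\psi)|\le \mathfrak{D}_0[\psi]$ (so $|\overline{\partial}_r(r\psi)|\le r^{-2}\mathfrak{D}_0[\psi]$). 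Combining with $R_0\ge 1$, we obtain $r^2(\overline{\partial}_r\psi)^2 \le C r^{-2}\mathfrak{D}_0[\psi]^2$; together with $1-\mu \le C$ from \eqref{mu-estimate-weak}, this makes the integrand integrable and gives $E[\psi,C^{\mathrm{out}}] \le C R_0^{-1}\mathfrak{D}_0[\psi]^2$.

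\emph{Step 3 (pointwise term).} For $p \in C^{\mathrm{in}}$ we have $r(p)\le R_0$ and $|\psi(p)|\le \mathfrak{D}_0[\psi]$ directly (taking $n=0$ in $|U^{\le n+1}\psi|$), so $r\psi^2(p)\le R_0\mathfrak{D}_0[\psi]^2$. For $p \in C^{\mathrm{out}}$, write $r\psi^2 = (r\psi)^2/r$ and use $|r\psi(p)|\le \mathfrak{D}_0[\psi]$ together with $r(p)\ge r_{\mathrm{min}}$ to get $r\psi^2(p)\le r_{\mathrm{min}}^{-1}\mathfrak{D}_0[\psi]^2$. Summing the contributions from Steps 1--3 yields the claimed estimate.
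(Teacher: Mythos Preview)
Your proof is correct and follows essentially the same approach as the paper, which merely says the result is ``immediate by the definitions, \cref{mu-estimate-weak}, and a change of variables.'' You have filled in exactly those details: converting to the $r$-variable on each piece, using the two summands of $\mathfrak{D}_0[\psi]$ on $C^{\mathrm{out}}$ to secure $r^{-2}$ decay of the integrand, and invoking $1-\mu\le C$ from \cref{mu-estimate-weak}.
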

\begin{proof}
The proof is immediate by the definitions, \cref{mu-estimate-weak}, and a change of
variables.
\end{proof}
\subsubsection{\texorpdfstring{\(r^p\)}{rp}-weighted energy}
\label{sec:org73afe5b}
\label{sec:rp-weighted-energy}
For \(\tau{}_0\ge 1\), define the level set \(\Sigma_{\tau{}_0}\coloneqq{}\set{p : \tau{}(p) = \tau{}_0}\), where
\(\tau{}\) was defined in \cref{tau-def}. Then \(\Sigma_\tau{}\) is piecewise null, with null pieces
\begin{equation}
\Sigma{}_\tau^{\text{in}} \coloneqq{} \Sigma_\tau{}\cap \set{r\le R_0} = \set{(u,v_{R_0}(u)) : u\ge \tau{}}\qquad \Sigma{}_\tau^{\text{out}} \coloneqq{} \Sigma{}_\tau{}\cap
\set{r\le R_0} = \set{(\tau{},v) : v\ge v_{R_0}(\tau{})}.
\end{equation}
\begin{figure}
    \centering
    \vspace{-4ex}
    \def\svgwidth{0.5\columnwidth}
\begingroup%
  \makeatletter%
  \providecommand\color[2][]{%
    \errmessage{(Inkscape) Color is used for the text in Inkscape, but the package 'color.sty' is not loaded}%
    \renewcommand\color[2][]{}%
  }%
  \providecommand\transparent[1]{%
    \errmessage{(Inkscape) Transparency is used (non-zero) for the text in Inkscape, but the package 'transparent.sty' is not loaded}%
    \renewcommand\transparent[1]{}%
  }%
  \providecommand\rotatebox[2]{#2}%
  \newcommand*\fsize{\dimexpr\f@size pt\relax}%
  \newcommand*\lineheight[1]{\fontsize{\fsize}{#1\fsize}\selectfont}%
  \ifx\svgwidth\undefined%
    \setlength{\unitlength}{158.40000343bp}%
    \ifx\svgscale\undefined%
      \relax%
    \else%
      \setlength{\unitlength}{\unitlength * \real{\svgscale}}%
    \fi%
  \else%
    \setlength{\unitlength}{\svgwidth}%
  \fi%
  \global\let\svgwidth\undefined%
  \global\let\svgscale\undefined%
  \makeatother%
  \begin{picture}(1,1.0454545)%
    \lineheight{1}%
    \setlength\tabcolsep{0pt}%
    \put(0,0){\includegraphics[width=\unitlength,page=1]{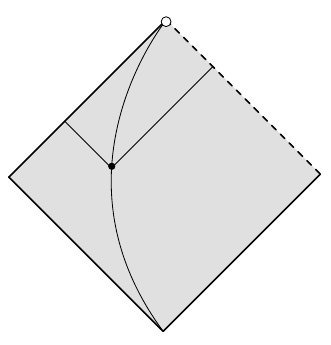}}%
    \put(0.34932987,0.49904631){\makebox(0,0)[lt]{\lineheight{1.25}\smash{\begin{tabular}[t]{l}$(\tau,v_{R_0}(\tau))$\end{tabular}}}}%
    \put(0.36402902,0.37068495){\makebox(0,0)[lt]{\lineheight{1.25}\smash{\begin{tabular}[t]{l}$\set{r=R_0}$\end{tabular}}}}%
    \put(0.47640298,0.61806898){\makebox(0,0)[lt]{\lineheight{1.25}\smash{\begin{tabular}[t]{l}$\Sigma_\tau$\end{tabular}}}}%
  \end{picture}%
\endgroup%

    \caption{The foliation \(\Sigma_\tau\).}
    \label{fig:rp}
\end{figure}
We introduce the \(r^p\)-weighted energy flux associated to the foliation
\(\Sigma_\tau{}\):
\begin{equation}
E_p[\psi{}](\tau{}) \coloneqq{} \int _{\Sigma{}_\tau^{\text{in}}}\frac{r^2}{(-\nu{})}(\partial{}_u\psi{})^2\dd{}u +  \int _{\Sigma{}_\tau^{\mathrm{out}}}\frac{r^p}{\lambda{}}(\partial{}_v(r\psi{}))^2\dd{}v + \abs{\psi{}}^2(\tau{},v_{R_0}(\tau{})).
\end{equation}
We have the following preliminary estimate comparing \(E_p[\psi{}](\tau{})\) and the
energy quantities \(E[\psi{},\Sigma{}_\tau{}]\) (defined in \cref{sec:energy-estimate-proof}).
\begin{lemma}
For any \(p\ge 0\) and \(1\ge \tau{}\), we have
\begin{equation}
E[\psi{},\Sigma{}_\tau{}]\le C(\mathfrak{b}_0)E_p[\psi{}](\tau{}).
\end{equation}
\label{rp-unweighted-energy-bound}
\end{lemma}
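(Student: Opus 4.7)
The incoming contribution $\int_{\Sigma_\tau^{\text{in}}} r^2(-\nu)^{-1}(\partial_u\psi)^2\,du$ to $E[\psi,\Sigma_\tau]$ agrees by definition with the corresponding piece of $E_p[\psi](\tau)$, so the lemma reduces to bounding the outgoing contribution $\int_{\Sigma_\tau^{\text{out}}} r^2\kappa^{-1}(\partial_v\psi)^2\,dv$. The relation $\kappa = \lambda/(1-\mu)$ together with the upper bound $1-\mu \le C(\mathfrak{b}_0)$ supplied by \cref{mu-estimate-weak} and \cref{zeroth-order-schematic-control} lets me replace $r^2/\kappa$ by $C(\mathfrak{b}_0)\cdot r^2/\lambda$ in the integrand.

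The crux is an application of \cref{hardy-type-inequality} with $a = 0$ and $f = \psi$, integrated from $v_1 = v_{R_0}(\tau)$ to some finite $v_2 > v_1$ along the outgoing null segment $\{u = \tau\}$. Since the $a\!\int r^a\lambda f^2$ term vanishes for $a=0$, the identity reads
\begin{equation*}
\int_{v_{R_0}(\tau)}^{v_2} \frac{(\partial_v(r\psi))^2}{\lambda}\,dv + R_0\, \psi^2(\tau, v_{R_0}(\tau)) = \int_{v_{R_0}(\tau)}^{v_2} \frac{r^2(\partial_v\psi)^2}{\lambda}\,dv + r\psi^2(\tau, v_2),
\end{equation*}
where I used $r(\tau, v_{R_0}(\tau)) = R_0$. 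Discarding the nonnegative far-boundary term $r\psi^2(\tau, v_2)$ and sending $v_2 \to \infty$ produces the one-sided bound
\begin{equation*}
\int_{\Sigma_\tau^{\text{out}}} \frac{r^2(\partial_v\psi)^2}{\lambda}\,dv \le \int_{\Sigma_\tau^{\text{out}}} \frac{(\partial_v(r\psi))^2}{\lambda}\,dv + R_0\psi^2(\tau, v_{R_0}(\tau)).
\end{equation*}

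To finish, I exploit $r \ge R_0$ on $\Sigma_\tau^{\text{out}}$ to insert the missing $r^p$-weight via $1 \le R_0^{-p} r^p$; the constants $R_0$ and $R_0^{-p}$ are absorbed into $C(\mathfrak{b}_0)$ since $R_0$ has been fixed in terms of $\varpi_i$ alone (see \cref{zeroth-order-geometric-bounds}). The remaining boundary term $R_0\psi^2(\tau, v_{R_0}(\tau))$ is exactly controlled by the last summand in the definition of $E_p[\psi](\tau)$, yielding the desired inequality. There is no serious obstacle: the whole argument is essentially one invocation of the Hardy identity, and the only substantive observation is that it produces a one-sided bound because the nonnegative boundary term at $v_2 \to \infty$ can be freely dropped, while the passage from the $1/\lambda$ weight to the $r^p/\lambda$ weight is trivial on the outgoing leg.
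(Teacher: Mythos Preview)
Your proof is correct and follows essentially the same approach as the paper's: reduce to the outgoing leg, apply the Hardy-type identity (\cref{hardy-type-inequality}) with $a=0$ and $v_1 = v_{R_0}(\tau)$, discard the nonnegative boundary term at $v_2\to\infty$, and absorb the $r^p$-weight using $r\ge R_0$ on $\Sigma_\tau^{\mathrm{out}}$. The paper phrases the last step as ``it is enough to consider $p=0$,'' which is the same observation.
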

\begin{proof}
It is enough to consider the case \(p = 0\),
which is an immediate consequence of a Hardy type inequality
(\cref{hardy-type-inequality} with \(a = 0\) and \(v_1 = v_{R_0}(\tau{})\) in the
limit \(v_2\to \infty\)).
\end{proof}
\begin{lemma}
For \(p < 3\), we have
\begin{equation}
E_p[\psi{}](1)\le C(p,R_0)\mathfrak{D}_0[\psi{}]^2.
\end{equation}
\label{rp:initial-energy-bound}
\end{lemma}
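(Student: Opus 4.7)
The plan is to observe that at $\tau = 1$ the level set $\Sigma_1$ coincides with the initial data $C^{\mathrm{in}} \cup C^{\mathrm{out}}$ (meeting at the vertex $(1,1)$ where $r = R_0$), so the estimate reduces to bounding each piece of $E_p[\psi](1)$ by the gauge-invariant data norm $\mathfrak{D}_0[\psi]$. I will handle the ingoing flux, the outgoing flux, and the pointwise boundary term at the vertex separately. The condition $p < 3$ will appear only in the large-$r$ integral along $C^{\mathrm{out}}$.

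For the ingoing piece $\int_{C^{\mathrm{in}}} \frac{r^2}{(-\nu)}(\partial_u\psi)^2 \dd u$, I would change variable from $u$ to $r$ using $\dd r = \nu \dd u$, so that $\dd u = -\dd r/(-\nu)$, and rewrite $(\partial_u\psi)^2 = \nu^2 (U\psi)^2$ using $U = \frac{1}{-\partial_u r}\partial_u$. The integral becomes $\int_{r_{\mathrm{min}}}^{R_0} r^2 (U\psi)^2 \dd r$, which is bounded by $R_0^3 \sup_{C^{\mathrm{in}}} |U\psi|^2 \le R_0^3 \mathfrak{D}_0[\psi]^2$, using that $U\psi$ is one of the quantities controlled by $\mathfrak{D}_0$.

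For the outgoing piece $\int_{C^{\mathrm{out}}} \frac{r^p}{\lambda}(\partial_v(r\psi))^2 \dd v$, I would again change variable to $r$ via $\dd v = \dd r/\lambda$ and use that on $C^{\mathrm{out}}$ (where $u$ is constant) we have $\partial_v(r\psi) = \lambda \,\overline{\partial}_r(r\psi)$. The integral becomes $\int_{R_0}^{\infty} r^p (\overline{\partial}_r(r\psi))^2 \dd r$. By the definition of the data norm, $|r^2 \overline{\partial}_r(r\psi)| \le \mathfrak{D}_0[\psi]$ on $C^{\mathrm{out}}$, so the integrand is bounded by $r^{p-4}\mathfrak{D}_0[\psi]^2$, yielding
\[
\int_{R_0}^{\infty} r^{p-4} \dd r \cdot \mathfrak{D}_0[\psi]^2 = \frac{R_0^{p-3}}{3-p}\mathfrak{D}_0[\psi]^2,
\]
which is finite precisely when $p < 3$. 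This is the only place the hypothesis enters.

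Finally, for the boundary term $|\psi|^2(1,1)$ at the vertex, I would note that the vertex lies on $C^{\mathrm{out}}$ where $r = R_0$, and $\mathfrak{D}_0[\psi]$ controls $\sup_{C^{\mathrm{out}}} |r\psi|$ (the $0$-derivative case of $(r^2 \overline{\partial}_r)^{\le 1}(r\psi)$), so $|\psi(1,1)| \le R_0^{-1} \mathfrak{D}_0[\psi]$. Combining the three bounds and absorbing all $R_0$-dependence into $C(p, R_0)$ gives the claim. There is no real obstacle here; the only subtlety is the change of variable from null coordinates to $r$ and keeping track of the gauge-invariant weights, with the divergence at $p = 3$ being sharp for the given data norm.
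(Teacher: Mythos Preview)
Your proof is correct and follows essentially the same approach as the paper: change variables from $u$ to $r$ on $C^{\mathrm{in}}$ and from $v$ to $r$ on $C^{\mathrm{out}}$, bound each piece pointwise by $\mathfrak{D}_0[\psi]$, and note that $p<3$ is needed for the convergence of $\int_{R_0}^\infty r^{p-4}\dd r$. The paper's proof is just a one-line compressed version of exactly what you wrote.
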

\begin{proof}
This is immediate from the definitions and a change of variables:
\begin{equation}
E_p[\psi{}](1) = \int_{r_{\text{min}}}^{R_0} r^2(U\psi{})^2\dd{}r + \int_{R_0}^\infty r^p(\overline{\partial}_r(r\psi{}))^2\dd{}r + \abs{\psi{}}^2(1,1)\le R_0^2\mathfrak{D}_0[\psi{}]^2 + \mathfrak{D}_0[\psi{}]^2 +  \mathfrak{D}_0[\psi{}]^2\int_{R_0}^\infty r^{p-4}\dd{}r.
\end{equation}
The condition \(p < 3\) is used to ensure that the final integral is finite.
\end{proof}
\subsubsection{Bulk energy over spacetime region}
\label{sec:org89429b4}
\label{sec:bulk-energy} Let \(\mathcal{R}\) be a spacetime region. Define a bulk
energy quantity:
\begin{equation}
\begin{split}
E_{\mathrm{bulk}}[\psi{},\mathcal{R}] &\coloneqq{}\iint_{\mathcal{R}}  \frac{1}{r^{1+\eta{}_0}}\Bigl[\frac{1}{(-\nu{})^2}(\partial{}_u\psi{})^2 + \frac{1}{\kappa{}^2}(\partial{}_v\psi{})^2 + \frac{\psi{}^2}{r^2}\Bigr]r^{2}\kappa{}(-\nu{})\dd{}u\dd{}v \\
&\qquad  + \iint_{\mathcal{R}\cap \set{r\le R_0}} e^{-c_{\nu{}}'(u-v)} (\partial{}_v\psi{})^2\dd{}u\dd{}v,
\end{split}
\end{equation}
where \(\eta_0\) is our global small constant (see \cref{sec:notation:constants}) and
\(c_\nu'\) is as in \cref{nu-exponential-bound}.
\subsection{Vector field multiplier identities}
\label{sec:orgaaf7514}
\label{sec:vector-field-multiplier-identities}
In this section we record the vector field multiplier identities used to derive
an energy estimate in \cref{sec:energy-estimates}. We introduce the notation
\begin{equation}\label{Du-Dv-notation}
D_u = U = \frac{1}{(-\nu{})}\partial{}_u\qquad D_v = \frac{1}{\kappa{}}\partial{}_v.
\end{equation}
The vector field multipliers we use are
\begin{align}
T &=  (1-\mu{})D_u + D_v, & \text{(Kodama vector field)} \\
X &= f(r)((1-\mu{})D_u - D_v), &\text{(Morawetz vector field)} \\
Y &= \chi{}_{\mathcal{H}}(r)D_u, & \text{(Redshift vector field)} \\
Z &=  g(u,v)\partial{}_v. &\text{(Irregular vector field)}
\end{align}
\begin{lemma}[Vector field multiplier identity]
Let \(W = W^uD_u + W^vD_v = (-\nu{})^{-1}W^u\partial_u + \kappa^{-1}W^v\partial_v\). Then
\begin{equation}
\partial{}_u(W^v(D_v\psi{})^2r^2\kappa{}) + \partial{}_v(W^u(D_u\psi{})^2r^2(-\nu{})) = [2W\psi{}\Box{}\psi{} + K^W[\psi{}]]r^2\kappa{}(-\nu{}),
\end{equation}
where
\begin{equation}
K^W[\psi{}] \coloneqq{} (-\nu{})D_v((-\nu{})^{-1}W^u)(D_u\psi{})^2 + \kappa{}D_u(\kappa{}^{-1}W^v)(D_v\psi{})^2 + \frac{2}{r}(-W^u + (1-\mu{})W^v)D_u\psi{}D_v\psi{}.
\end{equation}
\label{vf-multiplier}
\end{lemma}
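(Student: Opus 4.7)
\step{Proof proposal for \cref{vf-multiplier}.} The identity is a direct computation, analogous to the standard derivation of the divergence of the energy current $\nabla^\mu(T_{\mu\nu}W^\nu)$, but performed directly on the reduced quotient $\mathcal{Q}$ using the expression for $\Box$ given in \cref{box-uv-gauge}. The plan is to expand each side, regroup terms, and identify matching pieces.

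First, I would fix the two fluxes $P^u \coloneqq W^v(D_v\psi)^2 r^2\kappa$ and $P^v\coloneqq W^u(D_u\psi)^2 r^2(-\nu)$ and compute $\partial_u P^u$ via the product rule. Writing $(D_v\psi)^2 = \kappa^{-2}(\partial_v\psi)^2$ to isolate the coordinate derivatives, applying $\partial_u$, and then converting back using $\partial_v\psi = \kappa D_v\psi$, I expect to obtain
\begin{equation*}
\partial_u P^u = r^2\kappa^2\,\partial_u(\kappa^{-1}W^v)(D_v\psi)^2 + 2rW^v\nu\kappa(D_v\psi)^2 + 2W^v r^2 D_v\psi\,\partial_u\partial_v\psi.
\end{equation*}
The key algebraic identity here is $\kappa\,\partial_u W^v - W^v\partial_u\kappa = \kappa^2\partial_u(\kappa^{-1}W^v)$, which matches the coefficient one wants after multiplying by $(-\nu)^{-1}\cdot(-\nu)$ and recognizing $\kappa D_u(\kappa^{-1}W^v)$. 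A symmetric computation yields
\begin{equation*}
\partial_v P^v = r^2(-\nu)^2\partial_v((-\nu)^{-1}W^u)(D_u\psi)^2 + 2rW^u\lambda(-\nu)(D_u\psi)^2 + 2W^u r^2 D_u\psi\,\partial_u\partial_v\psi,
\end{equation*}
where I would use $\partial_v(-\nu) = -\partial_v\nu$ when combining the $\partial_v W^u$ term with the piece from $\partial_v((D_u\psi)^2)$.

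Second, I would expand $2W\psi\,\Box\psi\,r^2\kappa(-\nu)$ using \cref{box-uv-gauge}:
\begin{equation*}
2W\psi\,\Box\psi\,r^2\kappa(-\nu) = 2W\psi\,r^2\partial_u\partial_v\psi + 2r\lambda\,W\psi\,\partial_u\psi + 2r\nu\,W\psi\,\partial_v\psi,
\end{equation*}
writing $W\psi = W^u D_u\psi + W^v D_v\psi$ and converting $\partial_u\psi = (-\nu)D_u\psi$, $\partial_v\psi = \kappa D_v\psi$. The $\partial_u\partial_v\psi$ piece matches the corresponding terms in $\partial_u P^u + \partial_v P^v$ exactly. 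The diagonal pieces $2r\lambda(-\nu)W^u(D_u\psi)^2$ and $2r\nu\kappa W^v(D_v\psi)^2$ cancel the second terms in $\partial_u P^u$ and $\partial_v P^v$. What remains is the cross term, which after using $\lambda = \kappa(1-\mu)$ and $\nu\kappa = -(-\nu)\kappa$ becomes
\begin{equation*}
2r\kappa(-\nu)\bigl[(1-\mu)W^v - W^u\bigr]D_u\psi D_v\psi = \tfrac{2}{r}r^2\kappa(-\nu)\bigl[-W^u + (1-\mu)W^v\bigr]D_u\psi D_v\psi,
\end{equation*}
which is precisely the cross term in $K^W[\psi]\cdot r^2\kappa(-\nu)$.

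This is a purely bookkeeping computation with no genuine conceptual obstacle; the only care required is in consistently tracking factors of $\kappa$, $(-\nu)$, and signs when converting between $(\partial_u,\partial_v)$ and $(D_u,D_v)$ and when expanding $\partial_u\kappa$ and $\partial_v(-\nu)$ with the correct signs. The ``hard part,'' if any, is simply ensuring that the sign of the cross term agrees at the end, since this is where the two relations $\lambda = \kappa(1-\mu)$ and $\nu = -(-\nu)$ interact.
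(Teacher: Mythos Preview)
Your approach is correct and essentially the same as the paper's: a direct computation matching the two sides of the identity via the expression for $\Box$ in \cref{box-uv-gauge}. The paper organizes the computation in the reverse direction: it starts from $r^2\kappa(-\nu)\,\partial_u\psi\,\Box\psi$, rewrites it as $\tfrac{1}{2}\partial_v(r^2(\partial_u\psi)^2)$ plus a cross term, multiplies by $\widetilde{W}^u=(-\nu)^{-1}W^u$, symmetrizes in $u\leftrightarrow v$, and then reads off $K^W$ after moving the total derivatives to the other side. Your flux-first organization is equivalent and arguably more transparent.

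One small bookkeeping slip at the very end: after you have matched the second-order and diagonal pieces, the left side $\partial_u P^u+\partial_v P^v$ contributes only the coefficient-derivative terms (the first two terms of $K^W$) and \emph{no} cross term. Thus the cross term you extract from $2W\psi\,\Box\psi\,r^2\kappa(-\nu)$ must be \emph{cancelled} by the cross term in $K^W\cdot r^2\kappa(-\nu)$, i.e.\ they should be negatives of one another, not equal. Your sentence ``which is precisely the cross term in $K^W$'' has the matching backwards. This is exactly the sign issue you flag as the only delicate point, so it does not affect the soundness of the method---just redo the final balance with the correct sign.
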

\begin{proof}
Note that \(W = \widetilde{W}^u\partial_u + \widetilde{W}^v\partial_v\) for
\((\widetilde{W}^u,\widetilde{W}^v) = ((-\nu{})^{-1}W^u,\kappa{}^{-1}W^v)\). Use
the wave equation in double null coordinates to compute
\begin{equation}
\begin{split}
r^2\kappa{}(-\nu{})\partial{}_u\psi{}\Box{}\psi{} &= r^2\partial{}_u\psi{}\Bigl(\partial{}_u\partial{}_v\psi{} + \frac{\partial{}_vr}{r}\partial{}_u\psi{} + \frac{\partial{}_ur}{r}\partial{}_v\psi{}\Bigr) =  r^2\partial{}_u\psi{}\partial{}_v\partial{}_u\psi{} + r\partial{}_vr(\partial{}_u\psi{})^2 + r\partial{}_ur\partial{}_u\psi{}\partial{}_v\psi{} \\
&= \frac{1}{2}\partial{}_v(r^2(\partial{}_u\psi{})^2) + r\partial{}_ur\partial{}_u\psi{}\partial{}_v\psi{}. \\
\end{split}
\end{equation}
Multiply by \(\widetilde{W}^u\) and commute \(W^{u}\) past \(\partial_{v}\) to arrive at
\begin{equation}\label{vf-multiplier-1}
r^2\kappa{}(-\nu{})\widetilde{W}^u\partial{}_u\psi{}\Box{}\psi{} = \frac{1}{2}\partial{}_v(\widetilde{W}^ur^2(\partial{}_u\psi{})^2) - \frac{1}{2}\partial{}_v\widetilde{W}^ur^2(\partial{}_u\psi{})^2 + r\widetilde{W}^u\partial{}_ur\partial{}_u\psi{}\partial{}_v\psi{}.
\end{equation}
Add \cref{vf-multiplier-1} to the result of exchanging \(u\) and \(v\) in
\cref{vf-multiplier-1} to obtain
\begin{equation}\label{vf-multiplier-2}
\begin{split}
r^2\kappa{}(-\nu{})W\psi{}\Box{}\psi{} &= \frac{1}{2}\partial{}_v(\widetilde{W}^ur^2(\partial{}_u\psi{})^2) + \frac{1}{2}\partial{}_u(\widetilde{W}^vr^2(\partial{}_v\psi{})^2) - \frac{1}{2}\partial{}_u\widetilde{W}^vr^2(\partial{}_v\psi{})^2 - \frac{1}{2}\partial{}_v\widetilde{W}^ur^2(\partial{}_u\psi{})^2 \\
&\qquad + (r\widetilde{W}^v\partial{}_vr + r\widetilde{W}^u\partial{}_ur)\partial{}_u\psi{}\partial{}_v\psi{}.
\end{split}
\end{equation}
Express \cref{vf-multiplier-2} in terms of \(D_u\) and \(D_v\), then rearrange to
obtain the desired result.
\end{proof}
\begin{lemma}[Kodama vector field identity]
Let \(T = (1-\mu{})D_u + D_v\). We have
\begin{equation}
\partial{}_u((D_v\psi{})^2r^2\kappa{}) + \partial{}_v((1-\mu{})(D_u\psi{})^2r^2(-\nu{})) = 2T\psi{}\Box{}\psi{} r^2\kappa{}(-\nu{})+ r[-(D_u\psi{})^2(D_v\varphi{})^2 + (D_u\varphi{})^2(D_v\psi{})^2]r^2\kappa{}(-\nu{}),
\end{equation}
\label{T-identity}
\end{lemma}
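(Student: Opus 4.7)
This identity is a direct specialization of Lemma \ref{vf-multiplier} to the Kodama vector field $T$, corresponding to the choice $W^u = 1-\mu$ and $W^v = 1$. With this choice, the cross term in $K^W[\psi]$ vanishes identically since $-W^u + (1-\mu)W^v = 0$, so only the two diagonal coefficient terms remain; it remains to verify that they evaluate to $-r(D_v\varphi)^2$ and $r(D_u\varphi)^2$ respectively.

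For the coefficient of $(D_v\psi)^2$, I expand $\kappa D_u(\kappa^{-1}) = -\kappa^{-1}(-\nu)^{-1}\partial_u\kappa$ and substitute the first Raychaudhuri equation $\partial_u\kappa = -r(-\nu)^{-1}(\partial_u\varphi)^2\kappa$ from \eqref{sph-sym-equations-2}; this immediately gives $r(\partial_u\varphi)^2/(-\nu)^2 = r(D_u\varphi)^2$. For the coefficient of $(D_u\psi)^2$, I use the identity $(-\nu)^{-1}(1-\mu) = (-\gamma)^{-1}$ to write
\[
(-\nu)D_v\bigl((-\nu)^{-1}(1-\mu)\bigr) = -\frac{(-\nu)}{\kappa(-\gamma)^2}\partial_v(-\gamma),
\]
then substitute the second Raychaudhuri equation $\partial_v(-\gamma) = r\lambda^{-1}(\partial_v\varphi)^2(-\gamma)$ and simplify using the relations $(-\nu)/(-\gamma) = 1-\mu$ and $(1-\mu)\kappa = \lambda$; the prefactor collapses and yields $-r(\partial_v\varphi)^2/\kappa^2 = -r(D_v\varphi)^2$.

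Combining the two diagonal terms produces $K^T[\psi] = r[-(D_u\psi)^2(D_v\varphi)^2 + (D_u\varphi)^2(D_v\psi)^2]$, and multiplying by the volume-form factor $r^2\kappa(-\nu)$ from Lemma \ref{vf-multiplier} gives the claimed identity. There is no genuine obstacle to the proof; the only care required is to correctly track the $\kappa$ and $(-\nu)$ factors hidden inside the normalized derivatives $D_u$ and $D_v$. Conceptually, the two Raychaudhuri equations supply exactly the logarithmic derivatives $D_u\log\kappa$ and $D_v\log(-\gamma)$ needed to convert $K^T$ into the scalar-field energy-momentum form on the right-hand side, with the characteristic sign pattern that causes $K^T[\varphi]$ to vanish and so recovers conservation of the Kodama energy flux for solutions of $\Box\varphi = 0$.
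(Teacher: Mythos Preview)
Your proof is correct and follows essentially the same approach as the paper: apply Lemma~\ref{vf-multiplier} with $(W^u,W^v)=(1-\mu,1)$, note the cross term vanishes, and use the transport equations \eqref{sph-sym-equations-2} to evaluate the two diagonal coefficients of $K^T[\psi]$. The paper packages the computation slightly differently via the identities $D_v(-\gamma)^{-1} = -(-\nu)^{-1}r(D_v\varphi)^2$ and $D_u\kappa^{-1} = \kappa^{-1}r(D_u\varphi)^2$, but this is the same calculation you carry out.
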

\begin{proof}
Use \cref{vf-multiplier} with \((T^u,T^v) = (1-\mu{},1)\). In particular we use the
transport equations
\begin{equation}\label{vf-multiplier-transport}
D_v(-\gamma{})^{-1} = -(-\nu{})^{-1}r(D_v\varphi{})^2 \qquad D_u\kappa{}^{-1} = \kappa{}^{-1}r(D_u\varphi{})^2
\end{equation}
to compute
\begin{equation}
K^T[\psi{}] = (-\nu{})D_v((-\gamma{})^{-1})(D_u\psi{})^2 + \kappa{}D_u(\kappa{}^{-1})(D_v\psi{})^2 = r[-(D_u\psi{})^2(D_v\varphi{})^2 + (D_u\varphi{})^2(D_v\psi{})^2].
\end{equation}
\end{proof}
\begin{lemma}[Morawetz vector field]
Let \(f : (0,\infty)_r \to \R\) be a \(C^2\) function. Let \(X = (1-\mu{})f(r)D_u -
f(r)D_v\). Then
\begin{equation}\label{X-estimate-equation}
\begin{split}
&-f'(r)[(1-\mu{})^2(D_u\psi{})^2 + (D_v\psi{})^2]r^2\kappa{}(-\nu{}) + 4 \frac{f(r)}{r}(1-\mu{})D_u\psi{}D_v\psi{}r^2\kappa{}(-\nu{})\\
&\le\partial{}_u(f(r)(D_v\psi{})^2r^2\kappa{}) - \partial{}_v(f(r)(1-\mu{})(D_u\psi{})^2r^2(-\nu{})) + 2X\psi{}\Box{}\psi{}r^2\kappa{}(-\nu{}).
\end{split}
\end{equation}
Moreover,
\begin{equation}\label{X-estimate-equation-2}
\begin{split}
&-f'(r)[(1-\mu{})^2(D_u\psi{})^2 + (D_v\psi{})^2]r^2\kappa{}(-\nu{})  + 4\frac{f(r)}{r}\mu{}D_{u}\psi{}D_{v}\psi{}r^2\kappa{}(-\nu{}) \\
&\qquad +  \Bigl[\frac{2}{1-\mu{}}\frac{f''(r)}{r} + (f'(r)r - f(r)) \frac{4(\varpi{}-\mathbf{e}^2/r)}{r^4}\Bigr]\psi{}^2r^2\kappa{}(-\nu{})\\
&\le\partial{}_u(f(r)(D_v\psi{})^2r^2\kappa{} - 2f'(r)r\lambda{}\psi{}^2 + f(r)r\psi{}^2) - \partial{}_v(f(r)(1-\mu{})(D_u\psi{})^2r^2(-\nu{}) + 2f'(r)r\nu{}\psi{}^2 - f(r) r\psi{}^2) \\
&\qquad + \Bigl(2X\psi{} + 4f(r)\frac{\psi{}}{r}\Bigr)\Box{}\psi{}r^2\kappa{}(-\nu{}).
\end{split}
\end{equation}
\label{X-estimate}
\end{lemma}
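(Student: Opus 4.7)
\textbf{First inequality (equation \cref{X-estimate-equation}).} My plan is to apply the general vector field multiplier identity of \cref{vf-multiplier} with $W = X$, that is $W^u = (1-\mu)f(r)$ and $W^v = -f(r)$, and then to compute the bulk coefficient $K^X[\psi]$ explicitly. Using $D_u r = -1$, $D_v r = 1-\mu$, and the transport identities in \cref{vf-multiplier-transport}, a direct calculation gives
\begin{align*}
(-\nu)D_v\!\bigl((-\gamma)^{-1}f(r)\bigr) &= (1-\mu)^2 f'(r) - f(r)\,r(D_v\varphi)^2, \\
\kappa D_u\!\bigl(-\kappa^{-1}f(r)\bigr) &= f'(r) - f(r)\,r(D_u\varphi)^2,
\end{align*}
together with $\tfrac{2}{r}(-W^u + (1-\mu)W^v) = -\tfrac{4(1-\mu)f(r)}{r}$. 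Substituting these into the formula for $K^X[\psi]$ from \cref{vf-multiplier} and collecting terms,
\begin{equation*}
K^X[\psi] = f'(r)\bigl[(1-\mu)^2(D_u\psi)^2 + (D_v\psi)^2\bigr] - \tfrac{4(1-\mu)f(r)}{r}D_u\psi\,D_v\psi - f(r)\,r\bigl[(D_v\varphi)^2(D_u\psi)^2 + (D_u\varphi)^2(D_v\psi)^2\bigr].
\end{equation*}
The final bracket is non-positive when $f \ge 0$, so dropping it from $K^X$ (equivalently, adding a non-negative quantity to $-K^X$) and inserting the result into the multiplier identity yields \cref{X-estimate-equation}.

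\textbf{Enhanced inequality \cref{X-estimate-equation-2}.} To produce the zeroth-order bulk term proportional to $\psi^2$, I enlarge the multiplier from $2X\psi$ to $2X\psi + \tfrac{4f(r)}{r}\psi$, which adds $\tfrac{4f(r)\psi}{r}\Box\psi\, r^2\kappa(-\nu)$ to the source. Expanding via \cref{box-uv-gauge}, this source equals $4f(r)r\psi\partial_u\partial_v\psi + 4f(r)\lambda\psi\partial_u\psi + 4f(r)\nu\psi\partial_v\psi$. I then integrate by parts: write $2\psi\partial_u\partial_v\psi = \partial_u\partial_v(\psi^2) - 2\partial_u\psi\partial_v\psi$ and $2\psi\partial_u\psi = \partial_u(\psi^2)$, $2\psi\partial_v\psi = \partial_v(\psi^2)$, and perform one further IBP on each resulting $\partial_u(\psi^2)$ and $\partial_v(\psi^2)$ piece. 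The boundary contributions assemble exactly into the additions $\partial_u(-2f'(r)r\lambda\psi^2 + f(r)r\psi^2)$ and $-\partial_v(2f'(r)r\nu\psi^2 - f(r)r\psi^2)$ shown on the right of \cref{X-estimate-equation-2}, while the leftover $-4f(r)r\partial_u\psi\partial_v\psi = -\tfrac{4f(r)}{r}D_u\psi\,D_v\psi\,r^2\kappa(-\nu)$ combines with the cross term $+\tfrac{4f(r)(1-\mu)}{r}D_u\psi\,D_v\psi\,r^2\kappa(-\nu)$ inherited from step one to produce the new cross coefficient $\tfrac{4f(r)\mu}{r}$.

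\textbf{Identification of the bulk $\psi^2$ coefficient and main obstacle.} The remaining interior $\psi^2$ coefficient emerges from the action of $\partial_u$ and $\partial_v$ on the geometric prefactors $\lambda f'(r)r$, $f(r)\nu$, and $f(r)r$ generated by the integrations by parts; this is where the bulk structure is actually computed. The essential input is the wave equation for $r$ from \cref{sph-sym-equations-1}, namely $\partial_u\lambda = \partial_v\nu = \tfrac{2(\varpi-\mathbf{e}^2/r)}{r^2}\tfrac{\lambda\nu}{1-\mu}$, which introduces the factor $(\varpi - \mathbf{e}^2/r)$ featuring in the stated coefficient $\tfrac{2f''(r)}{(1-\mu)r} + (f'(r)r - f(r))\tfrac{4(\varpi-\mathbf{e}^2/r)}{r^4}$, while the $f''(r)$ term appears when $\partial_u$ or $\partial_v$ hits the $f'(r)r$-dependence of the prefactor. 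The main obstacle is precisely this bookkeeping step: one must keep track of several IBP contributions, ensure that all $\psi\partial\psi$-type terms cancel to leave only pure $\psi^2$ divergences and bulk, and combine the resulting $\lambda\nu$ factors with $\kappa(-\nu) = -\lambda\nu/(1-\mu)$ so that the interior $\psi^2$ coefficient takes the claimed form. The computation is elementary but delicate; no additional ingredient beyond the wave equation for $r$ and the $r$-derivatives of $f$ is required.
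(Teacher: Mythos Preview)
Your approach is correct and essentially the same as the paper's: for \cref{X-estimate-equation} you apply the multiplier identity of \cref{vf-multiplier} with $W=X$ and drop the non-negative quartic terms (the paper phrases this as computing $K^{-X}$ and bounding it below, which is the same computation up to an overall sign), and for \cref{X-estimate-equation-2} you introduce the Lagrangian correction $\tfrac{4f}{r}\psi\,\Box\psi$ and integrate by parts using the wave equation for $r$, exactly as the paper does (the paper presents this as rewriting the cross term $4\tfrac{f}{r}(1-\mu)D_u\psi D_v\psi$ on the left, but the two manipulations are the same identity read from opposite sides). One small slip: when you combine the leftover $-4fr\,\partial_u\psi\,\partial_v\psi$ with the inherited cross term, keep careful track of which side each lives on---your stated combination to coefficient $\tfrac{4f\mu}{r}$ requires that the leftover be placed on the left before adding, not subtracted across the inequality; as you note, this bookkeeping is the only delicate point.
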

\begin{proof}
Use \cref{vf-multiplier} with \((X^u,X^v) = ((1-\mu{})f(r),-f(r))\) to obtain
\begin{equation}
K^{-X}[\psi{}]r^2\kappa{}(-\nu{}) = \partial_u(f(r)(D_v\psi{})^2r^2\kappa{}) - \partial_v(f(r)(1-\mu{})(D_u\psi{})^2r^2(-\nu{})) + 2X\psi{}\Box{}\psi{}.
\end{equation}
From \cref{vf-multiplier-transport} we can compute
\begin{equation}
-(-\nu{})D_v(f(r)(-\gamma{})^{-1}) = f(r)r(D_v\varphi{})^2 - f'(r)(1-\mu{})^2 \qquad \kappa{}D_u(f(r)\kappa{}^{-1}) = f(r)r(D_u\varphi{})^2 - f'(r),
\end{equation}
and then obtain
\begin{equation}
K^{-X}[\psi{}] = [f(r)r(D_v\varphi{})^2 - f'(r)(1-\mu{})^2](D_u\psi{})^2 + [f(r)r(D_u\varphi{})^2 - f'(r)](D_v\psi{})^2 + \frac{4f(r)}{r}(1-\mu{})D_u\psi{}D_v\psi{}.
\end{equation}
Neglecting the terms with a positive sign, we arrive at
\begin{equation}
K^{-X}[\psi{}]\ge -f'(r)[(1-\mu{})^2(D_u\psi{})^2 + (D_v\psi{})^2] + 4 \frac{f(r)}{r}(1-\mu{})D_u\psi{}D_v\psi{}.
\end{equation}

To obtain \cref{X-estimate-equation}, use the wave equation for \(r\) to rewrite
the last term on the left of \cref{X-estimate-equation}:
\begin{equation}
\begin{split}
4 \frac{f}{r}(1-\mu{})D_u\psi{}D_v\psi{}r^2\kappa{}(-\nu{}) &= 4f r\partial{}_{u}\psi{}\partial{}_{v}\psi{} - 4 \frac{f}{r}\mu{}D_u\psi{}D_v\psi{}r^2\kappa{}(-\nu{})\\
&= \partial{}_{u}(-2f'r\lambda{}\psi{}^2 + fr\psi{}^2) + \partial{}_{v}(-2f'r\nu{}\psi{}^2 + f r\psi{}^2) \\
&\qquad + \Bigl[-\frac{4f}{r}\psi{}\Box{}\psi{} + \frac{2}{1-\mu{}}\frac{f''}{r}\psi{}^2 \\
&\qquad + (f'r - f) \frac{4(\varpi{}-\mathbf{e}^2/r)}{r^4}\psi{}^2 + 4 \frac{f}{r}\mu{}D_{u}\psi{}D_{v}\psi{}\Bigr]r^2\kappa{}(-\nu{}).
\end{split}
\end{equation}
\end{proof}
\begin{lemma}[Morawetz vector field with \(f(r) = -1\)]
Let \(\widetilde{X} = -(1-\mu{})D_u + D_v\). We have
\begin{equation}\label{X-tilde-identity-equation}
\begin{split}
& \partial{}_u((D_v\psi{})^2r^2\kappa{} + r\psi{}^2) - \partial{}_v((1-\mu{})(D_u\psi{})^2r^2(-\nu{}) + r\psi{}^2) + \frac{4(\varpi{}-\mathbf{e}^2/r)}{r^4}\psi{}^2r^2\kappa{}(-\nu{}) \\
&\le \Bigl(2\widetilde{X}\psi{} - 4\frac{\psi{}}{r}\Bigr)\Box{}\psi{}r^2\kappa{}(-\nu{})+ \frac{4(2\varpi{}-\mathbf{e}^2/r)}{r^2}D_{u}\psi{}D_{v}\psi{}r^2\kappa{}(-\nu{}).
\end{split}
\end{equation}
\label{X-tilde-identity}
\end{lemma}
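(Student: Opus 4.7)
My plan is to recognize \cref{X-tilde-identity} as the specialization of \cref{X-estimate} to the constant choice \(f(r) \equiv -1\), for which \(X = -(1-\mu)D_u + D_v = \widetilde{X}\) and \(f'(r) = f''(r) = 0\). The argument is then to substitute these vanishing derivatives into \cref{X-estimate-equation-2} and track how each of the listed terms collapses or regroups.

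First, I would handle the bulk contributions. The \(-f'(r)[(1-\mu)^2(D_u\psi)^2 + (D_v\psi)^2]\) and \(\tfrac{2f''(r)}{(1-\mu)r}\psi^2\) terms vanish identically. The cross term \(4\tfrac{f(r)}{r}\mu D_u\psi D_v\psi = -\tfrac{4\mu}{r}D_u\psi D_v\psi\) becomes, upon substituting \(\mu = \tfrac{2\varpi}{r} - \tfrac{\mathbf{e}^2}{r^2}\), the quantity \(-\tfrac{4(2\varpi-\mathbf{e}^2/r)}{r^2}D_u\psi D_v\psi\); moving this from the LHS to the RHS of the inequality yields the \(+\tfrac{4(2\varpi-\mathbf{e}^2/r)}{r^2}D_u\psi D_v\psi r^2\kappa(-\nu)\) source term in the target. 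The \(\psi^2\) potential coefficient \((f'(r)r - f(r))\cdot\tfrac{4(\varpi-\mathbf{e}^2/r)}{r^4}\) reduces to \(\tfrac{4(\varpi-\mathbf{e}^2/r)}{r^4}\), matching the stated bulk \(\psi^2\) weight on the LHS.

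Next, I would regroup the boundary terms. With \(f'=0\), the \(\pm 2f'(r)r\lambda\psi^2\) and \(\pm 2f'(r)r\nu\psi^2\) corrections drop out. The remaining \(f(r)(D_v\psi)^2r^2\kappa = -(D_v\psi)^2r^2\kappa\) combines with \(f(r)r\psi^2 = -r\psi^2\) inside the \(\partial_u\); similarly on the \(\partial_v\) side. Absorbing the overall negative sign from \(f \equiv -1\) into the \(\partial_u, \partial_v\) expressions (and flipping the inequality direction accordingly) reassembles the boundary terms as \(\partial_u((D_v\psi)^2r^2\kappa + r\psi^2) - \partial_v((1-\mu)(D_u\psi)^2r^2(-\nu) + r\psi^2)\) on the LHS of \cref{X-tilde-identity-equation}. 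Finally, the multiplier \(2X\psi + 4f(r)\psi/r\) becomes \(2\widetilde{X}\psi - \tfrac{4\psi}{r}\), which is the source weighting on the RHS.

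\paragraph{Main obstacle.}
The proof is essentially a mechanical substitution; the only delicate step is signed bookkeeping during the boundary regrouping, since each \(f = -1\) factor introduces a sign that must be consistently tracked and absorbed into the \(\partial_u, \partial_v\) expressions so that the final LHS appears in the stated form. Because all of \(f'\) and \(f''\) vanish, no additional commutator terms arise from the wave-equation-for-\(r\) manipulations already performed in the proof of \cref{X-estimate}, and there is nothing else to estimate.
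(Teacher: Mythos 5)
Your proposal is correct and is exactly the paper's proof, which consists of the single line ``Set \(f(r) = -1\) in \cref{X-estimate-equation-2} and rearrange''; your substitutions of the cross term, the \(\psi^2\) potential, and the multiplier all check out. The only caveat is in the boundary bookkeeping: taking \cref{X-estimate-equation-2} exactly as printed, the \(f(r)r\psi^2 = -r\psi^2\) contribution inside the \(\partial_v\) bracket comes out with the opposite sign to the \(+r\psi^2\) appearing in \cref{X-tilde-identity-equation}, which points to a sign typo in one of the two displayed identities rather than a flaw in your argument.
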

\begin{proof}
Set \(f(r) = -1\) in \cref{X-estimate-equation-2} and rearrange.
\end{proof}
\begin{lemma}[Redshift vector field identity]
Let \(\chi_{\mathcal{H}}(r)\) be a \(C^1\) function, and write \(Y =
\chi_{\mathcal{H}}(r)D_{u}\). We have
\begin{equation}
\begin{split}
&\partial{}_v(\chi{}_{\mathcal{H}}(r)(D_u\psi{})^2r^2(-\nu{})) + \frac{2(\varpi{}-\mathbf{e}^2/r)}{r^2}\chi{}_{\mathcal{H}}(r)(D_u\psi{})^2 r^2\kappa{}(-\nu{})\\
&= \Bigl[2Y\psi{}\Box{}\psi{} + (1-\mu{})\chi{}'_{\mathcal{H}}(r)(D_u\psi{})^2 - \frac{2\chi{}_{\mathcal{H}}(r)}{r}D_u\psi{}D_v\psi{}\Bigr]r^2\kappa{}(-\nu{}),
\end{split}
\end{equation}
\label{redshift-identity}
\end{lemma}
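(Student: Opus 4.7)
The plan is to apply Lemma~\ref{vf-multiplier} directly with the choice of vector field $W = Y$, i.e.\ $(W^u, W^v) = (\chi_{\mathcal{H}}(r), 0)$. Since $W^v = 0$, the $\partial_u(W^v(D_v\psi)^2 r^2\kappa)$ boundary term on the left vanishes, leaving exactly $\partial_v(\chi_{\mathcal{H}}(r)(D_u\psi)^2 r^2(-\nu))$ as in the statement. Of the three pieces of $K^Y[\psi]$, the middle summand disappears and the cross term reduces immediately to $-\frac{2\chi_{\mathcal{H}}(r)}{r} D_u\psi\, D_v\psi$, which matches the corresponding term on the right-hand side.

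The only nontrivial step is to compute the first summand $(-\nu) D_v((-\nu)^{-1}\chi_{\mathcal{H}}(r))(D_u\psi)^2$. I would expand by the product rule as $\bigl(-D_v\log(-\nu)\,\chi_{\mathcal{H}}(r) + D_v\chi_{\mathcal{H}}(r)\bigr)(D_u\psi)^2$. The second piece gives $(1-\mu)\chi_{\mathcal{H}}'(r)(D_u\psi)^2$ after using $D_v = \kappa^{-1}\partial_v$ together with $\lambda/\kappa = 1-\mu$. For the first piece I would invoke the transport equation $\partial_v\nu = \frac{2(\varpi - \mathbf{e}^2/r)}{r^2}\frac{\lambda\nu}{1-\mu}$ from \cref{sph-sym-equations-1}: dividing by $\nu$ and then by $\kappa$ yields $D_v\log(-\nu) = \frac{2(\varpi - \mathbf{e}^2/r)}{r^2}$, producing the redshift coefficient $-\frac{2(\varpi - \mathbf{e}^2/r)}{r^2}\chi_{\mathcal{H}}(r)$.

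Assembling these three contributions into $K^Y[\psi]$ and then transferring the redshift term (multiplied by the volume factor $r^2 \kappa(-\nu)$) to the left-hand side yields the asserted identity. There is no real obstacle here: the proof is essentially bookkeeping on top of Lemma~\ref{vf-multiplier}, and the only care required is to ensure that the $\kappa^{-1}$ from the definition of $D_v$ combines correctly with the $\lambda/(1-\mu) = \kappa$ factor from the $\partial_v\nu$ equation, so that the subextremality quantity $\varpi - \mathbf{e}^2/r$ appears with the clean coefficient stated. This clean appearance is exactly what makes $Y$ an effective multiplier near the event horizon, since it will contribute a positive-definite bulk term there once $\chi_{\mathcal{H}}$ is chosen non-negative and localized near $\mathcal{H}$.
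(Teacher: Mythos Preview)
Your proposal is correct and follows essentially the same approach as the paper: apply \cref{vf-multiplier} with $(W^u,W^v)=(\chi_{\mathcal{H}}(r),0)$, compute $(-\nu)D_v((-\nu)^{-1}\chi_{\mathcal{H}}(r))$ via the product rule and the transport equation for $\partial_v\nu$, and then move the redshift term to the left. The paper's proof is terser but performs exactly the same bookkeeping.
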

\begin{proof}
Set \((Y^u,Y^v) = (\chi{}_{\mathcal{H}}(r),0)\). Compute
\begin{equation}\label{redshift-1}
K^Y[\psi{}] = (-\nu{})D_v((-\nu{})^{-1}\chi{}_{\mathcal{H}}(r))(D_u\psi{})^2 - \frac{2\chi{}_{\mathcal{H}}(r)}{r}D_u\psi{}D_v\psi{}.
\end{equation}
Use the equation for \(\partial{}_v(-\nu{})\) to compute
\begin{equation}\label{redshift-2}
\begin{split}
(-\nu{})D_v((-\nu{})^{-1}\chi{}_{\mathcal{H}}(r)) &=  -\frac{2(\varpi{}-\mathbf{e}^2/r)}{r^2}\chi{}_{\mathcal{H}}(r) + (1-\mu{})\chi{}_{\mathcal{H}}'(r).
\end{split}
\end{equation}
Substitute \cref{redshift-2} into \cref{redshift-1} to get
\end{proof}
\begin{lemma}[Irregular vector field identity]
Let \(g = (u,v)\) be a \(C^1\) function. Then
\begin{equation}\label{irregular-vf-identity}
(-\partial{}_ug)(\partial{}_v\psi{})^2r^2 = -\partial_u(g(\partial{}_v\psi{})^2r^2) + \Bigl[2\kappa{}g\partial{}_v\psi{}\Box{}\psi{} + \frac{2}{r}\lambda{}gDu_\psi{}D_v\psi{}\Bigr]r^2\kappa{}(-\nu{}).
\end{equation}
\end{lemma}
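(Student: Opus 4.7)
The plan is to establish this identity by direct computation in double null coordinates, relying only on the Leibniz rule and the explicit form of $\Box$ from \cref{box-uv-gauge}. No Hardy inequality, elliptic estimate, or monotonicity input is required, so I do not anticipate any serious obstacle.

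Concretely, the first step is to expand
\begin{equation*}
\partial_u\bigl(g(\partial_v\psi)^2 r^2\bigr) = (\partial_u g)(\partial_v\psi)^2 r^2 + 2g(\partial_u\partial_v\psi)(\partial_v\psi)\, r^2 + 2g\nu r(\partial_v\psi)^2,
\end{equation*}
using $\partial_u(r^2) = 2r\nu$. The second step is to substitute the wave operator from \cref{box-uv-gauge}, that is,
\begin{equation*}
\partial_u\partial_v\psi = \kappa(-\nu)\Box\psi - \frac{\lambda}{r}\partial_u\psi - \frac{\nu}{r}\partial_v\psi,
\end{equation*}
into the middle summand. The one nontrivial observation is a cancellation: the contribution $-2g\nu r(\partial_v\psi)^2$ generated from the $\frac{\nu}{r}\partial_v\psi$ piece of the wave equation exactly cancels the $2g\nu r(\partial_v\psi)^2$ term produced by differentiating $r^2$. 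This cancellation is precisely why the weight $r^2$ on the energy density is the right one.

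After the cancellation, the only surviving terms are $2g\kappa(-\nu)r^2(\partial_v\psi)\Box\psi$ and $-2g\lambda r(\partial_u\psi)(\partial_v\psi)$. I would then move $(\partial_u g)(\partial_v\psi)^2 r^2$ and $\partial_u(g(\partial_v\psi)^2 r^2)$ to opposite sides, negate, convert to the $D_u, D_v$ normalisation of \cref{Du-Dv-notation} via $\partial_u\psi = (-\nu)D_u\psi$ and $\partial_v\psi = \kappa D_v\psi$, and pull out the common volume factor $r^2\kappa(-\nu)$; this yields the stated identity. The significance of the identity, as anticipated in Step 2 of \cref{sec:main-theorem-proof}, is that when $\partial_u g < 0$ (as for the choice $g = e^{-M(u-v)}$ defining the irregular multiplier $Z$), the left-hand side is a manifestly non-negative bulk term quadratic in $\partial_v\psi$, which, following \cite{luk-oh-scc2}, can be used to absorb the quartic error terms arising from the Raychaudhuri equation in the commuted energy estimates.
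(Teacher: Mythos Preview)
Your proposal is correct. The direct Leibniz expansion followed by substitution of the wave operator from \cref{box-uv-gauge} is exactly the right computation, and the cancellation you identify between the $-\frac{\nu}{r}\partial_v\psi$ contribution and the $\partial_u(r^2)$ contribution is the only point requiring care.

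The paper takes a slightly different route: rather than computing from scratch, it simply invokes the general vector field multiplier identity (\cref{vf-multiplier}) with the choice $(W^u,W^v)=(0,\kappa g)$, which immediately gives the result. Your direct computation is essentially a re-derivation of that general lemma in this special case; indeed, the proof of \cref{vf-multiplier} in the paper proceeds by exactly the same Leibniz-plus-wave-equation manipulation you describe, just carried out once for a general multiplier $W$. The paper's route is more economical given that \cref{vf-multiplier} has already been established for the other multipliers $T$, $X$, $Y$; your route is more self-contained and makes the cancellation mechanism explicit. Neither buys anything the other does not.
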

\begin{proof}
Apply \cref{vf-multiplier} with \((W^u,W^v) = (0,\kappa{}g)\).
\end{proof}
\subsection{Energy boundedness estimate}
\label{sec:org98e495d}
\label{sec:energy-estimate-proof} We say \(\mathcal{R}\) is an \emph{admissible spacetime
region} if the past and future boundaries of \(\mathcal{R}\) are each connected
and piecewise null.
\begin{remark}
We apply the energy estimate to characteristic rectangles and regions of
the form \(\bigcup_{\tau{}\in [\tau_1,\tau_2]}\Sigma_\tau{}\), which are both clearly admissible.
\end{remark}
Recall the notations \(D_u\) and \(D_v\) introduced in \cref{Du-Dv-notation}. See
\cref{sec:vector-field-multiplier-identities} for the vector field multipliers and
associated identities used in the proof of \cref{energy-estimate}.
\begin{lemma}[Energy estimate]
Let \(\mathcal{R}\) be an admissible spacetime region with past boundary \(\Sigma_1\)
and future boundary \(\Sigma_2\). We have
\begin{equation}\label{eb:no-bulk}
E[\varphi{},\Sigma{}_2] \le C(\varpi{}_i,c_{\mathcal{H}},r_{\mathrm{min}})E[\varphi{},\Sigma{}_1].
\end{equation}
Let \(p_{\mathrm{fut}}\) be the future endpoint of \(\Sigma_1\). Then
\begin{equation}\label{energy:en-bound}
E[\psi{},\Sigma{}_2] + E_{\mathrm{bulk}}[\psi{},\mathcal{R}]\le C(\mathfrak{b}_0)[E[\psi{},\Sigma{}_1] + r\abs{\psi{}}^2(p_{\mathrm{fut}})] + \mathrm{Err}[\psi{},\mathcal{R}]
\end{equation}
for an error term that admits a decomposition
\begin{equation}
\begin{split}
\mathrm{Err}[\psi{},\mathcal{R}] &\le \mathrm{Err}_{U}[\psi{},\mathcal{R}]  + C(\mathfrak{b}_0)(\mathrm{Err}_{v}[\psi{},\mathcal{R}] + \mathbf{1}_{\psi{}\neq{}\varphi{}}\mathrm{Err}_{\mathrm{quartic}}[\psi{},\mathcal{R}] + \mathrm{Err}_{\mathrm{zo}}[\psi{},\mathcal{R}]),
\end{split}
\end{equation}
where
\begin{equation}
\mathrm{Err}_{U}[\psi{},\mathcal{R}] = \iint_{\mathcal{R}} w_{U}D_u\psi{}\Box{}\psi{}
\end{equation}
for a non-negative weight \(w_U\) satisfying \(0<C(\varpi{}_i,r_{\mathrm{min}})\le w_U\le C(\mathfrak{b}_0)\), and
\begin{align}
\mathrm{Err}_{v}[\psi{},\mathcal{R}] &\coloneqq{}\iint_{\mathcal{R}} \abs{D_v\psi{}}\abs{\Box{}\psi{}}, \\
\mathrm{Err}_{\mathrm{quartic}}[\psi{},\mathcal{R}] &\coloneqq{}\iint_{\mathcal{R}} r\abs{D_u\varphi{}}^2\abs{D_v\psi{}}^2, \\
\mathrm{Err}_{\mathrm{zo}}[\psi{},\mathcal{R}] &\coloneqq{}\iint_{\mathcal{R}} \frac{\abs{\psi{}}}{r}\abs{\Box{}\psi{}}.
\end{align}
\label{energy-estimate}
\end{lemma}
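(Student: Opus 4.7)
The plan is to apply the vector field multiplier method using a combination of the four multipliers $T, X, Y, Z$ from \cref{sec:vector-field-multiplier-identities} and integrate by parts on the admissible region $\mathcal{R}$. For the zeroth-order bound \eqref{eb:no-bulk}, I would take $W = T + c\,Y$ with $c = c(c_{\mathcal{H}}, \varpi_i, r_{\min})$ large. Since $\Box\varphi = 0$, the term $2W\varphi\,\Box\varphi$ vanishes, and the quartic term in \cref{T-identity} cancels identically when $\psi = \varphi$. The only nontrivial right-hand side contributions are then the redshift bulk $\tfrac{2(\varpi-\mathbf{e}^2/r)}{r^2}\chi_{\mathcal{H}}(D_u\varphi)^2 \ge 0$ by \cref{mass-redshift}, and the off-diagonal pieces $(1-\mu)\chi_{\mathcal{H}}'(D_u\varphi)^2$ and $\tfrac{2\chi_{\mathcal{H}}}{r} D_u\varphi D_v\varphi$ from $Y$, both of which are absorbed by Cauchy--Schwarz into the redshift bulk and the boundary $D_v\varphi$-flux for $c$ large. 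The resulting boundary terms control the non-degenerate energy $E[\varphi,\Sigma]$: near the horizon through the $\chi_{\mathcal{H}}(D_u\varphi)^2$ contribution from $Y$, and away from the horizon through the $(1-\mu)(D_u\varphi)^2$ contribution from $T$ via \cref{mu-estimate}.

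For the general estimate \eqref{energy:en-bound}, I would take
\[
W = c_T T + c_X X + c_Y Y + c_Z Z,
\]
with $c_T \gg c_X, c_Y \gg 1$, with $f(r)$ in $X$ chosen so that $-f'(r) \sim r^{-1-\eta_0}$ and $|f| \lesssim 1$, and with $g(u,v) = e^{-c'_\nu(u-v)}$ in $Z$ (cf.\ \cref{nu-exponential-bound}) so that $-\partial_u g \gtrsim g$ generates the exponentially weighted near-horizon bulk appearing in \cref{sec:bulk-energy}. Summing the identities \cref{T-identity,X-estimate,redshift-identity,irregular-vf-identity}, together with an application of \cref{X-tilde-identity} whose $\tfrac{4(\varpi - \mathbf{e}^2/r)}{r^4}\psi^2$ bulk (positive by \cref{mass-redshift}) compensates the potentially singular $\psi^2 f''(r)/[(1-\mu)r]$ correction in \cref{X-estimate-equation-2}, yields after Cauchy--Schwarz absorption of the off-diagonal cross pieces a combined bulk lower bound comparable to $E_{\text{bulk}}[\psi,\mathcal{R}]$ as defined in \cref{sec:bulk-energy}. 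Stokes' theorem on $\mathcal{R}$ then converts the divergence identities into boundary fluxes on $\Sigma_1$ and $\Sigma_2$, and the Hardy inequalities \cref{hardy-type-inequality,hardy-inequality} (with $a=0$ along outgoing pieces and $a=0$ along ingoing pieces) recombine the integrands into $E[\psi,\Sigma_i]$; the $r|\psi|^2(p_{\text{fut}})$ term on the right arises from the boundary contribution of the $-2f'r\lambda\psi^2 + fr\psi^2$ correction in \cref{X-estimate-equation-2} localized at the future endpoint of $\Sigma_1$.

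To read off the error decomposition, I compute the $D_u$-coefficient of $W$:
\[
w_U = (1-\mu)[c_T + c_X f(r)] + c_Y \chi_{\mathcal{H}}(r),
\]
which for $c_T \gg c_X\|f\|_\infty$ is bounded below by $c_Y$ near the horizon and by $\tfrac{1}{2} c_T (1-\mu) \gtrsim c_T$ in $\{r \ge R_0\}$ via \cref{mu-estimate}, so $0 < C(\varpi_i, r_{\min}) \le w_U \le C(\mathfrak{b}_0)$; this yields the $\mathrm{Err}_U$ term. The $D_v$-coefficient of $W$ is bounded by $C(\mathfrak{b}_0)$, giving $\mathrm{Err}_v$; the surviving $r(D_u\varphi)^2(D_v\psi)^2$ piece of the quartic term in \cref{T-identity} (its companion $(D_v\varphi)^2(D_u\psi)^2$ has good sign and is dropped) gives $\mathrm{Err}_{\text{quartic}}$ and appears only for $\psi\neq\varphi$; and the zeroth-order $\tfrac{4\psi}{r}\Box\psi$ correction from \cref{X-estimate-equation-2} gives $\mathrm{Err}_{\text{zo}}$. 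The principal technical obstacle is tuning the constants so that the cross terms $\tfrac{4f(r)}{r}(1-\mu)D_u\psi D_v\psi$ in \cref{X-estimate} and $\tfrac{4(2\varpi-\mathbf{e}^2/r)}{r^2}D_u\psi D_v\psi$ in \cref{X-tilde-identity-equation} can be absorbed by Cauchy--Schwarz into the diagonal Morawetz and redshift bulks without reducing them below the $r^{-1-\eta_0}$-weighted target required by $E_{\text{bulk}}$, and simultaneously that the $\psi^2$-weight generated by $W$ dominates the negative $\psi^2$ pieces produced by the Morawetz correction; once this balance is struck, the estimate closes as stated.
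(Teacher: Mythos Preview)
Your proposal captures the right vector fields and correctly identifies the origin of each error term, but there is a genuine gap in the argument for \eqref{eb:no-bulk}, and the treatment of \eqref{energy:en-bound} glosses over the structure that makes the paper's proof close.

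For \eqref{eb:no-bulk}, the choice $W = T + cY$ is not enough. After Young's inequality on the redshift cross term $\tfrac{2\chi_{\mathcal{H}}}{r}D_u\varphi D_v\varphi$, you are left with a spacetime bulk term $\iint_{\{r\le 2R\}}(D_v\varphi)^2\,r^2\kappa(-\nu)\,du\,dv$. This cannot be ``absorbed into the boundary $D_v\varphi$-flux'': a bulk integral over an unbounded-in-$u$ region is not dominated by a single boundary flux, and scaling $c$ does not help because both the redshift bulk and the redshift error scale together. The paper handles this by including a Morawetz multiplier $X$ with $f(r)=r^{-3}$ already at this stage (its Step~1c), producing a weak bulk $\iint r^{-4}[(1-\mu)^2(D_u\psi)^2+(D_v\psi)^2]$ that controls both the $\chi'_{\mathcal{H}}$ term and the $(D_v\varphi)^2$ leftover in the bounded-$r$ region.

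For \eqref{energy:en-bound}, the paper does not use a single multiplier $W$ but a layered scheme: first the weak bulk from $T+X_{f=r^{-3}}+Y$ with constants depending only on $(\varpi_i,c_{\mathcal{H}},r_{\min})$; then $\widetilde{X}$ \emph{cut off} to $\{r\ge R_1\}$ for the zeroth-order bulk away from the horizon; then $X_{f=r^{-\eta_0}}$ \emph{cut off} to $\{r\ge R_1\}$ for the $r^{-1-\eta_0}$-weighted ILED; then a Hardy argument for the zeroth-order bulk near the horizon; and finally $Z$ for the exponentially weighted near-horizon bulk. Each later step produces cutoff errors supported in a compact $r$-annulus, which are absorbed by the earlier, already-established weak bulk. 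Your one-shot combination would have to achieve all of this simultaneously, and your balancing claim is not right as stated: with $f(r)=r^{-\eta_0}$ one has $f''>0$, so the term $\tfrac{2}{1-\mu}\tfrac{f''}{r}\psi^2$ in \cref{X-estimate} has a \emph{good} sign and needs no compensation from $\widetilde{X}$. The actually negative $\psi^2$ contribution is $(f'r-f)\tfrac{4(\varpi-\mathbf{e}^2/r)}{r^4}$, and the paper arranges for this to be dominated by the $f''$-term only in $\{r\ge R_1\}$ with $R_1\ge \eta_0^{-1}$, which is why the cutoff is essential.
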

\begin{remark}[Structure of error terms]
Observe that the error terms composing \(\text{Err}[\psi{},\mathcal{R}]\) are
non-negative, except for \(\mathrm{Err}_{U}[\psi{},\mathcal{R}]\), which does
not have a sign. The positivity of the weight \(w_U\) is crucial for closing the
energy estimates. This is because commutation produces a \(D_u\psi{}\) term with
a negative coefficient (see \cref{err-wave-bound}). In the absence of this good
sign, we would not be able to control this term, since there is no smallness to
exploit.
\end{remark}
\begin{proof}
We follow the proof of \cite[Lem.~8.35]{luk-oh-scc2}. We do not mention uses of
\cref{zeroth-order-geometric-bounds} to estimate zeroth order geometric quantities
after Step 1. To simplify the notation, we omit the volume form
\(r^2\kappa{}(-\nu{})\dd{}u\dd{}v\) when integrating over spacetime regions.

\step{Step 1: Energy estimate with a weaker bulk term and proof of \cref{eb:no-bulk}.}
Define the energy quantity
\begin{equation}\label{energy:weak}
E_{\mathrm{weak}}[\psi{},\mathcal{R}] \coloneqq{}\iint_{\mathcal{R}} \frac{1}{r^4}[(D_u\psi{})^2 + (D_v\psi{})^2].
\end{equation}
Compared to the bulk term \(E_{\mathrm{bulk}}\) that we wish to control,
\(E_{\text{weak}}\) lacks a zeroth order term, does not capture improved
integrated local energy decay near the horizon, and has weaker \(r\)-weights. We
will show that
\begin{equation}\label{energy:weak-bulk-bound}
\begin{split}
E[\psi{},\Sigma{}_2] + E_{\text{weak}}[\psi{},\mathcal{R}]&\lesssim E[\psi{},\Sigma{}_1] + \text{Err}^{ }[\psi{},\mathcal{R}],
\end{split}
\end{equation}
where the constant implied by \(\lesssim\) is positive and depends only on and \(\varpi_i\),
\(c_{\mathcal{H}}\), and \(r_{\text{min}}\). Since \(\text{Err}^{ }[\varphi{},\mathcal{R}]
= 0\), we obtain \cref{eb:no-bulk} as an immediate corollary of \cref{energy:weak-bulk-bound}.

\step{Step 1a: Outline of proof.} The proof is a standard application of the vector
field multiplier method, using the Kodama vector field \(T\), the Morawetz
vector field \(X\), and the redshift vector field \(Y\). We will give a careful
proof, in order to emphasize the structure of the right side, in particular the
parameters on which the constants depend and the positivity of the weight
\(w_U\).

We will show the following three estimates:
\begin{align}
\int_{\Sigma_2^v} (D_v\psi{})^2 + \int_{\Sigma_2^u} (1-\mu{})(D_u\psi{})^2 &\lesssim \iint_{\mathcal{R}} T\psi{}\Box{}\psi{} + E[\psi{},\Sigma{}_1] + \mathbf{1}_{\psi{}\neq{}\varphi{}}\text{Err}_{\text{quartic}}[\psi{},\mathcal{R}], \label{energy:T-estimate} \\
\iint_{\mathcal{R}} \frac{1}{r^{3+\eta{}_0}}\Bigl[(1-\mu{})^2(D_u\psi{})^2 + (D_v\psi{})^2\Bigr]&\lesssim \int_{\mathcal{R}} X\psi{}\Box{}\psi{} + \side{RHS}{energy:T-estimate}, \label{energy:X-estimate}\\
\int_{\Sigma{}_2^u\cap \set{r\le R}} (D_u\psi{})^2 + \iint_{\mathcal{R}\cap \set{r\le R}} \frac{1}{r^4}(D_u\psi{})^2 &\lesssim \iint_{\mathcal{R}} Y\psi{}\Box{}\psi{} + E[\psi{},\Sigma{}_1] + \side{RHS}{energy:X-estimate}\label{energy:Y-estimate},
\end{align}
where
\begin{equation}\label{energy:vf-coefficients}
T = (1-\mu{})U + \frac{1}{\kappa{}}\partial{}_v, \qquad X = r^{-3}(1-\mu{})U - r^{-3}\frac{1}{\kappa{}}\partial{}_v \qquad Y = \chi{}_{\mathcal{H}}(r)U,
\end{equation}
and the implicit constants depend only on \(c_{\mathcal{H}}\), \(\varpi{}_i\), and \(r_{\text{min}}\).
Observe that the left sides of \cref{energy:T-estimate,energy:X-estimate} bound the left side of
\cref{energy:weak} from above in a region \(\set{r\ge R}\) for \(R\ge
C(\varpi_i,r_{\text{min}})\) (in view of \eqref{mu-estimate}), and \cref{energy:Y-estimate} gives the control
in the remaining region \(\set{r\le R}\). Thus
\cref{energy:T-estimate,energy:X-estimate,energy:Y-estimate} together give
\begin{equation}\label{energy:weak-1d}
E[\psi{},\Sigma{}_2] + E_{\text{weak}}[\psi{},\mathcal{R}]\le  \iint_{\mathcal{R}} W\psi{}\Box{}\psi{} + CE[\psi{},\Sigma{}_1] + \mathbf{1}_{\psi{}\neq{}\varphi{}}C\text{Err}_{\text{quartic}}[\psi{},\mathcal{R}],
\end{equation}
for a vector field
\begin{equation}
W = C_TT + C_XX + C_{Y}Y,
\end{equation}
where the constants \(C,C_T,C_X,C_Y\) are positive and depend only on \(\varpi_i\),
\(c_{\mathcal{H}}\), and \(r_{\text{min}}\). In view of
\cref{energy:vf-coefficients}, the coefficients of \(W\) are smooth and of size
\(C(\mathfrak{b}_0)\), and the \(U\)-coefficient is bounded below by a positive constant. In particular, we
have
\begin{equation}\label{weight-U-inequality}
W\psi{} = w_{U}U\psi{} + w_{v}\partial{}_{v}\psi{}, \quad 1\lesssim w_{U},\quad \abs{w_{v}}\le C(\mathfrak{b}_{0}),
\end{equation}
and so
\begin{equation}
\iint_{\mathcal{R}} W\psi{}\Box{}\psi{}\le \mathrm{Err}^{ }[\psi{},\mathcal{R}],
\end{equation}
which completes the proof of \cref{energy:weak} when substituted into
\cref{energy:weak-1d}.

\step{Step 1b: Almost conservation law: proof of \cref{energy:T-estimate}.}
Integrate the \(T\)-identity in \cref{T-identity} over \(\mathcal{R}\) to get
\begin{equation}
\begin{split}
\int_{\Sigma_2^v} (D_v\psi{})^2 + \int_{\Sigma_2^u} (1-\mu{})(D_u\psi{})^2 &=  2\iint_{\mathcal{R}} T\psi{}\Box{}\psi{} + \int_{\Sigma_1^v} (D_v\psi{})^2 + \int_{\Sigma_1^u} (1-\mu{})(D_u\psi{})^2 \\
&\qquad + \iint_{\mathcal{R}} r[-(D_u\psi{})^2(D_v\varphi{})^2 + (D_u\varphi{})^2(D_v\psi{})^2].
\end{split}
\end{equation}
Since \(1-\mu{}\le C(\varpi_i,r_{\text{min}})\), the boundary terms on \(\Sigma_1\) are
controlled by \(E[\psi{},\Sigma_1]\). The final term vanishes if \(\psi{} =
\varphi{}\) and is equal to \(\text{Err}_{\text{quartic}}[\psi{},\mathcal{R}]\)
otherwise.

\step{Step 1c: Weak integrated local energy decay estimate: proof of
\cref{energy:X-estimate}.} Let \(f(r) = r^{-3}\), so that
\begin{equation}
-f'(r) - 2f(r)/r = r^{-4}.
\end{equation}
By Young's inequality we have
\begin{equation}
r^{-4}[(1-\mu{})^2(D_u\psi{})^2 + (D_v\psi{})^2]\le -f'(r)[(1-\mu{})^2(D_u\psi{})^2 + (D_v\psi{})^2] + 4 \frac{f(r)}{r}(1-\mu{})D_u\psi{}D_v\psi{} = \side{LHS}{X-estimate-equation}.
\end{equation}
The estimate follows from integrating \cref{X-estimate-equation} with \(f(r) =
r^{-3}\); of the four boundary terms generated, we can neglect two on account of
their sign, and control the other two using Step 1a.

\step{Step 1d: Redshift estimate: proof of \cref{energy:Y-estimate}.} Let \(\chi_{\mathcal{H}}(r)\) be a positive cutoff
function that is \(1\) in \(\set{r\le R}\) and \(0\) in \(\set{r\ge 2R}\).
Recall that \(\varpi{}-\mathbf{e}^2/r\ge c_{\mathcal{H}}\) and apply Young's
inequality in the form
\begin{equation}
\abs[\Big]{\frac{2\chi_{\mathcal{H}}(r)}{r}D_u\psi{}D_v\psi{}}\le c_{\mathcal{H}} \frac{\chi_{\mathcal{H}}(r)}{r^2}(D_u\psi{})^2 + c_{\mathcal{H}}^{-1}\mathbf{1}_{r\le 2R}(D_v\psi{})^2.
\end{equation}
after integrating the identity in \cref{redshift-identity} and noting the support properties of
\(\chi_{\mathcal{H}}\) and \(\chi_{\mathcal{H}}'\) to get
\begin{equation}
\begin{split}
\int_{\Sigma{}_2^u} \chi{}_{\mathcal{H}}(r)(D_u\psi{})^2 + c_{\mathcal{H}}r_{\text{min}}^2\iint_{\mathcal{R}}\chi{}_{\mathcal{H}}(r) \frac{1}{r^4}(D_u\psi{})^2 &\lesssim  \iint_{\mathcal{R}} Y\psi{}\Box{}\psi{} + \iint_{\mathcal{R}\cap \set{R\le r\le 2R}}(1-\mu{})(D_u\psi{})^2 \\
&\qquad + c_{\mathcal{H}}^{-1}\iint_{\mathcal{R}\cap \set{r\le 2R}} (D_v\psi{})^2 + \int_{\Sigma{}_1^u} \chi{}_{\mathcal{H}}(r)(D_u\psi{})^2. \\
\end{split}
\end{equation}
When \(R\ge C(\varpi{}_i,r_{\text{min}})\), we have \(1-\mu{}\ge 1/2\) by \eqref{mu-estimate},
and so the second term can be controlled by Step 1b (up to a multiple of
\(R^4\)), as can the third term. The final term is controlled by data.

\step{Step 2: Control of zeroth order term away from the horizon.} The goal of this step is to show
that for \(R_1\ge R_0\) and \(\epsilon\in (0,1/2)\) we have
\begin{equation}
\begin{split}
\iint_{\mathcal{R}\cap \set{r\ge 2R_1}} \frac{\psi{}^2}{r^4}&\le \epsilon{}^{1-\eta{}_0}C(\mathfrak{b}_{0},R_1)\iint_{\mathcal{R}\cap \set{\epsilon{}^{-1}R_1}} \frac{1}{r^{1+\eta{}_0}}[(D_{u}\psi{})^2 + (D_{v}\psi{})^2] \\
&\qquad + C(\mathfrak{b}_0,R_1,\epsilon{})(E[\psi{},\Sigma{}_1] + \text{Err}^{ }[\psi{},\mathcal{R}] + r\psi{}^2(p_{\mathrm{fut}})).
\end{split}
\end{equation}
We multiply \cref{X-tilde-identity-equation} by a non-decreasing cutoff function \(\chi_{R_1}(r)\)
that is \(0\) in \(\set{r\le R_1}\) and \(1\) in \(\set{r\ge 2R_1}\) and
integrate by parts. We illustrate how to handle the first term on the left side of
\cref{X-tilde-identity-equation}:
\begin{equation}
\begin{split}
&\iint_{\mathcal{R}} \chi{}_{R_1}(r)\partial{}_u((D_v\psi{})^2r^2\kappa{} + r\psi{}^2)\dd{}u\dd{}v \\
&= \iint_{\mathcal{R}} \chi{}_{R_1}'(r)(-\nu{})((D_v\psi{})^2r^2\kappa{} + r\psi{}^2)\dd{}u\dd{}v + \int _{\Sigma{}_1^{\text{out}}}\chi{}_{R_1}(r)((D_v\psi{})^2r^2\kappa{} + r\psi{}^2)\dd{}v  \\
&\qquad - \int _{\Sigma{}_2^{\text{out}}}\chi{}_{R_1}(r)((D_v\psi{})^2r^2\kappa{} + r\psi{}^2)\dd{}v \\
&\ge - \int _{\Sigma{}_2^{\text{out}}}((D_v\psi{})^2r^2\kappa{} + r\psi{}^2)\dd{}v\\
&\ge -C(\mathfrak{b}_0)(E[\psi{},\Sigma{}_1] + \text{Err}^{ }[\psi{},\mathcal{R}] + r\psi{}^2(p_{\mathrm{fut}})).
\end{split}
\end{equation}
The first terms on the second line have a good sign, and the term on the third line is
estimated by Step 1 and multiple uses of Hardy's inequality in both the \(u\)-
and \(v\)-directions (it is here that we use the assumption that the past/future
boundary of \(\mathcal{R}\) is connected). One similarly obtains the analogous
estimate for the second term on the left of \cref{X-tilde-identity-equation}. In
view of the above estimates and \cref{mass-redshift}, we obtain
\begin{equation}\label{energy-2-0}
\iint_{\mathcal{R}\cap \set{r\ge 2R_1}} \frac{\psi{}^2}{r^4}\le C(\mathfrak{b}_0)(E[\psi{},\Sigma{}_1] + \text{Err}^{ }[\psi{},\mathcal{R}] + r\psi{}^2(p_{\mathrm{fut}})) + \iint_{\mathcal{R}\cap \set{r\ge R_1}} \side{RHS}{X-tilde-identity-equation}.
\end{equation}

It is clear that
\begin{equation}\label{energy-2-1}
\begin{split}
\iint_{\mathcal{R}\cap \set{r\ge R_1}} \side{RHS}{X-tilde-identity-equation} &\le  C\iint_{\mathcal{R}} \widetilde{X}\psi{}\Box{}\psi{} + C(\mathfrak{b}_{0})\iint_{\set{r\ge R_1}} \frac{1}{r^2}[(D_{u}\psi{})^2 + (D_{v}\psi{})^2] \\
&\qquad + C(E_{\text{weak}}[\psi{},\mathcal{R}] + r\psi{}^2(p_{\mathrm{fut}})).
\end{split}
\end{equation}
Although the first term on the right of \eqref{energy-2-1} has a bad sign in front
of the \(U\psi{}\Box{}\psi{}\) term (recall that \(\text{Err}_{U}\) has a
positive weight), by Step 1 we can add a multiple of the non-negative quantity
\(E_{\text{weak}}[\psi{},\mathcal{R}]\) (where the constant depends on \(R_1\))
to both sides of \eqref{energy-2-1} and (in conjunction with \cref{energy-2-0}) obtain
\begin{equation}\label{energy-2-2}
\begin{split}
\iint_{\mathcal{R}\cap \set{r\ge 2R_1}} \frac{\psi{}^2}{r^4}&\le C(\mathfrak{b}_{0},R_1)\iint_{\set{r\ge R_1}} \frac{1}{r^2}[(D_{u}\psi{})^2 + (D_{v}\psi{})^2] \\
&\qquad C(\mathfrak{b}_0,R_1)(E[\psi{},\Sigma{}_1] + \text{Err}^{ }[\psi{},\mathcal{R}] + r\psi{}^2(p_{\mathrm{fut}})).
\end{split}
\end{equation}
To complete the proof, note that the first term on the right of \cref{energy-2-2}
is bounded by \(C(R_1,\epsilon{})E_{\text{weak}}[\psi{},\mathcal{R}]\) in the
region \(\set{R_1\le r\le \epsilon^{-1}R_1}\), and in the remaining region we
have
\begin{equation}
\iint_{\set{r\ge \epsilon{}^{-1}R_1}} \frac{1}{r^2}[(D_{u}\psi{})^2 + (D_{v}\psi{})^2]\le  \epsilon{}^{1-\eta{}_0} \iint_{\mathcal{R}\cap \set{r\ge \epsilon{}^{-1}R_1}} \frac{1}{r^{1+\eta{}_0}}[(D_{u}\psi{})^2 + (D_{v}\psi{})^2].
\end{equation}

\step{Step 3: Improved integrated local energy decay away from the horizon.} In this
step we show that for \(R_1 = \max (2,2R_0,\eta{}_0^{-1})\), we have
\begin{equation}\label{energy-Step-3}
\iint_{\mathcal{R}\cap \set{r\ge R_1}} \frac{1}{r^{1+\eta{}_0}}\Bigl[(D_{u}\psi{})^2 + (D_{v}\psi{})^2 + \frac{\psi{}^2}{r^2}\Bigr]\le  C(\mathfrak{b}_{0})(E[\psi{},\Sigma{}_1] + r\psi{}^2(p_{\mathrm{fut}}) + \text{Err}[\psi{},\mathcal{R}]).
\end{equation}
We will track the dependence of constants on \(\eta_0\). Begin by writing
\begin{equation}\label{energy-3-1}
\side{LHS}{X-estimate-equation-2} = [\text{(I)} + \text{(II)} + \text{(III)}]r^2\kappa{}(-\nu{})
\end{equation}
for
\begin{equation}
\begin{split}
\text{(I)} &= -f'(r)[(1-\mu{})^2(D_u\psi{})^2 + (D_v\psi{})^2], \\
\text{(II)} &=  4 \frac{f(r)}{r}\mu{}D_{u}\psi{}D_{v}\psi{}, \\
\text{(III)} &=  \Bigl[\frac{2}{1-\mu{}}f''(r)r + (f'(r)r - f(r)) \frac{4(\varpi{}-\mathbf{e}^2/r)}{r^4}\Bigr]\psi{}^2.
\end{split}
\end{equation}
Now \cref{mu-estimate} and the choice of \(R_1\) imply that for \(f(r) =
r^{-\eta_0}\) and \(r\ge R_1\), we have
\begin{equation}
\frac{\eta{}_0}{r^{1+\eta{}_0}}[(D_{u}\psi{})^2 + (D_{v}\psi{})^2]\lesssim \text{(I)},\quad \abs{\text{(II)}}\lesssim \frac{1}{r^{2+\eta{}_0}}D_{u}\psi{}D_{v}\psi{}\lesssim r^{-1}\text{(I)},\quad \frac{\eta{}_0}{r^{3+\eta{}_0}}\lesssim \text{(III)}.
\end{equation}
We can therefore multiply \eqref{energy-3-1} by a cutoff \(\chi(r)\) that is \(0\) in
\(\set{r\le R_1/2}\) and \(1\) in \(\set{r\ge R_1}\), and integrate over \(\mathcal{R}\) to get
\begin{equation}\label{energy-3-2}
\begin{split}
\iint_{\mathcal{R}\cap \set{r\ge R_1}} \frac{1}{r^{1+\eta{}_0}}\Bigl[(D_{u}\psi{})^2 + (D_{v}\psi{})^2 + \frac{\psi{}^2}{r^2}\Bigr]&\lesssim \eta{}_0^{-1}\iint_{\mathcal{R}\cap \set{r\ge R_1/2}} \chi{}_{R_1}(r)\side{RHS}{X-estimate-equation-2}\dd{}u\dd{}v.
\end{split}
\end{equation}
Now we integrate by parts on the right side. Using Hardy's inequality in both
the \(u\)- and \(v\)-directions and Step 1, one can estimate the boundary terms
by \(E[\psi{},\Sigma{}_1] + \text{Err}^{ }[\psi{},\mathcal{R}] +
r\psi^2(p_{\mathrm{fut}})\). The bulk term involving \(\Box{}\psi{}\) is bounded
by \(\mathrm{Err}_U[\psi{},\mathcal{R}]\) and
\(\mathrm{Err}_{v}[\psi{},\mathcal{R}]\) and
\(\text{Err}_{\text{zo}}[\psi{},\mathcal{R}\cap \set{R_1/2}]\). In view of Steps
1 and 2 (applied with \(R_1/2\) in place of \(R_1\)), the support properties of
\(\chi_{R_1}\), and \cref{zeroth-order-geometric-bounds}, the bulk terms arising
from integration by parts are bounded by
\begin{equation}\label{energy-3-3}
\begin{split}
&C(R_1)\eta{}_0^{-1}\iint_{\mathcal{R}\cap \set{R_1/2\le r\le R_1}} \frac{1}{r^4}\Bigl[(D_{u}\psi{})^2 + (D_{v}\psi{})^2 + \psi{}^2\Bigr] \\
&\le  \epsilon{}^{1-\eta{}_0}C(\mathfrak{b}_{0},R_1,\eta{}_0)\iint_{\mathcal{R}\cap \set{r\ge \epsilon{}^{-1}R_1/2}} \frac{1}{r^{1+\eta{}_0}}[(D_{u}\psi{})^2 + (D_{v}\psi{})^2] \\
&\qquad +C(\mathfrak{b}_0,R_1,\epsilon{},\eta{}_0)(E[\psi{},\Sigma{}_1] + \text{Err}^{ }[\psi{},\mathcal{R}] + r\psi{}^2(p_{\mathrm{fut}})).
\end{split}
\end{equation}
To conclude the proof, choose \(\epsilon{}\) small enough to absorb the first term on the
right of \eqref{energy-3-3} to the left of \eqref{energy-3-2}.

\step{Step 4: Control of zeroth order term near the horizon.} Here we upgrade the
control of the zeroth order term in \cref{energy-Step-3} from Step 3 to
\begin{equation}\label{energy-step-4}
 \iint_{\mathcal{R}} \frac{\psi{}^2}{r^{3+\eta{}_0}}\le \side{RHS}{energy-Step-3}.
\end{equation}
Let \(R > 0\). Use Hardy's inequality in the \(u\)-direction to obtain
\begin{equation}\label{energy-4-1}
\begin{split}
\iint_{\mathcal{R}\cap \set{r\le R}}\frac{\psi{}^2}{r^{3+\eta{}_0}}r^2\kappa{}(-\nu{})\dd{}u\dd{}v &\lesssim \iint_{\mathcal{R}\cap \set{r\le R}} (-\nu{})\chi^2 \psi{}^2\dd{}u\dd{}v \\
&\le  \iint_{\mathcal{R}\cap \set{r\le R}} \frac{r^2}{(-\nu{})}(\partial{}_u\psi{})^2 \dd{}u\dd{}v +\int _{\Sigma{}_1^{\text{out}}\cap \set{r\le R}} r\psi{}^2 \dd{}v \\
&\qquad + \int_{v_1(R)}^{v_2(R)} r\psi{}^2(u_R(v),v)\dd{}v.
\end{split}
\end{equation}
Integrate \cref{energy-4-1} over \(R\in [R_1,R_1 + 1]\), note the monotonicity in
\(R\) of the first term on the left and first two terms on the right, and change
variables in the integration of the third term on the right to get
\begin{equation}\label{energy-4-2}
\begin{split}
&\iint_{\mathcal{R}\cap \set{r\le R_1}}\frac{\psi{}^2}{r^{3+\eta{}_0}}r^2\kappa{}(-\nu{})\dd{}u\dd{}v \\
&\le \iint_{\mathcal{R}\cap \set{r\le R_\ast{}}}\frac{\psi{}^2}{r^{3+\eta{}_0}}r^2\kappa{}(-\nu{})\dd{}u\dd{}v \\
&\le \iint_{\mathcal{R}\cap \set{r\le R_1+1}} \frac{r^2}{(-\nu{})}(\partial{}_u\psi{})^2 \dd{}u\dd{}v +\int _{\Sigma{}_1^{\text{out}}\cap \set{r\le R_1+1}} r\psi{}^2 \dd{}v + \iint_{\mathcal{R}\cap \set{R_1\le r\le R_1+1}} (-\nu{})r\psi{}^2(u_R(v),v)\dd{}u\dd{}v.
\end{split}
\end{equation}
We are done by a Hardy inequality argument as in Step 2 that controls the term
on \(\Sigma_1^{\text{out}}\), together with the results of Steps 2--3 and
\cref{kappa-sign,kappa-estimate}.

\step{Step 5: Improved integrated local energy decay near the horizon.} We now control
the final term in the definition of \(E_{\mathrm{bulk}}\) by showing that
\begin{equation}
\iint_{\mathcal{R}\cap \set{r\le R_0}}e^{-c_\nu'(u-v)}(\partial{}_{v}\psi{})^2\dd{}u\dd{}v\lesssim E[\psi{},\Sigma{}_{1}] + E_{\text{weak}}[\psi{},\mathcal{R}] + \text{Err}[\psi{},\mathcal{R}].
\end{equation}
Let \(g(u,v) = \chi{}(r(u,v))h(u,v)\) for a cutoff function \(\chi{}\) satisfying
\(\chi{}(r) = 1\) in \(\set{r\le R_0}\) and \(\chi{}(r) = 0\) in \(\set{r\ge
2R_0}\) and \(h(u,v) = e^{-c_{\nu}'(u-v)}\), where \(c_{\nu}' > 0\) is as in
\cref{nu-exponential-bound}. Observe that \(h\ge 0\) and \(-\partial_{u}h\ge 0\),
and \(h\lesssim 1\) in \(\set{r\le 2R_0}\) by \cref{nu-exponential-bound}.
Integrating \cref{irregular-vf-identity} gives
\begin{equation}
\begin{split}
\iint_{\mathcal{R}\cap \set{r\le R_0}}(-\partial{}_{u}h)(\partial{}_{v}\psi{})^2r^2\dd{}u\dd{}v &\le  \int_{\Sigma{}_{1}^{out}}g(D_{v}\psi{})r^2\kappa{}\dd{}v + \iint_{\mathcal{R}\cap \set{r\le 2R_0}} 2hD_{v}\psi{}\Box{}\psi{}  \\
&\qquad + \iint_{\mathcal{R}\cap \set{r\le 2R_0}} \frac{2}{r}(1-\mu{})h\abs{D_{u}\psi{}}\abs{D_{v}\psi{}} \\
& \lesssim E[\psi{},\Sigma{}_{1}] + \text{Err}_{v}[\psi{},\mathcal{R}] + E_{\text{weak}}[\psi{},\mathcal{R}].
\end{split}
\end{equation}
We are done after computing
\begin{equation}\label{near-horizon-duh-computation}
-\partial_{u}h = -(-\nu{})\chi_{R}'e^{-c_{\nu}'(u-v)} + c_{\mathcal{H}}\chi_{R}e^{-c_{\nu}'(u-v)}
\end{equation}
and estimating the bulk term arising from the first term on the right of
\cref{near-horizon-duh-computation} by \(E_{\text{weak}}[\psi{},\mathcal{R}]\).
\end{proof}
\subsection{Boundedness of the unweighted energy \texorpdfstring{\(\mathcal{E}_{\alpha{}}\)}{}}
\label{sec:orge59094a}
The goal of this section is to prove \cref{E-alpha-bound-2}.
\begin{lemma}
Let \(\alpha{}\ge 0\) and let \(L\in \Gamma^\alpha{}\). For \(0 < s < 1/4\), we have
\begin{equation}
\begin{split}
\iint_{\mathcal{R}} r^{1+2s}\abs{\Box{}L\varphi{}}^2&\le C(\mathfrak{B}_0^{\circ },\mathfrak{g}_0,\alpha{})E_{\mathrm{bulk}}[L\varphi{},\mathcal{R}] + C(\mathcal{G}_{\alpha,s})[E_{\mathrm{bulk}}[\Gamma{}^{<\alpha{}}\varphi{},\mathcal{R}] + E_{4s,\mathrm{bulk}}[\Gamma{}^{<\alpha{}}\varphi{},\mathcal{R}]].
\end{split}
\end{equation}
\label{energy:wave-pointwise}
\end{lemma}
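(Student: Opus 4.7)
The plan is to square the pointwise commutator bound \eqref{wave-pointwise-1} of \cref{wave-pointwise-inequalities} and integrate. Applying $(\sum_i a_i)^2 \le n\sum_i a_i^2$ to \eqref{wave-pointwise-1} and multiplying through by $r^{-3+4s}$ to convert the weight $r^{4-2s}$ on the left into $r^{1+2s}$, one obtains a pointwise inequality
\begin{equation*}
r^{1+2s}|\Box L\varphi|^2 \lesssim C_1^2 r^{-3+4s}\bigl[(\partial_v L\varphi)^2+(UL\varphi)^2\bigr] + C_2^2 r^{-3+4s}\bigl[\mathbf{1}_{r\ge\Rc}(\partial_v(r\Gamma^{<\alpha}\varphi))^2 + (\partial_v\Gamma^{<\alpha}\varphi)^2 + (U\Gamma^{<\alpha}\varphi)^2\bigr],
\end{equation*}
where $C_1 = C(\mathfrak{B}_0^{\circ},\mathfrak{g}_0,\alpha)$ and $C_2 = C(\mathcal{G}_{\alpha,s})$. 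I would then integrate this against the natural spacetime volume form $r^2\kappa(-\nu)\,du\,dv$, which is the measure implicit in the notation $\iint$ throughout the paper; each term on the right acquires a factor of $r^2\kappa(-\nu)$, producing integrands of the schematic form $r^{-1+4s}\kappa(-\nu)(\cdots)^2\,du\,dv$.

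The remaining task is to match each of these five integrands to the weight defining the appropriate bulk energy. For the two terms involving $L\varphi$, the weight $r^{-1+4s}\kappa(-\nu)$ is dominated by the weight $r^{1-\eta_0}\kappa(-\nu)$ appearing in $E_{\mathrm{bulk}}[L\varphi,\mathcal{R}]$, since the hypothesis $s < 1/4$ ensures $-1+4s < 0$, $r\ge r_{\min}>0$, and the geometric factors cancel exactly. The same comparison handles the two $\Gamma^{<\alpha}\varphi$ terms against $E_{\mathrm{bulk}}[\Gamma^{<\alpha}\varphi,\mathcal{R}]$. For the term involving $\partial_v(r\Gamma^{<\alpha}\varphi)$, support in $\{r \ge \Rc\} \subset \{r \ge R_0\}$ means that $\kappa$ and $\lambda$ are both bounded between positive constants by \cref{zeroth-order-geometric-bounds}, giving $r^{-1+4s}\kappa(-\nu) \lesssim r^{4s-1}(-\nu)/\lambda$, which is precisely the integrand of $E_{4s,\mathrm{bulk}}[\Gamma^{<\alpha}\varphi,\mathcal{R}]$.

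Thus the proof is a direct calculation and there is no substantive obstacle once the pointwise bound \eqref{wave-pointwise-1} is in hand. The only inputs beyond \eqref{wave-pointwise-1} are the bounds on geometric quantities recorded in \cref{zeroth-order-geometric-bounds} and the elementary polynomial comparison arranged by the hypothesis $s < 1/4$. The lemma is calibrated so that every term arising from squaring \eqref{wave-pointwise-1} is absorbed precisely by one of the three bulk quantities on the right-hand side, with the indicator $\mathbf{1}_{r\ge\Rc}$ in \eqref{wave-pointwise-1} dictating which of the two bulk quantities on $\Gamma^{<\alpha}\varphi$ is used.
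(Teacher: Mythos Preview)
Your proposal is correct and follows exactly the approach the paper takes: the paper's proof is the one-liner ``This follows from the pointwise estimate \cref{wave-pointwise-1} and the definitions of the energy quantities involved,'' and you have simply unpacked that line in detail. The weight-matching you describe (in particular using $4s<1$ so that $r^{-1+4s}\lesssim r^{1-\eta_0}$ against $E_{\mathrm{bulk}}$, and using boundedness of $\kappa,\lambda$ in $\{r\ge \Rc\}$ for the $E_{4s,\mathrm{bulk}}$ term) is precisely what the paper has in mind.
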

\begin{proof}
This follows from the pointwise estimate \cref{wave-pointwise-1} and the
definitions of the energy quantities involved.
\end{proof}
\begin{lemma}[Estimate for zeroth order error term]
Let \(\epsilon{} > 0\) and let \(L\in \Gamma^\alpha{}\). For \(s\in{}[\eta_0,1/4]\), we have
\begin{equation}
\begin{split}
\mathrm{Err}_{\mathrm{zo}}[L\varphi{},\mathcal{R}]&\le \epsilon{}E_{\mathrm{bulk}}[L\varphi{},\mathcal{R}] + C(\mathcal{G}_{\alpha,s},\epsilon{})[E_{\mathrm{bulk}}[\Gamma{}^{<\alpha{}}\varphi{},\mathcal{R}] + E_{4s,\mathrm{bulk}}[\Gamma{}^{<\alpha{}}\varphi{},\mathcal{R}]]
\end{split}
\end{equation}
\label{err-zo-bound}
\end{lemma}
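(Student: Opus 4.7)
The plan is to apply Young's inequality to the pointwise integrand of $\mathrm{Err}_{\mathrm{zo}}[L\varphi,\mathcal{R}]$, splitting the product $|L\varphi|\cdot|\Box L\varphi|/r$ into a term quadratic in $L\varphi$ that gets absorbed by $E_{\mathrm{bulk}}[L\varphi,\mathcal{R}]$ and a term quadratic in $\Box L\varphi$ to which we feed the pointwise wave-commutation estimate of \cref{energy:wave-pointwise}.

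Concretely, I would write, for a small parameter $\delta>0$ to be chosen,
\begin{equation*}
\frac{|L\varphi|}{r}\,|\Box L\varphi|\le \frac{\delta}{2}\frac{(L\varphi)^2}{r^{3+\eta_0}}+\frac{1}{2\delta}\,r^{1+\eta_0}(\Box L\varphi)^2.
\end{equation*}
The exponent $3+\eta_0$ is chosen so that, after multiplication by the volume element $r^2\kappa(-\nu)\,du\,dv$ (which is suppressed in our schematic notation), the first term on the right becomes precisely $\tfrac{\delta}{2}\,(L\varphi)^2 r^{-1-\eta_0}\kappa(-\nu)\,du\,dv$, which is exactly the zeroth order summand in $E_{\mathrm{bulk}}[L\varphi,\mathcal{R}]$. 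Taking $\delta=2\epsilon$ absorbs this contribution into the $\epsilon E_{\mathrm{bulk}}[L\varphi,\mathcal{R}]$ on the right side of the desired estimate.

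For the remaining term, after restoring the volume form the integrand is $\tfrac{1}{2\delta}r^{1+\eta_0}(\Box L\varphi)^2\cdot r^2\kappa(-\nu)$. Since $s\ge \eta_0$ by hypothesis and $r\ge r_{\min}$ (so $r^{1+\eta_0}\le C(r_{\min})\,r^{1+2s}$), this is controlled by $C(\epsilon)\iint_{\mathcal{R}}r^{1+2s}(\Box L\varphi)^2$, which by \cref{energy:wave-pointwise} is bounded by
\begin{equation*}
C(\mathfrak{B}_0^{\circ},\mathfrak{g}_0,\alpha)\,E_{\mathrm{bulk}}[L\varphi,\mathcal{R}]+C(\mathcal{G}_{\alpha,s})\bigl[E_{\mathrm{bulk}}[\Gamma^{<\alpha}\varphi,\mathcal{R}]+E_{4s,\mathrm{bulk}}[\Gamma^{<\alpha}\varphi,\mathcal{R}]\bigr].
\end{equation*}
The constant $C(\mathfrak{B}_0^{\circ},\mathfrak{g}_0,\alpha)$ is independent of $\delta$ (in particular of $\epsilon$), so to absorb its contribution one chooses $\delta$ slightly smaller, replacing the overall factor $\epsilon$ by, say, $\epsilon/2$ in the Young splitting—this only affects the $C(\mathcal{G}_{\alpha,s},\epsilon)$ constant multiplying the lower-order bulk quantities.

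There is no genuine obstacle here; the entire argument is a bookkeeping exercise once one identifies the correct Young exponent $a=3+\eta_0$. The only place where one must be mildly careful is the weight-matching between the pointwise bound $r^{1+2s}(\Box L\varphi)^2$ provided by \cref{energy:wave-pointwise} and the $r^{1+\eta_0}$ weight produced by Young's inequality, which is exactly what the hypothesis $s\ge \eta_0$ ensures.
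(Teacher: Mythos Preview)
There is a genuine gap. In your Young splitting you place the small parameter $\delta$ on the $(L\varphi)^2/r^{3+\eta_0}$ term, so that the $(\Box L\varphi)^2$ term carries the factor $1/(2\delta)$. When you then invoke \cref{energy:wave-pointwise}, its top-order contribution $C(\mathfrak{B}_0^{\circ},\mathfrak{g}_0,\alpha)\,E_{\mathrm{bulk}}[L\varphi,\mathcal{R}]$ is \emph{also} multiplied by $1/(2\delta)$, and the total coefficient of $E_{\mathrm{bulk}}[L\varphi,\mathcal{R}]$ becomes
\[
\tfrac{\delta}{2} + \tfrac{1}{2\delta}\,C(\mathfrak{B}_0^{\circ},\mathfrak{g}_0,\alpha),
\]
which is bounded below by $\sqrt{C(\mathfrak{B}_0^{\circ},\mathfrak{g}_0,\alpha)}$ and cannot be made $\le \epsilon$ for arbitrary $\epsilon>0$. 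Your final paragraph (``one chooses $\delta$ slightly smaller'') has the sign of the $\delta$-dependence backward: shrinking $\delta$ makes this term worse, not better.

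The paper breaks this deadlock by introducing a second parameter. It splits $\mathcal{R}$ into $\{r\le R\}$ and $\{r\ge R\}$. On $\{r\ge R\}$ it runs Young the \emph{other} way---putting $\delta$ on the $r^{1+2s}(\Box L\varphi)^2$ side and $\delta^{-1}$ on the $(L\varphi)^2$ side---so that \cref{energy:wave-pointwise} now contributes $\delta\,C(\mathfrak{B}_0^{\circ},\mathfrak{g}_0,\alpha)E_{\mathrm{bulk}}[L\varphi,\mathcal{R}]$, while the $\delta^{-1}(L\varphi)^2$ term carries an extra $r^{-s}\le R^{-\eta_0}$ and is bounded by $\delta^{-1}R^{-\eta_0}E_{\mathrm{bulk}}[L\varphi,\mathcal{R}]$; one first takes $\delta$ small and then $R$ large. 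On $\{r\le R\}$, \cref{energy:wave-pointwise} alone is still not enough, and the paper instead uses the pointwise commutator bound \cref{wave-pointwise-1} together with the structural observation that the top-order piece $\alpha_U|UL\varphi| + \alpha_V|\partial_vL\varphi|$ of $\Box L\varphi$ appears only when $\alpha_U>0$ or $\alpha_V>0$, in which case $|L\varphi|$ itself can be rewritten (via \cref{rearrangement-formula}) as $|D\Gamma^{<\alpha}\varphi|$. The product $|L\varphi|\cdot(\text{top-order part})$ is then lower-order times top-order, and Young with $\delta$ on the top-order factor again gives a small coefficient in front of $E_{\mathrm{bulk}}[L\varphi,\mathcal{R}]$.
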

\begin{proof}
We omit the volume form \(r^2\kappa{}(-\nu{})\dd{}u\dd{}v\). Introduce a small \(\delta{} > 0\) and
large \(R > 0\) to be chosen in the course of the proof. Use an \(r\)-weighted
Young's inequality to decompose
\begin{equation}\label{energy:err-zero-bound}
\begin{split}
\mathrm{Err}_{\mathrm{zo}}[L\varphi{},\mathcal{R}] &= \iint_{\mathcal{R}\cap \set{r\le R}} \frac{\abs{L\varphi{}}}{r}\abs{\Box{}L\varphi{}}+ \iint_{\mathcal{R}\cap \set{r\ge R}} \frac{\abs{L\varphi{}}}{r}\abs{\Box{}L\varphi{}} \\
&\le \iint_{\mathcal{R}\cap \set{r\le R}} \frac{\abs{L\varphi{}}}{r}\abs{\Box{}L\varphi{}} + \delta{}\iint_{\mathcal{R}\cap \set{r\ge R}} r^{1+2s}\abs{\Box{}L\varphi{}}^2 + \delta{}^{-1}\iint_{\mathcal{R}\cap \set{r\ge R}} r^{-s} \frac{1}{r^{1+s}}\frac{\abs{L\varphi{}}^2}{r^2}  \\
&\coloneqq{}\text{(I)} + \text{(II)} + \text{(III)}.
\end{split}
\end{equation}
To estimate term \(\text{(I)}\), we use the structure of \(\Box{}L\varphi{}\). In
particular, \(\Box{}L\varphi{}\) contains a top order term only when \(L\)
contains a \(U\) or a \(V\), in which case we can bring this vector field to the
front of \(L\) and treat the corresponding term as the derivative of a lower
order term. That is, we write \(\alpha_X\abs{L\varphi{}} \le
C(\mathcal{C}_{<\alpha{}},\alpha{})\abs{D\Gamma^{<\alpha{}}\varphi{}}\) for
\(X\in \set{U,V}\). By \cref{rearrangement-formula,V-dv-coordinate-change}, we have
\begin{equation}\label{zo-I-0}
\begin{split}
\alpha{}_U\abs{L\varphi{}}\abs{UL\varphi{}} + \alpha{}_V\abs{L\varphi{}}\abs{\partial{}_vL\varphi{}}&\le C(\mathcal{C}_{<\alpha{}})(\abs{\partial{}_vL\varphi{}} + \abs{UL\varphi{}})(\abs{\partial{}_v\Gamma{}^{<\alpha{}}\varphi{}} + \abs{U\Gamma{}^{<\alpha{}}\varphi{}}).
\end{split}
\end{equation}
By \cref{wave-pointwise-1,zo-I-0} and Young's inequality, we have
\begin{equation}\label{zo-I-1}
\begin{split}
\frac{\abs{L\varphi{}}}{r}\abs{\Box{}L\varphi{}}&\le  r^{-3+s}C(\mathfrak{B}_0^{\circ },\mathfrak{g}_0,\alpha{})\Bigl[\alpha{}_V \abs{L\varphi{}}\abs{\partial{}_vL\varphi{}} + \alpha{}_U \abs{L\varphi{}}\abs{UL\varphi{}}\Bigr] \\
&\qquad + r^{-3+s}C(\mathcal{G}_{\alpha,s})\abs{L\varphi{}}[\mathbf{1}_{r\ge \Rc}\abs{\partial{}_v(r\Gamma{}^{<\alpha{}}\varphi{})} + \abs{\partial{}_v\Gamma{}^{<\alpha{}}\varphi{}} + \abs{U\Gamma{}^{<\alpha{}}\varphi{}}] \\
&\le \delta{}r^{-3+s}[\abs{\partial{}_vL\varphi{}}^2 + \abs{UL\varphi{}}^2] \\
&\qquad + r^{-3+s}C(\mathcal{G}_{\alpha,s},\delta{})(\abs{\partial{}_v\Gamma{}^{<\alpha{}}\varphi{}}^2 + \abs{U\Gamma{}^{<\alpha{}}\varphi{}}^2 + \mathbf{1}_{r\ge \Rc}\abs{\partial{}_v(r\Gamma{}^{<\alpha{}}\varphi{})}^2).
\end{split}
\end{equation}
Since \(s < 1\), integrating \cref{zo-I-1} with the volume form \(r^2\kappa{}(-\nu{})\dd{}u\dd{}v\)
gives
\begin{equation}\label{zo-I-2}
\text{(I)}\le \delta{}E_{\text{bulk}}[L\varphi{},\mathcal{R}] + C(\mathcal{G}_{\alpha,s},R,\delta{})(E_{\text{bulk}}[\Gamma{}^{<\alpha{}}\varphi{},\mathcal{R}] + E_{s,\text{bulk}}[\Gamma{}^{<\alpha{}}\varphi{},\mathcal{R}]).
\end{equation}
Next, \cref{energy:wave-pointwise} implies that
\begin{equation}\label{zo-II}
\text{(II)}\le \delta{}C(\mathfrak{B}_0^{\circ },\mathfrak{g}_0,\alpha{})E_{\mathrm{bulk}}[L\varphi{},\mathcal{R}] + C(\mathcal{G}_{\alpha,s})[E_{\mathrm{bulk}}[\Gamma{}^{<\alpha{}}\varphi{},\mathcal{R}] + E_{4s,\mathrm{bulk}}[\Gamma{}^{<\alpha{}}\varphi{},\mathcal{R}]].
\end{equation}
Finally, use \(s\ge \eta_0\) to estimate
\begin{equation}\label{zo-III}
\text{(III)} \le \delta{}^{-1}R^{-\eta{}_0}E_{\text{bulk}}[L\varphi{},\mathcal{R}].
\end{equation}
Substitute \cref{zo-I-2,zo-II,zo-III} into \cref{energy:err-zero-bound} to obtain
\begin{equation}
\begin{split}
\mathrm{Err}_{\mathrm{zo}}[L\varphi{},\mathcal{R}] &\le (\delta{} + \delta{}^{-1}R^{-\eta{}_0})C(\mathfrak{B}_0^{\circ },\mathfrak{g}_0,\alpha{})E_{\mathrm{bulk}}[L\varphi{},\mathcal{R}] \\
&\qquad + C(\mathcal{G}_{\alpha,s},R,\delta{})[E_{\mathrm{bulk}}[\Gamma{}^{<\alpha{}}\varphi{},\mathcal{R}] + E_{4s,\mathrm{bulk}}[\Gamma{}^{<\alpha{}}\varphi{},\mathcal{R}]].
\end{split}
\end{equation}
To conclude, choose \(\delta{}\) small based on \(\epsilon{}\) and
\(C(\mathfrak{B}_0^{\circ },\mathfrak{g}_0,\alpha{})\), then choose \(R\) large depending
on \(\eta_0\), \(\delta{}\), and \(\epsilon{}\).
\end{proof}
\begin{lemma}[Estimate for \(U\)- and \(v\)-error terms]
Let \(\epsilon{} > 0\). We have
\begin{equation}
\begin{split}
\mathrm{Err}_{U}[L\varphi{},\mathcal{R}]&\le (\epsilon{} + \Rc^{-1+2s})C(\mathcal{B}_0^{\circ },\mathfrak{g}_0,\alpha{})E_{\mathrm{bulk}}[L\varphi{},\mathcal{R}] \\
&\qquad + C(\mathcal{G}_{\alpha,s},\epsilon{})[E_{\mathrm{bulk}}[\Gamma{}^{<\alpha{}}\varphi{},\mathcal{R}] + E_{3s,\mathrm{bulk}}[\Gamma{}^{<\alpha{}}\varphi{},\mathcal{R}]].
\end{split}
\end{equation}
The same estimate holds with \(\mathrm{Err}_{v}[L\varphi{},\mathcal{R}]\) on the left side.
\label{err-wave-bound}
\end{lemma}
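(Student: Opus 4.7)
The plan is to imitate the proof of \cref{err-zo-bound}, replacing the use of \eqref{wave-pointwise-1} with \eqref{wave-pointwise-2} for the $\mathrm{Err}_U$ bound and with \eqref{wave-pointwise-3} for the $\mathrm{Err}_v$ bound. The key structural ingredient is that by \eqref{weight-U-inequality} the weight $w_U$ is pointwise nonnegative and bounded above by $C(\mathfrak{b}_0)$, while \eqref{wave-pointwise-2} bounds the \emph{signed} quantity $UL\varphi\cdot\Box L\varphi$ from above (not just its absolute value). The signed pointwise bound is what lets one exploit the redshift effect here: the right side of \eqref{wave-pointwise-2} is obtained by dropping a nonpositive contribution of the form $-\alpha_U f_U(UL\varphi)^2$, and this drop would not be legitimate when multiplying against $w_U$ if $w_U$ could change sign.

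For $\mathrm{Err}_U$, the first step is to plug \eqref{wave-pointwise-2} into the definition and absorb $w_U\cdot r^{s-2}$ against the volume form $r^2\kappa(-\nu)\,du\,dv$ to get an integral of the form
\[
C(\mathfrak{b}_0,\mathcal{G}_{\alpha,s})\iint_\mathcal{R} r^{s}\bigl(|UL\varphi|+|\partial_v L\varphi|\bigr)\bigl(|U\Gamma^{<\alpha}\varphi|+|\partial_v\Gamma^{<\alpha}\varphi|+\mathbf{1}_{r\ge\Rc}|\partial_v(r\Gamma^{<\alpha}\varphi)|\bigr)\kappa(-\nu)\,du\,dv.
\]
I would then apply a weighted Young's inequality to each binary product, with Young's parameter $\delta$ and $r$-exponent $a$ to be chosen, and split the spacetime integral into $\{r\le\Rc\}$ and $\{r\ge\Rc\}$. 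The top-order square $r^{s+a}(UL\varphi)^2$ (and similarly for $\partial_v L\varphi$) sits against the $E_{\text{bulk}}[L\varphi]$ density $r^{1-\eta_0}\kappa(-\nu)$: on $\{r\le\Rc\}$ the ratio is a constant depending on $\Rc$, so choosing $\delta$ small yields the $\epsilon$ factor; on $\{r\ge\Rc\}$ the ratio is at most $\Rc^{s+a-1+\eta_0}$, which with $a=s-\eta_0$ and $s\ge\eta_0$ is dominated by $\Rc^{-1+2s}$. Dually, the lower-order square is absorbed into $E_{\text{bulk}}[\Gamma^{<\alpha}\varphi]$ for the pairings with $|U\Gamma^{<\alpha}\varphi|$ and $|\partial_v\Gamma^{<\alpha}\varphi|$, and into $E_{3s,\text{bulk}}[\Gamma^{<\alpha}\varphi]$ for the pairing with $|\partial_v(r\Gamma^{<\alpha}\varphi)|$, using \cref{zeroth-order-geometric-bounds} to compare the densities $\kappa(-\nu)$ and $(-\nu)/\lambda$ in $\{r\ge R_0\}$.

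For $\mathrm{Err}_v$ the procedure is identical after swapping \eqref{wave-pointwise-3} for \eqref{wave-pointwise-2}. The only new feature is the extra top-order quadratic $\mathbf{1}_{r\ge\Rc}|\partial_v L\varphi|(|UL\varphi|+|\partial_v L\varphi|)$ appearing in \eqref{wave-pointwise-3}; it is handled by a further Young's inequality and, being localized to $\{r\ge\Rc\}$, its contribution is bounded by $\Rc^{-1+2s}C\cdot E_{\text{bulk}}[L\varphi,\mathcal{R}]$ via the same $r$-weight comparison as above.

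The main obstacle is the $r$-weight bookkeeping for the pairing involving $|\partial_v(r\Gamma^{<\alpha}\varphi)|$ in $\{r\ge\Rc\}$: the top-order half forces $a\le s-\eta_0$ to achieve the $\Rc^{-1+2s}$ gain against $E_{\text{bulk}}[L\varphi]$, while the lower-order half forces $a\ge 1-2s$ in order to fit the $r^{3s-1}$ density of $E_{3s,\text{bulk}}[\Gamma^{<\alpha}\varphi]$, and for small $s$ these constraints pull in opposite directions. Resolving this will likely require either using distinct Young's weights on the distinct lower-order pairings, or first using the decomposition $\partial_v(r\psi)=\lambda\psi+r\partial_v\psi$ together with a Hardy-type inequality as in \cref{hardy-type-inequality} to split the pairing into a piece absorbed by $E_{\text{bulk}}[\Gamma^{<\alpha}\varphi]$ and a piece absorbed by $E_{3s,\text{bulk}}[\Gamma^{<\alpha}\varphi]$.
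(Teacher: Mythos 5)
Your strategy coincides with the paper's: both bounds are obtained by inserting the signed estimate \eqref{wave-pointwise-2} (resp.\ \eqref{wave-pointwise-3}) against the nonnegative weight $w_U$ from \eqref{weight-U-inequality}, applying an $r$-weighted Young's inequality, and treating the extra top-order term in $\mathrm{Err}_v$ via the $\mathbf{1}_{r\ge \Rc}$ localization and the factor $\mathbf{1}_{r\ge\Rc}r^{s}\le \Rc^{-1+2s}r^{1-s}$. Your handling of that last term is exactly the paper's.

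The ``main obstacle'' you flag at the end is, however, not a real obstacle: it is created by your attempt to extract the $\Rc^{-1+2s}$ gain from the top-order half of the Young's inequality applied to the terms coming from \eqref{wave-pointwise-2}. The statement only requires the factor $\epsilon+\Rc^{-1+2s}$ in front of $E_{\mathrm{bulk}}[L\varphi,\mathcal{R}]$, so for those terms the $\epsilon$ from Young's inequality alone suffices and no region splitting is needed. Concretely, the paper takes the single splitting $r^{s}AB\le \epsilon r^{1-s}A^2+C(\epsilon)r^{-1+3s}B^2$ (your $a=1-2s$): the top-order weight $r^{1-s}$ is dominated by the bulk density $r^{1-\eta_0}$ up to a constant because $s\ge\eta_0$, giving $\epsilon\,C(\mathfrak{b}_0)E_{\mathrm{bulk}}[L\varphi,\mathcal{R}]$ globally; the lower-order weight $r^{-1+3s}$ lands the plain derivatives in $E_{\mathrm{bulk}}[\Gamma^{<\alpha}\varphi,\mathcal{R}]$ (since $3s-1\le 1-\eta_0$ for $s$ small) and the $\partial_v(r\Gamma^{<\alpha}\varphi)$ pairing exactly in the $r^{3s-1}/\lambda$ density of $E_{3s,\mathrm{bulk}}[\Gamma^{<\alpha}\varphi,\mathcal{R}]$, using $\lambda\le C$, $(-\nu)\ge 1/2$ in $\{r\ge R_0\}$. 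The $\Rc^{-1+2s}$ factor in the statement is needed only for the genuinely new top-order quadratic in \eqref{wave-pointwise-3}, which you already treat correctly. So your ``fix 1'' is the right instinct, but even a single uniform Young exponent works, and no Hardy inequality is required.
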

\begin{proof}
In what follows, we will freely use the bounds
\((-\nu{}),\kappa{},\kappa{}^{-1},\mathbf{1}_{r\ge
R}\lambda{}^{-1},r_{\mathrm{min}}\le C(\mathfrak{b}_0)\). Use the positivity of
\(w_U\), \(\kappa{}\), and \((-\nu{})\), the bound \(\abs{w_U}\le
C(\mathfrak{b}_0)\), the estimates for \(UL\varphi{}\Box{}L\varphi{}\) in
\cref{wave-pointwise-2}, and an \(r\)-weighted Young's inequality to obtain
\begin{equation}\label{err-wave-bound-prep}
\begin{split}
w_UUL\varphi{}\Box{}L\varphi{}r^{2}\kappa{}(-\nu{}) &\le r^{s}C(\mathcal{G}_{\alpha,s})(\abs{UL\varphi{}} + \abs{\partial{}_vL\varphi{}})(\abs{U\Gamma{}^{<\alpha{}}\varphi{}} + \abs{\partial{}_v\Gamma{}^{<\alpha{}}\varphi{}} +  \mathbf{1}_{r\ge \Rc}\abs{\partial{}_v(r\Gamma{}^{<\alpha{}}\varphi{})})\kappa{}(-\nu{}) \\
&\le\epsilon{}r^{1-s}(\abs{UL\varphi{}}^2 + \abs{\partial{}_vL\varphi{}}^2)\kappa{}(-\nu{}) \\
&\qquad + r^{-1+3s}C(\mathcal{G}_{\alpha,s},\epsilon{})(\abs{U\Gamma{}^{<\alpha{}}\varphi{}}^2 + \abs{\partial{}_v\Gamma{}^{<\alpha{}}\varphi{}})\kappa{}(-\nu{}) \\
&\qquad + C(\mathcal{G}_{\alpha,s},\epsilon{})r^{-1+3s}\mathbf{1}_{r\ge \Rc}\abs{\partial{}_v(r\Gamma{}^{<\alpha{}}\varphi{})}.
\end{split}
\end{equation}
Integrate (noting that \(s\ge \eta_0\)) to obtain
\begin{equation}\label{err-wave-bound-prep-2}
\text{Err}_U[L\varphi{},\mathcal{R}]\le \epsilon{}E_{\text{bulk}}[L\varphi{},\mathcal{R}] + C(\mathcal{G}_{\alpha,s},\epsilon{})[E_{\text{bulk}}[\Gamma{}^{<\alpha{}}\varphi{},\mathcal{R}] + E_{3s,\text{bulk}}[\Gamma{}^{<\alpha{}}\varphi{},\mathcal{R}]].
\end{equation}
Using the estimate for \(\partial{}_vL\varphi{}\Box{}L\varphi{}\) in
\cref{wave-pointwise-3}, one
obtains
\begin{equation}
\begin{split}
\abs{\partial{}_vL\varphi{}}\abs{\Box{}L\varphi{}}r^2\kappa{}(-\nu{})&\le \mathbf{1}_{r\ge \Rc}C(\mathcal{B}_0^{\circ },\mathfrak{g}_0,\alpha{})r^{s}\abs{\partial{}_vL\varphi{}}(\abs{UL\varphi{}} + \abs{\partial{}_vL\varphi{}}) + \side{RHS}{err-wave-bound-prep} \\
&\le C(\mathcal{B}_0^{\circ },\mathfrak{g}_0,\alpha{})\Rc^{-1+2s}r^{1-s}(\abs{\partial{}_vL\varphi{}}^2 + \abs{UL\varphi{}}^2) + \side{RHS}{err-wave-bound-prep} \\
\end{split}
\end{equation}
Integrate to obtain
\begin{equation}\label{err-wave-bound-prep-3}
\text{Err}_{v}[L\varphi{},\mathcal{R}]\le \Rc^{-1+2s}C(\mathcal{B}_0^{\circ },\mathfrak{g}_0,\alpha{})E_{\text{bulk}}[L\varphi{},\mathcal{R}] + \side{RHS}{err-wave-bound-prep-2} \\
\end{equation}
Combine \cref{err-wave-bound-prep-2,err-wave-bound-prep-3} to complete the proof.
\end{proof}
For the next two lemmas we introduce the following notation:
\begin{itemize}
\item \(C^{\mathrm{out}}(u)\) for the constant-\(u\) curve \(\set{(u',v') \in  \mathcal{R}: u' = u}\),
\item \(v_1(u)\) and \(v_2(u)\) for the past and future endpoints of
\(C^{\text{out}}(u)\) i(the assumption that \(\mathcal{R}\) is an admissible
spacetime region ensures that \(C^{\text{out}}(u)\) is connected),
\item \(u_1\) and \(u_2\) for the smallest and largest \(u\)-values attained in \(\mathcal{R}\).
\end{itemize}
\begin{lemma}[Estimate for quartic error term]
Let \(\abs{\alpha{}}\ge 1\) and \(L\in \Gamma^\alpha{}\). There is \(c_{\nu} > 0\) such that for \(p\ge 1\) and
\(\epsilon{} > 0\), we have
\begin{equation}
\begin{split}
\mathrm{Err}_{\mathrm{quartic}}[L\varphi{},\mathcal{R}]&\le \epsilon{}\mathcal{P}_{<\alpha{},p}^2 E_{\mathrm{bulk}}[L\varphi{},\mathcal{R}] \\
&\qquad + C(\mathfrak{B}_0,\mathcal{P}_{<\alpha{},p},\epsilon{})\Bigl[E_{\mathrm{bulk}}[\Gamma{}^{<\alpha{}}\varphi{},\mathcal{R}] + \int_{u_1}^{u_2} (u^{-p} + e^{-c_{\nu}u})E[L\varphi{},C^{\mathrm{out}}(u)]\dd{}u\Bigr].
\end{split}
\end{equation}
\label{err-quartic-bound}
\end{lemma}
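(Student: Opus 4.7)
My strategy is to reduce the spacetime quartic integral, by Fubini, to an outgoing-cone integral $\int F(u)\,E[L\varphi,C^{\mathrm{out}}(u)]\,du$, and then to bound the $v$-supremum multiplier $F(u)$ using the norm $\mathcal{P}_{<\alpha,p}$ modulo residues absorbable into $\epsilon\mathcal{P}_{<\alpha,p}^2E_{\mathrm{bulk}}[L\varphi,\mathcal{R}]$ and $CE_{\mathrm{bulk}}[\Gamma^{<\alpha}\varphi,\mathcal{R}]$. Using the identity $|D_vL\varphi|^2r^2\kappa=r^2(\partial_vL\varphi)^2/\kappa$ and pulling a supremum in $v$ out of each outgoing-cone integral gives
\begin{equation*}
\mathrm{Err}_{\mathrm{quartic}}[L\varphi,\mathcal{R}]\le\int_{u_1}^{u_2}F(u)\,E[L\varphi,C^{\mathrm{out}}(u)]\,du,\qquad F(u):=\sup_v\bigl[r(-\nu)(D_u\varphi)^2\bigr](u,v).
\end{equation*}
The main pointwise input is $|D_u\varphi|^2\le r^{-2}\mathcal{P}_{<\alpha,p}^2$, which follows from $\|rU\varphi\|_{L^\infty}\le\mathcal{P}_p[\varphi]\le\mathcal{P}_{<\alpha,p}$ (this is available because $|\alpha|\ge 1$ places $\varphi=\Gamma^0\varphi$ strictly below $\alpha$ in the hierarchy).

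For the near-horizon contribution to $F(u)$ (where $r(u,v)\le R_0$), I invoke $(-\nu)\le Ce^{-c_\nu(u-v)}$ from \eqref{nu-exponential-bound}. Since the near-horizon component of $E_{\mathrm{bulk}}[L\varphi,\mathcal{R}]$ carries the weight $e^{-c_\nu'(u-v)}(\partial_vL\varphi)^2$, choosing $c_\nu\ge c_\nu'$ matches these exponentials so that the contribution is absorbed either into $\epsilon\mathcal{P}_{<\alpha,p}^2E_{\mathrm{bulk}}[L\varphi,\mathcal{R}]$, or — using the coordinate comparison $v\le Gu$ in $\{r\le R_0\}$ from \eqref{v-u-r-compare-2}, which gives $e^{-c_\nu(u-v)}\le e^{-c_\nu u}\cdot e^{c_\nu Gu}$ with the latter folded into the prefactor $C(\mathfrak{B}_0,\mathcal{P}_{<\alpha,p})$ — into the $\int e^{-c_\nu u}E[L\varphi,C^{\mathrm{out}}(u)]\,du$ term.

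For the far-from-horizon contribution (where $r(u,v)\ge R_0$), I combine the pointwise bound $|D_u\varphi|^2\le r^{-2}\mathcal{P}_{<\alpha,p}^2$ with the $\tau$-weighted bound $\|r^{1/2}\tau^{p/2}\varphi\|_{L^\infty}\le\mathcal{P}_p[\varphi]$ to transfer the $u$-decay of $\varphi$ into $D_u\varphi$. Since $\tau=u$ in $\{r\ge R_0\}$ by \eqref{tau-estimates}, this yields an effective $u^{-p}$ factor and produces the $\int u^{-p}E[L\varphi,C^{\mathrm{out}}(u)]\,du$ term. The $r$-weight discrepancy between the integrand (effective $r^{-1}$) and the bulk norm (effective $r^{-1-\eta_0}$) is handled by an $\eta_0$-weighted Young's inequality, peeling off an $\epsilon$-multiple of $E_{\mathrm{bulk}}[L\varphi,\mathcal{R}]$ and a lower-order residue controlled by $E_{\mathrm{bulk}}[\Gamma^{<\alpha}\varphi,\mathcal{R}]$.

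\textbf{Main obstacle.} The principal difficulty is the transfer of $\tau$-decay from $\varphi$ to $D_u\varphi$ in the far-from-horizon regime, since $\mathcal{P}_p[\varphi]$ contains no $\tau$-weighted $L^\infty$ bound on $U\varphi$ directly. I anticipate using the wave-equation transport $\partial_v(r\partial_u\varphi)=(-\nu)\partial_v\varphi$, integrated in $v$ from a reference curve on which the initial value of $r\partial_u\varphi$ is controlled by $\|rU\varphi\|_{L^\infty}$ and $\mathfrak{B}_0$, combined with $|\partial_v\varphi|\le r^{-2}\mathcal{P}_{<\alpha,p}$ and the $\tau$-weighted interpolation identity $\|r^\rho\tau^{p/2+1/2-\rho-\eta_0/2}\varphi\|_{L^\infty}\lesssim\mathcal{P}_p[\varphi]$ for $\rho\in[1/2,1]$ noted after the definition of $\mathcal{P}_p$, to produce the required $u^{-p}$ weight on $D_u\varphi$.
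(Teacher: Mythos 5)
Your reduction to $\int F(u)\,E[L\varphi,C^{\mathrm{out}}(u)]\,du$ has the right shape, but two steps fail. The first is the near-horizon exponential. From $(-\nu)\le Ce^{-c_\nu(u-v)}$ and $v\le Gu$ you write $e^{-c_\nu(u-v)}\le e^{-c_\nu u}e^{c_\nu Gu}$ and propose to fold $e^{c_\nu Gu}$ into the constant $C(\mathfrak{B}_0,\mathcal{P}_{<\alpha,p},\epsilon)$. But $e^{c_\nu Gu}$ depends on $u$ and grows exponentially (the product is $e^{c_\nu(G-1)u}$ with $G\ge 1$), so it is not a constant; and indeed $e^{-c_\nu(u-v)}$ does \emph{not} decay in $u$ in $\set{r\le R_0}$, since along a curve of constant $r$ near the horizon $u-v$ stays bounded while $u\to\infty$. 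The missing idea is a dichotomy in $\tau$: on $\set{r\le R_0}\cap\set{\tau\ge\epsilon^{-1}}$ one uses the smallness $\tau^{-p}\le\epsilon$ and absorbs the remaining $e^{-c_\nu(u-v)}(\partial_vL\varphi)^2$ integrand into the dedicated near-horizon component of $E_{\mathrm{bulk}}[L\varphi,\mathcal{R}]$ (this is the $\epsilon E_{\mathrm{bulk}}$ term in the statement); on $\set{r\le R_0}\cap\set{\tau\le\epsilon^{-1}}$ the coordinate comparison forces $v\le v_*(\epsilon)$ to be genuinely bounded, so $e^{c_\nu v}\le e^{c_\nu v_*}$ \emph{is} an honest constant and the $e^{-c_\nu u}$ weight emerges legitimately.

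The second gap is the source of the $u^{-p}$ weight in $\set{r\ge R_0}$, which you correctly identify as the main obstacle but do not resolve. Your plan to transfer $\tau$-decay from $\varphi$ to $U\varphi$ by integrating $\partial_v(r\partial_u\varphi)=-\nu\,\partial_v\varphi$ cannot work with only $\mathcal{P}_p[\varphi]$ as input: that norm gives $r^2\abs{\partial_v\varphi}\le\mathcal{P}_p[\varphi]$ with no $\tau$-weight, so the integrated bound yields boundedness of $rU\varphi$ but no $u$-decay. The fix is much simpler and does not go through $\varphi$ at all: the quantity you need, $r\tau^p(U\varphi)^2$, is controlled by $\norm{r^{1/2}\tau^{p/2}U\varphi}_{L^\infty}^2\le\mathcal{P}_p[U\varphi]^2\le\mathcal{P}_{U,p}^2$, i.e.\ by the pointwise norm one order up, and $U<\alpha$ whenever $\abs{\alpha}\ge 1$ and $\alpha\ne U$, so this sits inside $\mathcal{P}_{<\alpha,p}$. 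That leaves the case $\alpha=U$ genuinely uncovered by your argument, and it must be treated separately by a different mechanism: there one uses the wave equation \eqref{vU-box} to get the pointwise bound $(\partial_vU\varphi)^2\le r^{-2}C(\mathfrak{b}_0)[(\partial_v\varphi)^2+(U\varphi)^2]\le C(\mathfrak{b}_0)r^{-2}\mathcal{P}_{0,0}^2$, after which the entire quartic term is controlled by $E_{\mathrm{bulk}}[\varphi,\mathcal{R}]=E_{\mathrm{bulk}}[\Gamma^{<U}\varphi,\mathcal{R}]$ with no $u$-decay needed at all.
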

\begin{proof}
In Step 1 (see \cref{err-quartic-step-1}), we prove the estimate for \(\alpha{} = U\), and
Step 2 (see \cref{err-quartic-step-2}) implies the estimate for \(\alpha{} > U\).

\step{Step 1: Estimate for \(\mathrm{Err}_{\mathrm{quartic}}[U\varphi{},\mathcal{R}]\).} We start by showing that
\begin{equation}\label{err-quartic-step-1}
\mathrm{Err}_{\mathrm{quartic}}[U\varphi{},\mathcal{R}]\le C(\mathfrak{b}_0)\mathcal{P}_{0,0}^2E_{\mathrm{bulk}}[\varphi{},\mathcal{R}].
\end{equation}
By \cref{vU-box}, we have
\begin{equation}
(\partial{}_vU\varphi{})^2\le r^{-2}C(\mathfrak{b}_0)[(\partial{}_v\varphi{})^2 + (U\varphi{})^2]\le C(\mathfrak{b}_0)\mathcal{P}_{0,0}^2.
\end{equation}
Thus
\begin{equation}
\begin{split}
\mathrm{Err}_{\mathrm{quartic}}[U\varphi{},\mathcal{R}] &\le  \iint_{\mathcal{R}} \frac{r}{\kappa{}}(U\varphi{})^2(\partial{}_vU\varphi{})^2(-\nu{})\dd{}u\dd{}v \le C(\mathfrak{b}_0)\mathcal{P}_{0,0}^2\iint_{\mathcal{R}} \frac{r^{-1}}{\kappa{}}(U\varphi{})^2(-\nu{})\dd{}u\dd{}v\\
&\le C(\mathfrak{b}_0)\mathcal{P}_{0,0}^2E_{\mathrm{bulk}}[\varphi{},\mathcal{R}],
\end{split}
\end{equation}
which proves \cref{err-quartic-step-1}.

\step{Step 2: Estimate for \(\mathrm{Err}_{\mathrm{quartic}}[\psi{},\mathcal{R}]\) in terms
of norms of \(U\varphi{}\).} Let \(\epsilon{} > 0\) and \(p\ge 1\). We claim that
\begin{equation}\label{err-quartic-step-2}
\mathrm{Err}_{\mathrm{quartic}}[\psi{},\mathcal{R}]\le \epsilon{}\mathcal{P}_{U,p}^2E_{\mathrm{bulk}}[\psi{},\mathcal{R}] + \mathcal{P}_{U,p}^2C(\mathfrak{B}_0,\epsilon{})\int_{u_1}^{u_2} (u^{-p} + e^{-c_{\nu}u})E[\psi{},C^{\mathrm{out}}_{v_1,v_2}(u)]\dd{}u.
\end{equation}
Since \(r_{\mathrm{min}}^{-1}\le C\), we immediately obtain
\begin{equation}\label{err-quartic-step-2-prep}
\begin{split}
\mathrm{Err}_{\mathrm{quartic}}[\psi{},\mathcal{R}]\le C\mathcal{P}_{U,p}^2\iint_{\mathcal{R}} \frac{r^2}{\kappa{}}\tau{}^{-p}(\partial{}_v\psi{})^2(-\nu{})\dd{}u\dd{}v.
\end{split}
\end{equation}
Split the integral in \cref{err-quartic-step-2-prep} as
\begin{equation}\label{err-quartic-step-2-prep-2}
\iint_{\mathcal{R}} =
\iint_{\mathcal{R}\cap \set{r\le R_0}\cap \set{\tau{}\ge \epsilon{}^{-1}}} + \iint_{\mathcal{R}\cap \set{r\le R_0}\cap \set{\tau{}\le \epsilon{}^{-1}}} +  \iint_{\mathcal{R}\cap \set{r\ge R_0}} \coloneqq{} \text{(I)} + \text{(II)} + \text{(III)}.
\end{equation}
To handle term \(\text{(I)}\), let \(c_{\nu}>0\) be as in \cref{nu-exponential-bound}
and control \((-\nu{})\) with \cref{nu-exponential-bound}, use the smallness of
\(\tau{}^{-1}\) and the near-horizon term in \(E_{\text{bulk}}\) (see \cref{sec:bulk-energy}):
\begin{equation}\label{err-quartic-step-2-1}
\text{(I)} \le C\epsilon{}\iint_{\mathcal{R}\cap \set{r\ge R_0}} e^{-c_{\nu}(u-v)} \frac{r^2}{\kappa{}}(\partial{}_v\psi{})^2\dd{}u\dd{}v\le C\epsilon{}E_{\mathrm{bulk}}[\psi{},\mathcal{R}].
\end{equation}
Now we treat term \(\text{(II)}\). Recall the notation introduced before the statement of this lemma. By
\cref{v-u-r-compare}, in the region \(\set{r\le R_0}\cap \set{\tau{}\le
\epsilon^{-1}}\) we have \(v\le v_\ast{}\) for some \(v_\ast{}\le
C(\mathfrak{B}_0,\epsilon{})\). Since \(\tau{}\ge 1\), the bound on \((-\nu{})\) in
\(\set{r\le R_0}\) from \cref{nu-exponential-bound} implies
\begin{equation}\label{err-quartic-step-2-2}
\begin{split}
\text{(II)} &\le   C\int_{u_1}^{u_2} \int_{v_1(u)}^{\min (v_2(u),v_{R_0}(u),v_\ast{})} e^{-c_{\nu}(u-v)}\frac{r^2}{\kappa{}}(\partial{}_v\psi{})^2(u,v)\dd{}v\dd{}u\\
&\le  Ce^{c_{\nu}v_\ast{}}\int_{u_1}^{u_2} e^{-c_{\nu}u}\int_{v_1}^{v_2} \frac{r^2}{\kappa{}}(\partial{}_v\psi{})^2(u,v)\dd{}v\dd{}u = C(\mathcal{B}_0,\epsilon{})\int_{u_1}^{u_2} e^{-c_{\nu}u} E[\psi{},C^{\mathrm{out}}(u)]\dd{}u.
\end{split}
\end{equation}
Finally, we handle term \(\text{(III)}\). We have \((-\nu{})\le C\) by \cref{nu-bound},
and \(\tau{} = u\) in \(\set{r\ge R_0}\) by definition, so
\begin{equation}\label{err-quartic-step-2-3}
\begin{split}
\text{(III)} &= \int_{u_1}^{u_2} \int_{\max(v_1(u),v_{R_0}(u))}^{v_2(u)} \frac{r^2}{\kappa{}}\tau{}^{-p}(\partial{}_v\psi{})^2(-\nu{})(u,v)\dd{}v\dd{}u \\
&\le C\int_{u_1}^{u_2} u^{-p}\int_{\max(v_1(u),v_{R_0}(u))}^{v_2(u)} \frac{r^2}{\kappa{}}(\partial{}_v\psi{})^2(u,v)\dd{}v\dd{}u\le C\int_{u_1}^{u_2} u^{-p}E[\psi{},C^{\mathrm{out}}(u)]\dd{}u.
\end{split}
\end{equation}
We conclude \cref{err-quartic-step-2} by substituting
\cref{err-quartic-step-2-1,err-quartic-step-2-2,err-quartic-step-2-3} into
\cref{err-quartic-step-2-prep-2,err-quartic-step-2-prep}.
\end{proof}
\begin{lemma}
Let \(L\in \Gamma^\alpha{}\). For \(\Rc\) sufficiently large depending on
\(C(\mathcal{B}_0^{\circ },\mathfrak{g}_0,\alpha{})\) and \(s\ge \eta_0\) sufficiently
small, we have
\begin{equation}
E[L\varphi{},\Sigma{}_2] + E_{\mathrm{bulk}}[L\varphi{},\mathcal{R}]\le C(\mathcal{G}_{\alpha,s},\mathcal{P}_{<\alpha{},1+\eta{}_0})[E[\Gamma{}^{\le \alpha{}}\varphi{},\Sigma{}_1] + r\abs{\Gamma{}^{\le \alpha{}}\varphi{}}^2(p_{\mathrm{fut}}) + E_{4s,\mathrm{bulk}}[\Gamma{}^{<\alpha{}}\varphi{},\mathcal{R}]].
\end{equation}
\label{energy-control-1}
\end{lemma}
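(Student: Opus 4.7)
My plan is to apply the base energy estimate of \cref{energy-estimate} to $\psi = L\varphi$ in the region $\mathcal{R}$, and then estimate each of the four error terms $\mathrm{Err}_U$, $\mathrm{Err}_v$, $\mathrm{Err}_{\mathrm{quartic}}$, and $\mathrm{Err}_{\mathrm{zo}}$ using the corresponding \cref{err-wave-bound,err-quartic-bound,err-zo-bound} derived just above. The boundary contribution $E[L\varphi,\Sigma_1] + r\abs{L\varphi}^2(p_{\mathrm{fut}})$ produced by \cref{energy-estimate} is trivially controlled by $E[\Gamma^{\le \alpha}\varphi,\Sigma_1] + r\abs{\Gamma^{\le \alpha}\varphi}^2(p_{\mathrm{fut}})$, so what remains is to absorb the error terms against the left-hand side and the lower-order bulk quantities $E_{\mathrm{bulk}}[\Gamma^{<\alpha}\varphi,\mathcal{R}]$ and $E_{4s,\mathrm{bulk}}[\Gamma^{<\alpha}\varphi,\mathcal{R}]$ that appear on the right of the desired inequality.

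First I would deal with $\mathrm{Err}_U$, $\mathrm{Err}_v$, and $\mathrm{Err}_{\mathrm{zo}}$. \Cref{err-wave-bound,err-zo-bound} give bounds of the schematic form $(\epsilon + \Rc^{-1+2s})C(\mathfrak{B}_0^{\circ},\mathfrak{g}_0,\alpha)E_{\mathrm{bulk}}[L\varphi,\mathcal{R}]$ plus an acceptable piece involving $\mathcal{G}_{\alpha,s}$ and the lower-order bulk quantities. Thus by first choosing $\epsilon>0$ small and $s\ge \eta_0$ small enough (with $s<1/4$), and then choosing $\Rc$ large enough depending on $C(\mathfrak{B}_0^{\circ},\mathfrak{g}_0,\alpha)$ so that the combined coefficient of $E_{\mathrm{bulk}}[L\varphi,\mathcal{R}]$ falls below, say, $1/100$, all of these error contributions are absorbed into the bulk term on the left-hand side.

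The more delicate piece is the quartic error. I would apply \cref{err-quartic-bound} with the choice $p = 1+\eta_0$, so that $u^{-p}$ is integrable on $[1,\infty)$; after absorbing the $\epsilon \mathcal{P}_{<\alpha,1+\eta_0}^2 E_{\mathrm{bulk}}[L\varphi,\mathcal{R}]$ contribution exactly as above, what remains is
\begin{equation*}
C(\mathfrak{B}_0,\mathcal{P}_{<\alpha,1+\eta_0},\epsilon)\int_{u_1}^{u_2} (u^{-1-\eta_0} + e^{-c_\nu u})\,E[L\varphi,C^{\mathrm{out}}(u)]\,\dd{}u.
\end{equation*}
To handle this, I would re-run the entire argument on the sub-rectangles $\mathcal{R}(u_0)\coloneqq \mathcal{R}\cap \set{u\le u_0}$ as $u_0$ varies in $[u_1,u_2]$; the future boundary of $\mathcal{R}(u_0)$ is, up to a piece of $\Sigma_2$, the null segment $C^{\mathrm{out}}(u_0)$, so after the same absorption step the estimate reads
\begin{equation*}
E[L\varphi,C^{\mathrm{out}}(u_0)] \le D_0 + C_0\int_{u_1}^{u_0} (u^{-1-\eta_0} + e^{-c_\nu u}) E[L\varphi,C^{\mathrm{out}}(u)]\,\dd{}u,
\end{equation*}
where $D_0$ collects the data and the already-controlled lower-order bulk terms and $C_0 = C(\mathfrak{B}_0,\mathcal{P}_{<\alpha,1+\eta_0})$. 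Grönwall's inequality with the integrable kernel $u^{-1-\eta_0}+e^{-c_\nu u}$ yields a uniform bound on $E[L\varphi,C^{\mathrm{out}}(u_0)]$, which feeds back into the estimate on the full region $\mathcal{R}$ and closes the argument.

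The main obstacle is the bookkeeping required to ensure all constants remain of the advertised form $C(\mathcal{G}_{\alpha,s},\mathcal{P}_{<\alpha,1+\eta_0})$: in particular the Grönwall factor $\exp(C_0\int_1^\infty (u^{-1-\eta_0}+e^{-c_\nu u})\,\dd{}u)$ must be controlled using only those quantities (the $\eta_0^{-1}$ and $c_\nu^{-1}$ factors are harmless since $\eta_0$ is a fixed global constant and $c_\nu$ depends only on quantities already in $\mathfrak{B}_0$). A subtle point is that the Grönwall step should really be run on the monotone quantity $F(u_0)\coloneqq \sup_{u\le u_0} E[L\varphi,C^{\mathrm{out}}(u)] + \sup_{u\le u_0} E_{\mathrm{bulk}}[L\varphi,\mathcal{R}(u)]$ so that the $\epsilon E_{\mathrm{bulk}}[L\varphi,\mathcal{R}(u_0)]$ term reappearing from re-applying the base energy estimate on each $\mathcal{R}(u_0)$ is absorbed uniformly in $u_0$ before invoking the integral inequality.
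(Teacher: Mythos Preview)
Your argument is essentially the paper's, and every step you describe (absorbing via $\epsilon$ and $\Rc$, choosing $p=1+\eta_0$ in the quartic estimate, the Grönwall argument on the subregions $\mathcal{R}(u_0)$) is correct. There is, however, one genuine gap: you assert that the lower-order bulk quantity $E_{\mathrm{bulk}}[\Gamma^{<\alpha}\varphi,\mathcal{R}]$ ``appears on the right of the desired inequality,'' but it does not---only the $r^p$-weighted quantity $E_{4s,\mathrm{bulk}}[\Gamma^{<\alpha}\varphi,\mathcal{R}]$ is there. These two quantities are not comparable: $E_{\mathrm{bulk}}$ carries $\partial_u$-terms and near-horizon contributions that $E_{4s,\mathrm{bulk}}$ does not see. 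All three of \cref{err-zo-bound,err-wave-bound,err-quartic-bound} produce a term $E_{\mathrm{bulk}}[\Gamma^{<\alpha}\varphi,\mathcal{R}]$ on the right, so this must be dealt with separately.

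The paper closes this gap by an outer induction on $\alpha$: the base case $\alpha=0$ is trivial since $\mathrm{Err}[\varphi,\mathcal{R}]=0$, and for the inductive step the hypothesis bounds each $E_{\mathrm{bulk}}[L'\varphi,\mathcal{R}]$ with $L'\in\Gamma^{<\alpha}$ by $C(\mathcal{G}_{<\alpha,s},\mathcal{P}_{<\alpha,1+\eta_0})[E[\Gamma^{<\alpha}\varphi,\Sigma_1]+r\abs{\Gamma^{<\alpha}\varphi}^2(p_{\mathrm{fut}})+E_{4s,\mathrm{bulk}}[\Gamma^{<\alpha}\varphi,\mathcal{R}]]$, which is already of the advertised form. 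Once you insert this step your proof matches the paper's exactly.
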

\begin{proof}
We induct on \(\alpha{}\). The base case \(\alpha{}= 0\) follows from \cref{energy:en-bound},
since \(\mathrm{Err}[\varphi{},\mathcal{R}]\) vanishes. It is therefore enough
to show that
\begin{equation}\label{energy-control-1-prep}
E[L\varphi{},\Sigma{}_2] + E_{\mathrm{bulk}}[L\varphi{},\mathcal{R}]\le C(\mathcal{G}_{\alpha,s},\mathcal{P}_{<\alpha{},p})[E[L\varphi{},\Sigma{}_1] + r\abs{L\varphi{}}^2(u_1,v_2) + E_{\mathrm{bulk}}[\Gamma{}^{<\alpha{}}\varphi{},\mathcal{R}] +  E_{4s,\mathrm{bulk}}[\Gamma{}^{<\alpha{}}\varphi{},\mathcal{R}]].
\end{equation}

Let \(\epsilon_1,\epsilon_2,\epsilon_3 > 0\). Combining
\cref{err-zo-bound,err-wave-bound,err-quartic-bound} (and using \(\abs{\alpha{}}\ge
1\) so that \(\mathfrak{B}_0^{\circ }\le C(\mathcal{G}_{\alpha{},s})\)) gives
\begin{equation}
\begin{split}
\text{Err}[L\varphi{},\mathcal{R}]&\le(\epsilon{}_1 + \epsilon{}_2 + \epsilon{}_3\mathcal{P}_{<\alpha{},1+\eta{}_0}^2 + \Rc^{-1+2s})C(\mathcal{B}_0^{\circ },\mathfrak{g}_0,\alpha{})E_{\mathrm{bulk}}[L\varphi{},\mathcal{R}] \\
&\qquad + C(\mathcal{G}_{\alpha,s},\mathcal{P}_{<\alpha{},1+\eta{}_0},\epsilon{}_1,\epsilon{}_2,\epsilon{}_3)\Bigl[E_{\mathrm{bulk}}[\Gamma{}^{<\alpha{}}\varphi{},\mathcal{R}] + E_{4s,\mathrm{bulk}}[\Gamma{}^{<\alpha{}}\varphi{},\mathcal{R}] \\
&\qquad + \int_{u_1}^{u_2} (u^{-1-\eta{}_0} + e^{-c_{\nu}u})E[L\varphi{},C^{\mathrm{out}}(u)]\dd{}u\Bigr].
\end{split}
\end{equation}
One can choose \(\epsilon_1\), \(\epsilon_2\), and \(\epsilon_3\) small based on \(C(\mathcal{B}_0^{\circ },\mathfrak{g}_0
,\mathcal{P}_{<\alpha{},1 + \eta_0},\alpha{})\) and \(\Rc\) large based on
\(C(\mathcal{B}_0^{\circ },\mathfrak{g}_0,\alpha{})\), to obtain
\begin{equation}\label{energy:en-bound-1}
\begin{split}
\mathrm{Err}[L\varphi{},\mathcal{R}]&\le \frac{1}{2}E_{\mathrm{bulk}}[L\varphi{},\mathcal{R}] + C(\mathcal{G}_{\alpha,s},\mathcal{P}_{<\alpha{},1+\eta{}_0})\Bigl[E_{\mathrm{bulk}}[\Gamma{}^{<\alpha{}}\varphi{},\mathcal{R}] +E_{4s,\mathrm{bulk}}[\Gamma{}^{<\alpha{}}\varphi{},\mathcal{R}] \\
&\qquad + \int_{u_1}^{u_2} (u^{-p} + e^{-c_{\nu}u})E[L\varphi{},C^{\mathrm{out}}(u)\Bigr]. \\
\end{split}
\end{equation}
Now return to \cref{energy:en-bound} and absorb the
\(E_{\mathrm{bulk}}[L\varphi{},\mathcal{R}]\) term on the right of
\cref{energy:en-bound-1} to the left, and then apply the same analysis to the
subregion \(\mathcal{R}(u_\ast{})\coloneqq{}\mathcal{R}\cap \set{u\le u_\ast{}}\) for \(u_\ast{}\in [u_1,u_2]\) to obtain
\begin{equation}\label{energy:en-bound-gronwall}
\begin{split}
E[L\varphi{},\Sigma{}_2(u_\ast{})] + E_{\mathrm{bulk}}[L\varphi{},\mathcal{R}(u_\ast{})]&\le C(\mathcal{G}_{\alpha,s},\mathcal{P}_{<\alpha{},1+\eta{}_0})\Bigl[E[L\varphi{},\Sigma{}_1] + r\abs{L\varphi{}}^2(u_1,v_2) + E_{\mathrm{bulk}}[\Gamma{}^{<\alpha{}}\varphi{},\mathcal{R}] \\
&\qquad + E_{4s,\mathrm{bulk}}[\Gamma{}^{<\alpha{}}\varphi{},\mathcal{R}] + \int_{u_1}^{u_\ast{}} (u^{-1-\eta{}_0} + e^{-c_{\nu}u})E[L\varphi{},C^{\mathrm{out}}(u)]\dd{}u\Bigr],
\end{split}
\end{equation}
where we have written \(\Sigma_2(u_\ast{})\) for the future boundary of
\(\mathcal{R}(u_\ast{})\). Since \(E[L\varphi{},C^{\mathrm{out}}(u_\ast{})]\le
E[L\varphi{},\Sigma_2(u_\ast{})]\), we can use Grönwall's
inequality in \cref{energy:en-bound-gronwall} to obtain
\begin{equation}\label{energy:en-bound-2}
\begin{split}
\max _{u\in [u_1,u_\ast{}]}E[L\varphi{},C^{\mathrm{out}}(u)]&\le C(\mathcal{G}_{\alpha,s},\mathcal{P}_{<\alpha{},p})[E[L\varphi{},\Sigma{}_1] + r\abs{L\varphi{}}^2(u_1,v_2) \\
&\qquad +  E_{\mathrm{bulk}}[\Gamma{}^{<\alpha{}}\varphi{},\mathcal{R}] + E_{4s,\mathrm{bulk}}[\Gamma{}^{<\alpha{}}\varphi{},\mathcal{R}]].
\end{split}
\end{equation}
Combine \cref{energy:en-bound-2,energy:en-bound-gronwall} and take \(u_\ast{} = u_2\)
to conclude \cref{energy-control-1-prep}.
\end{proof}
From now on, we fix \(\Rc\) large enough that \cref{energy-control-1} holds.
\begin{lemma}[Energy boundedness estimate on characteristic rectangles]
Let \(\mathcal{R}\) be a characteristic rectangle with past \(\Sigma_1\). For any
null curve \(\Sigma{}\subset \mathcal{R}\), when \(s\ge \eta_0\) is sufficiently small we have
\begin{equation}
E[L\varphi{},\Sigma{}] + E_{\mathrm{bulk}}[\psi{},\mathcal{R}] \le C(\mathcal{G}_{\alpha,s},\mathcal{P}_{<\alpha{},1+\eta{}_0})[E[\Gamma{}^{\le \alpha{}}\varphi{},\Sigma{}_1] + r\abs{\Gamma{}^{\le \alpha{}}\varphi{}}^2(p_{\mathrm{fut}}) + E_{4s,\mathrm{bulk}}[\Gamma{}^{<\alpha{}}\varphi{},\mathcal{R}]].
\end{equation}
\label{energy-boundedness-curves-in-future}
\end{lemma}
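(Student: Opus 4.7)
The plan is to apply \cref{energy-control-1}---which proves the desired estimate when $\Sigma$ is the future boundary of an admissible spacetime region---to a suitable sub-rectangle $\mathcal{R}' \subset \mathcal{R}$ whose future boundary contains the prescribed null curve $\Sigma$, and then separately to the full rectangle $\mathcal{R}$ to control the bulk term there.

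Write $\mathcal{R} = [u_1, u_2]_u \times [v_1, v_2]_v$. Any null curve $\Sigma \subset \mathcal{R}$ lies on either a constant-$u$ or a constant-$v$ slice; by the symmetric treatment of the two null directions in \cref{energy-control-1}, I focus on the former, so $\Sigma = \{u_0\} \times [v_1', v_2']$ with $u_1 \le u_0 \le u_2$ and $v_1 \le v_1' \le v_2' \le v_2$. I then take
\begin{equation*}
\mathcal{R}' \coloneqq [u_1, u_0]_u \times [v_1, v_2']_v,
\end{equation*}
which is itself a characteristic rectangle (hence an admissible spacetime region) with connected piecewise null past boundary $\Sigma_1' \subset \Sigma_1$ and connected piecewise null future boundary $\Sigma_2' \supset \Sigma$. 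Applying \cref{energy-control-1} to $\mathcal{R}'$ bounds $E[L\varphi, \Sigma_2']$, and the non-negativity of the integrands defining $E$ and $E_\mathrm{bulk}$ yields the monotonicity bounds $E[L\varphi, \Sigma] \le E[L\varphi, \Sigma_2']$, $E[\Gamma^{\le \alpha}\varphi, \Sigma_1'] \le E[\Gamma^{\le \alpha}\varphi, \Sigma_1]$, and $E_{4s, \mathrm{bulk}}[\Gamma^{<\alpha}\varphi, \mathcal{R}'] \le E_{4s, \mathrm{bulk}}[\Gamma^{<\alpha}\varphi, \mathcal{R}]$. A separate application of \cref{energy-control-1} to the full rectangle $\mathcal{R}$ then produces the desired control of $E_\mathrm{bulk}[L\varphi, \mathcal{R}]$.

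The only delicate point is the pointwise boundary term $r\abs{L\varphi}^2(p_\mathrm{fut}')$ produced by \cref{energy-control-1} on $\mathcal{R}'$, where $p_\mathrm{fut}'$ is the future endpoint of $\Sigma_1'$; this typically does not coincide with $p_\mathrm{fut}$. Since both points lie on $\Sigma_1$, I would absorb this term into the right-hand side using a Hardy type inequality (\cref{hardy-type-inequality}) along the null segment of $\Sigma_1$ joining $p_\mathrm{fut}$ to $p_\mathrm{fut}'$, which converts $r\abs{L\varphi}^2(p_\mathrm{fut}')$ into $r\abs{L\varphi}^2(p_\mathrm{fut})$ plus an integral controlled by $E[L\varphi, \Sigma_1]$.

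The overall argument is essentially bookkeeping on top of \cref{energy-control-1}: no new analytic input is required, and the ``main obstacle'' is simply tracking the various boundary pieces and endpoints carefully. Given the explicit form of $\mathcal{R}'$, the required monotonicity and Hardy estimates are routine.
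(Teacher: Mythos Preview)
Your approach is essentially the paper's: apply \cref{energy-control-1} to a sub-region $\mathcal{R}'\subset\mathcal{R}$ whose future boundary contains $\Sigma$, then again to $\mathcal{R}$ for the bulk term. The only difference is the choice of $\mathcal{R}'$. The paper extends $\Sigma$ to a piecewise null ``staircase'' $\Sigma_2$ so that the \emph{full} $\Sigma_1$ remains the past boundary of $\mathcal{R}'$ (the resulting region is L-shaped rather than a literal rectangle, despite the paper's wording). With that choice $p_{\mathrm{fut}}'=p_{\mathrm{fut}}$ automatically, and no Hardy step is needed.

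Your Hardy argument for converting $r\abs{L\varphi}^2(p_{\mathrm{fut}}')$ to $r\abs{L\varphi}^2(p_{\mathrm{fut}})$ has a small gap: \cref{hardy-type-inequality} along $\{u_1\}\times[v_2',v_2]$ produces $\int \frac{r^2}{\lambda}(\partial_v L\varphi)^2\,dv$, whereas $E[L\varphi,\Sigma_1]$ carries the weight $\frac{r^2}{\kappa}$. Passing between the two costs a factor $(1-\mu)^{-1}$, which $\mathfrak{b}_0$ controls only in $\{r\ge R_0\}$ (see \cref{b0-def}); for a general characteristic rectangle the outgoing segment of $\Sigma_1$ can lie partly in $\{r<R_0\}$, where this bound is unavailable. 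The cleanest fix within your framework is to take $\mathcal{R}'=[u_1,u_0]\times[v_1,v_2]$ (extending all the way to $v_2$): then $p_{\mathrm{fut}}'=(u_1,v_2)=p_{\mathrm{fut}}$, and the Hardy step disappears. With that adjustment your proof coincides with the paper's.
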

\begin{proof}
A null curve \(\Sigma{}\subset \mathcal{R}\) can be extended to a piecewise null curve
\(\Sigma_2\) so that \(\Sigma_1\) and \(\Sigma{}_2\) are the past and future
boundaries of a rectangular region \(\mathcal{R}'\subset \mathcal{R}\). Apply
\cref{energy-control-1} for the region \(\mathcal{R}'\), and take a supremum over
null curves \(\Sigma{}\subset \mathcal{R}\) (and hence \(\mathcal{R}'\subset \mathcal{R}\)).
\end{proof}
\begin{corollary}[Estimate for unweighted energy norm]
When \(s\ge \eta_0\) is sufficiently small, we have
\begin{equation}
\mathcal{E}_\alpha^2 + E_{\mathrm{bulk}}[L\varphi{},\mathcal{R}_{\mathrm{char}}]\le C(\mathcal{G}_{\alpha,s},\mathcal{P}_{<\alpha{},1+\eta{}_0})[\mathcal{D}_\alpha^2 + \mathcal{E}_{<\alpha{},4s}^2].
\end{equation}
\label{E-alpha-bound-2}
\end{corollary}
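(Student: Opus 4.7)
The plan is to deduce this corollary essentially as a bookkeeping consequence of the preceding \cref{energy-boundedness-curves-in-future}, together with \cref{energy-on-data} for the initial data term and the definition of the weighted energy norm $\mathcal{E}_{<\alpha{},4s}$. The strategy is to exhaust $\mathcal{R}_{\mathrm{char}} = [1,\infty)_u\times [1,\infty)_v$ by the characteristic subrectangles $\mathcal{R}_N \coloneqq [1,N]\times [1,N]$, whose past boundaries are (connected, piecewise null) subsets of $C^{\mathrm{in}}\cup C^{\mathrm{out}}$, apply the preceding lemma on each $\mathcal{R}_N$, and pass to the limit $N\to\infty$.

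First, fix $L\in\Gamma^\alpha{}$ and let $\Sigma{}\subset\mathcal{R}_N$ be any null curve. Applying \cref{energy-boundedness-curves-in-future} on $\mathcal{R}_N$ yields
\begin{equation*}
E[L\varphi{},\Sigma{}] + E_{\mathrm{bulk}}[L\varphi{},\mathcal{R}_N] \le C(\mathcal{G}_{\alpha,s},\mathcal{P}_{<\alpha{},1+\eta{}_0})\bigl[E[\Gamma{}^{\le\alpha{}}\varphi{},C^{\mathrm{in}}\cup C^{\mathrm{out}}] + r\abs{\Gamma{}^{\le\alpha{}}\varphi{}}^2(p_{\mathrm{fut}}) + E_{4s,\mathrm{bulk}}[\Gamma{}^{<\alpha{}}\varphi{},\mathcal{R}_N]\bigr],
\end{equation*}
where $p_{\mathrm{fut}}\in C^{\mathrm{in}}\cup C^{\mathrm{out}}$ is the future endpoint of the past boundary of $\mathcal{R}_N$. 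By \cref{energy-on-data} applied to each $\Gamma^\beta{}\varphi{}$ with $\beta\le\alpha{}$, the two initial data terms on the right are jointly bounded by $C\mathcal{D}_\alpha^2$. Moreover, by definition of the bulk part of the weighted energy norm (see \cref{Ep-norm-definition}) we have $E_{4s,\mathrm{bulk}}[\Gamma{}^{<\alpha{}}\varphi{},\mathcal{R}_N]\le E_{4s,\mathrm{bulk}}[\Gamma{}^{<\alpha{}}\varphi{},\mathcal{R}_{\mathrm{char}}] \le \mathcal{E}_{<\alpha{},4s}^2$ (note $4s\ge 4\eta_0>0$ so the bulk term is included).

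Now take the supremum over null curves $\Sigma{}\subset\mathcal{R}_N$ on the left, and then sum over $L\in\Gamma^{\le\alpha{}}$. The sup is monotone in $N$, and as $N\to\infty$ every null curve $\Sigma{}\subset\set{u,v\ge 1}$ is eventually contained in some $\mathcal{R}_N$, so passing to the limit (using monotone convergence for the bulk quantity) yields
\begin{equation*}
\mathcal{E}_\alpha^2 + E_{\mathrm{bulk}}[L\varphi{},\mathcal{R}_{\mathrm{char}}] \le C(\mathcal{G}_{\alpha,s},\mathcal{P}_{<\alpha{},1+\eta{}_0})[\mathcal{D}_\alpha^2 + \mathcal{E}_{<\alpha{},4s}^2],
\end{equation*}
as desired.

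There is no substantive obstacle here: all of the analytic work—the commutation structure, the energy estimate on an admissible region, the Grönwall argument for the bulk term, and the choice of $\Rc$ large and $s$ small—has been carried out in the preceding lemmas. The only mild subtlety is the need to work on finite rectangles before taking the limit, which is forced by the fact that the statement of \cref{energy-boundedness-curves-in-future} is phrased for a characteristic rectangle with an honest future endpoint $p_{\mathrm{fut}}$; the monotonicity in $N$ ensures no loss in passing to the supremum over all null curves in $\mathcal{R}_{\mathrm{char}}$.
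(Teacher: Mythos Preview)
Your proof is correct and follows essentially the same approach as the paper: apply \cref{energy-boundedness-curves-in-future} to characteristic subrectangles of $\mathcal{R}_{\mathrm{char}}$ whose past boundary lies in $C^{\mathrm{in}}\cup C^{\mathrm{out}}$, control the data terms via \cref{energy-on-data}, bound the lower-order bulk term by $\mathcal{E}_{<\alpha,4s}^2$, and then take a supremum. Your explicit exhaustion by $\mathcal{R}_N=[1,N]\times[1,N]$ and the monotone-convergence remark are a slightly more detailed version of the paper's one-line ``take a supremum over $\mathcal{R}\subset\mathcal{R}_{\mathrm{char}}$.''
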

\begin{proof}
Apply \cref{energy-boundedness-curves-in-future} to a characteristic rectangle
\(\mathcal{R}\subset \mathcal{R}_{\text{char}}\) whose past \(\Sigma_1\) is
contained in the past of \(\mathcal{R}_{\text{char}}\). Since \(\Sigma_1\) and
\(p_{\mathrm{fut}}\) are contained in the data surface \(C^{\text{in}}\cup
C^{\text{out}}\), the energy \(E[\psi{},\Sigma_1]\) and the value
\(r\psi{}^2(p_{\mathrm{fut}})\) are controlled by data (see \cref{energy-on-data}). Take
a supremum over \(\mathcal{R}\subset \mathcal{R}_{\text{char}}\).
\end{proof}
\subsection{\texorpdfstring{\(r^p\)}{rp}-weighted energy estimate}
\label{sec:org05e270c}
We now adapt the work of Dafermos--Rodnianski \cite{rp-method} to derive an
\(r^p\)-weighted energy estimate. Recall the weighted energy quantity \(E_p\)
defined in \cref{sec:rp-weighted-energy}.
\begin{lemma}[Preliminary estimate for the quantity \(E_p\)]
Suppose that \(\lim_{v\to \infty}r\psi^2(u,v) = 0\) for each \(u\ge 1\). Then for any \(p\ge
0\) and \(R\ge R_0 + 1\) and \(\tau{}\ge 1\), we have
\begin{equation}\label{rp-Ep-bound}
E_p[\psi{}](\tau{})\le \int _{\Sigma{}_\tau^{\mathrm{out}}\cap \set{r\ge R}}\frac{r^p}{\lambda{}}(\partial{}_v(r\psi{}))^2\dd{}v + C(\mathfrak{b}_0,R,p)E[\psi{},\Sigma{}_\tau{}],
\end{equation}
and for \(1\le \tau_1<\tau_2\) we have
\begin{equation}\label{rp-Ep-bound-integrated}
\int_{\tau{}_1}^{\tau{}_2} E_p[\psi{}](\tau{})\le \int_{\tau{}_1}^{\tau{}_2} \Bigl(\int _{\Sigma{}_\tau^{\mathrm{out}}\cap \set{r\ge R}}\frac{r^p}{\lambda{}}(\partial{}_v(r\psi{}))^2\dd{}v\Bigr)\dd{}\tau{} + C(\mathfrak{b}_0,R,p)E_{\mathrm{bulk}}[\psi{},\mathcal{R}(\tau{}_1,\tau{}_2)],
\end{equation}
\label{rp-Ep-bounds}
\end{lemma}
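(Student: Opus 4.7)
Both estimates follow from combining the Hardy-type inequalities of \cref{hardy-type-inequality} with the zeroth-order geometric bounds of \cref{zeroth-order-geometric-bounds} (in particular $\lambda \sim \kappa$ on $\Sigma_\tau^{\mathrm{out}}$, where $1-\mu \ge 1/2$).

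For the pointwise estimate \cref{rp-Ep-bound}, decompose the three pieces of $E_p[\psi](\tau)$ separately. The ingoing integral is literally the ingoing contribution to $E[\psi,\Sigma_\tau]$. Split the outgoing integral as $\int_{\Sigma_\tau^{\mathrm{out}}} = \int_{\{r\ge R\}} + \int_{\{R_0\le r\le R\}}$. The first piece is the boundary term appearing on the RHS of \cref{rp-Ep-bound}. For the second (bounded-$r$) piece, apply \cref{hardy-type-inequality} with $a = p$ on the finite interval $[v_{R_0}(\tau),v_R(\tau)]$ to trade $\tfrac{r^p}{\lambda}(\partial_v(r\psi))^2$ for $\tfrac{r^{p+2}}{\lambda}(\partial_v\psi)^2$, picking up a boundary term $R^{p+1}\psi^2(\tau,v_R(\tau))$ which is absorbed by a further application of \cref{hardy-type-inequality} with $a=0$ on $[v_R(\tau),\infty)$ (the $v_2=\infty$ boundary term vanishes by the hypothesis $r\psi^2\to 0$). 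Since $r\le R$ in the slab and $\lambda\sim\kappa$, the total is bounded by $C(\mathfrak{b}_0,R,p)E[\psi,\Sigma_\tau^{\mathrm{out}}]$. The remaining pointwise term $|\psi|^2(\tau,v_{R_0}(\tau))$ is handled analogously by \cref{hardy-type-inequality} with $a=0$ on $[v_{R_0}(\tau),\infty)$, yielding $R_0\psi^2(\tau,v_{R_0}(\tau)) \le \int_{v_{R_0}(\tau)}^\infty \tfrac{r^2}{\lambda}(\partial_v\psi)^2\,dv \le C(\mathfrak{b}_0)E[\psi,\Sigma_\tau^{\mathrm{out}}]$. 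Summing these three contributions gives \cref{rp-Ep-bound}.

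For the integrated estimate \cref{rp-Ep-bound-integrated}, one integrates the same decomposition in $\tau$, but one cannot afford to charge the bounded-$r$ outgoing piece or the corner term $\psi^2(\tau,v_{R_0}(\tau))$ to the full $E[\psi,\Sigma_\tau]$: the $\{r\ge R\}$ portion of $E[\psi,\Sigma_\tau^{\mathrm{out}}]$ is \emph{not} controlled by $E_{\mathrm{bulk}}$, whose $r^{-1-\eta_0}$ weight is too weak at infinity. The remedy is to replace each Hardy application with $v_2=\infty$ by Hardy with $v_2 = v_R(\tau)$, localizing all estimates to the slab $\{R_0\le r\le R\}$ (and the ingoing region $\{r\le R_0\}$). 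The resulting spacetime integrals in these bounded-$r$ regions are directly dominated by $E_{\mathrm{bulk}}$ because its integrand $\tfrac{r^{1-\eta_0}\kappa(-\nu)}{(-\nu)^2}(\partial_u\psi)^2 + \cdots$ is bounded \emph{below} by a constant multiple of $\tfrac{r^2}{(-\nu)}(\partial_u\psi)^2$, $\tfrac{r^p}{\lambda}(\partial_v(r\psi))^2$, etc., on $\{r\le R\}$. The change of variables needed to translate the ingoing-slice integral into a spacetime integral is parametrized by $v = v_{R_0}(\tau)$, with Jacobian $\lambda/(-\nu)$ bounded on $\{r=R_0\}$, and the outgoing piece is parametrized directly by $u=\tau$.

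The main obstacle is the resulting $\tau$-integral of the boundary term $\psi^2(\tau,v_R(\tau))$ produced by the Hardy cutoff at $v=v_R(\tau)$: this is a trace of $\psi^2$ along the constant-$r$ curve $\{r=R\}$ and must be bounded by $E_{\mathrm{bulk}}$. The plan is to obtain this via a fundamental-theorem-of-calculus argument across a thin $r$-slab around $\{r=R\}$, combining the zeroth-order bulk term $\iint \tfrac{\kappa(-\nu)}{r^{1+\eta_0}}\psi^2$ with a first-derivative contribution in the direction transverse to $\{r=R\}$ (an argument in the spirit of Steps 2--4 of the proof of \cref{energy-estimate}, which likewise produce pointwise/trace quantities from bulk quantities via Hardy-type arguments). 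Once this trace estimate is in hand, summing the three pieces yields \cref{rp-Ep-bound-integrated}.
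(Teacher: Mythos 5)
Your argument for \cref{rp-Ep-bound} is correct and uses the same essential ingredients as the paper: the Hardy inequalities, the bound $\lambda^{-1}\le 2\kappa^{-1}$ on $\{r\ge R_0\}$ coming from \cref{mu-estimate}, and the hypothesis $r\psi^2\to 0$ to kill the $v_2\to\infty$ boundary terms. The only cosmetic difference is that the paper controls $\frac{r^p}{\lambda}(\partial_v(r\psi))^2$ on the bounded slab by the pointwise expansion $\partial_v(r\psi)=\lambda\psi+r\partial_v\psi$ together with $r\le R$, rather than by Hardy with weight $a=p$; both work.

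For \cref{rp-Ep-bound-integrated} you correctly identify the real issue --- the $\{r\ge R\}$ portion of $E[\psi,\Sigma_\tau]$ is not controlled by $E_{\mathrm{bulk}}$ after integration in $\tau$, so every Hardy application must be cut off at $r=R$ --- and your change-of-variables bookkeeping ($d\tau=\frac{\kappa}{(-\gamma)}\,dv$ for the ingoing piece, $d\tau=du$ for the outgoing piece) matches the paper. The one step you leave as a ``plan'' is the trace estimate $\int_{\tau_1}^{\tau_2}\psi^2(\tau,v_R(\tau))\,d\tau\lesssim E_{\mathrm{bulk}}$. This does go through as sketched (average the fundamental-theorem identity over a unit $r$-slab and pair the cross term $\psi\,\partial_v\psi$ by Cauchy--Schwarz against the $(\partial_v\psi)^2$ and $\psi^2$ components of $E_{\mathrm{bulk}}$, whose weights are uniformly positive on a bounded-$r$ slab), but it is the only genuinely nontrivial point of the integrated estimate and should be written out. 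The paper never produces this trace term at all: it runs Hardy up to a free cutoff $R'\in[R-1,R]$ and averages in $R'$, which converts the boundary term $R'\psi^2(\tau,v_{R'}(\tau))$ directly into the slab integral $\int_{\{R-1\le r\le R\}}\lambda r\psi^2\,dv$ (Jacobian $dR'=\lambda\,dv$). That single localized quantity is then bounded by $E[\psi,\Sigma_\tau]$ via Hardy-to-infinity for \cref{rp-Ep-bound}, and by the zeroth-order term of $E_{\mathrm{bulk}}$ after $\tau$-integration for \cref{rp-Ep-bound-integrated}, so both conclusions fall out of one computation. Your route is equivalent but costs you a separate trace lemma.
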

\begin{proof}
Introduce the following energy quantity that is localized to the region \(\set{r\le R}\):
\begin{equation}
\begin{split}
E_{\le R}^{\circ }[\psi{}](\tau{})&\coloneqq{}\int _{\Sigma{}_\tau^{\text{in}}}\frac{r^2}{(-\nu{})}(\partial{}_u\psi{})^2\dd{}u +  \int _{\Sigma{}_\tau^{\mathrm{out}}\cap \set{r\le R}}\frac{r^2}{\kappa{}}(\partial{}_v\psi{})^2\dd{}v + \int _{\Sigma{}_\tau^{\mathrm{out}}\cap \set{R-1\le r\le R}} \lambda{}r\psi{}^2\dd{}v \\
&= E[\psi{},\Sigma{}_\tau{}\cap \set{r\ge R}] + \int _{\Sigma{}_\tau^{\mathrm{out}}\cap \set{R-1\le r\le R}} \lambda{}r\psi{}^2\dd{}v.
\end{split}
\end{equation}
Observe that the desired \cref{rp-Ep-bound,rp-Ep-bound-integrated} are a consequence of the
following three estimates:
\begin{align}
E_p[\psi{}](\tau{})&\le \int _{\Sigma{}_\tau^{\mathrm{out}}\cap \set{r\ge R}}\frac{r^p}{\lambda{}}(\partial{}_v(r\psi{}))^2\dd{}v + C(\mathfrak{b}_0,R,p)E_{\le R}^{\circ }(\tau{}) \label{rp-Ep-bound-1} \\
E_{\le R}^{\circ }(\tau{}) &\le C(\mathfrak{b}_0)[E[\psi{},\Sigma{}_\tau\cap \set{r\ge R}] + \limsup_{v\to \infty}r\psi^2|_{u=\tau{}}(v))] \label{rp-Ep-bound-2} \\
\int_{\tau{}_1}^{\tau{}_2} E[\psi{},\Sigma{}_\tau\cap \set{r\le R}]\dd{}\tau{} &\le C(\mathfrak{b}_0,R)E_{\mathrm{bulk}}[\psi{},\mathcal{R}(\tau{}_1,\tau{}_2)] \label{rp-Ep-bound-3}.
\end{align}
We now prove \cref{rp-Ep-bound-1,rp-Ep-bound-2,rp-Ep-bound-3} one by one.

\step{Step 1: Proof of \cref{rp-Ep-bound-1}.} By the definitions of \(E_p\) and \(E_{\le
 R}^{\circ }\), it is enough to show that
\begin{align}
\int _{\Sigma{}_\tau^{\mathrm{out}}\cap \set{r\le R}}\frac{r^p}{\lambda{}}(\partial{}_v(r\psi{}))^2\dd{}v + \abs{\psi{}}^2(\tau{},v_{R_0}(\tau{}))&\le C(\mathfrak{b}_0,R,p)E_{\le R}^{\circ }[\psi{}](\tau{}). \label{rp-in-bound-dv}
\end{align}
To prove \cref{rp-in-bound-dv}, we start with the following pointwise estimate on
\(\Sigma_\tau^{\mathrm{out}}\cap \set{r\le R}\subset \set{R_0\le r\le R}\):
\begin{equation}\label{rp-in-bound-dv-prep}
\frac{r^p}{\lambda{}}(\partial_v(r\psi{}))^2\le R^p[\lambda^2\psi^2 + \frac{r^2}{\lambda{}}(\partial{}_v\psi{})^2]\le C(\mathfrak{b}_0,R,p)\bigl[\lambda{}r\psi{}^2 + \frac{r^2}{\kappa{}}(\partial{}_v\psi{})^2\bigr].
\end{equation}
Integrate \cref{rp-in-bound-dv-prep} over \(v\in [v_{R_0}(\tau{}),v_R(\tau{})]\) and add
\(R_0\abs{\psi{}}^2(\tau{},v_{R_0}(\tau{}))\) to both sides to get
\begin{equation}\label{rp-in-bound-dv-prep-2}
\side{LHS}{rp-in-bound-dv}\le C(\mathfrak{b}_0,R,p)\Bigl[\int _{\Sigma{}_\tau^{\mathrm{out}}\cap \set{r\le R}}\lambda{}r\psi{}^2\dd{}v + R_0\abs{\psi{}}^2(\tau{},v_{R_0}(\tau{})) + \int _{\Sigma{}_\tau^{\mathrm{out}}\cap \set{r\le R}}\frac{r^2}{\kappa{}}(\partial{}_v\psi{})^2\dd{}v\Bigr].
\end{equation}
For any \(R'\in [R-1,R]\), we can control the first two terms on the right side of
\cref{rp-in-bound-dv-prep-2} using Hardy's inequality (\cref{hardy-inequality} with
\(a=0\)) and arrive at
\begin{equation}
\side{LHS}{rp-in-bound-dv}\le C(\mathfrak{b}_0,R,p)\Bigl[R'\abs{\psi{}}^2(\tau{},v_{R'}(\tau{})) + \int _{\Sigma{}_\tau^{\mathrm{out}}\cap \set{r\le R}}\frac{r^2}{\kappa{}}(\partial{}_v\psi{})^2\dd{}v\Bigr].
\end{equation}
Averaging over \(R'\in [R-1,R]\) yields \cref{rp-in-bound-dv}.

\step{Step 2: Proof of \cref{rp-Ep-bound-2}.} It is enough to prove \cref{rp-Ep-bound-2}
with the left side replaced by \(\int _{\Sigma_\tau^{\mathrm{out}}\cap \set{R-1\le
r\le R}} \lambda{}r\psi^2\dd{}v\); this estimate is a consequence of Hardy's
inequality (\cref{hardy-inequality} with \(a=0\) in the limit \(v_2\to \infty\)).

\step{Step 3: Proof of \cref{rp-Ep-bound-3}.} Recall from \cref{tau-def} that \(\tau{}\) is a
function of \(v\) in \(\set{r\le R_0}\) (with \(r(\tau{},v) = R_0\)) and a
function of \(u\) in \(\set{r\ge R_0}\) (with \(\tau{} = u\)). Thus
\cref{rp-Ep-bound-3} follows from changing variables with the following
computation,
\begin{equation}
\dv{\tau{}}{v}(u,v) = \frac{\kappa{}}{(-\gamma{})}(u_{R_0}(v),v)\quad \text{in }\set{r\le R_0}, \qquad \dv{\tau{}}{u}(u,v) = 1\quad \text{in }\set{r\ge R_0}.
\end{equation}
\end{proof}
\begin{lemma}
Let \(p\in \R\). We have
\begin{equation}
\begin{split}
&\partial{}_u\Bigl(\frac{r^p}{\lambda{}}(\partial{}_v(r\psi{}))^2\Bigr) + \partial{}_v\Bigl(\frac{r^p}{\lambda{}} \frac{(-\partial{}_u\partial{}_vr)}{r}(r\psi{})^2\Bigr) \\
&\qquad + \Bigl(p(-\nu{}) - \frac{2(\varpi{}-\mathbf{e}^2/r)}{r}(-\gamma{})\Bigr) \frac{r^{p-1}}{\lambda{}}(\partial{}_v(r\psi{}))^2 + (3-p)r^{p-4}r^2(-\partial{}_u\partial{}_vr)(r\psi{})^2 \\
&= 2\frac{r^{p+1}}{\lambda{}}\partial{}_v(r\psi{})\kappa{}(-\nu{})\Box{}\psi{} + r^{p-3}\partial{}_v\Bigl(r^2 \frac{(-\partial{}_u\partial{}_vr)}{\lambda{}}\Bigr)\psi{}^2.
\end{split}
\end{equation}
\label{rp-vf-identity}
\end{lemma}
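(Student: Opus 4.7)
The plan is to derive this as a pointwise multiplier identity: apply the standard $r^p$-weighted multiplier $\tfrac{2r^p}{\lambda}\partial_v(r\psi)$ to the wave equation rewritten for the unknown $r\psi$, then reorganize the resulting bulk terms as a divergence plus coercive bulks plus sources.

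First I would record the preparatory identity
\[
\partial_u\partial_v(r\psi) \;=\; (\partial_u\partial_v r)\,\psi + r\,\kappa(-\nu)\,\Box\psi,
\]
obtained by expanding $\partial_u\partial_v(r\psi) = (\partial_u\partial_v r)\psi + \lambda\partial_u\psi + \nu\partial_v\psi + r\partial_u\partial_v\psi$ and recognizing the tail as $r\kappa(-\nu)\Box\psi$ via \cref{box-uv-gauge}. Multiplying by $\tfrac{2r^p}{\lambda}\partial_v(r\psi)$ and using $2\partial_v(r\psi)\,\partial_u\partial_v(r\psi) = \partial_u\bigl((\partial_v(r\psi))^2\bigr)$ produces
\[
\partial_u\!\Bigl(\tfrac{r^p}{\lambda}(\partial_v(r\psi))^2\Bigr) - \partial_u\!\Bigl(\tfrac{r^p}{\lambda}\Bigr)(\partial_v(r\psi))^2 \;=\; \tfrac{2r^p}{\lambda}\partial_v(r\psi)\,(\partial_u\partial_v r)\,\psi \;+\; \tfrac{2r^{p+1}\kappa(-\nu)}{\lambda}\partial_v(r\psi)\,\Box\psi,
\]
so the task reduces to computing $\partial_u(r^p/\lambda)$ and to reorganizing the cross term containing $(\partial_u\partial_v r)\psi$.

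For $\partial_u(r^p/\lambda)$ I would use the wave equation for $r$ from \cref{sph-sym-equations-1}, namely $\partial_u\lambda = \tfrac{2(\varpi - \mathbf{e}^2/r)}{r^2}\cdot\tfrac{\lambda\nu}{1-\mu}$, together with $\kappa(-\nu)/\lambda = (-\gamma)$. A short manipulation yields
\[
\partial_u\!\Bigl(\tfrac{r^p}{\lambda}\Bigr) \;=\; -\tfrac{r^{p-1}}{\lambda}\Bigl[p(-\nu) - \tfrac{2(\varpi-\mathbf{e}^2/r)}{r}(-\gamma)\Bigr],
\]
so that $-\partial_u(r^p/\lambda)(\partial_v(r\psi))^2$ moved to the left is exactly the $(\partial_v(r\psi))^2$ bulk term of the identity. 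For the cross term I would use $2\psi\,\partial_v(r\psi) = \tfrac{1}{r}\partial_v((r\psi)^2)$ to rewrite it as $-\tfrac{r^{p-1}(-\partial_u\partial_v r)}{\lambda}\partial_v((r\psi)^2)$ and integrate by parts in $v$. The boundary piece is $-\partial_v\Bigl(\tfrac{r^p}{\lambda}\tfrac{(-\partial_u\partial_v r)}{r}(r\psi)^2\Bigr)$, giving the second divergence term, while the residual $\partial_v\Bigl(\tfrac{r^{p-1}(-\partial_u\partial_v r)}{\lambda}\Bigr)(r\psi)^2$ is then factored as $r^{p-3}\cdot\tfrac{r^2(-\partial_u\partial_v r)}{\lambda}$ before differentiating. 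The $\partial_v(r^{p-3})$ contribution produces $(p-3)r^{p-4}\cdot r^2(-\partial_u\partial_v r)(r\psi)^2$, which moved to the left is the $(3-p)$ bulk term, and the remaining piece is precisely the $r^{p-3}\partial_v\bigl(\tfrac{r^2(-\partial_u\partial_v r)}{\lambda}\bigr)(r\psi)^2$ source on the right.

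The derivation is entirely algebraic and involves no analytic subtlety; the $\Box\psi$ source passes through the computation unaltered. The only real design choice is the factorization $r^{p-3}\cdot r^2(-\partial_u\partial_v r)/\lambda$ in the last step: without it, the $(3-p)$ coefficient and the $\partial_v(r^2(-\partial_u\partial_v r)/\lambda)$ structure emerge in a mixed form rather than cleanly separated, and the resulting identity would not obviously match the stated one. I do not anticipate any genuine obstacle beyond careful bookkeeping of signs.
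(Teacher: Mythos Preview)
Your approach is correct and is precisely the standard $r^p$-multiplier computation; the paper itself simply refers to \cite[Lem.~8.52]{luk-oh-scc2} for the details, so your write-up in fact supplies more than the paper does. One small remark: your derivation correctly produces the last source term as $r^{p-3}\partial_v\bigl(r^2(-\partial_u\partial_v r)/\lambda\bigr)(r\psi)^2$, whereas the statement as printed has $\psi^2$ rather than $(r\psi)^2$; a check with $\psi\equiv 1$ confirms your version is the right one, so the discrepancy is a typo in the stated identity rather than an error in your argument.
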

\begin{proof}
We refer to \cite[Lem.~8.52]{luk-oh-scc2} for details of the computation.
\end{proof}
\begin{lemma}[\(r^p\)-weighted energy estimate]
Suppose that \(\lim_{v\to \infty}r\psi^2(u,v) = 0\) for each \(u\ge 1\). Then for \(1\le \tau_1\le
\tau_2\) and \(p\in (0,3)\), we have
\begin{equation}\label{rp-estimate-equation}
\begin{split}
&E_p[\psi{}](\tau{}_2) + \int_{\tau{}_1}^{\tau{}_2} E_{p-1}[\psi{}](\tau{})\dd{}\tau{}\\
&\le C(\mathfrak{b}_0,\mathcal{P}_{0,0},p)\Bigl[E_p[\psi{}](\tau{}_1) + \iint_{\mathcal{R}(\tau{}_1,\tau{}_2)}r^{p+3}\abs{\Box{}\psi{}}^2\kappa{}(-\nu{})\dd{}u\dd{}v + E_{\mathrm{bulk}}[\psi{},\mathcal{R}(\tau{}_1,\tau{}_2)] + E[\psi{},\Sigma{}_{\tau{}_2}]\Bigr].
\end{split}
\end{equation}
\label{rp-estimate}
\end{lemma}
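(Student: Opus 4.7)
The plan is to integrate the divergence-form identity of \cref{rp-vf-identity} over the region $\mathcal{R}(\tau_1,\tau_2)\cap \{r\ge R\}$ for $R\ge R_0+1$ chosen large (depending on $\mathfrak{b}_0$ and $p$), so that \cref{mu-estimate,nu-estimate,gamma-estimate,lambda-lower-bound} guarantee $\nu,\gamma\approx -1$ and, in particular, the coefficient $p(-\nu)-\tfrac{2(\varpi-\mathbf{e}^2/r)}{r}(-\gamma)$ of the first bulk term is bounded below by $p/2>0$. By Stokes' theorem, the $\partial_u$- and $\partial_v$-divergences on the left of \cref{rp-vf-identity} produce boundary terms on $\Sigma_{\tau_1}^{\mathrm{out}}\cap\{r\ge R\}$, on $\Sigma_{\tau_2}^{\mathrm{out}}\cap\{r\ge R\}$, and on the timelike curve $\{r=R\}\cap \mathcal{R}(\tau_1,\tau_2)$; the boundary at $r=\infty$ vanishes thanks to the hypothesis $\lim_{v\to\infty}r\psi^2(u,v)=0$. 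The outgoing-null boundaries at $\tau_1$ and $\tau_2$ supply the raw $r^p$-fluxes $\int_{\Sigma_{\tau_i}^{\mathrm{out}}\cap\{r\ge R\}}\tfrac{r^p}{\lambda}(\partial_v(r\psi))^2\,dv$. The $\{r=R\}$ boundary integral is estimated pointwise (in $\tau$) by a constant times $E[\psi,\Sigma_\tau]$, whose $\tau$-average is controlled by $C(\mathfrak{b}_0,R)E_{\mathrm{bulk}}[\psi,\mathcal{R}(\tau_1,\tau_2)]$ via \eqref{rp-Ep-bound-3}.

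Next I would read off the good bulk. The positivity noted above yields the nonnegative contribution
\begin{equation*}
\iint_{\mathcal{R}(\tau_1,\tau_2)\cap\{r\ge R\}} \tfrac{p}{2}\tfrac{r^{p-1}}{\lambda}(\partial_v(r\psi))^2\,\kappa(-\nu)\,du\,dv,
\end{equation*}
and the change of variables $d\tau/du=1$ on $\{r\ge R_0\}$ (as in Step 3 of the proof of \cref{rp-Ep-bounds}) converts this into $\int_{\tau_1}^{\tau_2}\int_{\Sigma_\tau^{\mathrm{out}}\cap\{r\ge R\}}\tfrac{r^{p-1}}{\lambda}(\partial_v(r\psi))^2\,dv\,d\tau$; adding $E_{\mathrm{bulk}}[\psi,\mathcal{R}(\tau_1,\tau_2)]$ to control the missing $\{r\le R\}$ portion of each $E_{p-1}[\psi](\tau)$, then invoking \eqref{rp-Ep-bound-1} at each $\tau$, produces the desired $\int_{\tau_1}^{\tau_2}E_{p-1}[\psi](\tau)\,d\tau$. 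The second bulk on the left of \cref{rp-vf-identity} has coefficient $(3-p)$, so it has a good sign for $p<3$ and may be discarded.

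The two source terms on the right of \cref{rp-vf-identity} are treated next. For the first, Young's inequality
\begin{equation*}
\Bigl|2\tfrac{r^{p+1}}{\lambda}\partial_v(r\psi)\kappa(-\nu)\Box\psi\Bigr|
\le \tfrac{p}{4}\tfrac{r^{p-1}}{\lambda}(\partial_v(r\psi))^2\kappa(-\nu) + C_p\,r^{p+3}|\Box\psi|^2\kappa(-\nu)
\end{equation*}
absorbs the quadratic part into the good bulk and leaves the stated $\iint r^{p+3}|\Box\psi|^2\kappa(-\nu)$ source. For the zeroth-order source, I would use \cref{sph-sym-equations-1} to rewrite $r^2(-\partial_u\partial_v r)/\lambda = 2(\varpi-\mathbf{e}^2/r)(-\gamma)$, and then the transport equations \cref{sph-sym-equations-1,sph-sym-equations-2} to compute $\partial_v$ of this quantity as a combination of $r^2(\partial_v\varphi)^2/\kappa$, $r(\partial_v\varphi)^2(-\gamma)/\lambda$, and $\mathbf{e}^2\lambda/r^2$ times bounded quantities; the bound $r^2|\partial_v\varphi|\le \mathcal{P}_{0,0}$ then gives $|r^{p-3}\partial_v(\cdots)\psi^2|\le C(\mathfrak{b}_0,\mathcal{P}_{0,0})\,r^{p-5}\psi^2$, which (since $p<3$ implies $r^{p-5}\le R^{p-4+\eta_0}r^{-1-\eta_0}\cdot r^{-2}$ on $\{r\ge R\}$) is dominated by the zeroth-order piece of $E_{\mathrm{bulk}}[\psi,\mathcal{R}(\tau_1,\tau_2)]$.

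Finally, combining the inequalities above gives a bound on the raw flux $\int_{\Sigma_{\tau_2}^{\mathrm{out}}\cap\{r\ge R\}}\tfrac{r^p}{\lambda}(\partial_v(r\psi))^2\,dv$ plus $\int_{\tau_1}^{\tau_2}E_{p-1}[\psi](\tau)\,d\tau$ by the right side of \cref{rp-estimate-equation}; applying the preliminary estimate \eqref{rp-Ep-bound} to the $\tau_2$-flux upgrades it to $E_p[\psi](\tau_2)$ at the cost of an $E[\psi,\Sigma_{\tau_2}]$ term, producing exactly the statement. The main obstacle is the zeroth-order source: its analysis requires carefully exploiting the algebraic identity $r^2(-\partial_u\partial_v r)/\lambda=2(\varpi-\mathbf{e}^2/r)(-\gamma)$ together with the transport equations and the pointwise norm $\mathcal{P}_{0,0}$, and it is here that the restriction $p<3$ is crucial both for summability of the $\psi^2$ source against $E_{\mathrm{bulk}}$ and for the good sign of the zeroth-order bulk on the left.
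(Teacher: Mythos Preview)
Your strategy coincides with the paper's: integrate the identity in \cref{rp-vf-identity} over a far region, extract the positive flux and bulk, and control the two source terms on the right. The paper, however, first multiplies the identity by a smooth cutoff $\chi_R$ (equal to $1$ on $\{r\ge R\}$ and $0$ on $\{r\le R/2\}$) rather than integrating over the sharp region $\{r\ge R\}$. This technical choice is what makes your treatment of the inner boundary incomplete.

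On your sharp boundary $\{r=R\}$, the $\partial_v$-divergence in \cref{rp-vf-identity} contributes, with the wrong sign, the term
\[
\int_{\tau_1}^{\tau_2}\frac{R^{p-1}}{\lambda}(-\partial_u\partial_vr)\,(R\psi)^2\Big|_{(u,v_R(u))}\,du \;\sim\; C(R)\int_{\tau_1}^{\tau_2}\psi^2(\tau,v_R(\tau))\,d\tau.
\]
This is a one-dimensional integral of the \emph{pointwise} value $\psi^2$ along the timelike curve $\{r=R\}$; it is not ``pointwise in $\tau$ bounded by $E[\psi,\Sigma_\tau]$'' (a point value is not dominated by an integral without a Hardy/trace argument), and \eqref{rp-Ep-bound-3} only controls $\int_\tau E[\psi,\Sigma_\tau\cap\{r\le R\}]\,d\tau$, an average of derivative terms in a bounded-$r$ region, not $\psi^2$ on a single curve. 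With the paper's smooth cutoff, this contribution instead becomes the genuine bulk term $\iint\chi_R'(r)\tfrac{r^{p-1}}{\lambda}(-\partial_u\partial_vr)(r\psi)^2\,du\,dv$, supported in the strip $\{R/2\le r\le R\}$ and hence directly dominated by the zeroth-order part of $E_{\mathrm{bulk}}$. Your sharp-boundary route can be repaired by averaging the resulting estimate over $R'\in[R,R+1]$ (which converts the curve integral to a strip integral), but as written the step does not close. The remainder of your argument---the choice of $R$ for positivity of the bulk coefficients, the Young-inequality treatment of the $\Box\psi$ source, the identity $r^2(-\partial_u\partial_vr)/\lambda=2(\varpi-\mathbf{e}^2/r)(-\gamma)$ and its $\partial_v$-derivative controlled by $\mathcal{P}_{0,0}$, and the final use of \eqref{rp-Ep-bound}---agrees with the paper.
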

\begin{proof}
Introduce the notation \(\Psi{} \coloneqq{} r\psi{}\). By \cref{rp-Ep-bounds}, it is enough to show
that for \(R = C(\mathfrak{b}_0,\mathcal{P}_{0,0},p)\), we have
\begin{equation}\label{rp-estimate-prep}
\begin{split}
& \int _{\Gamma_{\tau_2}^{\mathrm{out}}\cap \set{r\ge R}}\frac{r^p}{\lambda{}}(\partial_v\Psi)^2\dd{}v + \int_{\tau{}_1}^{\tau{}_2}\Bigl( \int _{\Gamma_{\tau}^{\mathrm{out}}\cap \set{r\ge R}} \frac{r^{p-1}}{\lambda{}}(\partial{}_v\Psi)^2\dd{}v\Bigr)\dd{}\tau{}\le \side{RHS}{rp-estimate-equation}.
\end{split}
\end{equation}
To show \cref{rp-estimate-prep}, let \(\chi{}_R(r)\) be a smooth non-decreasing function
such that \(\chi_R \equiv 1\) on \(\set{r\ge R}\) and \(\chi_R\equiv 0\) on
\(\set{r\le R/2}\). To begin, let \(R \ge 3R_0\). Define
\begin{equation}
\begin{split}
\text{(I)} &\coloneqq{} \chi{}_R(r)\partial{}_u\Bigl(\frac{r^p}{\lambda{}}(\partial{}_v\Psi{})^2\Bigr) + \chi{}_R(r)\partial{}_v\Bigl(\frac{r^p}{\lambda{}} \frac{(-\partial{}_u\partial{}_vr)}{r}\Psi{}^2\Bigr), \\
\text{(II)} &\coloneqq{} \chi{}_R(r)\Bigl[p(-\nu{}) - \frac{2(\varpi{}-\mathbf{e}^2/r)}{r}(-\gamma{})\Bigr] \frac{r^{p-1}}{\lambda{}}(\partial{}_v\Psi{})^2  + \chi{}_R(r)2(3-p)r^{p-4}(\varpi{}-\mathbf{e}^2/r)\lambda{}(-\gamma{})\Psi{}^2, \\
\text{(III)} &\coloneqq{}\chi{}_R(r)\Bigl[2\frac{r^{p+1}}{\lambda{}}\partial{}_v\Psi{}r\Box{}\psi{} + r^{p-3}\partial{}_v\Bigl(r^2 \frac{(-\partial{}_u\partial{}_vr)}{\lambda{}}\Bigr)\psi{}^2\Bigr],
\end{split}
\end{equation}
so that the result of multiplying the statement of \cref{rp-vf-identity} by \(\chi_R(r)\) reads \(\text{(I)} + \text{(II)} = \text{(III)}\).
By \cref{lambda-lower-bound,mu-estimate,nu-lower-bound}, for \(R\ge
C(\mathfrak{b}_0,p,\varpi{}_i,r_{\mathrm{min}})\) and \(c = c(c_{\mathcal{H}})\), the
square bracketed term in \(\text{(I)}\) is positive, and we have
\begin{equation}\label{rp-lower-bound-II}
\text{(II)} \ge \chi_R(r)cp \frac{r^{p-1}}{\lambda{}}(\partial{}_v\Psi{})^2 + \chi_R(r)c(3-p)r^{p-4}\Psi{}^2.
\end{equation}
Since \(p \in (0,3)\), the right side of \cref{rp-lower-bound-II} is positive. Since
\begin{equation}
\mathbf{1}_{r\ge R_0}\abs[\Big]{\partial{}_v\Bigl(r^2 \frac{(-\partial{}_u\partial{}_vr)}{\lambda{}}\Bigr)} = \mathbf{1}_{r\ge R_0}\abs[\Big]{\partial{}_v(2(\varpi{}-\mathbf{e}^2/r)(-\gamma{}))}\le C(\mathfrak{b}_0,\mathcal{P}_{0,0})r^{-2},
\end{equation}
we have
\begin{equation}\label{rp-upper-bound-III}
\begin{split}
\abs{(\text{III})}&\le  C(\mathfrak{b}_0)\side{RHS}{rp-lower-bound-II}^{1/2}\cdot \bigl(r^{p+3}\abs{\Box{}\psi{}}^2\bigr)^{1/2} + r^{-1}C(\mathfrak{b}_0,\mathcal{P}_{0,0})\side{RHS}{rp-lower-bound-II} \ \\
&\le \frac{1}{2}\text{(II)} + C(\mathfrak{b}_0,\mathcal{P}_{0,0})r^{p+3}\abs{\Box{}\psi{}}^2.
\end{split}
\end{equation}
Integrate by parts, using \(R\ge 3R_0\) to note that \(\Supp \chi{}_R\) is disjoint
from \(\Gamma_\tau^{\text{in}}\):
\begin{equation}
\begin{split}
\iint_{\mathcal{R}(\tau_1,\tau_2)} \text{(I)} &= \int _{\Gamma_{\tau_2}^{\mathrm{out}}} \chi_R(r) \frac{r^p}{\lambda{}}(\partial_v\Psi{})^2\dd{}v - \int _{\Gamma_{\tau_1}^{\mathrm{out}}} \chi_R(r) \frac{r^p}{\lambda{}}(\partial_v\Psi{})^2\dd{}v \\
&\qquad - \iint_{\mathcal{R}(\tau_1,\tau_2)} \chi_R'(r)\Bigl[\frac{r^p}{\lambda{}}(\partial{}_v\Psi{})^2 + \frac{r^p}{\lambda{}} \frac{(-\partial{}_u\partial{}_vr)}{r}\Psi{}^2\Bigr]\dd{}u\dd{}v.
\end{split}
\end{equation}
Integrating the identity \(\text{(I)} + \text{(II)} = \text{(III)}\) and using
\cref{rp-lower-bound-II,rp-upper-bound-III} gives
\begin{equation}\label{rp-1}
\begin{split}
&\int _{\Gamma_{\tau_2}^{\mathrm{out}}} \chi_R(r) \frac{r^p}{\lambda{}}(\partial_v\Psi{})^2\dd{}v + \int_{\tau{}_1}^{\tau{}_2} \int _{\Gamma_{\tau}^{\mathrm{out}}} \chi_R(r) \frac{r^{p-1}}{\lambda{}}(\partial_v\Psi{})^2\dd{}v\dd{}\tau{}  \\
&\lesssim  \int _{\Gamma_{\tau_1}^{\mathrm{out}}} \chi_R(r) \frac{r^p}{\lambda{}}(\partial_v\Psi{})^2\dd{}v + \iint_{\mathcal{R}(\tau_1,\tau_2)} \chi_R'(r)\Bigl[\frac{r^p}{\lambda{}}(\partial{}_v\Psi{})^2 + \frac{r^p}{\lambda{}} \frac{(-\partial{}_u\partial{}_vr)}{r}\Psi{}^2\Bigr]\dd{}u\dd{}v \\
&\qquad + \iint_{\mathcal{R}(\tau{}_1,\tau{}_2)} \chi{}_R(r)r^{p+3}\abs{\Box{}\psi{}}^2\dd{}u\dd{}v \\
&\lesssim \side{RHS}{rp-estimate-equation} + \iint_{\mathcal{R}(\tau{}_1,\tau{}_2)} \chi{}_R'(r) \frac{r^p}{\lambda{}} \frac{(-\partial{}_u\partial{}_vr)}{r}\Psi{}^2\dd{}u\dd{}v,
\end{split}
\end{equation}
where the implied constant is \(C(\mathfrak{b}_0,\mathcal{P}_{0,0},p)\). Since
\(\chi{}'_R\) is supported in a finite-\(r\) region, the last term on the right is
controlled by \(E_{\text{bulk}}[\psi{},\mathcal{R}(\tau{}_1,\tau{}_2)]\). For the remaining
parts of \(E_p[\psi{}]\) on the left of \cref{rp-estimate-equation}, we have
\begin{equation}\label{rp-2}
\int _{\Gamma_{\tau_2}^{\text{in}}} \frac{r^2}{(-\nu{})}(\partial{}_u\varphi{})^2\dd{}u + \int_{\tau{}_1}^{\tau{}_2} \int _{\Gamma_{\tau}^{\text{in}}} \frac{r^2}{(-\nu{})}(\partial{}_u\varphi{})^2\dd{}u\dd{}\tau{} \lesssim_{R_0} E[\psi{},\Sigma{}_{\tau{}_2}] + E_{\text{bulk}}[\psi{},\mathcal{R}(\tau{}_1,\tau{}_2)].
\end{equation}
An argument using Hardy's inequality allows us to estimate
\begin{equation}\label{rp-3}
\int _{\Gamma{}_{\tau{}}^{\text{out}}\cap \set{R_0\le r\le R}}\frac{r^p}{\lambda{}}(\partial{}_v\Psi{})^2\dd{}v + \int _{\Gamma{}_{\tau{}}^{\text{out}}\cap \set{R_0\le r\le R}}\frac{r^{p-1}}{\lambda{}}(\partial{}_v\Psi{})^2\dd{}v \lesssim_{R,\mathfrak{b}_0} E[\psi{},\Sigma{}_\tau{}] + E_{\text{bulk}}[\psi{},\mathcal{R}(\tau{}_1,\tau{}_2)].
\end{equation}
Add \cref{rp-1,rp-2,rp-3} to complete the proof.
\end{proof}
\subsection{Boundedness of the weighted energy \texorpdfstring{\(\mathcal{E}_{\alpha{},p}\)}{}}
\label{sec:org1659546}
\label{sec:E-alpha-p-bound} In this section, we complete the proof of
\cref{E-alpha-bound} by controlling the \(p\)-weighted energy (for \(p\) close to
\(2\)) by initial data, as well as geometric quantities, unweighted pointwise
norms, and lower order weighted pointwise norms. To this end, we first formulate
the pigeonhole argument of \cite{rp-method} by which one obtains pointwise decay
from \(r^p\)-weighted energy estimates.
\begin{lemma}[Pigeonhole argument]
Suppose \(F_p : [1,\infty)\to \R_{\ge 0}\) is a family of non-negative functions
defined for \(p\in \R\) such that
\begin{enumerate}
\item \(F_p(\tau{})\lesssim_{p,p'}F_{p'}(\tau{})\) whenever \(p\le p'\),
\item and there are \(q\ge 0\), \(\epsilon{} > 0\), and \(A\ge 0\) such that for any \(p\in
   (0,2-\epsilon{}]\) and \(1\le \tau_1\le \tau_2\), we have
\end{enumerate}
\begin{equation}\label{rp-F-estimate}
F_p(\tau{}_2) + \int_{\tau{}_1}^{\tau{}_2} F_{p-1}(\tau{})\dd{}\tau{} \lesssim_p F_{p}(\tau{}_1) + A\tau{}_1^{p-q}.
\end{equation}
Then
\begin{align}
F_{2-\epsilon{}}(\tau{}) &\lesssim F_{2-\epsilon{}}(1) + A, \label{rp-F-conclusion-2} \\
F_{1-\epsilon{}}(\tau{}) &\lesssim \tau{}^{-1}(F_{2-\epsilon{}}(1) + A\max (1,\tau{}^{2-\epsilon{}-q}), \label{rp-F-conclusion-1} \\
\int_{\tau{}/2}^\tau{} F_{-\epsilon{}}(\tau{}')\dd{}\tau{}'&\lesssim \tau{}^{-1}((F_{2-\epsilon{}}(1) + A\max (1,\tau{}^{2-\epsilon{}-q})) \label{rp-F-conclusion-0}
\end{align}
for \(\tau{}\ge 1\), with implicit constants depending on \(\epsilon{}\).
\label{rp-pigeonhole-argument}
\end{lemma}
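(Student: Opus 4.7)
The plan is to implement the standard Dafermos--Rodnianski pigeonhole argument. I would proceed in three stages, each corresponding to one of the three conclusions, building on the previous ones.

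First, to prove \cref{rp-F-conclusion-2}, I would simply apply the hypothesis \cref{rp-F-estimate} at $p = 2-\epsilon{}$ with $\tau{}_1 = 1$ and $\tau{}_2 = \tau{}$. Since $1^{p-q} = 1$, this immediately gives $F_{2-\epsilon{}}(\tau{}) \lesssim F_{2-\epsilon{}}(1) + A$. A crucial byproduct, obtained by sending $\tau{}_2 \to \infty$, is the integrability statement
\begin{equation*}
\int_1^\infty F_{1-\epsilon{}}(\tau{})\dd{}\tau{} \lesssim F_{2-\epsilon{}}(1) + A.
\end{equation*}

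Second, to prove \cref{rp-F-conclusion-1}, I would use the pigeonhole principle. For $\tau{}\ge 2$, applying the integrability statement above on the interval $[\tau{}/2,\tau{}]$ and the mean-value principle produces a point $\tau{}^\ast{}\in [\tau{}/2,\tau{}]$ at which
\begin{equation*}
F_{1-\epsilon{}}(\tau{}^\ast{})\lesssim \tau{}^{-1}(F_{2-\epsilon{}}(1) + A).
\end{equation*}
I would then propagate this bound forward to $\tau{}$ by applying the hypothesis \cref{rp-F-estimate} at $p = 1-\epsilon{}$ with $\tau{}_1 = \tau{}^\ast{}$ and $\tau{}_2 = \tau{}$, yielding a term of the form $A(\tau{}^\ast{})^{1-\epsilon{}-q}\lesssim A\tau{}^{1-\epsilon{}-q} = \tau{}^{-1}\cdot A\tau{}^{2-\epsilon{}-q}$. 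Using property (1) to compare $F_{1-\epsilon{}}$ to itself and writing $A\tau{}^{2-\epsilon{}-q}\le A\max (1,\tau{}^{2-\epsilon{}-q})$ delivers the claimed bound. (For $\tau{}\in [1,2]$ the bound follows directly from \cref{rp-F-conclusion-2} and property (1).)

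Third, to prove \cref{rp-F-conclusion-0}, I would apply the hypothesis \cref{rp-F-estimate} at $p = 1-\epsilon{}$ with $\tau{}_1 = \tau{}/2$ and $\tau{}_2 = \tau{}$, which gives
\begin{equation*}
\int_{\tau{}/2}^\tau{} F_{-\epsilon{}}(\tau{}')\dd{}\tau{}'\lesssim F_{1-\epsilon{}}(\tau{}/2) + A(\tau{}/2)^{1-\epsilon{}-q}.
\end{equation*}
Substituting in the bound on $F_{1-\epsilon{}}(\tau{}/2)$ from \cref{rp-F-conclusion-1} and absorbing $A\tau{}^{1-\epsilon{}-q}$ into $\tau{}^{-1}A\max (1,\tau{}^{2-\epsilon{}-q})$ as before completes the proof.

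There is no genuine obstacle here — the argument is purely bookkeeping with the integrability statement from Step 1 being the only ``nontrivial'' ingredient. The only point requiring a little care is tracking the $\max (1,\tau{}^{2-\epsilon{}-q})$ factor, which handles both the case $q<2-\epsilon{}$ (when the $A$-driven term actually grows) and the case $q\ge 2-\epsilon{}$ (when it is dominated by the initial-data contribution); property (1) is used implicitly whenever we compare a function $F_p$ to $F_{p'}$ at the same $\tau{}$.
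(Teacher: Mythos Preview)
Your proposal is correct and follows essentially the same Dafermos--Rodnianski pigeonhole scheme as the paper's proof. The only cosmetic difference is that you extract the global integrability statement $\int_1^\infty F_{1-\epsilon}(\tau)\dd\tau \lesssim F_{2-\epsilon}(1)+A$ and pigeonhole from that, whereas the paper applies \cref{rp-F-estimate} at $p=2-\epsilon$ directly on $[\tau/2,\tau]$ and then invokes \cref{rp-F-conclusion-2} to bound $F_{2-\epsilon}(\tau/2)$; the two routes are equivalent.
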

\begin{proof}
We do not track the dependence of implicit constants on \(\epsilon{}\) (that arises from
the use of \cref{rp-F-estimate} for fixed values of \(p\) that depend on \(\epsilon{}\)).

\step{Step 1: Proof of \cref{rp-F-conclusion-2} and proof of
\cref{rp-F-conclusion-1,rp-F-conclusion-0} for \(\tau{}\in [1,2]\).} First,
\cref{rp-F-conclusion-2} follows from \cref{rp-F-estimate} for \(p = 2-\epsilon\) and
\(\tau_1 = 1\) (since the \(F_p\) are non-negative). Next, suppose \(1\le
\tau{}\le 2\). Start with \cref{rp-F-estimate} for \(p = 1-\epsilon{}\) over the
interval \([1,\tau{}]\), and then use the consequence \(F_{1-\epsilon{}}(1)\lesssim
F_{2-\epsilon{}}(1)\) of the monotonicity assumption (1) to get
\begin{equation}
F_{1-\epsilon{}}(\tau{}) + \int_{1}^{\tau{}} F_{-\epsilon{}}(\tau{}')\dd{}\tau{}' \lesssim F_{1-\epsilon{}}(1) + A \lesssim F_{2-\epsilon{}}(1) + A.
\end{equation}
This implies \cref{rp-F-conclusion-1,rp-F-conclusion-0} when \(\tau{}\in [1,2]\). We can
therefore suppose for the rest of the proof that \(\tau{}\ge 2\).

\step{Step 2: Proof of \cref{rp-F-conclusion-1} for \(\tau{}\ge 2\).} The estimate
\cref{rp-F-estimate} for \(p = 2-\epsilon{}\) and the non-negativity of the \(F_p\) give
\begin{equation}\label{rp-F-conclusion-1-step-0}
\int_{\tau{}/2}^{\tau{}} F_{1-\epsilon{}}(\tau{}')\dd{}\tau{}'\lesssim F_{2-\epsilon{}}(\tau{}/2) +  A\tau{}^{2-\epsilon{}-q}.
\end{equation}
There is \(\conj{\tau{}}\in [\tau{}/2,\tau{}]\) at which \(F_{1-\epsilon{}}\) is bounded by its
average value over the interval (``the pigeonhole principle''). Combine this
observation with \cref{rp-F-conclusion-1-step-0} and \cref{rp-F-conclusion-2} to
arrive at
\begin{equation}\label{rp-pigeonhole-1}
F_{1-\epsilon{}}(\conj{\tau{}})\le 2\tau{}^{-1}\int_{\tau{}/2}^\tau{} F_{1-\epsilon{}}(\tau{}')\dd{}\tau{}'\lesssim \tau{}^{-1}(F_{2-\epsilon{}}(1) + A\max (1,\tau{}^{2-\epsilon{}-q})).
\end{equation}
Conclude \cref{rp-F-conclusion-1} by substituting \cref{rp-pigeonhole-1} into
\cref{rp-F-estimate} with \(p = 1-\epsilon{}\) over the interval \([\conj{\tau{}},\tau{}]\).

\step{Step 3: Proof of \cref{rp-F-conclusion-0} for \(\tau{}\ge 2\).} Start with
\cref{rp-F-estimate} for \(p = 1-\epsilon\) over the interval
\([\tau{}/2,\tau{}]\), and then use the just established \cref{rp-F-conclusion-1}
and the non-negativity of \(F_p\).
\end{proof}
\begin{lemma}[Estimate for weighted energy norm]
There are explicit constants \(C_\alpha{}\) depending only on \(\alpha{}\) such that for \(s \ge
\eta_0\) sufficiently small (depending on a numerical constant), we have
\begin{equation}\label{E-alpha-p-bound-eq}
\mathcal{E}_{\alpha{},2-\eta{}_0-C_\alpha{}s} \le C(\mathcal{G}_{\alpha,s},\mathcal{P}_{\alpha{},0},\mathcal{P}_{<\alpha{},1+\eta{}_0},s)\mathcal{D}_\alpha{}.
\end{equation}
\label{lem:E-alpha-p-bound}
\end{lemma}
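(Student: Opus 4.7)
We argue by induction on $|\alpha|$. The base case $|\alpha|=0$ is \cref{E-0-p-bound-statement}, which follows from \cref{rp-estimate} applied to $\psi = \varphi$ (using $\Box\varphi = 0$) together with the unweighted bound \cref{E-alpha-bound-0}. For the inductive step, assume $\mathcal{E}_{\beta, 2-\eta_0 - C_\beta s}$ is controlled for all $\beta < \alpha$, fix $L \in \Gamma^\alpha$, and apply \cref{rp-estimate} to $\psi = L\varphi$. For $p \in (0, 2-\eta_0]$ and $1 \le \tau_1 \le \tau_2$, this yields
\begin{equation*}
E_p[L\varphi](\tau_2) + \int_{\tau_1}^{\tau_2} E_{p-1}[L\varphi](\tau)\,d\tau \lesssim E_p[L\varphi](\tau_1) + \iint_{\mathcal{R}(\tau_1,\tau_2)} r^{p+3}|\Box L\varphi|^2 + E_{\mathrm{bulk}}[L\varphi] + E[L\varphi,\Sigma_{\tau_2}],
\end{equation*}
where the last two summands are already bounded by $\mathcal{D}_\alpha^2 + \mathcal{E}_{<\alpha, 4s}^2$ via \cref{E-alpha-bound-higher}.

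\textbf{Commutator error.} The main work is to estimate $\iint r^{p+3}|\Box L\varphi|^2$. By the pointwise bound \cref{wave-pointwise-1}, this splits as
\begin{equation*}
C(\mathfrak{B}_0^{\circ},\mathfrak{g}_0,\alpha) \iint r^{p-1+2s}[|\partial_v L\varphi|^2 + |UL\varphi|^2] + C(\mathcal{G}_{\alpha,s}) \iint r^{p-1+2s}[\mathbf{1}_{r\ge\Rc}|\partial_v(r\Gamma^{<\alpha}\varphi)|^2 + |\partial_v\Gamma^{<\alpha}\varphi|^2 + |U\Gamma^{<\alpha}\varphi|^2].
\end{equation*}
The top-order contributions are absorbed into the LHS: on $\{r \ge R\}$ the weight $r^{p-1+2s}$ is smaller than the bulk weight $r^{p-1}$ on $(\partial_v(rL\varphi))^2 \sim r^2(\partial_v L\varphi)^2$ by the factor $r^{-2+2s}$, allowing absorption once $R$ is chosen large depending on $C(\mathfrak{B}_0^{\circ},\mathfrak{g}_0,\alpha)$ and $s$ is small, while on $\{r \le R\}$ the corresponding piece is bounded by $E_{\mathrm{bulk}}[L\varphi]$ and hence via \cref{E-alpha-bound-higher}. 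The lower-order contributions are bounded, up to Hardy-type rearrangements between $\partial_v\psi$ and $\partial_v(r\psi)$, by $\mathcal{E}_{<\alpha, p+2s}^2$; this is controlled by the inductive hypothesis provided $p + 2s \le 2 - \eta_0 - C_{<\alpha} s$, which forces the recursion $C_\alpha \ge C_{<\alpha} + 2$ and yields $C_\alpha \sim 2|\alpha|$.

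\textbf{Pigeonhole step and main obstacle.} To upgrade this hierarchy into $\tau$-decay of $E_p[L\varphi](\tau)$ we apply \cref{rp-pigeonhole-argument} with $F_p(\tau) = E_p[L\varphi](\tau)$. The critical step is extracting the $A\tau_1^{p-q}$-decay of the RHS required by \cref{rp-F-estimate} with $q$ at least $2-\eta_0$. This is accomplished by interpolating the lower-order bulk error between the $L^2$ bound $\mathcal{E}_{<\alpha, p+2s}^2$ (which has no $\tau$-decay of its own) and the pointwise norm $\mathcal{P}_{<\alpha, 1+\eta_0}$, which by the interpolation remark following \cref{sec:norms:pointwise} supplies mixed $r^\rho\tau^{p/2+1/2-\rho-\eta_0/2}$-weights on $|\Gamma^{<\alpha}\varphi|$; Hardy's inequality is used to convert $\partial_v$-bulk weights into undifferentiated-$\psi$ bulk weights where necessary. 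The norm $\mathcal{P}_{\alpha, 0}$ enters to handle the undifferentiated boundary contribution $\lim_{v\to\infty}rL\varphi$ required by the hypothesis of \cref{rp-estimate} and to control $L\varphi$ in the $\mathcal{P}_{0,0}$-constants produced by that lemma. The chief obstacle is coordinating the $r$- and $\tau$-weight budgets at each step of the induction: every commutation loses $2$ units of $r$-weight in the recursion $C_\alpha \ge C_{<\alpha} + 2$, and the $\tau$-decay needed to feed the pigeonhole must come entirely from the lower-order pointwise control $\mathcal{P}_{<\alpha, 1+\eta_0}$, since $\mathcal{P}_{\alpha, 1+\eta_0}$ is not yet established at this stage of the proof.
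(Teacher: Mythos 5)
Your overall skeleton (induction on $\alpha$, the $r^p$-estimate of \cref{rp-estimate}, splitting $\iint r^{p+3}|\Box L\varphi|^2$ via \cref{wave-pointwise-1}, absorbing the top-order piece using the extra $r^{-2+2s}$ relative to the bulk, then feeding the result into \cref{rp-pigeonhole-argument}) matches the paper. But the mechanism you propose for producing the $\tau_1$-decay $A\tau_1^{p-q}$ required by \cref{rp-F-estimate} is wrong, and this is the heart of the lemma. You claim this decay is extracted by interpolating the lower-order bulk error against $\mathcal{P}_{<\alpha,1+\eta_0}$. That cannot work for two reasons. First, the lower-order error consists of \emph{derivative} terms $|\partial_v\Gamma^{<\alpha}\varphi|^2$, $|U\Gamma^{<\alpha}\varphi|^2$, $|\partial_v(r\Gamma^{<\alpha}\varphi)|^2$, and the norm $\mathcal{P}_p$ carries $\tau$-weights only on the undifferentiated field (the terms $\norm{rU\psi}_{L^\infty}$ and $\norm{r^2\partial_v\psi}_{L^\infty}$ have no $\tau$-weight at all), so no amount of interpolation with $\mathcal{P}_{<\alpha,1+\eta_0}$ yields $\tau$-decay of these integrands. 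Second, even if it did, $p=1+\eta_0$ supplies roughly one power of $\tau$, whereas the pigeonhole at the rungs $p=1-\epsilon$ and $p=-\epsilon$ needs the inhomogeneity to decay like $\tau_1^{p-q}$ with $q$ close to $2$. (Your parenthetical that $\mathcal{E}_{<\alpha,p+2s}$ ``has no $\tau$-decay of its own'' also signals a misreading of \cref{Ep-norm-definition}: those norms are $\tau$-weighted suprema by definition.)

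The correct source of the $\tau_1$-decay is the lower-order $r^p$-\emph{hierarchy itself}: one strengthens the induction hypothesis from ``$\mathcal{E}_{<\alpha,2-\eta_0-C_{<\alpha}s}$ is finite'' to the full family of flux-decay statements $E_{p}[\Gamma^{<\alpha}\varphi](\tau)\lesssim \tau^{p-q_{<\alpha}}\mathcal{D}_\alpha^2$ for all admissible $p$, shows (by integrating the lower-order $r^p$-estimate in $\tau$) that the lower-order bulk error over $\mathcal{R}(\tau_1,\tau_2)$ is controlled by a lower-order flux $E_{p+Cs}[\Gamma^{<\alpha}\varphi](\tau_1)$ on the initial slice, and then invokes the inductive decay of that flux. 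Relatedly, you bound $E_{\mathrm{bulk}}[L\varphi]$ and $E[L\varphi,\Sigma_{\tau_2}]$ globally by $\mathcal{D}_\alpha^2+\mathcal{E}_{<\alpha,4s}^2$ via \cref{E-alpha-bound-higher}; inserted into \cref{rp-F-estimate} this is a constant inhomogeneity ($q=0$), which destroys the pigeonhole. These terms must instead be estimated by the localized energy boundedness \cref{energy-control-1} on $\mathcal{R}(\tau_1,\tau_2)$ and folded into $F_p(\tau_1)$ via \cref{rp-unweighted-energy-bound}. The roles you assign to $\mathcal{P}_{\alpha,0}$ (killing the boundary term at null infinity) and to $\mathcal{P}_{<\alpha,1+\eta_0}$ (entering only through the energy boundedness, via the quartic error) are correct, and the discrepancy between your $C_\alpha\sim 2|\alpha|$ and the paper's bookkeeping is immaterial.
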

\begin{proof}
We use the notation \(\mathcal{R}(\tau_1,\tau_2) \coloneqq{} \bigcup_{\tau_1\le \tau{}\le \tau_2}\Sigma_\tau{}\). Write \(A_{\alpha{},s} \coloneqq{}
C(\mathcal{G}_{\alpha,s},\mathcal{P}_{\alpha{},0},\mathcal{P}_{<\alpha{},1+\eta{}_0},s)\).

\step{Step 1: \(r^p\)-weighted energy estimate and energy boundedness for the
foliation \(\Sigma_\tau{}\).} Let \(1\le \tau_1<\tau_2\). In this step we show that for \(p\in
(0,2-4B_\alpha{}s-\eta_0]\), we have
\begin{align}
E[L\varphi{},\Sigma{}_{\tau{}_2}] + E_{\mathrm{bulk}}[L\varphi{},\mathcal{R}(\tau{}_1,\tau{}_2)]&\le C(A_{\alpha,s},s)[E[\Gamma{}^{\le \alpha{}}\varphi{},\Sigma{}_{\tau{}_1}] + E_{4B_\alpha{}s}[\Gamma{}^{<\alpha{}}\varphi{}](\tau{}_1)], \label{rp:sigma-tau-boundedness} \\
E_p[L\varphi{}](\tau{}_2) + \int_{\tau{}_1}^{\tau{}_2} E_{p-1}[L\varphi{}](\tau{})\dd{}\tau{}&\le C(A_{\alpha,s},p,s)[E_p[\Gamma{}^{\le \alpha{}}\varphi{}](\tau{}_1) + E_{p+4B_\alpha{}s}[\Gamma{}^{<\alpha{}}\varphi{}](\tau{}_1)], \label{rp:controlled-error}
\end{align}
where \(B_\alpha{} = \card{}\set{\beta{} : 0\le \beta{}<\alpha{}}\) is the number of multi-indices less than
\(\alpha{}\). These follow from Steps 2abc below (see
\cref{rp:sigma-tau-boundedness-prep,rp:controlled-error-prep,rp:bulk-bound}).

\step{Step 1a: Energy boundedness.} We begin by showing that
\begin{equation}\label{rp:sigma-tau-boundedness-prep}
\begin{split}
E[L\varphi{},\Sigma{}_{\tau{}_2}] + E_{\mathrm{bulk}}[L\varphi{},\mathcal{R}(\tau{}_1,\tau{}_2)]&\le C(A_{\alpha,s})[E[\Gamma{}^{\le \alpha{}}\varphi{},\Sigma{}_{\tau{}_1}] + E_{4s,\mathrm{bulk}}[\Gamma{}^{<\alpha{}}\varphi{},\mathcal{R}(\tau{}_1,\tau{}_2)]].
\end{split}
\end{equation}
In fact, this is a consequence of \cref{energy-control-1} applied to the region
\(\mathcal{R}(\tau{}_1,\tau{}_2)\). The limiting boundary terms on
\(\mathcal{H}\) and \(\mathcal{I}\) can be controlled by another application of
\cref{energy-control-1} to the region \(\mathcal{R}(\tau_1,\tau_2)\). We can
neglect the \(p_{\mathrm{fut}}\) term because \(\Sigma_{\tau{}_1}\) extends to
\(v = \infty\) and we can assume \(\limsup_{v\to \infty}r\abs{\Gamma^{\le
\alpha{}}\varphi{}}^2(\tau_1,v) = 0\) by including \(\mathcal{P}_{\alpha{},0}\)
in the constant on the right (since the limit supremum vanishes if
\(\mathcal{P}_{\alpha{},0}\) is finite).

\step{Step 1b: Controlling the error terms in the \(r^p\)-weighted energy estimate.}
Let \(p\in (0,2-4s-\eta_0]\). We show that
\begin{equation}\label{rp:controlled-error-prep}
E_p[L\varphi{}](\tau{}_2) + \int_{\tau{}_1}^{\tau{}_2} E_{p-1}[L\varphi{}](\tau{})\dd{}\tau{}\le C(A_{\alpha,s},p)[E_p[\Gamma{}^{\le \alpha{}}\varphi{}](\tau{}_1) + E_{p+4s,\mathrm{bulk}}[\Gamma{}^{<\alpha{}}\varphi{},\mathcal{R}(\tau{}_1,\tau{}_2)]].
\end{equation}
Since \(p\le 2-2s-\eta_0\), \cref{wave-pointwise-1} implies
\begin{equation}\label{rp:error-prep}
\begin{split}
&\iint_{\mathcal{R}(\tau{}_1,\tau{}_2)}r^{p+3}\abs{\Box{}L\varphi{}}^2\kappa{}(-\nu{})\dd{}u\dd{}v\\
&\le C(\mathcal{G}_{\alpha,s})\iint_{\mathcal{R}(\tau{}_1,\tau{}_2)}r^{p-1+2s}[\abs{\partial{}_v\Gamma{}^{\le \alpha{}}\varphi{}}^2 + \abs{U\Gamma{}^{\le \alpha{}}\varphi{}}^2 + \mathbf{1}_{r\ge \Rc}\abs{\partial{}_v(r\Gamma{}^{<\alpha{}}\varphi{})}^2]\kappa{}(-\nu{})\dd{}u\dd{}v \\
&\le C(\mathcal{G}_{\alpha,s})[E_{\text{bulk}}[\Gamma{}^{\le \alpha{}}\varphi{},\mathcal{R}(\tau{}_1,\tau{}_2)] + E_{p+2s,\text{bulk}}[\Gamma{}^{<\alpha{}}\varphi{},\mathcal{R}(\tau{}_1,\tau{}_2)]].
\end{split}
\end{equation}

Substitute \cref{rp:sigma-tau-boundedness-prep,rp:error-prep} into the
\(r^p\)-weighted energy estimate \cref{rp-estimate-equation} (noting as in Step 1a
that we can assume \(\limsup_{v\to \infty}r\abs{\Gamma^{\le
\alpha{}}\varphi{}}^2(\tau{},v) = 0\) for all \(\tau{}\ge 1\) by including
\(\mathcal{P}_{\alpha{},0}\) in the constant on the right) to obtain
\begin{equation}
\begin{split}
E_p[L\varphi{}](\tau{}_2) + \int_{\tau{}_1}^{\tau{}_2} E_{p-1}[L\varphi{}](\tau{})\dd{}\tau{}&\le C(A_{\alpha,s},p)\bigl[E_p[\Gamma{}^{\le \alpha{}}\varphi{}](\tau{}_1) + E[\Gamma{}^{\le \alpha{}}\varphi{},\Sigma{}_{\tau{}_1}] \\
&\qquad + E_{\max (p+2s,4s),\mathrm{bulk}}[\Gamma{}^{<\alpha{}}\varphi{},\mathcal{R}(\tau{}_1,\tau{}_2)]\bigr].
\end{split}
\end{equation}
To complete the proof, use \(\max(p + 2s,4s)\le p + 4s\) and control \(E[\Gamma^{\le \alpha{}}\varphi{},\Sigma_{\tau_1}]\) by \(E_p[\Gamma^{\le \alpha{}}\varphi{}](\tau{}_1)\) using
\cref{rp-unweighted-energy-bound}.

\step{Step 1c: Estimate for \(r^p\)-weighted bulk term.} We claim that for \(\tau{}\ge1\)
and \(p\in (0,2-4B_\alpha{}s-\eta{}_0]\), we have
\begin{equation}\label{rp:bulk-bound}
E_{p,\mathrm{bulk}}[\Gamma{}^{\le \alpha{}}\varphi{},\mathcal{R}(\tau{},\infty)] \le C(A_{\alpha,s},p)[E_{p}[\Gamma{}^{\le \alpha{}}\varphi{}](\tau{}) + E_{p+4B_\alpha{}s}[\Gamma{}^{<\alpha{}}\varphi{}](\tau{})],
\end{equation}
where \(B_\alpha{} = \card{}\set{\beta{}:0\le \beta{}<\alpha{}}\) is the number of multi-indices smaller than \(\alpha{}\).
Indeed, Step 1b (see \cref{rp:controlled-error-prep}) implies
\begin{equation}
\begin{split}
E_{p,\mathrm{bulk}}[\Gamma{}^{\le \alpha{}}\varphi{},\mathcal{R}(\tau{}_1,\tau{}_2)]&\le C(\mathfrak{b}_0)\int_{\tau{}_1}^{\tau{}_2} E_{p-1}[\Gamma{}^{\le \alpha{}}\varphi{}](\tau{})\dd{}\tau{}\\
&\le C(A_{\alpha,s})[E_{p}[\Gamma{}^{\le \alpha{}}\varphi{}](\tau{}_1) + E_{p+4s,\mathrm{bulk}}[\Gamma{}^{<\alpha{}}\varphi{},\mathcal{R}(\tau{}_1,\tau{}_2)]].
\end{split}
\end{equation}
An induction argument establishes \cref{rp:bulk-bound}.

\step{Step 2: Interpolation.} Observe that for any \(\tau{}\ge 1\) and \(\epsilon{}\ge 0\) and \(q\ge -\epsilon{}\), we have
\begin{equation}\label{rp:interpolation}
\begin{split}
E_q[\psi{}](\tau{}) &\le \int _{\Sigma_\tau^{\text{in}}} \frac{r^2}{(-\nu{})}(\partial_u\psi{})^2\dd{}u + r\abs{\psi{}}^2(\tau{},v_{R_0}(\tau{})) \\
&\qquad + \tau{}^{q+\epsilon}\int _{\Sigma{}_\tau^{\text{out}}\cap \set{r\le \tau{}}} \frac{r^{-\epsilon}}{\lambda{}}(\partial{}_v(r\psi{}))^2\dd{}v + \tau^{q-1+\epsilon}\int _{\Sigma{}_\tau^{\text{out}}\cap \set{r\ge \tau{}}} \frac{r^{1-\epsilon}}{\lambda{}}(\partial{}_v(r\psi{}))^2\dd{}v \\
&\le \tau{}^{q+\epsilon}E_{-\epsilon}[\psi{}](\tau{}) + \tau{}^{q-1+\epsilon}E_{1-\epsilon}[\psi{}](\tau{}).
\end{split}
\end{equation}
We have bounded the first two terms on the right in the first line by
\(E_{-\epsilon}[\psi{}](\tau{})\).

\step{Step 3: Energy decay through the foliation \(\Sigma{}_\tau{}\).} We will prove inductively that for \(p\in
(0,2-4B_\alpha{}s-\eta{}_0]\) we have
\begin{equation}\label{rp:decay-induction-hypothesis}
E_{p}[\Gamma^{\le \alpha{}}\varphi{}](\tau{})\le \tau^{p-q_\alpha{}}C(A_{\alpha{},s},p)\mathcal{D}_\alpha^2,
\end{equation}
where \(q_\alpha{} = 2 - \eta{}_0 - 4\tilde{B}_\alpha{}s\) for \(\tilde{B}_\alpha{} \coloneqq{} \sum_{\beta{}\le \alpha{}}B_\beta{}\).

We first show that \cref{rp:decay-induction-hypothesis} holds for \(\alpha{} = 0\) with
\(q_0 = 2 - \eta{}_0\). By \cref{rp:controlled-error}, we have
\begin{equation}\label{rp:phi-estimate}
E_p[\varphi{}](\tau{}_2) + \int_{\tau{}_1}^{\tau{}_2} E_{p-1}[\varphi{}](\tau{})\dd{}\tau{}\le C(A_{0,s},p)E_p[\varphi{}](\tau{}_1)
\end{equation}
for \(p\in (0,2-\eta{}_0]\) and \(1\le \tau{}_1\le \tau{}_2\). Now \cref{rp-pigeonhole-argument} implies
that
\begin{equation}\label{rp:phi:rp-cons}
E_{1-\eta{}_0}[\varphi{}](\tau{}) + \int_{\tau{}/2}^\tau{} E_{-\eta{}_0}[\varphi{}](\tau{}')\dd{}\tau{}' \lesssim \tau{}^{-1}E_{2-\eta{}_0}[\varphi{}](1)\le \tau{}^{-1}C(A_{0,s})\mathcal{D}_0^2, \\
\end{equation}
where we used \cref{rp:initial-energy-bound} to control \(E_{2-\eta_0}[\varphi{}](1)\) by
initial data. Interpolate using \cref{rp:interpolation} to obtain
\begin{equation}\label{rp:E-p-phi-bound}
E_p[\varphi{}](\tau{})\le \tau{}^{p+\eta{}_0}E_{-\eta{}_0}[\varphi{}](\tau{}) + \tau{}^{p-1+\eta{}_0}E_{1-\eta{}_0}[\varphi{}](\tau{}).
\end{equation}
A pigeonhole argument applies to \cref{rp:phi:rp-cons} implies that there is \(\conj{\tau{}}\in [\tau{}/2,\tau{}]\) such that
\begin{equation}\label{rp:E-neg-phi-bound}
E_{-\eta_0}[\varphi{}](\conj{\tau{}})\le \frac{2}{\tau{}}\int_{\tau{}/2}^\tau{} E_{-\eta_0}[\varphi{}](\tau{}')\dd{}\tau{}'\le \tau^{-2}C(A_{0,s})\mathcal{D}_0^2.
\end{equation}
Substitute \cref{rp:E-neg-phi-bound,rp:phi:rp-cons} into \cref{rp:E-p-phi-bound} to get
\begin{equation}\label{rp:phi-estimate-2}
E_p[\varphi{}](\conj{\tau{}})\le \tau{}^{p-2+\eta{}_0}C(A_{0,s})\mathcal{D}_0^2.
\end{equation}
Use \cref{rp:phi-estimate-2} and \cref{rp:phi-estimate} over the interval
\([\conj{\tau{}},\tau{}]\) to obtain
\begin{equation}
E_p[\varphi{}](\tau{})\le C(A_{0,s},p)E_p[\varphi{}](\conj{\tau{}})\le \tau{}^{p-2+\eta{}_0}C(A_{0,s},p)\mathcal{D}_0^2.
\end{equation}

Now we show the inductive step. Suppose \cref{rp:decay-induction-hypothesis} holds
for multi-indices \(<\alpha{}\). Then \cref{rp:controlled-error} and the induction
hypothesis give
\begin{equation}\label{rp:controlled-error-cons}
E_p[L\varphi{}](\tau_2) + \int_{\tau_1}^{\tau_2} E_{p-1}[L\varphi{}](\tau{})\dd{}\tau{} \le C(A_{\alpha{},s},p)[E_p[\Gamma^{\le \alpha{}}\varphi{}](\tau_1) + \tau_1^{p + 4B_\alpha{}s-q_{<\alpha{}}}A_{\alpha{},s}\mathcal{D}_\alpha^2]
\end{equation}
for \(1\le \tau{}_1\le \tau{}_2\). Sum over \(L\in \Gamma^\alpha{}\) and use \cref{rp-pigeonhole-argument} to get
\begin{equation}\label{rp:pigeonhole-consequence}
\begin{split}
E_{1-\eta{}_0}[\Gamma^{\le \alpha{}}\varphi{}](\tau{}) + \int_{\tau{}/2}^\tau{} E_{-\eta{}_0}[\Gamma^{\le \alpha{}}\varphi{}](\tau{}')\dd{}\tau{}' & \lesssim \tau^{1-\eta{}_0-q_{<\alpha{}} + 4B_\alpha{}s}C(A_{\alpha{},s})\mathcal{D}_\alpha^2, \\
\end{split}
\end{equation}
We have used \cref{rp:initial-energy-bound} to control \(E_{2-\eta{}_0}[\Gamma{}^{\le \alpha{}}\varphi{}](1)\).
By a pigeonhole argument, there is \(\conj{\tau{}}\in [\tau{}/2,\tau{}]\) such that
\begin{equation}\label{rp:E-epsilon-bound}
E_{-\eta{}_0}[\Gamma^{\le \alpha{}}\varphi{}](\conj{\tau{}})\le \frac{2}{\tau{}} \int_{\tau{}/2}^\tau{} E_{-\eta{}_0}[\Gamma^{\le \alpha{}}\varphi{}](\tau{}')\dd{}\tau{}'\lesssim \tau^{-\eta{}_0-q_{<\alpha{}} + 4B_\alpha{}s}C(A_{\alpha{},s})\mathcal{D}_\alpha^2.
\end{equation}
Interpolate using \cref{rp:interpolation} and then use
\cref{rp:pigeonhole-consequence,rp:E-epsilon-bound} to get
\begin{equation}
\begin{split}
E_p[\Gamma^{\le \alpha{}}\varphi{}](\conj{\tau{}}) & \le \tau^{p+\eta_0}E_{-\eta_0}[\Gamma^{\le \alpha{}}\varphi{}](\conj{\tau{}}) + \tau^{p-1 + \eta_0}E_{1-\eta_0}[\Gamma^{\le \alpha{}}\varphi{}](\conj{\tau{}}) \le \tau^{p-q_{<\alpha{}} + 4B_\alpha{}s}C(A_{\alpha{},s},p)\mathcal{D}_\alpha^2.
\end{split}
\end{equation}
Apply \cref{rp:controlled-error-cons} over the interval \([\conj{\tau{}},\tau{}]\) to obtain
\begin{equation}
\begin{split}
E_p[\Gamma{}^{\le \alpha{}}\varphi{}](\tau{})&\le C(A_{\alpha{},s},p)[E_p[\Gamma^{\le \alpha{}}\varphi{}](\conj{\tau}) + \conj{\tau}^{p + 4B_\alpha{}s-q_{<\alpha{}}}A_{\alpha{},s}\mathcal{D}_\alpha^2] \le \tau{}^{p-(q_{<\alpha{}}-4B_\alpha{}s)}C(A_{\alpha{},s},p)\mathcal{D}_{\alpha{}}^2.
\end{split}
\end{equation}
This completes the inductive proof of \cref{rp:decay-induction-hypothesis}.

\step{Step 4: Completing the proof of \cref{E-alpha-p-bound-eq}.} Write \(p_{\text{max}}
\coloneqq{} 2 - \eta_0 - 4(B_\alpha{} + 1)s - 4\tilde{B}_\alpha{}s\), so that it is enough to show that
\begin{equation}\label{rp:conc-goal}
\mathcal{E}_{p_{\text{max}}}^2\le C(A_{\alpha{},s})\mathcal{D}_\alpha^2,
\end{equation}
 For \(\tau{}\ge 1\), let \(\mathcal{S}_{\tau{}}\) be the collection of
null curves whose past endpoint lies in \(\Sigma_{\tau}\). Then
\begin{equation}\label{rp:conc-1}
\mathcal{E}_{\alpha{},p}^2\le \sup_{\tau{}\ge 1} \sup_{\Sigma{}\in \mathcal{S}_\tau{}} \tau{}^{p}E[\Gamma{}^{\le \alpha{}}\varphi{},\Sigma{}] + \sup_{\tau{}\ge 1}\tau{}^{p-1-\eta{}_0}E_{1+\eta{}_0}[\Gamma{}^{\le \alpha{}}\varphi{},\Sigma{}_\tau{}] + E_{p,\mathrm{bulk}}[\Gamma{}^{\le \alpha{}}\varphi{},\mathcal{R}(1,\infty)].
\end{equation}
Fix a null curve \(\Sigma{}\in \mathcal{S}_\tau{}\), so that \(\Sigma{}\subset \mathcal{R}(\tau{},\infty)\). Apply
the energy boundedness statement of \cref{energy-boundedness-curves-in-future} to
obtain
\begin{equation}
E[\Gamma{}^{\le \alpha{}}\varphi{},\Sigma{}]\le C(A_{\alpha{},s})[E[\Gamma{}^{\le \alpha{}}\varphi{},\Sigma{}_\tau{}] + E_{4s,\mathrm{bulk}}[\Gamma{}^{<\alpha{}}\varphi{},\mathcal{R}(\tau{},\infty)]].
\end{equation}
The \(p_{\text{fut}}\) term vanishes because \(r\to \infty\) along \(\Sigma_\tau{}\) and the
pointwise norm \(\mathcal{P}_{\alpha{},0}\) appears on the right. By
\cref{rp-unweighted-energy-bound,rp:bulk-bound,rp:decay-induction-hypothesis}, we
have
\begin{equation}\label{rp:conc-2}
\begin{split}
E[L\varphi{},\Sigma{}]&\le C(A_{\alpha{},s})[E_{4s}[\Gamma{}^{\le \alpha{}}\varphi{}](\tau{}) + E_{4(B_\alpha{} + 1)s}[\Gamma{}^{<\alpha{}}\varphi{}](\tau{})]\le  C(A_{\alpha{},s})E_{4(B_\alpha{} + 1)s}[\Gamma{}^{\le \alpha{}}\varphi{}](\tau{}) \\
&\le \tau^{-2 + \eta{}_0 + 4(B_\alpha{} + 1)s + 4\tilde{B}_\alpha{}s}C(A_{\alpha{},s})\mathcal{D}_\alpha^2 = \tau^{-p_{\text{max}}}C(A_{\alpha{},s})\mathcal{D}_\alpha^2
\end{split}
\end{equation}
By \cref{rp:decay-induction-hypothesis}, we have
\begin{equation}\label{rp:conc-3}
E_{1+\eta{}_0}[\Gamma^{\le \alpha{}}\varphi{}](\tau{})\le \tau^{1+\eta{}_0-2+\eta{}_0+4\tilde{B}_\alpha{}s}C(A_{\alpha{},s})\mathcal{D}_\alpha^2\le \tau^{1+\eta{}_0-p_{\text{max}}}C(A_{\alpha{},s})\mathcal{D}_\alpha^2.
\end{equation}
Finally, \cref{rp:bulk-bound,rp:decay-induction-hypothesis} gives
\begin{equation}\label{rp:conc-4}
E_{p_{\text{max}},\mathrm{bulk}}[\Gamma{}^{\le \alpha{}}\varphi{},\mathcal{R}(\tau{},\infty)] \le C(A_{\alpha,s})E_{p_{\text{max}}+4B_\alpha{}s}[\Gamma{}^{\le \alpha{}}\varphi{}](\tau{})\le \tau^{-4(B_\alpha{}+1)s}C(A_{\alpha{},s})\mathcal{D}_\alpha^2\le C(A_{\alpha{},s})\mathcal{D}_\alpha^2.
\end{equation}
Combine \cref{rp:conc-1,rp:conc-2,rp:conc-3,rp:conc-4} to obtain the desired
\cref{rp:conc-goal}.
\end{proof}
\section{Pointwise estimates}
\label{sec:org1618878}
\label{sec:pointwise-estimates}
In \cref{sec:pointwise-general}, we estimate weighted \(L^\infty\)-norms similar to
\(P_p[\psi{}]\) in terms of \(\Box{}\psi{}\), the initial data norm \(D[\psi{}]\),
and the energy norms \(\mathcal{E}[\psi{}]\) and \(\mathcal{E}_p[\psi{}]\). In
\cref{sec:pointwise-P-alpha}, we specialize the estimates in
\cref{sec:pointwise-general} to functions of the form \(\Gamma^{\le
\alpha{}}\varphi{}\). This requires an understanding of \(\Box{}\Gamma^{\le
\alpha{}}\varphi{}\), as obtained in \cref{sec:commute-with-box}, and culminates in
a proof of the following result:
\begin{proposition}[Estimate for pointwise norm]
Let \(\alpha{}\ge 0\) and \(p\ge 0\). When \(s > 0\) is sufficiently small depending on
\(\alpha{}\), we have
\begin{equation}
\mathcal{P}_{\alpha{},p}\le C(\mathfrak{b}_\alpha{},r^{-s}\mathfrak{g}_\alpha{})(\mathcal{D}_\alpha{} + \mathcal{E}_{\alpha{},p}).
\end{equation}
\label{P-alpha-bound}
\end{proposition}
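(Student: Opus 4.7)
The plan is to reduce to pointwise bounds on each of the five ingredients of $\mathcal{P}_{\alpha{},p}$ applied to each $\psi = \Gamma^\beta{}\varphi{}$ with $\beta{}\le \alpha{}$, and to treat these bounds via transport arguments along null curves combined with Cauchy--Schwarz against the already-controlled flux and bulk quantities appearing in $\mathcal{E}_{\alpha{},p}$. Following the two-part structure flagged in the text, I would first prove a general pointwise lemma of the form ``$\|r\psi\|_{L^\infty} + \|rU\psi\|_{L^\infty} + \|r^2\partial{}_v\psi\|_{L^\infty} + \|r^{1/2}\tau{}^{p/2}\psi\|_{L^\infty} + \mathbf{1}_{p\ge 1+\eta{}_0}\|r\tau{}^{p/2-1/2-\eta{}_0/2}\psi\|_{L^\infty} \lesssim \mathfrak{D}_0[\psi{}] + \mathcal{E}_p[\psi{}] + (\text{weighted } L^2 \text{ of } \Box{}\psi{})$,'' and then plug in $\psi = \Gamma^\beta{}\varphi{}$ together with the commutator estimate \cref{wave-pointwise-1} to absorb the $\Box$-term into the stated constant.

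For the scalar bounds $(iv)$--$(v)$, I would use the fundamental theorem of calculus in the $v$-direction starting from $\Sigma{}_\tau^{\text{in}}\cap \Sigma{}_\tau^{\text{out}}$ (where $r = R_0$) and apply Cauchy--Schwarz with the $r^p$-weighted flux from $E_p[\psi{}](\tau{})$, which naturally produces $r^{1/2}\tau{}^{p/2}$ and $r\tau{}^{p/2 - 1/2-\eta{}_0/2}$-type weights once one splits into $\{r\le \tau{}\}$ and $\{r\ge \tau{}\}$ (as is already anticipated in the interpolation remark following the definition of $\mathcal{P}_p$). For $(iii)$, I would work with $\Psi{} = r\psi{}$ and the identity $\partial{}_u\partial{}_v\Psi{} = r\kappa{}(-\nu{})\Box{}\psi{} - \frac{2(\varpi{}-\mathbf{e}^2/r)}{r^2}\kappa{}(-\nu{})\Psi{}$, derived from $\partial{}_u\lambda{} = \frac{2(\varpi{}-\mathbf{e}^2/r)}{r^2}\frac{\lambda{}\nu{}}{1-\mu{}}$ and \cref{box-uv-gauge}; integrating in $u$ from $C^{\text{in}}$, the redshift sign $\varpi{}-\mathbf{e}^2/r\ge c_{\mathcal{H}} > 0$ (see \cref{mass-redshift}) makes Gr\"onwall's inequality close, provided one already controls $\Psi{}$ in $L^\infty$, i.e.\ $(i)$, and $\Box{}\psi{}$ pointwise via \cref{wave-pointwise-1}. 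For $(ii)$ I would use \cref{vU-box} (which controls $\partial{}_vU\psi{}$ by $\Box{}\psi{}$ and $D\psi{}$) and integrate in $v$ from $C^{\text{out}}$, and for $(i)$ one integrates the transport equation for $\partial{}_u(r\psi{})$ in $u$ from data and uses \cref{hardy-type-inequality}-type Hardy inequalities together with $\mathcal{E}_p[\psi{}]$.

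For the second part, specialising to $\psi{} = \Gamma^\beta{}\varphi{}$, the commutator formula \cref{wave-comm-formula} together with \cref{wave-pointwise-1} yields a pointwise bound on $r^{2-s}\abs{\Box{}L\varphi{}}$ in terms of $\abs{\partial{}_vL\varphi{}}$, $\abs{UL\varphi{}}$, and lower order pointwise quantities $\abs{D\Gamma^{<\alpha{}}\varphi{}}$ and $\abs{\partial{}_v(r\Gamma^{<\alpha{}}\varphi{})}$, with coefficients of the form $\mathcal{G}_{\alpha{},s} = \mathcal{O}(\mathfrak{b}_\alpha{},r^{-s}\mathfrak{g}_\alpha{})$. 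The top-order $\abs{\partial{}_vL\varphi{}}$ and $\abs{UL\varphi{}}$ contributions are handled by Gr\"onwall during the transport estimates of the first step; the lower order contributions are already bounded by $\mathcal{P}_{<\alpha{},p}$, which is included in the stated $\mathcal{P}_{\alpha{},p}$ via the sum $\sum_{\beta{}\le \alpha{}}$. Since the right-hand side of the proposition is $C(\mathfrak{b}_\alpha{},r^{-s}\mathfrak{g}_\alpha{})(\mathcal{D}_\alpha{} + \mathcal{E}_{\alpha{},p})$, we induct on $\alpha{}$, with the base case $\alpha{} = 0$ falling out directly from the general lemma and $\Box{}\varphi{} = 0$.

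The main obstacle I expect is bookkeeping the $r$-weights around $(iii)$: the $r^{-1+s}$-loss in \cref{wave-pointwise-1} must be counterbalanced precisely against the $r^{-2}$ redshift coefficient in the transport identity for $\partial{}_v\Psi{}$, so $s > 0$ must be chosen small depending on $\alpha{}$ to prevent a logarithmic divergence when integrating to large $r$. A related subtlety is matching the two-region structure of the commutator vector fields (where $V = \underline{\partial{}}_v$ near the horizon and $V = \overline{\partial}_r$ at large $r$, and similarly for $S$) with the region-dependent form of the weights in $\mathcal{E}_{\alpha{},p}$ and the coordinate comparisons in \cref{v-u-r-compare}; outside of these careful weight-trackings the argument is a fairly standard application of transport estimates and Cauchy--Schwarz.
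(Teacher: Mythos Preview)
Your proposal is correct and matches the paper's approach (transport along null curves, Cauchy--Schwarz against the fluxes in $\mathcal{E}_p$, and induction on $\alpha$ via \cref{wave-pointwise-1}). One small correction: the redshift sign is what closes the $rU\psi$ estimate (ii), via the good-sign bulk term in \cref{Upsi-bound} combined with \cref{wave-pointwise-2}, rather than the $\partial_v(r\psi)$ estimate (iii), where the zeroth-order term is simply bounded using $\|r^{1/2}\psi\|_{L^\infty}$ and Gr\"onwall absorbs only the $r^{-2+s}\abs{\partial_v(rL\varphi)}$ feedback from $\Box L\varphi$.
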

\begin{proof}
Combine \cref{r12-tau-phi-bound,UL-phi-bound,dv-L-phi-bound,r-tau-phi-bound}.
\end{proof}
\subsection{Estimates for general functions}
\label{sec:orga867d23}
\label{sec:pointwise-general}
\begin{lemma}
We have
\begin{equation}
\norm{r^{1/2}\psi{}}_{L^\infty}\le \mathfrak{D}_0[\psi{}] + \mathcal{E}[\psi{}].
\end{equation}
\label{r12-phi-bound}
\end{lemma}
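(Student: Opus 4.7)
The plan is to use the fundamental theorem of calculus in the ingoing null direction, starting from the outgoing hypersurface $C^{\mathrm{out}} = \{u=1\}$, and apply Cauchy--Schwarz against the incoming part of the energy norm. For a point $(u,v)\in \mathcal{R}_{\mathrm{char}} = [1,\infty)\times[1,\infty)$, write
\begin{equation*}
\psi(u,v) = \psi(1,v) + \int_1^u \partial_u\psi(u',v)\,\dd u'.
\end{equation*}
By Cauchy--Schwarz and the monotonicity $\partial_u r = \nu < 0$, the integral is bounded by
\begin{equation*}
\Bigl(\int_1^u \frac{r^2}{(-\nu)}(\partial_u\psi)^2\,\dd u'\Bigr)^{1/2}\Bigl(\int_1^u \frac{(-\nu)}{r^2}\,\dd u'\Bigr)^{1/2} \le \mathcal{E}[\psi]\cdot \Bigl(\frac{1}{r(u,v)}-\frac{1}{r(1,v)}\Bigr)^{1/2} \le r(u,v)^{-1/2}\mathcal{E}[\psi],
\end{equation*}
where the middle integral is computed by a change of variables $u'\mapsto r$. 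Multiplying through by $r(u,v)^{1/2}$ gives
\begin{equation*}
r(u,v)^{1/2}\abs{\psi(u,v)} \le r(u,v)^{1/2}\abs{\psi(1,v)} + \mathcal{E}[\psi].
\end{equation*}

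To control the boundary term, note that on $C^{\mathrm{out}}$ we have $r(1,v)\ge R_0\ge 1$, and the monotonicity $\partial_u r < 0$ gives $r(u,v)\le r(1,v)$. Hence
\begin{equation*}
r(u,v)^{1/2}\abs{\psi(1,v)} \le r(1,v)^{1/2}\abs{\psi(1,v)} = r(1,v)^{-1/2}\abs{r\psi}(1,v) \le R_0^{-1/2}\mathfrak{D}_0[\psi] \le \mathfrak{D}_0[\psi],
\end{equation*}
where we used that $\sup_{C^{\mathrm{out}}}\abs{r\psi}$ appears in $\mathfrak{D}_0[\psi]$ (it is the $(r^2\overline{\partial}_r)^{\le 1}$ with no derivatives applied). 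Taking the supremum over $(u,v)$ gives the result.

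There is no real obstacle here—this is a direct Cauchy--Schwarz estimate. The only minor point to be careful about is identifying the right boundary curve from which to integrate: since $\mathcal{E}[\psi]$ controls incoming derivatives uniformly in $v$, integrating in $u$ from $C^{\mathrm{out}}$ (rather than in $v$ from $C^{\mathrm{in}}$) is what makes the radial weight $r(u,v)^{-1/2}$ come out correctly via $\int (-\nu)r^{-2}\,\dd u' = r^{-1}\big|_{\text{endpoints}}$.
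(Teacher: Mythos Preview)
Your proof is correct and follows essentially the same approach as the paper: integrate in $u$ from $C^{\mathrm{out}}$, apply Cauchy--Schwarz against the $u$-component of the energy, and control the boundary term via $\sup_{C^{\mathrm{out}}}|r\psi|\le \mathfrak{D}_0[\psi]$ together with $r(u,v)\le r(1,v)$ and $r(1,v)\ge R_0\ge 1$. The only cosmetic difference is that the paper carries the factor $r^{-1/2}(u,v)$ on the right rather than multiplying through by $r^{1/2}(u,v)$ at the end.
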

\begin{proof}
We use the fundamental theorem of calculus along a constant-\(v\) curve from
\((u,v)\) to \((1,v)\in C^{\text{in}}\), and then apply Cauchy-Schwarz, the
assumption \(\min_{v\in [1,\infty)} r(1,v) \ge 1\), the change of variables
\(\dd{}r = (-\nu{})\dd{}u\), and the monotonicity \(r(u,v)\le r(1,v)\). This gives
\begin{equation}
\begin{split}
\abs{\psi{}}(u,v)&\le \abs{\psi{}}(1,v) + \int_1^u \abs{\partial{}_u\psi{}}(u',v)\dd{}u'\le r^{-1}(1,v)D[\psi{}] \\
&\qquad + \Bigl(\int_1^u r^{-2}(u',v)(-\nu{})\dd{}u'\Bigr)^{1/2}\Bigl(\int_1^u \frac{r^2}{(-\nu{})}(\partial{}_u\psi{})^2(u',v)\dd{}u'\Bigr)^{1/2}\\
&\le r^{-1/2}(1,v)\mathfrak{D}_0[\psi{}] + \Bigl(\int_{r(u,v)}^{r(1,v)}  r^{-2}\dd{}r\Bigr)^{1/2}\mathcal{E}[\psi{}] \le r^{-1/2}(u,v)(\mathfrak{D}_0[\psi{}] + \mathcal{E}[\psi{}]).
\end{split}
\end{equation}
\end{proof}
\begin{lemma}
Let \(s > 1\). Then
\begin{equation}
\norm{r\psi{}}_{L^\infty}\le C(\mathfrak{b}_0,s)(\mathfrak{D}_0[\psi{}] + \mathcal{E}[\psi{}]  + \norm{r^s\partial_v(r\psi{})}_{L^\infty}).
\end{equation}
\label{rphi-bound}
\end{lemma}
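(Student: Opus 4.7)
The plan is to split the proof based on whether $r \le R_0$ or $r \ge R_0$, handling the first region using the previous \cref{r12-phi-bound} and the second region by integrating the hypothesis along a constant-$u$ curve.

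First, in the region $\set{r\le R_0}$, I would observe that
\begin{equation*}
\abs{r\psi{}} = r^{1/2}\cdot r^{1/2}\abs{\psi{}}\le R_0^{1/2}\norm{r^{1/2}\psi{}}_{L^\infty}\le R_0^{1/2}(\mathfrak{D}_0[\psi{}] + \mathcal{E}[\psi{}]),
\end{equation*}
using the boundedness of $r$ together with \cref{r12-phi-bound}. Since $R_0$ is a fixed constant (depending only on $\varpi{}_i$), this contributes to $C(\mathfrak{b}_0)$.

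Next, for a point $(u,v)\in \mathcal{R}_{\mathrm{char}}$ with $r(u,v)\ge R_0$, I would let $v_{R_0}(u)\ge 1$ be the unique value with $r(u,v_{R_0}(u)) = R_0$ (which exists and is $\ge 1$ since $r(u,1)\le R_0$ as $u\ge 1$ along $C^{\mathrm{in}}$, and $r$ is non-decreasing in $v$). By the fundamental theorem of calculus along the constant-$u$ curve,
\begin{equation*}
(r\psi{})(u,v) = (r\psi{})(u,v_{R_0}(u)) + \int_{v_{R_0}(u)}^v \partial{}_v(r\psi{})(u,v')\dd{}v'.
\end{equation*}
The boundary term is bounded by the $\set{r\le R_0}$ case just established (applied on $\set{r=R_0}$). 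For the integral, I would change variables using $\dd{}v = \lambda{}^{-1}\dd{}r$, which is legitimate since the lower bound $\lambda{}\ge C(\mathfrak{b}_0)^{-1}$ from \cref{lambda-lower-bound} holds in $\set{r\ge R_0}$ and is captured by $\mathfrak{b}_0$. Then
\begin{equation*}
\int_{v_{R_0}(u)}^v r^{-s}(u,v')\dd{}v' = \int_{R_0}^{r(u,v)} \frac{r^{-s}}{\lambda{}}\dd{}r\le C(\mathfrak{b}_0)\int_{R_0}^\infty r^{-s}\dd{}r = \frac{C(\mathfrak{b}_0)}{s-1}R_0^{1-s},
\end{equation*}
where the assumption $s>1$ ensures the integral converges. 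Hence
\begin{equation*}
\abs[\Big]{\int_{v_{R_0}(u)}^v \partial{}_v(r\psi{})(u,v')\dd{}v'}\le \norm{r^s\partial{}_v(r\psi{})}_{L^\infty}\cdot C(\mathfrak{b}_0,s),
\end{equation*}
and combining the two cases completes the proof.

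This is essentially routine; the only point to watch is the use of $s>1$ to obtain a finite integral of $r^{-s}$, and the fact that restricting to $\set{r\ge R_0}$ is what makes $\lambda{}$ bounded below so that the change of variables from $\dd{}v$ to $\dd{}r$ produces only constants depending on $\mathfrak{b}_0$. There is no real obstacle; the lemma is meant as a preparatory pointwise trade in which an $r^s$-weighted outgoing derivative is converted, at the cost of losing data and energy, into a pointwise bound for $r\psi{}$ itself.
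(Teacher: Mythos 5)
Your proof is correct and follows essentially the same route as the paper: integrate \(\partial_v(r\psi)\) along a constant-\(u\) curve from \(\set{r=R_0}\), control the boundary term via \cref{r12-phi-bound}, and use the change of variables \(\dd v = \lambda^{-1}\dd r\) together with \(\mathbf{1}_{r\ge R_0}\lambda^{-1}\le C(\mathfrak{b}_0)\) and \(s>1\) to make the integral converge. The only cosmetic difference is that you treat the region \(\set{r\le R_0}\) explicitly, which the paper leaves implicit.
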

\begin{proof}
Fix \(R\ge R_0\). We integrate \(\partial_v(r\psi{})\) in the \(v\)-direction from
\((u,v_{R}(u))\in \set{r=R}\) to \((u,v)\). To control the boundary term at
\(\set{r=R}\), we use \cref{r12-phi-bound}, and we estimate the integral using the
hypothesis, a change of variables, and the fact that \(\mathbf{1}_{r\ge
R_0}\lambda{}^{-1}\le C(\mathfrak{b}_0)\). Writing \(A \coloneqq{}
\norm{r^s\partial_v(r\psi{})}_{L^\infty}\), we get
\begin{equation}
\begin{split}
r\abs{\psi{}}(u,v)&\le R\abs{\psi{}}(u,v_R(u)) + \int_{v_R(u)}^v \abs{\partial{}_v(r\psi{})}(u,v')\dd{}v'\le R^{1/2}(\mathfrak{D}_0[\psi{}] + \mathcal{E}[\psi{}]) + A\int_{v_R(u)}^v r^{-s}(u,v')\dd{}v' \\
&\le R^{1/2}(\mathfrak{D}_0[\psi{}] + \mathcal{E}[\psi{}]) + C(\mathfrak{b}_0)A\int_R^{r(u,v)}  r^{-s}\dd{}r \le C(\mathfrak{b}_0,s)(\mathfrak{D}_0[\psi{}] + \mathcal{E}[\psi{}]  + A).
\end{split}
\end{equation}
\end{proof}
\begin{lemma}
Let \(p \ge 0\). If \(\mathfrak{D}_0[\psi{}]< \infty\), then
\begin{equation}\label{psi-rtau-bound}
\norm{r^{1/2}\tau{}^{p/2}\psi{}}_{L^\infty}\le 3\mathcal{E}_p[\psi{}].
\end{equation}
\label{r12-tau-phi-bound}
\end{lemma}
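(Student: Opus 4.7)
I propose proving the pointwise bound by case analysis on the location of $(u_0, v_0)$; write $r_0 := r(u_0,v_0)$ and $\tau_0 := \tau(u_0,v_0)$. I will use that $R_0$ is fixed large enough (by \cref{zeroth-order-geometric-bounds}) that $1-\mu \ge 1/2$ on $\{r \ge R_0\}$, so $\lambda \ge \kappa/2$ there. Note also that the previous \cref{r12-phi-bound} gives a crude $L^\infty$ bound $r^{1/2}|\psi| \le \mathfrak{D}_0[\psi] + \mathcal{E}[\psi] < \infty$ throughout $\mathcal{R}_{\mathrm{char}}$, a fact I will invoke only to control boundary terms at null infinity.

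\emph{Case 1 ($r_0 \ge R_0$, so $\tau_0 = u_0$).} Along the constant-$u_0$ ray, apply Hardy's inequality \cref{hardy-type-inequality} with $a = 0$ and $v_1 = v_0$ to a finite $v_2$:
\begin{equation*}
r\psi^2(u_0, v_0) + \int_{v_0}^{v_2}\frac{(\partial_v(r\psi))^2}{\lambda}\,dv = \int_{v_0}^{v_2}\frac{r^2}{\lambda}(\partial_v\psi)^2\,dv + r\psi^2(u_0, v_2).
\end{equation*}
On this ray $r \ge R_0$, so $\int_{v_0}^{v_2}\frac{r^2}{\lambda}(\partial_v\psi)^2 \le 2\int_{v_0}^\infty\frac{r^2}{\kappa}(\partial_v\psi)^2 \le 2\tau_0^{-p}\mathcal{E}_p[\psi]^2$. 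By \cref{r12-phi-bound} the boundary quantity $r\psi^2(u_0, v_2)$ stays bounded as $v_2 \to \infty$, so the two monotone integrals converge and the limit $L := \lim_{v_2 \to \infty} r\psi^2(u_0, v_2)$ exists. I claim $L = 0$: otherwise $(r\psi)^2(u_0, v_2) = r(u_0, v_2)\cdot r\psi^2(u_0, v_2) \gtrsim L\, r(u_0, v_2) \to \infty$, and Cauchy--Schwarz applied to $r\psi(u_0, v_2) - r\psi(u_0, V) = \int_V^{v_2}\partial_v(r\psi)\,dv'$ together with $\int_V^{v_2}\lambda\,dv' = r(u_0,v_2) - r(u_0,V)$ yields
\begin{equation*}
\int_V^{v_2}\frac{(\partial_v(r\psi))^2}{\lambda}\,dv' \ge \frac{(r\psi(u_0, v_2) - r\psi(u_0, V))^2}{r(u_0, v_2) - r(u_0, V)},
\end{equation*}
and letting $v_2 \to \infty$ gives $\int_V^\infty \tfrac{(\partial_v(r\psi))^2}{\lambda}\, dv \ge L/2$ for every $V \ge v_0$, contradicting the fact (from the Hardy identity) that this integral is finite and its tail tends to $0$ as $V \to \infty$. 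Hence $r\psi^2(u_0, v_0) \le 2\tau_0^{-p}\mathcal{E}_p[\psi]^2$, giving $r^{1/2}\tau^{p/2}|\psi|(u_0,v_0) \le \sqrt{2}\,\mathcal{E}_p[\psi]$.

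\emph{Case 2 ($r_0 \le R_0$, so $\tau_0 = u_*$ where $u_* := u_{R_0}(v_0)$).} The point $(u_*, v_0)$ lies on $\{r = R_0\}$ and satisfies $\tau(u_*, v_0) = u_* = \tau_0$. Write $\psi(u_0, v_0) = \psi(u_*, v_0) + \int_{u_*}^{u_0}\partial_u\psi(u', v_0)\,du'$. Case 1 applied at $(u_*, v_0)$ gives $|\psi|(u_*, v_0) \le \sqrt{2}\,R_0^{-1/2}\tau_0^{-p/2}\mathcal{E}_p[\psi] \le \sqrt{2}\,r_0^{-1/2}\tau_0^{-p/2}\mathcal{E}_p[\psi]$ (using $r_0 \le R_0$). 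For the integral, Cauchy--Schwarz and the change of variables $(-\nu)\,du' = -dr$ give
\begin{equation*}
\Bigl|\int_{u_*}^{u_0}\partial_u\psi\,du'\Bigr|^2 \le (r_0^{-1}-R_0^{-1})\int_{u_*}^\infty\frac{r^2}{(-\nu)}(\partial_u\psi)^2\,du' \le r_0^{-1}\tau_0^{-p}\mathcal{E}_p[\psi]^2,
\end{equation*}
where the $\tau^p$-weighted energy is applied at base point $(u_*, v_0)$. Combining the two contributions, $r_0^{1/2}\tau_0^{p/2}|\psi|(u_0,v_0) \le (\sqrt{2}+1)\,\mathcal{E}_p[\psi] < 3\,\mathcal{E}_p[\psi]$, as desired.

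\emph{Main obstacle.} The delicate step is establishing $L = 0$ in Case 1. The hypothesis $\mathfrak{D}_0[\psi] < \infty$ is used only through \cref{r12-phi-bound} to guarantee that $r\psi^2(u_0, v_2)$ stays bounded at $\mathcal{I}$ so the limit $L$ is well-defined; the quantitative refinement to $L=0$ is then forced by the Cauchy--Schwarz lower bound above together with the finiteness of the Hardy integral $\int \tfrac{(\partial_v(r\psi))^2}{\lambda}$. The explicit constant $3$ in the statement is dictated by $\sqrt{2} + 1 < 3$, leaving no slack to absorb any further loss in either case.
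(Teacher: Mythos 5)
Your proof is correct, and the constants work out ($\sqrt{2}+1<3$ in the near region). The overall architecture is the same as the paper's: split at $\{r=R_0\}$, control the far region by the $\tau^p$-weighted outgoing flux, then integrate in $u$ from $\{r=R_0\}$ using the ingoing flux, with the boundary term supplied by the far-region case. Where you diverge is in the far-region step. The paper simply notes that \cref{r12-phi-bound} together with $r(u,v')\to\infty$ forces $\psi(u,v')\to 0$, writes $\psi(u,v)=-\int_v^\infty \partial_v\psi\,dv'$, and applies Cauchy--Schwarz directly to $\partial_v\psi$ against the weight $r^{-2}\kappa$ (using $1-\mu\ge 1/2$ to evaluate $\int r^{-2}(1-\mu)^{-1}dr\le 2r^{-1}$). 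You instead route through the Hardy identity for $\partial_v(r\psi)$, which forces you to dispose of the boundary term $\lim_{v_2\to\infty} r\psi^2(u_0,v_2)$ — a strictly stronger decay statement than $\psi\to 0$ — via the contradiction argument with the Cauchy--Schwarz lower bound on $\int \lambda^{-1}(\partial_v(r\psi))^2$. That argument is valid (the tail of a convergent integral cannot stay bounded below by $L/2$), but it is extra work the paper's direct FTC avoids; what it buys you is a marginally better constant in the far region ($\sqrt{2}$ versus $2$) and the stronger qualitative fact $r\psi^2\to 0$ at $\mathcal{I}$, neither of which is needed for the stated bound. One cosmetic remark: you should state up front that you may assume $\mathcal{E}_p[\psi]<\infty$ (otherwise the claim is vacuous), since both your use of \cref{r12-phi-bound} to bound $r\psi^2$ at infinity and the convergence of the Hardy integrals rely on it; the paper makes this reduction explicit.
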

\begin{proof}
We first consider the case where \(r(u,v)\ge R_0\), and then the case where
\(r(u,v)\le R_0\). We can assume that \(\mathcal{E}[\psi{}] < \infty\), as
otherwise \cref{psi-rtau-bound} is trivial since the right side is infinite.

\step{Step 1: \(r(u,v) \ge R_0\).} Suppose that \(r(u,v)\ge R_0\). Since \(r(u,v')\to \infty\) as
\(v'\to \infty\), \cref{r12-phi-bound} implies that \(\abs{\psi{}}(u,v')\to 0\) as
\(v'\to \infty\). We can therefore use the fundamental theorem of calculus in
\(v\) from \((u,v)\) to \((u,\infty)\), then use Cauchy-Schwarz and the lower
bound \(1-\mu{}\ge 1/2\) in \(\set{r\ge R_0}\) from \cref{mu-estimate} to get
\begin{equation}
\begin{split}
\abs{\psi{}}(u,v)&\le\int_v^\infty \abs{\partial{}_v\psi{}}(u,v')\dd{}v'\le  \Bigl(\int_v^\infty r^{-2}\kappa{}\dd{}v'\Bigr)^{1/2}\Bigl(\int_v^\infty \frac{r^2}{\kappa{}}(\partial{}_v\psi{})^2(u,v')\dd{}v'\Bigr)^{1/2}\\
&\le \Bigl(\int_{r(u,v)}^\infty r^{-2}(1-\mu{})^{-1}\dd{}r\Bigr)^{1/2}\tau{}^{-p/2}\mathcal{E}_p\le 2r^{-1}\tau{}^{-p/2}\mathcal{E}_p[\psi{}].
\end{split}
\end{equation}

\step{Step 2: \(r(u,v)\le R_0\).} If \(r(u,v)\le R_0\), then we integrate on a
constant-\(v\) curve from \((u_{R_0}(v),v)\) to \((u,v)\) and use Cauchy-Schwarz
and a change of variables as before. The boundary term on \(\set{r = R_0}\) is
controlled by Step 1, and the integral decays in \(\tau{}\). This gives
\begin{equation}
\begin{split}
\abs{\psi{}}(u,v)&\le \abs{\psi{}}(u_{R_0}(v),v) + \int_{u_{R_0}(v)}^u \abs{\partial{}_u\psi{}}(u',v)\dd{}u'\\
&\le  2R_0^{-1/2}\tau{}^{-p/2}(u_{R_0}(v),v)\mathcal{E}_p[\psi{}] + \Bigl(\int_{r(u,v)}^{R_0} r^{-2}\dd{}r\Bigr)^{1/2}\Bigl(\int_{u_{R_0}(v)}^u \frac{r^2}{(-\nu{})}(\partial{}_u\psi{})^2(u',v)\dd{}u'\Bigr)^{1/2} \\
&\le 3r^{-1/2}\tau{}^{-p/2}\mathcal{E}_p[\psi{}].
\end{split}
\end{equation}
In the last line we have used \(\tau{}(u,v) = \tau{}(u_{R_0}(v),v)\) for \(r(u,v)\le R_0\).
\end{proof}
\begin{lemma}
Let \(p \ge 1 + \eta_0\). If \(\mathfrak{D}_0[\psi{}]< \infty\), then
\begin{equation}
\norm{r\tau{}^{p/2-1/2-\eta{}_0/2}\psi{}}_{L^\infty}\le C(\varpi{}_i,\eta{}_0)\mathcal{E}_p[\psi{}].
\end{equation}
\label{r-tau-phi-bound}
\end{lemma}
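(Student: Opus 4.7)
The plan is to exploit the $r^{1+\eta_0}$-weighted boundary term built into the definition of $\mathcal{E}_p[\psi]$ in \cref{Ep-norm-definition} (which is switched on precisely when $p \ge 1+\eta_0$), and to integrate $\partial_v(r\psi)$ along outgoing null curves starting from the hypersurface $\set{r=R_0}$. Roughly, $\norm{r^{(1+\eta_0)/2}\partial_v(r\psi)}_{L^2(C^{\mathrm{out}})}$ decays in $\tau$ at rate $(p-1-\eta_0)/2$, and a Cauchy--Schwarz estimate against $\lambda^{-1}r^{-1-\eta_0}\dd{}v$ gives an $L^\infty$ bound that costs only a constant $C(\eta_0)$.

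First I would treat the region $\set{r(u,v)\ge R_0}$. Here I use the fundamental theorem of calculus along a constant-$u$ curve from $(u,v_{R_0}(u))$ to $(u,v)$ to write
\begin{equation*}
 r\psi(u,v) = (r\psi)(u,v_{R_0}(u)) + \int_{v_{R_0}(u)}^v \partial_v(r\psi)(u,v')\dd{}v'.
\end{equation*}
The boundary term at $r=R_0$ is controlled by \cref{r12-tau-phi-bound}: $|\psi|(u,v_{R_0}(u))\le 3R_0^{-1/2}\tau^{-p/2}(u,v_{R_0}(u))\mathcal{E}_p[\psi]$, so $|r\psi|(u,v_{R_0}(u))\le 3R_0^{1/2}\tau^{-p/2}\mathcal{E}_p[\psi]$. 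For the integral I apply Cauchy--Schwarz with weight $r^{1+\eta_0}/\lambda$ to get
\begin{equation*}
 \int_{v_{R_0}(u)}^v |\partial_v(r\psi)|\dd{}v' \le \Bigl(\int_{v_{R_0}(u)}^v \tfrac{\lambda}{r^{1+\eta_0}}\dd{}v'\Bigr)^{1/2}\Bigl(\int_{v_{R_0}(u)}^\infty \tfrac{r^{1+\eta_0}}{\lambda}(\partial_v(r\psi))^2\dd{}v'\Bigr)^{1/2}.
\end{equation*}
The second factor is at most $\tau^{-(p-1-\eta_0)/2}\mathcal{E}_p[\psi]$ by the last term in \cref{Ep-norm-definition}. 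The first factor I handle by the change of variables $\dd{}r=\lambda\dd{}v$ on the outgoing curve, yielding $\int_{R_0}^{r(u,v)} r'^{-1-\eta_0}\dd{}r' \le \eta_0^{-1}R_0^{-\eta_0}$. Combining, $r|\psi|(u,v) \le C(\varpi_i,\eta_0)\tau^{-(p-1-\eta_0)/2}\mathcal{E}_p[\psi]$, which is exactly the desired bound since $p/2 \ge (p-1-\eta_0)/2$ for $\tau\ge 1$.

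For the region $\set{r(u,v)\le R_0}$ I do not need to integrate at all: by \cref{r12-tau-phi-bound}, $r|\psi|\le R_0^{1/2}\cdot r^{1/2}|\psi| \le 3R_0^{1/2}\tau^{-p/2}\mathcal{E}_p[\psi]$, and since $p\ge 1+\eta_0$ (so $\tau^{-p/2}\le \tau^{-(p-1-\eta_0)/2}$) we are done. There is no real obstacle in this argument; the only thing to keep track of is that the constant $R_0$ has been fixed in terms of $\varpi_i$ (see \cref{zeroth-order-geometric-bounds}), so all appearances of $R_0$ in the final constant are absorbed into $C(\varpi_i,\eta_0)$. The use of $\mathfrak{D}_0[\psi]<\infty$ enters only through \cref{r12-tau-phi-bound} (to justify $r|\psi|\to 0$ at the boundary $r=R_0$ and to invoke the earlier pointwise bound), and is implicitly subsumed by assuming $\mathcal{E}_p[\psi]<\infty$.
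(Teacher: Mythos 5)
Your proof is correct and follows essentially the same route as the paper: reduce to $\set{r\ge R_0}$ via \cref{r12-tau-phi-bound}, integrate $\partial_v(r\psi)$ outward from $\set{r=R_0}$, and apply Cauchy--Schwarz against the $r^{1+\eta_0}$-weighted flux term in \cref{Ep-norm-definition}, which supplies the $\tau^{-(p-1-\eta_0)/2}$ decay. The only difference is that you spell out the change of variables $\dd{}r=\lambda\,\dd{}v$ and the near-horizon region explicitly, which the paper leaves implicit.
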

\begin{proof}
In view of the previous lemma and \(\tau{}\ge 1\), it is enough to obtain the result in
\(\set{r\ge R_0}\). As in the previous lemma, we integrate
\(\partial_v(r\psi{})\) in the \(v\)-direction from \((u,v_{R_0}(u))\in
\set{r=R_0}\) to \((u,v)\). This time, we estimate the integral using the
\(p\)-weighted energy. We get
\begin{equation}
\begin{split}
r\abs{\psi{}}(u,v)&\le R_0\abs{\psi{}}(u,v_{R_0}(u)) + \int_{v_{R_0}(u)}^v \abs{\partial{}_v(r\psi{})}(u,v')\dd{}v'\\
&\le R_0^{1/2}\tau{}^{-p/2}C(\mathfrak{b}_0)\mathcal{E}_p[\psi{}] + \Bigl(\int_{v_{R_0}(u)}^v r^{-1-\eta{}_0}\dd{}r\Bigr)^{1/2}\Bigl(\int_{v_{R_0}(u)}^v \frac{r^{1+\eta{}_0}}{\lambda{}}(\partial{}_v(r\psi{}))^2\dd{}v'\Bigr)^{1/2} \\
&\le \tau{}^{-p/2}C(\varpi{}_i)\mathcal{E}_p[\psi{}] + C(\eta{}_0)\tau{}^{-p/2+1/2+\eta{}_0/2}\mathcal{E}_p[\psi{}].
\end{split}
\end{equation}
In passing to the last line, we used \(R_0\le C(\varpi{}_i)\).
\end{proof}
\begin{lemma}
Let \(s\in{}[0,1]\). We have
\begin{equation}
\abs{\partial{}_v(r\psi{})}\le r^{-3/2}C(\mathfrak{b}_0)(\mathfrak{D}_0[\psi{}] + \mathcal{E}[\psi{}]) + \int_1^u r\kappa{}(-\nu{})\abs{\Box{}\psi{}}(u',v)\dd{}u'.
\end{equation}
\label{dvrpsi-bound}
\end{lemma}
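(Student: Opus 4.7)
My plan is to derive a transport equation for $\partial_v(r\psi)$ in the $u$-direction, integrate it from $C^{\mathrm{out}}$, and bound each term.

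First, I will compute $\partial_u\partial_v(r\psi)$. Since $\partial_v(r\psi) = \lambda\psi + r\partial_v\psi$, we have
\begin{equation*}
\partial_u\partial_v(r\psi) = (\partial_u\lambda)\psi + \lambda\partial_u\psi + \nu\partial_v\psi + r\partial_u\partial_v\psi.
\end{equation*}
Using the wave equation \eqref{box-uv-gauge} to substitute for $\partial_u\partial_v\psi = \kappa(-\nu)\Box\psi - (\lambda/r)\partial_u\psi - (\nu/r)\partial_v\psi$, the $\partial_u\psi$ and $\partial_v\psi$ terms cancel, leaving
\begin{equation*}
\partial_u\partial_v(r\psi) = (\partial_u\lambda)\psi + r\kappa(-\nu)\Box\psi.
\end{equation*}
By the first equation of \eqref{sph-sym-equations-1}, we have $\partial_u\lambda = 2(\varpi - \mathbf{e}^2/r)r^{-2}\kappa\nu = -2(\varpi-\mathbf{e}^2/r)r^{-2}\kappa(-\nu)$, so
\begin{equation*}
\partial_u\partial_v(r\psi) = -\tfrac{2(\varpi - \mathbf{e}^2/r)}{r^2}\kappa(-\nu)\psi + r\kappa(-\nu)\Box\psi.
\end{equation*}

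Next, I will integrate this transport equation along the constant-$v$ curve from $(1,v) \in C^{\mathrm{out}}$ to $(u,v)$. For the boundary term at $C^{\mathrm{out}}$, I use $\partial_v(r\psi)|_{u=1} = \lambda\,\overline{\partial}_r(r\psi)|_{u=1}$; the definition of $\mathfrak{D}_0$ bounds $|\overline{\partial}_r(r\psi)| \le r^{-2}\mathfrak{D}_0[\psi]$ on $C^{\mathrm{out}}$, and $\lambda \le C(\mathfrak{b}_0)$ by \eqref{lambda-bound}/the definition of $\mathfrak{b}_0$. Since $r(u,v) \le r(1,v)$ and $r\ge r_{\min}$, I can absorb the power-loss into $C(\mathfrak{b}_0)r^{-3/2}(u,v)\mathfrak{D}_0[\psi]$.

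For the $\psi$-integral, I apply Lemma \ref{r12-phi-bound}, which gives $|\psi|(u',v) \le r^{-1/2}(u',v)(\mathfrak{D}_0[\psi] + \mathcal{E}[\psi])$. Combined with the bound $\varpi - \mathbf{e}^2/r \le C$ and $\kappa \le C(\mathfrak{b}_0)$, the integrand is pointwise bounded by $C(\mathfrak{b}_0)r^{-5/2}(-\nu)\cdot(\mathfrak{D}_0[\psi] + \mathcal{E}[\psi])$. Changing variables via $dr = -(-\nu)\,du'$ on the constant-$v$ curve (using $\partial_u r < 0$) and using $r(u,v) \le r(u',v) \le r(1,v)$, I obtain
\begin{equation*}
\int_1^u \tfrac{2(\varpi - \mathbf{e}^2/r)}{r^2}\kappa(-\nu)|\psi|(u',v)\,du' \le C(\mathfrak{b}_0)(\mathfrak{D}_0[\psi]+\mathcal{E}[\psi])\int_{r(u,v)}^{r(1,v)} r^{-5/2}\,dr \le C(\mathfrak{b}_0)r^{-3/2}(u,v)(\mathfrak{D}_0[\psi]+\mathcal{E}[\psi]).
\end{equation*}

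Combining the boundary term, the $\psi$-integral, and the remaining $\Box\psi$-integral yields the desired inequality. The only mildly delicate point is handling the boundary term on $C^{\mathrm{out}}$, where one must convert the $\overline{\partial}_r$-based initial data norm into a $\partial_v$-based statement via $\lambda$; all the other ingredients are immediate from the transport identity and Lemma \ref{r12-phi-bound}. The hypothesis $s \in [0,1]$ does not appear to enter this particular inequality, so I would flag this as a possible stray hypothesis rather than use it.
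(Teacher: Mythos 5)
Your proof is correct and follows essentially the same route as the paper: derive the transport identity $\partial_u\partial_v(r\psi) = r\kappa(-\nu)\Box\psi - \tfrac{2(\varpi-\mathbf{e}^2/r)}{r^2}\kappa(-\nu)\psi$, integrate from $C^{\mathrm{out}}$, bound the boundary term by $\mathfrak{D}_0[\psi]$ via the monotonicity of $r$, and control the $\psi$-integral with \cref{r12-phi-bound} after the change of variables $\dd{}r = \nu\,\dd{}u$. Your observation that the hypothesis $s\in[0,1]$ is not used is also accurate.
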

\begin{proof}
The wave equation \cref{box-uv-gauge} implies
\begin{equation}\label{dudv-rpsi}
\partial{}_u\partial{}_v(r\psi{}) = r\kappa{}(-\nu{})\Box{}\psi{} + (\partial{}_u\partial{}_vr) \psi{} = r\kappa{}(-\nu{})\Box{}\psi{} - \frac{2(\varpi{}-\mathbf{e}^2/r)}{r^2}\kappa{}(-\nu{}) \psi{}.
\end{equation}
Integrate \cref{dudv-rpsi} in \(u\) to get
\begin{equation}\label{dvrpsi-2}
\begin{split}
\abs{\partial{}_v(r\psi{})}(u,v) &\le  \underbrace{\abs{\partial{}_v(r\psi{})}(1,v)}_{\coloneqq{}\text{(I)}}  + \underbrace{\int_1^u\frac{2(\varpi{}-\mathbf{e}^2/r)}{r^2}\kappa{}(-\nu{}) \abs{\psi{}}(u',v)\dd{}u'}_{\coloneqq{}\text{(II)}} + \int_1^u r\kappa{}(-\nu{})\abs{\Box{}\psi{}}(u',v)\dd{}u'.
\end{split}
\end{equation}
For term \(\text{(I)}\), use the monotonicity \(r(1,v)\ge r(u,v)\) and the fact
 that \(r\) has a global lower bound to estimate \(r^{-2}(1,v)\)
 in terms of \(r^{-3/2}(u,v)\). We also use the fact that \(\lambda{}\le C(\mathfrak{b}_0)\) on \(C^{\mathrm{out}}\). For term \(\text{(II)}\), use
 \(\kappa{},\varpi{}=_{\mathrm{s}} \mathfrak{b}_0\), change variables \((-\nu{})\dd{}u = \dd{}r\), and use \cref{r12-phi-bound}.
 This gives
\begin{equation}\label{dvrpsi-3}
\begin{split}
\text{(I)} + \text{(II)}&\le r^{-2}(1,v)\mathfrak{D}_0[\psi{}] + C(\mathfrak{b}_0)\norm{r^{1/2}\psi{}}_{L^\infty}\int_1^u r^{-1-3/2}(-\nu{})\dd{}u'\\
&\le r^{-3/2}(u,v)C(\mathfrak{b}_0)(\mathfrak{D}_0[\psi{}] + \mathcal{E}[\psi{}]).
\end{split}
\end{equation}
Substitute \cref{dvrpsi-3} into \cref{dvrpsi-2} to get the result.
\end{proof}
\begin{lemma}
We have
\begin{equation}
\frac{1}{2}r^2\abs{U\psi{}}^2(u,v)+  \int_1^v c_{\mathcal{H}}(U\psi{})^2(u,v')\dd{}v'\le C(\mathfrak{b}_0)(\mathfrak{D}_0[\psi{}]^2 + \mathcal{E}[\psi{}]^2) + 2\int_1^v r^2\kappa{}U\psi{}\Box{}\psi{}(u,v')\dd{}v'.
\end{equation}
\label{Upsi-bound}
\end{lemma}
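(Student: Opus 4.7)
The plan is to derive a pointwise transport equation for $r^{2}(U\psi)^{2}$ along a constant-$u$ curve, with a good sign coming from the redshift term $2(\varpi - \mathbf{e}^{2}/r)\kappa(U\psi)^{2}$, and then to integrate in the $v$-direction from $C^{\mathrm{in}}$ to the given point. This is analogous to the redshift estimate in \cref{redshift-identity}, but performed at the level of the unknown $U\psi$ rather than via a vector field multiplier.

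First I would compute $\partial_{v}U\psi$. Since $U = \frac{1}{(-\nu)}\partial_{u}$ and the equation $\partial_{v}\nu = \frac{2(\varpi - \mathbf{e}^{2}/r)}{r^{2}}\frac{\lambda\nu}{1-\mu}$ from \cref{sph-sym-equations-1} gives
\begin{equation*}
\partial_{v}\log(-\nu) = \frac{2(\varpi-\mathbf{e}^{2}/r)}{r^{2}}\kappa,
\end{equation*}
applying $\partial_{v}$ and using the double null form of the wave equation \cref{box-uv-gauge} to rewrite $\partial_{u}\partial_{v}\psi$ yields the pointwise identity
\begin{equation*}
\partial_{v}U\psi + \frac{2(\varpi-\mathbf{e}^{2}/r)}{r^{2}}\kappa\, U\psi + \frac{\lambda}{r}U\psi = \kappa \Box\psi + \frac{1}{r}\partial_{v}\psi.
\end{equation*}
Multiplying by $2r^{2}U\psi$ and using $\partial_{v}(r^{2}(U\psi)^{2}) = 2r\lambda(U\psi)^{2} + 2r^{2}U\psi\,\partial_{v}U\psi$ to cancel the $\lambda$-term then gives
\begin{equation*}
\partial_{v}\bigl(r^{2}(U\psi)^{2}\bigr) + 4(\varpi-\mathbf{e}^{2}/r)\kappa (U\psi)^{2} = 2r^{2}\kappa U\psi \Box\psi + 2r U\psi\, \partial_{v}\psi.
\end{equation*}

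Next I would integrate this identity in $v$ from $1$ to $v$ along a constant-$u$ curve. The boundary term at $v' = 1$ lies on $C^{\mathrm{in}}$, where $r\le R_{0}$ and $|U\psi| \le \mathfrak{D}_{0}[\psi]$, so it is bounded by $C(\mathfrak{b}_{0})\mathfrak{D}_{0}[\psi]^{2}$. The redshift term has a good sign: using $\varpi - \mathbf{e}^{2}/r \ge c_{\mathcal{H}}$ together with the lower bound on $\kappa$ from \cref{kappa-sign,kappa-estimate} (so that $\kappa^{-1} \le C(\mathfrak{b}_{0})$ and in particular $\kappa$ is bounded below by a positive constant determined by $\mathfrak{b}_{0}$), the coefficient of $(U\psi)^{2}$ can be bounded below by a definite positive multiple of $c_{\mathcal{H}}$ (which, after relabeling, gives the stated $c_{\mathcal{H}}$).

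Finally I would handle the cross term $2rU\psi\,\partial_{v}\psi$ by a weighted Cauchy--Schwarz inequality, $|2rU\psi\,\partial_{v}\psi| \le \epsilon(U\psi)^{2} + \epsilon^{-1}r^{2}(\partial_{v}\psi)^{2}$, choosing $\epsilon$ small enough that $\epsilon(U\psi)^{2}$ can be absorbed by a fraction of the redshift term on the left. The remaining $\int_{1}^{v}r^{2}(\partial_{v}\psi)^{2}\dd v'$ is bounded by $C(\mathfrak{b}_{0})\mathcal{E}[\psi]^{2}$ using $\kappa \le C(\mathfrak{b}_{0})$ and the definition of $\mathcal{E}[\psi]$. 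The wave-source term $2r^{2}\kappa U\psi\Box\psi$ appears unchanged on the right side of the statement.

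The main thing to watch is the bookkeeping of constants, in particular making sure that after absorbing the Cauchy--Schwarz error from the $rU\psi\,\partial_{v}\psi$ term and after accounting for the $\kappa$-factor, the resulting constant in front of $(U\psi)^{2}$ is at least $c_{\mathcal{H}}$ (up to the usual convention that $c_{\mathcal{H}}$ denotes the redshift constant possibly multiplied by a fixed positive numerical factor, consistent with other occurrences in the paper). Everything else is a straightforward integration and use of the bounds already collected in \cref{zeroth-order-geometric-bounds,zeroth-order-schematic-control}.
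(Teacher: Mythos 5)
Your proposal is correct and is essentially the paper's argument: the paper writes the same computation as $\partial_v(rU\psi) = r\kappa\Box\psi + \partial_v\psi - \frac{2(\varpi-\mathbf{e}^2/r)}{r^2}\kappa\, rU\psi$, multiplies by $rU\psi$, uses the good sign of the redshift term together with $\varpi-\mathbf{e}^2/r\ge c_{\mathcal{H}}$ and $\kappa^{-1}\le C(\mathfrak{b}_0)$, absorbs the $r\,\partial_v\psi\,U\psi$ cross term by Young's inequality, and integrates from $C^{\mathrm{in}}$. Your handling of the constant in front of $\int(U\psi)^2$ matches the paper's own level of bookkeeping, so there is nothing to fix.
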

\begin{proof}
It is enough to show the pointwise inequality
\begin{equation}\label{Upsi-step-1}
\frac{1}{2}\partial{}_v(rU\psi{})^2 + c_{\mathcal{H}}(U\psi{})^2 \le 2r^2\kappa{}U\psi{}\Box{}\psi{} + C(\mathfrak{b}_0)\frac{r^2}{\kappa{}}(\partial{}_v\psi{})^2.
\end{equation}
Indeed, \cref{Upsi-bound} follows immediately from \cref{Upsi-step-1} after recalling
the definitions of \([\psi{}]\) and \(\mathcal{E}[\psi{}]\).

We now show \cref{Upsi-step-1}. Apply the redshift inequality \(\varpi{}-\mathbf{e}^2/r\ge
c_{\mathcal{H}} > 0\) (see \cref{mass-redshift}) to the wave equation to compute
\begin{equation}
\partial{}_v(rU\psi{}) = r\kappa{}\Box{}\psi{} + \partial{}_v\psi{} - \frac{2(\varpi{}-\mathbf{e}^2/r)}{r^2}\kappa{}rU\psi{} \le r\kappa{}\Box{}\psi{} + \partial{}_v\psi{} - 2c_{\mathcal{H}}\kappa{}r^{-1}U\psi{}.
\end{equation}
To conclude \cref{Upsi-step-1}, multiply both sides by \(rU\psi{}\), use Young's
inequality for the term \(r\partial{}_v\psi{}U\psi{}\), and apply \(\kappa{}^{-1},c_{\mathcal{H}}^{-1}\le C(\mathfrak{b}_0)\).
\end{proof}
\subsection{Controlling the pointwise norm \texorpdfstring{\(\mathcal{P}_{\alpha{},p}\)}{}}
\label{sec:org12df586}
\label{sec:pointwise-P-alpha}
\begin{lemma}
For \(\alpha{}\ge 0\), let \(B_\alpha{} \coloneqq{} \card{}\set{\beta{} : 0\le \beta{}<\alpha{}}\) be the number of
multi-indices smaller than \(\alpha{}\). For \(s > 0\) small enough depending on
\(\alpha{}\), we have
\begin{equation}
\norm{r\Gamma{}^{\le \alpha{}}\varphi{}}_{L^\infty} + \norm{r^2\partial{}_v\Gamma{}^{\le \alpha{}}\varphi{}}_{L^\infty} + \norm{r^{3/2-B_\alpha{}s}\partial{}_v(r\Gamma{}^{\le \alpha{}}\varphi{})}_{L^\infty}\le C(\mathcal{G}_{\alpha,s})(\mathcal{D}_\alpha{} + \mathcal{E}_\alpha{}).
\end{equation}
\label{dv-L-phi-bound}
\end{lemma}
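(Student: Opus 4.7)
The plan is to proceed by induction on $\alpha$ in the ordering of \cref{sec:multi-index-notation}, establishing the three estimates in the order $\norm{r^{3/2 - B_\alpha s}\partial_v(rL\varphi)}_{L^\infty}$, then $\norm{rL\varphi}_{L^\infty}$, then $\norm{r^2\partial_v L\varphi}_{L^\infty}$ at each step. For the base case $\alpha = 0$, the equation $\Box\varphi = 0$ eliminates the integral term in \cref{dvrpsi-bound}, yielding $\abs{\partial_v(r\varphi)} \le r^{-3/2}C(\mathfrak{b}_0)(\mathcal{D}_0 + \mathcal{E}_0)$, which matches the target weight since $B_0 = 0$. Then \cref{rphi-bound} applied with exponent $3/2 > 1$ controls $\norm{r\varphi}_{L^\infty}$, and the identity $r^2\partial_v\varphi = r\partial_v(r\varphi) - r\lambda\varphi$ combined with $\lambda \le C(\mathfrak{b}_0)$ gives $\norm{r^2\partial_v\varphi}_{L^\infty}$.

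For the inductive step, fix $L \in \Gamma^\alpha$ and assume the result for all $\beta < \alpha$. First, \cref{r12-phi-bound} applied to $L\varphi$ gives the auxiliary bound $\abs{L\varphi} \le r^{-1/2}C(\mathcal{D}_\alpha + \mathcal{E}_\alpha)$. Applying \cref{dvrpsi-bound} to $\psi = L\varphi$ reduces the $\partial_v(rL\varphi)$ estimate to controlling $\int_1^u r\kappa(-\nu)\abs{\Box L\varphi}(u',v)\dd u'$, and we bound the integrand using \cref{wave-pointwise-1}. The lower-order contributions from $\mathbf{1}_{r\ge\Rc}\abs{\partial_v(r\Gamma^{<\alpha}\varphi)}$ and $\abs{\partial_v\Gamma^{<\alpha}\varphi}$ are controlled pointwise by the inductive hypothesis; after the change of variables $(-\nu)\dd u' = -\dd r$, each yields an integrated contribution of order $r^{-3/2 + B_\alpha s}(u,v)$ matching the target on the LHS, the bookkeeping working because $B_\beta \le B_\alpha - 1$ for all $\beta < \alpha$ and the integration absorbs one power of $r$. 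The $\abs{U\Gamma^{<\alpha}\varphi}$ contribution is estimated via Cauchy--Schwarz against the $u$-direction energy $\int r^2\kappa(-\nu)(U\Gamma^{<\alpha}\varphi)^2\dd u' \le C\mathcal{E}_\alpha^2$, yielding an acceptable contribution of order $r^{-3/2+s}(u,v)\mathcal{E}_\alpha$.

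The main obstacle is handling the top-order terms $\alpha_V\abs{\partial_v L\varphi}$ and $\alpha_U\abs{UL\varphi}$ appearing in \cref{wave-pointwise-1}. The $\abs{UL\varphi}$ contribution is handled by the same Cauchy--Schwarz argument, paired with $\mathcal{E}[L\varphi] \le \mathcal{E}_\alpha$. The $\abs{\partial_v L\varphi}$ piece is more delicate: since it originates from the $f_VVL$ correction in \cref{wave-comm-formula-UVS} and $f_V$ is supported in $\set{r \ge \Rc}$, this contribution is effectively localized to $\set{r \ge \Rc}$. Writing $\abs{\partial_v L\varphi} \le r^{-1}\abs{\partial_v(rL\varphi)} + r^{-1}\lambda\abs{L\varphi}$, the first piece contributes a factor $\lesssim \Rc^{-1+s}\norm{r^{3/2-B_\alpha s}\partial_v(rL\varphi)}_{L^\infty}$, which can be absorbed into the LHS for $\Rc$ sufficiently large; the second piece is controlled using the $r^{-1/2}$ bound on $\abs{L\varphi}$ established at the start of the inductive step. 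Having thus closed the $\partial_v(rL\varphi)$ estimate, \cref{rphi-bound} applied with exponent $3/2 - B_\alpha s > 1$ (valid for $s$ small depending on $\alpha$) yields $\norm{rL\varphi}_{L^\infty}$, and the algebraic identity $r^2\partial_v L\varphi = r\partial_v(rL\varphi) - r\lambda L\varphi$ combined with $\lambda \le C(\mathfrak{b}_0)$ yields $\norm{r^2\partial_v L\varphi}_{L^\infty}$, completing the induction.
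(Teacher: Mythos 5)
Your proposal follows the paper's argument almost exactly: the same reduction of the three norms to the $\partial_v(r L\varphi)$ estimate via \cref{rphi-bound} and the identity $r^2\partial_v\psi = r\partial_v(r\psi) - \lambda r\psi$, the same induction with base case from \cref{dvrpsi-bound} and $\Box\varphi = 0$, the same use of \cref{wave-pointwise-1} inside the $u$-integral, the same Cauchy--Schwarz against the $u$-energy for the $U$-derivative terms, and the same splitting $\abs{\partial_v L\varphi}\le r^{-1}\abs{\partial_v(rL\varphi)} + r^{-1}\lambda\abs{L\varphi}$ for the remaining top-order term. The one genuine deviation is how you close the resulting self-referential term: you take the $L^\infty$ norm out of the integral and absorb a factor $\Rc^{-1+s}\norm{r^{3/2-B_\alpha s}\partial_v(rL\varphi)}_{L^\infty}$ into the left side, whereas the paper keeps $\abs{\partial_v(rL\varphi)}$ inside the integral with its integrable weight $r^{-2+s}$ and applies Gr\"onwall along each ingoing characteristic. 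Your route works but costs two things the Gr\"onwall route does not: (i) the absorption is only legitimate once you know a priori that $\norm{r^{3/2-B_\alpha s}\partial_v(rL\varphi)}_{L^\infty}$ is finite (patchable by first taking suprema over compact sub-rectangles, but it should be said); and (ii) it imposes a further largeness condition on $\Rc$, which the paper has already fixed in the energy-estimate section --- this is consistent with the allowed dependencies of $\Rc$ on $C(\mathfrak{B}_0^{\circ},\mathfrak{g}_0,\alpha)$, but it is an extra constraint to track, and with Gr\"onwall the localization of $f_V$ to $\set{r\ge\Rc}$ that you invoke is not needed at all.
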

\begin{proof}
It is enough to establish that
\begin{equation}\label{dv-L-phi-bound-equation}
\norm{r^{3/2-B_{\alpha{}}s}\partial{}_v(r\Gamma{}^{\le \alpha{}}\varphi{})}_{L^\infty}\le C(\mathcal{G}_{\alpha,s})(\mathcal{D}_\alpha{} + \mathcal{E}_\alpha{}).
\end{equation}
Indeed, the estimate for \(\norm{r\Gamma{}^{\le \alpha{}}\varphi{}}_{L^\infty}\) follows from \cref{rphi-bound} and
\cref{dv-L-phi-bound-equation}. Moreover, the estimate for
\(\norm{r^2\partial{}_v\Gamma{}^{\le \alpha{}}\varphi{}}\) follows from
\begin{equation}
r^2\partial_v\Gamma^{\le \alpha{}}\varphi{}\le r\partial{}_v(r\Gamma{}^{\le \alpha{}}\varphi{}) - \lambda{}r\Gamma{}^{\le \alpha{}}\varphi{}
\end{equation}
and \(\lambda{}\le C(\mathfrak{b}_0)\). Note that here we use the smallness of \(s\) with
respect to \(\alpha{}\) to conclude that \(3/2 - B_\alpha{}s > 1\).

We prove \cref{dv-L-phi-bound-equation} by induction on \(\alpha{}\). When \(\alpha{} = 0\),
\cref{dv-L-phi-bound-equation} is an immediate consequence of \cref{dvrpsi-bound}
applied to \(\varphi{}\). Now suppose that \(\abs{\alpha{}}\ge 1\) and
\cref{dv-L-phi-bound-equation} holds for multi-indices \(<\alpha{}\) with the
exponent \(B_\alpha{} = B_{<\alpha{}} + 1\). We want to show that
\cref{dv-L-phi-bound-equation} holds for the multi-index \(\alpha{}\).

Fix a point \((u,v)\) at which we want to estimate \(\abs{\partial_v(rL\varphi{})}\). By
\cref{dvrpsi-bound,r12-phi-bound}, we have
\begin{equation}\label{dvLphi-prep}
\abs{\partial{}_v(rL\varphi{})}\le r^{-3/2}C(\mathfrak{b}_0)(\mathcal{D}_\alpha{} + \mathcal{E}_\alpha{}) + \int_1^u r\kappa{}(-\nu{})\abs{\Box{}\psi{}}(u',v)\dd{}u'.
\end{equation}
The estimate for \(\abs{\Box{}L\varphi{}}\) in \cref{wave-pointwise-inequalities} and the
inequality \(\kappa{}\le C(\mathfrak{b}_0)\) gives
\begin{equation}\label{dvLphi-step-0}
\begin{split}
 \int_1^u r\kappa{}(-\nu{})\abs{\Box{}\psi{}}(u',v)\dd{}u'&\le C(\mathcal{G}_{\alpha,s})\underbrace{\int _1^u r^{-1+s}\abs{U\Gamma{}^{\le \alpha{}}\varphi{}}(-\nu{})(u',v)\dd{}u'}_{\text{:=(I)}} \\
&\qquad + C(\mathcal{G}_{\alpha,s})\underbrace{\int_1^u r^{-1+s}\bigl[\abs{\partial{}_v(r\Gamma{}^{<\alpha{}}\varphi{})} + \abs{\partial{}_v\Gamma{}^{<\alpha{}}\varphi{}} + \abs{\partial{}_vL\varphi{}}\bigr](-\nu{})(u',v)\dd{}u'}_{\text{:=(II)}} \\
\end{split}
\end{equation}
Since \(r(u,v)\le r(u',v)\) for all \(u'\in [1,u]\), we have
\begin{equation}\label{dvLphi-step-1}
\text{(I)}\le \Bigl(\int_{r(u,v)}^{r(1,v)} r^{-4+2s}\dd{}r\Bigr)^{1/2}\Bigl(\int_1^u r^2(U\Gamma{}^{\le \alpha{}}\varphi{})^2(-\nu{})(u',v)\dd{}u'\Bigr)^{1/2}\le r^{-3/2+s}(u,v)\mathcal{E}_\alpha{}.
\end{equation}
We now turn to term \(\text{(II)}\). Use \(\abs{\partial_v\psi{}}\le r^{-1}\abs{\partial_v(r\psi{})} +
r^{-1}\lambda{}\abs{\psi{}}\), the induction hypothesis, and \cref{r12-phi-bound} to get
\begin{equation}\label{dvLphi-step-2}
\begin{split}
\text{(II)}&\le \int_1^u [r^{-1+s}\abs{\partial{}_v(r\Gamma{}^{<\alpha{}}\varphi{})} + r^{-2+s}\abs{\Gamma{}^{\le \alpha{}}\varphi{}} + r^{-2+s}\abs{\partial{}_v(rL\varphi{})}](-\nu{})(u',v)\dd{}u' \\
&\le \int_1^u [r^{-1-3/2+(B_{<\alpha{}} + 1)s}C(\mathcal{G}_{\alpha,s})(\mathfrak{D}_\alpha{} + \mathfrak{E}_\alpha{}) + r^{-2+s}\abs{\partial{}_v(rL\varphi{})}](-\nu{})(u',v)\dd{}u'   \\
&\le C(\mathcal{G}_{\alpha,s})(\mathfrak{D}_\alpha{} + \mathcal{E}_\alpha{})r^{-3/2+(B_{<\alpha{}} + 1)s}(u,v) + \int_1^u  r^{-2+s}\abs{\partial{}_v(rL\varphi{})}(-\nu{})(u',v)\dd{}u'.
\end{split}
\end{equation}
By \cref{dvLphi-prep,dvLphi-step-0,dvLphi-step-1,dvLphi-step-2}, we obtain
\begin{equation}
\begin{split}
\abs{\partial{}_v(rL\varphi{})}(u,v)&\le C(\mathcal{G}_{\alpha,s})(\mathfrak{D}_\alpha{} + \mathcal{E}_\alpha{})r^{-3/2+(B_{<\alpha{}} + 1)s}(u,v)  \\
&\qquad + C(\mathcal{G}_{\alpha,s})\int_1^u r^{-2+2\eta{}_0}\abs{\partial{}_v(rL\varphi{})}(-\nu{})(u',v)\dd{}u'.
\end{split}
\end{equation}
Since \(r^{-3/2+(B_{<\alpha{}}+1)}(\cdot ,v)\) is non-decreasing (in \(u\)), Grönwall's
inequality implies \cref{dv-L-phi-bound-equation} with \(B_\alpha{} = B_{<\alpha{}} + 1\), as
desired.
\end{proof}
\begin{lemma}
Let \(L\in \Gamma^\alpha{}\) for some \(\alpha{}\ge 0\). For \(s > 0\) small enough depending on \(\alpha{}\),
we have
\begin{equation}
\norm{rUL\varphi{}}_{L^\infty}\le C(\mathcal{G}_{\alpha,s})(\mathcal{D}_\alpha{} + \mathcal{E}_\alpha{}).
\end{equation}
\label{UL-phi-bound}
\end{lemma}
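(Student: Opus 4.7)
The strategy is to apply \cref{Upsi-bound} with $\psi = L\varphi$, yielding
\[\tfrac{1}{2}r^2|UL\varphi|^2(u,v) + c_{\mathcal{H}}\int_1^v(UL\varphi)^2(u,v')\dd{}v' \le C(\mathfrak{b}_0)(\mathcal{D}_\alpha^2+\mathcal{E}_\alpha^2)+2\int_1^v r^2\kappa\,UL\varphi\,\Box L\varphi\dd{}v'.\]
The task reduces to controlling the source integral by $C(\mathcal{G}_{\alpha,s})(\mathcal{D}_\alpha^2+\mathcal{E}_\alpha^2)$ modulo an absorbable multiple of the bulk term $c_{\mathcal{H}}\int(UL\varphi)^2\dd{}v'$; taking square roots then yields the desired pointwise bound. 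I would proceed by induction on $\alpha$, with the base case $\alpha=0$ immediate because $\Box\varphi=0$.

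The key observation for the inductive step is that the inductive application of the present lemma at order $<\alpha$ simultaneously produces both the pointwise control $\norm{rU\Gamma^{<\alpha}\varphi}_{L^\infty}\le M$ (with $M := C(\mathcal{G}_{\alpha,s})(\mathcal{D}_\alpha+\mathcal{E}_\alpha)$) and the $v$-integrated bulk bound $\sup_u\int_1^\infty(U\Gamma^{<\alpha}\varphi)^2\dd{}v'\le CM^2$, the latter being exactly the integrated bulk term on the LHS of \cref{Upsi-bound} at lower orders. Combining these with \cref{dv-L-phi-bound} applied to $\Gamma^{\le\alpha}\varphi$ (which gives $|\partial_v L\varphi|\le r^{-2}M$ and, for $s$ small enough depending on $\alpha$, $|\partial_v(r\Gamma^{<\alpha}\varphi)|\le r^{-1}M$) furnishes all the pointwise ingredients.

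For the source estimate I invoke \cref{wave-pointwise-2},
\[r^{2-s}UL\varphi\,\Box L\varphi\le C(\mathcal{G}_{\alpha,s})(|UL\varphi|+|\partial_v L\varphi|)Z,\qquad Z:=|U\Gamma^{<\alpha}\varphi|+|\partial_v\Gamma^{<\alpha}\varphi|+\mathbf{1}_{r\ge\Rc}|\partial_v(r\Gamma^{<\alpha}\varphi)|.\]
Young's inequality on the $|UL\varphi|$-factor yields an absorbable contribution $\tfrac{c_{\mathcal{H}}}{4}\int\kappa(UL\varphi)^2\dd{}v'$ plus a remainder $C\int r^{2s}\kappa^2 Z^2\dd{}v'$; a weighted Cauchy--Schwarz inequality handles the $|\partial_v L\varphi|\cdot Z$ contribution using $\int r^2\kappa^{-1}(\partial_v L\varphi)^2\dd{}v'\le\mathcal{E}_\alpha^2$, leaving a similar $Z^2$-weighted integral. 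In the region $\{r\ge R_0\}$, the pointwise bound $Z\lesssim r^{-1}M$ together with the change of variables $\dd{}v=\dd{}r/\lambda$ and the lower bound $\lambda\gtrsim 1$ from \cref{lambda-lower-bound} shows the remainders integrate to $CM^2$, provided $s$ is small enough (depending on $\alpha$) that $-2+2s<-1$.

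The main obstacle is the near-horizon region $\{r\le R_0\}$: there, $r$ is uniformly bounded, the polynomial $r$-weights provide no decay, and the $v$-length of the integration interval $[1,v_{R_0}(u)]$ can grow linearly in $u$ by \cref{v-u-r-compare}, so any bound that relies on the pointwise estimate $|U\Gamma^{<\alpha}\varphi|\lesssim r^{-1}M\lesssim M$ would produce a $u$-dependent bound. The resolution, which is the crucial step, is to replace the pointwise bound on $Z$ in this region by $L^2$-in-$v$ control: the inductive bulk bound $\int(U\Gamma^{<\alpha}\varphi)^2\dd{}v'\le CM^2$ handles the first component of $Z$, while dividing the weighted energy bound by the uniformly bounded weight $r^2\kappa^{-1}\gtrsim 1$ gives $\int(\partial_v\Gamma^{<\alpha}\varphi)^2\dd{}v'\le C\mathcal{E}_\alpha^2$. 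Since $r^{2s}\kappa^2$ and $r^{2s-2}\kappa^3$ are themselves uniformly bounded in $\{r\le R_0\}$, both remainder integrals are uniformly bounded by $CM^2$, closing the induction and yielding the lemma.
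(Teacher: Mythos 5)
Your proposal is correct and follows the same skeleton as the paper's proof: induction on \(\alpha\), the energy identity of \cref{Upsi-bound} applied to \(\psi = L\varphi\), the source estimate \cref{wave-pointwise-2}, an \(r\)-weighted Young/Cauchy--Schwarz step, absorption of \(\epsilon\int(UL\varphi)^2\dd{}v'\) into the good bulk term on the left, and the pointwise inputs from \cref{dv-L-phi-bound} and the inductive hypothesis. The one genuine difference is your treatment of \(\int_1^v r^{2s}\abs{U\Gamma^{<\alpha}\varphi}^2\dd{}v'\) in \(\set{r\le R_0}\). The paper bounds this integrand pointwise by \(r^{-2+2s}(\mathcal{D}_\alpha^2+\mathcal{E}_\alpha^2)\) via the induction hypothesis and then asserts \(\int_1^v r^{-2+2s}\dd{}v'\le C\); that integral is manifestly finite only in \(\set{r\ge R_0}\) (where one changes variables to \(r\) and uses \cref{lambda-lower-bound}), whereas in \(\set{r\le R_0}\) the \(v\)-length of the integration range is comparable to \(v_{R_0}(u)\sim u\) by \cref{v-u-r-compare}, so the pointwise bound alone does not close there. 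Your fix --- observing that the left side of \cref{Upsi-bound} at lower orders already furnishes the uniform bulk bound \(\sup_{u,v}\int_1^v(U\Gamma^{<\alpha}\varphi)^2\dd{}v'\le CM^2\), and carrying this as part of the induction hypothesis --- is exactly the right way to handle that region, and it is the same mechanism the paper itself relies on at top order (the bulk term is what permits the absorption). Together with your use of \(\int\frac{r^2}{\kappa}(\partial_v\Gamma^{<\alpha}\varphi)^2\dd{}v'\le C\mathcal{E}_{<\alpha}^2\) for the lower-order \(\partial_v\)-component and the support restriction \(\mathbf{1}_{r\ge\Rc}\) on the \(\partial_v(r\Gamma^{<\alpha}\varphi)\) component, every remainder is uniformly bounded, so your argument closes. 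The only (standard) point left implicit in both arguments is the a priori qualitative finiteness of \(\int_1^v(UL\varphi)^2\dd{}v'\) needed to perform the absorption.
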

\begin{proof}
\step{Step 1: Preliminary observation.} Observe that \cref{dv-L-phi-bound} implies that
for \(s\) small enough depending on \(\alpha{}\), we have
\begin{equation}\label{UL-phi-bound-prep}
r^{2s}\abs{\partial_v(r\Gamma^{<\alpha{}}\varphi{})}^2\le r^{-2}C(\mathfrak{b}_{<\alpha{}},r^{-s}\mathfrak{g}_{<\alpha{}})(\mathcal{D}_{<\alpha{}}^2+\mathcal{E}_{<\alpha{}}^2).
\end{equation}

\step{Step 2.} We prove \cref{UL-phi-bound} by induction on \(\alpha{}\). When \(\alpha{} = 0\),
\cref{Upsi-bound} follows from \cref{UL-phi-bound} (for \(\psi{} = \varphi{}\)). Now
suppose inductively that \(\abs{\alpha{}}\ge 1\) and \cref{UL-phi-bound} holds for
all multi-indices \(< \alpha{}\). Use \cref{Upsi-bound} (for \(\psi{} =
L\varphi{}\)), the estimate for \(UL\varphi{}\Box{}L\varphi{}\) in
\cref{wave-pointwise-2}, an \(r\)-weighted Young's inequality with \(\epsilon{}\),
the fact that \(\kappa{},\kappa^{-1},c_{\mathcal{H}}^{-1},r^{-1}\le
C(\mathfrak{b}_0)\), the induction hypothesis, and \cref{UL-phi-bound-prep} to
obtain
\begin{equation}
\begin{split}
&r^2\abs{UL\varphi{}}^2(u,v) + \int_1^v 2c_{\mathcal{H}}(UL\varphi{})^2(u,v')\dd{}v'\\
&\le C(\mathfrak{b}_0)(\mathcal{D}_\alpha^2 + \mathcal{E}_\alpha^2) + \epsilon{}\int_1^v (UL\varphi{})^2(u,v')\dd{}v' \\
&\qquad + C(\mathcal{G}_{\alpha,s},\epsilon{})\int_1^v \Bigl[\frac{r^2}{\kappa{}}\abs{\partial{}_v\Gamma{}^{\le \alpha{}}\varphi{}}^2 + \mathbf{1}_{r\ge \Rc}r^{2s}\abs{\partial{}_v(r\Gamma{}^{<\alpha{}}\varphi{})}^2 + r^{2s}\abs{U\Gamma{}^{<\alpha{}}\varphi{}}^2\Bigr]\dd{}v' \\
&\le C(\mathfrak{b}_0)(\mathcal{D}_\alpha^2 + \mathcal{E}_\alpha^2) + \epsilon{}\int_1^v (UL\varphi{})^2(u,v')\dd{}v' + C(\mathcal{G}_{\alpha,s},\epsilon{})\int_1^v \Bigl[\frac{r^2}{\kappa{}}\abs{\partial{}_v\Gamma{}^{\le \alpha{}}\varphi{}}^2 + r^{-2+2s}(\mathcal{D}_\alpha^2 + \mathcal{E}_\alpha^2)\Bigr]\dd{}v' \\
&\le \epsilon{}\int_1^v (UL\varphi{})^2(u,v')\dd{}v' +  C(\mathcal{G}_{\alpha,s},\epsilon{})(\mathcal{D}_\alpha^2+ \mathcal{E}_\alpha^2). \\
\end{split}
\end{equation}
To conclude, absorb the first term on the right to the left by taking \(\epsilon{} >0\) small enough
depending on \(c_{\mathcal{H}}\).
\end{proof}
\section{Estimates for differentiated geometric quantities}
\label{sec:org53790d4}
\label{sec:geometric-estimates}
The goal of this section is to establish the following result:
\begin{proposition}[Estimates for higher order geometric quantities]
For \(\abs{\alpha{}}\ge 1\) and \(s > 0\) sufficiently small, we have
\begin{align}
C(\mathfrak{b}_\alpha{})&\le C(\alpha{},\mathfrak{B}_0,r^{-s}\mathfrak{G}_0,\mathcal{E}_{<\alpha{},1},\mathcal{P}_{<\alpha{},2-s+\eta{}_0}), \label{higher-geometric-bound-b} \\
C(\mathfrak{B}_\alpha{})&\le C(\alpha{},\mathfrak{B}_0,r^{-s}\mathfrak{G}_0,\mathcal{E}_{\alpha{},1},\mathcal{P}_{\alpha{},2-s+\eta{}_0}),\label{higher-geometric-bound-B} \\
C(r^{-s}\mathfrak{g}_\alpha{})&\le C(\alpha{},\mathfrak{B}_0,r^{-s}\mathfrak{G}_0,\mathcal{E}_{<\alpha{},1},\mathcal{P}_{<\alpha{},2-s+\eta{}_0}),\label{higher-geometric-bound-g} \\
C(r^{-s}\mathfrak{G}_\alpha{})&\le C(\alpha{},\mathfrak{B}_0,r^{-s}\mathfrak{G}_0,\mathcal{E}_{\alpha{},1},\mathcal{P}_{\alpha{},2-s+\eta{}_0}).\label{higher-geometric-bound-G}
\end{align}
\label{geometric-quantities-bound}
\end{proposition}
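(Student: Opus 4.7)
The proof proceeds by a simultaneous induction on the multi-index $\alpha$, where for each fixed $\alpha$ one first establishes the weak estimates \cref{higher-geometric-bound-b,higher-geometric-bound-g} for $\mathfrak{b}_\alpha$ and $\mathfrak{g}_\alpha$ (which require only lower order energy control), and then upgrades to the strong estimates \cref{higher-geometric-bound-B,higher-geometric-bound-G} for $\mathfrak{B}_\alpha$ and $\mathfrak{G}_\alpha$ by feeding in the order-$\alpha$ energy $\mathcal{E}_{\alpha,1}$ and pointwise norm $\mathcal{P}_{\alpha,2-s+\eta_0}$. For each of the three geometric components $\Gamma^\alpha\varpi$, $\Gamma^\alpha\kappa$, and $\Gamma^\alpha(-\gamma)$ I commute $\Gamma^\alpha$ through the transport equations \cref{sph-sym-equations-1,sph-sym-equations-2}; the commutator terms generated are of the schematic form promised by \cref{D-comm-weak,rearrangement-formula,U-rearrangement-formula}, and by inductive hypothesis they are controlled by $\Gamma^{<\alpha}$-quantities already in hand.

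For $\Gamma^\alpha\varpi$ I follow Step 5a of the outline: using \cref{S-in-terms-of-U-V} to write $|\Gamma^\alpha\varpi|\lesssim v|V\Gamma^{\alpha-S}\varpi|+\tau|U\Gamma^{\alpha-S}\varpi|$ and then substituting the Raychaudhuri transport equations for $\partial_v\varpi$ and $\partial_u\varpi$, one obtains an expression of schematic form $r^2\tau(D\varphi)^2$. Using the interpolated bound $r^{2-s-\eta_0}\tau|D\varphi|^2\lesssim \mathcal{P}_{<\alpha,2-s+\eta_0}^2$ from the remark following the definition of $\mathcal{P}_p$ then yields growth of order at most $r^{s+\eta_0}$, which is precisely what allows $\Gamma^\alpha\varpi$ to sit inside $r^{-s}\mathfrak{g}_\alpha$ but not inside $\mathfrak{b}_\alpha$. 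The estimate for $r\Gamma^\alpha(-\gamma)$ runs analogously using $\partial_v\log(-\gamma) = \frac{r}{\lambda}(\partial_v\varphi)^2$; the $U$-direction contribution is absorbed using the boundedness of $r\tau|U\log(-\gamma)|$ contained in the inductively controlled $\mathfrak{B}_{<\alpha}$, while the $V$-direction integral in the large-$r$ regime is treated separately using the extra $r$-decay of $\partial_v\varphi$ contained in $\mathcal{P}_{<\alpha,0}$.

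The most delicate step is the estimate for $\Gamma^\alpha\kappa$, which must be shown to be bounded (so that it lies in $\mathfrak{b}_\alpha$ rather than just $\mathfrak{g}_\alpha$), and I expect this to be the main obstacle. The argument exploits the gauge normalization $\kappa|_{\{r=r_\mathcal{H}\}}=1$: by construction the vector fields $V$ and $S$ are tangent to constant-$r$ curves in $\{r\le \Rc\}$, so if $\alpha_U=0$ then $\Gamma^\alpha\log\kappa$ vanishes identically on $\{r=r_\mathcal{H}\}$. When $\alpha_U\ge 1$, one uses \cref{U-rearrangement-formula} to pull a $U$ to the front and commutes it with the transport equation $U\log\kappa=-r(U\varphi)^2$, arriving at
\begin{equation*}
|\Gamma^\alpha\log\kappa|(r)\lesssim v\int_{r_\mathcal{H}}^r r'|U\varphi|\,|UD\Gamma^{<\alpha}\varphi|\,\mathrm{d}r' + \text{lower order}
\end{equation*}
in the spirit of \cref{intro:S-kappa}, where the prefactor $v$ arises from the $S$- or $V$-weights in $\Gamma^\alpha$. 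A Cauchy--Schwarz in $r'$ and the one-sided estimate $\int_{r_\mathcal{H}}^r r'(U\psi)^2\,\mathrm{d}r'\lesssim v^{-1}\mathcal{E}_1[\psi]$ (proved by splitting at $r'\sim v/2$ and using, respectively, the $\tau$-decay in $\mathcal{E}_{<\alpha,1}$ in the inner region and the $r^2$-weight of the unweighted energy where $r'\sim v$) then absorbs the factor of $v$ and produces a bound uniform in $r$. The case $\alpha_U\ge 2$ is handled inductively by iterating this argument on $U^{\alpha_U-1}\log\kappa$ after \cref{U-rearrangement-formula}.

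Finally, the strong quantities $\mathfrak{B}_\alpha$ and $\mathfrak{G}_\alpha$ carry additional $\tau$-weights on $\Gamma^\alpha\kappa$, $\Gamma^\alpha(-\gamma)$, and $r\Gamma^{\alpha+U}\kappa$, as well as on the auxiliary quantities appearing in \cref{B-alpha-def,G-alpha-def}. These weights are produced by repeating the transport-equation integrations above but invoking the order-$\alpha$ pointwise decay $\mathcal{P}_{\alpha,2-s+\eta_0}$ in place of its lower-order counterpart; the $\tau$-weight can be pulled outside the integrand since $\tau$ is monotone along null integration curves (or depends only on $v$ in $\{r\le R_0\}$). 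The secondary obstacle throughout will be bookkeeping: one must verify that every schematic coefficient appearing in \cref{D-comm-weak,U-rearrangement-formula,rearrangement-formula} is absorbed either into the inductive hypothesis or into the $r^s$ slow-growth budget allotted to $\mathfrak{g}_\alpha$ and $\mathfrak{G}_\alpha$, and in particular that the $r^{-1}$ and $r^{-2}$ weights appearing in commutator terms are never borderline against the target $r^s$. This is precisely why the $U$-first ordering imposed by \cref{U-rearrangement-formula} plays a critical role.
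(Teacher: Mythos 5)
Your overall architecture matches the paper's: the proposition is proved by reducing the schematic quantities to estimates for $\Gamma^\alpha\varpi$, $\Gamma^\alpha\log\kappa$, and $\Gamma^\alpha\log(-\gamma)$ (the paper's \cref{preliminary-geometric-estimate}, which also requires controlling differentiated coordinate weights via \cref{gamma-uv} --- part of the ``bookkeeping'' you defer), and your treatments of $\varpi$ (transport equations plus interpolation on $\mathcal{P}_{<\alpha,2-s}$, cf.\ \cref{Gamma-omega-bounds}) and of $\kappa$ (vanishing of $V$- and $S$-derivatives on $\{r=r_{\mathcal{H}}\}$, integration of the $U$-transport equation, Cauchy--Schwarz against \cref{u-energy-estimate}, cf.\ \cref{Gamma-kappa-bounds}) are essentially identical to the paper's.

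The genuine gap is in $\Gamma^\alpha\log(-\gamma)$, specifically the strong bound $r\tau\,|\Gamma^\alpha\log(-\gamma)|\lesssim\cdots$ needed for $\mathfrak{B}_\alpha$. You propose to ``repeat the transport-equation integrations with the order-$\alpha$ pointwise norm,'' but unlike $\kappa$ there is no interior curve on which the differentiated quantity vanishes: the only available normalization is $(-\gamma)|_{\mathcal{I}}=1$, so one must integrate the $\overline{\partial}_r$-transport equation outward to null infinity. This requires two ingredients absent from your sketch. First, one needs $\lim_{v\to\infty}\Gamma^\alpha\log(-\gamma)(u,v)=0$ for \emph{all} $\alpha$, not just $\alpha=0$; this is not a gauge condition but must be proved (the paper's \cref{gamma-qualitative-decay}, established via integrability of the $\partial_v$-derivative and a limit-interchange argument as in Step 3 of \cref{gamma-bound-step-3}, and threaded through the induction as a separate qualitative statement). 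Second, the resulting integral inequality has $\int_v^\infty r^{-2+s}|\Gamma^{\le\alpha}\log(-\gamma)|\,\mathrm{d}v'$ on the right, which must be closed by a Gr\"onwall inequality on a half-infinite interval (\cref{decaying-gronwall}) rather than the usual forward Gr\"onwall; this in turn presupposes a priori boundedness of the quantity being estimated, which is why the paper runs the circular chain (qualitative decay at order $<\alpha$) $\Rightarrow$ (strong bound at order $<\alpha$) $\Rightarrow$ (weak bound at order $\alpha$) $\Rightarrow$ (qualitative decay at order $\alpha$). Without this structure your induction does not close for the $(-\gamma)$ component, and correspondingly your identification of $\kappa$ as the sole delicate point underestimates where the work lies.
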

\begin{proof}
\Cref{preliminary-geometric-estimate} and
\cref{Gamma-omega-bounds,Gamma-kappa-bounds,Gamma-gamma-bounds} imply that for
\(\abs{\alpha{}}\ge 1\) we have
\begin{align}
C(\mathfrak{b}_\alpha{})&\le C(\alpha{},\mathfrak{b}_0,\mathfrak{B}_{<\alpha{}},r^{-s}\mathfrak{G}_{<\alpha{}},\mathcal{E}_{<\alpha{},1},\mathcal{P}_{<\alpha{},1+\eta{}_0}), \label{higher-geometric-b-prep}\\
C(\mathfrak{B}_\alpha{})&\le C(\alpha{},\mathfrak{b}_\alpha{},\mathfrak{B}_{<\alpha{}},r^{-s}\mathfrak{G}_{<\alpha{}},\mathcal{E}_{\alpha{},1},\mathcal{P}_{\alpha{},1}), \label{higher-geometric-B-prep}\\
C(r^{-s}\mathfrak{g}_\alpha{})&\le C(\alpha{},\mathfrak{b}_{\le \alpha{}},r^{-s}\mathfrak{G}_{<\alpha{}},\mathcal{E}_{<\alpha{},1},\mathcal{P}_{<\alpha{},2-s+\eta{}_0}), \label{higher-geometric-g-prep}\\
C(r^{-s}\mathfrak{G}_\alpha{})&\le C(\alpha{},\mathfrak{B}_{\le \alpha{}},r^{-s}\mathfrak{g}_{\le \alpha{}},r^{-s}\mathfrak{G}_{<\alpha{}},\mathcal{E}_{\alpha{},1},\mathcal{P}_{\alpha{},2-s+\eta{}_0}). \label{higher-geometric-G-prep}
\end{align}
Substitute \cref{higher-geometric-b-prep} into \cref{higher-geometric-B-prep} and
substitute \cref{higher-geometric-B-prep,higher-geometric-g-prep} into
\cref{higher-geometric-G-prep} and induct on \(\alpha{}\) to obtain
\cref{higher-geometric-bound-b,higher-geometric-bound-B,higher-geometric-bound-g,higher-geometric-bound-G}.
\end{proof}
\subsection{Preliminary estimate for schematic geometric quantities}
\label{sec:org275b2c8}
To estimate the geometric quantities \(\mathfrak{b}_\alpha{}\), \(\mathfrak{B}_\alpha{}\),
\(\mathfrak{g}_\alpha{}\), and \(\mathfrak{G}_\alpha{}\), it is enough to estimate the zeroth order geometric
quantities (which has been done in \cref{sec:zeroth-order-geometric}) and estimate
derivatives of \(\varpi{}\), \(\log \kappa{}\), and \(\log (-\gamma{})\).
\begin{lemma}
For \(\abs{\alpha{}}\ge 1\), we have
\begin{equation}
\begin{split}
C(\mathfrak{b}_\alpha{})&\le C(\alpha{},\mathfrak{b}_0,\mathfrak{B}_{<\alpha{}},r^{-1}\mathfrak{G}_{<\alpha{}},r^{-1}\Gamma{}^{\le \alpha{}}\varpi{},\Gamma{}^{\le \alpha{}}\log \kappa{},\mathbf{1}_{\beta{}_U>0}r\Gamma{}^\beta{}\log \kappa{}|_{\beta{}\le \alpha{}},\Gamma{}^{\le \alpha{}}\log (-\gamma{})), \\
C(\mathfrak{B}_\alpha{})&\le C(\alpha{},\mathfrak{B}_0,\mathfrak{b}_\alpha{},r^{-1}\mathfrak{G}_{<\alpha{}},v\Gamma{}^{\le \alpha{}}\log \kappa{},r\Gamma{}^{\le \alpha{}+U}\log \kappa{},r\tau{}\Gamma{}^{\le \alpha{}}\log (-\gamma{})), \\
C(r^{-s}\mathfrak{g}_\alpha{})&\le C(\alpha{},\mathfrak{b}_{\le \alpha{}},r^{-s}\mathfrak{G}_{<\alpha{}},r^{-s}\Gamma{}^{\le \alpha{}}\varpi{},\mathbf{1}_{r\ge R_0}r^{1-s}\Gamma{}^{\le \alpha{}}\log (-\gamma{})), \\
C(r^{-s}\mathfrak{G}_\alpha{})&\le C(\alpha{},\mathfrak{B}_{\le \alpha{}},r^{-s}\mathfrak{g}_{\le \alpha{}},r^{-s}\Gamma{}^{\le \alpha{}}\varpi{},r^{1-s}\tau{}\mathbf{1}_{\beta{}_U>0}\Gamma{}^\beta{}\log \kappa{}|_{\beta{}\le \alpha{}}). \\
\end{split}
\end{equation}
\label{preliminary-geometric-estimate}
\end{lemma}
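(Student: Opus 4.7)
The plan is to reduce the estimates to a chain-rule computation that converts derivatives of $\kappa$ and $(-\gamma{})$ into derivatives of $\log\kappa{}$ and $\log(-\gamma{})$, respectively, and then to unwind the definitions of the schematic quantities in \cref{b-alpha-def,B-alpha-def,g-alpha-def,G-alpha-def}. The core identity I would exploit is Fa\`{a} di Bruno's formula applied to $f = e^{\log f}$: for any smooth positive function $f$ and any multi-index $\alpha{}\ge 0$, there is an expansion
\begin{equation*}
\Gamma{}^\alpha{} f =_{\mathrm{s}} f \sum_{(\beta{}_1,\ldots,\beta{}_k)} c_{\beta{}_1,\ldots,\beta{}_k}\prod_i \Gamma{}^{\beta{}_i}\log f,
\end{equation*}
where the sum runs over tuples of multi-indices $\beta{}_i > 0$ with $\sum_i \beta{}_i = \alpha{}$ (so in particular at most one factor can equal $\alpha{}$, while the rest are strictly lower). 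The analogous identity, with $f$ and $\log f$ interchanged and $f^{-1}$ appearing multiplicatively, lets one go back from $\Gamma{}^{\le \alpha{}}f$ to $\Gamma{}^{\le \alpha{}}\log f$. Since $\kappa{},\kappa{}^{-1},(-\gamma{}),(-\gamma{})^{-1}$ all appear in $\mathfrak{b}_0$ (recall \cref{b0-def}), these two conversions are uniformly valid throughout $\mathcal{R}_{\mathrm{char}}$.

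I would then proceed by induction on $\abs{\alpha{}}$. In each expansion the \emph{top-order} term is $f \cdot \Gamma{}^\alpha{}\log f$, which for $f = \kappa{},(-\gamma{})$ is directly an allowed input (either $\Gamma{}^{\le \alpha{}}\log\kappa{}$ or $\Gamma{}^{\le \alpha{}}\log(-\gamma{})$ appears in the list on the right of the statement). Every other term in the sum is a product of factors $\Gamma{}^{\beta{}_i}\log f$ with $\beta{}_i<\alpha{}$, and using the inverse chain rule these factors are polynomials in $\kappa{}^{-1}$ (resp.\ $(-\gamma{})^{-1}$) and $\Gamma{}^{\le \beta{}_i}\kappa{}$ (resp.\ $\Gamma{}^{\le \beta{}_i}(-\gamma{})$), hence controlled by $\mathfrak{b}_{<\alpha{}}$, and more generously by $\mathfrak{B}_{<\alpha{}}$. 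Inserting these bounds into \cref{b-alpha-def} yields the first estimate, with the lower order schematic quantities collected into $\mathfrak{B}_{<\alpha{}}$ and $r^{-1}\mathfrak{G}_{<\alpha{}}$. The $r^{-1}\Gamma{}^{\le \alpha{}}\varpi{}$ input handles the $r^{-1}\Gamma{}^\alpha{}\varpi{}$ slot directly with no chain rule needed.

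The bounds for $\mathfrak{B}_\alpha{}$, $\mathfrak{g}_\alpha{}$, $\mathfrak{G}_\alpha{}$ follow the same template, with extra bookkeeping for the weights $\{r,\mathbf{1}_{r\ge R_0}u,v\}$, $r\tau{}$, and the $\mathbf{1}_{\alpha{}_U>0}$ indicator. The weight in front of a top-order factor simply attaches to that factor (e.g.\ the $v\Gamma{}^\alpha{}\kappa{}$ piece of $\mathfrak{B}_\alpha{}$ becomes $v\kappa{}\Gamma{}^\alpha{}\log\kappa{}$ plus lower-order, matching the input $v\Gamma{}^{\le \alpha{}}\log\kappa{}$); in lower-order products the weight is absorbed into $\mathfrak{B}_{<\alpha{}}$ or $r^{-1}\mathfrak{G}_{<\alpha{}}$ after noting that each $\beta{}_i<\alpha{}$ appears together with the part $\Gamma{}^{\alpha{}-\beta{}_i}\log f$ of weight comparable to the outer weight. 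The $\mathbf{1}_{\alpha{}_U>0}r\Gamma{}^{\alpha{}+U}\kappa{}$ slot in $\mathfrak{B}_\alpha{}$ requires the extra input $r\Gamma{}^{\le \alpha{}+U}\log\kappa{}$ (allowed because its $U$-component is positive), and the $\mathbf{1}_{r\ge R_0}\{r,ru,v\}\Gamma{}^\alpha{}(-\gamma{})$ slot of $\mathfrak{B}_\alpha{}$ matches $r\tau{}\Gamma{}^{\le \alpha{}}\log(-\gamma{})$ after noting that in $\{r\ge R_0\}$ we have $u\sim\tau{}$ by \cref{tau-estimates}, so $ru\lesssim r\tau{}$, and $v\lesssim r+u\lesssim r\tau{}$ by \cref{v-u-r-compare-3}.

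The main obstacle I anticipate is purely bookkeeping, namely keeping the weight assignments consistent through every product in the Fa\`{a} di Bruno sum. In particular, one has to verify for $\mathfrak{G}_\alpha{}$ that the $\mathbf{1}_{\alpha{}_U>0}\{r,ru,\mathbf{1}_{r\le 2\Rc}v\}\Gamma{}^\alpha{}\kappa{}$ slot is controlled by $r^{1-s}\tau{}\Gamma{}^\beta{}\log\kappa{}$ in the region $\set{r\le 2\Rc}$ (where $v\lesssim \tau{}\lesssim r\tau{}$ by \cref{tau-estimates}) and by $r\tau{}\Gamma{}^\beta{}\log\kappa{}\lesssim r^{1-s}\tau{}(r^s\Gamma{}^\beta{}\log\kappa{})$ in $\set{r\ge R_0}$ (where $u\sim\tau{}$), and that the resulting weights are compatible with the $r^{-s}$-weighted schematic notation. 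Once the weight calculus is verified, the four claimed bounds follow directly by substituting the chain-rule expansions into \cref{b-alpha-def,B-alpha-def,g-alpha-def,G-alpha-def} and collecting lower-order terms into the corresponding $\mathfrak{B}_{<\alpha{}},\mathfrak{G}_{<\alpha{}}$ inputs.
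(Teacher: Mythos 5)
Your chain-rule reduction is the right first move and matches the paper's: the proof does use the consequence of Fa\`{a} di Bruno that \(\Gamma{}^\alpha{}f =_{\mathrm{s}} \mathcal{O}_\alpha{}(\Gamma{}^{\le \alpha{}}\log f, f, f^{-1})\) to trade \(\Gamma{}^\alpha{}\kappa{}\) and \(\Gamma{}^\alpha{}(-\gamma{})\) for derivatives of their logarithms, and it organizes the induction by isolating the top-order slots (via auxiliary ``tilded'' quantities) exactly as you isolate the single factor with \(\beta{}_i=\alpha{}\). But the part you dismiss as ``purely bookkeeping'' contains the actual substance of the lemma, and your proposal has a genuine gap there: you assert that in the lower-order products ``the weight is absorbed into \(\mathfrak{B}_{<\alpha{}}\) or \(r^{-1}\mathfrak{G}_{<\alpha{}}\),'' which amounts to assuming that the schematic quantities are compatible with differentiation, i.e.\ \(\Gamma{}^\beta{}\mathfrak{B}_{\beta{}'}=_{\mathrm{s}}\mathfrak{B}_{\beta{}+\beta{}'}\). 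This is precisely what must be proved, because the definitions \cref{b-alpha-def,B-alpha-def,g-alpha-def,G-alpha-def} contain terms like \(\Gamma{}^\beta{}\mathfrak{B}_{\beta{}'}\) and \(\Gamma{}^\beta{}\mathfrak{G}_{\beta{}'}\), and \(\mathfrak{B}_{\beta{}'}\) is built from weighted expressions such as \(\set{r,\mathbf{1}_{r\ge R_0}u,v}(1-\kappa{})\) and \(\mathbf{1}_{\Rc\le r\le 2\Rc}(v-u-r)\). Applying \(\Gamma{}^\beta{}\) by Leibniz therefore produces derivatives of the coordinate weights themselves: \(\Gamma{}^\beta{}u\), \(\Gamma{}^\beta{}v\), \(\Gamma{}^\beta{}(v-u-r)\). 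These are not tautologically controlled --- e.g.\ \(Sv - v = (1-\chi{})\bigl((-v+u+r) - r(1-\lambda{}^{-1}) - \frac{u}{\kappa{}}(\kappa{}-(-\gamma{}))\bigr)\) and \(Su = \chi{}v\kappa{}/(-\gamma{}) + (1-\chi{})u\), so controlling them requires the gauge normalizations, the estimate \cref{v-u-r-estimate} for \(v-u-r\), and the specific inclusion of \(\mathbf{1}_{\Rc\le r\le 2\Rc}(v-u-r)\) in \(\mathfrak{B}_0\). The paper devotes a separate lemma (\cref{gamma-uv}) and an induction (Step 2 of its proof, establishing \cref{b-tilde-diff,B-tilde-diff,g-tilde-diff,G-tilde-diff}) to exactly this point; without it your induction does not close.

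A secondary, smaller omission: your top-order matching for the \(\mathbf{1}_{\alpha{}_U>0}\) slots and the \(r^{-s}\)-weighted quantities is plausible and your coordinate comparisons (\(ru=r\tau{}\) in \(\set{r\ge R_0}\), \(v\lesssim r\tau{}\) in \(\set{r\le 2\Rc}\)) agree with what the paper does via \cref{v-u-r-compare}, but you should state explicitly that these comparisons cost a factor of \(G_{\eta{}_0}\), which is admissible only because \(G_{\eta{}_0}\le C(\mathfrak{B}_0)\); that dependence is why \(\mathfrak{B}_0\) (and not merely \(\mathfrak{b}_0\)) appears on the right of the second and fourth estimates.
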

The proof is provided at the end of the section.
\begin{lemma}
Let \(\abs{\alpha{}}\ge 1\), and let \(L\in \Gamma^\alpha{}\). We have
\begin{align}
Lr &=_{\mathrm{s}} \mathcal{O}(1)r, \label{Lr-est}\\
Lv - \mathbf{1}_{\abs{\alpha{}} = \alpha{}_S}v &=_{\mathrm{s}} (\Gamma{}^{\le \alpha{}-1}\mathfrak{B}_{0})[1 + \mathbf{1}_{r\ge \Rc}\mathbf{1}_{\alpha{}_S > 0}(-v + u + r)], \label{Lv-est}\\
\mathbf{1}_{r\ge R_0}Lu &=_{\mathrm{s}} (\Gamma{}^{\le \alpha{}-1}\mathfrak{B}_{0})[1 + \mathbf{1}_{\alpha{}_S>0}\set{\mathbf{1}_{r\ge R_0}u,\mathbf{1}_{r\le 2\Rc}v}], \label{Lu-est}\\
\mathbf{1}_{r\ge R_0}L(-v + u + r) &=_{\mathrm{s}} \Gamma{}^{\le \alpha{}-1}\mathfrak{B}_{0} \label{LG-est}.
\end{align}
\label{gamma-uv}
\end{lemma}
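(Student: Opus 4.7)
The plan is to induct on $|\alpha|$, with the base case $|\alpha|=1$ handled by direct computation from the definitions of $U$, $V$, $S$ (see \cref{sec:commutator-vector-fields}), and the inductive step handled by the chain rule combined with the $\mathcal{O}$-notation differentiation property from \cref{sec:O-notation}. All four identities are of the same flavor, so we treat them in parallel.

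For the base case, \cref{Lr-est} is immediate from \cref{Gamma-r-initial-calculation}, since $Ur=-1$, $Vr=(1-\chi)$, and $Sr=r(1-\chi)$ all have the form $\mathcal{O}(1)r$. For $Lv$, $Lu$, $L(-v+u+r)$ we split into the three regions $\set{r\le\Rc}$, $\set{\Rc\le r\le 2\Rc}$, $\set{r\ge 2\Rc}$ and compute directly. In $\set{r\le\Rc}$ we have $Uv=0$, $Vv=1$, $Sv=v$, while $Uu=1/(-\nu{})$, $Vu=0$, $Su=v\lambda{}/(-\nu{})$. In $\set{r\ge 2\Rc}$, writing $\overline{\partial}_u v=(-\gamma{})/\kappa{}$ and $\overline{\partial}_r v=1/\lambda{}$ and using $\lambda=(1-\mu{})\kappa{}$, $(-\nu{})=(1-\mu{})(-\gamma{})$, one finds for instance
\begin{equation*}
Sv - v \;=\; -(v - u - r) + u\bigl((-\gamma{})-1\bigr)\kappa{}^{-1} + u\bigl(\kappa{}^{-1}-1\bigr) + r\bigl(\kappa{}^{-1}-1\bigr) + \mathrm{products},
\end{equation*}
and each of $u((-\gamma{})-1)$, $u(\kappa{}^{-1}-1)$, $r(\kappa{}^{-1}-1)$ is $=_{\mathrm{s}}\mathfrak{B}_0^{\circ}$ by \eqref{B0-circ-def}. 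Analogous direct expansions yield \cref{Lv-est,Lu-est,LG-est} in this region. The transitional region $\set{\Rc\le r\le 2\Rc}$ is absorbed into the explicit term $\mathbf{1}_{\Rc\le r\le 2\Rc}(v-u-r)$ appearing in the definition \eqref{B0-def} of $\mathfrak{B}_0$, so that all the error terms generated by $\chi$ and $\chi'$ are controlled.

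For the inductive step, factor $L = \Gamma L'$ with $\Gamma\in\set{U,V,S}$ and $L'\in\Gamma^{\alpha-\Gamma}$, and write $Lf = \Gamma(L'f)$ for each of $f\in\set{r,v,u,-v+u+r}$. Apply the induction hypothesis to $L'f$, which yields a schematic sum of terms of the form $(\Gamma^{\le\alpha-\Gamma-1}\mathfrak{B}_0)\cdot(\text{coordinate/constant/cross term})$. Differentiation is then carried out using the property $\Gamma\mathcal{O}(\set{a_i}) = \mathcal{O}(\set{a_i},\set{\Gamma a_i},r^{-1}\Gamma r)$ together with the base case applied to the inner coordinate terms. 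For \cref{Lv-est}, the crucial observation is that the ``main term'' $v$ is reproduced only if $\Gamma=S$ (giving $Sv=v$ in $\set{r\le\Rc}$), so after even one application of $U$ or $V$ this term is absorbed into the $\mathfrak{B}_0$-coefficient; this is precisely what the indicator $\mathbf{1}_{\abs{\alpha{}}=\alpha{}_S}$ records. Similarly in \cref{Lu-est}, the $u$ and $v$ ``main terms'' arising from $Su=u$ (for large $r$) and $Su=v\lambda{}/(-\nu{})$ (for small $r$) are preserved only under further $S$-commutations; hence the indicator $\mathbf{1}_{\alpha{}_S>0}$.

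The main obstacle will be careful bookkeeping across the three regions determined by $\chi{}_{r\lesssim\Rc}$. When $\Gamma$ falls on the cutoff $\chi$, one produces error terms supported in $\set{\Rc\le r\le 2\Rc}$; these have $r\sim\Rc$ so the coordinate weights $r$, $u$, $v$ are all comparable up to the quantity $v-u-r$, whose presence in the definition of $\mathfrak{B}_0$ is precisely what makes the induction close. A parallel bookkeeping issue arises in \cref{LG-est}: the combination $-v+u+r$ is designed so that the ``top order'' $\mathcal{O}(1)$ pieces in $Lv$, $Lu$, $Lr$ cancel (e.g.\ $V(-v+u+r) = 1 - 1/\lambda{}$ at large $r$, which is $\mathfrak{B}_0$), and verifying this cancellation at each inductive level requires matching the coefficient structure produced by $U$, $V$, $S$ acting on $v$, $u$, $r$ separately. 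Once this is done, an induction argument as above yields all four identities.
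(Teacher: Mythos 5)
Your proposal follows essentially the same route as the paper: explicit computation of $Uv,Vv,Sv,Uu,Vu,Su$ and of $L(-v+u+r)$ for $\abs{\alpha{}}=1$ in the regions determined by $\chi{}_{r\lesssim\Rc}$, followed by an induction in which the lower-order identities are differentiated and the $\mathcal{O}$-notation chain rule together with \cref{Gamma-r-initial-calculation} closes the argument; your tracking of the indicators $\mathbf{1}_{\abs{\alpha{}}=\alpha{}_S}$ and $\mathbf{1}_{\alpha{}_S>0}$ matches the paper's. One computational slip: in $\set{r\le\Rc}$ one has $Vu=\underline{\partial{}}_vu=\lambda{}/(-\nu{})=\kappa{}/(-\gamma{})$, not $0$ (you appear to have conflated $\partial{}_v$ with $\underline{\partial{}}_v$; your own value $Su=v\lambda{}/(-\nu{})$ is consistent with the correct one). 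This does not affect the schematic conclusions, since $\kappa{}/(-\gamma{})=_{\mathrm{s}}\mathfrak{b}_0$ in $\set{r\ge R_0}$, but it should be corrected.
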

\begin{proof}
First, \cref{Lr-est} follows from the calculation \cref{Gamma-r-initial-calculation}.
For the remaining estimates, we use an inductive argument.

\step{Step 1: The case \(\abs{\alpha{}} = 1\).} We first establish \cref{Lv-est,Lu-est,LG-est} for
the case \(\abs{\alpha{}} = 1\).

\step{Step 1a: Proof of \cref{Lv-est} when \(\abs{\alpha{}} = 1\).} We need to control \(Uv\), \(Vv\),
and \(Sv - v\). A computation reveals
\begin{equation}\label{Lv-est-step-1}
Uv = 0 \qquad Vv = \chi{} + (1-\chi{})\lambda{}^{-1}\qquad Sv - v =  (1-\chi{})((-v + u + r) - r(1-\lambda{}^{-1}) - \frac{u}{\kappa{}}\bigl(\kappa{} - (-\gamma{})\bigr)).
\end{equation}
It follows that
\begin{equation}\label{Lv-est-step-1-conc}
Dv =_{\mathrm{s}} \mathfrak{b}_0 \qquad Sv - v =_{\mathrm{s}} \mathfrak{B}_0[1 + \mathbf{1}_{r\ge R_0}(-v + u + r)].
\end{equation}

\step{Step 1b: Proof of \cref{Lu-est} when \(\abs{\alpha{}} = 1\).} We need to control \(Uu\), \(Vu\),
and \(Su\). We compute
\begin{equation}\label{Lu-est-step-1}
Uu = \frac{1}{(-\nu{})}\qquad Vu = \chi{} \frac{\kappa{}}{(-\gamma{})}\qquad Su = \chi{}v \frac{\kappa{}}{(-\gamma{})} + (1-\chi{})u
\end{equation}
It follows that
\begin{equation}\label{Lu-est-step-1-conc}
\mathbf{1}_{r\ge R_0}Du =_{\mathrm{s}} \mathfrak{b}_0\qquad \mathbf{1}_{r\ge R_0}Su =_{\mathrm{s}} \mathfrak{b}_0\set{\mathbf{1}_{r\ge R_0}u,\mathbf{1}_{r\le 2\Rc}v}.
\end{equation}

\step{Step 1c: Proof of \cref{LG-est} when \(\abs{\alpha{}} = 1\).} By
\cref{Lv-est-step-1-conc,Lu-est-step-1-conc} and \(Ur,Vr =_{\mathrm{s}} \mathcal{O}(1)\) we
have
\begin{equation}\label{LG-est-step-1a}
\mathbf{1}_{r\ge R_0}D(-v + u - r) =_{\mathrm{s}} \mathfrak{b}_0.
\end{equation}
From \cref{Lv-est-step-1-conc} and the discussion that follows in Step 1a, we
can express
\begin{equation}
Sv = v + (1-\chi{})(-v + u + r) + \mathcal{E} = \chi{}v + (1-\chi{})r + (1-\chi{})u + \mathcal{E}
\end{equation}
for \(\mathcal{E} =_{\mathrm{s}} \mathfrak{B}_0\). Together with \cref{Lu-est-step-1-conc}
and \(Sr = (1-\chi{})r\), we compute
\begin{equation}
\begin{split}
S(-v + u + r) &= \frac{\chi{}v}{(-\gamma{})} (\kappa{} - (-\gamma{}))  - \mathcal{E}.
\end{split}
\end{equation}
It follows as in Step 1a that \(v(\kappa{} - (-\gamma{}))=_{\mathrm{s}} \mathfrak{B}_0\). Since
\(\mathbf{1}_{r\ge R_0}(-\gamma{})^{-1}=_{\mathrm{s}} \mathfrak{b}_0\), we get
\begin{equation}\label{LG-est-step-1b}
\mathbf{1}_{r\ge R_0}S(-v + r + r) =_{\mathrm{s}} \mathfrak{B}_0.
\end{equation}
We conclude from \cref{LG-est-step-1a,LG-est-step-1b} that
\cref{LG-est} holds when \(\abs{\alpha{}} = 1\).

\step{Step 2: Inductive step.} Let \(\abs{\alpha_1},\abs{\alpha_2}\ge 1\), and let \(L_i\in
\Gamma^{\alpha_i}\). Write \(\alpha{} = \alpha_1 + \alpha_2\). Suppose that
\cref{Lv-est,Lu-est,LG-est} hold for multi-indices \(\beta{}\) with \(1\le
\abs{\beta{}} < \abs{\alpha{}}\). We will show that they hold for \(L_1L_2\in
\Gamma^{\alpha{}}\). Recall that by definition \(\Gamma^{\le
\alpha{}}\mathfrak{B}_0 =_{\mathrm{s}} \mathfrak{B}_\alpha{}\). Differentiating \cref{Lv-est}
for the multi-index \(\alpha_2\) by \(L_1\in \Gamma^{\alpha_1}\) and using
\cref{Lv-est} for multi-indices \(\le \abs{\alpha_1}\) gives
\begin{align}
L_1L_2v - \mathbf{1}_{(\alpha{}_2)_S = \abs{\alpha{}_2}}L_1v &=_{\mathrm{s}} (\Gamma{}^{\le \alpha{}_1+\alpha{}_2-1}\mathfrak{B}_0)[1 + \mathbf{1}_{r\ge R_0}\mathbf{1}_{\alpha{}_S > 0}(-v + u + r)].
\end{align}
Use \cref{Lv-est} for \(L_1\in \Gamma^{\alpha_1}\) to handle the second term on the left
and use \(\mathbf{1}_{(\alpha{}_2)_S =
\abs{\alpha{}_2}}\mathbf{1}_{(\alpha{}_1)_S=\abs{\alpha{}_1}}v =
\mathbf{1}_{\alpha{}_S = \abs{\alpha{}}}\) to conclude \cref{Lv-est} for the multi-index \(\alpha{}\). A consequence of \cref{Lv-est} for the multi-index \(\alpha{}\) and the fact that
\(r^{-1} =_{\mathrm{s}} \mathcal{O}(1)\) is that
\begin{equation}\label{Lv-est-consequence}
\begin{split}
Lv &=_{\mathrm{s}} (\Gamma{}^{\le \alpha{}-1}\mathfrak{B}_0)\set{r,\mathbf{1}_{r\ge R_0}u,v}.
\end{split}
\end{equation}
Differentiate \cref{Lv-est} for the multi-index \(\alpha_2\) by \(L_1\in
\Gamma^{\alpha_1}\) and use \cref{Lv-est-consequence,Lu-est} for
multi-indices \(\le \alpha{}_1\) to conclude \cref{Lu-est} for the multi-index \(\alpha{}\).
Finally, it is clear how to prove \cref{LG-est} inductively.
\end{proof}
\begin{proof}[Proof of \cref{preliminary-geometric-estimate}]
Define
\begin{align}
\mathfrak{\tilde{b}}_0 \coloneqq{} \mathfrak{b}_0, \quad \mathfrak{\tilde{B}}_0 \coloneqq{}\mathfrak{B}_0,\quad \mathfrak{\tilde{g}}_0 \coloneqq{}\mathfrak{g}_0,\quad \mathfrak{\tilde{G}}_0 \coloneqq{} \mathfrak{G}_0,
\end{align}
and for \(\abs{\alpha{}}\ge 1\), define the
following schematic quantities that capture the top order terms in
\(\mathfrak{b}_\alpha{}\), \(\mathfrak{B}_\alpha{}\), \(\mathfrak{g}_\alpha{}\), and \(\mathfrak{G}_\alpha{}\):
\begin{align}
\mathfrak{\tilde{b}}_\alpha{} & \coloneqq{} \mathcal{O}_\alpha{}(r^{-1}\Gamma^\alpha{}\varpi{},\Gamma^\alpha{}\log \kappa{},\mathbf{1}_{\alpha_U>0}r\Gamma^\alpha{}\log \kappa{},\Gamma^\alpha{}\log (-\gamma{})), \label{b-tilde-alpha-def}                                            \\
\mathfrak{\tilde{B}}_\alpha{} & \coloneqq{}\mathcal{O}_\alpha{}(\set{r,\mathbf{1}_{r\ge R_0}u,v}\Gamma^\alpha{}\log\kappa{},r\Gamma^{\alpha{} + U}\log \kappa{},\set{r,\mathbf{1}_{r\ge R_0}ru,v}\mathbf{1}_{r\ge R_0}\Gamma^\alpha{}\log (-\gamma{})), \label{B-tilde-alpha-def} \\
\mathfrak{\tilde{g}}_\alpha{} & \coloneqq{}\mathfrak{b}_\alpha{}\set{1,\Gamma^\alpha{}\varpi{},r\Gamma^\alpha{}\log (-\gamma{})}, \label{g-tilde-alpha-def} \\
\mathfrak{\tilde{G}}_\alpha{} & \coloneqq{} \mathfrak{B}_\alpha{}\set{1, \set{r,\mathbf{1}_{r\ge R_0}ru,\mathbf{1}_{r\le 2\Rc}v}\mathbf{1}_{\alpha{}_U>0}\Gamma{}^\alpha{}\log \kappa{}}. \label{G-tilde-alpha-def}
\end{align}

\step{Step 1: Relating the original schematic geometric quantities to the ones with
tildes.} Observe that for \(\abs{\alpha{}}\ge 1\)
\begin{equation}\label{b-alpha-diff-prep-1}
\begin{split}
\mathfrak{b}_\alpha{} &= \mathcal{O}_\alpha{}(\Gamma{}^\beta{}\mathfrak{\tilde{b}}_{\beta{}'}|_{\beta{}+\beta{}'\le \alpha{}},\Gamma{}^\beta{}\mathfrak{\tilde{B}}_{\beta{}'}|_{\beta{}+\beta{}'<\alpha{}},r^{-1}\Gamma{}^\beta{}\mathfrak{\tilde{g}}_{\beta{}'}|_{\beta{}+\beta{}'<\alpha{}},r^{-1}\Gamma{}^\beta{}\mathfrak{G}_{\beta{}'}|_{\beta{}+\beta{}'<\alpha{}}), \\
\mathfrak{B}_\alpha{} &= \mathcal{O}_\alpha{}(\Gamma{}^\beta{}\mathfrak{\tilde{b}}_{\beta{}'}|_{\beta{}+\beta{}'\le \alpha{}},\Gamma{}^\beta{}\mathfrak{\tilde{B}}_{\beta{}'}|_{\beta{}+\beta{}'\le \alpha{}},r^{-1}\Gamma{}^\beta{}\mathfrak{\tilde{g}}_{\beta{}'}|_{\beta{}+\beta{}'<\alpha{}},r^{-1}\Gamma{}^\beta{}\mathfrak{G}_{\beta{}'}|_{\beta{}+\beta{}'<\alpha{}}), \\
\mathfrak{g}_\alpha{} &= \mathfrak{b}_\alpha{}\set{1,\Gamma{}^\beta{}\tilde{\mathfrak{g}}_{\beta{}'}|_{\beta{}+\beta{}'\le \alpha{}},\Gamma{}^\beta{}\mathfrak{\tilde{G}}_{\beta{}+\beta{}'<\alpha{}}},\\
\mathfrak{G}_\alpha{} &= \mathfrak{B}_\alpha{}\set{1,\Gamma{}^\beta{}\tilde{\mathfrak{g}}_{\beta{}'}|_{\beta{}+\beta{}'\le \alpha{}},\Gamma{}^\beta{}\mathfrak{\tilde{G}}_{\beta{}+\beta{}'\le \alpha{}}}.
\end{split}
\end{equation}
This can be seen from the definitions of the original and tilded quantities and the
consequence of the chain rule that \(\Gamma{}^\alpha{}f =_{\mathrm{s}}
\mathcal{O}_\alpha{}(\Gamma{}^{\le \alpha{}}\log f,f,f^{-1})\) for any positive
function \(f\).

\step{Step 2.} We claim that
\begin{align}
\Gamma{}^\beta{}\mathfrak{\tilde{b}}_{\beta{}'}|_{\beta{}+\beta{}'\le \alpha{}} &=_{\mathrm{s}} \mathcal{O}(\mathfrak{\tilde{b}}_{\le \alpha{}}), \label{b-tilde-diff} \\
\Gamma{}^\beta{}\mathfrak{\tilde{B}}_{\beta{}'}|_{\beta{}+\beta{}'\le \alpha{}} &=_{\mathrm{s}} \mathcal{O}(\mathfrak{\tilde{b}}_{\le \alpha{}},\mathfrak{\tilde{B}}_{\le \alpha{}}),  \label{B-tilde-diff} \\
\Gamma{}^\beta{}\mathfrak{\tilde{g}}_{\beta{}'}|_{\beta{}+\beta{}'\le \alpha{}} &=_{\mathrm{s}} \mathcal{O}(\mathfrak{\tilde{b}}_{\le \alpha{}})\set{1,\mathfrak{\tilde{g}}_{\le \alpha{}}},  \label{g-tilde-diff} \\
\Gamma{}^\beta{}\mathfrak{\tilde{G}}_{\beta{}'}|_{\beta{}+\beta{}'\le \alpha{}} &=_{\mathrm{s}} \mathcal{O}(\mathfrak{\tilde{B}}_{\le \alpha{}})\set{1,\mathfrak{\tilde{g}}_{\le \alpha{}},\mathfrak{\tilde{G}}_{\le \alpha{}}}.  \label{G-tilde-diff}
\end{align}

\step{Step 2a: Proof of \cref{b-tilde-diff,g-tilde-diff}.} This follows from \cref{Lr-est} and
differentiating \(\mathfrak{\tilde{b}}_0 =\mathfrak{b}_0\) (see \cref{b0-def}) in the case
\(\beta{}' = 0\) and \(\mathfrak{\tilde{b}}_{\beta{}'}\)
(see \cref{b-tilde-alpha-def}) in the case \(\abs{\beta{}'} > 0\). The proof of
\cref{g-tilde-diff} is similar.

\step{Step 2b: Proof of \cref{B-tilde-diff,G-tilde-diff}.} Differentiate \(\mathfrak{B}_0\) from
\cref{B0-def} use \cref{Lv-est,Lu-est,LG-est} to control the differentiated
weights \(r\), \(u\), and \(v\), and use \cref{b-tilde-diff} to arrive at
\begin{equation}\label{B-tilde-diff-prep-1}
\begin{split}
\Gamma{}^\alpha{}\mathfrak{\tilde{B}}_0 =_{\mathrm{s}} \Gamma{}^\alpha{}\mathfrak{B}_0 =_{\mathrm{s}} \mathcal{O}(\Gamma{}^\alpha{}\mathfrak{b}_0,\Gamma{}^{\le \alpha{}-1}\mathfrak{B}_0,\mathfrak{\tilde{B}}_{\le \alpha{}})=_{\mathrm{s}}\mathcal{O}(\Gamma{}^{\le \alpha{}-1}\mathfrak{B}_0,\mathfrak{\tilde{b}}_{\le \alpha{}},\mathfrak{\tilde{B}}_{\le \alpha{}})
\end{split}
\end{equation}
Differentiate \cref{B-tilde-alpha-def} and use \cref{Lv-est,Lu-est,LG-est} to get
\begin{equation}\label{B-tilde-diff-prep-2}
\Gamma{}^\beta{}\mathcal{\widetilde{G}}_{\beta{}'}|_{\beta{}+\beta{}'\le \alpha{},\abs{\beta{}'}>0} =_{\mathrm{s}} \mathcal{O}_\alpha{}(\Gamma{}^{\le \alpha{}-1}\mathfrak{B}_0,\mathfrak{\tilde{B}}_{\le \alpha{}}).
\end{equation}
Combine \cref{B-tilde-diff-prep-1,B-tilde-diff-prep-2} with \cref{b-tilde-diff} for
\(\beta{}' = 0\) to conclude
\begin{equation}\label{B-tilde-diff-prep-3}
\Gamma{}^\beta{}\mathfrak{\tilde{B}}_{\beta{}'}|_{\beta{}+\beta{}'\le \alpha{}} =_{\mathrm{s}} \mathcal{O}(\Gamma{}^{\le \alpha{}-1}\mathfrak{B}_0,\mathfrak{\tilde{b}}_{\le \alpha{}},\mathfrak{\tilde{B}}_{\le \alpha{}}).
\end{equation}
To prove \cref{B-tilde-diff}, induct on \(\alpha{}\) with trivial base case \(\alpha{} = 0\) and
inductive step given by \cref{B-tilde-diff-prep-3}. The proof of \cref{G-tilde-diff}
is similar.

\step{Step 3: Completing the proof.} The results of Steps 1 and 2 imply that for \(\abs{\alpha{}}\ge
1\), we have
\begin{equation}\label{alpha-diff-step-3}
\begin{split}
\mathfrak{b}_\alpha{} &=_{\mathrm{s}} \mathcal{O}_\alpha{}(\mathfrak{\tilde{b}}_{\le \alpha{}},\mathfrak{\tilde{B}}_{<\alpha{}},r^{-1}\mathfrak{G}_{<\alpha{}}), \\
\mathfrak{B}_\alpha{} &=_{\mathrm{s}} \mathcal{O}_\alpha{}(\mathfrak{\tilde{b}}_{\le \alpha{}},\mathfrak{\tilde{B}}_{\le \alpha{}},r^{-1}\mathfrak{G}_{<\alpha{}}), \\
\mathfrak{g}_\alpha{} &=_{\mathrm{s}} \mathcal{O}_\alpha{}(\mathfrak{\tilde{b}}_{\le \alpha{}})\set{1,\mathfrak{\tilde{g}}_{\le \alpha{}},\mathfrak{\tilde{G}}_{<\alpha{}}}, \\
\mathfrak{G}_\alpha{} &=_{\mathrm{s}} \mathcal{O}_\alpha{}(\mathfrak{\tilde{B}}_{\le \alpha{}})\set{1,\mathfrak{\tilde{g}}_{\le \alpha{}},\mathfrak{\tilde{G}}_{\le \alpha{}}}.
\end{split}
\end{equation}
It is immediate from \cref{b-tilde-alpha-def,g-tilde-alpha-def} that for \(\abs{\alpha{}}\ge 1\) we have
\begin{equation}\label{b-alpha-diff-step-3-1}
\begin{split}
C(\mathfrak{\tilde{b}}_\alpha{})&\le C(r^{-1}\Gamma{}^\alpha{}\varpi{},\Gamma{}^\alpha{}\log \kappa{},\mathbf{1}_{\alpha{}_U>0}r\Gamma{}^\alpha{}\log \kappa{},\Gamma{}^\alpha{}\log (-\gamma{})),\\
C(\mathfrak{\tilde{g}}_\alpha{})&\le C(\mathfrak{b}_\alpha{})(1 + \abs{\Gamma{}^\alpha{}\varpi{}} + \mathbf{1}_{r\ge R_0}\abs{r\Gamma{}^\alpha{}\log (-\gamma{})}),\\
\end{split}
\end{equation}
Since \(\eta_0 < 1/2\) and \(G_{\eta_0}\le C(\mathfrak{B}_0)\), applying
\cref{v-u-r-compare} to \cref{B-tilde-alpha-def,G-tilde-alpha-def} implies that for \(\abs{\alpha{}}\ge 1\) we have
\begin{equation}\label{B-alpha-diff-step-3-2}
\begin{split}
C(\mathfrak{\tilde{B}}_\alpha{}) &\le  C(\mathfrak{B}_0,v\Gamma{}^\alpha{}\log \kappa{},r\Gamma{}^{\alpha{}+U}\log \kappa{},r\tau{}\Gamma{}^\alpha{}\log (-\gamma{})), \\
C(\mathfrak{\tilde{G}}_\alpha{}) &\le  C(\mathfrak{B}_\alpha{})(1 + \abs{r\tau{}\mathbf{1}_{\alpha{}_U>0}\Gamma{}^\alpha{}\log \kappa{}}).
\end{split}
\end{equation}
To complete the proof of \cref{preliminary-geometric-estimate}, combine
\cref{alpha-diff-step-3} with \cref{b-alpha-diff-step-3-1,B-alpha-diff-step-3-2}.
\end{proof}
\subsection{Estimates for \texorpdfstring{\(\Gamma{}^\alpha{}\varpi\)}{derivatives of the renormalized Hawking mass}}
\label{sec:org69995bf}
\begin{proposition}[Estimate for \(\Gamma{}^\alpha{}\varpi{}\)]
For \(\abs{\alpha{}}\ge 1\), we have
\begin{equation}
\max (r^{-s-\eta{}_0},\tau{}^{-s-\eta{}_0})\abs{\Gamma{}^\alpha{}\varpi{}}\le C(\mathfrak{B}_{<\alpha{}})\mathcal{P}_{<\alpha{},2-s}^2.
\end{equation}
\label{Gamma-omega-bounds}
\end{proposition}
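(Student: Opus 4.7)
\noindent\emph{Proof plan.} I will induct on $|\alpha|$. The starting point is the pair of transport equations from \cref{sph-sym-equations-1}, which, after rewriting $\partial_u = (-\nu)U$ and using $(-\nu)/(-\gamma) = 1-\mu$, give
\begin{equation*}
U\varpi = -\tfrac{r^2(1-\mu)}{2}(U\varphi)^2, \qquad \partial_v\varpi = \tfrac{r^2}{2\kappa}(\partial_v\varphi)^2.
\end{equation*}
When $\alpha_U\ge 1$ or $\alpha_V\ge 1$, I will pull the corresponding vector field to the right using \cref{U-rearrangement-formula} (or a $V$-analogue obtainable from \cref{D-comm}), so that $\Gamma^\alpha\varpi = L'X\varpi + (\text{commutator of order} < \alpha)$ for some $L'\in\Gamma^{\alpha-X}$ and $X\in\{U,V\}$. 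Substituting the transport equation and expanding $L'$ by Leibniz produces a sum of schematic terms
\begin{equation*}
\Gamma^{\beta_0}\!\bigl(\tfrac{r^2(1-\mu)}{2}\bigr) \cdot \Gamma^{\beta_1}(X\varphi) \cdot \Gamma^{\beta_2}(X\varphi), \qquad \beta_0+\beta_1+\beta_2 = \alpha-X,
\end{equation*}
whose geometric factor is absorbed by $\mathfrak{B}_{<\alpha}$ via \cref{gamma-uv} and the definitions in \cref{sec:schematic-geom-quantities}, and whose scalar factors $\Gamma^{\beta_i}(X\varphi)$ become, up to commutators from \cref{D-comm}, $X\Gamma^{\beta_i}\varphi$ with $\beta_i<\alpha$, bounded pointwise by $\|rU\Gamma^{\beta_i}\varphi\|_{L^\infty}$ or $\|r^2\partial_v\Gamma^{\beta_i}\varphi\|_{L^\infty}$ inside $\mathcal{P}_{<\alpha,2-s}$ (using \cref{V-dv-coordinate-change} to trade between $V$, $\partial_v$, and $U$).

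A crude $r$-count then gives $|\Gamma^\alpha\varpi|\le C(\mathfrak{B}_{<\alpha})\mathcal{P}_{<\alpha,2-s}^2$, which is the $r^{-s-\eta_0}$ half of the claim since $r\ge r_{\min}$. For the $\tau^{-s-\eta_0}$ half I will split into $\{r\ge\tau/2\}$, where $\tau\lesssim r$ makes the $r^{-s-\eta_0}$ bound already imply the $\tau^{-s-\eta_0}$ bound, and $\{r\le\tau/2\}$, where $\tau\sim v$ by \cref{v-u-r-compare}. In the latter region I would pair one factor $\Gamma^{\beta_i}(X\varphi)$ against the interpolated pointwise estimate noted in the remark after the definition of $\mathcal{P}_p$, namely $r^{2\rho}\tau^{3-s-2\rho-\eta_0}|\Gamma^{\beta_i}\varphi|^2\le \mathcal{P}_{<\alpha,2-s}^2$ for $\rho\in[1/2,1]$, choosing $\rho$ to balance the $\tau$-weight, while the other factor is estimated with the unweighted pointwise bound.

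\emph{Pure-$S$ case and main obstacle.} When $\alpha = kS$ with $k\ge 1$, no $U$ or $V$ can be pulled out of $S^k$ by commutation, and the previous reduction does not apply directly. Following the argument sketched in Step~5a of the introduction, the decomposition \cref{S-in-terms-of-U-V} combined with \cref{v-u-r-compare} yields the pointwise inequality $|S\psi|\lesssim v|V\psi|+\tau|U\psi|$ on $\mathcal{R}_{\mathrm{char}}$; applying it once to $\psi=\Gamma^{\alpha-S}\varpi$ reduces the estimate to those for $V\Gamma^{\alpha-S}\varpi$ and $U\Gamma^{\alpha-S}\varpi$, both of which fall under the previous case since the corresponding multi-indices contain a $V$ or a $U$. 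The main obstacle will be the weight bookkeeping here: the $v,\tau$-weights from the $S$-decomposition combine with the $r^2$-weight from the transport equation to produce a prefactor of order $r^2(v+\tau)$, which must be absorbed against the interpolated $\tau$-decay from $\mathcal{P}_{<\alpha,2-s}^2$ on one scalar factor. The exponent matching is tight, and in $\{r\le\tau/2\}$ I would use $\tau\sim v$ together with $r\le\tau/2$ to rewrite $r^2 v$ as $r\cdot r\tau$ and pair $r\tau^{2-s}|\Gamma^{\beta_i}\varphi|^2\le\mathcal{P}_{<\alpha,2-s}^2$ against it, leaving a factor $r/\tau^{1-s}\le r^{s+\eta_0}$ that is exactly the desired $\min(r,\tau)^{s+\eta_0}$ in that region.
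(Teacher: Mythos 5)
Your proposal is correct and follows essentially the same route as the paper: reduce $\Gamma^\alpha\varpi$ to the transport equations via commutation/Leibniz, handle $S$-derivatives by the decomposition $\abs{S\psi}\lesssim v\abs{V\psi}+\tau\abs{U\psi}$ (the paper uses \cref{S-in-terms-of-U-V} to build this into a single schematic formula rather than peeling one vector field at a time, but this is only an organizational difference), and absorb the resulting $r^2\set{v,\tau}$-weight using the $r$--$\tau$ interpolation in $\mathcal{P}_{<\alpha,2-s}$ exactly as in Step 2 of the paper's proof. The only caveats are cosmetic: the "crude $r$-count" uniform bound holds only for the terms without $u,v$ weights (i.e.\ before any $S$ is involved), which you in any case treat separately, and the rearrangement you want is \cref{rearrangement-formula} rather than \cref{U-rearrangement-formula}.
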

\begin{proof}
\step{Step 1: Formula for \(\Gamma^\alpha{}\varpi{}\).} Let \(\abs{\alpha{}}\ge 1\). We begin by showing that
\begin{equation}\label{Gamma-omega-bound-step-1}
\begin{split}
\Gamma{}^\alpha{}\varpi{} &=_{\mathrm{s}} \mathfrak{b}_{\le \alpha{}-1}\bigl[r^2D\Gamma{}^{\le \alpha{}-1}\varphi{}D\Gamma{}^{\le \alpha{}-1}\varphi{} + r^2\set{1,\mathbf{1}_{r\ge \Rc}u,\mathbf{1}_{r\le 2\Rc}v,v/r}D\Gamma{}^{\le \alpha{}-S}\varphi{}D\Gamma{}^{\le \alpha{}-S}\varphi{} \\
&\qquad + r^2\set{1,r,\mathbf{1}_{r\ge \Rc}u,v}V\Gamma{}^{\le \alpha{}-S}\varphi{}V\Gamma{}^{\le \alpha{}-S}\varphi{}\bigr].
\end{split}
\end{equation}

\step{Step 1a: \(\abs{\alpha{}} = 1\).} In this step we compute \(U\varpi{}\), \(V\varpi{}\), and \(S\varpi{}\). Use
the transport equation for \(\varpi{}\) in the \(u\)-direction (see
\cref{sph-sym-equations-1}) to get
\begin{equation}\label{Uomega-bound}
U\varpi{} = \mathfrak{b}_0r^2(U\varphi{})^2.
\end{equation}
The transport equation for \(\varpi{}\) in the \(v\)-direction (see \cref{sph-sym-equations-1}) and the identities in
\cref{V-dv-coordinate-change} relating the \((U,\partial{}_v)\) derivatives and the
\((U,V)\) derivatives give
\begin{equation}\label{Vomega-bound}
V\varpi{} = \mathfrak{b}_0[r^2(V\varphi{})^2 + \mathbf{1}_{r\le 2\Rc}r^2V\varphi{}U\varphi{} + \mathbf{1}_{r\le 2\Rc}r^2(U\varphi{})^2].
\end{equation}
Finally, we express \(S\) in terms of \(U\) and \(V\) using
\cref{S-in-terms-of-U-V}, and then use \cref{Uomega-bound,Vomega-bound} to compute
\begin{equation}\label{Somega-bound}
\begin{split}
S\varpi{} &= \mathfrak{b}_0[\set{\mathbf{1}_{r\ge \Rc}u,\mathbf{1}_{r\le 2\Rc}v}U\varpi{} + \set{r,\mathbf{1}_{r\ge \Rc}u,v}V\varpi{}] \\
&= \mathfrak{b}_0[r^2\set{1,\mathbf{1}_{r\ge \Rc}u,\mathbf{1}_{r\le 2\Rc}v}D\varphi{}D\varphi{} + r^2\set{r,\mathbf{1}_{r\ge \Rc}u,v}(V\varphi{})^2]. \\
\end{split}
\end{equation}
Combining \cref{Uomega-bound,Vomega-bound,Somega-bound} gives
\cref{Gamma-omega-bound-step-1} for \(\abs{\alpha{}}=1\).

\step{Step 1b: \(\abs{\alpha{}}\ge 1\).} The following consequences of \cref{gamma-uv} hold for all
\(\abs{\beta{}}\ge 0\):
\begin{equation}\label{omega-bound-lemma-consequences}
\begin{split}
\Gamma{}^\beta{}\set{r,\mathbf{1}_{r\ge \Rc}u,v} &=_{\mathrm{s}} \mathfrak{B}_{\le \beta{}-1}\set{r,\mathbf{1}_{r\ge \Rc}u,v}, \\
\Gamma{}^\beta{}\set{\mathbf{1}_{r\ge \Rc}u,\mathbf{1}_{r\le 2\Rc}v} &=_{\mathrm{s}} \mathfrak{B}_{\le \beta{}-1}\set{1,\mathbf{1}_{r\ge \Rc}u,\mathbf{1}_{r\le 2\Rc}v}, \\
\Gamma{}^\beta{}(v/r) &=_{\mathrm{s}} \mathfrak{B}_{\le \beta{}-1}\set{1,\mathbf{1}_{r\ge \Rc}u/r,v/r}.
\end{split}
\end{equation}
Indeed, the first two lines follow directly from \cref{gamma-uv}, and the third
line follows from the first. Differentiate \cref{Gamma-omega-bound-step-1} for a
multi-index \(\alpha_2\) of size \(1\) by \(\Gamma^{\alpha_1}\), and then use
\cref{omega-bound-lemma-consequences} and the commutation formulas
\cref{D-comm-weak,V-comm-weak} to conclude \cref{Gamma-omega-bound-step-1} for the
multi-index \(\alpha_1 + \alpha_2\). By induction, we have established
\cref{Gamma-omega-bound-step-1} for all \(\alpha{}\ge 1\). The reason for the
weight \(v/r\) in front of general derivatives of \(\varpi{}\) (which does not
appear when \(\alpha{} = 1\)) is because the weight \(v\) appears in front of
derivatives starting with \(V\), and commuting \(\Gamma{}^\alpha{}\) past \(V\)
(in the third term) can produce derivatives starting with \(U\), but only with
an \(r\)-weight that decays faster than \(r^{-1}\).

\step{Step 2: Completing the proof.} It follows from \cref{v-u-r-compare} that
\begin{equation}\label{omega-coordinate-bounds}
\begin{split}
\set{1,\mathbf{1}_{r\ge \Rc}u,\mathbf{1}_{r\le 2\Rc}v,v/r}&\lesssim \mathbf{1}_{r\ge \Rc}u + \mathbf{1}_{r\le v/2}v\le C(\mathfrak{B}_0)\tau{}, \\
\set{1,r,\mathbf{1}_{r\ge \Rc}u,v}&\le C(\mathfrak{B}_0)v. \\
\end{split}
\end{equation}
Applying \cref{omega-coordinate-bounds} to \cref{Gamma-omega-bound-step-1} (and noting
\(\alpha{} > 0\)) gives
\begin{equation}\label{omega-bound-penultimate}
\begin{split}
\abs{\Gamma{}^\alpha{}\varpi{}} &\le C(\mathfrak{B}_{<\alpha{}})[r^2\abs{D\Gamma{}^{\le \alpha{}-1}\varphi{}}^2 + r^2\tau{}\abs{D\Gamma{}^{\le \alpha{}-S}\varphi{}}^2 + r^2v\abs{V\Gamma{}^{\le \alpha{}-S}\varphi{}}^2].
\end{split}
\end{equation}
To complete the proof, note that the first term of \cref{omega-bound-penultimate}
appears in \(\mathcal{P}_{<\alpha{},0}^2\). Control the \(v\)-weight on the
final term by \(\tau{} + r\) (see \cref{v-u-r-compare}). This produces a term in
the form of the second term and a new term \(r^3\abs{V\Gamma^{\le
\alpha{}-S}\varphi{}}^2\) that is controlled by
\(C(\Rc,\mathfrak{b}_0)\mathcal{P}_{<\alpha{},0}^2\le
C(\mathfrak{B}_0)\mathcal{P}_{<\alpha{},0}^2\). The remaining term
\(r^2\tau{}\abs{D\Gamma^{\le \alpha{}-S}\varphi{}}^2\) is of the form
\(r^2\tau{}\abs{\Gamma{}^{<\alpha{}}\varphi{}}^2\), which is controlled by
\(r^{s+\eta{}_0}\mathcal{P}_{<\alpha{},2-s}^2\), or by \(\tau^{s +
\eta_0}\mathcal{P}_{<\alpha{},2-s}^2\).
\end{proof}
\subsection{Estimates for \texorpdfstring{\(\Gamma{}^\alpha{}\kappa\)}{derivatives of κ}}
\label{sec:orgc8768ff}
The main result of this section is \cref{Gamma-kappa-bounds}, whose proof we now
outline. The gauge condition on \(v\) normalizes \(\log \kappa{} = 0\) on the
curve \(\set{r=r_{\mathcal{H}}}\). In particular, derivatives of \(\kappa{}\)
involving only \(V\) or \(S\) vanish on this curve, since in the region
\(\set{r\le r_{\mathcal{H}}\le \Rc}\), the vector fields \(V\) and \(S\) are
tangent to curves of constant \(r\). We estimate \(\Gamma^\alpha{}\log
\kappa{}\) by computing \(U\Gamma^\alpha{}\log \kappa{}\) with the
\(U\)-transport equation for \(\log \kappa{}\) in \cref{U-kappa-alpha-bound} and
integrating towards \(\set{r=r_{\mathcal{H}}}\) in \cref{Gamma-kappa-bounds}. In
the case where \(\alpha_U > 0\), we can rearrange \(\Gamma^\alpha{}\log \kappa{}
= U\Gamma^{\alpha{}-U}\log \kappa{}\) and use the \(U\)-transport equation
directly, without integrating it.
\begin{lemma}
We have
\begin{equation}\label{S-bounded-by-V}
\abs{S\psi{}}\le C(\mathfrak{B}_0)(\tau{}\abs{U\psi{}} + v\abs{V\psi{}})
\end{equation}
and
\begin{equation}\label{V-bounded-by-S}
\abs{V\psi{}}\le C(\mathfrak{B}_0)(v^{-1}\abs{S\psi{}} + \abs{U\psi{}}).
\end{equation}
\end{lemma}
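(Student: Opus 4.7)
The plan is to derive both inequalities from the coordinate-change identities in \cref{S-in-terms-of-U-V,V-dv-coordinate-change} combined with the coordinate comparison estimates in \cref{v-u-r-compare}. For \cref{S-bounded-by-V}, I would apply \cref{S-in-terms-of-U-V} to $\psi$, giving
\[
S\psi =_{\mathrm{s}} \mathfrak{b}_0[\{\mathbf{1}_{r\ge \Rc}u,\mathbf{1}_{r\le 2\Rc}v\}U\psi + \{r,\mathbf{1}_{r\ge \Rc}u,v\}V\psi],
\]
and then check two coefficient bounds. The $U\psi$-coefficient is $\le C(\mathfrak{B}_0)\tau$: since $R_0\le \Rc$ and $\tau = u$ in $\{r\ge R_0\}$, we have $\mathbf{1}_{r\ge \Rc}u\le \tau$; and in $\{r\le 2\Rc\}$, \cref{v-u-r-compare-2} (with $\epsilon = \eta_0$, so that $\{r\le 2\Rc\}\subset\{r\le \eta_0^{-1}\Rc\}$) together with \cref{tau-estimates} gives $v\lesssim u\lesssim \tau$. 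The $V\psi$-coefficient is $\le C(\mathfrak{B}_0)v$ by \cref{v-u-r-compare-1}.

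For \cref{V-bounded-by-S}, my plan is to split on the cutoff $\chi = \chi_{r\lesssim \Rc}$. In the region $\{r\le \Rc\}$, where $\chi \equiv 1$, the definitions give $S = v\underline{\partial}_v = vV$ and thus $\abs{V\psi} = v^{-1}\abs{S\psi}$ immediately. In $\{r\ge \Rc\}\subset\{r\ge R_0\}$, I would rewrite both $V$ and $S$ in the common frame $(\overline{\partial}_r,U)$ using the chain-rule identities $\underline{\partial}_v = \lambda \overline{\partial}_r + \lambda U$ and $\overline{\partial}_u = (-\nu)U + (-\nu)\overline{\partial}_r$, yielding
\[
V = a\overline{\partial}_r + bU,\qquad S = c\overline{\partial}_r + dU,
\]
with $a = \chi \lambda + (1-\chi)$, $b = \chi\lambda$, $c = \chi v\lambda + (1-\chi)(u(-\nu) + r)$, and $d = \chi v\lambda + (1-\chi)u(-\nu)$; solving linearly gives $V\psi = (a/c)S\psi - ((ad-bc)/c)U\psi$.

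It then remains to show $a/c\le C(\mathfrak{B}_0)/v$ and $\abs{ad-bc}/c\le C(\mathfrak{B}_0)$. The crucial estimate is the lower bound $c\gtrsim v$: in $\{r\ge R_0\}$, \cref{lambda-lower-bound,nu-lower-bound} give $\lambda,(-\nu)\ge c > 0$, and \cref{v-u-r-compare-3} gives $u+r\gtrsim v$, so $c = \chi v\lambda + (1-\chi)(u(-\nu)+r)\gtrsim v$. The upper bounds on $a$, $b$, $d$, and $\abs{ad-bc}$ follow from \cref{lambda-bound,nu-bound} together with the elementary identity $ad-bc = (1-\chi)[\chi\lambda(v-r) + (1-\chi)u(-\nu)]$ and the bounds $u,\abs{v-r}\le C(\mathfrak{B}_0)v$. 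The only mild obstacle is bookkeeping in the transition region $\Rc\le r\le 2\Rc$ where $\chi\in(0,1)$, but $r$ is bounded there, so $\lambda$ and $(-\nu)$ are controlled uniformly by zeroth-order estimates and the algebra goes through without incident.
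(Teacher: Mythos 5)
Your argument is correct and complete; note that the paper itself gives no proof here (it says ``We omit the computation''), so there is no official route to compare against. For \cref{S-bounded-by-V}, reading the coefficient bounds off \cref{S-in-terms-of-U-V} via \cref{v-u-r-compare} is exactly the right thing to do; the only blemish is the intermediate chain ``$v\lesssim u\lesssim \tau$'' in $\set{r\le 2\Rc}$, which is misleading since $\tau = u_{R_0}(v)\le u$ in the near region --- but the bound you actually need, $v\le G_{\eta_0}\tau$, is given directly by \cref{tau-estimates}, which you cite, so nothing is lost. For \cref{V-bounded-by-S}, the explicit inversion in the $(\overline{\partial}_r,U)$ frame is sound: the coefficients $a,b,c,d$ are computed correctly, the determinant identity $ad-bc = (1-\chi)[\chi\lambda(v-r)+(1-\chi)u(-\nu)]$ checks out, and the key lower bound $c\gtrsim v$ follows as you say from \cref{lambda-lower-bound,nu-lower-bound,v-u-r-compare-3} since $\chi + (1-\chi) = 1$. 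The dependence of all constants on $G_{\eta_0}$, $\mathcal{E}_0$, and the zeroth-order bounds is consistent with the schematic quantity $\mathfrak{B}_0$ (and indeed the $\set{r,\mathbf{1}_{r\ge R_0}u,v}$-weights and $G_{\eta_0}$ appear in its definition), so the claimed constant $C(\mathfrak{B}_0)$ is justified.
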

\begin{proof}
We omit the computation.
\end{proof}
\begin{lemma}[Estimates for \(U\Gamma{}^\alpha{}\log \kappa{}\)]
Let \(L\in \Gamma^\alpha{}\) for \(\abs{\alpha{}}\ge 0\). For sufficiently small \(s > 0\), we have
\begin{equation}\label{kappa-alpha-bound-1}
\abs{UL\log \kappa{}}\le C(\mathcal{C}_{<\alpha{}},r^{-s}\mathfrak{G}_{<\alpha{}})[r\abs{U\Gamma{}^{\le \alpha{}}\varphi{}}^2 + r^{-2}\tau{}^{-1}\mathcal{P}_{\alpha{},1}^2 + r^{-2+s}\abs{\Gamma{}^{\le \alpha{}}\log \kappa{}}],
\end{equation}
and if \(\alpha_V + \alpha_S > 0\), then
\begin{equation}\label{kappa-alpha-bound-2}
\abs{UL\log \kappa{}}\le C(\mathcal{C}_{<\alpha{}},r^{-s}\mathfrak{G}_{<\alpha{}})[rv\abs{U\Gamma{}^{<\alpha{}}\varphi{}}^2 + r^{-3}\mathcal{P}_{<\alpha{},1}^2 + r^{-2+s}\abs{\Gamma{}^{\le \alpha{}}\log \kappa{}}].
\end{equation}
\label{U-kappa-alpha-bound}
\end{lemma}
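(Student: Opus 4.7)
The plan is to use the transport equation $U\log\kappa = -r(U\varphi)^2$, which is immediate from the Raychaudhuri equation for $\partial_u\kappa$ in \cref{sph-sym-equations-2} together with the definition $U = \frac{1}{(-\nu)}\partial_u$. For $L \in \Gamma^\alpha$, I would write
\begin{equation*}
UL\log\kappa = LU\log\kappa + [U,L]\log\kappa = -L\bigl(r(U\varphi)^2\bigr) + [U,L]\log\kappa,
\end{equation*}
and then estimate each piece separately. The first, $L(r(U\varphi)^2)$, controls the genuine nonlinear contribution, while the second records the geometric ``debt'' from swapping the order of $U$ and $L$.

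For the Leibniz expansion $L(r(U\varphi)^2) =_{\mathrm{s}} \sum_{L_1L_2L_3=L}(L_1 r)(L_2 U\varphi)(L_3 U\varphi)$, I would use \cref{Lr-est} to bound $L_1 r = r\cdot\mathcal{O}(1)$ and then split according to the orders of $L_2$ and $L_3$. Whenever one of $L_2, L_3$ carries the full weight $\alpha$ (so that the other is trivial), Cauchy--Schwarz/AM--GM absorbs that product into the first schematic term $r\lvert U\Gamma^{\le\alpha}\varphi\rvert^2$. For the remaining ``middle-order'' products, both factors have order strictly below $\alpha$; here I would apply the pointwise norm $\mathcal{P}_{\alpha,1}$, using the $\tau$-weighted bound $\lVert r^{1/2}\tau^{1/2}\Gamma^{\le\alpha}\varphi\rVert_{L^\infty}\le\mathcal{P}_{\alpha,1}$ on one factor and the $U$-derivative bound $\lVert rU\Gamma^{\le\alpha}\varphi\rVert_{L^\infty}\le\mathcal{P}_{\alpha,1}$ on the other, together with interpolation from the remark in \cref{sec:norms:pointwise}, to extract precisely the $r^{-2}\tau^{-1}$ decay.

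For the commutator $[U,L]\log\kappa$, I would invoke \cref{U-comm-weak}, which gives
\begin{equation*}
[U,L] =_{\mathrm{s}} \mathcal{C}_{<\alpha}\bigl[U\Gamma^{\le\alpha-S} + r^{-2}\mathfrak{G}_{<\alpha}D\Gamma^{\le\alpha-V} + r^{-1}\mathfrak{G}_{<\alpha}D\Gamma^{\le\alpha-S}\bigr].
\end{equation*}
Applying this to $\log\kappa$ produces only strictly lower-order derivatives $\Gamma^{<\alpha}\log\kappa$ (after using $U\log\kappa = -r(U\varphi)^2$ again to reduce the leading $U\Gamma^{\le\alpha-S}$ piece), with coefficients absorbed into $C(\mathcal{C}_{<\alpha}, r^{-s}\mathfrak{G}_{<\alpha})$ after paying a small $r^s$ factor to dominate the $\mathfrak{G}_{<\alpha}$ growth. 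This yields the third term $r^{-2+s}\lvert\Gamma^{\le\alpha}\log\kappa\rvert$ in \cref{kappa-alpha-bound-1}.

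For the refined bound \cref{kappa-alpha-bound-2}, when $\alpha_V + \alpha_S > 0$ the vector field $L$ contains at least one outgoing/scaling factor. I would use the rearrangement lemma \cref{rearrangement-formula} to place that factor last, writing $L =_{\mathrm{s}} L'\Gamma'$ with $\Gamma' \in \{V,S\}$ and $L' \in \Gamma^{\alpha-\Gamma'}$, then use the elementary inequality \cref{S-bounded-by-V} (or a $V$-analogue) to trade the outermost $\Gamma'$ derivative for a factor of $v$ or $\tau$ times a lower-order derivative. This is what produces the $v$-weight in $rv\lvert U\Gamma^{<\alpha}\varphi\rvert^2$ and allows the middle-order contribution to improve from $r^{-2}\tau^{-1}$ to $r^{-3}\mathcal{P}_{<\alpha,1}^2$, since now all genuine top-order factors can be taken down to order $<\alpha$. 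The main obstacle, I expect, will be the careful bookkeeping of the many schematic products arising from the Leibniz expansion and the commutator, together with verifying that every weight in $r$, $v$, and $\tau$ balances correctly once the two classes of terms are combined with the reductive estimates from \cref{sec:commute-with-box}; in particular, ensuring that the middle-order decay is genuinely $r^{-2}\tau^{-1}$ (rather than the weaker $r^{-1}$) requires using the $\tau$-weighted pointwise bound on the undifferentiated $\Gamma^{\le\alpha}\varphi$ factor at precisely the right step, which forces one to rewrite at least one $U\varphi$ factor using the identity $U\varphi = r^{-1}U(r\varphi) + r^{-1}\varphi$ whenever the pure $U$-derivative bound is insufficient.
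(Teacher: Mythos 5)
Your skeleton is the paper's: write $UL\log\kappa = -L(r(U\varphi)^2) + [U,L]\log\kappa$ via the transport equation from \cref{sph-sym-equations-2}, expand by Leibniz, control the commutator through \cref{U-comm-weak} (converting the $V\Gamma^{\le\alpha-S}$ piece with \cref{V-bounded-by-S}), and close by induction. However, your treatment of the ``middle-order'' Leibniz products is a step that fails. A product $r\abs{L_2U\varphi}\,\abs{L_3U\varphi}$ in which both factors are genuine $U$-derivatives cannot be placed in the $r^{-2}\tau^{-1}\mathcal{P}_{\alpha,1}^2$ slot: the pointwise norm gives at best $\abs{L_iU\varphi}\lesssim r^{-1}\mathcal{P}_{\alpha,0}$ or $r^{-1/2}\tau^{-1/2}\mathcal{P}_{\alpha,1}$, so the product is at best $r^{-1}\mathcal{P}^2$ or $r^{-1/2}\tau^{-1/2}\mathcal{P}^2$ --- nowhere near $r^{-2}\tau^{-1}$ --- and neither the interpolation remark nor the identity $U\varphi=r^{-1}U(r\varphi)+r^{-1}\varphi$ changes this, since the dominant bound on a $U$-derivative remains $r^{-1}\mathcal{P}$ with no $\tau$-decay. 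The correct (and simpler) allocation is the paper's: \emph{all} products of two $U$-derivatives are absorbed by Young's inequality into the first slot $r\abs{U\Gamma^{\le\alpha}\varphi}^2$, which as a schematic expression already sums over every $\beta\le\alpha$. The $r^{-2}\tau^{-1}\mathcal{P}_{\alpha,1}^2$ term is reserved exclusively for contributions in which one factor is an \emph{undifferentiated} $\Gamma^{\le\alpha}\varphi$ carrying an $r^{-2+s}$ weight; these arise only from the commutator corrections $r^{-2+s}\abs{\Gamma^{\le\alpha}\varphi}$ (the analogue of your $[U,L]$ estimate applied to $\varphi$ rather than $\log\kappa$), and after Young their squares give $r^{-3+2s}\abs{\Gamma^{\le\alpha}\varphi}^2\le r^{-4+2s}\tau^{-1}\mathcal{P}_{\alpha,1}^2$ via $\norm{r^{1/2}\tau^{1/2}\psi}_{L^\infty}\le\mathcal{P}_1[\psi]$.

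Second, your mechanism for \cref{kappa-alpha-bound-2} is missing the key idea in the case $\alpha_V>0$. Trading a $V$ for $v^{-1}S+U$ via \cref{V-bounded-by-S} does \emph{not} lower the order of $LU\varphi$: the $U$-part lands in the admissible form $U\Gamma^{<\alpha}\varphi$ (controlled by $\mathcal{P}_{<\alpha,0}$), but the $S$-part is $v^{-1}$ times an $S$-derivative whose multi-index $\alpha-V+U+S$ exceeds $\alpha$ in the ordering, so it is controlled by nothing you have. The paper instead moves the $V$ \emph{adjacent to the $U$} and uses the wave equation $\Box\varphi=0$ through \cref{UV-box}, namely $VU\varphi=_{\mathrm{s}}\mathfrak{b}_0[r^{-1}D\varphi+\mathbf{1}_{r\le2\Rc}U^2\varphi]$, which is what genuinely drops an order; this step cannot be replaced by a pointwise trade. (For $\alpha_S>0$ your trade $S\to\tau U+vV$ does work, but only because $V\Gamma^{\alpha-S}$ is strictly below $\alpha$ in the multi-index \emph{ordering} despite having the same total number of derivatives --- and the resulting $V\Gamma^{\alpha-S}U\varphi$ still has to be reduced by the commutator estimate before the $v$-weighted Young step produces $rv\abs{U\Gamma^{<\alpha}\varphi}^2$ and $r^{-3}\mathcal{P}_{<\alpha,1}^2$.)
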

\begin{proof}
\step{Step 0: Controlling the commutator \([U,L]\psi{}\).} Let \(L\in \Gamma^\alpha{}\). We claim that
\begin{equation}\label{kappa-alpha-step-0}
\abs{[U,L]f}\le C(\mathcal{C}_{<\alpha{}},r^{-s}\mathfrak{G}_{<\alpha{}})[\abs{U\Gamma{}^{<\alpha{}}f} + r^{-2+s}\abs{\Gamma{}^{\le \alpha{}}f}].
\end{equation}
By \cref{U-comm-weak}, we have
\begin{equation}
\abs{[U,L]f}\le C(\mathcal{C}_{<\alpha{}},r^{-s}\mathfrak{G}_{<\alpha{}})[\abs{U\Gamma{}^{<\alpha{}}f} + r^{-2+s}\abs{\Gamma{}^{\le \alpha{}}f} + r^{-1+s}\abs{V\Gamma{}^{\le \alpha{}-S}f}].
\end{equation}
Use \cref{V-bounded-by-S} and \cref{v-u-r-compare} to control the last term by the
first two terms.

\step{Step 1: Preliminary estimate for \(\abs{UL \log \kappa{}}\).} We show by induction on \(\alpha{}\)
that
\begin{equation}\label{kappa-alpha-step-1}
\abs{UL\log \kappa{}}\le r\abs{LU\varphi{}}\abs{U\varphi{}} + C(\mathcal{C}_{<\alpha{}},r^{-s}\mathfrak{G}_{<\alpha{}})[r\abs{U\Gamma{}^{<\alpha{}}\varphi{}}^2 + r^{-3+2s}\abs{\Gamma{}^{<\alpha{}}\varphi{}}^2 + r^{-2+s}\abs{\Gamma{}^{\le \alpha{}}\log \kappa{}}].
\end{equation}
When \(\alpha{} = 0\), \cref{kappa-alpha-step-1} follows from the equation \(U\log \kappa{} =
-r(U\varphi{})^2\). For the inductive step, use \cref{kappa-alpha-step-0}, the
equation for \(U\kappa{}\), the estimate \(\abs{\Gamma^{\le \alpha{}}r}\le
C_\alpha{}r\), and \cref{kappa-alpha-bound-1} for multi-indices \(< \alpha{}\) to
estimate
\begin{equation}
\begin{split}
\abs{UL\log \kappa{}}&\le \abs{LU\log \kappa{}} + \abs{[U,L]\log \kappa{}} \le \abs{L(r(U\varphi{})^2)} + C(\mathcal{C}_{<\alpha{}},r^{-s}\mathfrak{G}_{<\alpha{}})[\abs{U\Gamma{}^{<\alpha{}}\log \kappa{}} + r^{-2+s}\abs{\Gamma{}^{\le \alpha{}}\log \kappa{}}] \\
&\le \side{RHS}{kappa-alpha-step-1}.
\end{split}
\end{equation}

\step{Step 2: Proof of \cref{kappa-alpha-bound-1}.} Use Young's inequality and
\cref{kappa-alpha-step-0} on the first term in \cref{kappa-alpha-step-1}, and control
the term involving \(\abs{\Gamma{}^{\le \alpha{}}\varphi{}}^2\) using the
pointwise norm defined in \cref{sec:norms}.

\step{Step 3: Proof of \cref{kappa-alpha-bound-2}.} In Steps 3ab we show
that if \(\alpha_V + \alpha_S > 0\), we have
\begin{equation}\label{kappa-alpha-step-3}
\abs{LU\varphi{}}\le C(\mathcal{C}_{<\alpha{}},r^{-s}\mathfrak{G}_{<\alpha{}})[v\abs{U\Gamma{}^{<\alpha{}}\varphi{}} + r^{-2+s}v\abs{\Gamma{}^{<\alpha{}}\varphi{}} + \abs{V\Gamma{}^{<\alpha{}}\varphi{}}].
\end{equation}
In Step 3c we deduce \cref{kappa-alpha-bound-2} from \cref{kappa-alpha-step-3}.

\step{Step 3a: Proof of \cref{kappa-alpha-step-3} when \(\alpha_V > 0\).} Use
\cref{rearrangement-formula} to bring \(V\) to the end of \(L\), then use the wave
equation \cref{UV-box} and the commutation formulas \cref{D-comm-weak,U-comm-weak} to
get
\begin{equation}\label{kappa-alpha-bound-step-3a}
LU\varphi{}  =_{\mathrm{s}} \Gamma{}^{\le \alpha{}-V}VU\varphi{} =_{\mathrm{s}} \Gamma{}^{\le \alpha{}-V}(\mathfrak{b}_0[D\varphi{} + U^2\varphi{}]) + \mathcal{C}_{<\alpha{}}D\Gamma{}^{\le \alpha{}-V}\varphi{} =_{\mathrm{s}} \mathcal{C}_{<\alpha{}}[U\Gamma{}^{<\alpha{}}\varphi{} + V\Gamma{}^{\le \alpha{}-V}\varphi{}].
\end{equation}

\step{Step 3b: Proof of \cref{kappa-alpha-step-3} when \(\alpha_S > 0\).} First,
\cref{kappa-alpha-step-0} implies that
\begin{equation}\label{kappa-alpha-V-gamma-alpha-S}
V\Gamma^{\alpha{}-S}U\varphi{} =_{\mathrm{s}} \Gamma{}^{<\alpha{}}U\varphi{} =_{\mathrm{s}} C(\mathcal{C}_{<\alpha{} },r^{-s}\mathfrak{G}_{<\alpha{}})[U\Gamma{}^{<\alpha{}}\varphi{} + r^{-2+s}\Gamma{}^{<\alpha{}}\varphi{}],
\end{equation}
Now use \cref{rearrangement-formula} to bring \(S\) to the front of \(L\),
\cref{S-bounded-by-V} to rewrite \(S\) in terms of \(U\) and \(V\), and
\cref{kappa-alpha-V-gamma-alpha-S} to rewrite \(V\Gamma^{\alpha{}-S}U\varphi{}\),
and hence obtain
\begin{equation}\label{kappa-alpha-bound-step-3b}
\begin{split}
\abs{LU\varphi{}}&\le \abs{S\Gamma{}^{\alpha{}-S}U\varphi{}} + C(\mathcal{C}_{<\alpha{}})\abs{D\Gamma{}^{<\alpha{}}\varphi{}}\le C(\mathcal{C}_{<\alpha{}})[\tau{}\abs{U\Gamma{}^{\alpha{}-S}U\varphi{}} + v\abs{V\Gamma{}^{\alpha{}-S}U\varphi{}} + \abs{D\Gamma{}^{<\alpha{}}\varphi{}}] \\
&\le C(\mathcal{C}_{<\alpha{}},r^{-s}\mathfrak{G}_{<\alpha{}})[v\abs{U\Gamma{}^{<\alpha{}}\varphi{}} + r^{-2+s}v\abs{\Gamma{}^{<\alpha{}}\varphi{}} + \abs{V\Gamma{}^{<\alpha{}}\varphi{}}].
\end{split}
\end{equation}

\step{Step 3c: Deducing \cref{kappa-alpha-bound-2} from \cref{kappa-alpha-step-3}.} First,
\cref{kappa-alpha-step-3} and Young's inequality imply
\begin{equation}\label{kappa-alpha-step-3-cons}
r\abs{LU\varphi{}}\abs{U\varphi{}}\le C(\mathcal{C}_{<\alpha{}},r^{-s}\mathfrak{G}_{<\alpha{}})[rv\abs{U\Gamma{}^{<\alpha{}}\varphi{}}^2 + r^{-3+2s}v\abs{\Gamma{}^{<\alpha{}}\varphi{}}^2 + r\abs{V\Gamma{}^{<\alpha{}}\varphi{}}^2].
\end{equation}
Combine \cref{kappa-alpha-step-3-cons,kappa-alpha-step-1} to get
\begin{equation}\label{kappa-alpha-3c-1}
\abs{UL\log \kappa{}}\le C(\mathcal{C}_{<\alpha{}},r^{-s}\mathfrak{G}_{<\alpha{}})[rv\abs{U\Gamma{}^{<\alpha{}}\varphi{}}^2 + r^{-3+2s}v\abs{\Gamma{}^{<\alpha{}}\varphi{}}^2 + r\abs{V\Gamma{}^{<\alpha{}}\varphi{}}^2 + r^{-2+s}\abs{\Gamma{}^{\le \alpha{}}\log \kappa{}}].
\end{equation}
We have the pointwise estimates
\begin{equation}\label{kappa-alpha-3c-2}
r^{-3+2s}v\abs{\Gamma{}^{<\alpha{}}\varphi{}}^2\le \mathbf{1}_{r\le v/2}r^{-4+2s}\tau{}^{-1}v\mathcal{P}_{<\alpha{},1}^2 + \mathbf{1}_{r\ge v/2}r^{-5+2s}v\mathcal{P}_{<\alpha{},0}^2\le C(\mathcal{B}_0)r^{-4+2s}\mathcal{P}_{<\alpha{},1}^2\le C(\mathcal{C}_{<\alpha{}})r^{-3}\mathcal{P}_{<\alpha{},1}^2,
\end{equation}
where we used \cref{v-u-r-compare} and \(s<1/2\), and
\begin{equation}\label{kappa-alpha-3c-3}
r\abs{V\Gamma{}^{<\alpha{}}\varphi{}}^2\le C(\mathfrak{B}_0,\Rc)r^{-3}\mathcal{P}_{<\alpha{},0}^2\le C(\mathcal{C}_{<\alpha{}})r^{-3}\mathcal{P}_{<\alpha{},1}^2.
\end{equation}
Substitute \cref{kappa-alpha-3c-2,kappa-alpha-3c-3} into \cref{kappa-alpha-3c-1} to
obtain \cref{kappa-alpha-bound-2}.
\end{proof}
Before we integrate \cref{U-kappa-alpha-bound} towards \(\set{r=r_{\mathcal{H}}}\), we introduce an
integration lemma.
\begin{lemma}
We have
\begin{equation}\label{u-energy-estimate-equation}
\int_u^\infty \frac{r}{(-\nu{})}(\partial{}_u\psi{})^2(u',v)\dd{}u'\le v^{-1}C(\mathfrak{B}_0)\mathcal{E}_{1}[\psi{}]^2.
\end{equation}
\label{u-energy-estimate}
\end{lemma}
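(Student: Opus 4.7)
The plan is to decompose the integral based on whether $r$ is comparable to $v$ or small, following the heuristic given in Step 5c of the introduction. In the large-$r$ region we use the unweighted energy $\mathcal{E}[\psi{}] \le \mathcal{E}_1[\psi{}]$ together with the bound $r^{-1} \le 2/v$ to pay for the discrepancy between the $r$-weight in the integrand and the $r^2$-weight in $\mathcal{E}$. In the small-$r$ region we use $\tau{} \sim v$ (via \cref{v-u-r-compare}) together with the $\tau{}$-weighted part of $\mathcal{E}_1[\psi{}]$; there the $r$-weight mismatch is absorbed by the trivial inequality $r \le r_{\mathrm{min}}^{-1}r^2$ at the cost of a constant.

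Concretely, for fixed $v \ge 1$ let $u^\ast = u^\ast(v)$ be the unique value with $r(u^\ast, v) = v/2$ if one exists, so that $r(u', v) \ge v/2$ for $u' \le u^\ast$ and $r(u', v) \le v/2$ for $u' \ge u^\ast$ (the case where no such $u^\ast$ exists in $[u, \infty)$ requires only one of the two estimates below). First I would write
\begin{equation*}
\int_u^\infty \frac{r}{(-\nu{})}(\partial{}_u\psi{})^2(u', v)\dd{}u' = \int_u^{\max(u, u^\ast)} \frac{r}{(-\nu{})}(\partial{}_u\psi{})^2(u', v)\dd{}u' + \int_{\max(u, u^\ast)}^\infty \frac{r}{(-\nu{})}(\partial{}_u\psi{})^2(u', v)\dd{}u'.
\end{equation*}
On the first integral $r \ge v/2$, so $r \le 2r^2/v$ and this integral is bounded by $2v^{-1}\mathcal{E}[\psi{}]^2 \le 2v^{-1}\mathcal{E}_1[\psi{}]^2$ (using $\mathcal{E}_0 \le \mathcal{E}_1$, which follows by setting $u = 1$ in the supremum defining $\mathcal{E}_1$ and noting $\tau{} \ge 1$).

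For the second integral, the integration region is contained in $\set{r \le v/2}$, which by \cref{v-u-r-compare} (applied with, say, $\epsilon{} = 1/2$) gives $v \le G_{1/2} u'$, hence $\tau{}(u', v) \ge G_{1/2}^{-1} v$ throughout. Using the trivial bound $r \le r_{\mathrm{min}}^{-1} r^2$ together with the $\tau{}$-weighted part of $\mathcal{E}_1[\psi{}]$ evaluated at the point $(\max(u, u^\ast), v)$, we get
\begin{equation*}
\int_{\max(u, u^\ast)}^\infty \frac{r}{(-\nu{})}(\partial{}_u\psi{})^2 \dd{}u' \le r_{\mathrm{min}}^{-1}\tau{}(\max(u, u^\ast), v)^{-1}\mathcal{E}_1[\psi{}]^2 \le C(\mathfrak{B}_0) v^{-1}\mathcal{E}_1[\psi{}]^2.
\end{equation*}
Summing the two contributions yields the claim. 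The only obstacle worth flagging is ensuring $\tau{} \sim v$ uniformly on the small-$r$ region, which is handled directly by the estimate \cref{v-u-r-compare-2} in \cref{v-u-r-compare}; the rest is bookkeeping.
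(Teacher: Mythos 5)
Your proposal is correct and is essentially the paper's own argument: the same split of the integration range at the $u$-value where $r = v/2$, the same use of $r \le 2r^2/v$ in the large-$r$ piece against the unweighted energy, and the same use of $\tau \sim v$ (via \cref{v-u-r-compare}) together with the $\tau$-weighted part of $\mathcal{E}_1$ in the small-$r$ piece. The bookkeeping with $\max(u,u^\ast)$ and the justification of $\mathcal{E}_0 \le \mathcal{E}_1$ are fine.
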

\begin{proof}
Fix \((u,v)\). Define \(u_\ast{}\in [u,\infty)\) so that \(r(u_\ast{},v) = v/2\) if such a
\(u_\ast{}\) exists, and otherwise set \(u_\ast{} = u\). Split the integration
range as \([u,\infty) = [u,u_\ast{}]\cup [u,\infty)\). The \(r\)-weight in the
energy \(\mathcal{E}[\psi{}]^2\) is one power stronger than in the integral on
the left of \cref{u-energy-estimate-equation}, so by the monotonicity of \(r\) we
gain one power of decay in \(r(u_\ast{},v)\) when integrating over
\([u,u_\ast{}]\). The energy \(\mathcal{E}_p[\psi{}]^2\) controls the integral
over \([u_\ast{},\infty)\) with a weight \(\tau{}(u_\ast{},v)\), which is
comparable to \(v\) by \cref{v-u-r-compare}. Thus:
\begin{equation}
\begin{split}
\int_u^\infty \frac{r}{(-\nu{})}(\partial{}_u\psi{})^2(u',v)\dd{}u' &= r^{-1}(u_\ast{},v)\int_u^{u_\ast{}} \frac{r^2}{(-\nu{})}(\partial{}_u\psi{})^2(u',v)\dd{}u' + r_{\mathrm{min}}^{-1}\int_{u_\ast{}}^{\infty}  \frac{r^2}{(-\nu{})}(\partial{}_u\psi{})^2(u',v)\dd{}u'\\
&\le 2v^{-1}\mathcal{E}[\psi{}]^2 + r_{\mathrm{min}}^{-1}\tau{}^{-1}(u_\ast{},v)E_1[\psi{}]^2\le v^{-1}C(\mathfrak{B}_0)\mathcal{E}_1[\psi{}]^2.
\end{split}
\end{equation}
\end{proof}
\begin{proposition}[Estimates for \(\Gamma{}^\alpha{}\log \kappa{}\)]
Let \(\abs{\alpha{}}\ge 1\), and let \(s > 0\) be sufficiently small. We have decay estimates
for order-\(\alpha{}\) derivatives of \(\log \kappa{}\) when norms of \(\Gamma^{\le \alpha{}}\varphi{}\) are
bounded:
\begin{align}
v\abs{\Gamma{}^\alpha{}\log \kappa{}}&\le C(\mathfrak{B}_{<\alpha{}},r^{-s}\mathfrak{G}_{<\alpha{}})(\mathcal{E}_{\alpha{},1}^2 + \mathcal{P}_{\alpha{},1}^2), \label{kappa-alpha-bound-after-energy} \\
r^{1-s-\eta{}_0}\tau{}\abs{\Gamma{}^\alpha{}\log \kappa{}}&\le C(\mathfrak{B}_{<\alpha{}},r^{-s}\mathfrak{G}_{<\alpha{}})(\mathcal{E}_{\alpha{},1}^2 + \mathcal{P}_{\alpha{},2-s}^2)\quad \text{if }\alpha{}_U > 0, \label{kappa-alpha-bound-when-U}\\
r\abs{\Gamma{}^{\alpha{}+U}\log \kappa{}}&\le C(\mathfrak{B}_{<\alpha{}},r^{-s}\mathfrak{G}_{<\alpha{}})(\mathcal{E}_{\alpha{},1}^2 + \mathcal{P}_{\alpha{},1}^2), \label{U-kappa-alpha-bound-top-order}
\end{align}
Moreover, order-\(\alpha{}\) derivatives of \(\log \kappa{}\) are bounded if lower order
derivatives of \(\varphi{}\) are:
\begin{align}
\abs{\Gamma^\alpha{}\log \kappa{}}&\le C(\mathfrak{B}_{<\alpha{}},r^{-s}\mathfrak{G}_{<\alpha{}})(\mathcal{E}_{<\alpha{},1}^2 + \mathcal{P}_{<\alpha{},1}^2). \label{kappa-alpha-bound-before-energy}                   \\
r\abs{\Gamma^\alpha{}\log \kappa{}}&\le C(\mathfrak{B}_{<\alpha{}},r^{-s}\mathfrak{G}_{<\alpha{}})(\mathcal{E}_{<\alpha{},1}^2 + \mathcal{P}_{<\alpha{},1}^2)\quad \text{if }\alpha{}_U > 0. \label{kappa-alpha-U-bound-before-energy}
\end{align}
\label{Gamma-kappa-bounds}
\end{proposition}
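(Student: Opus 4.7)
The proof proceeds by induction on $\alpha$, exploiting the gauge condition $\kappa|_{\set{r=r_{\mathcal{H}}}} = 1$, i.e., $\log\kappa = 0$ on the constant-$r$ curve $\set{r=r_{\mathcal{H}}}\subset \set{r\le R_0\le \Rc}$. Since $V$ and $S$ are tangent to curves of constant $r$ in $\set{r\le \Rc}$, any $L\in \Gamma^\alpha$ with $\alpha_U = 0$ satisfies the boundary condition $L\log\kappa|_{\set{r=r_{\mathcal{H}}}} = 0$. This vanishing will be the starting point for integrating the $U$-transport equation for $L\log\kappa$ along constant-$v$ curves, as sketched in Step 5c of the introduction.

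\emph{Case $\alpha_U > 0$ (estimates \eqref{kappa-alpha-bound-when-U}, \eqref{U-kappa-alpha-bound-top-order}, \eqref{kappa-alpha-U-bound-before-energy}).} Use \cref{U-rearrangement-formula} to write any $L\in \Gamma^{\alpha+U}$ (for \eqref{U-kappa-alpha-bound-top-order}) or $L\in \Gamma^\alpha$ with $\alpha_U > 0$ (for \eqref{kappa-alpha-bound-when-U}, \eqref{kappa-alpha-U-bound-before-energy}) in the form $UL'$ modulo commutator errors involving only $U\Gamma^{<\alpha}$ and $V\Gamma^{\le\alpha-U-V}$, $V\Gamma^{\le\alpha-U-S}$ applied to $\log\kappa$, with coefficients in $\mathcal{C}_{<\alpha}$ and $r^{-1}\mathfrak{G}_{<\alpha}$; these error terms are controlled inductively via lower-order instances of the same proposition and the definitions of $\mathcal{C}_{<\alpha}$ and $\mathfrak{G}_{<\alpha}$. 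For the main term, apply the first estimate of \cref{U-kappa-alpha-bound} to $UL'\log\kappa$. The three pieces on the right are bounded as follows: $r\abs{U\Gamma^{\le \alpha}\varphi}^2 \le r^{-1}\mathcal{P}_{\alpha,1}^2$ by $\norm{rU\Gamma^{\le\alpha}\varphi}_\infty \le \mathcal{P}_{\alpha,1}$; the second term already carries the desired $\tau^{-1}$ decay; and the third is lower order. Multiplying by $r$ yields \eqref{U-kappa-alpha-bound-top-order} and \eqref{kappa-alpha-U-bound-before-energy}. For \eqref{kappa-alpha-bound-when-U}, the additional factor $\tau$ is obtained by inspecting the product form $r\abs{LU\varphi}\abs{U\varphi}$ in \cref{kappa-alpha-step-1}: we keep one factor of $\abs{U\varphi}$ as a pointwise bound $\abs{U\varphi}\le r^{-1}\mathcal{P}_{0,2-s}$ and use the interpolation remark after \cref{Ep-norm-definition} to extract $\tau$-decay from the $\mathcal{P}_{\alpha,2-s}$ norm applied to $\Gamma^{\le\alpha}\varphi$, which—after absorbing powers of $r$—delivers the required weight $r^{1-s-\eta_0}\tau$.

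\emph{Case $\alpha_U = 0$ (estimates \eqref{kappa-alpha-bound-after-energy}, \eqref{kappa-alpha-bound-before-energy}).} Fix $(u,v)$ and $L\in \Gamma^\alpha$. Writing $u_{\mathcal{H}}(v)$ for the $u$-coordinate satisfying $r(u_{\mathcal{H}}(v),v) = r_{\mathcal{H}}$, the fundamental theorem of calculus along the constant-$v$ curve combined with the vanishing at $u_{\mathcal{H}}(v)$ gives
\begin{equation*}
L\log\kappa(u,v) \;=\; \int_{u_{\mathcal{H}}(v)}^u (-\nu)(UL\log\kappa)(u',v)\dd u' \;=\; -\int_{r_{\mathcal{H}}}^{r(u,v)} (UL\log\kappa)\dd r',
\end{equation*}
after the change of variables $(-\nu)\dd u = -\dd r$. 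For \eqref{kappa-alpha-bound-before-energy}, apply the second estimate of \cref{U-kappa-alpha-bound} (which only involves $\Gamma^{<\alpha}\varphi$, hence requires only $\mathcal{E}_{<\alpha,1}$): the main term $\int rv\abs{U\Gamma^{<\alpha}\varphi}^2(-\nu)\dd u'$ is controlled by $v\cdot v^{-1}C\mathcal{E}_{<\alpha,1}^2$ via \cref{u-energy-estimate}, with the factor $v$ in the integrand cancelling the $v^{-1}$ gain; the term $r^{-3}\mathcal{P}_{<\alpha,1}^2$ integrates to a finite constant; and the term $r^{-2+s}\abs{\Gamma^{\le\alpha}\log\kappa}$ produces a closed Grönwall loop that is tame because $\int r^{-2+s}\dd r < \infty$ for small $s$. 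For \eqref{kappa-alpha-bound-after-energy}, apply the first estimate of \cref{U-kappa-alpha-bound} instead; now $\int r\abs{U\Gamma^{\le\alpha}\varphi}^2(-\nu)\dd u' \le v^{-1}C\mathcal{E}_{\alpha,1}^2$ by \cref{u-energy-estimate}, so multiplying the integrated estimate by $v$ gives the desired $v\abs{L\log\kappa}$ bound.

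\emph{Main obstacle.} The critical bookkeeping is the ordering of the induction. Estimate \eqref{kappa-alpha-bound-before-energy} must be proved before \eqref{kappa-alpha-bound-after-energy}, because the Grönwall closure for \eqref{kappa-alpha-bound-after-energy} requires a pre-existing bound on $\abs{\Gamma^{\le\alpha}\log\kappa}$—which \eqref{kappa-alpha-bound-after-energy} itself cannot supply since it depends on the order-$\alpha$ energy $\mathcal{E}_{\alpha,1}$, whereas \eqref{kappa-alpha-bound-before-energy} requires only lower-order energy. Likewise \eqref{kappa-alpha-U-bound-before-energy} feeds into the $\mathfrak{G}_{<\alpha+U}$ coefficients appearing in the next induction step. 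The most delicate point is \eqref{kappa-alpha-bound-when-U}: one must show that after rearrangement and application of \cref{U-kappa-alpha-bound}, the quadratic pairing in $U\varphi$ can be split between a pointwise factor (using $\norm{rU\varphi}_\infty \le \mathcal{P}_{0,2-s}$) and a $\tau$-weighted interpolation factor (using $r^{2-s-\eta_0}\tau\abs{\Gamma^{\le\alpha}\varphi}^2 \le \mathcal{P}_{\alpha,2-s}^2$), yielding the full weight $r^{1-s-\eta_0}\tau$ without loss.
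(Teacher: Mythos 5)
Your overall strategy coincides with the paper's: for $\alpha_U>0$, bring $U$ to the front via \cref{U-rearrangement-formula} and use the $U$-transport equation pointwise (\cref{U-kappa-alpha-bound}); for $\alpha_U=0$, integrate \cref{U-kappa-alpha-bound} along constant-$v$ curves from $\{r=r_{\mathcal{H}}\}$ (where the data vanishes), control the quadratic term with \cref{u-energy-estimate}, and close with Gr\"onwall. The use of the interpolated pointwise norm $\mathcal{P}_{\alpha,2-s}$ to produce the $r^{1-s-\eta_0}\tau$ weight in \eqref{kappa-alpha-bound-when-U} is also the paper's mechanism (though the paper simply bounds $\abs{U\Gamma^{\le\alpha-U}\varphi}^2\le\abs{\Gamma^{\le\alpha}\varphi}^2\le r^{-2+s+\eta_0}\tau^{-1}\mathcal{P}_{\alpha,2-s}^2$ rather than splitting the product asymmetrically).

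However, your ``main obstacle'' paragraph gets the logical structure wrong in a way that would break \eqref{kappa-alpha-bound-after-energy} if implemented literally. You claim the Gr\"onwall closure for \eqref{kappa-alpha-bound-after-energy} ``requires a pre-existing bound on $\abs{\Gamma^{\le\alpha}\log\kappa}$'' supplied by \eqref{kappa-alpha-bound-before-energy}. If you substitute that $O(1)$ bound into the term $\int r^{-2+s}\abs{\Gamma^{\le\alpha}\log\kappa}\,\dd r'$, you obtain an $O(1)$ contribution to $\abs{L\log\kappa}$, which after multiplication by $v$ gives $O(v)$ --- the $v^{-1}$ decay is lost. No pre-existing bound is needed or wanted: Gr\"onwall applied to the self-referential inequality $f\le v^{-1}g+\int hf$ with $\int h\le C$ yields $f\le v^{-1}ge^{C}$, preserving the decay. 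The paper accordingly proves \eqref{kappa-alpha-bound-after-energy} and \eqref{kappa-alpha-bound-before-energy} by two independent Gr\"onwall arguments (integrating \cref{kappa-alpha-bound-1} and \cref{kappa-alpha-bound-2} respectively), in that order. The genuine bookkeeping issue, which you do not address, is different: the Gr\"onwall term involves $\Gamma^{\le\alpha}\log\kappa$, and the total order $\le\alpha$ includes same-order multi-indices $\beta$ with $\beta_U>0$ even when $\alpha_U=0$; for those the boundary vanishing at $\{r=r_{\mathcal{H}}\}$ fails and they cannot be kept inside the Gr\"onwall loop. The paper resolves this by a secondary induction on the number of leading $U$'s (writing $\alpha=\beta+nU$ with $n$ maximal, handling $\beta_U=0$ by integration, $\beta=U$ directly from the transport equation, and propagating to $\beta+nU$ via the pointwise $U$-transport estimate). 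Your proof needs this step, or an equivalent device, to be complete.
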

\begin{proof}
In Step 1, we establish a preliminary estimate. In Step 2, we use Step 1 and
\cref{U-kappa-alpha-bound} to show \cref{kappa-alpha-bound-after-energy}. In Step 3,
we prove \cref{kappa-alpha-bound-before-energy}. In Step 4 we use the results of
the previous steps to show
\cref{kappa-alpha-bound-when-U,U-kappa-alpha-bound-top-order,kappa-alpha-U-bound-before-energy}.

\step{Step 1: Estimate for \(\Gamma^{\alpha{} + U}\log \kappa{}\) in terms of \(\Gamma^\alpha{}\log \kappa{}\).} Use
\cref{U-rearrangement-formula} to move the \(U\) in \(\Gamma^{\alpha{} + U}\) to
the front, and then use \cref{kappa-alpha-bound-1} to obtain
\begin{equation}\label{U-kappa-alpha-bound-top-order-prep}
\begin{split}
\abs{\Gamma{}^{\alpha{}+U}\log \kappa{}}&\le C(\mathcal{C}_{<\alpha{}},r^{-s}\mathfrak{G}_{<\alpha{}})[\abs{U\Gamma{}^{\le \alpha{}}\log \kappa{}} + r^{-1+s}\abs{V\Gamma{}^{\le \alpha{}-V}\log \kappa{}}] \\
&\le C(\mathcal{C}_{<\alpha{}},r^{-s}\mathfrak{G}_{<\alpha{}})[r\abs{U\Gamma{}^{\le \alpha{}}\varphi{}}^2 + r^{-2}\tau{}^{-1}\mathcal{P}_{\alpha{},1}^2 + r^{-1+s}\abs{\Gamma{}^{\le \alpha{}}\log \kappa{}}].
\end{split}
\end{equation}

\step{Step 2: Proof of \cref{kappa-alpha-bound-after-energy}.} We will write, for
example, \((\ref{kappa-alpha-bound-after-energy})_{\le \alpha{}}\) to mean the
collection of statements \((\ref{kappa-alpha-bound-after-energy})_\beta{}\) for
multi-indices \(\beta{}\) such that \(\beta{}\le \alpha{}\) and
\(\abs{\beta{}}\ge 1\). In Step 2a, we show that Step 1 reduces us to estimating
\(U\log \kappa{}\) and \(\Gamma^\alpha{}\log \kappa{}\) for \(\alpha_U = 0\).
The first case (Step 2b) is easy, and in the second case (Step 2c), we can
integrate \cref{U-kappa-alpha-bound} to \(\set{r = R_0}\), since \(\Gamma^\alpha{}\log \kappa{}\)
vanishes there when \(\alpha_U = 0\).

\step{Step 2a: Proof that \((\ref{kappa-alpha-bound-after-energy})_{\le \alpha{}}\) implies
\((\ref{kappa-alpha-bound-after-energy})_{\le \alpha{} + nU}\) for all \(n\ge 0\).} By
induction on \(n\), it is enough to obtain the conclusion
\((\ref{kappa-alpha-bound-after-energy})_{\le \alpha{} + U}\). Since
\(\beta{}\le \alpha{} + U\) if and only if \(\beta{} = \alpha{}\) or \(\beta{} =
\beta{}' + U\) for some \(\beta{}'\le \alpha{}\), we reduce to showing that
\((\ref{kappa-alpha-bound-after-energy})_{\le \alpha{}}\) implies
\((\ref{kappa-alpha-bound-after-energy})_{\alpha{} + U}\). This follows from
substituting \((\ref{kappa-alpha-bound-after-energy})_{\le \alpha{}}\) into
\cref{U-kappa-alpha-bound-top-order-prep} and using
\begin{equation}\label{kappa-bound-step2a-prep}
r\abs{U\Gamma^{\le \alpha{}}\varphi{}}^2\le \mathbf{1}_{r\le v/2}\tau{}^{-1}\mathcal{P}_{\alpha{}+U,1}^2 + \mathbf{1}_{r\ge v/2}r^{-1}\mathcal{P}_{\alpha{},0}^2\le v^{-1}C(\mathfrak{B}_0)\mathcal{P}_{\alpha{}+U,1}^2.
\end{equation}
and \cref{v-u-r-compare} and \(s < 1\).

\step{Step 2b: Proof of \((\ref{kappa-alpha-bound-after-energy})_{\alpha{}}\) for \(\alpha{} = U\).}
The transport equation for \(\log \kappa{}\) is \(U\log \kappa{} =
-r(U\varphi{})^2\). To complete the proof, control the right side by
\(v^{-1}\mathcal{P}_{U,1}^2\) as in \cref{kappa-bound-step2a-prep}.

\step{Step 2c: Proof of \((\ref{kappa-alpha-bound-after-energy})_{\alpha{}}\) if \(\alpha_U = 0\).}
Fix \((u,v)\). Write \(r_0 = r(u,v)\). Integrate \cref{kappa-alpha-bound-1} to
\(\set{r=r_{\mathcal{H}}}\), noting that \(L\log
\kappa{}|_{\set{r=r_{\mathcal{H}}}} = 0\) because \(\alpha{}_U = 0\):
\begin{equation}\label{kappa-alpha-integrate-1}
\begin{split}
\abs{L\log \kappa{}}(r=r_0,v)&\le C(\mathcal{C}_{<\alpha{}},r^{-s}\mathfrak{G}_{<\alpha{}})\Bigl[\underbrace{\int_{\min (r_0,r_{\mathcal{H}})}^{\max (r_0,r_{\mathcal{H}})} r\abs{U\Gamma{}^{\le \alpha{}}\varphi{}}^2\dd{}r}_{\coloneqq{}\text{(I)}} \\
&\qquad + \underbrace{\mathcal{P}_{\alpha{},1}^2\int_{r_{\mathcal{H}}}^{r_0} r^{-2}\tau{}^{-1}\dd{}r}_{\coloneqq{}\text{(II)}} + \int_{\min (r_0,r_{\mathcal{H}})}^{\max (r_0,r_{\mathcal{H}})} r^{-2+s}\abs{\Gamma{}^{\le \alpha{}}\log \kappa{}}\dd{}r\Bigr]. \\
\end{split}
\end{equation}
Terms \(\text{(I)}\) and \(\text{(II)}\) decay like \(v^{-1}\). Indeed, for term
\((\text{I})\), use \cref{u-energy-estimate}, and for term \(\text{(II)}\), split
the integral into the regions \(r\le v/2\) and \(r\ge v/2\) and use
\cref{v-u-r-compare}. Thus
\begin{equation}
\abs{L\log \kappa{}}(r=r_0,v)\le v^{-1}C(\mathcal{C}_{<\alpha{}},r^{-s}\mathfrak{G}_{<\alpha{}})(\mathcal{E}_{\alpha{},1}^2 + \mathcal{P}_{\alpha{},1}^2) + C(\mathcal{C}_{<\alpha{}},r^{-s}\mathfrak{G}_{<\alpha{}})\int_{\min (r_0,r_{\mathcal{H}})}^{\max (r_0,r_{\mathcal{H}})} r^{-2+s}\abs{\Gamma{}^{\le \alpha{}}\log \kappa{}}\dd{}r.
\end{equation}
Sum over \(L\in \Gamma^{\le \alpha{}}\) and apply Grönwall's inequality (noting that \(s < 1\)) and recall the
definition of \(\mathcal{C}_\alpha{}\) (see \cref{C-alpha-def}) to complete the proof.

\step{Step 2d: Completing the proof by induction.} Let \(\abs{\alpha{}}\ge 1\), and suppose we have
shown \((\ref{kappa-alpha-bound-after-energy})_{<\alpha{}}\). Write \(\alpha{} =
\beta{} + nU\) for \(\abs{\beta{}}\ge 1\) and \(n\ge 0\) maximal. Then either
\(\beta{} = U\) or \(\beta_U = 0\), so by Step 2b or Step 2c we obtain
\((\ref{kappa-alpha-bound-after-energy})_{\beta{}}\). Together with the
induction hypothesis, this implies \((\ref{kappa-alpha-bound-after-energy})_{\le
\beta{}}\), so Step 2a now implies \((\ref{kappa-alpha-bound-after-energy})_{\le
\beta{}+nU} = (\ref{kappa-alpha-bound-after-energy})_{\le \alpha{}}\), which
completes the induction.

\step{Step 3: Proof of \cref{kappa-alpha-bound-before-energy}.} Follow Step 2, but instead
integrate \cref{kappa-alpha-bound-2} (which contains only lower order norms on the
right side, as opposed to \cref{kappa-alpha-bound-1}) as in Step 2c. We omit the
details.

\step{Step 4: Proof of \cref{kappa-alpha-bound-when-U,U-kappa-alpha-bound-top-order,kappa-alpha-U-bound-before-energy}
from \cref{kappa-alpha-bound-after-energy,kappa-alpha-bound-before-energy}.} Let
\(\abs{\alpha{}}\ge 1\) satisfy \(\alpha_U > 0\). Start
with \cref{U-kappa-alpha-bound-top-order-prep} to obtain
\begin{equation}
\abs{\Gamma{}^{\alpha{}}\log \kappa{}}\le C(\mathcal{G}_{<\alpha{}})[r\abs{U\Gamma{}^{\le \alpha{}-U}\varphi{}}^2 + r^{-2}\tau{}^{-1}\mathcal{P}_{<\alpha{},1} + r^{-1+s}\abs{\Gamma{}^{<\alpha{}}\log \kappa{}}]. \\
\end{equation}
To show \cref{kappa-alpha-bound-when-U} estimate
\begin{equation}
\abs{U\Gamma^{\le \alpha{}-U}\varphi{}}^2\le \abs{\Gamma{}^{\le \alpha{}}\varphi{}}^2\le r^{-2 + s + \eta{}_0}\tau^{-1}\mathcal{P}_{\alpha{},2-s}^2
\end{equation}
and use \cref{kappa-alpha-bound-after-energy} to control \(\abs{\Gamma^{\le \alpha{}}\log
\kappa{}}\). The same argument establishes
\cref{U-kappa-alpha-bound-top-order} and \cref{kappa-alpha-U-bound-before-energy}, but we
estimate \(\abs{U\Gamma^{\le \alpha{}}\varphi{}}\) (resp. \(U\Gamma^{\le \alpha{}-U}\)) by \(r^{-1}\mathcal{P}_{\alpha{},0}\)
instead.
\end{proof}
\subsection{Estimates for \texorpdfstring{\(\Gamma{}^\alpha{}(-\gamma{})\)}{derivatives of γ}}
\label{sec:orgc1fa4d2}
The main result of this section is \cref{Gamma-gamma-bounds}. The gauge condition
on \(u\) normalizes \(\log (-\gamma{}) = 0\) on \(\mathcal{I}\). In fact,
\(\Gamma^\alpha{}\log (-\gamma{}) = 0\) on \(\mathcal{I}\) for all \(\alpha{}\),
which we establish as part of the proof of \cref{Gamma-gamma-bounds}. To complete
estimates in which we integrate the transport equation for \(\Gamma^\alpha{}\log
(-\gamma{})\) (see \cref{gamma-bound-step-1c}) to \(\mathcal{I}\), we first
formulate a variant of Grönwall's inequality.
\begin{lemma}[Grönwall inequality on half-infinite interval]
Let \(I = [t_0,\infty)\). Suppose \(f,g,h : I\to \R\) are continuous functions such that \(f\)
is bounded, \(g\) is non-increasing, \(h\) is non-negative and integrable on
\(I\), and
\begin{equation}
f(t)\le g(t) + \int_t^\infty h(s)f(s)\dd{}s \quad\text{for } t\in I.
\end{equation}
Then
\begin{equation}
f(t)\le g(t)\exp \Bigl(\int_t^\infty h(s)\dd{}s\Bigr)\quad\text{for } t\in I.
\end{equation}
\label{decaying-gronwall}
\end{lemma}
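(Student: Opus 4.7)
The plan is to convert the given integral inequality into a differential inequality for the ``tail integral'' $F(t) \coloneqq \int_t^\infty h(s) f(s)\,ds$, and then exploit an integrating factor to integrate backward from $\infty$. Observe first that $F$ is well-defined and continuous on $I$ because $f$ is bounded and $h$ is integrable, and moreover $F(\infty) = 0$. The hypothesis simply says $f(t) \le g(t) + F(t)$.

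Next I would compute $F'(t) = -h(t) f(t)$, and use the hypothesis together with $h \ge 0$ to obtain the differential inequality
\begin{equation*}
F'(t) + h(t) F(t) \ge -h(t) g(t).
\end{equation*}
Setting $H(t) \coloneqq \int_t^\infty h(s)\,ds$ (so $H'(t) = -h(t)$ and $H(\infty) = 0$), the multiplicative factor $e^{-H(t)}$ is an integrating factor: $\frac{d}{dt}\bigl[e^{-H(t)} F(t)\bigr] = e^{-H(t)}[F'(t) + h(t) F(t)] \ge -e^{-H(t)} h(t) g(t)$. Integrating this from $t$ to $\infty$ (using $F(\infty) = 0$ and $H(\infty) = 0$) gives
\begin{equation*}
- e^{-H(t)} F(t) \ge -\int_t^\infty e^{-H(s)} h(s) g(s)\,ds.
\end{equation*}

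To finish, I would use the monotonicity of $g$ to pull $g(t)$ outside (since $g(s) \le g(t)$ for $s \ge t$) and then evaluate $\int_t^\infty e^{-H(s)} h(s)\,ds = \int_t^\infty -\frac{d}{ds}(e^{-H(s)})\,ds = 1 - e^{-H(t)}$, which yields $F(t) \le g(t)(e^{H(t)} - 1)$. Plugging this into $f(t) \le g(t) + F(t)$ produces the claimed bound $f(t) \le g(t) e^{H(t)}$. There is no substantial obstacle here; the only subtlety is ensuring that the boundary contributions at $t = \infty$ vanish, which is precisely why the hypotheses require $f$ to be bounded (so that $h f$ is integrable and $F(\infty) = 0$) and $h$ to be integrable (so that $H$ is finite, making the conclusion meaningful).
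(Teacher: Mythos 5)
Your proof is correct, but it takes a genuinely different route from the paper. The paper truncates to a finite interval: it picks \(T\) large so that \(\abs{\int_T^\infty hf}\le \epsilon\) (using \(f\in L^\infty\), \(h\in L^1\)), applies the standard finite-interval Gr\"onwall inequality after the time reversal \(t\mapsto T-t\) (with \(g+\epsilon\) in place of \(g\), using that \(g\) is non-increasing so the reversed inhomogeneity is non-decreasing), and then sends \(\epsilon\to 0\). You instead work directly with the tail integral \(F(t)=\int_t^\infty hf\), derive the differential inequality \(F'+hF\ge -hg\), and integrate backward from infinity with the integrating factor \(e^{-H}\). Your argument is self-contained (it does not invoke the classical Gr\"onwall lemma as a black box) and avoids the \(\epsilon\)-truncation entirely; the paper's is shorter if one grants the standard lemma. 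One small point worth tightening in your write-up: when you ``integrate from \(t\) to \(\infty\)'' and then pull \(g(t)\) out, you are implicitly assuming \(\int_t^\infty e^{-H}hg\) is well-defined. The cleanest fix is to integrate over \([t,T]\), bound \(\int_t^T e^{-H}hg\le g(t)\int_t^T e^{-H}h = g(t)(e^{-H(T)}-e^{-H(t)})\) first, and only then let \(T\to\infty\); alternatively, observe that the hypotheses force \(g\) to be bounded (bounded above by monotonicity, and bounded below by \(f-F\)), so \(hg\in L^1\) anyway. Either way the conclusion \(F(t)\le g(t)(e^{H(t)}-1)\), hence \(f\le ge^{H}\), follows as you state.
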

\begin{proof}
Let \(\epsilon{} > 0\). Since \(h\in L^1\) and \(f\in L^\infty\), we have \(hf\in L^1\), so there is
\(T > t_0\) such that \(\abs{\int_T^\infty hf}\le \epsilon{}\). The assumption now implies
\begin{equation}
f(t)\le (g(t) + \epsilon{}) + \int_t^T h(s)f(s)\dd{}s
\end{equation}
for \(t\in [t_0,T]\). The usual Grönwall inequality (after a change of variables
\(t\mapsto T-t\)) and the non-negativity of \(h\) give
\begin{equation}
f(t)\le (g(t) + \epsilon{})\exp \Bigl(\int_t^T h(s)\dd{}s\Bigr)\le (g(t) + \epsilon{})\exp \Bigl(\int_t^\infty h(s)\dd{}s\Bigr).
\end{equation}
We are done since \(\epsilon{} > 0\) is arbitrary.
\end{proof}
\begin{lemma}[Estimate for \(\overline{\partial}_rL\log (-\gamma{})\)]
Let \(\abs{\alpha{}}\ge 0\) and let \(L\in \Gamma{}^\alpha{}\). We have
\begin{equation}
\mathbf{1}_{r\ge R_0}\abs{\overline{\partial}_rL\log (-\gamma{})}\le C(\mathcal{G}_{\alpha{},s})[r\abs{\overline{\partial}_r\Gamma{}^{\le \alpha{}}\varphi{}}^2 + r^{-3+2s}\abs{\Gamma{}^{\le \alpha{}}\varphi{}}^2 + r^{-2+s}\abs{\Gamma{}^{\le \alpha{}}\log (-\gamma{})}].
\end{equation}
\label{gamma-bound-step-1c}
\end{lemma}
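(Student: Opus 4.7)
The proof proceeds by induction on $|\alpha|$. For the base case $|\alpha|=0$, I would start from the $\partial_v$-transport equation for $(-\gamma)$ in \cref{sph-sym-equations-2}, namely $\partial_v\log(-\gamma)=(r/\lambda)(\partial_v\varphi)^2$. Using $\overline{\partial}_r=\lambda^{-1}\partial_v$ and $\partial_v\varphi=\lambda\overline{\partial}_r\varphi$ from the $\mathcal{I}$-gauge definitions in \cref{sec:coordinate-systems}, this reduces to the identity
\begin{equation*}
\overline{\partial}_r\log(-\gamma)=r(\overline{\partial}_r\varphi)^2.
\end{equation*}
The right-hand side is exactly the first term on the right of the lemma (with trivial constant), so the base case holds.

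For the inductive step with $|\alpha|\ge 1$, assume the estimate for all multi-indices strictly less than $\alpha$ and let $L\in\Gamma^\alpha$. Commute $\overline{\partial}_r$ past $L$ and apply the base case to write
\begin{equation*}
\overline{\partial}_r L\log(-\gamma)=L\bigl(r(\overline{\partial}_r\varphi)^2\bigr)+[\overline{\partial}_r,L]\log(-\gamma).
\end{equation*}
Expand the first term via the Leibniz rule, using \cref{Lr-est} to bound $L$-derivatives of $r$ by $\mathcal{O}(1)r$, and rewrite each $L_i\overline{\partial}_r\varphi$ as $\overline{\partial}_rL_i\varphi+[L_i,\overline{\partial}_r]\varphi$. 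The commutator $[L_i,\overline{\partial}_r]\varphi$, obtained by iterating \cref{dr-U-commutation,dr-comm-1-V,dr-S-comm}, is schematically $r^{-2}\mathfrak{G}_{<\alpha}\Gamma^{<\alpha}\varphi$ (with a slowly growing prefactor). Applying Young's inequality to each product then yields the first two target terms: the top-order parts give $r(\overline{\partial}_r\Gamma^{\le\alpha}\varphi)^2$, and the commutator parts give $r^{-3+2s}|\Gamma^{\le\alpha}\varphi|^2$ once the $r^{2s}$ growth is absorbed into $\mathcal{G}_{\alpha,s}$.

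For the remaining term $[\overline{\partial}_r,L]\log(-\gamma)$, I would iterate \cref{dr-comm-1} to express $[\overline{\partial}_r,L]$ as a linear combination of operators of the schematic form $r^{-2}\mathfrak{G}_{<\alpha}[\overline{\partial}_r+U]\Gamma^{<\alpha}$, plus an $\mathbf{1}_{r\ge\Rc}\overline{\partial}_r\Gamma^{<\alpha}$ contribution coming from the $[\overline{\partial}_r,S]$ formula. The $\overline{\partial}_r\Gamma^{<\alpha}\log(-\gamma)$ pieces are controlled by the inductive hypothesis at lower order, and the extra $r^{-2}$ factor ensures they feed back into the three target categories at order $\alpha$. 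The $U\Gamma^{<\alpha}\log(-\gamma)$ pieces are absorbed directly into the third term $r^{-2+s}|\Gamma^{\le\alpha}\log(-\gamma)|$ after combining $r^{-2}\mathfrak{G}_{<\alpha}$ into the $\mathcal{G}_{\alpha,s}$-coefficient.

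The main obstacle is bookkeeping rather than any analytic difficulty: one must verify that all coefficient functions arising from iterating \cref{dr-comm-1} (namely $\mathfrak{g}_0$, $\mathfrak{b}_V$, $\mathfrak{B}_V$, and the higher-order analogues provided by \cref{D-comm}) collectively fit into $C(\mathcal{G}_{\alpha,s})$ according to the definitions in \cref{sec:schematic-geom-quantities}, and that the $r$-weights balance so that Young's inequality reproduces exactly the $r$ and $r^{-3+2s}$ weights. The transition region $R_0\le r\le 2\Rc$, where $\overline{\partial}_r$ differs from $V$, is handled by the support properties of the cutoff $\chi_{r\lesssim\Rc}$ in the definitions of $V$ and $S$: all localized commutator contributions appear with bounded-$r$ geometric factors that are absorbed into the implicit constant.
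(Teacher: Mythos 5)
Your proposal is correct and follows essentially the same route as the paper: the base case from the transport equation $\overline{\partial}_r\log(-\gamma)=r(\overline{\partial}_r\varphi)^2$, the splitting $\overline{\partial}_rL=L\overline{\partial}_r+[\overline{\partial}_r,L]$, control of the commutator by iterating \cref{dr-comm-1} (which the paper likewise states as an omitted induction), and Young's inequality to produce the $r$ and $r^{-3+2s}$ weights. The only quibble is that the $\mathbf{1}_{r\ge\Rc}\overline{\partial}_r\Gamma^{<\alpha}$ contribution from $[\overline{\partial}_r,S]$ carries no extra $r^{-2}$ factor, but this does not matter since the inductive hypothesis applied to $\overline{\partial}_r\Gamma^{<\alpha}\log(-\gamma)$ already lands in the three target categories with an $O(1)$ coefficient.
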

\begin{proof}
\step{Step 1: Computing the commutator \([\overline{\partial}_r,L]\)}. We have
\begin{equation}\label{gamma-bound-commutator}
\begin{split}
\mathbf{1}_{r\ge R_0}[\overline{\partial}_r,L] =_{\mathrm{s}} \mathcal{O}(\mathfrak{b}_\alpha{},r^{-s}\mathfrak{g}_{<\alpha{}})[\overline{\partial}_r\Gamma{}^{<\alpha{}} + r^{-2+s}\Gamma{}^{\le \alpha{}}]
\end{split}
\end{equation}
We omit the proof, but the argument is an induction with base case
\cref{dr-comm-1}. A similar argument is done in detail in the proof of \cref{V-comm}.

\step{Step 2: Proof of the desired \cref{gamma-bound-step-1c}.} It is enough to show
\begin{equation}\label{gamma-bound-step-1b}
\mathbf{1}_{r\ge R_0}\overline{\partial}_rL\log (-\gamma{}) =_{\mathrm{s}} \mathcal{G}_{\alpha{},s}[r\set{\overline{\partial}_r\Gamma{}^{\le \alpha{}}\varphi{},r^{-2+s}\Gamma{}^{\le \alpha{}}\varphi{}}^2 + r^{-2+s}\Gamma{}^{\le \alpha{}}\log (-\gamma{})].
\end{equation}
When \(\abs{\alpha{}} = 0\), \cref{gamma-bound-step-1b} follows from the equation \(\overline{\partial}_r\log (-\gamma{}) =
r(\overline{\partial}_r\varphi{})^2\). From this equation and \cref{gamma-bound-commutator}, we
get
\begin{equation}\label{gamma-bound-step-1b-prep}
\begin{split}
\overline{\partial}_rL\log (-\gamma{}) &= L\overline{\partial}_r\log (-\gamma{}) + [\overline{\partial}_r,L]\log (-\gamma{}) \\
&=_{\mathrm{s}} \mathcal{G}_{\alpha{},s}[r\Gamma{}^{\le \alpha{}}\overline{\partial}_r\varphi{}\Gamma{}^{\le \alpha{}}\overline{\partial}_r\varphi{} + \overline{\partial}_r\Gamma{}^{<\alpha{}}\log (-\gamma{}) + r^{-2+s}\Gamma{}^{\le \alpha{}}\log (-\gamma{})]. \\
\end{split}
\end{equation}
To complete the inductive proof of \cref{gamma-bound-step-1b} for \(L\in \Gamma^\alpha{}\), use
\cref{gamma-bound-commutator} for the first term on the right of
\cref{gamma-bound-step-1b-prep} and for the second term use \cref{gamma-bound-step-1b}
for multi-indices \(< \alpha{}\).
\end{proof}
\begin{lemma}[Estimate for \(\partial{}_u^n\log (-\gamma{})\)]
Let \(n\ge 0\). We have
\begin{equation}
\mathbf{1}_{r\ge R_0}\abs{\partial{}_u^n\log (-\gamma{})}\le r^{-2}C(\mathfrak{b}_{U^{<n}})\mathcal{P}_{U^{<n},0}^2.
\end{equation}
\label{gamma-bound-step-3}
\end{lemma}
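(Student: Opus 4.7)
The plan is to integrate the $v$-transport equation $\partial_v\log(-\gamma) = \tfrac{r}{\lambda}(\partial_v\varphi)^2$ (from \cref{sph-sym-equations-2}) from $(u,v)$ out to null infinity, differentiate $n$ times in $u$ under the integral, and expand by Leibniz, using the wave equation to convert any $\partial_u$-derivatives hitting $\partial_v\varphi$ into $\partial_u^{\le k}\varphi$ paired with an extra factor of $r^{-1}$. The gauge normalization $\lim_{v\to\infty}(-\gamma)(u,v) = 1$ yields, in $\set{r\ge R_0}$,
\begin{equation*}
\log(-\gamma)(u,v) = -\int_v^\infty \frac{r}{\lambda}(\partial_v\varphi)^2(u,v')\,dv'.
\end{equation*}

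I will proceed by induction on $n$. For $n = 0$ the bound is immediate: $\lambda^{-1} \le C(\mathfrak{b}_0)$ and $\lambda \ge c > 0$ on $\set{r\ge R_0}$ by \cref{zeroth-order-geometric-bounds}, $\abs{\partial_v\varphi} \le r^{-2}\mathcal{P}_{0,0}$ via the $\norm{r^2\partial_v\psi}_{L^\infty}$ term in $\mathcal{P}_p$, and the change of variables $\lambda\,dv' = dr'$ reduces the integral to $\int_r^\infty r'^{-3}\,dr' \sim r^{-2}$. For $n\ge 1$, assuming the result at orders $<n$, the pointwise bounds established below on the integrand justify differentiation under the integral sign and give
\begin{equation*}
\partial_u^n\log(-\gamma)(u,v) = -\int_v^\infty \partial_u^n\Bigl[\frac{r}{\lambda}(\partial_v\varphi)^2\Bigr](u,v')\,dv'.
\end{equation*}

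Expanding by Leibniz produces a sum of terms $(\partial_u^a r)(\partial_u^b\lambda^{-1})(\partial_u^c\partial_v\varphi)(\partial_u^d\partial_v\varphi)$ with $a+b+c+d=n$. Writing $\partial_u = -\nu\,U$, each geometric factor $\partial_u^a r = \partial_u^{a-1}\nu$ (for $a\ge 1$) and $\partial_u^b\lambda^{-1}$ reduces to sums of $U^{<n}$-derivatives of $r$, $\kappa$, $\varpi$, $(-\gamma)$, and $1-\mu$; these are controlled by $C(\mathfrak{b}_{U^{<n}})$ via \cref{Gamma-omega-bounds}, \cref{Gamma-kappa-bounds}, and the inductive hypothesis. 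Note $\partial_u^a r = O(1)$ for $a\ge 1$ and equals $r$ when $a=0$, so a factor of $r$ arises only in this last case. To handle $\partial_u^k\partial_v\varphi$ for $k\ge 1$, I iterate the wave equation $\partial_u\partial_v\varphi = -\tfrac{\lambda}{r}\partial_u\varphi - \tfrac{\nu}{r}\partial_v\varphi$ and show by induction on $k$ that
\begin{equation*}
\partial_u^k\partial_v\varphi = -\frac{\lambda}{r}\partial_u^k\varphi + \sum_{j<k}[\text{geometric of order }\le k,\ \text{with }r^{-1}\text{ weight}]\,(\partial_u^j\varphi + \partial_v\varphi),
\end{equation*}
so no $\partial_u$-derivatives of $\partial_v\varphi$ survive in the reduction. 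Since $\mathcal{P}_{U^{<n},0}$ controls $\norm{rU^j\varphi}_{L^\infty}$ for $j\le n$ (via the $\norm{rU\psi}_{L^\infty}$ term of $\mathcal{P}_0[U^{n-1}\varphi]$) and $\norm{r^2\partial_v\varphi}_{L^\infty}$, one obtains $\abs{\partial_u^k\partial_v\varphi} \le C(\mathfrak{b}_{U^{<n}})r^{-2}\mathcal{P}_{U^{<n},0}$.

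Combining the factors, every Leibniz term is bounded by $C(\mathfrak{b}_{U^{<n}})r^{-3}\mathcal{P}_{U^{<n},0}^2$ (the worst case $a=0$ contributing the single factor of $r$), and integrating $\int_r^\infty r'^{-3}\,dr' \sim r^{-2}$ yields the claim. The main technical obstacle will be the careful iterative reduction of $\partial_u^k\partial_v\varphi$ through the wave equation combined with the conversion $\partial_u = -\nu\,U$ applied to the geometric factors — i.e., verifying that every Leibniz term leaves behind only $U^{<n}$-derivatives of $r$, $\kappa$, $\varpi$, and $(-\gamma)$, so that the coefficients are genuinely controlled by $\mathfrak{b}_{U^{<n}}$ rather than by strictly higher-order geometric quantities.
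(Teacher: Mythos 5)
Your proposal is correct and follows essentially the same route as the paper: the paper likewise differentiates the $\partial_v$-transport equation for $\log(-\gamma)$ $n$ times in $u$, reduces $\partial_u^k\partial_v\varphi$ via the wave equation so that only $U^{<n}$-derivatives of $\varphi$ and the geometry survive, obtains the pointwise bound $\abs{\partial_v\partial_u^n\log(-\gamma)}\le r^{-3}C(\mathfrak{b}_{U^{<n}})\mathcal{P}_{U^{<n},0}^2$, and integrates in $v$ to null infinity after justifying the vanishing of the boundary term (your "differentiation under the integral sign" step, which the paper carries out via the same uniform-convergence argument). No gaps.
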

\begin{proof}
\step{Step 1: Computing \(\partial{}_v\partial{}_u^n\varphi{}\).} We claim that for \(n\ge 1\), we have
\begin{equation}\label{gamma-bound-step-3a}
\partial{}_v\partial{}_u^n\varphi{} =_{\mathrm{s}} r^{-1}\mathcal{O}(\partial{}_u^{\le n}\lambda{},\partial{}_u^{<n}(-\nu{}))[\partial{}_u\partial{}_u^{<n}\varphi{} + \partial{}_v\varphi{}].
\end{equation}
When \(n = 1\), this follows from the wave equation,
\begin{equation}\label{gamma-bound-step-3-wave}
\partial{}_v\partial{}_u\varphi{} = \kappa{}(-\nu{})\Box{}\varphi{} + r^{-1}(-\lambda{}\partial{}_u\varphi{} + (-\nu{})\partial{}_v\varphi{}) = r^{-1}(-\lambda{}\partial{}_u\varphi{} + (-\nu{})\partial{}_v\varphi{}),
\end{equation}
and the inductive step follows since \(\partial{}_ur^{-1} = r^{-1}\cdot r^{-1}(-\nu{})\).

\step{Step 2: Estimating \(\partial{}_v\partial{}_u^n\log (-\gamma{})\).} We show that for \(n\ge 1\), we have
\begin{equation}\label{gamma-bound-step-3b}
\mathbf{1}_{r\ge R_0}\abs{\partial{}_v\partial{}_u^n\log (-\gamma{})}\le r^{-3}C(\mathfrak{b}_{U^{<n}})\mathcal{P}_{U^{<n},0}^2.
\end{equation}
We first show that
\begin{equation}\label{gamma-bound-step-3-step-2}
\mathbf{1}_{r\ge R_0}\partial{}_v\partial{}_u^n\log (-\gamma{}) =_{\mathrm{s}} \mathfrak{b}_{U^{<n}}[r^{-1}\partial{}_u\partial{}_u^{<n}\varphi{}\partial{}_u\partial{}_u^{<n}\varphi{} + \partial{}_u\partial{}_u^{<n}\varphi{}\partial{}_v\varphi{} + (\partial{}_v\varphi{})^2].
\end{equation}
For the case \(n = 1\), we compute
\begin{equation}
\begin{split}
\partial{}_u\partial{}_v\log (-\gamma{}) &= \partial{}_u(\lambda{}^{-1}r(\partial{}_v\varphi{})^2) \\
&= (-\nu{})\lambda{}^{-1}(\partial{}_v\varphi{})^2 - 2r^{-1}\lambda{}^{-1}\kappa{}(-\nu{})(\varpi{}-\mathbf{e}^2/r)(\partial{}_v\varphi{})^2 + \partial{}_u\varphi{}\partial{}_v\varphi{} + (-\nu{})\lambda{}^{-1}(\partial{}_v\varphi{})^2 \\
&=_{\mathrm{s}} \mathcal{O}(\lambda{}^{-1},(-\nu{}),\kappa{},r^{-1}\varpi{})[\partial{}_u\varphi{}\partial{}_v\varphi{} + (\partial{}_v\varphi{})^2],
\end{split}
\end{equation}
so that
\begin{equation}
\mathbf{1}_{r\ge R_0}\partial{}_u\partial{}_v\log (-\gamma{}) =_{\mathrm{s}} \mathfrak{b}_0[\partial{}_u\varphi{}\partial{}_v\varphi{} + (\partial{}_v\varphi{})^2].
\end{equation}
For the inductive step, we have
\begin{equation}
\begin{split}
\mathbf{1}_{r\ge R_0}\partial{}_v\partial{}_u\log (-\gamma{}) &=_{\mathrm{s}} \partial{}_u^{\le 1}\mathfrak{b}_{U^{<n}}[r^{-1}\partial{}_u\partial{}_u^{<n+1}\varphi{}\partial{}_u\partial{}_u^{<n+1}\varphi{} + \partial{}_u\partial{}_u^{<n+1}\varphi{}\partial{}_v\varphi{}\\
&\qquad + (\partial{}_v\varphi{})^2 + \partial{}_u\partial{}_u^{<n+1}\varphi{}\partial{}_u\partial{}_v\varphi{} + \partial{}_u\partial{}_v\varphi{}\partial{}_v\varphi{}].
\end{split}
\end{equation}
The first three terms are of the desired form. For the last two terms, use the
wave equation \cref{gamma-bound-step-3-wave}. Since \(\partial{}_u = (-\nu{})U\), the statement
\cref{gamma-bound-step-3-step-2} follows.

The statements
\begin{equation}\label{gamma-bound-step-3b-2}
\partial{}_u^n =_{\mathrm{s}} \mathcal{O}(U^{<n}(-\nu{}))UU^{<n}\qquad \lambda{},(-\nu{}),\partial{}_u\lambda{} =_{\mathrm{s}}\mathcal{O}(r^{-1}\varpi{},\kappa{},(-\gamma{}))
\end{equation}
follow from an induction argument starting from \(\partial_u = (-\nu{})U\) and from the
equation for \(\partial{}_u\lambda{}\), respectively. Apply
\cref{gamma-bound-step-3b-2} to \cref{gamma-bound-step-3-step-2} to get
\begin{equation}
\mathbf{1}_{r\ge R_0}\partial{}_v\partial{}_u^n\log (-\gamma{}) =_{\mathrm{s}} \mathfrak{b}_{U^{<n}}[r^{-1}UU^{<n}\varphi{}UU^{<n}\varphi{} + UU^{<n}\varphi{}\partial{}_v\varphi{} + (\partial{}_v\varphi{})^2].
\end{equation}
The desired \cref{gamma-bound-step-3b} now follows from the definition of the
pointwise norm.

\step{Step 3: Showing \(\lim_{v\to \infty}\partial{}_u^n\log (-\gamma{})(u,v)=0\).} Let \(n \ge 1\). We show
that a boundedness statement \(C(\mathfrak{b}_{U^{<n}})\mathcal{P}_{U^{<n},0}^2
< \infty\) implies the qualitative result \(\lim_{v\to \infty}\partial_u^{\le
n}\log (-\gamma{})(u,v) = 0\). The hypothesis together with Step 2 implies that
\(\mathbf{1}_{r\ge R_0}\abs{\partial_v\partial_u^{\le n}\log (-\gamma{})}\) is
integrable in \(v\) towards infinity (uniformly on compact subsets of \(u\)),
and so \(\partial{}_u^{\le n}\log (-\gamma{})(u,v)\) converges uniformly as
\(v\to \infty\) on compact subsets of \(u\). A basic fact from real analysis is
that if \(f : \R^2_{u,v}\to \R\) is such that \(\lim_{v\to \infty}f(u,v)\)
exists and \((\partial{}_uf)(u,v)\) converges uniformly as \(v\to \infty\), then
\(\partial{}_u\lim_{v\to \infty}f(u,v)\) exists and is equal to \(\lim_{v\to
\infty}\partial{}_uf(u,v)\). Since \(\lim_{v\to \infty}\log (-\gamma{})(u,v) =
0\), an induction argument completes the proof.

\step{Step 4: Completing the proof of \cref{gamma-bound-step-3}.} Let \(r(u,v)\ge R_0\).
Integrate \cref{gamma-bound-step-3b} in \(v\) on \([v,\infty)\). The boundary term
at infinity vanishes by Step 3c, so we conclude \cref{gamma-bound-step-3}.
\end{proof}
\begin{lemma}
For \(\abs{\alpha{}}\ge 1\), we have
\begin{equation}
\mathbf{1}_{r\ge R_0}L =_{\mathrm{s}} \mathbf{1}_{r\ge R_0}\mathcal{O}(\mathfrak{b}_{<\alpha{}},r^{-1}\mathfrak{G}_{<\alpha{}})[\partial{}_u^{\le \abs{\alpha{}}} + \overline{\partial}_r\Gamma{}^{\le \alpha{}-V} + S\Gamma{}^{\le \alpha{}-S}]
\end{equation}
\label{gamma-bound-step-2}
\end{lemma}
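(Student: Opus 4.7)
The plan is to prove the identity by induction on $\abs{\alpha{}}$, using the coordinate-change lemmas (\cref{U-du-coordinate-change}, \cref{V-dv-coordinate-change}) and the rearrangement formula (\cref{rearrangement-formula}) already established. The base case $\abs{\alpha{}} = 1$ is direct: for $L = U$, \cref{U-du-coordinate-change} gives $\mathbf{1}_{r\ge R_0}U =_{\mathrm{s}} \mathfrak{b}_0 \partial_u$, a term of the form $\mathcal{O}(\mathfrak{b}_0)\partial_u^{\le 1}$; for $L = V$, \cref{V-dr-coordinate-change} gives $V =_{\mathrm{s}} \overline{\partial}_r + \mathbf{1}_{r\le 2\Rc}\mathfrak{b}_0[\overline{\partial}_r + U]$, and the inner $U$ is converted to $\partial_u$ via \cref{U-du-coordinate-change}; for $L = S$, we just write $L = S\cdot \mathrm{Id}$, which is already in the form $S\Gamma^{\le \alpha{}-S}$.

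For the inductive step, I would split into three cases. If $\alpha_S \ge 1$, the rearrangement formula lets me write $L =_{\mathrm{s}} SL' + \mathcal{C}_{\le \alpha-S}D\Gamma^{\le \alpha-U-V}$ for some $L' \in \Gamma^{\alpha-S}$; the first term is already in the desired form $S\Gamma^{\le \alpha-S}$, and the error has differential order $\le \abs{\alpha}-1$, so is handled by the inductive hypothesis. If $\alpha_S = 0$ and $\alpha_V \ge 1$, I would similarly rearrange to $L =_{\mathrm{s}} VL' + (\text{lower-order error})$ with $L' \in \Gamma^{\alpha-V}$, then apply \cref{V-dr-coordinate-change} to obtain
\[ VL' = \overline{\partial}_r L' + \mathbf{1}_{r\le 2\Rc}\mathcal{O}(\mathfrak{b}_0)[\overline{\partial}_r L' + UL']. \]
The two $\overline{\partial}_r L'$ terms are of the required form $\overline{\partial}_r \Gamma^{\le \alpha-V}$. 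The remaining term $\mathbf{1}_{r\le 2\Rc}UL'$ still has total order $\abs{\alpha}$, but $UL'$ lives in $\Gamma^{\alpha-V+U}$, which has strictly fewer $V$-factors; a nested induction on $\alpha_V$ (with $\abs{\alpha}$ held fixed) eventually reduces to the final case. If $\alpha_S = \alpha_V = 0$, then $L = U^{\abs{\alpha}}$, and iterating \cref{U-du-coordinate-change} together with the Leibniz rule expands $\mathbf{1}_{r\ge R_0}U^{\abs{\alpha}}$ as a sum of products of derivatives of $\mathfrak{b}_0$ times $\partial_u^k$ for $k\le \abs{\alpha}$.

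The main obstacle I anticipate is the bookkeeping of schematic coefficients through the nested induction in Case 2 and verifying that every coefficient produced is subsumed by $\mathcal{O}(\mathfrak{b}_{<\alpha},r^{-1}\mathfrak{G}_{<\alpha})$. The rearrangement formula initially produces $\mathcal{C}_{\le \alpha-S}$-type factors containing the strong geometric quantities $\mathfrak{B}$ and $\mathfrak{G}$, which by the algebra of schematic quantities and the identity $\mathcal{C}_{<\alpha} =_{\mathrm{s}} \mathfrak{b}_\alpha$ recorded after \cref{C-alpha-def} fit into the required class once fully expanded. Likewise, the derivatives of $\mathfrak{b}_0$ generated in the expansion in Case 3 are absorbed into $\mathfrak{b}_{<\alpha}$ and $\mathfrak{G}_{<\alpha}$ by the differentiation rules for $\mathcal{O}$-notation described in \cref{sec:O-notation}.
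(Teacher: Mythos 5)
Your proof is correct, but it takes a genuinely different route from the paper's. The paper proves a stronger auxiliary statement with an extra integer parameter $k$, namely $\mathbf{1}_{r\ge R_0}L =_{\mathrm{s}} \mathcal{O}(\mathfrak{b}_{\le\alpha-U},r^{-1}\mathfrak{G}_{\le\alpha-U})[\partial_u^{\le k}\Gamma^{\le\alpha-kU}+\overline{\partial}_r\Gamma^{\le\alpha-V}+S\Gamma^{\le\alpha-S}]$, and inducts on the pair $(\alpha,k)$: the case $k=1$ peels off the \emph{leading} vector field of $L$ and converts a leading $U$ to $\partial_u$, and the step $k\to k+1$ re-expands the block $\partial_u^{\le k}\Gamma^{\le\alpha-kU}$ and then commutes $U^{\le k}$ past $\overline{\partial}_r$ (via the commutator \cref{gamma-bound-commutator}) and past $S$ (via a separately proven formula for $[S,U^k]$ built on \cref{U-S-comm-1}); taking $k=\abs{\alpha}$ collapses the first term to $\partial_u^{\le\abs{\alpha}}$. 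You instead bring an $S$ or $V$ to the \emph{front} using \cref{rearrangement-formula}, which packages all the commutator control into a lower-order error $\mathcal{C}_{\le\alpha-S}D\Gamma^{\le\alpha-U-V}$ handled by induction on total degree, and you dispose of the residual near-horizon term $\mathbf{1}_{r\le 2\Rc}UL'$ from $V-\overline{\partial}_r$ by a secondary induction trading a $V$ for a $U$ (valid since $\alpha-V+U<\alpha$ in the paper's total order). Your approach is shorter and avoids the two-parameter bookkeeping and the block commutators $[\overline{\partial}_r,U^k]$, $[S,U^k]$; the paper's yields the marginally sharper coefficient $\mathfrak{b}_{\le\alpha-U}$ and retains the intermediate $\partial_u^{\le k}\Gamma^{\le\alpha-kU}$ information, neither of which is needed for the lemma as stated. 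The one point you flagged as an obstacle does work out: $\mathcal{C}_{\le\alpha-S}=\mathcal{O}(\mathfrak{B}_{\le\alpha-S},r^{-1}\mathfrak{G}_{\le\alpha-S})$ is absorbed into $\mathcal{O}(\mathfrak{b}_{<\alpha},r^{-1}\mathfrak{G}_{<\alpha})$ because $\mathfrak{b}_\gamma$ contains $\mathfrak{B}_{\beta'}$ for $\beta'<\gamma$ by \cref{b-alpha-def}, and one can take $\gamma=\alpha-S+U<\alpha$ when $\alpha_S\ge1$ (and $\mathcal{C}_{\le\alpha-S}$ involves only multi-indices of degree $\le\abs{\alpha}-2$ when $\alpha_S=0$).
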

\begin{proof}
We will in fact show that for a a multi-index \(\alpha{}\) and a non-negative integer
\(k\), we can rewrite \(L\in \Gamma^\alpha{}\) as follows:
\begin{equation}\label{gamma-bound-step-2-desired}
\mathbf{1}_{r\ge R_0}L = \mathbf{1}_{r\ge R_0}\mathcal{O}(\mathfrak{b}_{\le \alpha{}-U},r^{-1}\mathfrak{G}_{\le \alpha{}-U})[\partial{}_u^{\le k}\Gamma{}^{\le \alpha{}-kU} + \overline{\partial}_r\Gamma{}^{\le \alpha{}-V} + S\Gamma{}^{\le \alpha{}-S}].
\end{equation}
Schematically write \(\text{good}_{\alpha{},k}\) for terms appearing on the right of
\cref{gamma-bound-step-2-desired} for the pair \((\alpha{},k)\). The strategy is to induct
on the pair \((\alpha{},k)\).

\step{Step 1: Preliminary observations and base case.} Observe that
\cref{gamma-bound-step-2-desired} for \((\alpha{},0)\) is trivial. Moreover, since
\(\Gamma^{\le \alpha{}-kU} = \set{0}\) when \(k\ge \abs{\alpha{}}\), we know
\cref{gamma-bound-step-2-desired} for \((\alpha{},\abs{\alpha{}})\) implies
\cref{gamma-bound-step-2-desired} for \((\alpha{},k)\) whenever \(k\ge \abs{\alpha{}}\).

We now prove \cref{gamma-bound-step-2-desired} for pairs \((\alpha{},1)\). Let \(L\in
\Gamma^\alpha{}\). Since \(L\) begins with either \(U\), \(V\), or \(S\), we
have
\begin{equation}
\begin{split}
L &=_{\mathrm{s}} U\Gamma{}^{\le \alpha{}-U} + V\Gamma{}^{\le \alpha{}-V} + S\Gamma{}^{\le \alpha{}-S}. \\
\end{split}
\end{equation}
After rewriting the first derivative that makes up \(L\) using
\cref{U-du-coordinate-change,V-dr-coordinate-change,S-in-terms-of-U-V}, we get
\begin{equation}
\mathbf{1}_{r\ge R_0}L =_{\mathrm{s}} \mathbf{1}_{r\ge R_0}\mathfrak{b}_0[\partial{}_u\Gamma{}^{\le \alpha{}-U} + \overline{\partial}_r\Gamma{}^{\le \alpha{}-V} + S\Gamma{}^{\le \alpha{}-S}] = \text{good}_{\alpha{},1}.
\end{equation}
as desired. We have used \(\abs{\alpha{}}\ge 1\) to conclude \(\mathfrak{b}_0=_{\mathrm{s}}\mathfrak{b}_{\le
\alpha{}-U}\).

\step{Step 2: Inductive step.} Now suppose \(\abs{\alpha{}}\ge 2\) and \(1\le k
<\abs{\alpha{}}\) and \cref{gamma-bound-step-2-desired} holds for pairs \((\beta{},j)\)
such that \(\beta{} < \alpha{}\) or such that \(\beta{} = \alpha{}\) and \(0\le
j \le k\). We will show that \cref{gamma-bound-step-2-desired} holds for the pair
\((\alpha{},k + 1)\). First, \cref{gamma-bound-step-2-desired} for the pair
\((\alpha{},k)\) gives
\begin{equation}
\begin{split}
\mathbf{1}_{r\ge R_0}L &=_{\mathrm{s}} \mathbf{1}_{r\ge R_0}\mathcal{O}(\mathfrak{b}_{\le \alpha{}-U},r^{-1}\mathfrak{G}_{\le \alpha{}-U})[\partial{}_u^{\le k}\Gamma{}^{\le \alpha{}-kU} + \overline{\partial}_r\Gamma{}^{\le \alpha{}-V} + S\Gamma{}^{\le \alpha{}-S}] \\
&=_{\mathrm{s}} \mathbf{1}_{r\ge R_0}\mathcal{O}(\mathfrak{b}_{\le \alpha{}-U},r^{-1}\mathfrak{G}_{\le \alpha{}-U})\partial{}_u^{\le k}\Gamma{}^{\le \alpha{}-kU}+ \text{good}_{\alpha{},k+1}.
\end{split}
\end{equation}
Now use \cref{gamma-bound-step-2-desired} for the pair \((\alpha{}-kU,1)\) to get
\begin{equation}
\begin{split}
\mathbf{1}_{r\ge R_0}L &=_{\mathrm{s}} \mathbf{1}_{r\ge R_0}\mathcal{O}(\mathfrak{b}_{\le \alpha{}-U},r^{-1}\mathfrak{G}_{\le \alpha{}-U})\\
&\qquad \partial{}_u^{\le k}[\mathcal{O}(\mathfrak{b}_{\le \alpha{}-(k+1)U},r^{-1}\mathfrak{G}_{\le \alpha{}-(k+1)U})[\partial{}_u^{\le 1}\Gamma{}^{\le \alpha{}-(k+1)U} + \overline{\partial}_r\Gamma{}^{\le \alpha{}-kU-V} + S\Gamma{}^{\le \alpha{}-kU-S}]] \\
&\qquad + \text{good}_{\alpha{},k+1} \\
&=_{\mathrm{s}} \mathbf{1}_{r\ge R_0}\mathcal{O}(\mathfrak{b}_{\le \alpha{}-U},r^{-1}\mathfrak{G}_{\le \alpha{}-U})(\partial{}_u^{\le k}\mathcal{O}(\mathfrak{b}_{\le \alpha{}-(k+1)U},r^{-1}\mathfrak{G}_{\le \alpha{}-(k+1)U}))[\partial{}_u^{\le k+1}\Gamma{}^{\le \alpha{}-(k+1)U} \\
&\qquad + \partial{}_u^{\le k}\overline{\partial}_r\Gamma{}^{\le \alpha{}-kU-V} + \partial{}_u^{\le k}S\Gamma{}^{\le \alpha{}-kU-S}] + \text{good}_{\alpha{},k+1}.
\end{split}
\end{equation}
Apply \(\mathbf{1}_{r\ge R_0}\partial{}_u^k =_{\mathrm{s}} \mathfrak{b}_{U^{<k}}U^{\le k}\) (see \cref{gamma-bound-step-3b-2}) and
\(\mathfrak{b}_{U^{<k}} =_{\mathrm{s}} \mathfrak{b}_{\le \alpha{}-U}\) (which holds since
\(k<\abs{\alpha{}}\)) to get
\begin{equation}\label{gamma-bound-step-2-a}
\begin{split}
\mathbf{1}_{r\ge R_0}L &=_{\mathrm{s}} \mathbf{1}_{r\ge R_0}\mathcal{O}(\mathfrak{b}_{\le \alpha{}-U},r^{-1}\mathfrak{G}_{\le \alpha{}-U})[U^{\le k}\overline{\partial}_r\Gamma{}^{\le \alpha{}-kU-V} + U^{\le k}S\Gamma{}^{\le \alpha{}-kU-S}] + \text{good}_{\alpha{},k+1} \\
&=_{\mathrm{s}} \mathbf{1}_{r\ge R_0}\mathcal{O}(\mathfrak{b}_{\le \alpha{}-U},r^{-1}\mathfrak{G}_{\le \alpha{}-U})[ [\overline{\partial}_r,U^{\le k}]\Gamma{}^{\le \alpha{}-kU-V} + [S,U^{\le k}]\Gamma{}^{\le \alpha{}-kU-S}] + \text{good}_{\alpha{},k+1}. \\
\end{split}
\end{equation}

We now compute the commutator terms associated to \(\overline{\partial}_r\) and to
\(S\). By \cref{gamma-bound-commutator}, we have
\begin{equation}\label{gamma-bound-step-2-dr}
\begin{split}
\mathbf{1}_{r\ge R_0}[\overline{\partial}_r,U^{\le k}]\Gamma{}^{\le \alpha{}-kU-V} &=_{\mathrm{s}} \mathcal{O}(\mathfrak{b}_{kU},r^{-1}\mathfrak{g}_{<kU})[\overline{\partial}_r\Gamma{}^{\le \alpha{}-V} + \Gamma{}^{<\alpha{}}] \\
&=_{\mathrm{s}} \mathcal{O}(\mathfrak{b}_{\le \alpha{}-U},r^{-1}\mathfrak{G}_{\le \alpha{}-U})[\overline{\partial}_r\Gamma{}^{\le \alpha{}-V} + \Gamma{}^{<\alpha{}}] =_{\mathrm{s}} \text{good}_{\alpha{},k+1}.
\end{split}
\end{equation}
We can pass to the second line because \(kU\le \alpha{}-U\) (since
\(\abs{\alpha{}}\ge k + 1\) and \((k + 1)U\) is the smallest multi-index of
order \(k+1\)). In the final identity we used \cref{gamma-bound-step-2-desired} for
pairs \((<\alpha{},k+1)\).

Next, one can easily prove by induction on \(k\), with base case \(k = 1\) given
by \cref{U-S-comm-1}, that
\begin{equation}
[S,U^k] =_{\mathrm{s}} \mathcal{O}(\mathfrak{b}_{\le kU},r^{-1}\mathfrak{G}_{\le kU})\Gamma{}^{\le (k-1)U+V}.
\end{equation}
It follows that
\begin{equation}\label{gamma-bound-step-2-S}
[S,U^k]\Gamma{}^{\le \alpha{}-kU-S} =_{\mathrm{s}} \mathcal{O}(\mathfrak{b}_{\le kU},r^{-1}\mathfrak{G}_{\le kU})\Gamma{}^{\le \alpha{}-U-S+V} =_{\mathrm{s}} \mathcal{O}(\mathfrak{b}_{\le \alpha{}-U},r^{-1}\mathfrak{G}_{\le \alpha{}-U})\Gamma{}^{<\alpha{}} =_{\mathrm{s}} \text{good}_{\alpha{},k+1}.
\end{equation}
In the second last identity we used \(\alpha{}\ge kU\) (since \(\abs{\alpha{}}\ge k + 1\)). In the
final identity we used \cref{gamma-bound-step-2-desired} for pairs \((<\alpha{},k+1)\).

Substitute \cref{gamma-bound-step-2-dr,gamma-bound-step-2-S} into
\cref{gamma-bound-step-2-a} to establish \cref{gamma-bound-step-2-desired} for
\((\alpha{},k + 1)\) and hence complete the induction.
\end{proof}
\begin{proposition}[Estimates for \(\Gamma{}^\alpha{}\log (-\gamma{})\)]
Let \(\abs{\alpha{}}\ge 0\). For \(s >0\) sufficiently small, we have
\begin{align}
r\tau{}\abs{\Gamma{}^{\le \alpha{}}\log (-\gamma{})}&\le C(\mathfrak{b}_\alpha{},r^{-s}\mathfrak{g}_\alpha{})[\mathcal{E}_{\alpha{},1}^2 + \mathcal{P}_{\alpha{},1}^2]&\quad \text{in }\set{r\ge R_0}, \label{gamma-bound-strong} \\
r^{1-s-\eta{}_0}\abs{\Gamma{}^{\le \alpha{}}\log (-\gamma{})}&\le C(\mathfrak{b}_{<\alpha{}},r^{-1}\mathfrak{G}_{<\alpha{}})[\mathcal{E}_{<\alpha{},1}^2 + \mathcal{P}_{<\alpha{},2-s}^2]&\quad \text{in }\set{r\ge R_0}. \label{gamma-bound-weak}
\end{align}
\label{Gamma-gamma-bounds}
\end{proposition}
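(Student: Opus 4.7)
The key idea exploits the gauge condition $(-\gamma)|_{\mathcal{I}} = 1$, hence $\log(-\gamma)|_{\mathcal{I}} = 0$, together with the fundamental theorem of calculus along constant-$u$ curves, on which $\overline{\partial}_r = \lambda^{-1}\partial_v$ and which reach $\mathcal{I}$ as $v \to \infty$. Both estimates are obtained by integrating $\overline{\partial}_r L\log(-\gamma)$ from $r(u,v)$ out to $\infty$ using the pointwise bound in \cref{gamma-bound-step-1c}, and absorbing the resulting self-referential contribution via the decaying Gr\"onwall inequality \cref{decaying-gronwall}, whose applicability relies on the integrability of the weight $r^{-2+s}$ at infinity.

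\textbf{Preliminary step.} I would first establish by induction on derivative order that $\lim_{v \to \infty} L\log(-\gamma)(u,v) = 0$ for each $L \in \Gamma^{\le \alpha}$, uniformly on compact $u$-intervals. The base case is the gauge condition; the inductive step mirrors Step 3 of the proof of \cref{gamma-bound-step-3}: \cref{gamma-bound-step-1c} provides an integrable-in-$v$ bound on $\overline{\partial}_r L\log(-\gamma)$, so $L\log(-\gamma)$ converges uniformly in $v$, and the standard interchange of uniform limit and derivative propagates the vanishing to one higher order.

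\textbf{Main estimate.} For \eqref{gamma-bound-strong}, I fix $L \in \Gamma^{\le\alpha}$ and $(u,v) \in \set{r \ge R_0}$, write
\begin{equation*}
L\log(-\gamma)(u,v) = -\int_{r(u,v)}^\infty \overline{\partial}_r L\log(-\gamma)(u,r')\, \dd{}r',
\end{equation*}
and apply \cref{gamma-bound-step-1c} to the integrand. The energy-type term $\int r' \abs{\overline{\partial}_r \Gamma^{\le\alpha}\varphi}^2 \,\dd{}r'$ equals $\int (r/\lambda)(\partial_v \Gamma^{\le\alpha}\varphi)^2 \,\dd{}v$ on the constant-$u$ curve, which is one $r$-power weaker than the energy integrand $r^2/\kappa$; splitting into $\set{r \le v/2}$ and $\set{r \ge v/2}$ as in \cref{u-energy-estimate} controls it by $v^{-1}\mathcal{E}_{\alpha,1}^2$, producing the desired $\tau$-weight. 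The $r'^{-3+2s}\abs{\Gamma^{\le\alpha}\varphi}^2$ term is absorbed by the pointwise norm $\mathcal{P}_{\alpha,1}^2$, and \cref{decaying-gronwall} closes the self-referential term. For \eqref{gamma-bound-weak}, which requires only $\mathcal{E}_{<\alpha,1}$ on the RHS, I would first apply \cref{gamma-bound-step-2} to write $L =_{\mathrm{s}} \mathcal{O}(\mathfrak{b}_{<\alpha}, r^{-1}\mathfrak{G}_{<\alpha})[\partial_u^{\le\abs{\alpha}} + \overline{\partial}_r\Gamma^{\le\alpha-V} + S\Gamma^{\le\alpha-S}]$; the $\partial_u$-piece is handled directly by \cref{gamma-bound-step-3}, while the $\overline{\partial}_r$- and $S$-pieces apply the integration argument above with strictly lower multi-indices $\alpha - V$ and $\alpha - S$ (for the $S$-piece, expanding $S$ via its defining formula in terms of $\overline{\partial}_r$ and $\overline{\partial}_u$ outside a small-$r$ region reduces it to terms already treated).

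\textbf{Main obstacle.} The most delicate issue is tracking the $\tau$-weight claimed in \eqref{gamma-bound-strong}: \cref{gamma-bound-step-1c} is purely pointwise in $r$, so all $\tau$-decay must emerge from the energy splitting of \cref{u-energy-estimate} and from the $\tau$-weights packaged in $\mathcal{P}_{\alpha,1}$. Ensuring these survive the Gr\"onwall step without losing the claimed $r\tau$-weight requires careful bookkeeping. A secondary obstacle is handling the $S\Gamma^{\le\alpha-S}\log(-\gamma)$ piece in the transition region $R_0 \le r \le 2\Rc$, where the cutoff $\chi$ is nontrivial; this should follow by direct expansion of $S$ using its defining formula and applying the integration strategy above.
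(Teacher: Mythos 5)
Your main quantitative estimates follow the paper's route (integrate \cref{gamma-bound-step-1c} out to $\mathcal{I}$ and close with \cref{decaying-gronwall} for \eqref{gamma-bound-strong}; decompose via \cref{gamma-bound-step-2}, \cref{gamma-bound-step-3} and a pointwise use of \cref{gamma-bound-step-1c} for \eqref{gamma-bound-weak}), but your \emph{preliminary step} has a genuine gap. The mechanism of Step 3 of \cref{gamma-bound-step-3} — uniform convergence of the derivative plus interchange of limit and derivative — only propagates the vanishing at $\mathcal{I}$ through $\partial_u$-derivatives, i.e.\ derivatives \emph{transverse} to the direction of the limit. For $L$ containing $V$ or $S$ (whose leading parts for large $r$ are $\overline{\partial}_r$ and $r\overline{\partial}_r$, i.e.\ derivatives \emph{along} the ray to $\mathcal{I}$), knowing $\lim_{v\to\infty}L'\log(-\gamma)=0$ at lower order tells you nothing about $\lim_{v\to\infty}\overline{\partial}_rL'\log(-\gamma)$; moreover the self-referential term $r^{-2+s}\abs{\Gamma^{\le\alpha}\log(-\gamma)}$ in \cref{gamma-bound-step-1c} means that even the integrability you invoke presupposes a bound at the \emph{same} order. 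The qualitative decay at order $\alpha$ cannot be front-loaded by a soft argument: it has to be read off from the quantitative weak bound \eqref{gamma-bound-weak} at order $\alpha$ (which forces $\Gamma^{\le\alpha}\log(-\gamma)\to 0$ since $r\to\infty$), and that weak bound in turn consumes the \emph{strong} bound \eqref{gamma-bound-strong} at orders $<\alpha$ (e.g.\ to absorb the $\tau\abs{\Gamma^{<\alpha}\log(-\gamma)}$ term coming from the $u\overline{\partial}_u$-part of $S$). The correct architecture is therefore the interleaved induction $\text{strong}_{<\alpha}\Rightarrow\text{weak}_\alpha\Rightarrow\text{qualitative}_\alpha\Rightarrow\text{strong}_\alpha$, not "qualitative at all orders first, then quantitative."

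A second, fixable, error is in your weight bookkeeping for \eqref{gamma-bound-strong}: controlling $\int_v^\infty r'\abs{\overline{\partial}_r\Gamma^{\le\alpha}\varphi}^2\,\dd r'$ by $v^{-1}\mathcal{E}_{\alpha,1}^2$ does \emph{not} produce the claimed $r\tau$-weight, since $v^{-1}\not\lesssim r^{-1}\tau^{-1}$ (take $r\sim\tau\sim v/2$). The splitting of \cref{u-energy-estimate} is the wrong tool here; instead, since the integration runs over $r'\ge r(u,v)$ and $\tau=u$ is constant along the outgoing ray in $\set{r\ge R_0}$, one extracts $r^{-1}(u,v)$ from the one missing $r$-power in the integrand and $\tau^{-1}(u,v)$ from the $\tau^p$-weight ($p=1$) in $\mathcal{E}_{\alpha,1}$, giving directly
\begin{equation*}
\int_v^\infty r'\abs{\overline{\partial}_r\Gamma^{\le\alpha}\varphi}^2(u,v')\,\dd r'\le C\,r^{-1}(u,v)\,\tau^{-1}(u,v)\,\mathcal{E}_{\alpha,1}^2,
\end{equation*}
which is what survives the Grönwall step with the full $r\tau$-weight.
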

\begin{proof}
We first explain the logic of the proof. Consider the following auxiliary
qualitative decay statement for a multi-index \(\alpha{}\) with \(\abs{\alpha{}}\ge 0\):
\begin{equation}\label{gamma-qualitative-decay}
\lim_{v\to \infty} L \log (-\gamma{})(u,v) = 0\text{ for }L\in \Gamma^{\le \alpha{}} \text{ and }u\in [1,\infty).
\end{equation}
To prove the quantitative decay estimates \cref{gamma-bound-weak,gamma-bound-strong}, we
establish the following implications:
\begin{equation*}
\cref{gamma-qualitative-decay}_{<\alpha{}} \implies  \cref{gamma-bound-strong}_{<\alpha{}}  \implies \cref{gamma-bound-weak}_\alpha{} \implies \cref{gamma-qualitative-decay}_{\alpha{}}.
\end{equation*}
That is, \cref{gamma-qualitative-decay} for multi-indices \(< \alpha{}\) implies
\cref{gamma-bound-strong} for the multi-index \(\alpha{}\), and so on. Since
\cref{gamma-qualitative-decay} holds when \(\alpha{} = 0\) by the normalization at
\(\mathcal{I}\) of the \(u\)-coordinate, an induction argument establishes
\cref{gamma-bound-weak,gamma-bound-strong} for all multi-indices \(\alpha{}\). The final
implication is clear, so we prove the first two implications.

\step{Step 1: Proof that \(\cref{gamma-qualitative-decay}_{\alpha{}}\) implies
\(\cref{gamma-bound-strong}_{\alpha{}}\).} Let \(\abs{\alpha{}}\ge 0\) and let \(L\in \Gamma^\alpha{}\). By
\cref{gamma-bound-step-1c} and the definition of the pointwise norm, we have
\begin{equation}\label{gamma-bound-step-4-dr-L}
\mathbf{1}_{r\ge R_0}\abs{\overline{\partial}_rL\log (-\gamma{})}\le C(\mathcal{G}_{\alpha{},s})[r\abs{\overline{\partial}_r\Gamma{}^{\le \alpha{}}\varphi{}}^2 + r^{-2}\tau{}^{-1}\mathcal{P}_{\alpha{},1}^2 + r^{-2+s}\abs{\Gamma{}^{\le \alpha{}}\log (-\gamma{})}].
\end{equation}
Fix \((u,v)\) such
that \(r(u,v)\ge R_0\) and integrate \cref{gamma-bound-step-4-dr-L} to get
\begin{equation}
\begin{split}
\abs{L\log (-\gamma{})}(u,v)&\le \limsup_{v'\to \infty}\abs{L\log (-\gamma{})}(u,v') + \int_v^\infty \abs{\overline{\partial}_rL\log (-\gamma{})}(u,v')\dd{}v' \\
&\le C(\mathcal{G}_{\alpha{},s})\Bigl[\int_v^\infty r\abs{\overline{\partial}_r\Gamma{}^{\le \alpha{}}\varphi{}}^2(u,v')\dd{}v' + \mathcal{P}_{\alpha{},1}^2\int_v^\infty r^{-2}\tau{}^{-1}(u,v')\dd{}v' \\
&\qquad + \int_v^\infty r^{-2+s}\abs{\Gamma{}^{\le \alpha{}}\log (-\gamma{})}(u,v')\dd{}v'\Bigr] \\
&\le r^{-1}\tau{}^{-1}(u,v)C(\mathcal{G}_{\alpha{},s})(\mathcal{E}_{\alpha{},1}^2 + \mathcal{P}_{\alpha{},1}^2) + C(\mathcal{G}_{\alpha{},s})\int_v^\infty r^{-2+s}\abs{\Gamma{}^{\le \alpha{}}\log (-\gamma{})}(u,v')\dd{}v'.
\end{split}
\end{equation}
Sum over \(L\in \Gamma^\alpha{}\), take \(s < 1\), and use the Grönwall inequality in
\cref{decaying-gronwall} (which applies because of the assumed
\(\cref{gamma-qualitative-decay}_{\alpha{}}\)) to conclude
\begin{equation}
r\tau{}\abs{L\log (-\gamma{})}\le C(\mathcal{G}_{\alpha{},s})(\mathcal{E}_{\alpha{},1}^2 + \mathcal{P}_{\alpha{},1}^2).
\end{equation}

\step{Step 2: Proof that \(\cref{gamma-bound-strong}_{<\alpha{}}\) implies
\(\cref{gamma-bound-weak}_\alpha{}\).} To ease the notation, all estimates in
this section will be done in the region \(\set{r\ge R_0}\); that is, all
estimates should be understood to have an implicit \(\mathbf{1}_{r\ge R_0}\) at
the front of each term. It is enough to prove that
\begin{equation}\label{gamma-bound-step-4-L}
\abs{L\log (-\gamma{})}\le r^{-1+s+\eta{}_0}C(\mathfrak{b}_{<\alpha{}},r^{-1}\mathfrak{G}_{<\alpha{}})[\mathcal{P}_{<\alpha{},2-s}^2 + r\tau{}\abs{\Gamma{}^{<\alpha{}}\log (-\gamma{})}],
\end{equation}
since the final term can be handled with \(\cref{gamma-bound-strong}_{<\alpha{}}\).
First, \cref{gamma-bound-step-2} implies
\begin{equation}\label{gamma-bound-step-4a}
\abs{L\log (-\gamma{})}\le C(\mathfrak{b}_{<\alpha{}},r^{-1}\mathfrak{G}_{<\alpha{}})[\abs{\partial{}_u^{\le \abs{\alpha{}}}\log (-\gamma{})} + \abs{\overline{\partial}_r\Gamma{}^{\le \alpha{}-V}\log (-\gamma{})} + \abs{S\Gamma{}^{\le \alpha{}-S}\log (-\gamma{})}].
\end{equation}
Observe that the following three estimates together with \cref{gamma-bound-step-4a}
imply \cref{gamma-bound-step-4-L}:
\begin{align}
\abs{\partial{}_u^{\le \abs{\alpha{}}}\log (-\gamma{})}&\le r^{-2}C(\mathfrak{b}_{<\alpha{}})\mathcal{P}_{<\alpha{},0}^2 \label{gamma-bound-4-U} \\
\abs{\partial{}_r\Gamma{}^{\le \alpha{}-V}}&\le C(\mathfrak{b}_{<\alpha{}})[r^{-3}\mathcal{P}_{<\alpha{},0}^2 + r^{-2}\tau{}^{-1}\mathcal{P}_{<\alpha{},1}^2 +
r^{-2}\abs{\Gamma{}^{<\alpha{}}\log (-\gamma{})}]  \label{gamma-bound-4-V} \\
\abs{S \Gamma{}^{\le \alpha{}-S}\log (-\gamma{})}&\le C(\mathfrak{b}_{<\alpha{}})[r^{-1+s+\eta{}_0}\mathcal{P}_{<\alpha{},2-s}^2 + \tau{}\abs{\Gamma{}^{<\alpha{}}\log (-\gamma{})}]. \label{gamma-bound-4-S}
\end{align}
Now \cref{gamma-bound-4-U} and \cref{gamma-bound-4-V} follow from
\cref{gamma-bound-step-3} and \cref{gamma-bound-step-1c}, respectively. We now
establish \cref{gamma-bound-4-S}. Rewrite \(S\) in \(\set{r\ge R_0}\) using
\cref{v-u-r-compare,V-dr-coordinate-change,S-in-terms-of-U-V}, and then use
\cref{gamma-bound-step-1c}:
\begin{equation}\label{gamma-bound-step-4c}
\begin{split}
\abs{S \Gamma{}^{\le \alpha{}-S}\log (-\gamma{})}&\le C(\mathfrak{b}_0)[v\abs{\overline{\partial}_r\Gamma{}^{\le \alpha{}-S}\log (-\gamma{})} + \tau{}\abs{U\Gamma{}^{\le \alpha{}-S}\log (-\gamma{})}] \\
&\le C(\mathfrak{b}_{<\alpha{}})[rv\abs{\overline{\partial}_r\Gamma{}^{\le \alpha{}-S}\varphi{}}^2 + r^{-3+2s}v\abs{\Gamma{}^{\le \alpha{}-S}\varphi{}}^2 + \tau{}\abs{\Gamma{}^{<\alpha{}}\log (-\gamma{})}] \\
&\le C(\mathfrak{b}_{<\alpha{}})[r^{-1+s+\eta{}_0}\mathcal{P}_{<\alpha{},2-s}^2 + \tau{}\abs{\Gamma{}^{<\alpha{}}\log (-\gamma{})}].
\end{split}
\end{equation}
In the last line, we used \(\abs{\overline{\partial}_r\Gamma^{\le \alpha{}-S}\varphi{}}^2\le
\mathbf{1}_{r\le v/2}r^{-2+s+\eta{}_0}\tau^{-1}v\mathcal{P}_{<\alpha{},2-s}^2 + \mathbf{1}_{r\ge v/2}r^{-4}\mathcal{P}_{<\alpha{},0}^2\).
\end{proof}
\section{Putting it all together: proof of \cref{main-theorem}}
\label{sec:org374b5f4}
\label{sec:putting-it-all-together} Our main theorem (\cref{main-theorem}, see also the
rough versions \cref{main-theorem-rough} and \cref{main-theorem-double-null}) follows
immediately from \cref{S-decay,S-vdv-comparison}.
\subsection{Control of \texorpdfstring{\(S^k\varphi\)}{S-derivatives of the scalar field} along the horizon}
\label{sec:orgf46599b}
\begin{proposition}
For \(N\ge 0\) and \(\epsilon{} > 0\), we have
\begin{equation}\label{conc:goal-2}
\abs{S^N\varphi{}}|_{\mathcal{H}}\le  C(\epsilon{},N,\mathfrak{D}_{N},\varpi{}_i,c_{\mathcal{H}},r_{\mathrm{min}})\mathfrak{D}_Nv^{-1+\epsilon{}}.
\end{equation}
\label{S-decay}
\end{proposition}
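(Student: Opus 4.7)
The plan is to combine all the propositions from Sections 4--8 via a strong induction on the multi-index $\alpha$, establishing simultaneously that every schematic quantity of order $\alpha$ (namely $\mathcal{D}_\alpha$, $\mathfrak{b}_\alpha$, $\mathfrak{B}_\alpha$, $r^{-s}\mathfrak{g}_\alpha$, $r^{-s}\mathfrak{G}_\alpha$, $\mathcal{E}_\alpha$, $\mathcal{E}_{\alpha,2-\eta_0-C_\alpha s}$, and $\mathcal{P}_{\alpha,2-\eta_0-C_\alpha s}$) is bounded by a constant depending only on $N$, $\epsilon$, $\mathfrak{D}_{|\alpha|}$, $\varpi_i$, $c_{\mathcal{H}}$, and $r_{\min}$. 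Given $N\geq 0$ and $\epsilon > 0$, I will first fix $\eta_0 > 0$ and then $s \geq \eta_0$ small enough (depending on $N$ and on the explicit constants $C_\alpha$ from \cref{E-alpha-bound}) so that $\tfrac{1}{2}(\eta_0 + C_{NS}s) < \epsilon$, after which the parameters are frozen for the rest of the argument.

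The base case $\alpha = 0$ is immediate: \cref{zeroth-order-schematic-control} controls the zeroth-order geometric quantities, \cref{boundedness-initial-data} gives $\mathcal{D}_0 = \mathfrak{D}_0$, and the $\alpha = 0$ statements of \cref{E-alpha-bound,P-alpha-bound} control $\mathcal{E}_0$, $\mathcal{E}_{0,2-\eta_0}$, and $\mathcal{P}_{0,p}$. For the inductive step at a multi-index $\alpha \geq 1$, the order of application is forced by the dependency structure of the propositions: first invoke \cref{boundedness-initial-data} to bound $\mathcal{D}_\alpha$ in terms of lower-order geometric data and $\mathfrak{D}_{|\alpha|}$; then apply the weak part of \cref{geometric-quantities-bound} (estimates \eqref{higher-geometric-bound-b}--\eqref{higher-geometric-bound-g}), whose right-hand sides involve only strictly lower-order energies and pointwise norms; with these in hand, apply \cref{E-alpha-bound} to obtain $\mathcal{E}_\alpha$ and then $\mathcal{E}_{\alpha,2-\eta_0-C_\alpha s}$; next apply \cref{P-alpha-bound} to deduce $\mathcal{P}_{\alpha,p}$; and finally close the induction with the strong part of \cref{geometric-quantities-bound} to bound $\mathfrak{B}_\alpha$ and $r^{-s}\mathfrak{G}_\alpha$.

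To extract the decay on $\mathcal{H}$, specialize the induction to the multi-index $\alpha = NS$ with $p = 2 - \eta_0 - C_{NS}s$. By the definition of the pointwise norm,
\begin{equation*}
\bigl\| r^{1/2}\tau^{p/2} S^{\le N}\varphi \bigr\|_{L^\infty(\mathcal{R}_{\mathrm{char}})} \le \mathcal{P}_{NS,p} \le C(\epsilon,N,\mathfrak{D}_N,\varpi_i,c_{\mathcal{H}},r_{\min})\,\mathfrak{D}_N.
\end{equation*}
On the event horizon the area radius $r$ is bounded above and below by constants depending only on $\varpi_i$ and $r_{\min}$, and \cref{v-u-r-compare} (in the region $\{r \le \epsilon^{-1}\Rc\}$) gives $\tau \sim v$. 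Therefore $|S^N\varphi|_{\mathcal{H}}(v) \lesssim v^{-p/2} = v^{-1+(\eta_0+C_{NS}s)/2} \le v^{-1+\epsilon}$, which is the desired conclusion.

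The main obstacle is the bookkeeping for the inductive step: verifying that at each level the constants appearing on the right-hand sides of the invoked propositions only involve quantities already controlled at the current stage, and choosing $s$ small enough (depending on $|\alpha|$) for the $r^p$-weighted estimates and the pointwise estimate to remain usable after many commutations. In particular, because $\mathfrak{g}_\alpha$ and $\mathfrak{G}_\alpha$ are only shown to grow like $r^{C_\alpha s}$, the weight $r^{-s}$ in the induction hypothesis is essential to absorb this slow growth, and the price is the additive term $\tfrac{1}{2}C_{NS}s$ in the final decay rate --- which motivates the initial choice of $s$ in terms of $\epsilon$ and $N$.
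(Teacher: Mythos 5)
Your proposal is correct and follows essentially the same route as the paper: an induction on the multi-index chaining the data, geometric, energy, and pointwise estimates in the same order (weak geometric quantities before the order-$\alpha$ energy, strong geometric quantities after), with the decay on $\mathcal{H}$ extracted from the $\norm{r^{1/2}\tau^{p/2}\psi}_{L^\infty}$ component of $\mathcal{P}_{NS,p}$ together with $\tau\sim v$ near the horizon. The only detail the paper makes explicit that you elide is fixing $\Rc\ge C(\mathfrak{B}_0^{\circ},\mathfrak{g}_0,N)$ after the zeroth-order estimates and before the higher-order induction, but this is part of the bookkeeping you already flag.
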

\begin{proof}
In view of \cref{v-u-r-compare}, it is enough to show that
\begin{equation}\label{conc:goal}
\mathcal{P}_{NS,2-C_{NS}'\eta{}_0}\le C(N,\eta{}_0,\mathfrak{D}_N,\varpi{}_i,c_{\mathcal{H}},r_{\text{min}})\mathfrak{D}_{N}
\end{equation}
when \(\eta_0\) is sufficiently small based on \(N\) and then fix \(\eta_0\) small
based on \(\epsilon{}\) and \(C_{NS}'=C(N)\).

We now establish \cref{conc:goal}. The results of
\cref{boundedness-initial-data,E-alpha-bound,P-alpha-bound,geometric-quantities-bound}
imply that we can fix \(\Rc\ge C(\mathfrak{B}_0^{\circ },\mathfrak{g}_0,N)\)
large enough that for \(\abs{\alpha{}}\le N\) and \(s\) sufficiently small
depending on \(N\), we have
\begin{align}
\mathcal{D}_\alpha{} &\le C(\mathfrak{b}_{<\alpha{}},r^{-1}\mathfrak{g}_{<\alpha{}})\mathfrak{D}_{\abs{\alpha{}}}, \\
C(\mathfrak{b}_\alpha{})&\le C(N,\mathfrak{B}_0,\mathfrak{G}_0,\mathcal{E}_{<\alpha{},1},\mathcal{P}_{<\alpha{},2-s+\eta{}_0}), \\
C(r^{-s}\mathfrak{g}_\alpha{})&\le C(N,\mathfrak{B}_0,\mathfrak{G}_0,\mathcal{E}_{<\alpha{},1},\mathcal{P}_{<\alpha{},2-s+\eta{}_0}), \\
\mathcal{E}_\alpha{} &\le C(\mathfrak{b}_\alpha{},r^{-s}\mathfrak{g}_\alpha{},\mathcal{P}_{<\alpha{},1+\eta{}_0})[\mathcal{D}_\alpha + \mathcal{E}_{<\alpha{},4s}], \\
\mathcal{P}_{\alpha{},p}&\le C(\mathfrak{b}_\alpha{},r^{-s}\mathfrak{g}_\alpha{})(\mathcal{D}_\alpha{} + \mathcal{E}_{\alpha{},p}) \\
\mathcal{E}_{\alpha{},2-\eta{}_0-C_\alpha{}s} &\le C(\mathfrak{b}_\alpha{},r^{-s}\mathfrak{g}_\alpha{},\mathcal{P}_{\alpha{},0},\mathcal{P}_{<\alpha{},1+\eta{}_0})\mathcal{D}_\alpha{}
\end{align}
for explicit constants \(C_\alpha{}\). In particular, the above equations imply
\begin{equation}
\mathcal{E}_{\alpha{},2-\eta{}_0-C_\alpha{}(s+\eta{}_0)} \le C(N,\mathfrak{D}_N,\mathfrak{B}_0,\mathfrak{G}_0,\mathcal{E}_{<\alpha{},2-s})\mathfrak{D}_{N}
\end{equation}
for \(\abs{\alpha{}}\le N\). By an induction argument (starting with \(s = \eta_0\) when \(\alpha{} =
0\)), there are constants \(C_\alpha'\) depending only on \(\alpha{}\) such that if \(\eta_0\)
is small enough depending on \(N\), we have
\begin{equation}\label{conc:energy-bound}
\mathcal{E}_{\alpha{},2-C_\alpha'\eta{}_0}\le C(N,\mathfrak{D}_N,\mathfrak{B}_0,\mathfrak{G}_0,\mathcal{E}_{0,2-\eta{}_0})\mathfrak{D}_{N}
\end{equation}
Substituting \cref{conc:energy-bound} into the above equations (and performing
another induction argument) gives
\begin{equation}\label{conc:goal-1}
\mathcal{P}_{NS,2-C_{NS}'\eta{}_0}\le C(N,\mathfrak{D}_N,\mathfrak{B}_0,\mathfrak{G}_0,\mathcal{E}_{0,2-\eta{}_0})\mathfrak{D}_{N}.
\end{equation}
The zeroth order estimates, due to
\cref{zeroth-order-schematic-control,boundedness-initial-data,E-alpha-bound,P-alpha-bound,geometric-quantities-bound},
are
\begin{align}
\mathcal{D}_0 &= \mathfrak{D}_0, \\
\mathcal{E}_0 &\le  C(\varpi{}_i,c_{\mathcal{H}},r_{\text{min}})\mathcal{D}_0, \\
\mathfrak{b}_0 &\le  C(\mathcal{E}_0,\varpi{}_i,c_{\mathcal{H}},r_{\text{min}},R_0,\eta{}_0), \\
\mathcal{P}_{0,0}&\le C(\mathfrak{b}_0)[\mathcal{D}_0 + \mathcal{E}_{0}], \\
\mathcal{E}_{0,2-\eta{}_0} &\le C(\mathfrak{b}_0,\mathcal{P}_{0,0})\mathcal{D}_0, \\
\mathcal{P}_{0,2-2\eta{}_0} &\le C(\mathfrak{b}_0)[\mathcal{D}_0 + \mathcal{E}_{0,2-2\eta{}_0}], \\
\mathcal{B}_0^{\circ }&\le C(\mathfrak{b}_0,\mathcal{E}_{0,1}), \\
\mathfrak{B}_0&\le C(\Rc,\mathfrak{B}_0^{\circ },\mathcal{E}_{0,1+\eta{}_0}), \\
\mathfrak{g}_0&\le C(\mathfrak{b}_0), \\
\mathfrak{G}_0&\le C(\mathfrak{b}_0).
\end{align}
Noting the value of \(\Rc\) that we have fixed and the value of \(R_0 =
R_0(\varpi_i)\) that we have fixed (in \cref{zeroth-order-geometric-bounds}),
chaining the zeroth order estimates with \cref{conc:goal-1} (for \(\alpha{} = NS\))
gives \cref{conc:goal}. Another induction argument in fact shows that all the
schematic quantities are controlled by
\(C(\epsilon{},N,\mathfrak{D}_{N},\varpi{}_i,c_{\mathcal{H}},r_{\mathrm{min}})\).
\end{proof}
\subsection{Control of \texorpdfstring{\((\bar{v}\partial_{\bar{v}})^k\varphi{}\)}{v-derivatives of the scalar field in the Eddington--Finkelstein-type gauge} along the horizon}
\label{sec:org52fa9b2}
\label{sec:Stovdv-translation} The goal of this section is to establish the following result comparing \(S^k\)
(in the constant-\(r\) normalized gauge)and \((\bar{v}\partial_{\bar{v}})^k\) (in the Eddington--Finkelstein-type gauge
normalized on the horizon).
\begin{proposition}
For \(0\le k\le 4\), we have
\begin{equation}
\abs{S^k\varphi{} - (\bar{v}\partial{}_{\bar{v}})^k\varphi{}}\le C(\epsilon{},k,\varpi{}_i,r_{\mathrm{min}},c_{\mathcal{H}},\mathfrak{D}_k)v^{-1+\epsilon{}}.
\end{equation}
\label{S-vdv-comparison}
\end{proposition}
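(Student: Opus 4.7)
The plan is to work directly on $\mathcal{H}$, exploiting two key simplifications: first, $\lambda = 0$ along the event horizon (which is a constant-$r$ curve in the limit), so that the $\lambda U$ correction in $\underline{\partial}_v = \partial_v + \lambda U$ vanishes at $\mathcal{H}$; second, by the formula $\partial_{\bar{v}} = \kappa|_{\mathcal{H}}^{-1}(v)\partial_v$, the vector field $\bar{v}\partial_{\bar{v}}$ equals $\rho(v)\cdot v\partial_v$ on $\mathcal{H}$, where $\rho(v) := \bar{v}(v)/(v\kappa|_{\mathcal{H}}(v)) \to 1$. Since \cref{S-decay} already provides $S^k\varphi|_{\mathcal{H}} = O(v^{-1+\epsilon})$ for all $k \ge 0$, the proposition reduces to proving the same bound for $(\bar{v}\partial_{\bar{v}})^k\varphi|_{\mathcal{H}}$ in the range $0 \le k \le 4$.

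First I would iteratively expand $S^k\varphi = (v\partial_v + v\lambda U)^k\varphi$ in spacetime and then restrict to $\mathcal{H}$. Although $\lambda|_{\mathcal{H}} = 0$, its $v$-derivatives do not vanish there and produce correction terms of the schematic form $v^a\,\partial_v^b\lambda|_{\mathcal{H}} \cdot (v\partial_v)^c U^d\varphi|_{\mathcal{H}}$; using $\lambda = \kappa(1-\mu)$, the transport equation $\partial_v\varpi|_{\mathcal{H}} = \frac{r_{\mathcal{H}}^2}{2\kappa|_{\mathcal{H}}}(\partial_v\varphi)^2|_{\mathcal{H}}$, and \cref{P-alpha-bound} to control the $U$-derivatives, each correction admits an $O(v^{-1+\epsilon})$ bound in the range $k \le 4$. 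An induction then extracts $(v\partial_v)^k\varphi|_{\mathcal{H}} = O(v^{-1+\epsilon})$. Next I would expand $(\bar{v}\partial_{\bar{v}})^k\varphi|_{\mathcal{H}} = (\rho\cdot v\partial_v)^k\varphi|_{\mathcal{H}}$: this is purely algebraic on $\mathcal{H}$, since $\partial_{\bar{v}}$ already agrees with $\kappa|_{\mathcal{H}}^{-1}\partial_v$ as a spacetime vector field, and yields a polynomial in $(v\partial_v)^j\varphi|_{\mathcal{H}}$ for $j \le k$ with coefficients polynomial in $(v\partial_v)^i\rho$ for $i \le k-1$. The estimates for $\Gamma^\alpha\log\kappa$ in \cref{Gamma-kappa-bounds} (which give $|\kappa|_{\mathcal{H}} - 1| = O(v^{-1})$ and control the scaling derivatives of $\kappa|_{\mathcal{H}}$) together with $\bar{v}(v) = \int_1^v \kappa|_{\mathcal{H}}(v')\,\dd{}v'$ bound $\rho$ and its scaling derivatives, and combined with the estimates for $(v\partial_v)^j\varphi|_{\mathcal{H}}$ this shows every term in the expansion is $O(v^{-1+\epsilon})$ for $k \le 4$.

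The main obstacle will be a careful bookkeeping verification at each order. Each additional scaling derivative of $\rho$ or of $v\lambda|_{\mathcal{H}}$ loses roughly one power relative to the target $v^{-1+\epsilon}$, and the available decay rates saturate precisely at $k = 4$: as flagged in \cref{range-of-k}, the expected inverse-polynomial tail of $\partial_{\bar{v}}r|_{\mathcal{H}}$ implies that for $k \ge 5$ the coefficient $(v\partial_v)^{k-1}\rho$ no longer decays and the error budget is exhausted. The proof therefore amounts to a systematic combinatorial accounting, treating the main theorem and \cref{P-alpha-bound,Gamma-kappa-bounds} as black boxes, to verify that for $1 \le k \le 4$ every term in the expansion of $(\bar{v}\partial_{\bar{v}})^k\varphi|_{\mathcal{H}} - S^k\varphi|_{\mathcal{H}}$ is bounded by $v^{-1+\epsilon}$.
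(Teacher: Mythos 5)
Your overall architecture---pass from $S^k$ to $(v\partial_v)^k$ on the horizon by peeling off the $v\lambda U$ corrections, then pass from $(v\partial_v)^k$ to $(\bar{v}\partial_{\bar{v}})^k$ through the reparametrization factor $\rho$---is the same as the paper's. But the proposal rests on a false premise that hides the quantitative heart of the argument: $\lambda|_{\mathcal{H}}$ is \emph{not} zero. The event horizon is not a constant-$r$ curve (that is the timelike curve $\set{r=r_{\mathcal{H}}}$ on which the gauge is normalized); along $\mathcal{H}$ the area radius $r_{\mathcal{H}}(v)$ increases towards $r_{\mathcal{H}}$, so $\lambda|_{\mathcal{H}}=\partial_vr_{\mathcal{H}}(v)>0$ generically, with an inverse-polynomial tail---this is exactly the obstruction flagged in \cref{range-of-k}. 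Already at $k=1$ the term $v\lambda U\varphi|_{\mathcal{H}}$ survives, and with only the global bound $\lambda\le C$ from \cref{lambda-bound} it is merely $O(v^{\epsilon})$, not even bounded; to reach the target one needs at least $\lambda|_{\mathcal{H}}\lesssim v^{-1}$, and the higher $k$ require more. This is the content of \cref{lambda-horizon-bound}, $\lambda|_{\mathcal{H}}=\kappa(1-\mu)|_{\mathcal{H}}\lesssim v^{-4+\epsilon}$, which is proved by integrating the transport equation for $1-\mu$ backwards from $v=\infty$ with the exponentially growing integrating factor supplied by the redshift $\varpi-\mathbf{e}^2/r\ge c_{\mathcal{H}}$---first using energy decay alone to get $v^{-2+\epsilon}$, then bootstrapping through $\partial_v\varphi|_{\mathcal{H}}=v^{-1}S\varphi-\lambda U\varphi=O(v^{-2+\epsilon})$ to reach $v^{-4+\epsilon}$. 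Your appeal to the $\partial_v$-transport equation for $\varpi$ gives monotonicity and controls differentiated quantities, but not the undifferentiated $\lambda|_{\mathcal{H}}$; the circular dependence between the decay of $\lambda|_{\mathcal{H}}$ and of $\partial_v\varphi|_{\mathcal{H}}$ is precisely what has to be broken, and your proposal does not break it.

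A second, downstream gap: the control of $\rho$ and its scaling derivatives cannot come from \cref{Gamma-kappa-bounds} as stated, which only yields $\abs{\Gamma^\alpha\log\kappa}=O(v^{-1})$ globally. The paper needs $S^n\log\kappa|_{\mathcal{H}}\lesssim v^{-5+\epsilon}$ (in particular an integrable-in-$v$ bound on $\kappa|_{\mathcal{H}}-1$ so that $\bar{v}(v)=v+O(1)$), obtained by integrating $US^n\log\kappa$ from the curve $\set{r=r_{\mathcal{H}}}$, where it vanishes by the gauge, to the horizon; the entire gain comes from $\abs{r_{\mathcal{H}}-r_{\mathcal{H}}(v)}\lesssim v^{-3+\epsilon}$, which is again the decay of $\lambda|_{\mathcal{H}}$. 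So the missing lemma is not a bookkeeping detail: both halves of your argument hinge on the single estimate you have assumed away by setting $\lambda|_{\mathcal{H}}=0$.
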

In this section, we allow all constants to depend on \(\varpi_i\),
\(c_{\mathcal{H}}\), and \(r_{\text{min}}\). We will also freely apply the
results obtained in (the proof of) \cref{S-decay}, namely that our schematic energy
and geometric quantities of order \(\abs{\alpha{}} = k\) are controlled by the
data quantity \(\mathfrak{D}_k\). We first introduce an integration lemma.
\begin{lemma}
For \(\alpha{}\ge 0\), \(c>0\), and \(v\ge 1\), we have
\begin{equation}
\int_v^\infty e^{-c(v'-v)}v'^{-\alpha{}}\dd{}v'\le  c^{-1}v^{-\alpha{}}.
\end{equation}
\label{S-vdv-integration-lemma}
\end{lemma}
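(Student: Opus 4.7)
The statement is an elementary estimate, and the plan is to exploit monotonicity of $v'^{-\alpha}$ on $[v,\infty)$ together with an explicit computation of the exponential integral. Since $\alpha \ge 0$ and $v' \ge v \ge 1$ throughout the region of integration, the factor $v'^{-\alpha}$ is a non-increasing function of $v'$ bounded above by $v^{-\alpha}$.

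Concretely, I would first pull out $v^{-\alpha}$ by the pointwise bound $v'^{-\alpha} \le v^{-\alpha}$, reducing the problem to estimating $\int_v^\infty e^{-c(v'-v)}\,dv'$. Substituting $w = v' - v$, this integral equals $\int_0^\infty e^{-cw}\,dw = c^{-1}$. Combining these two observations yields
\begin{equation*}
\int_v^\infty e^{-c(v'-v)} v'^{-\alpha}\,dv' \le v^{-\alpha}\int_v^\infty e^{-c(v'-v)}\,dv' = c^{-1}v^{-\alpha},
\end{equation*}
which is the claimed bound. There is no real obstacle here; the only subtlety is that the hypothesis $\alpha \ge 0$ (rather than $\alpha > 0$) is genuinely needed so that $v'^{-\alpha}$ is non-increasing in $v'$, and $v \ge 1$ ensures $v^{-\alpha}$ is well-defined and finite for all $\alpha \ge 0$. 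Since the result is only used to absorb integrated exponential weights against polynomial decay in $v$ (as in applications to the Eddington–Finkelstein-type gauge change in the proof of \cref{S-vdv-comparison}), this crude monotonicity bound is entirely sufficient and no sharper estimate is needed.
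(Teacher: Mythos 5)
Your argument is correct and is exactly the paper's proof: bound $v'^{-\alpha}\le v^{-\alpha}$ by monotonicity (using $\alpha\ge 0$) and then evaluate $\int_v^\infty e^{-c(v'-v)}\dd{}v' = c^{-1}$. Nothing further is needed.
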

\begin{proof}
We have
\begin{equation}
\int_v^\infty e^{-c(v'-v)}v'^{-\alpha{}}\dd{}v'\le v^{-\alpha{}}\int_v^\infty e^{-c(v'-v)}\dd{}v' = c^{-1}v^{-\alpha{}}.
\end{equation}
\end{proof}
To estimate \(\lambda{}\) along the horizon, we use an ODE argument similar to the one
used in \cite[Prop.~B.1]{luk-oh-scc1} (albeit in a different gauge).
\begin{lemma}[Decay of \(\lambda\) along the horizon]
We have
\begin{equation}
\lambda{}|_{\mathcal{H}}\le C(\epsilon{},\varpi{}_i,r_{\mathrm{min}},c_{\mathcal{H}},\mathfrak{D}_1)v^{-4+\epsilon{}}.
\end{equation}
\label{lambda-horizon-bound}
\end{lemma}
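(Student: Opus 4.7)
The plan is to derive a scalar linear ODE for $X(v):=(1-\mu)|_{\mathcal{H}}(v)$ along the event horizon with an \emph{unstable} coefficient and a nonnegative forcing built from $(\partial_v\varphi)^2|_{\mathcal{H}}$, and then to select its unique decaying solution. Because $\lambda=(1-\mu)\kappa$ and \cref{kappa-sign,kappa-estimate} already pinch $\kappa|_{\mathcal{H}}$ between positive constants, a $v^{-4+\epsilon}$ decay for $X(v)$ is equivalent to the desired bound for $\lambda|_{\mathcal{H}}$.

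First I would record the \emph{a priori} qualitative decay $\lambda|_{\mathcal{H}}(v)\to 0$ as $v\to\infty$. This follows from the monotone convergences $r|_{\mathcal{H}}(v)\nearrow r_{\mathcal{H}}$ (since $\lambda\ge 0$) and $\varpi|_{\mathcal{H}}(v)\nearrow \varpi_{\mathcal{H}}$ (since $\partial_v\varpi\ge 0$), together with the algebraic identity $r_{\mathcal{H}}^{2}-2\varpi_{\mathcal{H}}r_{\mathcal{H}}+\mathbf{e}^{2}=0$ (see \cref{rH-size}) and the boundedness of $\kappa|_{\mathcal{H}}$. Next I would transfer the decay of $S\varphi$ on the horizon supplied by \cref{S-decay} to decay of $\partial_v\varphi|_{\mathcal{H}}$. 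Since $r|_{\mathcal{H}}\le r_{\mathcal{H}}\le R_0\le\Rc$, the cutoff defining $S$ equals $1$ along $\mathcal{H}$, so $S=v\underline{\partial}_v=v(\partial_v+\lambda U)$; combining the $N=1$ case of \cref{S-decay} with the zeroth-order bound $|U\varphi|_{\mathcal{H}}\le C$ from \cref{P-alpha-bound} gives
\begin{equation*}
|\partial_v\varphi|_{\mathcal{H}}(v)\le v^{-1}|S\varphi|_{\mathcal{H}}(v)+\lambda|_{\mathcal{H}}(v)\,|U\varphi|_{\mathcal{H}}(v)\le C v^{-2+\epsilon/2}+C\lambda|_{\mathcal{H}}(v).
\end{equation*}

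The ODE step is then direct: differentiating $1-\mu=1-2\varpi/r+\mathbf{e}^{2}/r^{2}$ along $\mathcal{H}$ and substituting $\partial_v r=\lambda=X\kappa$ and $\partial_v\varpi=r^{2}(\partial_v\varphi)^{2}/(2\kappa)$ yields
\begin{equation*}
\partial_v X=A(v)X-F(v),\qquad A(v):=\tfrac{2(\varpi-\mathbf{e}^{2}/r)\kappa}{r^{2}}\Big|_{\mathcal{H}},\qquad F(v):=\tfrac{r(\partial_v\varphi)^{2}}{\kappa}\Big|_{\mathcal{H}}\ge 0.
\end{equation*}
The redshift hypothesis $\varpi-\mathbf{e}^{2}/r\ge c_{\mathcal{H}}$ and the pointwise bounds on $\kappa|_{\mathcal{H}}$ pinch $A(v)\in[A_0,A_1]$ for some $0<A_0<A_1$. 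Since $X(v)\to 0$, the unique decaying solution is the particular one
\begin{equation*}
X(v)=\int_v^\infty \exp\!\Bigl(-\!\int_v^{v'}A(s)\,ds\Bigr)F(v')\,dv',
\end{equation*}
and plugging in the forcing bound from the previous paragraph and invoking \cref{S-vdv-integration-lemma} gives
\begin{equation*}
\lambda|_{\mathcal{H}}(v)\lesssim X(v)\lesssim v^{-4+\epsilon}+\sup_{v'\ge v}\lambda|_{\mathcal{H}}(v')^{2}.
\end{equation*}

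Finally a bootstrap closes the argument: writing $M(v):=\sup_{v'\ge v}\lambda|_{\mathcal{H}}(v')$, the previous line reads $M(v)\le Cv^{-4+\epsilon}+CM(v)^{2}$, and the \emph{a priori} decay $M(v)\to 0$ lets us absorb the quadratic term for $v$ large, yielding $M(v)\lesssim v^{-4+\epsilon}$; on any bounded initial range of $v$ the trivial bound $\lambda|_{\mathcal{H}}\le C$ from \cref{lambda-bound} already implies the claim (after enlarging the constant). The main obstacle is precisely the circular dependence between the decay rate of $\lambda|_{\mathcal{H}}$ (appearing inside the forcing $F$) and that of $\partial_v\varphi|_{\mathcal{H}}$ (needed to estimate $F$); this is unlocked by trading the extra $v$ in $S=v\underline{\partial}_v$ for a power of decay, then closed by the qualitative horizon-limit argument together with the quadratic-absorption bootstrap.
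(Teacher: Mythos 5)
Your proposal is correct and follows the same core mechanism as the paper: the $\partial_v$-transport equation for $1-\mu$ along $\mathcal{H}$, an integrating factor whose kernel decays exponentially thanks to the redshift bound $\varpi-\mathbf{e}^2/r\ge c_{\mathcal{H}}$ and the pinching of $\kappa|_{\mathcal{H}}$, selection of the unique decaying solution via the qualitative limit $(1-\mu)|_{\mathcal{H}}\to 0$, and the conversion $|\partial_v\varphi|\le v^{-1}|S\varphi|+\lambda|U\varphi|$ to import the decay of \cref{S-decay}. Where you diverge is in how the self-coupling through $\lambda|U\varphi|$ is closed. The paper runs a two-pass argument: it first extracts a preliminary rate $(1-\mu)|_{\mathcal{H}}\lesssim v^{-2+\epsilon}$ directly from the $r^p$-weighted energy decay (bounding the exponential kernel by $1$ and using $\int_v^\infty \frac{r^2}{\kappa}(\partial_v\varphi)^2\lesssim \tau^{-p}\mathcal{E}_{0,p}^2$), feeds that into $\lambda|U\varphi|$ to get $|\partial_v\varphi|\lesssim v^{-2+\epsilon}$ outright, and only then integrates against the exponential kernel. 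You instead use only the crude bound $|U\varphi|\le C$, accept the resulting quadratic self-reference $M(v)\le Cv^{-4+\epsilon}+CM(v)^2$, and absorb it using $M(v)\to 0$. Both closings are valid; the paper's avoids any absorption argument at the price of invoking the weighted energy decay an extra time, while yours is more self-contained at this stage (needing only $\mathcal{P}_{0,0}$ and the qualitative horizon limit) but leans on a smallness-at-late-times bootstrap. One small caution: make sure you take the sup defining $M(v)$ only after noting $M$ is finite (via \cref{lambda-bound}) and monotone, exactly as you indicate, so that the absorption is legitimate for $v$ beyond a threshold depending on the solution.
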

\begin{proof}
Since \(\lambda{} = \kappa{}(1-\mu{})\le C(\mathfrak{b}_0)(1-\mu{})\), it is enough to obtain the claimed
estimate for \((1-\mu{}){|_{\mathcal{H}}}\) in place of
\(\lambda{}|_{\mathcal{H}}\).

\step{Step 1: Estimate for \((1-\mu{})|_{\mathcal{H}}\).} The goal of this step is to use
the transport equation for \(1-\mu{}\) along the horizon to obtain
\begin{equation}\label{mu-horizon-bound-1}
(1-\mu{})|_{\mathcal{H}}(v) \le  \int_v^\infty e^{-c_\ast{}(v'-v)}r(\partial{}_v\varphi{})^2(v')\dd{}v'
\end{equation}
for a constant \(c_\ast{} > 0\). The equation for \((1-\mu{})\) (see
\cref{mu-definition,sph-sym-equations-1}) gives
\begin{equation}
\partial{}_v(1-\mu{}) - \frac{2(\varpi{}-\mathbf{e}^2/r)}{r^2}\kappa{}(1-\mu{}) = -\frac{r}{\kappa{}}(\partial{}_v\varphi{})^2.
\end{equation}
The method of integrating factors now gives
\begin{equation}\label{mu-horizon-equation}
\begin{split}
(1-\mu{})|_{\mathcal{H}}(v) &= \lim_{v'\to \infty}\exp \Bigl(\int_v^{v'}  - \frac{2(\varpi{}-\mathbf{e}^2/r)}{r^2}\kappa{}\dd{}v''\Bigr)(1-\mu{})|_{\mathcal{H}}(v') \\
&\qquad  + \int_v^\infty \exp \Bigl(-\int_v^{v'} \frac{2(\varpi{}-\mathbf{e}^2/r)}{r^2}\kappa{}\dd{}v''\Bigr)r(\partial{}_v\varphi{})^2(v')\dd{}v'.
\end{split}
\end{equation}
By \cref{mass-redshift,kappa-estimate}, we have
\begin{equation}\label{mu-horizon-2}
\exp \Bigl(-\int_v^{v'} \frac{2(\varpi{}-\mathbf{e}^2/r)}{r^2}\kappa{}\dd{}v''\Bigr)\le e^{c_\ast{}(v'-v)},
\end{equation}
where \(c_\ast{}\) is a constant depending on \(\varpi{}_i\), \(c_{\mathcal{H}}\), and
\(\mathcal{E}_0\). Since \(\lim_{v\to \infty}(1-\mu{})|_{\mathcal{H}}(v)=0\)
(see \cite[Prop.~B.1]{luk-oh-scc1}), it follows from \cref{mu-horizon-2} that the first term on the right
side of \cref{mu-horizon-equation} vanishes. Noting this observation and
\cref{mu-horizon-2}, return to \cref{mu-horizon-equation} to conclude the proof of \cref{mu-horizon-bound-1}.

\step{Step 2: Preliminary decay for \((1-\mu{})|_{\mathcal{H}}\).} In this step we show
that
\begin{equation}\label{mu-preliminary-decay}
(1-\mu{})|_{\mathcal{H}}(v)\le C(\epsilon{},\mathfrak{D}_0)v^{-2+\epsilon{}}.
\end{equation}
Control the exponential in \cref{mu-horizon-bound-1} by \(1\) and use the decay of
the energy to obtain
\begin{equation}\label{mu-horizon-preliminary-decay}
(1-\mu{})|_{\mathcal{H}}(v) \le  \int_v^\infty r(\partial{}_v\varphi{})^2(v')\dd{}v'\le C(\mathfrak{b}_0) \int_v^\infty \frac{r^2}{\kappa{}}(\partial{}_v\varphi{})^2(v')\dd{}v'\le \tau{}^{-p}(\infty,v)C(\mathfrak{b}_0)\mathcal{E}_{0,p}\le C(p,\mathfrak{D}_0)v^{-p}
\end{equation}
for \(p < 2\).

\step{Step 3: Strong decay for \((1-\mu{})|_{\mathcal{H}}\).} We now show that
\begin{equation}
(1-\mu{})|_{\mathcal{H}}(v)\le C(\epsilon{},\mathfrak{D}_1)v^{-4+\epsilon{}}.
\end{equation}
Use \(S = v\partial{}_v + v\lambda{}U\) and \cref{mu-preliminary-decay} to obtain
\begin{equation}\label{mu-horizon-3}
\abs{\partial{}_v\varphi{}}|_{\mathcal{H}}\le v^{-1}\abs{S\varphi{}}|_{\mathcal{H}} + \kappa{}(1-\mu{})|_{\mathcal{H}}\abs{U\varphi{}}|_{\mathcal{H}}\le \mathcal{P}_{S,2-\epsilon{}}v^{-2+\epsilon{}} + C(\epsilon{},\mathfrak{b}_0)\mathcal{P}_{U,2-\epsilon{}}v^{-3+\epsilon{}}\le C(\epsilon{},\mathfrak{D}_1)v^{-2+\epsilon{}}.
\end{equation}
Substitute \cref{mu-horizon-3} into \cref{mu-horizon-bound-1} and use
\cref{S-vdv-integration-lemma} to conclude the proof.
\end{proof}
To estimate \(S^n\lambda{}\) along the horizon, we use the relation \(\lambda{} = \kappa{}(1-\mu{})\) and
estimate separately \(S^n\kappa{}\) and \(S^n\mu{}\).
\begin{lemma}[Decay for \(S^n\lambda\) along the horizon]
Let \(n\ge 1\). We have
\begin{equation}
\abs{S^n\lambda{}}|_{\mathcal{H}}\le C(\epsilon{},\varpi{}_i,r_{\mathrm{min}},c_{\mathcal{H}},\mathfrak{D}_n)v^{-3+\epsilon{}}.
\end{equation}
\label{S-lambda-horizon-decay}
\end{lemma}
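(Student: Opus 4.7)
The strategy is to decompose $\lambda = \kappa(1-\mu)$ and apply the $S$-derivatives via Leibniz:
\[
S^n\lambda = \sum_{k=0}^{n} \binom{n}{k}(S^k\kappa)(S^{n-k}(1-\mu)).
\]
The factor $\kappa|_{\mathcal{H}}$ is bounded, while for $k \ge 1$ the geometric estimate \cref{kappa-alpha-bound-after-energy} together with the boundedness of $\kappa$ yields $|S^k\kappa|_{\mathcal{H}}| \le Cv^{-1}$. So matters reduce to showing
\[
|S^m(1-\mu)|_{\mathcal{H}}| \le Cv^{-3+\epsilon}\qquad\text{for }0\le m\le n,
\]
at which point the $k=0$ term contributes the claimed $v^{-3+\epsilon}$ and the remaining terms ($k\ge 1$) are strictly smaller.

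For the key estimate, I would iterate the ODE argument from \cref{lambda-horizon-bound}. Since $r|_{\mathcal{H}}$ is the constant $r_{\mathcal{H}}$ and $\lambda|_{\mathcal{H}} \to 0$, the vector field $\underline{\partial}_v$ coincides with the intrinsic derivative $d/dv$ along $\mathcal{H}$, so $S|_{\mathcal{H}} = v\partial_v$. From \cref{S-decay} applied for orders $\le n$, the Stirling-type identity $v^k\partial_v^k = \sum_{j\le k} d_{k,j}(v\partial_v)^j$ gives
\[
|\partial_v^k\varphi|_{\mathcal{H}}(v)|\le C v^{-1-k+\epsilon},\qquad 0\le k\le n.
\]
Now let $L = (1-\mu)|_{\mathcal{H}}$. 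The transport equation of \cref{lambda-horizon-bound} reads $\partial_v L + aL = f$ with $a = \tfrac{2(\varpi - \mathbf{e}^2/r)\kappa}{r^2}|_{\mathcal{H}} \ge c_\ast > 0$ (by \cref{mass-redshift} and \cref{kappa-estimate}) and $f = -r(\partial_v\varphi)^2/\kappa|_{\mathcal{H}}$. Applying $S^m$ and using $[S,\partial_v]=-\partial_v$ gives inductively
\[
\partial_v(S^m L) + a\,S^m L = \tilde f_m,
\]
where $\tilde f_m$ is a finite sum of products of $S^{\le m}f$, $S^{\le m-1}L$, and $S^{\le m}a$, with coefficients depending only on $m$.

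To control the source, observe that $|S^j f|_{\mathcal{H}}| \le Cv^{-4+\epsilon}$: each factor $\partial_v^i\varphi$ contributes $v^{-1-i+\epsilon}$, the factors $\partial_v^i r|_{\mathcal{H}}$ and $\partial_v^i\kappa|_{\mathcal{H}}$ are controlled by $\lambda|_{\mathcal{H}}$ (via \cref{lambda-horizon-bound}) and by $v^{-1-i}$ (from \cref{kappa-alpha-bound-after-energy} and Stirling inversion), and every $S$ adds a factor of $v$ which exactly compensates the extra $\partial_v$. Likewise $S^j a$ is controlled using \cref{Gamma-omega-bounds} and the geometric bounds on $\kappa$; the base case $L\le Cv^{-4+\epsilon}$ comes from \cref{lambda-horizon-bound}, and inductively $S^jL \le Cv^{-3+\epsilon}$ for $j<m$. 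Combining, $|\tilde f_m|\le Cv^{-3+\epsilon}$. Solving the ODE by the integrating factor method exactly as in Step 1 of the proof of \cref{lambda-horizon-bound}, using the uniform positive damping $a\ge c_\ast$ and \cref{S-vdv-integration-lemma}, yields $|S^mL|\le Cv^{-3+\epsilon}$; the requisite vanishing of the boundary term at $v'=\infty$ follows from the qualitative decay $\lim_{v'\to\infty}S^mL(v')=0$, which is obtained by the same Gr\"onwall argument used for $\Gamma^\alpha\log(-\gamma)$ in \cref{Gamma-gamma-bounds} (applied to the ODE for $S^mL$).

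The main obstacle is the combinatorial bookkeeping in expanding $S^m f$ and $S^m(aL)$: one must verify, term by term, that every factor of $\partial_v$ costs exactly one power of $v^{-1}$ and every factor of $S$ gives it back, so that no term in $\tilde f_m$ is worse than $v^{-3+\epsilon}$. Two secondary technical points need care: (i) $r|_{\mathcal{H}}$ is not identically $r_{\mathcal{H}}$ but equals $r_{\mathcal{H}} + O(v^{-3+\epsilon})$ (by integration of $\lambda|_{\mathcal{H}}$), and its higher $v$-derivatives must be tracked through the Leibniz expansion; and (ii) establishing the qualitative limit needed to suppress the boundary term in the integrating-factor formula, which proceeds by a uniform-convergence argument on compact $u$-intervals analogous to Step 3 of the proof of \cref{gamma-bound-step-3}.
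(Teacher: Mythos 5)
Your overall skeleton matches the paper's: decompose $\lambda=\kappa(1-\mu)$, apply Leibniz, and induct on $n$. The treatment of the $\kappa$-factor is a legitimate shortcut — quoting \cref{kappa-alpha-bound-after-energy} for $|S^k\kappa|_{\mathcal{H}}\lesssim v^{-1}$ suffices here, whereas the paper proves the sharper $v^{-5+\epsilon}$ by integrating $US^n\log\kappa$ over the short interval $[r_{\mathcal{H}}(v),r_{\mathcal{H}}]$ (it needs that sharper bound elsewhere, not for this lemma). Where you genuinely diverge is the $(1-\mu)$-factor: you commute $S^m$ through the $\partial_v$-ODE along the horizon and re-solve it by integrating factors at every order, while the paper avoids any higher-order ODE solving by writing $S^n\mu=2r^{-1}S^n\varpi$ near the horizon and differentiating the \emph{algebraic} identity $S\varpi=\tfrac12 r^2v[\kappa^{-1}(\partial_v\varphi)^2+\lambda(1-\mu)(U\varphi)^2]$, so that the decay $v^{-3+\epsilon}$ falls out of $v\cdot(v^{-1}S\varphi)^2$ directly. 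The paper's route is shorter and sidesteps the qualitative limits $\lim_{v\to\infty}S^mL=0$ that your integrating-factor formula requires at each order.

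Two points in your argument need repair. First, the claim that $S|_{\mathcal{H}}=v\partial_v$ is false for ambient functions: near the horizon $S=v\partial_v+v\lambda U$, and only for functions of $v$ alone (such as $\kappa|_{\mathcal{H}}$ or $(1-\mu)|_{\mathcal{H}}$) does the $U$-term drop. Converting the hypothesis $|S^k\varphi|_{\mathcal{H}}|\lesssim v^{-1+\epsilon}$ into your input $|\partial_v^k\varphi|_{\mathcal{H}}|\lesssim v^{-1-k+\epsilon}$ therefore requires expanding $(v\partial_v)^k=(S-v\lambda U)^k$, and the resulting correction terms contain $S^{\le k-1}\lambda$ — the very quantity you are proving decay for. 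This is not fatal (it folds into your induction on $n$, exactly as in the paper's Step 1a and in \cref{skappa-h-1}), but as written you treat the $\partial_v^k\varphi$ bounds as a free consequence of \cref{S-decay}, which they are not; they must be derived \emph{inside} the induction. Second, the paper's transport equation is $\partial_vL=aL-f$ with $a\ge c_\ast>0$ (anti-damped forward in $v$, which is why one integrates from $v'=\infty$ with the decaying kernel $e^{-c_\ast(v'-v)}$); your "$\partial_vL+aL=f$" has the wrong sign, although the solution procedure you then describe is the correct one for the correct equation.
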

\begin{proof}
Suppose for the sake of induction that the lemma has been proven for \(0\le m <
n\). This holds for \(n = 1\) by \cref{lambda-horizon-bound}.

\step{Step 1: Decay for \((S^n\kappa{})|_{\mathcal{H}}\).} We first show that for \(n\ge 1\) we have
\begin{equation}\label{S-kappa-horizon-decay}
\abs{S^n\kappa{}}|_{\mathcal{H}}\lesssim C(\epsilon{},\mathfrak{D}_n)v^{-5+\epsilon{}}.
\end{equation}
Write \(r_{\mathcal{H}}(v)
\coloneqq{} r(u=\infty,v)\) (so that \(r_{\mathcal{H}} =
r_{\mathcal{H}}(\infty)\)). The strategy is to integrate the equation for
\(US^n\log \kappa{}\) to \(\set{r=r_{\mathcal{H}}}\) and use the decay of
\(\abs{r_{\mathcal{H}} - r_{\mathcal{H}}(v)}\).

\step{Step 1a: Decay of \(\abs{r_{\mathcal{H}} - r_{\mathcal{H}}(v)}\).} By
\cref{lambda-horizon-bound}, we have
\begin{equation}\label{r-horizon-decay}
\abs{r_{\mathcal{H}} - r_{\mathcal{H}}(v)} = \int_v^\infty \partial{}_vr\dd{}v\lesssim C(\mathfrak{D}_1)\int_v^\infty v^{-4+\epsilon{}}\dd{}v\lesssim C(\mathfrak{D}_1)v^{-3+\epsilon{}}.
\end{equation}

\step{Step 1b: Decay of \(S^n\kappa{}\).} By a computation using the chain rule, \(S^n\log \kappa{} -
\kappa{}^{-1}S^n\kappa{}\) is a polynomial in \(\kappa{}^{-1}S^i\kappa{}\) for
\(1\le i<n\) with no constant term, so it is enough to obtain decay for
\(S^n\log\kappa{}\). Since \(U\) and \(S\) commute near the horizon, we have
\begin{equation}\label{S-kappa-U-equation}
US^n\log \kappa{} = S^nU\log \kappa{} = S^n(r(U\varphi{})^2) =_{\mathrm{s}} rUS^{\le n}\varphi{}US^{<n}\varphi{},
\end{equation}
for \(n\ge 0\) and so
\begin{equation}
\sup_{r\in [r_{\mathcal{H}}(v),r_{\mathcal{H}}]}\abs{US^n\log \kappa{}}(r,v)\le C(\mathfrak{b}_0)v^{-2+\epsilon{}}\mathcal{P}_{nS,2-\epsilon{}}^2\le C(\mathfrak{D}_n)v^{-2+\epsilon{}}.
\end{equation}
Since \(S^n\log \kappa{}|_{\set{r=r_{\mathcal{H}}}} = 0\), we can integrate
\cref{S-kappa-U-equation} to get
\begin{equation}\label{S-log-kappa-decay}
S^n\log \kappa{}|_{\mathcal{H}}(v)\le \abs{r_{\mathcal{H}} - r_{\mathcal{H}}(v)}\sup_{r\in [r_{\mathcal{H}}(v),r_{\mathcal{H}}]}\abs{US^n\log \kappa{}}(r,v) \lesssim C(\mathfrak{D}_n)v^{-5+\epsilon{}}
\end{equation}
for \(n\ge 0\).

\step{Step 2: Decay for \(S^n\mu{}\).} In this step we show that for \(n\ge 1\) we have
\begin{equation}
\abs{S^n\mu{}}|_{\mathcal{H}}\le C(\epsilon{},\mathfrak{D}_n)v^{-3+\epsilon{}},
\end{equation}
By \cref{mu-definition} and \(Sr = 0\) near the
horizon, we have \(S\mu{} = 2r^{-1}S\varpi{}\) near the horizon, so it is enough to prove
the estimate for \(S\varpi{}\). Using the
transport equation for \(\varpi{}\), we compute
\begin{equation}
S\varpi{} = \frac{1}{2}r^2v\Bigl[\frac{1}{\kappa{}}(\partial_v\varphi{})^2 + \lambda{}(1-\mu{})(U\varphi{})^2\Bigr] =_{\mathrm{s}} r^2v\mathfrak{b}_0[\set{v^{-1}S\varphi{},\lambda{}U\varphi{}}^2].
\end{equation}
For \(n\ge 1\), it follows that
\begin{equation}
S^n\varpi{} =_{\mathrm{s}} r^2v\mathfrak{b}_{(n-1)S}\set{v^{-1}S^{\le n}\varphi{},S^{<n}\lambda{}US^{<n}\varphi{}}^2.
\end{equation}
By the inductive hypothesis and the control on the geometric and pointwise
quantities, we have
\begin{equation}
\abs{S^n\varpi{}}|_{\mathcal{H}}\lesssim C(\epsilon{},\mathfrak{D}_n)v^{-3+\epsilon{}},
\end{equation}
and so

\step{Step 3: Closing the induction.} We are done by the relation \(\lambda{} = \kappa{}(1-\mu{})\), the
results of Steps 1 and 2, and the product rule.
\end{proof}
\begin{proof}[Proof of \cref{S-vdv-comparison}]
The claim follows from \cref{S-vdv,S-vdv-2,S-vdv-3,S-vdv-4} below
together with \cref{S-rho-bound,S-lambda-horizon-decay,lambda-horizon-bound,S-decay}.

\step{Step 1: Schematic expressions for \(S^k - (\bar{v}\partial_{\bar{v}})^k\) for \(1\le
k\le 4\).} Define a function of \(v\) alone by
\begin{equation}
\rho{} \coloneqq{} 1 - \frac{\bar{v}}{v}\frac{1}{\kappa{}|_{\mathcal{H}}}.
\end{equation}
We show that
\begin{align}
S - \bar{v}\partial_{\bar{v}} &=  \rho{}S + (1-\rho{})v\lambda{}U, \label{S-vdv}\\
S^2 - (\bar{v}\partial{}_{\bar{v}})^2 &=_{\mathrm{s}} (S^{\le 1}\rho{})^{\le 1}(S^{\le 1}(v\lambda{}))^{\le 1}(v\lambda{}U(v\lambda{}))^{\le 1}\set{U,S}^{1\le 2}, \label{S-vdv-2} \\
S^3 - (\bar{v}\partial{}_{\bar{v}})^3 &=_{\mathrm{s}} (S^{\le 2}\rho{})^{\le 3}(S^{\le 2}(v\lambda{}))^{\le 2}(S^{\le 1}(v\lambda{})S^{\le 1}U(v\lambda{}))^{\le 1}(v\lambda{}(U^{\le 2}(v\lambda{}))^{\le 2})^{\le 2}\set{U,S}^{1\le 3}, \label{S-vdv-3} \\
S^4 - (\bar{v}\partial{}_{\bar{v}})^4 &=_{\mathrm{s}} (S^{\le 3}\rho{})^{\le 4}(S^{\le 3}(v\lambda{}))^{\le 3}(S^{\le 2}(v\lambda{})(S^{\le 2}U(v\lambda{}))^{\le 2})^{\le 2} \nonumber\\
&\qquad (v\lambda{}(S^{\le a}U^{\le b}(v\lambda{}))^{\le 3})^{\le 3}|_{a\le 2,b\le 3,a+b\le 4}\set{U,S}^{1\le 4}. \label{S-vdv-4}
\end{align}
Indeed, we have
\begin{equation}
S = v\partial_v + v\lambda{}U,\quad\text{and}\quad  \bar{v}\partial{}_{\bar{v}} = \frac{\bar{v}}{\kappa{}|_{\mathcal{H}}}\partial{}_v.
\end{equation}
It follows that
\begin{equation}\label{S-vdv-1-prep}
S - \bar{v}\partial_{\bar{v}} = \Bigl(v - \frac{\bar{v}}{\kappa{}|_{\mathcal{H}}}\Bigr)\partial_v + v\lambda{}U = \Bigl(1 - \frac{\bar{v}}{v} \frac{1}{\kappa{}|_{\mathcal{H}}}\Bigr)(S - v\lambda{}U) + v\lambda{}U = \rho{}S + (1-\rho{})v\lambda{}U,
\end{equation}
which is \cref{S-vdv}. We now explain how to compute \cref{S-vdv-2}. From
\cref{S-vdv-1-prep}, one obtains
\begin{equation}
S^2 - S(\bar{v}\partial{}_{\bar{v}}) =_{\mathrm{s}} \sum_{\substack{a,b\le 1 \\ a + b \ge 1}}(S^{\le 1}\rho{})^a(S^{\le 1}(v\lambda{}))^b\set{U,S}^{1\le 2}
\end{equation}
and
\begin{equation}\label{S-vdv-1-prep-2}
\begin{split}
S(\bar{v}\partial{}_{\bar{v}}) - (\bar{v}\partial{}_{\bar{v}})^2 &=_{\mathrm{s}} (\rho{}S + (1-\rho{})v\lambda{}U)((1-\rho{})S + (1-\rho{})v\lambda{}U) \\
&=_{\mathrm{s}} \sum_{\substack{a\le 2,b,c\le 1 \\ a + b + c\ge 1}}(S^{\le 1}\rho{})^{a}(S^{\le 1}(v\lambda{}))^{b}(v\lambda{}U(v\lambda{}))^{c}\set{U,S}^{1\le 2},
\end{split}
\end{equation}
where in \cref{S-vdv-1-prep-2} we used the already computed expression for \(S -
\bar{v}\partial_{\bar{v}}\). Adding \cref{S-vdv-1-prep,S-vdv-1-prep-2} gives
\cref{S-vdv-2}. One computes \cref{S-vdv-3,S-vdv-4} similarly.

\step{Step 2: Estimate for \(S^k(\kappa{}|_{\mathcal{H}})\) when \(1\le k\le 3\).} We show that
\begin{equation}\label{S-kappa-on-the-horizon}
S^k\kappa{}|_{\mathcal{H}}\le C(\epsilon{},\mathfrak{D}_3)v^{-2+\epsilon{}}\quad \text{for }1\le k\le 3.
\end{equation}
This follows from
\cref{skappa-h-1,SU-lambda,SU-kappa,skappa-h-1,skappa-h-2,skappa-h-3}, which are
proven below.

\step{Step 2a: Schematic expressions for \(S^k(\kappa{}|_{\mathcal{H}})\) and \(1\le k\le 3\).} Since \(\kappa{}|_{\mathcal{H}}\) is a function of \(v\) alone, we have
\(U(\kappa{}|_{\mathcal{H}}) = 0\). It follows that
\begin{equation}\label{skappa-h-1}
S^k(\kappa{}|_{\mathcal{H}}) = (v\partial{}_v)^k\kappa{}|_{\mathcal{H}} = ((v\partial{}_v)^k\kappa{})|_{\mathcal{H}} = ((S-v\lambda{}U)^k\kappa{})|_{\mathcal{H}}.
\end{equation}
We compute
\begin{equation}\label{skappa-h-2}
(S - v\lambda{}U)^2 = S^2 + S(v\lambda{})U + v\lambda{}SU + (v\lambda{})^2U^2 + v\lambda{}U(v\lambda{})U =_{\mathrm{s}} (S^{\le 1}(v\lambda{}))^{\le 2}(v\lambda{}U(v\lambda{}))^{\le 1}\set{U,S}^{1\le 2}
\end{equation}
and
\begin{equation}\label{skappa-h-3}
(S - v\lambda{}U)^3 =_{\mathrm{s}} (S^{\le 2}(v\lambda{}))^{\le 3}(S^{\le 1}(v\lambda{})(S^{\le 1}U^{\le 2}(v\lambda{}))^{\le 2})^{\le 2}\set{U,S}^{1\le 3}.
\end{equation}

\step{Step 2b: Estimate for derivatives of \(\lambda{}\).} The transport equation for \(\lambda{}\) is
\begin{equation}
U\lambda{} = \frac{2(\varpi{}-\mathbf{e}^2/r)}{r^2}\kappa{}=_{\mathrm{s}}\mathfrak{b}_0.
\end{equation}
Using \cref{lambda-horizon-bound,S-lambda-horizon-decay} for the \(b = 0\) case, it
follows that
\begin{equation}\label{SU-lambda}
\abs{S^aU^b\lambda{}}\le C(\mathfrak{D}_{a+b}).
\end{equation}

\step{Step 2c: Estimate for derivatives of \(\kappa{}\) along the horizon.}
It is easy to compute from the transport equation for \(\kappa{}\) that for \(b\ge 1\)
one has
\begin{equation}
S^aU^b\log \kappa{} =_{\mathrm{s}} r^{\le 1}S^{\le a}U^{\le b}\varphi{}S^{\le a}U^{\le b}\varphi{}.
\end{equation}
Together with \cref{S-kappa-horizon-decay} for the case \(b = 0\) (and a chain rule
argument), we obtain (for \(a,b\ge 0\) and \(a + b\ge 1\))
\begin{equation}\label{SU-kappa}
\abs{S^aU^b\kappa{}}|_{\mathcal{H}}\le C(\epsilon{},\mathfrak{D}_{a+b})v^{-2+\epsilon{}}.
\end{equation}

\step{Step 3: Estimate for \(S^k\rho{}\).} We claim that
\begin{equation}\label{S-rho-bound}
\abs{S^k\rho{}}\le v^{-1}C(k,\mathfrak{D}_{k})\quad \text{for }1\le k\le 3.
\end{equation}
We first consider the case \(k = 0\). Observe that
\begin{equation}
\rho{} = v^{-1}(v-\bar{v}) + \frac{\bar{v}}{v}(1-1/\kappa{}|_{\mathcal{H}}).
\end{equation}
Since \(\abs{1-1/\kappa{}{|_{\mathcal{H}}}}\lesssim \log \kappa{}|_{\mathcal{H}}\le C(\mathfrak{D}_0)v^{-4}\) by Step 1b of
\cref{S-lambda-horizon-decay}, it is enough to show that \(\bar{v}(v) = v +
C(\mathfrak{D}_0)\). From \(\dv{\bar{v}}{v}(v) = \kappa{}|_{\mathcal{H}}(v)\)
and \(\bar{v}(v=1) = 1\), one obtains
\begin{equation}
\begin{split}
\abs{\bar{v}(v) - v}  &\le  \abs[\Big]{1 - v + \int_1^v \kappa{}|_{\mathcal{H}}(v')\dd{}v'} \le \int_1^v \abs{\kappa{}|_{\mathcal{H}}(v') - 1}\dd{}v'\le C(\mathfrak{b}_0)\int_1^v \log \kappa{}\dd{}v'\\
&\le C(\mathfrak{D}_0)\int_1^v v^{-4}\dd{}v'\le C(\mathfrak{D}_0).
\end{split}
\end{equation}
This establishes the cases \(k = 0\). We now show that for \(k\ge 1\) we have
\begin{equation}\label{S-k-rho-horizon}
S^k\rho{} =_{\mathrm{s}} \sum_{\substack{a+b\ge 1 \\ a\le 1,b\le k}}\rho{}^{a}(S^{\le k}\log \kappa{}|_{\mathcal{H}})^{b}
\end{equation}
Indeed, in the case \(k = 1\), one computes explicitly
\begin{equation}
S\rho{} = -\rho{} - \rho{}S\log \kappa{}|_{\mathcal{H}} + S\log \kappa{}|_{\mathcal{H}},
\end{equation}
and the general case follows by an easy induction. Now
\cref{S-kappa-on-the-horizon,S-k-rho-horizon} and the case \(k = 0\) together
establish \cref{S-rho-bound}.
\end{proof}
\section{Late-time tails}
\label{sec:org785e069}
\label{sec:late-time-tails} The goal of this section is to prove
\cref{tails-theorem-precise} (the precise version of \cref{tails-theorem}), as an
application of \cite[Main Theorem 4]{luk-oh-tails}. The estimates we have already
obtained do not suffice to satisfy the assumptions of \cite{luk-oh-tails}. In
particular, we require additional control of
\((r\overline{\partial}_r)\)-derivatives of the scalar field and of geometric
quantities, together with precise asymptotics for geometric quantities in the
region \(\set{u\lesssim r}\). We prove such estimates in
\cref{sec:rdr-derivatives-of-scalar-field-and-geometry}. The estimates for
\((r\overline{\partial}_r)\)-derivatives of the scalar field are coupled to the
corresponding estimates for the geometry. To obtain such estimates from control
of \(S\)-derivatives of the scalar field and the geometry (which has been done
in the preceding sections), we introduce the spacetime elliptic estimates (see
\cref{sec:spacetime-elliptic-estimates}) and Klainerman's Sobolev-type inequality
(see \cref{sec:klainerman-sobolev}) used in \cite{luk-oh-tails}.

In this section we will freely use the results of
\cref{sec:putting-it-all-together}, namely that our schematic quantities of order
\(\alpha{}\) (see \cref{sec:schematic-geom-quantities}) are controlled by data
\(\mathfrak{D}_{\abs{\alpha{}}}\). We also allow our constants to depend on
\(\varpi{}_i\), \(c_{\mathcal{H}}\), and \(r_{\text{min}}\).
\subsection{Spacetime elliptic estimate}
\label{sec:org2d3157c}
\label{sec:spacetime-elliptic-estimates}
We first formulate an abstract second order weighted elliptic estimate in two dimensions.
\begin{lemma}[Abstract weighted elliptic estimate]
Let \(U \coloneqq{} (0,\infty)_{x_1}\times (0,\infty)_{x_2}\subset \R^2_{x_1,x_2}\). Fix a point \(\conj{x}\in
U\). For \(\theta{}\in (0,1)\), write \(\mathcal{R}_\theta{} =
\mathcal{R}_\theta{}(\conj{x}) \coloneqq{} \set{x\in U : \abs{x_i - \conj{x}_i}
< \theta{}\conj{x}_i\text{ for } i = 1,2}\). Fix \(\theta{},\theta{}'\in (0,1)\)
such that \(\theta{} < \theta{}'\). Let \(L\) be a differential operator on
\(\mathcal{R}_{\theta{}'}\) of the form
\begin{equation}\label{Lpsi-equation}
L\psi{} = \sum_{i,j=1}^2 w_{ij}\partial{}_{ij}^2\psi{} + \sum_{i=1}^2 w_i\partial{}_i\psi{}
\end{equation}
for smooth coefficients \(w_{ij}, w_i\) satisfying \(w_{ij} = w_{ji}\). Suppose
that there are \(\delta{}\in (0,1)\) and \(A\ge 1\) such that the following
conditions hold on \(\mathcal{R}_{\theta{}'}\):
\begin{align}
w_{ii} & > 0 \label{positivity-condition}\\
\abs{w_{12}}^2&\le (1-\delta{})w_{11}w_{22}, \label{ellipticity-condition}\\
\abs{\partial{}_iw_{ij}}^2 + \abs{x_i^{-1}w_{ij}}^2 &\le A^2w_{jj} \label{weight-condition-1}\\
\abs{w_i}^2 &\le A^2w_{ii} \label{weight-condition-2} \\
w_{jj}\abs{\partial{}_jw_{ii}}^2&\le A^2w_{ii}^2 \label{weight-condition-3}.
\end{align}
Then
\begin{equation}
\sum_{i}\norm{w_{ii}^{1/2}\partial{}_i\psi{}}_{L^2(\mathcal{R}_\theta{})} + \sum_{i,j}\norm{w_{ii}^{1/2}w_{jj}^{1/2}\partial{}_{ij}^2\psi{}}_{L^2(\mathcal{R}_\theta{})}\lesssim_{\theta{},\theta{}',\delta{}} A(\norm{\psi{}}_{L^2(\mathcal{R}_{\theta{}'})} + \norm{L\psi{}}_{L^2(\mathcal{R}_{\theta{}'})}).
\end{equation}
\label{abstract-elliptic-estimate}
\end{lemma}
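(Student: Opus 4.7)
My plan is to prove this as a standard weighted second-order elliptic regularity estimate, with the weights handled by Cauchy--Schwarz via the hypotheses \eqref{ellipticity-condition}--\eqref{weight-condition-3}. Fix a smooth cutoff $\chi$ with $\chi \equiv 1$ on $\mathcal{R}_\theta$, $\Supp\chi \subset \mathcal{R}_{\theta'}$, and $\abs{\partial_i\chi}\lesssim_{\theta,\theta'} x_i^{-1}$. I will prove the first-order estimate by pairing the equation with $\chi^2\psi$, then upgrade to the second-order estimate by computing $\int \chi^2(L\psi)^2$ and integrating by parts in a Bochner-type manner.

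For the first-order estimate, integrating $\int \chi^2\psi L\psi$ by parts once in each term gives
\[
\int \chi^2 \sum_{ij} w_{ij}\partial_i\psi\partial_j\psi = -\int \chi^2\psi L\psi - 2\int \sum_{ij}\chi\partial_j\chi\, w_{ij}\psi\partial_i\psi - \int \chi^2 \sum_{ij}\partial_jw_{ij}\,\psi\partial_i\psi + \tfrac{1}{2}\int \partial_i(\chi^2 w_i)\psi^2.
\]
The ellipticity condition \eqref{ellipticity-condition} together with Cauchy--Schwarz gives coercivity $\sum w_{ij}\xi_i\xi_j \ge c_\delta \sum w_{ii}\xi_i^2$, so the left side dominates $c_\delta\int \chi^2 \sum w_{ii}(\partial_i\psi)^2$. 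Each error term on the right contains exactly one first derivative of $\psi$, so applying an $\epsilon$-Young's inequality and using $\abs{x_i^{-1}w_{ij}}\le A w_{jj}^{1/2}$ (from \eqref{weight-condition-1}), $\abs{\partial_jw_{ij}} \le A w_{ii}^{1/2}$, and $\abs{w_i}\le A w_{ii}^{1/2}$ lets me absorb the gradient factor into the coercive side at the cost of $A^2\norm{\psi}_{L^2(\mathcal{R}_{\theta'})}^2 + \norm{L\psi}_{L^2(\mathcal{R}_{\theta'})}^2$.

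For the second-order estimate, I will expand
\[
\int \chi^2 (L\psi)^2 = \sum_{ijkl}\int \chi^2 w_{ij}w_{kl}\,\partial^2_{ij}\psi\,\partial^2_{kl}\psi + (\text{cross and zeroth-order terms}),
\]
and integrate by parts $\partial_j$ and then $\partial_k$ in the main quadratic sum to rewrite it as
\[
\sum_{ijkl}\int \chi^2 w_{ij}w_{kl}\,\partial^2_{ik}\psi\,\partial^2_{jl}\psi + (\text{weight- and cutoff-derivative commutators}).
\]
The Bochner-type principal term equals $\int \chi^2 \norm{W^{1/2}(D^2\psi)W^{1/2}}_{\mathrm{HS}}^2$ with $W=(w_{ij})$, and by \eqref{ellipticity-condition} this is bounded below by $c_\delta\int \chi^2 \sum w_{ii}w_{jj}(\partial^2_{ij}\psi)^2$ (verified in $2$D by diagonalizing $W$). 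Every commutator term produced is a product of (i) a cutoff or weight derivative bounded by $A w_{ii}^{1/2}$ (via \eqref{weight-condition-1} for $\partial_iw_{ij}$ and $x_i^{-1}w_{ij}$, and \eqref{weight-condition-3} for $\partial_jw_{ii}$ with $j\neq i$), (ii) a first derivative of $\psi$, and (iii) a second derivative of $\psi$. Splitting each by $\epsilon$-Young's inequality, the second-derivative factors are absorbed into $c_\delta\int \chi^2 \sum w_{ii}w_{jj}(\partial^2_{ij}\psi)^2$, and the remaining weighted first-derivative norms are controlled by applying the first-order estimate on a slightly larger intermediate rectangle $\mathcal{R}_{\theta''}$ with $\theta<\theta''<\theta'$.

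The main obstacle will be the bookkeeping: matching every commutator to exactly the hypothesis designed to control it. Conditions \eqref{weight-condition-1} and \eqref{weight-condition-2} handle all first-derivative factors that arise from integrating by parts against $\chi$ or $w_i$, but the Bochner rearrangement additionally produces off-diagonal derivatives $\partial_jw_{ii}$ with $j\neq i$, for which only hypothesis \eqref{weight-condition-3} gives a bound of the required form $\abs{\partial_jw_{ii}}\le Aw_{ii}/w_{jj}^{1/2}$; this is precisely what is needed to pair such a derivative against $w_{ii}^{1/2}w_{jj}^{1/2}\partial^2_{ij}\psi$ and absorb. A secondary technical point is to iterate the first-order estimate between a nested family of rectangles $\mathcal{R}_{\theta_n}$ with $\theta_n \nearrow \theta'$ so that the weighted first-derivative norms produced in the commutators really are controlled on $\mathcal{R}_{\theta'}$ by data only.
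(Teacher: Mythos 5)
Your proposal is correct and follows essentially the same route as the paper: a cutoff Caccioppoli estimate for first derivatives (ellipticity for coercivity, Young's inequality with \eqref{weight-condition-1}--\eqref{weight-condition-2} for the errors), followed by squaring $L\psi$ and integrating by parts to convert $\partial^2_{11}\psi\,\partial^2_{22}\psi$ into $(\partial^2_{12}\psi)^2$, with the off-diagonal weight derivatives $\partial_j w_{ii}$ ($j\neq i$) handled by \eqref{weight-condition-3} and the leftover first-order terms absorbed via the first-order estimate on a larger rectangle. The only difference is organizational: the paper performs the algebraic rearrangement of $(L\psi)^2$ pointwise before integrating, rather than phrasing it as a Bochner rearrangement of the full quadratic form.
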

\begin{remark}
Observe that conditions
\cref{weight-condition-1,weight-condition-2,weight-condition-3} ``scale correctly,''
in the sense that the indices \(i\) and \(j\) appear the same number of times on
each side when counted with multiplicity, where a derivative counts with
multiplicity \(-1\).
\end{remark}
\begin{proof}
The result follows from \cref{Lpsi-first-order,Lpsi-second-order} below. For ease
of notation, we will assume \(\theta{}' = 4\theta{}\). Let \(\chi{}\) be a
smooth function satisfying
\begin{equation}
0\le \chi{}\le 1, \quad \chi{}\equiv 1\text{ on }\mathcal{R}_\theta{},\quad \Supp \chi{}\subset \mathcal{R}_{2\theta{}},\qquad \abs{\partial{}_i\chi{}}\le C\theta{}^{-1}\conj{x}_i^{-1}\le C\theta{}^{-1}x_i^{-1}\text{ on }\mathcal{R}_{2\theta{}}.
\end{equation}

\step{Step 1: Estimate for first order derivatives.} We first show that
\begin{equation}\label{Lpsi-first-order}
\sum_{i=1}^2\norm{w_{ii}^{1/2}\partial{}_i\psi{}}_{L^2(\mathcal{R}_\theta{})}\le CA\theta{}^{-1}\delta{}^{-2}(\norm{\psi{}}_{L^2(\mathcal{R}_{2\theta{}})} + \norm{L\psi{}}_{L^2(\mathcal{R}_{2\theta{}})}).
\end{equation}
Multiply \cref{Lpsi-equation} by \(\chi^2\psi{}\) and rewrite and rearrange (``integrate by parts'') to get
\begin{equation}\label{abstract-elliptic-1}
\sum_{i,j=1}^2 \chi{}^2w_{ij}\partial{}_i\psi{}\partial{}_j\psi{} =  -\chi{}^2\psi{}L\psi{} + \sum_{i} \chi{}^2w_i\psi{}\partial{}_i\psi{} +  \sum_{i,j=1}^2 [\partial{}_i(\chi{}^2w_{ij}\psi{}\partial{}_j\psi{}) - 2\chi{}\partial{}_i\chi{}w_{ij}\psi{}\partial{}_j\psi{} - \chi{}^2\partial{}_iw_{ij}\psi{}\partial{}_j\psi{}.
\end{equation}
The ellipticity condition \cref{ellipticity-condition} and Young's inequality imply
\begin{equation}\label{abstract-elliptic-1-1}
\delta{}\sum_{i}\chi{}^2w_{ii}(\partial{}_i\psi{})^2 \le \side{LHS}{abstract-elliptic-1}.
\end{equation}
Let \(\epsilon{} > 0\). Use Young's inequality and then the conditions
\cref{weight-condition-1,weight-condition-2} to obtain
\begin{equation}\label{abstract-elliptic-2}
\begin{split}
\side{RHS}{abstract-elliptic-1} &\le \epsilon{}\sum_{i}\chi{}^2w_{ii}(\partial{}_i\psi{})^2 + C\epsilon{}^{-1}(\abs{\psi{}}^2 + \abs{L\psi{}}^2) + \sum_{i,j=1}^2 [\partial{}_i(\chi{}^2w_{ij}\psi{}\partial{}_j\psi{}) \\
&\qquad+ C\epsilon{}^{-1}\psi{}^2\sum_{i,j}[\chi{}^2w_{jj}^{-1}w_j^2 + w_{jj}^{-1}\abs{\partial{}_i\chi{}w_{ij}}^2 + w_{jj}^{-1}\abs{\partial{}_iw_{ij}}] \\
&\le \epsilon{}\sum_{i=1}^2\chi{}^2w_{ii}(\partial{}_i\psi{})^2 + \sum_{i,j} \partial{}_i(\chi{}^2w_{ij}\psi{}\partial{}_j\psi{}) + CA\theta{}^{-1}\epsilon{}^{-1}(\abs{\psi{}}^2 + \abs{L\psi{}}^2)
\end{split}
\end{equation}
Take \(\epsilon{} < \delta{}/4\) to absorb the first term on the right of
\cref{abstract-elliptic-2} to the left side of \cref{abstract-elliptic-1-1}. Integrate
over \(\mathcal{R}_{2\theta{}}\), noting that the total derivative term has vanishing
integral, to conclude \cref{Lpsi-first-order}.

\step{Step 2: Estimate for second order derivatives.} We now show that
\begin{equation}\label{Lpsi-second-order}
\sum_{i,j}\norm{w_{ij}\partial_{ij}^2\psi{}}_{L^2(\mathcal{R}_\theta{})} + \sum_{i\neq{}j}\norm{w_{ii}^{1/2}w_{jj}^{1/2}\partial_{ij}^2\psi{}}_{L^2(\mathcal{R}_\theta{})}\le CA\theta^{-1}\delta^{-1}(\norm{\psi{}}_{L^2(\mathcal{R}_{4\theta{}})} + \norm{L\psi{}}_{L^2(\mathcal{R}_{4\theta{}})})
\end{equation}
Begin by computing
\begin{equation}\label{abstract-elliptic-3}
\begin{split}
2\sum_{i\neq{}j}w_{ij}^2(\partial_{ij}^2\psi{})^2 + \sum_{i,j}w_{ii}w_{jj}\partial_{ii}^2\psi{}\partial_{jj}^2\psi{} &= (L\psi{})^2 - \sum_{i\neq{}j}\underbrace{2w_{ii}w_{ij}\partial_{ii}^2\psi{}\partial_{ij}^2\psi{}}_\text{(I)} \\
&\qquad  - \underbrace{2\sum_{i,j,k}w_{ij}w_k\partial{}_{ij}^2\psi{}\partial{}_k\psi{}}_{\text{(II)}} - \underbrace{\sum_{i,j}w_iw_j\partial{}_i\psi{}\partial{}_j\psi{}}_{\text{(III)}}.
\end{split}
\end{equation}
Treat term \(\text{(I)}\) using \cref{ellipticity-condition} and Young's
inequality. Terms \(\text{(II)}\) and \(\text{(III)}\) can be handled with
Young's inequality:
\begin{equation}\label{abstract-elliptic-4}
\begin{split}
\abs{\text{(I)}}&\le 2w_{ii}w_{ii}^{1/2}w_{jj}^{1/2}\partial{}_{ii}^2\psi{}\partial{}_{ij}^2\psi{}\le (1-\delta{})w_{ii}^2(\partial{}_{ii}\psi{})^2 + (1-\delta{})w_{ii}w_{jj}(\partial{}_{ij}\psi{})^2 \\
\text{(II)} + \text{(III)}&\le \epsilon{}\sum_{i,j}w_{ij}(\partial{}_{ij}\psi{})^2 + C\epsilon{}^{-1}\sum_{i}w_i(\partial{}_i\psi{})^2.
\end{split}
\end{equation}
Substitute \cref{abstract-elliptic-4} into \cref{abstract-elliptic-3}, take \(\epsilon{} =
\delta{}/2\) and absorb the \(\epsilon{}\)-term to the left to get
\begin{equation}\label{abstract-elliptic-8}
\begin{split}
\frac{\delta{}}{2}\sum_{i,j}w_{ij}^2(\partial_{ij}^2\psi{})^2 + \sum_{i\neq{}j}w_{ii}w_{jj}\partial_{ii}^2\psi{}\partial_{jj}^2\psi{} &\le  (L\psi{})^2 + C\delta{}^{-1}\sum_{i}w_i^2(\partial{}_i\psi{})^2
\end{split}
\end{equation}
We now multiply by \(\chi^2\) and integrate the second term on the left by parts
twice:
\begin{equation}\label{abstract-elliptic-5}
\chi{}^2w_{ii}w_{jj}\partial{}_{ii}^2\psi{}\partial{}_{jj}^2\psi{} = \chi{}^2w_{ii}w_{jj}(\partial{}_{ij}\psi{})^2 -\partial{}_i(\chi{}^2w_{ii}w_{jj}\partial{}_{ji}^2\partial{}_j\psi{}) + \partial{}_j(\chi{}^2w_{ii}w_{jj}\partial{}_{ii}^2\psi{}\partial{}_{j}\psi{}) + \mathcal{E}_{ij},
\end{equation}
where
\begin{equation}\label{abstract-elliptic-6}
\begin{split}
\mathcal{E}_{ij} &\coloneqq{}  2\chi{}\partial{}_i\chi{}w_{ii}w_{jj}\partial{}_{ij}^2\psi{}\partial{}_j\psi{} + \chi{}^2\partial{}_iw_{ii}w_{jj}\partial{}_{ij}^2\psi{}\partial{}_j\psi{} + \chi{}^2w_{ii}\partial{}_iw_{jj}\partial{}_{ij}^2\psi{}\partial{}_j\psi{}  \\
&\qquad - 2\chi{}\partial{}_j\chi{}w_{ii}w_{jj}\partial{}_{ii}^2\psi{}\partial{}_{j}\psi{} - \chi{}^2\partial{}_jw_{ii}w_{jj}\partial{}_{ii}^2\psi{}\partial{}_j\psi{} - \chi{}^2w_{ii}\partial{}_jw_{jj}\partial{}_{ii}^2\psi{}\partial{}_j\psi{}.
\end{split}
\end{equation}
Estimate the terms in \cref{abstract-elliptic-6} using Young's inequality
\cref{weight-condition-1,weight-condition-3}:
\begin{equation}\label{abstract-elliptic-7}
\begin{split}
\abs{\mathcal{E}_{ij}}\le C\epsilon{}\chi^2w_{ii}w_{jj}(\partial_{ij}\psi{})^2 + C\epsilon{}\chi^2w_{ii}^2(\partial{}_{ii}^2\psi{})^2 + CA^2\theta{}^{-2}\epsilon^{-1}w_{jj}(\partial_j\psi{})^2.
\end{split}
\end{equation}
Substitute \cref{abstract-elliptic-6,abstract-elliptic-7} into
 \cref{abstract-elliptic-5} and move all terms but
 \(\chi{}^2w_{ii}w_{jj}(\partial{}_{ij}\psi{})^2\) to the right side of
 \cref{abstract-elliptic-8}. Take \(\epsilon{} > 0\) small enough based on
 \(\delta{}\) to absorb the \(\epsilon{}\)-weighted terms to the left, integrate
 over \(\mathcal{R}_{2\theta{}}\), and use \cref{Lpsi-first-order} (with
 \(2\theta{}\) in place of \(\theta{}\)) to conclude \cref{Lpsi-second-order}.
\end{proof}
Now we construct elliptic operators involving \(S\), \(S^2\), and \(\Box{}\) to
which we apply \cref{abstract-elliptic-estimate}.
\begin{lemma}
Fix \(0<\theta{}<\theta{}'<1\). Let \(U_0\) be sufficiently large
depending on \(\mathfrak{D}_1\). Fix a point \((u_0,r_0)\in \set{r\ge \Rc}\cap \set{u\ge
U_0}\). Write \(\mathcal{R}_\theta{} \coloneqq{} \mathcal{R}_\theta{}(u_0,r_0)\). We have
\begin{equation}\label{elliptic-estimate}
\sum_{i+j\le 2}\norm{(u\overline{\partial}_u)^i(r\overline{\partial}_r)^j\psi{}}_{L^2(\mathcal{R}_{\theta{}})}\lesssim_{\theta{},\theta{}'} \norm{S^{\le 2}\psi{}}_{L^2(\mathcal{R}_{\theta{}'})} + r_0\min (u_0,r_0)\norm{\Box{}\psi{}}_{L^2(\mathcal{R}_{\theta{}'})}.
\end{equation}
\label{spacetime-elliptic-estimate}
\end{lemma}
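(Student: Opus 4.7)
The plan is to form a second-order elliptic combination of $S$ and $\Box$ on $\mathcal{R}_{\theta'}(u_0,r_0)$ and invoke the abstract weighted elliptic estimate \cref{abstract-elliptic-estimate} in the $\mathcal{I}$-gauge coordinates $(u,r)$. Since $(u_0,r_0)\in\{r\ge\Rc\}$ and $\theta'<1$, the cutoff $\chi_{r\lesssim\Rc}$ vanishes on $\mathcal{R}_{\theta'}$ (once $\Rc$ is large enough relative to $(1-\theta')^{-1}$), so $S=u\overline{\partial}_u+r\overline{\partial}_r$ and a direct computation in $(u,r)$ coordinates gives
\[
S^2=u^2\overline{\partial}_u^2+2ur\overline{\partial}_u\overline{\partial}_r+r^2\overline{\partial}_r^2+u\overline{\partial}_u+r\overline{\partial}_r.
\]
The principal symbol $(u\xi_u+r\xi_r)^2$ of $S^2$ is a perfect square, hence degenerate along the direction $(\xi_u,\xi_r)=(r,-u)$. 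Inspecting \eqref{box-I-gauge}, the principal symbol of $-\Box$ contains $(1-\mu)\xi_r^2-(-\gamma)^{-1}\xi_u\xi_r$, which evaluated at $(r,-u)$ equals $(1-\mu)u^2+ur/(-\gamma)>0$. This suggests taking
\[
L\coloneqq S^2-2r_0\min(u_0,r_0)\,\Box
\]
as the elliptic operator and applying \cref{abstract-elliptic-estimate} to $L$.

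Writing $L$ in the form $\sum w_{ij}\overline{\partial}_{ij}^2+\sum w_i\overline{\partial}_i$, the principal weights are $w_{uu}=u^2$, $w_{rr}=r^2+2r_0\min(u_0,r_0)(1-\mu)$, and $w_{ur}=ur-r_0\min(u_0,r_0)/(-\gamma)$. The crux of the argument is the ellipticity check \cref{ellipticity-condition}: a direct computation gives
\[
w_{uu}w_{rr}-w_{ur}^2=r_0\min(u_0,r_0)\Bigl[\tfrac{2ur}{(-\gamma)}+2u^2(1-\mu)-\tfrac{r_0\min(u_0,r_0)}{(-\gamma)^2}\Bigr].
\]
On $\mathcal{R}_{\theta'}$ one has $u\sim u_0$ and $r\sim r_0$, and taking $U_0$ large depending on $\mathfrak{D}_1$ forces $(1-\mu)$ and $(-\gamma)$ to lie in an arbitrarily small neighbourhood of $1$ by the zeroth-order estimates of \cref{zeroth-order-geometric-bounds}. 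Treating the cases $u_0\le r_0$ and $u_0\ge r_0$ separately, the bracketed quantity is then bounded below by a positive multiple of $u_0\max(u_0,r_0)$, whereas $w_{uu}w_{rr}\lesssim u_0^2 r_0^2$; this yields $w_{ur}^2\le(1-\delta)w_{uu}w_{rr}$ with $\delta>0$ depending only on $\theta'$. The weight conditions \cref{weight-condition-1}--\cref{weight-condition-3} then follow from the scaling $w_{uu}\sim u^2$, $w_{rr}\sim r^2$, $w_{ur}\sim ur$; the nontrivial contributions come from $\overline{\partial}_i\mu$ and $\overline{\partial}_i(-\gamma)$, which are converted to $(\partial_u,\partial_v)$ derivatives via the formulas in \cref{sec:coordinate-systems} and controlled using the transport equations \eqref{sph-sym-equations-1}--\eqref{sph-sym-equations-2} together with the pointwise estimates already at hand, with the required smallness afforded by $U_0$ large.

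Applying \cref{abstract-elliptic-estimate} to $L$ on $\mathcal{R}_\theta\subset\mathcal{R}_{\theta'}$ yields
\[
\sum_i\|w_{ii}^{1/2}\overline{\partial}_i\psi\|_{L^2(\mathcal{R}_\theta)}+\sum_{i,j}\|w_{ii}^{1/2}w_{jj}^{1/2}\overline{\partial}_{ij}^2\psi\|_{L^2(\mathcal{R}_\theta)}\lesssim_{\theta,\theta'}\|\psi\|_{L^2(\mathcal{R}_{\theta'})}+\|L\psi\|_{L^2(\mathcal{R}_{\theta'})}.
\]
Since $w_{uu}^{1/2}\sim u_0\sim u$ and $w_{rr}^{1/2}\sim r_0\sim r$ on $\mathcal{R}_\theta$, the left side controls $\sum_{i+j\le 2}\|(u\overline{\partial}_u)^i(r\overline{\partial}_r)^j\psi\|_{L^2(\mathcal{R}_\theta)}$, while the right side is bounded by $\|S^{\le 2}\psi\|_{L^2(\mathcal{R}_{\theta'})}+2r_0\min(u_0,r_0)\|\Box\psi\|_{L^2(\mathcal{R}_{\theta'})}$, which is the desired \cref{elliptic-estimate}. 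The main obstacle is the ellipticity verification described above; the secondary nuisance is checking that derivatives of the lower-order coefficients of $\Box$ do not spoil the weight conditions, but this is routine given the estimates collected earlier.
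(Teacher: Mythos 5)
Your overall strategy—form a fixed linear combination of \(S^2\) and \(\Box\) and feed it to \cref{abstract-elliptic-estimate}—is the right one, and your determinant formula
\(w_{uu}w_{rr}-w_{ur}^2=M\bigl[\tfrac{2ur}{(-\gamma)}+2u^2(1-\mu)-\tfrac{M}{(-\gamma)^2}\bigr]\) with \(M=r_0\min(u_0,r_0)\) is correct. The gap is in the claim that the bracket is bounded below by a positive multiple of \(u_0\max(u_0,r_0)\) on all of \(\mathcal{R}_{\theta'}\). The bracket must be evaluated at \emph{every} point of \(\mathcal{R}_{\theta'}\), including the lower-left corner \(u=(1-\theta')u_0\), \(r=(1-\theta')r_0\), where (taking \((-\gamma)\approx1\), \(1-\mu\approx1\)) it equals roughly \(2(1-\theta')^2u_0^2+\bigl(2(1-\theta')^2-1\bigr)u_0r_0\). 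When \(r_0\gg u_0\) this is negative as soon as \(\theta'>1-1/\sqrt{2}\approx0.29\), and when \(u_0=r_0\) it is negative for \(\theta'>1/2\). So for a substantial part of the parameter range the lemma must cover (it is invoked in \cref{elliptic:Linf-scalar-field} with \(\theta'\) up to \(1/2\), and at points with \(r_0\gg u_0\) throughout the wave zone), your operator \(L=S^2-2r_0\min(u_0,r_0)\Box\) is actually \emph{hyperbolic} at part of \(\mathcal{R}_{\theta'}\), not merely non-uniformly elliptic, and \cref{abstract-elliptic-estimate} cannot be applied. The root cause is that a constant coefficient \(2r_0\min(u_0,r_0)\) only cancels the cross term \(2ur\) of \(S^2\) up to multiplicative errors of size \((1\pm\theta')^2\), and these errors enter the determinant at top order.

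The paper avoids this by splitting into two regimes with different, and in one case \emph{variable}, coefficients: in \(\{\Rc\le r\le\epsilon u\}\) it uses \(S^2-S-Br^2\Box\) with \(B=36\) (there \(Br^2\le B\epsilon\,ur\) perturbs the cross term only slightly while boosting \(w_{rr}\) by a full factor of \(B\)), and in \(\{r\ge\epsilon u\}\) it uses \(S^2-S-2ur\Box\), whose pointwise coefficient \(2ur\) cancels the cross term exactly, leaving \(w_{ur}=ur(1-(-\gamma)^{-1})\), which is negligible once \(U_0\) is large; in both cases \(\delta\) is independent of \(\theta'\). Your single-operator approach can in fact be repaired without region-splitting by shrinking the coefficient to \(c(\theta')\,r_0\min(u_0,r_0)\) with, say, \(c(\theta')=(1-\theta')^2\): one then checks the bracket is \(\ge(1-\theta')^2\,u_0\max(u_0,r_0)\) and the determinant is \(\gtrsim_{\theta'}u_0^2r_0^2\gtrsim w_{uu}w_{rr}\), and the resulting right-hand side is still dominated by \(r_0\min(u_0,r_0)\norm{\Box\psi}\). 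As written, though, the ellipticity verification fails and needs this correction. The remaining weight conditions and the passage from the abstract estimate to \cref{elliptic-estimate} are handled correctly; only note that the smallness of \(|1-(1-\mu)|\) comes from \(\Rc\) (hence \(r_0\)) being large rather than from \(U_0\), though for your argument a lower bound \(1-\mu\ge1/2\) from \cref{mu-estimate} already suffices.
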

\begin{proof}
\step{Step 1: Elliptic operator in the region \(\set{\Rc\le r\le \epsilon{}u}\).} We show that there
is \(\epsilon{} > 0\) (depending on \(\mathcal{G}_0\) and \(G_{\eta_0}\)) such that
\begin{equation}\label{elliptic-near}
\sum_{i+j\le 2}u^ir^i\norm{\overline{\partial}_u^i\overline{\partial}_r^j\psi{}}_{L^2(\mathcal{R}_{\theta{}}(u,r))}\lesssim_{\theta{},\theta{}'} \norm{S^{\le 2}\psi{}}_{L^2(\mathcal{R}_{\theta{}'}(u,r))} + ur\norm{\Box{}\psi{}}_{L^2(\mathcal{R}_{\theta{}'}(u,r))}\quad \text{in }\set{\Rc\le r\le \epsilon{}u},
\end{equation}
Write \(\chi{} =
\chi{}_{\lesssim \Rc}(r)\). Compute
\begin{equation}\label{S2-minus-S-near}
\begin{split}
S^2 - S &= (u +\chi{}\cdot (v-u))^2\overline{\partial}_u^2 + r^2\overline{\partial}_r^2 + 2r(u + \chi{}\cdot (v-u))\overline{\partial}_u\overline{\partial}_r \\
&\qquad + r(\chi{}'(v-u) + \lambda{}^{-1}\chi{})\partial{}_u - u\chi{}(1 - (-\gamma{})/\kappa{})\overline{\partial}_r \\
\end{split}
\end{equation}
Let \(B > 0\) be a constant to be chosen. Now \(\eqref{S2-minus-S-near} - Br^2\cdot \eqref{box-I-gauge}\) is
\begin{equation}
\begin{split}
S^2 - S - Br^2\Box{} &= (u + \chi{}\cdot (v-u))^2\overline{\partial}_u^2 + (1 + B(1-\mu{}))r^2\overline{\partial}_r + (2r(u + \chi{}\cdot (v-u)) - Br^2(-\gamma{})^{-1})\overline{\partial}_u\overline{\partial}_r \\
&\qquad + r(\chi{}'\cdot (v-u) + \lambda{}^{-1}\chi{} + B(2-2\varpi{}/r))\overline{\partial}_r - (u\chi{}(1-(-\gamma{})/\kappa{}) + Br(-\gamma{})^{-1})\overline{\partial}_u.
\end{split}
\end{equation}
We now verify conditions
\cref{positivity-condition,ellipticity-condition,weight-condition-1,weight-condition-2,weight-condition-3}
of \cref{abstract-elliptic-estimate}. In \(\set{r\le \eta{}_0u}\) we have
\begin{equation}
w_{uu} = u^2 + O(1)ur \quad w_{rr} = (1 + B(1-\mu{}))r^2 \quad w_{ur} = (2ur + (BO(1)) + O(1))r^2),
\end{equation}
where we write \(O(1)\) for a term controlled by \(\mathfrak{B}_0\) and \(G_{\eta_0}\).
Now let \(\epsilon{} < \eta_0\). It follows that in \(\set{r\le \epsilon{}u}\) for \(\epsilon{}\) sufficiently
small depending on \(\mathcal{G}_0\), \(G_{1/2}\), and \(B\), we have
\begin{equation}
\abs{w_{uu}} \ge  \frac{1}{2}u^2 \quad w_{rr} = (1 + B(1-\mu{}))r^2 \quad \abs{w_{ur}} \le 3ur.
\end{equation}
Since \(1-\mu{}\ge 1/2\) in \(\set{r\ge R_0}\), it follows that in \(\set{R_0\le r\le \epsilon{}u}\),
\cref{positivity-condition,ellipticity-condition} are satisfied for \(B = 36\)
(with \(\delta{} > 0\) an explicit numerical constant).
Conditions \cref{weight-condition-1,weight-condition-2,weight-condition-3} follow
from computations involving \(\mathfrak{B}_V\) and \(G_{\eta_0}\).

\step{Step 2: Elliptic operator in the region \(\set{r\ge \epsilon{}u}\).} Let \(\epsilon{}\) be as in
Step 1. In fact, the value of \(\epsilon{}\) does not play a role in the elliptic
estimate for this region. We show that for \(u\) large depending on
\(C(\mathfrak{D}_2)\), we have
\begin{equation}\label{elliptic-far}
\sum_{i+j\le 2}u^ir^i\norm{\overline{\partial}_u^i\overline{\partial}_r^j\psi{}}_{L^2(\mathcal{R}_{\theta{}}(u,r))}\lesssim_{\theta{},\theta{}'} \norm{S^{\le 2}\psi{}}_{L^2(\mathcal{R}_{\theta{}'}(u,r))} + ur\norm{\Box{}\psi{}}_{L^2(\mathcal{R}_{\theta{}'}(u,r))}\quad \text{in }\set{r\ge \epsilon{}u}.
\end{equation}
For \(u\ge \epsilon^{-1} \Rc\), we have \(r\ge \Rc\), so
\begin{equation}\label{S2-minus-S-far}
S^2 - S = u^2\overline{\partial}_u^2 + r^2\overline{\partial}_r^2 + 2ur\overline{\partial}_u\overline{\partial}_r.
\end{equation}
Now \(\eqref{S2-minus-S-far} - 2ur\cdot \eqref{box-I-gauge}\) is
\begin{equation}
S^2 - S - 2ur\Box{} = u^2\overline{\partial}_u^2 + (r^2 + 2ur)\overline{\partial}_r^2 + 2ur\Bigl(1-\frac{1}{(-\gamma{})}\Bigr)\overline{\partial}_u\overline{\partial}_r + 4u(1 - \varpi{}/r)\overline{\partial}_u - u\frac{2}{(-\gamma{})}\overline{\partial}_r.
\end{equation}
To apply \cref{abstract-elliptic-estimate} and obtain \cref{elliptic-far}, we need to check that this operator
satisfies the conditions
\cref{positivity-condition,weight-condition-1,weight-condition-2,weight-condition-3}.
Clearly \cref{positivity-condition} is satisfied. Since \(\abs{1-1/(-\gamma{})}\le
r^{-1}u^{-1}C(\mathfrak{B}_0)\), \cref{ellipticity-condition} is satisfied for \(u\)
large enough depending on \(\mathfrak{B}_0\). Since
\(\abs{\conj{\partial}(-\gamma{})}\le r^{-1}u^{-1}C(\mathfrak{B}_V)\),
\cref{weight-condition-1} is satisfied for \(u\) large enough. Since \(u\lesssim
r\), \cref{weight-condition-2,weight-condition-3} are satisfied.
\end{proof}
\subsection{Klainerman's Sobolev-type inequality}
\label{sec:orge564f18}
\label{sec:klainerman-sobolev}
In this section we upgrade the \(L^2\) estimates in
\cref{sec:spacetime-elliptic-estimates} to \(L^2\)--\(L^\infty\) Sobolev inequalities.
\begin{lemma}
Let \(c > 1\). Let \(Q \coloneqq{} Q_{\ell{}_1,\ell{}_2}\subset \R^2\) be a rectangle of side lengths \(\ell_1\) and \(\ell_2\). We
have
\begin{equation}
\norm{f}_{L^\infty(Q)}\lesssim_c \ell{}_1^{-1/2}\ell{}_2^{-1/2}\sum_{i+j\le 2}\norm{(\ell{}_1\partial{}_{x^1})^i(\ell{}_2\partial{}_{x^2})^jf}_{L^2(cQ)}
\end{equation}
\label{scaled-sobolev-embedding}
\end{lemma}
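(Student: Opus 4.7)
The plan is to reduce this to the standard Sobolev embedding $H^2 \hookrightarrow L^\infty$ in two dimensions by a scaling argument, with the side lengths $\ell_1,\ell_2$ absorbed into a change of variables and the constant $c>1$ controlling the extension / cutoff.

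First, by translating we may assume $Q = [0,\ell_1] \times [0,\ell_2]$ and $cQ$ is the concentric rectangle of side lengths $c\ell_1, c\ell_2$. Perform the change of variables $y^i = x^i/\ell_i$, and set $\tilde f(y) \coloneqq f(\ell_1 y^1, \ell_2 y^2)$. Then $Q$ corresponds to the unit square $Q' = [0,1]^2$ and $cQ$ corresponds to a concentric square $cQ'$ of side length $c$. Under this change of variables $(\ell_i \partial_{x^i}) f = (\partial_{y^i})\tilde f$, and the Jacobian identity gives
\begin{equation*}
\norm{(\ell_1 \partial_{x^1})^i(\ell_2\partial_{x^2})^j f}_{L^2(cQ)}^2 = \ell_1 \ell_2 \norm{\partial_{y^1}^i \partial_{y^2}^j \tilde f}_{L^2(cQ')}^2,
\end{equation*}
while $\norm{f}_{L^\infty(Q)} = \norm{\tilde f}_{L^\infty(Q')}$. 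Thus it suffices to prove the unweighted inequality
\begin{equation*}
\norm{\tilde f}_{L^\infty(Q')} \lesssim_c \sum_{i+j\le 2} \norm{\partial_{y^1}^i\partial_{y^2}^j \tilde f}_{L^2(cQ')}.
\end{equation*}

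Next, I would fix a smooth cutoff $\chi\colon \R^2 \to [0,1]$ equal to $1$ on $Q'$ and compactly supported in the interior of $cQ'$, with $\norm{\partial^\alpha \chi}_{L^\infty}\le C_{c,\alpha}$ for $|\alpha|\le 2$ (such a $\chi$ exists because $c>1$, and its derivative bounds depend on $c$ only through the gap $c-1$). Then $\chi \tilde f$ has compact support in $\R^2$, so the standard two-dimensional Sobolev embedding $H^2(\R^2) \hookrightarrow L^\infty(\R^2)$ gives
\begin{equation*}
\norm{\tilde f}_{L^\infty(Q')} \le \norm{\chi \tilde f}_{L^\infty(\R^2)} \lesssim \norm{\chi \tilde f}_{H^2(\R^2)}.
\end{equation*}
Expanding derivatives of the product $\chi\tilde f$ by the Leibniz rule and using the bounds on $\partial^\alpha \chi$ and the support of $\chi$, the right side is bounded by a $c$-dependent constant times $\sum_{i+j\le 2}\norm{\partial_{y^1}^i\partial_{y^2}^j \tilde f}_{L^2(cQ')}$, which is exactly what we need. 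Undoing the change of variables yields the claim. There is no real obstacle here; the only care needed is to track that the constant blows up as $c\downarrow 1$ (through the $L^\infty$ bounds on the cutoff and its derivatives), which is allowed since the statement permits the implicit constant to depend on $c$.
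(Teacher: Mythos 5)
Your proposal is correct and is essentially the paper's argument: the paper likewise rescales by $(x^1,x^2)\mapsto(\ell_1 x^1,\ell_2 x^2)$ and invokes the usual Sobolev embedding on a unit rectangle, which is exactly your cutoff-plus-$H^2(\R^2)\hookrightarrow L^\infty(\R^2)$ step spelled out. The scaling bookkeeping (the $\ell_1\ell_2$ Jacobian producing the $\ell_1^{-1/2}\ell_2^{-1/2}$ prefactor) and the dependence of the constant on $c$ through the cutoff are both handled correctly.
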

\begin{proof}
Apply the usual Sobolev embedding on a unit rectangle to the rescaled function
\(f(\ell_1x^1,\ell_2x^2)\).
\end{proof}
\begin{lemma}[Weighted \(L^2\)--\(L^{\infty}\) estimate]
Fix \(0<\theta{}<\theta{}'<1\). Fix a point \((u_0,r_0)\in \set{r\ge 5\Rc}\cap \set{u\ge U_0}\) for
\(U_0\) as in \cref{spacetime-elliptic-estimate}. Write \(\mathcal{R}_\theta{}
\coloneqq{} \mathcal{R}_\theta{}(u_0,r_0)\) as in
\cref{abstract-elliptic-estimate}. We have
\begin{equation}
u_0^{1/2}r_0^{1/2}\norm{\psi{}}_{L^\infty(\mathcal{R}_\theta{})}\lesssim_{\theta{},\theta{}'} \norm{S^{\le 2}\psi{}}_{L^2(\mathcal{R}_{\theta{}'})} + \norm{r^2\Box{}\psi{}}_{L^2(\mathcal{R}_{\theta{}'})}.
\end{equation}
\label{klainerman-sobolev}
\end{lemma}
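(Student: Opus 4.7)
The plan is to deduce \cref{klainerman-sobolev} by combining the scaled Sobolev embedding of \cref{scaled-sobolev-embedding} with the spacetime elliptic estimate of \cref{spacetime-elliptic-estimate}. The region $\mathcal{R}_\theta(u_0, r_0)$ is, in the $(u,r)$ coordinates of the $\mathcal{I}$-gauge, a Euclidean rectangle centered at $(u_0, r_0)$ with side lengths $\ell_1 = 2\theta u_0$ and $\ell_2 = 2\theta r_0$. Setting $x^1 = u$, $x^2 = r$, applying \cref{scaled-sobolev-embedding} with $c = \theta''/\theta$ for some intermediate $\theta < \theta'' < \theta'$ yields
\begin{equation*}
u_0^{1/2} r_0^{1/2} \norm{\psi}_{L^\infty(\mathcal{R}_\theta)} \lesssim_{\theta,\theta'} \sum_{i+j\le 2} \norm{(u_0 \overline{\partial}_u)^i(r_0 \overline{\partial}_r)^j \psi}_{L^2(\mathcal{R}_{\theta''})}.
\end{equation*}

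On $\mathcal{R}_{\theta''}$, the coordinates satisfy $u \sim_{\theta''} u_0$ and $r \sim_{\theta''} r_0$ by construction, so the fixed weights $u_0$ and $r_0$ can be replaced (up to constants depending on $\theta, \theta'$) by the geometric weights $u$ and $r$:
\begin{equation*}
\sum_{i+j\le 2}\norm{(u_0 \overline{\partial}_u)^i(r_0 \overline{\partial}_r)^j \psi}_{L^2(\mathcal{R}_{\theta''})} \lesssim_{\theta,\theta'} \sum_{i+j\le 2} \norm{(u\overline{\partial}_u)^i(r\overline{\partial}_r)^j\psi}_{L^2(\mathcal{R}_{\theta''})}.
\end{equation*}
Now I invoke \cref{spacetime-elliptic-estimate} with $\theta''$ in place of $\theta$ and $\theta'$ in place of $\theta'$ (noting that the hypotheses $r_0 \ge 5\Rc$ and $u_0 \ge U_0$ are stronger than what is required there) to bound the right side by $\norm{S^{\le 2}\psi}_{L^2(\mathcal{R}_{\theta'})} + r_0\min(u_0, r_0)\norm{\Box \psi}_{L^2(\mathcal{R}_{\theta'})}$.

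Finally, since $\min(u_0, r_0) \le r_0$ and $r \sim r_0$ on $\mathcal{R}_{\theta'}$, we have $r_0\min(u_0, r_0)\norm{\Box\psi}_{L^2(\mathcal{R}_{\theta'})} \le r_0^2 \norm{\Box\psi}_{L^2(\mathcal{R}_{\theta'})} \lesssim_{\theta'} \norm{r^2 \Box \psi}_{L^2(\mathcal{R}_{\theta'})}$, which closes the estimate. There is no substantive obstacle here — the only delicate point is to chain the region inclusions $\mathcal{R}_\theta \subset \mathcal{R}_{\theta''} \subset \mathcal{R}_{\theta'}$ carefully so that the Sobolev embedding (which enlarges the region slightly) and the elliptic estimate (which also enlarges the region) can be applied in succession without exiting $\mathcal{R}_{\theta'}$. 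All other manipulations amount to replacing fixed scales $u_0, r_0$ by the geometric weights $u, r$, which is valid on $\mathcal{R}_{\theta'}$ at the cost of constants depending on $\theta$ and $\theta'$.
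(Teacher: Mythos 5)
Your proof is correct and follows essentially the same route as the paper: the paper's proof is precisely to combine \cref{scaled-sobolev-embedding} with \cref{spacetime-elliptic-estimate}, noting that \(\mathcal{R}_\theta\) is a rectangle of side lengths proportional to \(u_0\) and \(r_0\) and that \(u\sim u_0\), \(r\sim r_0\) there. Your additional care with the intermediate parameter \(\theta''\) and the bound \(r_0\min(u_0,r_0)\le r_0^2\lesssim r^2\) just makes explicit what the paper leaves implicit.
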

\begin{proof}
Combine \cref{spacetime-elliptic-estimate} with \cref{scaled-sobolev-embedding},
noting that \(\mathcal{R}_\theta{}\) is a rectangle with side lengths proportional
to \(u_0\) and \(r_0\) and that the coordinates \(u\) and \(r\) are comparable
to \(u_0\) and \(r_0\) in \(\mathcal{R}_\theta{}(u_0,r_0)\).
\end{proof}
\subsection{Estimates for \((r\overline{\partial}_r)\)-derivatives of the scalar field and of geometric quantities}
\label{sec:org9c55fc9}
\label{sec:rdr-derivatives-of-scalar-field-and-geometry}
\subsubsection{Estimates for \((r\overline{\partial}_r)\)-derivatives of the scalar field}
\label{sec:orgbbbcd79}
\begin{lemma}[Commutation formula for \(r\overline{\partial}_r\) and \(S\)]
For \(n\ge 0\) and \(m\ge 0\) and \(n + m\ge 1\), we have
\begin{equation}\label{rdr-S-n-comm-equation}
\begin{split}
&[\Box{},(r\overline{\partial}_r)^nS^m] \\
&=_{\mathrm{s}} \sum_{\substack{a+b\le m-1 \\ i+j+k\le n}}\set{1,1 + (r\overline{\partial}_r)^{i}\log (-\gamma{})}^{\le 1}\set{1,(r\overline{\partial}_r)^{j}S^{1+a}\log (-\gamma{})}^{\le 1}(r\overline{\partial}_r)^{k}S^b\Box{} \\
&\,+ \frac{1}{r^2}\mathbf{1}_{m\ge 1}\sum_{\substack{a+b+c\le m \\ i + j + k\le n \\ c\le m-1}}\set{O_\infty(1),r^{-1}(r\overline{\partial}_r)^iS^a\varpi{}}((r\overline{\partial}_r)^jS^{b}\log (-\gamma{}))^{\le 1}(r\overline{\partial}_r)(r\overline{\partial}_r)^{\le 1+k}S^c \\
&\, + \frac{1}{r^2}\mathbf{1}_{n\ge 1}\sum_{\substack{a+b+c\le m \\ i+j+k+\ell{}\le n+1 \\ \ell{}\ge 1}}(1 + (r\overline{\partial}_r)^{i}\log (-\gamma{}))^{\le 1}\set{O_\infty(1),r^{-1}(r\overline{\partial}_r)^{j}S^a\varpi{}}((r\overline{\partial}_r)^{k}S^{b}\log (-\gamma{}))^{\le 1}(r\overline{\partial}_r)^{\ell{}}S^c, \\
\end{split}
\end{equation}
where we write \(O_\infty(1)\) for a constant-coefficient polynomial in \(r^{-1}\).
\label{rdr-S-n-comm}
\end{lemma}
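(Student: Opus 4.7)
The plan is to prove \cref{rdr-S-n-comm-equation} by induction on the total order $n + m$, using the Leibniz-type identity $[\Box, AB] = [\Box, A] B + A [\Box, B]$ to peel off one factor at each step. The two base cases are $(n,m) = (1,0)$ and $(n,m) = (0,1)$, both handled by direct computation from the expression of $\Box$ in the $\mathcal{I}$-gauge given in \cref{box-I-gauge}. The inductive step uses either the factorization $(r\overline{\partial}_r)^n S^m = r\overline{\partial}_r \cdot (r\overline{\partial}_r)^{n-1} S^m$ (when $n \ge 1$) or $(r\overline{\partial}_r)^n S^m = (r\overline{\partial}_r)^n S^{m-1} \cdot S$ (when $m \ge 1$): the inner commutator is controlled by the inductive hypothesis, and the outer commutator by the base case, after which the composition must be repackaged into the schematic form on the right side of \cref{rdr-S-n-comm-equation}.

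For $(n,m) = (1,0)$, the key observation is that $r\overline{\partial}_r$ commutes with any monomial in $\overline{\partial}_r$ and $\overline{\partial}_u$ up to scaling by the total order, so the commutator $[\Box, r\overline{\partial}_r]$ only picks up contributions from the action of $r\overline{\partial}_r$ on the coefficients of $\Box$. The three kinds of coefficients appearing in \cref{box-I-gauge}—pure powers of $r$ such as $2/r$, the factor $(1-\mu) = 1 - 2\varpi/r + \mathbf{e}^2/r^2$, and the factor $(-\gamma)^{-1}$—produce respectively the $O_\infty(1)$ numerical constants, the $r^{-1}(r\overline{\partial}_r)^i \varpi$ factors, and the $(r\overline{\partial}_r)^i \log(-\gamma)$ factors on the right side of \cref{rdr-S-n-comm-equation}. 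Regrouping using $\overline{\partial}_r^2 = r^{-2}[(r\overline{\partial}_r)^2 - (r\overline{\partial}_r)]$ and, where $\overline{\partial}_u$-derivatives appear, using that $S = u\overline{\partial}_u + r\overline{\partial}_r$ in the relevant asymptotic region (so that an $\overline{\partial}_u$-derivative is traded for a combination of $S$ and $r\overline{\partial}_r$) converts the computation into the third sum for $m=0$ and into a combination of the second and third sums for $m \ge 1$. The $(0,1)$ case is analogous, and \cref{S-comm-formula} serves as a useful consistency check on the schematic size of each term.

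The main obstacle is the combinatorial bookkeeping in the inductive step. When either $r\overline{\partial}_r$ or $S$ is pushed past a term from the inductive hypothesis, the new derivative may land on a $\log(-\gamma)$ factor, on a $\varpi$ factor, or on the operator tail $(r\overline{\partial}_r)^{\le 1 + k} S^c$ or $(r\overline{\partial}_r)^\ell S^c$, and one must verify that the resulting term falls into one of the three allowed sums with the correct index constraints: $a + b \le m - 1$ and $i + j + k \le n$ in the first sum, $c \le m - 1$ and $a + b + c \le m$ in the second, and $\ell \ge 1$ and $i + j + k + \ell \le n + 1$ in the third. Two features make this tractable. First, the $r^{-2}$ weight on the last two sums is preserved because $r\overline{\partial}_r$ acts by scaling on negative powers of $r$ and because $Sr = r(1 - \chi(r))$ does not degrade $r$-weights in the asymptotic region, per \cref{Gamma-r-initial-calculation}. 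Second, the structural indicators $\mathbf{1}_{m \ge 1}$ and $\mathbf{1}_{n \ge 1}$ are coupled to the source of the new commutator: differentiating by a fresh $S$ activates the $\mathbf{1}_{m \ge 1}$ branch by producing a term of $S^{1+a}$-type on $\log(-\gamma)$, while differentiating by a fresh $r\overline{\partial}_r$ activates the $\mathbf{1}_{n \ge 1}$ branch by raising $\ell$. Verifying that every term produced by the Leibniz expansion assigns to the correct branch without violating the constraints $c \le m - 1$ and $\ell \ge 1$ is the technical heart of the argument.
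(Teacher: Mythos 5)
Your overall strategy---direct computation of the base cases $[\Box,r\overline{\partial}_r]$ and $[\Box,S]$ from \cref{box-I-gauge}, followed by an induction driven by $[\Box,AB]=[\Box,A]B+A[\Box,B]$---is essentially the paper's. The only organizational difference is that the paper runs two separate inductions for the pure powers, establishing \cref{rdr-n-comm} for $[\Box,(r\overline{\partial}_r)^n]$ and \cref{S-n-comm} for $[\Box,S^m]$, and combines them only at the end, whereas you interleave the two factors in a single induction on $n+m$; that distinction is cosmetic.

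There is, however, a genuine gap in your base case: the mechanism you describe for eliminating the $\overline{\partial}_u$-derivatives is wrong. Writing $\overline{\partial}_u=u^{-1}(S-r\overline{\partial}_r)$ introduces a coefficient $u^{-1}$, which is none of the allowed coefficient types in \cref{rdr-S-n-comm-equation} (it is not an $O_\infty(1)$, i.e.\ a constant-coefficient polynomial in $r^{-1}$, nor of the form $r^{-1}(r\overline{\partial}_r)^iS^a\varpi$ or $(r\overline{\partial}_r)^jS^b\log(-\gamma)$), and in the wave zone $\set{r\gtrsim u}$ it is not even bounded by powers of $r^{-1}$, so the resulting first-order term $r^{-1}u^{-1}(-\gamma)^{-1}(S-r\overline{\partial}_r)$ is strictly larger than the $r^{-2}$-weighted terms the formula permits. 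It also produces a standalone $S$ with no leading $r\overline{\partial}_r$, which is excluded from both the second sum (whose operators begin with $(r\overline{\partial}_r)$) and the third sum (which requires $\ell\ge 1$). The correct mechanism, which your sketch omits, is to substitute the wave equation back in: the commutator reproduces the cross term $(-\gamma)^{-1}\overline{\partial}_r\overline{\partial}_u$ of \cref{box-I-gauge} with coefficient $(1+r\overline{\partial}_r\log(-\gamma))$, and one replaces $(-\gamma)^{-1}\overline{\partial}_r\overline{\partial}_u$ by $\Box+(1-\mu)\overline{\partial}_r^2+\cdots$. This is the origin of the $\Box$-terms in the first sum (compare the leading terms of \cref{r-dr-commutation-formula} and \cref{S-far-commutation}), and the lone $\overline{\partial}_u$ it generates cancels exactly against the term $-(r\overline{\partial}_r(r^{-1}(-\gamma)^{-1}))\overline{\partial}_u$ coming from differentiating the last coefficient of \cref{box-I-gauge}. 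Without this substitution and cancellation the formula does not close, so this step of your argument needs to be replaced.
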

\begin{proof}
The lemma follows from \cref{rdr-n-comm,S-n-comm}, which we prove below.

\step{Step 1: Commutation formula for \(r\overline{\partial}_r\).} A computation reveals that
\begin{equation}\label{r-dr-commutation-formula}
\begin{split}
[\Box{},r\overline{\partial}_r] &= (1 + r\overline{\partial}_r\log (-\gamma{}))\Box{} + \frac{1}{r^2}\Bigl[-1 + \frac{4\varpi{}}{r} - \frac{3\mathbf{e}^2}{r^2}\Bigr](r\overline{\partial}_r)^2 \\
&\qquad + \frac{1}{r^2}\Bigl[-1 + \frac{3\mathbf{e}^2}{r^2}- \frac{2}{r}r\overline{\partial}_r\varpi{} + \Bigl(2 - \frac{2\varpi{}}{r}\Bigr)r\overline{\partial}_r\log (-\gamma{})\Bigr]r\overline{\partial}_r \\
&=_{\mathrm{s}} (1 + r\overline{\partial}_r\log (-\gamma{}))\Box{} + \frac{1}{r^2}\bigl[O_\infty(1) + r^{-1}\varpi{}\bigr](r\overline{\partial}_r)^2 \\
&\qquad + \frac{1}{r^2}\bigl[O_{\infty}(1) + r^{-1}r\overline{\partial}_r\varpi{} + \set{1,r^{-1}\varpi{}}r\overline{\partial}_r\log (-\gamma{})\bigr]r\overline{\partial}_r.
\end{split}
\end{equation}
It follows by induction that for \(n\ge 1\), we have
\begin{equation}\label{rdr-n-comm}
\begin{split}
[\Box{},(r\overline{\partial}_r)^n] &=_{\mathrm{s}} \sum_{i+j\le n-1}(1 + (r\overline{\partial}_r)^{1+i}\log (-\gamma{}))(r\overline{\partial}_r)^{j}\Box{} \\
&\qquad + \sum_{\substack{i+j+k\le n+1 \\k\ge 1}}\frac{1}{r^2}\set{O_\infty(1),r^{-1}(r\overline{\partial}_r)^{i}\varpi{}}((r\overline{\partial}_r)^{j}\log (-\gamma{}))^{\le 1}(r\overline{\partial}_r)^{k} \\
\end{split}
\end{equation}
\step{Step 2: Commutation formula for \(S\).} In the region \(\set{r\ge 5\Rc}\) of interest, where \(S = r\overline{\partial}_r +
u\overline{\partial}_u\), the computation \cref{S-far-commutation} gives
\begin{equation}
\begin{split}
[\Box{},S] &=_{\mathrm{s}} (2 + S\log (-\gamma{}))\Box{} + \frac{1}{r^2}\bigl[O_\infty(1) + \set{O_\infty(1),r^{-1}\varpi{}}S\log (-\gamma{}) + r^{-1}S^{\le 1}\varpi{} \bigr](r\overline{\partial}_r)(r\overline{\partial}_r)^{\le 1}. \\
\end{split}
\end{equation}
It follows by induction that for \(n\ge 1\), we have
\begin{equation}\label{S-n-comm}
\begin{split}
[\Box{},S^n] &=_{\mathrm{s}} \sum_{a+b\le n-1}\set{1,S^{1+a}\log (-\gamma{})}S^b\Box{} \\
&\qquad + \sum_{a+b+c\le n-1}\frac{1}{r^2}\bigl[O_\infty(1) + \set{O_\infty(1),r^{-1}S^a\varpi{}}S^{1+b}\log (-\gamma{}) + r^{-1}S^{\le 1+a}\varpi{} \bigr](r\overline{\partial}_r)(r\overline{\partial}_r)^{\le 1}S^c. \\
\end{split}
\end{equation}
\end{proof}
\begin{lemma}
Fix \(0<\theta{}<\theta{}'<1/2\). Fix a point \((u_0,r_0)\in \set{r\ge 5\Rc}\cap \set{u\ge U_0}\) for
\(U_0\) as in \cref{spacetime-elliptic-estimate}. Write \(\mathcal{R}_\theta{}
\coloneqq{} \mathcal{R}_\theta{}(u_0,r_0)\) as in
\cref{abstract-elliptic-estimate}. For \(n,m\ge 0\), we have
\begin{equation}
\norm{(r\overline{\partial}_r)^nS^m\varphi{}}_{L^\infty(\mathcal{R}_\theta{})}\lesssim C(n,m,\theta{},\theta{}',\mathfrak{D}_{n+m+2})\norm{S^{\le n+m+3}\varphi{}}_{L^\infty(\mathcal{R}_{\theta{}'})}.
\end{equation}
\label{elliptic:Linf-scalar-field}
\end{lemma}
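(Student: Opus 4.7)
The proof proceeds by induction on $n$, the number of $r\overline{\partial}_r$ derivatives, with $m$ free. All work takes place in $\{r \ge 5\Rc\}$, where $S = u\overline{\partial}_u + r\overline{\partial}_r$ and consequently $[S, r\overline{\partial}_r] = 0$. The base case $n = 0$ is trivial since $(r\overline{\partial}_r)^0 S^m\varphi = S^m\varphi$ and $\mathcal{R}_\theta \subset \mathcal{R}_{\theta'}$.

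For the inductive step, I pick intermediate scales $\theta < \theta_1 < \theta_2 < \theta'$ and apply the Klainerman--Sobolev inequality (\cref{klainerman-sobolev}) to $\psi = (r\overline{\partial}_r)^n S^m\varphi$ on $\mathcal{R}_{\theta_1}$, which yields
\[
\|(r\overline{\partial}_r)^n S^m\varphi\|_{L^\infty(\mathcal{R}_\theta)} \lesssim (u_0 r_0)^{-1/2}\bigl[\|S^{\le 2}(r\overline{\partial}_r)^n S^m\varphi\|_{L^2(\mathcal{R}_{\theta_1})} + \|r^2\Box((r\overline{\partial}_r)^n S^m)\varphi\|_{L^2(\mathcal{R}_{\theta_1})}\bigr].
\]
The commutation $[S, r\overline{\partial}_r] = 0$ rewrites the first term as $\|(r\overline{\partial}_r)^n S^{\le m+2}\varphi\|_{L^2(\mathcal{R}_{\theta_1})}$. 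I then apply the spacetime elliptic estimate (\cref{spacetime-elliptic-estimate}) iteratively to $\psi = (r\overline{\partial}_r)^{n-2j} S^{\le m+2}\varphi$ for $j = 0, 1, \ldots$, each application trading two $r\overline{\partial}_r$ derivatives for two $S$ derivatives plus an error term $r_0\min(u_0, r_0)\|\Box\psi\|_{L^2}$. For even $n$ this reduces to $\|S^{\le n+m+2}\varphi\|_{L^2}$; for odd $n$ one final elliptic application converts the lone surviving $r\overline{\partial}_r$ into $S$-derivatives, contributing one more, so the count lands at $S^{\le n+m+3}$. A crude bound $\|\cdot\|_{L^2(\mathcal{R}_{\theta_2})} \lesssim (u_0 r_0)^{1/2}\|\cdot\|_{L^\infty(\mathcal{R}_{\theta_2})}$ then converts back to $L^\infty$, the $(u_0 r_0)^{1/2}$ exactly cancelling the $(u_0 r_0)^{-1/2}$ from Klainerman--Sobolev.

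The delicate step is the handling of the two $\Box$-error terms (from Klainerman--Sobolev and from each elliptic application). Since $\Box\varphi = 0$, the quantity $\Box((r\overline{\partial}_r)^a S^b)\varphi$ is a pure commutator, which the formula \cref{rdr-S-n-comm} expresses as an $S^{\le b-1}\Box\varphi = 0$ piece plus terms of the schematic form $r^{-2}\cdot(\text{geometric quantities})\cdot(r\overline{\partial}_r)^{\le a+2}S^{\le b-1}\varphi + r^{-2}\cdot(\text{geometric})\cdot(r\overline{\partial}_r)^{\le a+1}S^{\le b}\varphi$. The $r^{-2}$ weight is absorbed by the $r^2$ in Klainerman--Sobolev (and by $r_0\min(u_0,r_0)\lesssim r_0^2$ in the elliptic bound). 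Crucially, each commutator term has total order strictly less than $a + b + 2$: the second-sum term loses an $S$ (from $c\le b-1$) while potentially gaining two $r\overline{\partial}_r$, and the third-sum term has total order $\le a + b + 1$. Thus all commutator errors can be bounded by applying the inductive hypothesis at a strictly smaller total order $n + m$ (or the same total order with strictly smaller $n$), possibly requiring a nested induction on the total order $n + m$ inside the induction on $n$. The geometric coefficients in front of these commutator terms, involving up to $n + m + 2$ derivatives of $\varpi$ and $\log(-\gamma)$, are controlled via the already-established estimates of \cref{sec:putting-it-all-together} in terms of $\mathfrak{D}_{n+m+2}$, accounting for the dependence on $\mathfrak{D}_{n+m+2}$ in the constant.

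The main obstacle is the bookkeeping: organizing the nested inductions so that each commutator generated is either of strictly lower total order (controlled by an outer induction on $n + m$) or of the same total order with strictly fewer $r\overline{\partial}_r$ derivatives (controlled by the inner induction on $n$), and simultaneously tracking the $S$-derivative count so that the final bound lands at the advertised $S^{\le n+m+3}$. The key structural feature that makes this work is that $[\Box, r\overline{\partial}_r]$ and $[\Box, S]$ (in the far region) each carry an $r^{-2}$ prefactor that cancels with the $r^2$ provided by Klainerman--Sobolev, so although the commutators formally appear to raise the order of differentiation by one, the quantitative loss is entirely absorbed by the available geometric decay.
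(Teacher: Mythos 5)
Your proposal follows essentially the same route as the paper: Klainerman--Sobolev in the far region, iterated applications of the spacetime elliptic estimate to trade pairs of $r\overline{\partial}_r$'s for $S$'s, the commutation formula \cref{rdr-S-n-comm} for the $\Box$-errors, and a nested induction, landing at $S^{\le n+m+3}$ for the same reason ($+2$ from the Sobolev step, $+1$ from the $L^2$ reduction). The ingredients and the mechanism are all correctly identified.

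There is, however, one concrete flaw in the load-bearing claim that closes your induction. You assert that every commutator error is either of strictly smaller total order $n+m$ or of the same total order with fewer $r\overline{\partial}_r$'s. This fails for the Klainerman--Sobolev error term $\norm{r^2\Box(r\overline{\partial}_r)^nS^m\varphi}_{L^2}$: here $\Box$ hits the \emph{full-order} quantity, so \cref{rdr-S-n-comm} produces terms of the schematic form $(r\overline{\partial}_r)^{\le n+2}S^{\le m-1}\varphi$ and $(r\overline{\partial}_r)^{\le n+1}S^{\le m}\varphi$, both of total order $n+m+1$ --- strictly \emph{higher}. An induction ordered by total order (outer) and $n$ (inner) therefore does not close as described. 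The repair, which is what the paper does, is to keep the $L^2$-reduction estimate (trading $r\overline{\partial}_r$'s for $S$'s in $L^2$) and the $L^\infty$ estimate as two separate interleaved inductive statements, with the $L^2$ statement established for $r\overline{\partial}_r$-order up to $n+2$ (for all $S$-orders) before the $L^\infty$ statement at $r\overline{\partial}_r$-order $n$ is derived; this is consistent because the constants in the $L^2$ reduction only require $L^\infty$ control at total order at least three lower. A second, more minor point: the geometric coefficients $(r\overline{\partial}_r)^{\le\cdot}S^{\le\cdot}\varpi$ and $(r\overline{\partial}_r)^{\le\cdot}S^{\le\cdot}\log(-\gamma)$ appearing in \cref{rdr-S-n-comm} are \emph{not} supplied by \cref{sec:putting-it-all-together} (which only controls $U$-, $V$-, and $S$-derivatives, with at most one power of $r$ on $\overline{\partial}_r$-derivatives); you must first use the transport equations $r\overline{\partial}_r\log(-\gamma)=((r\overline{\partial}_r)\varphi)^2$ and $\overline{\partial}_r\varpi=\tfrac12(1-\mu)r((r\overline{\partial}_r)\varphi)^2/r$ to convert them into products of scalar-field derivatives, which are then absorbed into the same induction.
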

\begin{corollary}
Let \(n,m\ge 0\) and let \(\epsilon{} > 0\). In the region \(\set{r\ge 5\Rc}\), we have
\begin{equation}
(r^{1/2}\tau{}^{1-\epsilon{}} + r\tau{}^{1/2-\epsilon{}})\abs{(r\overline{\partial}_r)^nS^m\varphi{}}\le C(\epsilon{},n,m,\mathfrak{D}_{n+m+3}).
\end{equation}
\label{elliptic:pointwise-scalar-field}
\end{corollary}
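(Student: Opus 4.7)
The plan is to combine the elliptic $L^\infty$-estimate from \cref{elliptic:Linf-scalar-field} with the pointwise decay for $S$-derivatives of the scalar field that has already been established by \cref{S-decay} and its refinements. Throughout, I will fix $\theta, \theta'$ to be numerical constants (say $\theta = 1/4$, $\theta' = 1/3$) so that dependence on them disappears into implicit constants.

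First, I record the consequence of \cref{main-theorem-rough} needed here. By the analysis in \cref{sec:putting-it-all-together}, for any $k \ge 0$ and any small $\eta_0 > 0$ (with $\eta_0$ chosen small depending on $k$ and on the desired $\epsilon$), we have $\mathcal{P}_{kS, 2-C_k' \eta_0}[\varphi] \le C(\epsilon, k, \mathfrak{D}_k)$. By the interpolation remark following the definition of $\mathcal{P}_p$, this controls
\begin{equation*}
r^{1/2} \tau^{1-\epsilon} |S^{\le k}\varphi| + r \tau^{1/2-\epsilon} |S^{\le k}\varphi| \le C(\epsilon, k, \mathfrak{D}_k)
\end{equation*}
globally in $\mathcal{R}_{\mathrm{char}}$, once $\eta_0$ is absorbed into $\epsilon$.

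Next, I consider the two sub-regions of $\set{r \ge 5\Rc}$ separately. In the region $\set{u \ge U_0}$, I pick any $(u_0, r_0)$ and apply \cref{elliptic:Linf-scalar-field} with $k = n+m+3$. On the enlarged rectangle $\mathcal{R}_{\theta'}(u_0, r_0)$, the coordinate $r$ is comparable to $r_0$ and $\tau = u$ is comparable to $u_0$ (using that $r_0 \ge 5\Rc \ge R_0$, so $\tau = u$ there, combined with \cref{v-u-r-compare} to see $u \sim u_0$). Substituting the pointwise bound from the previous paragraph into the right-hand side of \cref{elliptic:Linf-scalar-field} thus yields
\begin{equation*}
|(r\overline{\partial}_r)^n S^m \varphi|(u_0, r_0) \le C(\epsilon, n, m, \mathfrak{D}_{n+m+3}) \bigl( r_0^{-1/2} u_0^{-1+\epsilon} + r_0^{-1} u_0^{-1/2 + \epsilon}\bigr),
\end{equation*}
which is exactly the assertion of the corollary at $(u_0, r_0)$.

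For the complementary region $\set{r \ge 5\Rc} \cap \set{u \le U_0}$, since $U_0$ depends only on $\mathfrak{D}_1$, the weight $\tau^{1-\epsilon}$ is bounded by a constant and it suffices to prove $(r^{1/2} + r) |(r\overline{\partial}_r)^n S^m \varphi| \le C(n, m, \mathfrak{D}_{n+m+3})$. Here I use that $r \ge 5\Rc$ implies $\chi_{r\lesssim \Rc}(r) = 0$, so $S = u\overline{\partial}_u + r\overline{\partial}_r$ and $V = \overline{\partial}_r$; inductively rewriting $r\overline{\partial}_r = S - u\overline{\partial}_u = S - u(-\nu) U$ (with $u \le U_0$ and $(-\nu) \lesssim 1$ bounded by $\mathfrak{b}_0$) expresses $(r\overline{\partial}_r)^n S^m \varphi$ as a finite sum of terms of the form $(\text{bounded function of }\mathfrak{D}_*)\cdot \Gamma^{\le n+m}\varphi$, for which the required weighted bound follows directly from $\mathcal{P}_{\alpha, 0}$ with $\abs{\alpha} \le n+m$. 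I expect no real difficulty in this case; the main obstacle of the proof is really packaged inside \cref{elliptic:Linf-scalar-field} (whose input in turn is \cref{spacetime-elliptic-estimate}), and here the work is just the bookkeeping to verify that the weights $r^{1/2}\tau^{1-\epsilon}$ and $r\tau^{1/2-\epsilon}$ carried inside the $L^\infty$-norm on $\mathcal{R}_{\theta'}$ can be taken out of the norm up to a universal constant.
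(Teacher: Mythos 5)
Your proposal is correct and follows essentially the same route as the paper: in $\set{u\ge U_0}$ one feeds the $\mathcal{P}_{kS,2-C'\eta_0}$ control of $S$-derivatives into \cref{elliptic:Linf-scalar-field} and pulls the weights (comparable to $r_0$ and $u_0$ on $\mathcal{R}_{\theta'}$) out of the $L^\infty$ norm, while in $\set{u\le U_0}$ one trades $r\overline{\partial}_r$ for $S-u\overline{\partial}_u$ and uses boundedness of $u$. One minor slip: in the second region $u\overline{\partial}_u\neq u(-\nu)U$, since $\overline{\partial}_u=\partial_u+\frac{(-\nu)}{\lambda}\partial_v$ also carries an outgoing component; the correct reduction (as in the paper) is $\overline{\partial}_u=_{\mathrm{s}}\mathfrak{b}_0 D$ in $\set{r\ge R_0}$, which still yields terms of the form $D^{\le n'}S^{\le m'}\varphi$ controlled by the pointwise norm, so the conclusion is unaffected.
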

\begin{proof}
In the region \(\set{u\ge U_0}\), this is an immediate consequence of
\cref{elliptic:Linf-scalar-field}, the definition of the pointwise norm (see
\cref{sec:norms:pointwise}), and the results of \cref{sec:putting-it-all-together}. In
the region \(\set{u\le U_0}\), we can write \(r\overline{\partial}_r = S -
u\overline{\partial}_u\) to reduce to estimating expressions of the form
\((u\overline{\partial}_u)^{n'}S^{m'}\varphi{}\) for \(n' + m' = n + m\). From the calculation
\((u\overline{\partial}_u)^n =_{\mathrm{s}} u^{\le n}\overline{\partial}_u^{1\le n} =_{\mathrm{s}} u^{\le
n}C(\mathfrak{b}_{nV})D^{1\le n}\) and the boundedness of \(u\), we have
\begin{equation}
\abs{(u\overline{\partial}_u)^{n'}S^{m'}\varphi{}}\lesssim_{U_0} C(\mathfrak{D}_n)\abs{D^{\le n'}S^{\le m'}\varphi{}},
\end{equation}
which we can estimate with the pointwise norm. Recalling that \(U_0\) depends
only on \(\mathfrak{D_1}\) completes the proof.
\end{proof}
\begin{proof}[Proof of \cref{elliptic:Linf-scalar-field}]
Let \(c > 1\) be a parameter that will be chosen in the proof. We allow all
estimates to depend on \(n\), \(m\), \(\theta{}\), \(\theta{}'\), and \(c\). The
desired estimate follows from \cref{elliptic:Linf-goal}, after choosing \(c > 1\)
so that \(c^{n + m + 3}\theta{} \le \theta{}'\). The restriction \(\theta{}' <
1/2\) ensures that \(r > 2\Rc\) in the region \(\mathcal{R}_{\theta{}'}\) (and
so \(S = r\overline{\partial}_r + u\overline{\partial}_u\)). In this proof, we
write \(\mathcal{R}(\theta{})\) in place of \(\mathcal{R}_\theta{}\). The right
sides of our equations are to be understood schematically, as in
\cref{sec:schematic-notation}.

\step{Step 1: \(L^2\) estimate for the commutator.} We claim that for \(n\ge 0\) and \(m\ge
0\) and \(n+m\ge 1\), we have
\begin{equation}\label{elliptic:rS-box}
\begin{split}
&\norm{r^2\Box{}(r\overline{\partial}_r)^nS^m\varphi{}}_{L^2(\mathcal{R}(\theta{}))} \\
&\lesssim C(\mathfrak{D}_m,\norm{(r\overline{\partial}_r)^{\le n-1}S^{\le m}\varphi{}}_{L^\infty(\mathcal{R}(\theta{}))},\norm{(r\overline{\partial}_r)^{\le n}S^{\le m-1}\varphi{}}_{L^\infty(\mathcal{R}(\theta{}))})\\
&\qquad [\norm{(r\overline{\partial}_r)^{\le n+1}S^{\le m}\varphi{}}_{L^2(\mathcal{R}(\theta{}))} + \mathbf{1}_{m\ge 1}\norm{(r\overline{\partial}_r)^{\le n+2}S^{\le m-1}\varphi{}}_{L^2(\mathcal{R}(\theta{}))}].
\end{split}
\end{equation}
The idea is to use the transport equation for \((-\gamma{})\) and \(\varpi{}\) to express the
coefficients in \cref{rdr-S-n-comm-equation} in terms of the scalar field, and then
put the top order terms in \(L^2\) and the lower order terms in \(L^\infty\).

\step{Step 1a: Estimate for derivatives of \((-\gamma{})\).} When \(n = 0\), we can estimate
\begin{equation}\label{elliptic:gamma-bound-0}
\abs{S^m\log (-\gamma{})}\lesssim C(\mathfrak{D}_m)
\end{equation}
by the results of \cref{sec:putting-it-all-together}. When \(n\ge 1\),
differentiating the transport equation \((r\overline{\partial}_r)\log (-\gamma{}) =
((r\overline{\partial}_r)\varphi{})^2\) gives
\begin{equation}\label{elliptic:gamma-bound}
\abs{(r\overline{\partial}_r)^nS^m\log (-\gamma{})} \lesssim  C(\abs{(r\overline{\partial}_r)^{\le n-1}S^{\le m}\varphi{}},\abs{(r\overline{\partial}_r)^{\le n}S^{\le m-1}\varphi{}})\abs{(r\overline{\partial}_r)^{\le n}S^{\le m}\varphi{}}
\end{equation}

\step{Step 1b: Estimate for derivatives of \(\varpi{}\).} When \(n = 0\), we can estimate
\begin{equation}\label{elliptic:mass-bound-0}
\abs{S^m\varpi{}}\lesssim C(\mathfrak{D}_m)
\end{equation}
by the results of \cref{sec:putting-it-all-together}. When \(n\ge 1\), we claim
\begin{equation}\label{elliptic:mass-bound}
r^{-1}\abs{(r\partial{}_r)^nS^m\varpi{}}\le C(\mathfrak{D}_m,\abs{(r\overline{\partial}_r)^{\le n-1}S^{\le m}\varphi{}},(r\overline{\partial}_r)^{\le n}S^{\le m-1}\varphi{})\abs{(r\overline{\partial}_r)^{\le n}S^{\le m}\varphi{}}.
\end{equation}
The transport equation for \(\varpi{}\) is
\begin{equation}
r^{-1}(r\overline{\partial}_r)\varpi{} = \frac{1}{2}(1-\mu{})((r\overline{\partial}_r)\varphi{})^2 =_{\mathrm{s}} \mathcal{O}(1,r^{-1}\varpi{})((r\overline{\partial}_r)\varphi{})^2.
\end{equation}
Differentiate to obtain
\begin{equation}
\begin{split}
r^{-1}(r\overline{\partial}_r)^nS^m\varpi{} &=_{\mathrm{s}} \mathcal{O}(1,r^{-1}S^{\le m}\varpi{},r^{-1}(r\overline{\partial}_r)^{1\le n-1}S^{\le m}\varpi{})((r\overline{\partial}_r)^{\le n}S^{\le m}\varphi{})\\
&\qquad \set{(r\overline{\partial}_r)^{\le n-1}S^{\le m}\varphi{},(r\overline{\partial}_r)^{\le n}S^{\le m-1}\varphi{}}.
\end{split}
\end{equation}
Now \cref{elliptic:mass-bound} follows from an induction argument and the \(n = 0\)
case.

\step{Step 1c: Completing the proof of \cref{elliptic:rS-box}.} Substitute
\cref{elliptic:mass-bound,elliptic:mass-bound-0,elliptic:gamma-bound-0,elliptic:gamma-bound}
into \cref{rdr-S-n-comm-equation} to obtain
\begin{equation}\label{elliptic:rS-comm-improved}
\begin{split}
\abs{r^2\Box{}(r\overline{\partial}_r)^nS^m\varphi{}} &\lesssim C(\mathfrak{D}_m,\abs{(r\overline{\partial}_r)^{\le n-1}S^{\le m}\varphi{}},\abs{(r\overline{\partial}_r)^{\le n}S^{\le m-1}\varphi{}})\\
&\Bigl[\mathbf{1}_{m\ge 1}\sum_{\substack{a+b+c\le m \\ i + j + k\le n \\ c\le m-1}}(1 + \abs{(r\overline{\partial}_r)^{\le i}S^{\le a}\varphi{}})(1 + \abs{(r\overline{\partial}_r)^{\le j}S^{\le b}\varphi{}})\abs{(r\overline{\partial}_r)^{\le k+2}S^c\varphi{}} \\
&\mathbf{1}_{n\ge 1}\sum_{\substack{a+b+c\le m \\ i+j+k+\ell{}\le n+1 \\ \ell{}\ge 1}} (1 + \abs{(r\overline{\partial}_r)^{\le i}\varphi{}})(1 + \abs{(r\overline{\partial}_r)^{\le j}S^{\le a}\varphi{}})(1 + \abs{(r\overline{\partial}_r)^{\le k}S^{\le b}\varphi{}})\abs{(r\overline{\partial}_r)^{\ell{}}S^c\varphi{}}\Bigr]. \\
\end{split}
\end{equation}
To obtain \cref{elliptic:rS-box}, put the top order terms in \(L^2\) and absorb the
lower order terms into the constant.

\step{Step 2: Basic estimates for the scalar field.} It is an immediate consequence of
\cref{spacetime-elliptic-estimate} that for \(n\ge 1\), we have
\begin{equation}\label{elliptic:rS-2}
\norm{(r\overline{\partial}_r)^nS^m\varphi{}}_{L^2(\mathcal{R}(\theta{}))}\lesssim \norm{(r\overline{\partial}_r)^{n-\min (2,n)}S^{\le m + 2}\varphi{}}_{L^2(\mathcal{R}(c\theta{}))} + \norm{r^2\Box{}(r\overline{\partial}_r)^{n-\min (2,n)}S^m\varphi{}}_{L^2(\mathcal{R}(c\theta{}))}.
\end{equation}
It follows from \cref{elliptic:rS-2,elliptic:rS-box} that, for \(n\ge 3\),
\begin{equation}\label{elliptic:L2-estimate}
\begin{split}
&\norm{(r\overline{\partial}_r)^nS^m\varphi{}}_{L^2(\mathcal{R}(\theta{}))} \\
&\lesssim  C(\mathfrak{D}_m,\norm{(r\overline{\partial}_r)^{\le n-3}S^{\le m}\varphi{}}_{L^\infty(\mathcal{R}(c\theta{}))},\norm{(r\overline{\partial}_r)^{\le n-2}S^{\le m-1}\varphi{}}_{L^\infty(\mathcal{R}(c\theta{}))}) \\
&\qquad [\norm{(r\overline{\partial}_r)^{n-2}S^{\le m + 2}\varphi{}}_{L^2(\mathcal{R}(c\theta{}))} + \norm{(r\overline{\partial}_r)^{\le n-1}S^{\le m}\varphi{}}_{L^2(\mathcal{R}(c\theta{}))} + \mathbf{1}_{m\ge 1}\norm{(r\overline{\partial}_r)^{\le n}S^{\le m-1}\varphi{}}_{L^2(\mathcal{R}(c\theta{}))}].
\end{split}
\end{equation}
Finally, it follows from \cref{klainerman-sobolev,elliptic:rS-box} that, for \(n,m\ge 0\),
\begin{equation}\label{elliptic:rS-infinity}
\begin{split}
&u_0^{1/2}r_0^{1/2}\norm{(r\overline{\partial}_r)^nS^m\varphi{}}_{L^\infty(\mathcal{R}(\theta{}))} \\
&\lesssim C(\mathfrak{D}_m,\norm{(r\overline{\partial}_r)^{\le n-1}S^{\le m}\varphi{}}_{L^\infty(\mathcal{R}(c\theta{}))},\norm{(r\overline{\partial}_r)^{\le n}S^{\le m-1}\varphi{}}_{L^\infty(\mathcal{R}(c\theta{}))})\\
&\qquad [\norm{(r\overline{\partial}_r)^nS^{\le m+2}\varphi{}}_{L^2(\mathcal{R}(c\theta{}))} + \norm{(r\overline{\partial}_r)^{\le n+1}S^{\le m}\varphi{}}_{L^2(\mathcal{R}(c\theta{}))} + \mathbf{1}_{m\ge 1}\norm{(r\overline{\partial}_r)^{\le n+2}S^{\le m-1}\varphi{}}_{L^2(\mathcal{R}(c\theta{}))}].
\end{split}
\end{equation}

\step{Step 3: Completing the proof.} For \(n,m\ge
0\), we will show the \(L^2\)--\(L^2\) estimate
\begin{equation}\label{elliptic:L2-induction}
\norm{(r\overline{\partial}_r)^nS^m\varphi{}}_{L^2(\mathcal{R}(\theta{}))}\lesssim C(\mathfrak{D}_{n+m})\norm{S^{\le n+m+1}\varphi{}}_{L^2(\mathcal{R}(c^{n+m+1}\theta{}))}
\end{equation}
and the \(L^2\)--\(L^\infty\) estimate
\begin{equation}\label{elliptic:Linf-induction}
u_0^{1/2}r_0^{1/2}\norm{(r\overline{\partial}_r)^nS^m\varphi{}}_{L^\infty(\mathcal{R}(\theta{}))}\lesssim C(\mathfrak{D}_{n+m+2})\norm{S^{\le n+m+3}\varphi{}}_{L^2(\mathcal{R}(c^{n+m+3}\theta{}))}.
\end{equation}
Note that \cref{elliptic:Linf-induction} implies the desired \(L^\infty\)--\(L^\infty\) estimate
\begin{equation}\label{elliptic:Linf-goal}
\norm{(r\overline{\partial}_r)^nS^m\varphi{}}_{L^\infty(\mathcal{R}(\theta{}))}\lesssim C(\mathfrak{D}_{n+m+2})\norm{S^{\le n+m+3}\varphi{}}_{L^\infty(\mathcal{R}(c^{n+m+3}\theta{}))}
\end{equation}
by the Hölder's inequality \(\norm{\psi{}}_{L^2(\mathcal{R}(\theta{}))}\lesssim u_0^{1/2}r_0^{1/2}\norm{\psi{}}_{L^\infty(\mathcal{R}(\theta{}))}\).

\step{Step 3a: The base case \cref{elliptic:L2-induction} for \(n\le 2\) and \(m\ge 0\).}
First, \cref{elliptic:L2-induction} is trivial when \(n = 0\) for all \(m\ge 0\). For
the case \(1\le n\le 2\), we claim that, for \(m\ge 0\),
\begin{equation}\label{elliptic:n12-2}
\norm{(r\overline{\partial}_r)^{\le 2}S^m\varphi{}}_{L^2(\mathcal{R}(\theta{}))} \lesssim C(\mathfrak{D}_m)\norm{S^{\le m+2}\varphi{}}_{L^2(\mathcal{R}(c^{m+1}\theta{}))}.
\end{equation}
An induction argument using \cref{spacetime-elliptic-estimate,S-n-comm} shows that,
for \(m\ge 0\),
\begin{equation}\label{elliptic:S-comm}
\norm{r^2\Box{}S^m\varphi{}}_{L^2(\mathcal{R}(\theta{}))}\lesssim C(\mathfrak{D}_m)\norm{S^{\le m+1}\varphi{}}_{L^2(\mathcal{R}(c^m\theta{}))}.
\end{equation}
Conclude the desired \cref{elliptic:n12-2} using \cref{elliptic:rS-2,elliptic:S-comm}.

\step{Step 3b: The base case \cref{elliptic:Linf-induction} for \(n = 0\) and \(m\ge 0\).}
\Cref{klainerman-sobolev,elliptic:S-comm} imply that
\begin{equation}
u_0^{1/2}r_0^{1/2}\norm{S^m\varphi{}}_{L^\infty(\mathcal{R}(\theta{}))}\lesssim \norm{S^{\le m + 2}\psi{}}_{L^2(\mathcal{R}(c\theta{}))} + \norm{r^2\Box{}S^m\varphi{}}_{L^2(\mathcal{R}(c\theta{}))}\lesssim C(\mathfrak{D}_m)\norm{S^{\le m + 2}\varphi{}}_{L^2(\mathcal{R}(c^{m+1}\theta{}))}.
\end{equation}

\step{Step 3c: The inductive step.} Let \(n\ge 1\) and \(m\ge 0\). Suppose that
\cref{elliptic:L2-induction} holds for pairs \((\le n+1,k)\) and
\cref{elliptic:Linf-induction} holds for pairs \((\le n-1,k)\) for all \(k\ge 0\)
as well as for pairs \((n,\le m-1)\) (this latter condition is vacuous when \(m
= 0\)). We will show that \cref{elliptic:L2-induction} holds for the pair \((n +
2,m)\) and \cref{elliptic:Linf-induction} holds for the pair \((n,m)\). In view of
the base cases established in Steps 3ab, this inductive step establishes
\cref{elliptic:L2-induction,elliptic:Linf-induction}.

We first claim that
\begin{equation}\label{elliptic:rn2-prep}
\begin{split}
&\norm{(r\overline{\partial}_r)^{n+2}S^m\varphi{}}_{L^2(\mathcal{R}(\theta{}))} \lesssim  C(\mathfrak{D}_m,\norm{(r\overline{\partial}_r)^{\le n-1}S^{\le m}\varphi{}}_{L^\infty(\mathcal{R}(c\theta{}))},\norm{(r\overline{\partial}_r)^{\le n}S^{\le m-1}\varphi{}}_{L^\infty(\mathcal{R}(c\theta{}))}) \\
&\qquad [\norm{(r\overline{\partial}_r)^{n}S^{\le m + 2}\varphi{}}_{L^2(\mathcal{R}(c\theta{}))} + \norm{(r\overline{\partial}_r)^{\le n+1}S^{\le m}\varphi{}}_{L^2(\mathcal{R}(c\theta{}))}].
\end{split}
\end{equation}
Indeed, this follows from \cref{elliptic:L2-estimate} (since the last term in
square brackets in that estimate can be removed by induction). Next,
use \cref{elliptic:rn2-prep} to control the last term in square brackets in the
estimate \cref{elliptic:rS-infinity} and obtain
\begin{equation}\label{elliptic:rS-infinity-prep}
\begin{split}
&u_0^{1/2}r_0^{1/2}\norm{(r\overline{\partial}_r)^nS^m\varphi{}}_{L^\infty(\mathcal{R}(\theta{}))}\lesssim \side{LHS}{elliptic:rn2-prep}.
\end{split}
\end{equation}
By the inductive hypotheses, we have
\begin{equation}\label{elliptic:goal-prep}
\begin{split}
\side{LHS}{elliptic:rn2-prep}&\lesssim C(\mathfrak{D}_{n+m+1},\norm{S^{n+m+2}\varphi{}}_{L^\infty(\mathcal{R}(c^{n+m+3}\theta{}))})\cdot C(\mathfrak{D}_{n+m+2})\norm{S^{\le n+m+3}\varphi{}}_{L^2(\mathcal{R}(c^{n+m+3}\theta{}))}.
\end{split}
\end{equation}
To conclude the proof, control \(\norm{S^{n + m + 2}\varphi{}}_{L^\infty(\mathcal{R}(c^{n + m +
3}\theta{}))}\le C(\mathfrak{D}_{n + m + 2})\) and combine
\cref{elliptic:goal-prep,elliptic:rS-infinity-prep,elliptic:rn2-prep}.
\end{proof}
\begin{lemma}
Let \(n,m\ge 0\) and let \(0\le k\le 2\). We have
\begin{equation}
\abs{(r^2\overline{\partial}_r)^k(r\overline{\partial}_r)^nS^m\varphi{}}\le C(n,m,\mathfrak{D}_{n+m+5})\quad \text{in }\set{r\ge 5\Rc}\cap \set{r\ge u}.
\end{equation}
\label{rsr-rdr-S-bound}
\end{lemma}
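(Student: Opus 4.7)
The plan is to prove the lemma by induction on $k$, exploiting the vanishing of data near $C^{\mathrm{out}}$ (Assumption (2) of \cref{main-theorem}) and the $\overline{\partial}_u$-transport structure in the region $\set{r\ge 5\Rc}\cap \set{r\ge u}$, in which $\tau = u$ and $u \le r$.

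The base case $k=0$ is immediate from \cref{elliptic:pointwise-scalar-field}, which gives $\abs{(r\overline{\partial}_r)^nS^m\varphi}\le Cr^{-1}u^{-1/2+\epsilon}$, hence bounded. For $k=1$, I rewrite $(r^2\overline{\partial}_r)(r\overline{\partial}_r)^nS^m\varphi = r\cdot (r\overline{\partial}_r)^{n+1}S^m\varphi$ and again invoke \cref{elliptic:pointwise-scalar-field} applied with $n$ replaced by $n+1$, now using the stronger bound $\abs{(r\overline{\partial}_r)^{n+1}S^m\varphi}\le Cr^{-1}u^{-1/2+\epsilon}$ to get $\abs{r\cdot (r\overline{\partial}_r)^{n+1}S^m\varphi}\le Cu^{-1/2+\epsilon}\le C$.

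For $k=2$ the pointwise norm alone is insufficient, and I would derive a $\overline{\partial}_u$-transport equation. Set $G = (r^2\overline{\partial}_r)^2(r\overline{\partial}_r)^nS^m\varphi$. Since $\overline{\partial}_u r = 0$ in $(u,r)$-coordinates, $\overline{\partial}_u$ commutes with $r^2\overline{\partial}_r$ and $r\overline{\partial}_r$, so $\overline{\partial}_u G = (r^2\overline{\partial}_r)^2(r\overline{\partial}_r)^n\overline{\partial}_uS^m\varphi$. Using $\overline{\partial}_u S = (S+1)\overline{\partial}_u$ to move $\overline{\partial}_u$ to the right of $S^m$, and the wave equation in $\mathcal{I}$-gauge \eqref{box-I-gauge} rearranged as
\[\overline{\partial}_u\overline{\partial}_r\varphi = (-\gamma)(1-\mu)\overline{\partial}_r^2\varphi + (-\gamma)\Bigl(\tfrac{2}{r} - \tfrac{2\varpi}{r^2}\Bigr)\overline{\partial}_r\varphi - \tfrac{1}{r}\overline{\partial}_u\varphi + (-\gamma)\Box\varphi,\]
together with $\Box\varphi = 0$, yields (after multiplying by suitable powers of $r$ and commuting with $(r^2\overline{\partial}_r)^2(r\overline{\partial}_r)^n$ via \cref{rdr-n-comm} and the analogous formula for $r^2\overline{\partial}_r$) a transport equation of the schematic form
\[\overline{\partial}_u G = (-\nu)(1-\mu)\overline{\partial}_r G + \mathcal{E},\]
where $\mathcal{E}$ involves only $G_0 = (r\overline{\partial}_r)^{\le n+2}S^{\le m+1}\varphi$-level quantities and the $k=1$ quantity (already controlled). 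The drift term $(-\nu)(1-\mu)\overline{\partial}_r G$ effectively says the equation is transported along null generators; I would integrate along these null curves (equivalently along $\partial_u$-curves in $(u,v)$ coordinates) from $C^{\mathrm{out}}$ where $G\equiv 0$ by the vanishing of data near $C^{\mathrm{out}}$.

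The key quantitative input is that $\abs{\mathcal{E}}$ decays like $u^{-1-\eta}$ for some $\eta > 0$: coefficients carry factors of $r^{-2}$ and $(1-\mu)$, while the scalar-field-derivative factors supply $\tau^{-1+\epsilon}$ decay via \cref{elliptic:pointwise-scalar-field} and the pointwise norm's $\tau$-weights. Since $\tau = u$ in the region and the integration range in $u$ is bounded by $r$, the integral $\int_1^{u_0}\abs{\mathcal{E}}\,du'$ is bounded by a constant depending on $\mathfrak{D}_{n+m+5}$.

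The main obstacle is bookkeeping the commutators arising when applying $(r^2\overline{\partial}_r)^2(r\overline{\partial}_r)^nS^m$ to the wave equation, ensuring that every top-order term on the right-hand side of the transport equation either vanishes against the drift structure or reduces to a $k\le 1$ quantity controlled by induction; in particular, the $(-\nu)(1-\mu)\overline{\partial}_r G$-type coefficients appearing from the wave equation should be grouped with the principal transport direction so that no genuine $k=2$ quantity appears as a source. A secondary subtlety is the contribution from the $S$-commutator $[\overline{\partial}_u, S^m]$, which produces $\overline{\partial}_u S^{<m}\varphi$ terms; these are handled by the identity $\overline{\partial}_u S^m = (S+1)^m\overline{\partial}_u$ and the pointwise control of $\overline{\partial}_u$-derivatives available through $U$ and $V$.
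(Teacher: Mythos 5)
Your overall strategy is the paper's: the cases $k=0,1$ follow from \cref{elliptic:pointwise-scalar-field} exactly as you describe, and for $k=2$ the paper likewise derives a $\partial_u$-transport equation for $G=(r^2\overline{\partial}_r)^2(r\overline{\partial}_r)^nS^m\varphi$, integrates it from $C^{\mathrm{out}}$ where the data vanishes, and closes using $ur^{-1}\le 1$ in $\set{r\ge u}$. (Note that $\overline{\partial}_u-(-\nu)\overline{\partial}_r=\partial_u$, so your ``drift'' reformulation is literally transport along constant-$v$ curves; the drift coefficient coming out of your own rearranged wave equation is $(-\gamma)(1-\mu)=(-\nu)$, not $(-\nu)(1-\mu)$.)

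There are, however, two gaps in your $k=2$ step. First, the hope that ``no genuine $k=2$ quantity appears as a source'' cannot be realized: only the principal term $(-\nu)\overline{\partial}_rG$ is absorbed into the transport direction, and the subprincipal terms (from differentiating the $r$-weights and the coefficients $(-\gamma)$, $1-\mu$, $\varpi$) leave a source of the form $O(r^{-1})G$, i.e.\ $r^{-1}$ times the very quantity being estimated. Your fallback of bounding the source by pointwise norms of $(r\overline{\partial}_r)^{\le n+2}S^{\le m}\varphi$ also fails quantitatively: these terms enter with an $r$-weight, and $r\abs{(r\overline{\partial}_r)^{n+2}S^m\varphi}\lesssim \tau^{-1/2+\epsilon}$ by \cref{elliptic:pointwise-scalar-field}, which is neither $O(r^{-1})$ nor integrable in $u$. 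The resolution is to rewrite $r(r\overline{\partial}_r)^{j+2}\chi$ as $r^{-1}(r^2\overline{\partial}_r)^2(r\overline{\partial}_r)^{j}\chi$ plus lower order and then run Gr\"onwall in $u$; the Gr\"onwall exponent $\int_1^u Cr^{-1}(u',v)\dd{}u'\le Cur^{-1}(u,v)\le C$ is bounded precisely because of the restriction $r\ge u$ and the monotonicity of $r$ in $u$. Second, when $m\ge 1$ the commutator with $S^m$ produces second-order $\overline{\partial}_r^2$-terms, so the source also contains $r^{-1}(r^2\overline{\partial}_r)^2(r\overline{\partial}_r)^{\le n+1}S^{\le m-1}\varphi$ --- a $k=2$ quantity with larger $n$ but smaller $m$. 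An induction on $k$ alone therefore does not close; one must induct on the pairs $(n,m)$ (first in $m$, then in $n$), as the paper does. With Gr\"onwall and this double induction added, your argument becomes the paper's proof.
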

\begin{proof}
The cases \(k = 0,1\) follow from \cref{elliptic:pointwise-scalar-field}, so we
focus on the case \(k = 2\).

Suppose for the sake of induction that we have already shown \cref{rsr-rdr-S-bound}
for \(k = 2\) and pairs \((n',m')\) with \(m' < m\) or \(m=m'\) and \(n' < n\). Set \(\psi{}_0 =
(r\overline{\partial}_r)^{n + 1}S^m\varphi{}\). By
\cref{rdr-S-n-comm,elliptic:rS-gamma-bound,elliptic:rS-omega-bound} and the
induction hypothesis we have
\begin{equation}\label{psi-0-assumption-1}
\begin{split}
\abs{r^3\Box{}\psi{}_0}&\le C(n,m,\mathfrak{D}_{n+m+4})[r\abs{(r\overline{\partial}_r)^{\le n+3}S^{\le m-1}\varphi{}} + r\abs{(r\overline{\partial}_r)^{\le n+2}S^{\le m}\varphi{}}]\\
&\le C(n,m,\mathfrak{D}_{n+m+5})[r^{-1}\abs{(r^2\overline{\partial}_r)^2(r\overline{\partial}_r)^{\le n+1}S^{\le m-1}\varphi{}} + r^{-1}\abs{(r^2\overline{\partial}_r)^2(r\overline{\partial}_r)^{\le n}S^{\le m}\varphi{}}] \\
&\le C(n,m,\mathfrak{D}_{n+m+5})[r^{-1} + r^{-1}\abs{(r^2\overline{\partial}_r)^2(r\overline{\partial}_r)^{\le n}S^{\le m}\varphi{}}], \\
\end{split}
\end{equation}
and by \cref{elliptic:pointwise-scalar-field}, we have
\begin{equation}\label{psi-0-assumption-2}
\abs{(r\overline{\partial}_r)\psi{}_0}\le r^{-1}C(n,m,\mathfrak{D}_{n+m+4}).
\end{equation}
A computation (starting from \cref{dudv-rpsi}) for general \(\psi{}\) shows that
\begin{equation}
\partial{}_u((r^2\overline{\partial}_r)(r\psi{}))  = r^3(-\gamma{})\Box{}\psi{} + 2(\varpi{}-\mathbf{e}^2/r)(-\gamma{})(r\overline{\partial}_r)\psi{} - \frac{2}{r}(-\nu{})(r^2\overline{\partial}_r)(r\psi{}).
\end{equation}
Specialize this equation to \(\psi{} = \psi_0\), integrate to \(C^{\text{out}}\), where
the data for \(\psi_0\) vanishes, and use
\cref{psi-0-assumption-1,psi-0-assumption-2} and the inductive hypothesis to obtain
\begin{equation}
\begin{split}
&\abs{(r^2\overline{\partial}_r)^2(r\overline{\partial}_r)^{n}S^m\varphi{}(u,v)} = \abs{(r^2\overline{\partial}_r)(r(r\overline{\partial}_r)^{n+1}S^m\varphi{})(u,v)} \\
&\le C(n,m,\mathfrak{D}_{n+m+5})\Bigl[\int_1^u r^{-1}(u',v)\dd{}u' + \int_1^u r^{-1}\abs{(r^2\overline{\partial}_r)^2(r\overline{\partial}_r)^{n}S^{m}\varphi{}}(u',v)\dd{}u'\Bigr]. \\
\end{split}
\end{equation}
Grönwall's inequality, the monotonicity \(r^{-1}(u',v)\le r^{-1}(u,v)\) for
\(u'\in [1,u]\), and the estimate \(ur^{-1}\le 1\) in the region of interest give
\begin{equation}
\abs{(r^2\overline{\partial}_r)^2(r\overline{\partial}_r)^{n}S^m\varphi{}(u,v)}\le C(n,m,\mathfrak{D}_{n+m+5})ur^{-1}\exp (C(n,m,\mathfrak{D}_{n+m+5})ur^{-1})\le C(n,m,\mathfrak{D}_{n+m+5})ur^{-1},
\end{equation}
This concludes the proof of \cref{rsr-rdr-S-bound} for \(k=2\) and the pair
\((n,m)\).
\end{proof}
\begin{lemma}
We have
\begin{equation}
\abs{U(r\overline{\partial}_r)^nS^m\varphi{}}\lesssim C(n,m,\mathfrak{D}_{n+m+2})[\abs{US^m\varphi{}} + r^{-1}\abs{(r\overline{\partial}_r)^{\le n+1}S^{\le m-1}\varphi{}} +  r^{-1}\abs{(r\overline{\partial}_r)^{\le n}S^{\le m}\varphi{}}].
\end{equation}
\label{UrS-wave-comm}
\end{lemma}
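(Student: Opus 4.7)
The plan is to prove the estimate by induction on $n$, with the case $n=0$ immediate (the bound reduces to $|US^m\varphi| \le |US^m\varphi|$). The key tool is a pointwise identity derived from the wave equation in the $\mathcal{I}$-gauge.

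For a smooth function $\psi$, I will substitute the coordinate-change formula $\overline{\partial}_u = (-\nu)(U + \overline{\partial}_r)$ (which follows from $\partial_u = \overline{\partial}_u + \nu\overline{\partial}_r$) into the expression for $\Box$ in \cref{box-I-gauge} and solve for $\overline{\partial}_r U\psi$. After using $(-\nu) = (-\gamma)(1-\mu)$ and simplifying, this yields a schematic identity
\[
\overline{\partial}_r U\psi = \tfrac{1}{1-\mu}\Box\psi + O(r^{-1})\,U\psi + O(r^{-1})\,\overline{\partial}_r\psi,
\]
where the $O(r^{-1})$ coefficients depend only on bounded geometric quantities (controlled by \cref{zeroth-order-geometric-bounds} together with the fact that $\overline{\partial}_r\log(-\nu) = O(r^{-2})$ from the transport equations and the bounds on $\overline{\partial}_r\log(-\gamma)$ and $\overline{\partial}_r\log(1-\mu)$). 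Multiplying by $r$, writing $(r\overline{\partial}_r)U = U(r\overline{\partial}_r) - [U, r\overline{\partial}_r]$, and using the commutator $[U, r\overline{\partial}_r] = O(1)\,U + O(r^{-1})(r\overline{\partial}_r)$ (computed directly from $U = \tfrac{1}{-\nu}\overline{\partial}_u - \overline{\partial}_r$, consistent with \cref{dr-comm-1} after multiplication by $r$), I obtain the pointwise inequality
\[
|U(r\overline{\partial}_r)\psi| \lesssim r|\Box\psi| + |U\psi| + r^{-1}|(r\overline{\partial}_r)\psi|
\]
valid for any smooth $\psi$ in the region of interest.

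To complete the inductive step I will apply this with $\psi = (r\overline{\partial}_r)^{n-1}S^m\varphi$. The second term $|U(r\overline{\partial}_r)^{n-1}S^m\varphi|$ is controlled by the inductive hypothesis at order $n-1$, whose RHS is already contained in the RHS at order $n$. The third term $r^{-1}|(r\overline{\partial}_r)^nS^m\varphi|$ sits directly in the third bucket of the target RHS. For the first term $r|\Box(r\overline{\partial}_r)^{n-1}S^m\varphi|$, I will invoke \cref{rdr-S-n-comm}: since $\Box\varphi = 0$, the type-1 contributions vanish, and the type-2 (resp.\ type-3) contributions produce $r^{-1}|(r\overline{\partial}_r)^{\le n+1}S^{\le m-1}\varphi|$ (resp.\ $r^{-1}|(r\overline{\partial}_r)^{\le n}S^{\le m}\varphi|$), with coefficients bounded by $C(\mathfrak{D}_{n+m+2})$ using the estimates on $(r\overline{\partial}_r)^i S^a \varpi$ and $(r\overline{\partial}_r)^i S^a \log(-\gamma)$ from Steps 1a--1c of the proof of \cref{elliptic:Linf-scalar-field}.

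The delicate bookkeeping point—which is the only real subtlety—is that no term of the form $(r\overline{\partial}_r)^{n+1}S^m\varphi$ appears on the RHS of the target estimate; such a term would fit in neither the $S^{\le m-1}$ bucket nor the $(r\overline{\partial}_r)^{\le n}$ bucket. This is ensured by the $\mathbf{1}_{m\ge 1}$ indicator on the type-2 contribution of \cref{rdr-S-n-comm}, which forces any occurrence of $n+1$ derivatives of $r\overline{\partial}_r$ in $\Box(r\overline{\partial}_r)^{n-1}S^m\varphi$ to come with a strict reduction in the $S$-count, so that the inductive bound closes exactly.
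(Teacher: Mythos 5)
Your proposal is correct and follows essentially the same route as the paper: the paper likewise derives the pointwise identity $U(r\overline{\partial}_r) =_{\mathrm{s}} \mathcal{O}(r^{-1}\varpi,(-\gamma))[r\Box + U + r^{-1}(r\overline{\partial}_r)]$ from the wave operator in the $\mathcal{I}$-gauge, applies it to $(r\overline{\partial}_r)^{n-1}S^m\varphi$, bounds the resulting $r\abs{\Box(r\overline{\partial}_r)^{n-1}S^m\varphi}$ term via the commutation formula \cref{rdr-S-n-comm} (in the processed form \cref{elliptic:rS-comm-improved} together with \cref{elliptic:pointwise-scalar-field}), and closes by induction on $n$. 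Your bookkeeping observation about why no $(r\overline{\partial}_r)^{n+1}S^m\varphi$ term can appear is exactly the point that makes the induction close.
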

\begin{proof}
Compute the wave equation (in the region \(\set{r\ge 2\Rc}\))
\begin{equation}
U(r\overline{\partial}_r) = \mathcal{O}(r^{-1}\varpi{},(-\gamma{}))[r\Box{} + U + r^{-1}(r\overline{\partial}_r)]
\end{equation}
It follows that, for \(n\ge 1\),
\begin{equation}\label{UrS-wave-comm-prep-2}
U(r\overline{\partial}_r)^nS^m = \mathcal{O}(r^{-1}\varpi{},(-\gamma{}))[r\Box{}(r\overline{\partial}_r)^{n-1}S^m + U(r\overline{\partial}_r)^{n-1}S^m + r^{-1}(r\overline{\partial}_r)^nS^m].
\end{equation}
It follows from \cref{elliptic:rS-comm-improved,elliptic:pointwise-scalar-field} that
(in the region \(\set{r\ge 5\Rc}\))
\begin{equation}\label{UrS-wave-comm-prep-1}
\begin{split}
\abs{r^2\Box{}(r\overline{\partial}_r)^nS^m\varphi{}} &\lesssim C(n,m,\mathfrak{D}_{n+m+3})[\abs{(r\overline{\partial}_r)^{\le n+2}S^{\le m-1}\varphi{}} +  \abs{(r\overline{\partial}_r)^{\le n+1}S^{\le m}\varphi{}}].
\end{split}
\end{equation}
Now \cref{UrS-wave-comm-prep-1,UrS-wave-comm-prep-2} imply
\begin{equation}\label{UrS-wave-comm-prep}
\abs{U(r\overline{\partial}_r)^nS^m\varphi{}}\lesssim C(n,m,\mathfrak{D}_{n+m+2})[r^{-1}\abs{(r\overline{\partial}_r)^{\le n+1}S^{\le m-1}\varphi{}} +  r^{-1}\abs{(r\overline{\partial}_r)^{\le n}S^{\le m}\varphi{}} + \abs{U(r\overline{\partial}_r)^{n-1}S^m\varphi{}}]
\end{equation}
We obtain \cref{UrS-wave-comm} from \cref{UrS-wave-comm-prep} by induction on \(n\).
\end{proof}
\subsubsection{Estimates for \((r\overline{\partial}_r)\)-derivatives of
  \texorpdfstring{\(\varpi{}\)}{the renormalized Hawking mass}}
\label{sec:org9159aa4}
\begin{lemma}
For \(n\ge 1\) and \(m\ge 0\), we have
\begin{equation}\label{omega-bound-1}
r\tau{}^{1-\epsilon{}}\abs{(r\overline{\partial}_r)^nS^m\varpi{}}\le C(\epsilon{},n,m,\mathfrak{D}_{n+m+3})\quad \text{in }\set{r\ge 5\Rc}.
\end{equation}
For \(\abs{\alpha{}}\ge 1\), we have
\begin{equation}\label{omega-bound-2}
\tau{}^{1-\epsilon{}}\abs{\Gamma{}^\alpha{}\varpi{}}\le C(\epsilon{},\abs{\alpha{}},\mathfrak{D}_{\abs{\alpha{}}})\quad \text{in }\set{r\le 5\Rc}.
\end{equation}
For \(m\ge 1\), we have
\begin{equation}\label{omega-bound-3}
\tau{}^{1-\epsilon{}}\abs{S^m\varpi{}}\le C(\epsilon{},n,)\quad \text{in }\set{r\ge 5\Rc}.
\end{equation}
\label{elliptic:rS-omega-bound}
\end{lemma}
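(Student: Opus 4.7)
The three parts give $\tau$-decay estimates for $\varpi$-derivatives in different regions and of different derivative structure, and my plan is to prove them in the order (2), then (1) and (3) simultaneously by induction on the total order $k=n+m$. At each step of that joint induction I would first close (1) at order $k$ and then use it to close (3) at order $k$, while both are fed by the pointwise control of scalar-field derivatives supplied by \cref{elliptic:pointwise-scalar-field}.

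For part (2), in $\{r\le 5\Rc\}$: the schematic formula for $\Gamma^\alpha\varpi$ derived in the proof of \cref{Gamma-omega-bounds} (see \cref{Gamma-omega-bound-step-1}) expresses $\Gamma^\alpha\varpi$ as an $r^2$-weighted sum of products of lower-order $\varphi$-derivatives together with additional coordinate weights $\{r,u,v/r\}$ and $\{r,u,v\}$. In $\{r\le 5\Rc\}$ each such weight is bounded by $C(\Rc,\mathfrak{B}_0)\tau$ by \cref{v-u-r-compare}. Combined with the improved pointwise bound $|D\Gamma^{<\alpha}\varphi|^2\lesssim r^{-1}\tau^{-2+\epsilon}$ from $\mathcal{P}_{<\alpha,2-\epsilon/2}^2$ (available via \cref{P-alpha-bound} and \cref{sec:putting-it-all-together}), the net bound is $\tau\cdot\tau^{-2+\epsilon}=\tau^{-1+\epsilon}$, as required.

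For part (1) at order $k=n+m$, the crucial observation is that $[S,r\overline{\partial}_r]=0$ on $\{r\ge 2\Rc\}\supset\{r\ge 5\Rc\}$: in this region $S=u\overline{\partial}_u+r\overline{\partial}_r$ and $\overline{\partial}_ur=0$, so the two vector fields commute exactly, and $S^ir=r$, $S^ir^{-1}=(-1)^ir^{-1}$. Using this I would write $(r\overline{\partial}_r)^nS^m\varpi=(r\overline{\partial}_r)^{n-1}S^m\bigl[\frac{r(1-\mu)}{2}((r\overline{\partial}_r)\varphi)^2\bigr]$ from the transport equation and expand by Leibniz. The leading term is $\tfrac{r(1-\mu)}{2}$ times a product of top-order $(r\overline{\partial}_r)^{\le n}S^{\le m}\varphi$-factors, bounded by $r\cdot(r^{-1}\tau^{-1/2+\epsilon})^2=r^{-1}\tau^{-1+2\epsilon}$ via \cref{elliptic:pointwise-scalar-field}. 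All sub-leading terms carry a factor $(r\overline{\partial}_r)^{i_1}S^{j_1}\varpi$ with $i_1\le n-1$, $j_1\le m$, and $i_1+j_1\le k-1$; these are closed by the induction hypothesis—(1) at order $<k$ when $i_1\ge 1$, and (3) at order $j_1\le m\le k-1$ when $i_1=0$.

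The hard part is part (3), because the naive schematic expansion of $S^m\varpi$ contains a term $ur^2(V\varphi)^2$ whose pointwise estimate grows like $\tau^{\epsilon}$ rather than decaying as $\tau^{-1+\epsilon}$; no interpolation of the available $\mathcal{P}_p$-bounds for $V\varphi$ gives enough joint $r$- and $\tau$-decay to close directly. Instead I would derive and integrate a linear transport ODE in $\overline{\partial}_r$ for $S^m\varpi$. Using $Sr=r$ one gets the operator identity $S\overline{\partial}_r=\overline{\partial}_r(S-1)$ in $\{r\ge 2\Rc\}$, hence $S^m\overline{\partial}_r=\overline{\partial}_r(S-1)^m$. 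Applying $S^m$ to $\overline{\partial}_r\varpi=\tfrac12 r^2(1-\mu)(\overline{\partial}_r\varphi)^2$ and isolating the contribution of $S^m\varpi$ arising when all $m$ copies of $S$ land on the $-2\varpi/r$ piece of $(1-\mu)$ inside the square bracket produces a linear ODE of the form
\begin{equation*}
\overline{\partial}_rS^m\varpi+r(\overline{\partial}_r\varphi)^2\cdot S^m\varpi=F_m(u,r),
\end{equation*}
where $F_m$ assembles three kinds of terms: scalar-field derivatives (controlled pointwise), factors $S^j\varpi$ with $j<m$ (from the induction on (3)), and factors $\overline{\partial}_rS^j\varpi=r^{-1}(r\overline{\partial}_r)S^j\varpi$ with $j<m$ (from part (1) at order $\le k$, already established within the $k$-th induction step). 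Both the coefficient $r(\overline{\partial}_r\varphi)^2\lesssim r^{-3}\tau^{-1+\epsilon}$ and the forcing $F_m\lesssim r^{-2}\tau^{-1+\epsilon}$ are integrable in $r$ on $[5\Rc,\infty)$, so the integrating factor is uniformly bounded. Integrating the ODE from the curve $\{r=5\Rc\}$---where part (2) supplies the boundary bound $|S^m\varpi|\lesssim\tau^{-1+\epsilon}$---then propagates the desired estimate throughout $\{r\ge 5\Rc\}$.
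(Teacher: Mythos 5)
Your proposal is correct and follows the same overall architecture as the paper: part (2) by specializing the finite-$r$ schematic formula \cref{Gamma-omega-bound-step-1}/\cref{omega-bound-penultimate} and absorbing the coordinate weights into $\tau$; part (1) by Leibniz expansion of the radial transport equation for $\varpi$ combined with the pointwise bounds of \cref{elliptic:pointwise-scalar-field}; and part (3) by integrating $\overline{\partial}_rS^m\varpi$ outward from $\{r=5\Rc\}$, where part (2) supplies the boundary datum. The one genuine difference is how the undifferentiated-in-$r$ factors $S^{j}\varpi$ are controlled. The paper notes that these only ever appear with an $r^{-1}$ weight, and $r^{-1}S^{\le m}\varpi$ is already bounded by the slow-growth estimate $\abs{S^{j}\varpi}\lesssim r^{s+\eta_0}$ of \cref{Gamma-omega-bounds}; consequently (1) is proved with no input from (3), and in (3) the coefficient $\mathcal{O}(r^{-1}S^{\le m}\varpi)$ of the integrand is bounded a priori, so no integrating factor or Gr\"onwall step is needed. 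You instead interleave (1) and (3) in a joint induction on $n+m$ and isolate the top-order $S^m\varpi$ in a linear ODE with coefficient $r(\overline{\partial}_r\varphi)^2\lesssim r^{-3}\tau^{-1+\epsilon}$. Your dependency structure is consistent (no circularity: (1) at order $k$ uses (3) at order $\le k-1$, and (3) at order $k$ uses (1) at order $\le k$), and the integrating factor is indeed uniformly bounded, so the argument closes; it is just heavier than necessary, since \cref{Gamma-omega-bounds} already defuses the ``hard part'' you identify. Your commutation identities $[S,r\overline{\partial}_r]=0$ and $S\overline{\partial}_r=\overline{\partial}_r(S-1)$ in $\{r\ge 2\Rc\}$ are correct and consistent with \cref{dr-S-comm}. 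One small bookkeeping point: in estimates like $r\cdot(r^{-1}\tau^{-1/2+\epsilon})^2$ you must apply \cref{elliptic:pointwise-scalar-field} with a strictly smaller parameter than the $\epsilon$ in the target \cref{omega-bound-1}, so that the residual $\tau^{2\epsilon'-\epsilon}$ is bounded; this is routine but should be said.
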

\begin{proof}
The transport equation for \(\varpi{}\) is
\begin{equation}
r^{-1}(r\overline{\partial}_r)\varpi{} = \frac{1}{2}(1-\mu{})((r\overline{\partial}_r)\varphi{})^2 =_{\mathrm{s}} \mathcal{O}(r^{-1}\varpi{})((r\overline{\partial}_r)\varphi{})^2.
\end{equation}
An induction argument implies that for \(n\ge 1\), we have
\begin{equation}
r^{-1}(r\overline{\partial}_r)^nS^m\varpi{} =_{\mathrm{s}} \mathcal{O}(r^{-1}S^{\le m}\varpi{},r^{-1}(r\overline{\partial}_r)^{1\le n-1}S^{\le m}\varphi{})((r\overline{\partial}_r)^{\le n}S^{\le m}\varphi{})^2.
\end{equation}
Now \cref{omega-bound-1} follows from
\cref{Gamma-omega-bounds,elliptic:pointwise-scalar-field}.

For \cref{omega-bound-2}, specialize \cref{omega-bound-penultimate} to the region \(\set{r\le 5\Rc}\) for \(\abs{\alpha{}}\ge 1\),
\begin{equation}
\abs{\Gamma{}^\alpha{}\varpi{}} \le \tau{}C(\mathfrak{B}_{<\alpha{}})\abs{\Gamma{}^{\le \alpha{}}\varphi{}}^2\le C(\epsilon{},\mathfrak{D}_{\abs{\alpha{}}})\tau{}^{-1+\epsilon{}}.
\end{equation}

It follows that, in the region \(\set{r\ge 5\Rc}\), for \(m\ge 1\),
\begin{equation}
\begin{split}
\abs{S^m\varpi{}}(u,r)&\le \abs{S^m\varpi{}}(u,5\Rc) + \int_{5\Rc}^r \abs{\overline{\partial}_rS^m\varpi{}}(u,r')\dd{}r' \\
&\le C(\Rc)\abs{S^m\varpi{}}(u,5\Rc) + \int_{5\Rc}^r\mathcal{O}(r^{-1}S^{\le m}\varpi{}) \abs{(r\overline{\partial}_r)S^{\le m}\varphi{}}^2(u,r')\dd{}r' \\
&\le C(\mathfrak{D}_m)\tau{}^{-1+\epsilon{}} + C(\mathfrak{D}_{m+3})\int_{5\Rc}^r r^{-2}\tau{}^{-1+\epsilon{}}\dd{}r' \\
&\le C(\mathfrak{D}_{m+3})\tau{}^{-1+\epsilon{}},
\end{split}
\end{equation}
which is \cref{omega-bound-3}.
\end{proof}
\begin{lemma}[Asymptotic expansion of \(\varpi\)]
In \(\set{r\ge 5\Rc}\cap \set{r\ge u}\), we have
\begin{equation}\label{omega-expansion-equation}
\varpi{}(u,r) = \varpi{}|_{\mathcal{I}}(u) + \mathcal{E}_\varpi{}(u,r),
\end{equation}
where \(\varpi_{\mathcal{I}}\) satisfies
\begin{equation}\label{omega-expansion-2}
\abs{(u\overline{\partial}_{u})^m\varpi{}|_{\mathcal{I}}(u)}\lesssim C(\epsilon{},m,\mathfrak{D}_m)\min (r^\epsilon{},u^\epsilon{}),
\end{equation}
and the error \(\mathcal{E}_\varpi{}\) satisfies
\begin{equation}\label{omega-expansion-1}
\abs{(r\overline{\partial}_r)^n(u\overline{\partial}_u)^m\mathcal{E}_\varpi{}}\lesssim C(\epsilon{},n,m,\mathfrak{D}_{n+m+5})r^{-1}\min (r^\epsilon{},u^\epsilon{}).
\end{equation}
\label{omega-expansion}
\end{lemma}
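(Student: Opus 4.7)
The plan is to define the trace $\varpi|_{\mathcal{I}}(u) := \lim_{r\to\infty}\varpi(u,r)$ and the error $\mathcal{E}_\varpi := \varpi - \varpi|_{\mathcal{I}}$, and then to derive all the estimates by integrating the Raychaudhuri equation $\overline{\partial}_r\varpi = \tfrac{1}{2}(1-\mu)((r\overline{\partial}_r)\varphi)^2$ in $r$ from $r$ out to $\infty$. Existence of the limit is immediate from \cref{elliptic:pointwise-scalar-field}, which gives $|(r\overline{\partial}_r)\varphi|^2 \lesssim r^{-2}\tau^{-1+2\epsilon}$, so the integrand decays like $r^{-2}$ uniformly on compact $u$-sets. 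Integrating from $r$ to $\infty$ and using $\tau = u$ in the far region yields $|\mathcal{E}_\varpi| \lesssim r^{-1}u^{-1+2\epsilon}$, comfortably stronger than the $r^{-1}u^\epsilon$ bound claimed in the $n=m=0$ case of \cref{omega-expansion-1}.

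For $n \ge 1$ and any $m \ge 0$, I would exploit that $\varpi|_{\mathcal{I}}$ depends only on $u$, so $(r\overline{\partial}_r)^n \mathcal{E}_\varpi = (r\overline{\partial}_r)^n\varpi$, and the bound then follows from \cref{omega-bound-1} of \cref{elliptic:rS-omega-bound} once the $u\overline{\partial}_u$-derivatives are reduced to $S$-derivatives. The central algebraic step is the identity $u\overline{\partial}_u = S - r\overline{\partial}_r$, valid throughout $\set{r \ge 2\Rc}$ where the cutoff $\chi$ vanishes; since $[r\overline{\partial}_r, u\overline{\partial}_u]=0$ in $(u,r)$-coordinates, $S$ and $r\overline{\partial}_r$ commute in the far region, and binomial expansion gives
\begin{equation*}
(r\overline{\partial}_r)^n(u\overline{\partial}_u)^m = \sum_{i+j=m}\binom{m}{i}(-1)^j S^i(r\overline{\partial}_r)^{n+j}.
\end{equation*}
When $n \ge 1$ every summand carries at least one $(r\overline{\partial}_r)$, so \cref{omega-bound-1} applies term by term to give $|(r\overline{\partial}_r)^n(u\overline{\partial}_u)^m\mathcal{E}_\varpi| \lesssim r^{-1}u^{-1+\epsilon}$.

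When $n=0$ the sum above contains the summand $S^m\varpi$, which I would handle separately by writing $S^m\varpi = S^m\mathcal{E}_\varpi + (u\overline{\partial}_u)^m\varpi|_{\mathcal{I}}(u)$, using that $S$ acts on a function of $u$ alone as $u\overline{\partial}_u$ in the far region. The error piece $S^m\mathcal{E}_\varpi$ is controlled by integrating $\overline{\partial}_r S^m\varpi = r^{-1}(r\overline{\partial}_r)S^m\varpi$ in $r$ and invoking \cref{omega-bound-1} once more. The bound \cref{omega-expansion-2} on $\varpi|_{\mathcal{I}}$ itself is then obtained by taking $r\to\infty$ in the pointwise estimate for $(u\overline{\partial}_u)^m\varpi$, which is $\lesssim u^{-1+\epsilon}$ by \cref{omega-bound-3} applied to $S^m\varpi$ together with the $(r\overline{\partial}_r)$-terms already controlled above; the interchange of limit and $u$-derivatives is justified by dominated convergence using the uniformly integrable $r$-tails from the previous steps.

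I do not anticipate a genuine analytical obstacle: every estimate above has at least a factor of $u^{1-O(\epsilon)}$ of slack relative to the $\min(r^\epsilon,u^\epsilon)$ target. The mildest bookkeeping concern is the Leibniz expansion of $(r\overline{\partial}_r)^{n-1}$ applied to $(1-\mu)((r\overline{\partial}_r)\varphi)^2$, where factors of $(r\overline{\partial}_r)^i(1-\mu)$ bring in $r^{-1}(r\overline{\partial}_r)^i\varpi$; but each such factor is of strictly lower order in the induction on $n$ and is already controlled by \cref{omega-bound-1}, so the argument closes cleanly.
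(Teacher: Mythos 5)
Your proposal is correct and follows essentially the same route as the paper: define \(\varpi|_{\mathcal{I}}(u)\) as the \(r\to\infty\) limit, realize \(\mathcal{E}_\varpi\) as the tail integral of the transport equation \(\overline{\partial}_r\varpi = \tfrac{1}{2}(1-\mu)r(\overline{\partial}_r\varphi)^2\), and reduce the mixed \((r\overline{\partial}_r)^n(u\overline{\partial}_u)^m\) derivatives to the already-established bounds of \cref{elliptic:rS-omega-bound} by substituting \(u\overline{\partial}_u = S - r\overline{\partial}_r\) in the far region, with the same uniform-integrability argument justifying the interchange of \(\lim_{r\to\infty}\) with \(u\)-derivatives. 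The only (harmless) difference is organizational: you route everything through \cref{omega-bound-1,omega-bound-3} via the binomial expansion of \((S-r\overline{\partial}_r)^m\), whereas the paper differentiates the explicit integral formula for \(\mathcal{E}_\varpi\) directly and invokes the scalar-field bounds of \cref{rsr-rdr-S-bound}; your resulting bounds are in fact slightly stronger than the stated \(\min(r^\epsilon,u^\epsilon)\) growth, which is fine.
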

\begin{proof}
\step{Step 1: Proof of \cref{omega-expansion-1}.} Define
\begin{equation}
\mathcal{E}_\varpi{}(u,r) \coloneqq{} -\int_r^\infty \overline{\partial}_r\varpi{}(u,r')\dd{}r' =  -\int_r^\infty r'^{-2}(r'^2\overline{\partial}_r)\varpi{}(u,r')\dd{}r' = -\int_r^\infty \frac{1}{2}r'^{-2}(1-\mu{})((r'^2\overline{\partial}_r)\varphi{})^2(u,r')\dd{}r',
\end{equation}
so that \cref{omega-expansion-equation} holds with \(\varpi_{\mathcal{I}}(u) \coloneqq{} \lim_{r\to
\infty}\varpi{}(u,r)\). It follows from \cref{elliptic:rS-omega-bound,rsr-rdr-S-bound}
(after replacing instances of \(u\overline{\partial}_u\) with \(S - r\overline{\partial}_r\))
that, for \(n\ge 1\),
\begin{equation}\label{omega-expansion-1-prep-1}
\begin{split}
\abs{(r\overline{\partial}_r)^n(u\overline{\partial}_u)^m\mathcal{E}_\varpi{}}&\lesssim r^{-2}\abs{(r\overline{\partial}_r)^{\le n-1}(u\overline{\partial}_u)^{\le m}(1-\mu{})}\abs{(r^2\overline{\partial}_r)(r\overline{\partial}_r)^{\le n-1}(u\overline{\partial}_u)^{\le m}\varphi{}}^2 \\
&\lesssim C(n,m,\mathfrak{D}_{n+m+4})r^{-2}.
\end{split}
\end{equation}
When \(n = 0\), we differentiate under the integral sign and use
\cref{elliptic:rS-omega-bound,Gamma-omega-bounds,rsr-rdr-S-bound} (after replacing instances of
\(u\overline{\partial}_u\) with \(S - r\overline{\partial}_r\)) to get
\begin{equation}\label{omega-expansion-1-prep-2}
\begin{split}
\abs{(u\overline{\partial}_u)^m\mathcal{E}_\varpi{}}&\lesssim \int_r^\infty r'^{-2}\abs{(u\partial{}_u)^{\le m}(1-\mu{})}\abs{(r'^2\overline{\partial}_r)(u\overline{\partial}_u)^{\le m}\varphi{}}^2(u,r')\dd{}r'\\
&\lesssim C(\epsilon{},n,m,\mathfrak{D}_{n+m+5}) \int_r^\infty r'^{-2}\min (r^\epsilon{},u^\epsilon{})\dd{}r'\lesssim C(\epsilon{},n,m,\mathfrak{D}_{n+m+5})r^{-1}\min (r^\epsilon{},u^\epsilon{}).
\end{split}
\end{equation}
Use \cref{omega-expansion-1-prep-1,omega-expansion-1-prep-2} to conclude
\cref{omega-expansion-1}.

\step{Step 2: Proof of \cref{omega-expansion-2}.} By similar arguments to those in
\cref{elliptic:rS-omega-bound}, one can use \cref{rsr-rdr-S-bound} to obtain
\(\abs{(r^2\overline{\partial}_r)(u\overline{\partial}_u)^m\varpi{}}\lesssim
C(m,\mathfrak{D}_{m+5})\). This shows that
\((u\overline{\partial}_u)^m\varpi{}\) is integrable in \(r\) towards
\(\mathcal{I}\), uniformly on compact subsets of \(u\). It follows from a fact
in real analysis (as in the proof of \cref{gamma-bound-step-3}) that
\((u\overline{\partial}_u)^m\varpi{}|_{\mathcal{I}}(u) = \lim_{r\to
\infty}((u\overline{\partial}_u)^m\varpi{})(u,r)\). Then \cref{omega-expansion-2}
follows from \cref{elliptic:rS-omega-bound,Gamma-omega-bounds} (after replacing
instances of \(u\overline{\partial}_u\) with \(S- r\overline{\partial}_r\)).
\end{proof}
\subsubsection{Estimates for \((r\overline{\partial}_r)\)-derivatives of \((-\gamma{})\)}
\label{sec:org1c7ff08}
\begin{lemma}
For \(n,m\ge 0\), we have
\begin{equation}\label{UrS-gamma-estimate}
r\tau{}\abs{U^{\le 1}(r\overline{\partial}_r)^nS^m\log (-\gamma{})}\le C(n,m,\mathfrak{D}_{n+m+3})\quad \text{in }\set{r\ge 5\Rc}.
\end{equation}
\label{elliptic:rS-gamma-bound}
\end{lemma}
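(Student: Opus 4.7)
The plan is to integrate the transport equation
\[
\overline{\partial}_r \log(-\gamma) = r(\overline{\partial}_r\varphi)^2
\]
from the target point $(u,r)$ out to null infinity, where all $(r\overline{\partial}_r)^n S^m U^{\le 1}$-derivatives of $\log(-\gamma)$ vanish. In the region $\{r \ge 5\Rc\}$ we have $V = \overline{\partial}_r$ and $S = r\overline{\partial}_r + u\overline{\partial}_u$, so the commutators $[r\overline{\partial}_r, \overline{\partial}_r] = -\overline{\partial}_r$ and $[S, \overline{\partial}_r] = -\overline{\partial}_r$ have constant coefficients, and $[r\overline{\partial}_r, S] = 0$. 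Applying $(r\overline{\partial}_r)^n S^m$ to the transport equation and rearranging therefore yields
\[
\overline{\partial}_r\bigl[(r\overline{\partial}_r)^n S^m \log(-\gamma)\bigr] = (r\overline{\partial}_r)^n S^m\bigl[r(\overline{\partial}_r\varphi)^2\bigr] + \sum_{n'+m'<n+m} c_{n'm'}\,\overline{\partial}_r\bigl[(r\overline{\partial}_r)^{n'} S^{m'} \log(-\gamma)\bigr]
\]
for explicit constants $c_{n'm'}$. Integrating this identity from $r$ to infinity, with the boundary term at $\mathcal{I}$ justified by a qualitative limit argument analogous to Step 3 of the proof of \cref{gamma-bound-step-3}, produces a recursive formula for $(r\overline{\partial}_r)^n S^m \log(-\gamma)(u,r)$.

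Next, I expand the source by Leibniz. Since $(r\overline{\partial}_r) r = Sr = r$ in the region, $(r\overline{\partial}_r)^n S^m[r(\overline{\partial}_r\varphi)^2]$ decomposes as a finite sum of terms of the form $r\cdot (r\overline{\partial}_r)^a S^b \overline{\partial}_r\varphi \cdot (r\overline{\partial}_r)^{a'}S^{b'}\overline{\partial}_r\varphi$ with $a+a'\le n$ and $b+b'\le m$. Writing $\overline{\partial}_r = r^{-1}(r\overline{\partial}_r)$ (modulo lower-order rearrangement) and applying the pointwise estimate from \cref{elliptic:pointwise-scalar-field}, with the weight $r^{-1/2}\tau^{-1+\epsilon}$ on one factor and $r^{-1}\tau^{-1/2+\epsilon}$ on the other, each such term is bounded by $C(n,m,\mathfrak{D}_{n+m+3})\,r^{-5/2}\tau^{-3/2+2\epsilon}$. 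Integrating in $r$ at fixed $u$, where $\tau = u$ in $\{r\ge R_0\}$, yields $C\,r^{-3/2}\tau^{-3/2+2\epsilon}$, which is dominated by $(r\tau)^{-1}$ for $\epsilon < 1/4$ since $r, u \ge 1$. An induction on $n+m$ then controls the lower-order contributions in the recursive identity and delivers the estimate without the $U$-derivative.

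For the $U$-derivative, I differentiate the transport equation by $U$, using $[\overline{\partial}_r, U] =_{\mathrm{s}} r^{-2}\mathfrak{g}_0[\overline{\partial}_r + U]$ from \cref{dr-U-commutation}, to obtain
\[
\overline{\partial}_r\bigl(U\log(-\gamma)\bigr) = U\bigl[r(\overline{\partial}_r\varphi)^2\bigr] + \mathcal{O}(r^{-2}\mathfrak{g}_0)[\overline{\partial}_r + U]\log(-\gamma),
\]
where the correction term is already lower order (controlled by the no-$U$ estimate and the pointwise norm of $\varphi$). Then I commute $(r\overline{\partial}_r)^n S^m$ past $\overline{\partial}_r U$: the additional commutators $[r\overline{\partial}_r, U]$ and $[S, U]$ in $\{r\ge 5\Rc\}$ are first-order with bounded or $O(r^{-1})$ coefficients, arising from $Ur = -1$, from \cref{dr-U-commutation}, and from \cref{U-S-comm-1}, and they contribute only strictly lower-order terms. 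Integrating along the same constant-$u$ curve and using the $L^\infty$ control of $(r\overline{\partial}_r)^a S^b U\overline{\partial}_r\varphi$ provided by \cref{UrS-wave-comm} and \cref{elliptic:pointwise-scalar-field} yields the full estimate for the $U^{\le 1}$ case.

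The main obstacle is the bookkeeping of commutator terms in the $U$-case: one must verify that each commutator contribution arising when $(r\overline{\partial}_r)^n S^m U$ is pushed past $\overline{\partial}_r$ reduces, after expansion, to an expression involving only strictly lower-order derivatives of $\log(-\gamma)$ together with the source built from $\varphi$-derivatives, so that the induction on $n + m$ closes. The good structural feature in $\{r \ge 5\Rc\}$—that $r\overline{\partial}_r$ and $S$ commute, that $[U, \overline{\partial}_r]$ is of size $O(r^{-2})$, and that $[U, r\overline{\partial}_r]$ and $[U,S]$ are first-order with at worst $O(1)$ coefficients—makes this bookkeeping tractable.
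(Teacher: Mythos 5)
Your route is genuinely different from the paper's and is workable in outline, but it is heavier than necessary and, as written, does not quite deliver the stated constant. The paper never integrates the transport equation here: for $n\ge 1$ it uses that $[S,r\overline{\partial}_r]=0$ in $\set{r\ge 5\Rc}$ to write, \emph{exactly},
\[
U^{\le 1}(r\overline{\partial}_r)^nS^m\log(-\gamma{}) \;=\; U^{\le 1}(r\overline{\partial}_r)^{n-1}S^m\bigl(((r\overline{\partial}_r)\varphi{})^2\bigr),
\]
and then bounds the right side pointwise via \cref{UrS-wave-comm} and \cref{elliptic:pointwise-scalar-field}; the $n=0$ case is simply quoted from \cref{Gamma-gamma-bounds}. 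This costs no boundary terms, no qualitative limit arguments at $\mathcal{I}$, and no $[\,\cdot\,,\overline{\partial}_r]$ bookkeeping in the $U$ case. Your integration-to-$\mathcal{I}$ scheme instead reproves the $n=0$ machinery of \cref{Gamma-gamma-bounds} for all $(n,m)$, which buys nothing here and forces you to justify vanishing of $U^{\le 1}(r\overline{\partial}_r)^{n'}S^{m'}\log(-\gamma{})$ at null infinity for every lower-order index.

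The concrete slip is the derivative count. Your source term is $(r\overline{\partial}_r)^nS^m[r(\overline{\partial}_r\varphi{})^2]$, whose top-order factor is $(r\overline{\partial}_r)^nS^m\overline{\partial}_r\varphi{}=r^{-1}(r\overline{\partial}_r)(r\overline{\partial}_r-1)^n(S-1)^m\varphi{}$, i.e.\ it contains $(r\overline{\partial}_r)^{n+1}S^m\varphi{}$. Bounding this pointwise by \cref{elliptic:pointwise-scalar-field} requires $\mathfrak{D}_{(n+1)+m+3}=\mathfrak{D}_{n+m+4}$, not the $\mathfrak{D}_{n+m+3}$ you claim (and which the lemma asserts). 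The extra derivative is exactly what integration wastes: by peeling off one $r\overline{\partial}_r$ and applying the transport equation algebraically, the paper's source $((r\overline{\partial}_r)\varphi{})^2$ carries only $n+m$ derivatives of $\varphi{}$ at top order, and \cref{UrS-wave-comm} handles the single $U$ without increasing the count. Your decay arithmetic ($r^{-5/2}\tau^{-3/2+2\epsilon{}}$ integrating to $r^{-3/2}\tau^{-3/2+2\epsilon{}}\le (r\tau{})^{-1}$ for $\epsilon{}\le 1/4$) is fine, and the boundary terms can indeed be killed by the qualitative argument you cite, so the approach proves the estimate with $\mathfrak{D}_{n+m+4}$; to get $\mathfrak{D}_{n+m+3}$ you should switch to the algebraic use of the transport equation for $n\ge 1$.
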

\begin{proof}
When \(n = 0\), this follows from \cref{Gamma-gamma-bounds}. When \(n\ge 1\), the
transport equation for \(\log (-\gamma{})\) gives
\begin{equation}
\begin{split}
\abs{U^{\le 1}(r\overline{\partial}_r)^nS^m\log (-\gamma{})} &=  U^{\le 1}(r\overline{\partial}_r)^{n-1}S^m(r\overline{\partial}_r)\log (-\gamma{})  =  \abs{U^{\le 1}(r\overline{\partial}_r)^{n-1}S^m((r\overline{\partial}_r)\varphi{})^2} \\
&\lesssim \abs{(r\overline{\partial}_r)^{\le n}S^{\le m}\varphi{}}^2 + \abs{U(r\overline{\partial}_r)^{\le n}S^{\le n}\varphi{}}^2.
\end{split}
\end{equation}
Use \cref{UrS-wave-comm,elliptic:pointwise-scalar-field}, and the definition of the
pointwise norm \(\mathcal{P}_{\alpha{},1}\le C(\abs{\alpha{}},\mathfrak{D}_{\abs{\alpha{}}})\) to estimate
\begin{equation}
\begin{split}
\abs{U^{\le 1}(r\overline{\partial}_r)^nS^m\log (-\gamma{})} &\lesssim C(\mathfrak{D}_{n+m+2})[\abs{(r\overline{\partial}_r)^{\le n}S^{\le m}\varphi{}}^2 + \abs{(r\overline{\partial}_r)^{\le n+1}S^{\le m-1}\varphi{}}^2 + \abs{US^m\varphi{}}^2] \\
&\lesssim C(n,m,\mathfrak{D}_{n+m+3})r^{-1}\tau{}^{-1}.
\end{split}
\end{equation}
\end{proof}
\begin{lemma}[Asymptotic expansion of \((-\gamma{})\)]
In \(\set{r\ge 5\Rc}\cap \set{r\ge u}\), we have
\begin{equation}\label{gamma-expansion-equation}
(-\gamma{})(u,r) = 1 + r^{-2}\mathcal{F}_{(-\gamma{})}(u) + \mathcal{E}_{(-\gamma{})}(u,r),
\end{equation}
where \(\mathcal{F}_{(-\gamma{})}\) and \(\mathcal{E}_{(-\gamma{})}\) satisfy
\begin{equation}\label{gamma-expansion-1}
\abs{(u\overline{\partial}_u)^m\mathcal{F}_{(-\gamma{})}}\le C(m,\mathfrak{D}_{m+5})
\end{equation}
and
\begin{equation}\label{gamma-expansion-2}
\abs{(r\overline{\partial}_r)^n(u\overline{\partial}_u)^m\mathcal{E}_{(-\gamma{})}}\le C(n,m,\mathfrak{D}_{n+m+5})r^{-3}
\end{equation}
\label{gamma-expansion}
\end{lemma}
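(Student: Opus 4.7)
The plan is to integrate the $\overline{\partial}_r$-transport equation for $\log(-\gamma)$ from $\mathcal{I}$, extract a leading $r^{-2}$ term, and estimate the remainder using the bounds already established in \cref{sec:rdr-derivatives-of-scalar-field-and-geometry}. The argument closely parallels the proof of \cref{omega-expansion} for $\varpi$; the new feature is that the exponential relation between $\log(-\gamma)$ and $(-\gamma)$ contributes a quadratic $O(r^{-4})$ correction that is absorbed into the error.

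First, from the equation $\partial_v(-\gamma) = (r/\lambda)(\partial_v\varphi)^2(-\gamma)$ one obtains $\overline{\partial}_r\log(-\gamma) = r(\overline{\partial}_r\varphi)^2 = r^{-3}(r^2\overline{\partial}_r\varphi)^2$, and the gauge normalization $(-\gamma)|_{\mathcal{I}} = 1$ gives the boundary condition $\log(-\gamma)|_{\mathcal{I}} = 0$. Integrating at fixed $u$ yields
\begin{equation*}
\log(-\gamma)(u,r) = -\int_r^\infty r'^{-3}(r'^2\overline{\partial}_r\varphi)^2(u,r')\,\mathrm{d}r'.
\end{equation*}
Next, I show that $f_\infty(u) \coloneqq \lim_{r\to\infty}(r^2\overline{\partial}_r\varphi)(u,r)$ exists with $|(r^2\overline{\partial}_r\varphi)(u,r) - f_\infty(u)| \lesssim r^{-1}C(\mathfrak{D}_5)$: indeed $\overline{\partial}_r(r^2\overline{\partial}_r\varphi) = r^{-2}(r^2\overline{\partial}_r)^2\varphi$, and \cref{rsr-rdr-S-bound} with $k=2$ gives $|(r^2\overline{\partial}_r)^2\varphi|\le C(\mathfrak{D}_5)$, producing an integrable $r^{-2}$ tail via the Cauchy criterion.

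I then set $\mathcal{F}_{(-\gamma)}(u) \coloneqq -\tfrac{1}{2}f_\infty(u)^2$ and
\begin{equation*}
\widetilde{\mathcal{E}}(u,r) \coloneqq -\int_r^\infty r'^{-3}\bigl[(r'^2\overline{\partial}_r\varphi)^2(u,r') - f_\infty(u)^2\bigr]\,\mathrm{d}r',
\end{equation*}
so that $\log(-\gamma)(u,r) = r^{-2}\mathcal{F}_{(-\gamma)}(u) + \widetilde{\mathcal{E}}(u,r)$; factoring $a^2-b^2=(a-b)(a+b)$ and using the previous step gives $|\widetilde{\mathcal{E}}|\lesssim \int_r^\infty r'^{-4}\,\mathrm{d}r' \lesssim r^{-3}$. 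Exponentiating, $(-\gamma) - 1 = \log(-\gamma) + \tfrac{1}{2}(\log(-\gamma))^2 + \cdots$, and since $|\log(-\gamma)|\lesssim r^{-2}$ the higher-order terms are $O(r^{-4})$ and can be collected into $\mathcal{E}_{(-\gamma)}(u,r) \coloneqq (-\gamma)(u,r) - 1 - r^{-2}\mathcal{F}_{(-\gamma)}(u)$, which then satisfies $|\mathcal{E}_{(-\gamma)}| \lesssim r^{-3}$.

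For the derivative bounds, I exploit the fact that in $\{r\ge 5\Rc\}$ the vector field $S$ equals $u\overline{\partial}_u + r\overline{\partial}_r$, the pairs $(u\overline{\partial}_u, r\overline{\partial}_r)$, $(u\overline{\partial}_u, r^2\overline{\partial}_r)$, and $(S, r\overline{\partial}_r)$ all commute, and $[r\overline{\partial}_r, r^2\overline{\partial}_r] = r^2\overline{\partial}_r$. Expanding $(u\overline{\partial}_u)^j = (S - r\overline{\partial}_r)^j$ by the binomial formula and commuting $r^2\overline{\partial}_r$ through the result, $(u\overline{\partial}_u)^j(r^2\overline{\partial}_r\varphi)$ reduces to a sum of terms of the form $S^a(r^2\overline{\partial}_r)(r\overline{\partial}_r)^b\varphi$ with $a+b\le j$, each bounded by $C(j,\mathfrak{D}_{j+5})$ via \cref{rsr-rdr-S-bound}. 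Passing to the limit and applying the product rule yields $|(u\overline{\partial}_u)^m\mathcal{F}_{(-\gamma)}|\le C(m,\mathfrak{D}_{m+5})$. For $\mathcal{E}_{(-\gamma)}$, applying $r\overline{\partial}_r$ to the integral representation of $\widetilde{\mathcal{E}}$ brings out the integrand at the lower limit, contributing $r^{-2}[(r^2\overline{\partial}_r\varphi)^2 - f_\infty^2]\lesssim r^{-3}$, while $u\overline{\partial}_u$ passes under the integral. Iterating, each term in $(r\overline{\partial}_r)^n(u\overline{\partial}_u)^m\widetilde{\mathcal{E}}$ is either a product of $(u\overline{\partial}_u)^{m'}$-derivatives of $(r^2\overline{\partial}_r\varphi - f_\infty)$ and $(r^2\overline{\partial}_r\varphi + f_\infty)$ evaluated at $r$, or an integral of such, and is controlled by the pointwise and rate estimates from Step~2 applied at higher order.

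The main obstacle is the careful bookkeeping that makes the final constant depend only on $\mathfrak{D}_{n+m+5}$. In particular one must verify that the higher-order analogue of Step~2, namely $|(u\overline{\partial}_u)^{m'}(r^2\overline{\partial}_r\varphi)(u,r) - (u\overline{\partial}_u)^{m'}f_\infty(u)|\lesssim r^{-1}$, holds with the correct regularity count—this uses that $(u\overline{\partial}_u)^{m'}\overline{\partial}_r(r^2\overline{\partial}_r\varphi) = r^{-2}(u\overline{\partial}_u)^{m'}(r^2\overline{\partial}_r)^2\varphi$ is bounded by $r^{-2}C(m',\mathfrak{D}_{m'+5})$ via the same commutation analysis. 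The commutator $[r\overline{\partial}_r, r^2\overline{\partial}_r] = r^2\overline{\partial}_r$ generates lower-order terms that must each fit within the regularity budget, and the nonlinear corrections from expanding $e^{\log(-\gamma)}-1$ require applying the product rule to Step~3's expansion of $\log(-\gamma)$.
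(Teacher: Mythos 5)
Your argument is correct and rests on the same two inputs as the paper's proof: the transport equation for $(-\gamma)$ integrated from $\mathcal{I}$ using the gauge normalization $(-\gamma)|_{\mathcal{I}}=1$, and the boundedness of $(r^2\overline{\partial}_r)^{\le 2}(r\overline{\partial}_r)^{n}S^{m}\varphi$ from \cref{rsr-rdr-S-bound}, which is what makes $\lim_{r\to\infty}r^2\overline{\partial}_r\varphi$ exist with an $O(r^{-1})$ rate. The packaging differs slightly: the paper expands $(-\gamma)$ itself by two iterated integrations by parts in $K=r^2\overline{\partial}_r$, defining the error as a triple integral of $K^3(-\gamma)$ and reading off $\mathcal{F}_{(-\gamma)}$ from $\lim K^2(-\gamma)=-\lim(K\varphi)^2$, whereas you expand $\log(-\gamma)$ (whose transport equation has no $(-\gamma)$ factor on the right, so only a single integration is needed) and then exponentiate, absorbing the $O(r^{-4})$ quadratic correction into $\mathcal{E}_{(-\gamma)}$. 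Your route trades the paper's computation of $K^3(-\gamma)$ for a product/chain-rule analysis of $e^{\log(-\gamma)}-1-\log(-\gamma)$; the bookkeeping is comparable, and your normalization $\mathcal{F}_{(-\gamma)}=-\tfrac12(\lim K\varphi)^2$ is consistent with what the paper's displayed computation actually yields. The one point requiring care in both arguments is the exchange of $\lim_{r\to\infty}$ with $(u\overline{\partial}_u)^m$ when differentiating $\mathcal{F}_{(-\gamma)}$; the paper justifies this with the uniform-convergence fact invoked in \cref{gamma-bound-step-3}, and your final paragraph supplies exactly the uniform bound $|(u\overline{\partial}_u)^{m'}\overline{\partial}_r(r^2\overline{\partial}_r\varphi)|\lesssim r^{-2}$ needed to run that argument, so no gap remains.
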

\begin{proof}
\step{Step 1: Preliminary computations and the derivation of
\cref{gamma-expansion-equation}.} Write \(K = r^2\overline{\partial}_r\). Compute
\begin{align}
K(-\gamma{}) &= r^{-1}(K\varphi{})^2(-\gamma{}) \label{K-gamma}\\
K^2(-\gamma{}) &= -(K\varphi{})^2(-\gamma{}) + r^{-1}K^2\varphi{}\cdot K\varphi{}(-\gamma{}) + r^{-2}(K\varphi{})^4(-\gamma{}) \label{K2-gamma}\\
K^3(-\gamma{}) &= -3K^2\varphi{}\cdot K\varphi{}(-\gamma{}) - 3r^{-1}(K\varphi{})^4(-\gamma{}) + r^{-1}(K^2\varphi{})^2(-\gamma{}) + 5r^{-2}K^2\varphi{}\cdot (K\varphi{})^3(-\gamma{})\notag  \\
&\qquad + r^{-3}(K\varphi{})^6(-\gamma{}) + (r\overline{\partial}_r)K^2\varphi{}\cdot K\varphi{}(-\gamma{}) \label{K3-gamma}
\end{align}
Next, define
\begin{equation}
\mathcal{E}_{(-\gamma{})}(u,r) \coloneqq{} \int_r^\infty \rho{}^{-2}\int_\rho^\infty \rho{}'^{-2}\int_{\rho{}'}^\infty \rho{}''^{-2}K^3(-\gamma{})(u,\rho{}'')\dd{}\rho{}''\dd{}\rho{}'\dd{}\rho{},
\end{equation}
so that
\begin{equation}
\begin{split}
1 - (-\gamma{})(u,r) &= \int_r^\infty \rho{}^{-2}K(-\gamma{})(u,\rho{})\dd{}\rho{} = -r^{-1}\lim_{r\to \infty}K(-\gamma{})(u,r) - \int_r^\infty \rho{}^{-2}\int_\rho^\infty \rho{}'^{-2}K^2(-\gamma{})(u,\rho{}')\dd{}\rho{}'\dd{}\rho{}  \\
&= -r^{-1}\lim_{r\to \infty}K(-\gamma{})(u,r) - 2r^{-2}\lim_{r\to \infty}K^2(-\gamma{})(u,r) -\mathcal{E}_{(-\gamma{})}(u,r).
\end{split}
\end{equation}
From \cref{K-gamma,K2-gamma}, the boundedness statement of \cref{rsr-rdr-S-bound}, and the gauge condition \((-\gamma{})|_{\mathcal{I}} = 1\), we see that
\begin{equation}
\lim_{r\to \infty}K(-\gamma{})(u,r) = 0, \qquad \lim_{r\to \infty}K^2(-\gamma{})(u,r) = -\lim_{r\to \infty}(K\varphi{})^2(u,r)
\end{equation}
It follows that \cref{gamma-expansion-equation} holds for \(\mathcal{F}_{(-\gamma{})}(u) =
-2\lim_{r\to \infty}(K\varphi{})^2(u,r)\).

\step{Step 2: Proof of \cref{gamma-expansion-1}.} This is proved with a similar method to
\cref{omega-expansion-2}. The important points are that, by \cref{rsr-rdr-S-bound}, \(K(K\varphi{})^2\) is bounded
(which provides the integrability in \(r\) towards infinity, uniformly on
compact subsets of \(u\)), and \(K(u\overline{\partial}_u)^{\le m}\varphi{}\) is bounded (since we
can replace instances of \(u\overline{\partial}_u\) by \(S - r\overline{\partial}_r\)).

\step{Step 3: Proof of \cref{gamma-expansion-2}.} All terms on the right side of
\cref{K3-gamma} are bounded by \(C(n,m,\mathfrak{D}_{n+m+5})\) after applications
of \((r\overline{\partial}_r)^n(u\overline{\partial}_u)^m\), by
\cref{rsr-rdr-S-bound} (since we can replace instances of
\(u\overline{\partial}_u\) by \(S- r\overline{\partial}_r\)). It follows that
all terms in
\((r\overline{\partial}_r)^n(u\overline{\partial}_u)^m\mathcal{E}_{(-\gamma{})}\)
are bounded by \(r^{-3}\) (with the constant just mentioned).
\end{proof}
\subsection{Retrieving the assumptions of the work of Luk--Oh on late-time tails}
\label{sec:org1c48a1c}
\subsubsection{Notation of the work of Luk--Oh}
\label{sec:org1f31c77}
We define notation as in \cite[Sec.~2.1.1--2.1.3]{luk-oh-tails}. Define the following parameters:
\begin{itemize}
\item \(R_{\text{far}} = 5\Rc\),
\item \(M_c\in \Z_{\ge 0}\),
\item \(\delta{}_c = 1/2\)
\item \(J_c = 3\),
\item \(K_c = 0\),
\item \(\eta_c = 1\),
\item \(M_0 \in \Z_{\ge 0}\),
\item \(A_0 = C(\varpi_i,c_{\mathcal{H}},r_{\text{min}},M_0,\mathfrak{D}_{M_0 + 5})\),
\item \(\alpha_0 = 3/4\),
\item \(\nu_0 = 1/4\),
\item \(D = C(\mathfrak{D}_{M_0})\),
\item \(\alpha_d = 4\),
\item \(\delta_d\in (0,1]\),
\item \(J_d = 3\),
\item \(K_d = 0\),
\item \(\eta_d = 1\),
\end{itemize}
Let \(\mathbf{g} = g\) be the spacetime metric on \(\mathcal{M}\). We now define
Cartesian coordinates on \(\mathcal{M}\). Let \((\theta{},\phi{})\) be the
standard local coordinate system on \(S^2\). Set
\begin{equation}
\begin{split}
x^0 &= \chi{}(r)(2v-r) + (1-\chi{}(r))(2u + r)
\end{split}
\end{equation}
and \(x^{1 }= r\sin \theta{}\cos \phi{}\), \(x^2 = r\sin \theta{}\sin \phi{}\), and \(x^3 = r\cos \theta{}\). Set
\begin{equation}
\mathbf{T} \coloneqq{} \partial{}_{x^0},
\end{equation}
where \(\partial_{x^0}\) is defined with respect to the coordinates
\((x^0,x^1,x^2,x^3)\). Define
\begin{equation}
\bar{u} \coloneqq{} x^0 - r + 3R_{\text{far}}\qquad \conj{\tau{}}\coloneqq{}x^0 - \chi{}_{>4R_{\text{far}}}(r)(r - 3R_{\text{far}})
\end{equation}
for a cutoff function \(\chi_{>4R_{\text{far}}}(r)\) that is \(0\) when \(r\le
2R_{\text{far}}\) and \(1\) when \(r\ge 4R_{\text{far}}\). In particular, observe
that
\begin{equation}
\bar{u} = 2u + 3{R_{\text{far}}}\quad \text{in }\set{r\ge R_{\text{far}}},
\end{equation}
and
\begin{equation}
\mathbf{T}\conj{\tau{}} = 1, \quad \text{and }\conj{\tau{}} = \begin{cases}
x^0, & \text{in }\set{r\le 2R_{\text{far}}}, \\
\bar{u}, & \text{in }\set{r\ge 4R_{\text{far}}}.
\end{cases}
\end{equation}
Write \((\partial_\tau^{(\tau{})},\partial_r^{(\tau{})})\) for the coordinate derivatives in the
\((\tau{},r)\) coordinate system and
\((\underline{\partial{}}_{\bar{u}},\overline{\partial}_r)\) for the coordinate
derivatives in \((\bar{u},r)\)-coordinates. We will write \(\mathcal{M} =
\mathcal{M}_{\text{near}}\cup \mathcal{M}_{\text{med}}\cup
\mathcal{M}_{\text{wave}}\) for
\begin{equation}
\begin{split}
\mathcal{M}_{\text{near}} &\coloneqq{} \set{(u,r)\in \mathcal{M} : r\le 2R_{\text{far}}, \conj{\tau{}}\ge 1}, \\
\mathcal{M}_{\text{med}} &\coloneqq{} \set{(u,r)\in \mathcal{M} : R_{\text{far}}\le r\le 400\bar{u}, \conj{\tau{}}\ge 1}, \\
\mathcal{M}_{\text{wave}} &\coloneqq{} \set{(u,r)\in \mathcal{M} : r\ge 4\bar{u}, \conj{\tau{}}\ge 1}. \\
\end{split}
\end{equation}
and write \(\mathcal{M}_{\text{far}} \coloneqq{} \mathcal{M}_{\text{med}}\cup
\mathcal{M}_{\text{wave}}\).
\subsubsection{Global assumptions on the spacetime}
\label{sec:org81c26ba}
Assumption \(\text{(G1)}\) is clearly satisfied. Assumption \(\text{(G2)}\) is satisfied, since the only boundary component of
\(\mathcal{M}\) is the null event horizon. Finally, it is easy to check the causality condition \(\text{(G3)}\) based on the
definition of \(x^0\).
\subsubsection{Assumptions on the metric}
\label{sec:org1ea8a47}
Define the absolute value of a \((2,0)\)-tensor as in
\cite[Sec.~2.1.4]{luk-oh-tails}. A tedious but straightforward computation
involving the expression for a Lie derivative in local coordinates, the chain
rule, the definitions of our schematic geometric quantities, and the results of
\cref{sec:putting-it-all-together} yields the following expression for these
absolute values in terms of components in \((u,r)\) and \((r,v)\)-coordinates.
\begin{lemma}
Fix a function \(m(\conj{\tau{}},r)\). Let \(\mathbf{a}\) be \((2,0)\)-tensor
\begin{equation}
\mathbf{a} = \mathbf{a}^{ur}(\overline{\partial}_u\otimes \overline{\partial}_r + \overline{\partial}_r\otimes \overline{\partial}_u) + \mathbf{a}^{rr}\overline{\partial}_r\otimes \overline{\partial}_r + r^{-2} \slashed{\mathbf{a}}
\end{equation}
for \(\slashed{\mathbf{a}}^{AB}\) a \((2,0)\)-tensor on \(S^2\). Then in
\(\mathcal{M}_{\text{far}}\) we have
\begin{equation}
\abs{\mathcal{L}_{\tau{}\mathbf{T}}^n\mathcal{L}_{\langle{}r\rangle{}\partial{}_r^{(\tau{})}}^m\mathbf{a}}\lesssim_{n,m} C(\mathfrak{D}_{n+m})[\abs{(u\overline{\partial}_u)^{\le m}(r\overline{\partial}_r)^{\le n}\mathbf{a}^{ur}} + \abs{(u\overline{\partial}_u)^{\le m}(r\overline{\partial}_r)^{\le n}\mathbf{a}^{rr}} + r^{-2}\abs{\slashed{\mathbf{a}}^{AB}}].
\end{equation}
Similarly, if \(\mathbf{b}\) takes the form
\begin{equation}
\mathbf{b} = \mathbf{b}^{vr}(\underline{\partial}_v\otimes \underline{\partial}_r + \underline{\partial}_r\otimes \underline{\partial}_v) + \mathbf{b}^{rr}\underline{\partial}_r\otimes \underline{\partial}_r + r^{-2} \slashed{\mathbf{b}},
\end{equation}
then in \(\mathcal{M}_{\mathrm{near}}\) we have
\begin{equation}
\abs{\mathcal{L}_{\tau{}\mathbf{T}}^n\mathcal{L}_{\langle{}r\rangle{}\partial{}_r^{(\tau{})}}^m\mathbf{b}}\lesssim_{n,m} C(\mathfrak{D}_{n+m})\sum_{\substack{n',m'\ge 0 \\ n'+m' = n+m}}[\abs{S^{\le m'}D^{\le n'}\mathbf{b}^{vr}} + \abs{S^{\le m'}D^{\le n'}\mathbf{b}^{rr}} + r^{-2}\abs{\slashed{\mathbf{b}}^{AB}}.
\end{equation}
Finally, if \(a\) is a spherically symmetric smooth function, such that
\begin{equation}
\sum_{m+n\le N}\abs{S^m(r\overline{\partial}_r)^na}\lesssim m(\conj{\tau{}},r)\quad \text{in }\mathcal{M}_{\mathrm{far}}, \quad \text{and}\quad  \sum_{m+n\le N}\abs{S^mD^na}\lesssim m(\tau{},r)\quad \text{in }\mathcal{M}_{\mathrm{near}},
\end{equation}
then
\begin{equation}
a = O_{\mathbf{\Gamma{}}}^N(m(\tau{},r)),
\end{equation}
where the \(O_{\mathbf{\Gamma{}}}\) notation is defined as in \cite[Sec.~2.1.4]{luk-oh-tails}.
\label{tensor-absolute-value}
\end{lemma}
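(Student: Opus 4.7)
The plan is to reduce the abstract Lie-derivative bounds on the left to ordinary coordinate derivatives of the tensor components in the null-coordinate frames $(u,r)$ (in $\mathcal{M}_{\text{far}}$) and $(r,v)$ (in $\mathcal{M}_{\text{near}}$). First I would compute $\tau\mathbf{T}$ and $\langle r\rangle\partial_r^{(\tau)}$ in each of these frames. Since $x^0 = 2u + r$ on $\mathcal{M}_{\text{far}}$ and $\tau = \bar{u} = 2u + 3R_{\text{far}}$ there, direct calculation gives $\mathbf{T} = \tfrac{1}{2}\overline{\partial}_u$ and $\partial_r^{(\tau)} = \overline{\partial}_r$, so $\tau\mathbf{T}$ differs from $u\overline{\partial}_u$ by a bounded additive correction and $\langle r\rangle\partial_r^{(\tau)}$ is comparable to $r\overline{\partial}_r$. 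On $\mathcal{M}_{\text{near}}$, $x^0 = 2v - r$ gives $\mathbf{T} = \tfrac{1}{2}\underline{\partial}_v$ and $\partial_r^{(\tau)} = \underline{\partial}_r + \tfrac{1}{2}\underline{\partial}_v$; together with the boundedness of $r$ and \cref{v-u-r-compare}, these identities let me rewrite $\tau\mathbf{T}$ as $S$ plus a $\mathfrak{B}_0$-controlled combination of $D$-derivatives, and $\langle r\rangle\partial_r^{(\tau)}$ as a bounded combination of $D$-derivatives.

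For the two tensor claims I would then expand iterated Lie derivatives using the coordinate formula $(\mathcal{L}_X T)^{ab} = X(T^{ab}) - T^{cb}\partial_c X^a - T^{ac}\partial_c X^b$ in the frames above. The principal terms are exactly $(u\overline{\partial}_u)^i(r\overline{\partial}_r)^j \mathbf{a}^{ur}$ and $(u\overline{\partial}_u)^i(r\overline{\partial}_r)^j \mathbf{a}^{rr}$ in $\mathcal{M}_{\text{far}}$ (respectively $S^{m'}D^{n'}\mathbf{b}^{vr}$ and $S^{m'}D^{n'}\mathbf{b}^{rr}$ in $\mathcal{M}_{\text{near}}$), while every correction contains either a Jacobian of the coordinate change between the Cartesian-like and null frames, or a derivative thereof. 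Each such factor is controlled by $\mathfrak{B}_0$ or $r^{-1}\mathfrak{G}_0$ at zeroth order and by $\mathfrak{D}_{n+m}$ at higher orders using \cref{geometric-quantities-bound}. Spherical symmetry decouples the angular components, and the $r^2$ factor in $g = -\Omega^2 \,du\,dv + r^2 g_{\mathbf{S}^2}$ appearing squared in the natural tensor norm accounts for the $r^{-2}$ weight on $\slashed{\mathbf{a}}^{AB}$ and $\slashed{\mathbf{b}}^{AB}$.

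The third claim is then immediate: acting on a spherically symmetric $a$, the operator $(\tau\mathbf{T})^{\le m}(\langle r\rangle\partial_r^{(\tau)})^{\le n}$ becomes, with $\mathfrak{B}_0$-bounded coefficients, a sum of $(u\overline{\partial}_u)^{m'}(r\overline{\partial}_r)^{n'}a$ with $n'+m'\le n+m$ in $\mathcal{M}_{\text{far}}$, and a sum of $S^{m'}D^{n'}a$ with $n'+m'\le n+m$ in $\mathcal{M}_{\text{near}}$. Rewriting $u\overline{\partial}_u = S - r\overline{\partial}_r$ in the far region converts the first sum into one over $S^{m'}(r\overline{\partial}_r)^{n'}a$, which by hypothesis is bounded by $m(\conj{\tau{}},r)$. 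Combining the two regional bounds gives $a = O^N_{\mathbf{\Gamma}}(m(\conj{\tau{}},r))$ in the notation of \cite{luk-oh-tails}.

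The main obstacle is the combinatorial bookkeeping needed to keep the Jacobian corrections strictly lower order as the Lie derivatives are iterated $n+m$ times. The key observation, analogous to the reductive structure identified in Step 2 of \cref{sec:main-theorem-proof}, is that every commutator arising when $\mathcal{L}_{\tau\mathbf{T}}$ or $\mathcal{L}_{\langle r\rangle\partial_r^{(\tau)}}$ hits a Jacobian coefficient either produces a strictly lower-order derivative of the components or a coefficient already bounded by our schematic geometric quantities. Organising this reduction so that the final count of top-order derivatives does not exceed $n+m$, and that the $r^{-2}$ weight on spherical components is tracked correctly throughout, is the only substantive work in what the text calls a ``tedious but straightforward'' computation.
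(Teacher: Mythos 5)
Your proposal follows essentially the same route the paper intends: the paper offers no written proof of \cref{tensor-absolute-value}, describing it only as a ``tedious but straightforward computation'' with the coordinate expression for the Lie derivative, the chain rule, and the schematic geometric quantities, and your outline is a correct fleshing-out of exactly that computation. The only point to watch is the transition annulus $\Rc\le r\le 2R_{\mathrm{far}}$, where $x^0$ is not exactly $2v-r$ (resp.\ $2u+r$) and the cutoff derivatives enter; this is harmless because the near-region bound is stated in terms of general $S^{\le m'}D^{\le n'}$ combinations and the cutoff contributions are controlled by $\mathfrak{B}_0$, but it should be acknowledged rather than asserting the clean identities globally on $\mathcal{M}_{\mathrm{near}}$.
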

One computes
\begin{equation}\label{metric-far-expansion}
\mathbf{g}^{-1} = -\frac{1}{(-\gamma{})}(\overline{\partial}_{\bar{u}}\otimes \overline{\partial}_r + \overline{\partial}_r\otimes \overline{\partial}_{\bar{u}}) + (1-\mu{})\overline{\partial}_r\otimes \overline{\partial}_r + r^{-2}\mathring{\slashed{\gamma{}}}^{-1}
\end{equation}
and
\begin{equation}\label{metric-near-expansion}
\mathbf{g}^{-1} = \frac{1}{\kappa{}}(\underline{\partial{}}_r\otimes \underline{\partial{}}_v + \underline{\partial{}}_v\otimes \underline{\partial{}}_r) + (1-\mu{})\underline{\partial{}}_r\otimes \underline{\partial{}}_r + r^{-2}\mathring{\slashed{\gamma{}}}^{-1}.
\end{equation}
for \(\mathring{\slashed{\gamma{}}}\) the round metric on the unit sphere.

In \(\mathcal{M}_{\text{near}}\), by
\cref{Gamma-kappa-bounds,metric-near-expansion,tensor-absolute-value}, we have
\begin{equation}
\mathbf{g}^{-1} = O_{\mathbf{\Gamma{}}}^{M_c}(1),
\end{equation}
which satisfies assumption \((\mathbf{g}\mathbf{B}V1)\).

For \(\mathbf{m}^{-1}\) the inverse Minkowski metric, we have
\begin{equation}
\mathbf{g}^{-1} - \mathbf{m}^{-1} = (1 - (-\gamma{})^{-1})(\overline{\partial}_{\bar{u}}\otimes \overline{\partial}_r + \overline{\partial}_r\otimes \overline{\partial}_{\bar{u}}) - \Bigl(\frac{2\varpi{}}{r} - \frac{\mathbf{e}^2}{r^2}\Bigr)\overline{\partial}_r\otimes \overline{\partial}_r
\end{equation}
By \cref{gamma-expansion,omega-expansion,metric-far-expansion,tensor-absolute-value}, and a Taylor
expansion, we have
\begin{equation}
\mathbf{g}^{-1} - \mathbf{m}^{-1} = O_{\mathbf{\Gamma{}}}^{M_c}(r^{-1+\epsilon{}})\quad \text{in }\mathcal{M}_{\text{med}},
\end{equation}
where the implicit constant depends on \(\epsilon{}\), \(M_c\), and \(\mathfrak{D}_{M_c +
5}\). This satisfies assumption \((\mathbf{gB}V2)\) (with \(\delta_c = 1/2\)).

By \cref{gamma-expansion,omega-expansion} and Taylor expansion, in
\(\mathcal{M}_{\text{wave}}\) we have the asymptotics
\begin{equation}
\begin{split}
\mathbf{g}^{\conj{u}r} &= -1 + r^{-2}\mathring{\mathbf{h}}^{\conj{u}r}_{2,0}(u) + \varrho{}_{3}[\mathbf{h}^{\conj{u}r}], \\
\mathbf{g}^{rr} &= r^{-1}\mathring{\mathbf{h}}^{rr}_{1,0}(u) + \varrho{}_{2}[\mathbf{h}^{rr}], \\
r^2\mathbf{g}^{AB} &= \mathring{\slashed{\gamma{}}}^{AB},
\end{split}
\end{equation}
where the unlisted components vanish, and where, in the notation of \cite[Sec.~2.1.4]{luk-oh-tails},
\begin{equation}
\begin{split}
\mathring{\mathbf{h}}^{\conj{u}r}_{2,0} &=  O_{\mathbf{\Gamma{}}^{M_c}}(1), \quad \text{and}\quad \mathring{\mathbf{h}}^{rr}_{1,0} = 2\varpi{}|_{\mathcal{I}}(u) = O_{\mathbf{\Gamma{}}}(u^\epsilon{}),
\end{split}
\end{equation}
and
\begin{equation}
\varrho_{3}[\mathbf{h}^{\conj{u}r}] = O_{\mathbf{\Gamma{}}}^{M_c}(r^{-3}),\quad \text{and}\quad \varrho_{2}[\mathbf{h}^{rr}] =
\mathcal{E}_\varpi{} + r^{-2}\mathbf{e}^2 = O_{\mathbf{\Gamma{}}}^{M_c}(r^{-2}u^\epsilon{}).
\end{equation}
This satisfies assumption \((\mathbf{gB}V3)\) (with \(\eta_c = 1\) and \(\delta_c =
1/2\)).
\subsubsection{Stationary estimate}
\label{sec:org1daea2f}
The assumption \(\text{(SE1)}\) is trivially satisfied, since \(\mathcal{M}\)
has no timelike boundary components.

Define the inverse metric
\begin{equation}
^{(\infty)}\mathbf{g}^{-1} \coloneqq{} -(\overline{\partial}_{\bar{u}}\otimes \overline{\partial}_r + \overline{\partial}_r\otimes \overline{\partial}_{\bar{u}}) + \Bigl(1-\frac{\varpi{}_f}{r} + \frac{\mathbf{e}^2}{r^2}\Bigr)\overline{\partial}_r\otimes \overline{\partial}_r + r^{-2}\mathring{\slashed{\gamma{}}}^{-1}
\end{equation}
in \((\bar{u},r)\)-coordinates. Since \(\mathbf{T}r = 0\), we have the
stationarity property \(\mathcal{L}_{\mathbf{T}}(^{(\infty)}\mathbf{g}) = 0\).
The stationary estimate for the Reissner--Nordström metric
\(^{(\infty)}\mathbf{g}\) (with mass \(\varpi_f\) and charge \(\mathbf{e}\))
follows from the methods in \cite[Sec.~7]{ANGELOPOULOS2023108939} (see also the discussion in
\cite[Ex.~3.6]{luk-oh-tails}). As for the convergence of \(\mathbf{g}^{-1}\) to
\(^{(\infty)}\mathbf{g}^{-1}\), we have
\begin{equation}
\mathbf{g} - \,^{(\infty)}\mathbf{g}^{-1} \coloneqq{} (1 - (-\gamma{})^{-1})(\overline{\partial}_{\bar{u}}\otimes \overline{\partial}_r + \overline{\partial}_r\otimes \overline{\partial}_{\bar{u}}) + \frac{\varpi{}-\varpi{}_f}{r}\overline{\partial}_r\otimes \overline{\partial}_r.
\end{equation}
On \(\Sigma{}_{\bar{\tau{}}}\cap \set{r\ge R_{\text{far}}}\) (where \(\Sigma{}_{\bar{\tau{}}}\) is defined as
in \cite[Sec.~2.1.3]{luk-oh-tails}), the assumption \(\text{(SE2)}\) (with
\(\delta_c = 1/2\)) follows from
\cref{elliptic:rS-gamma-bound,elliptic:rS-omega-bound} (and an easy decay estimate
for \(\varpi{}-\varpi_f\) that follows from the energy decay established in
\cref{sec:energy-estimates}). The verification of assumption \(\text{(SE2)}\) in
the region \(\Sigma{}_{\tau{}}\cap \set{r\le R_{\text{far}}}\) is done in
\((v,r)\)-coordinates, and it is similar (but uses \cref{Gamma-kappa-bounds}). Thus
assumption \(\text{(SE2)}\) holds.
\subsubsection{Assumptions on the initial data}
\label{sec:org6a2987a}
The assumption \(\text{(D}_{\Sigma_1}\text{)}\) on the initial data is satisfied for
our compactly supported data (with \(D = C(\mathfrak{D}_{M_0})\)), for any
\(\alpha_d\in \R\) and \(\delta{}_d\in (0,1]\). Note that, although the data hypersurface \(\Sigma_1\)
leaves the region \(\mathcal{R}_{\text{char}}\), the scalar field vanishes in the region \(\Sigma_1 \cap \mathcal{R}_{\text{char}}^c\).
\subsubsection{Assumptions on the scalar field}
\label{sec:org3e165ee}
By the results of \cref{sec:putting-it-all-together} and \cref{tensor-absolute-value},
we have
\begin{equation}
\abs{S^mD^n\varphi{}}\lesssim v^{-1+\epsilon{}}\implies \varphi{} = O_{\mathbf{\Gamma{}}}^{M_0}(A_0\bar{\tau{}}^{-1+\epsilon{}})\quad \text{in }\mathcal{M}_{\text{near}}.
\end{equation}
By \cref{elliptic:pointwise-scalar-field,tensor-absolute-value}, we have
\begin{equation}
\abs{S^m(r\overline{\partial}_r)^n\varphi{}}\lesssim u^{-1+\epsilon{}}\implies \varphi{} = O_{\mathbf{\Gamma{}}}^{M_0}(A_0\bar{u}^{-1+\epsilon{}})\quad \text{in }\mathcal{M}_{\text{med}}.
\end{equation}
By \cref{elliptic:pointwise-scalar-field,tensor-absolute-value}, we have
\begin{equation}
\abs{S^m(r\overline{\partial}_r)^n\varphi{}}\lesssim r^{-1/2-\epsilon{}}u^{-1+2\epsilon{}}\implies \varphi{} = O_{\mathbf{\Gamma{}}}^{M_0}(A_0r^{-1/2-\epsilon{}}\bar{u}^{-1+2\epsilon{}})\quad \text{in }\mathcal{M}_{\text{wave}}.
\end{equation}
Taking \(\epsilon{} = 1/4\), the above three implications show that assumption
\(\text{(S)}\) is satisfied with \(\alpha_0 = 3/4\) and \(\nu_0 = 1/4\).
\subsection{Obtaining a late-time tails result from the work of Luk--Oh}
\label{sec:org5bf1f2a}
We begin by showing that the first higher radiation field vanishes (i.e. the
radiation field \(\Phi{}\) has no \(r^{-1}\) term in its expansion).
\begin{lemma}
In the notation of \cite[Sec.~2.3]{luk-oh-tails}, we have
\begin{equation}
\mathring{\Phi{}}_{1,0}(u) \equiv  0.
\end{equation}
\label{vanishing-radiation-field}
\end{lemma}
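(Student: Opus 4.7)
The plan is to identify $\mathring{\Phi}_{1,0}(u)$ with $-\lim_{v \to \infty}(r^2\overline{\partial}_r)(r\varphi)(u, v)$---the Newman--Penrose constant at retarded time $u$---and to show this limit vanishes for every $u \geq 1$ by integrating a $u$-transport equation and using that the data vanishes on $C^{\mathrm{out}}$.

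Specializing the identity derived in the proof of \cref{rsr-rdr-S-bound} to $\psi = \varphi$ (using $\Box \varphi = 0$), one obtains
\begin{equation*}
\partial_u\bigl((r^2\overline{\partial}_r)(r\varphi)\bigr) = 2(\varpi - \mathbf{e}^2/r)(-\gamma)(r\overline{\partial}_r)\varphi - \frac{2}{r}(-\nu)(r^2\overline{\partial}_r)(r\varphi).
\end{equation*}
I would integrate this in $u$ at fixed $v$ from $u = 1$ to $u$. The boundary value at $u = 1$ vanishes since $\varphi \equiv 0$ on $C^{\mathrm{out}}$ by the compact support assumption (see \cref{sec:reduction-to-characteristic}).

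Both source terms on the right decay like $O(r^{-1})$ uniformly as $v \to \infty$ with $u$ bounded. The prefactor $2(\varpi - \mathbf{e}^2/r)(-\gamma)$ is uniformly bounded by the results of \cref{sec:putting-it-all-together}, and $(r\overline{\partial}_r)\varphi = r^{-1}(r^2\overline{\partial}_r\varphi)$ is $O(r^{-1})$ using the bound $r^2\partial_v\varphi = O(1)$ from the pointwise norm $\mathcal{P}_{0,0}$ together with $\lambda$ bounded; similarly $(2/r)(-\nu) = O(r^{-1})$. Combining these with the monotonicity $r(u', v) \geq r(u, v)$ for $u' \leq u$ (from $\partial_u r = \nu < 0$), a Gr\"onwall argument applied to the integrated transport equation yields, for $r(u, v)$ sufficiently large depending on $u$,
\begin{equation*}
|(r^2\overline{\partial}_r)(r\varphi)(u, v)| \leq C(u)\, r(u, v)^{-1}.
\end{equation*}
Letting $v \to \infty$ at fixed $u$, $r(u, v) \to \infty$ and the limit vanishes, giving $\mathring{\Phi}_{1,0}(u) \equiv 0$.

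The main technical point will be handling the feedback term $-(2/r)(-\nu)(r^2\overline{\partial}_r)(r\varphi)$, which involves the very quantity being estimated. Its $r^{-1}$ coefficient, combined with the monotonicity of $r$ in $u$, makes this term subdominant for large $r$ and amenable to Gr\"onwall's inequality on the bounded $u$-interval $[1, u]$, after which taking $v \to \infty$ exchanges with the integral by dominated convergence.
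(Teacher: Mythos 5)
Your argument is correct and takes essentially the same route as the paper: both integrate a $u$-transport equation for an $r$-weighted outgoing derivative of $\Phi = r\varphi$ starting from $C^{\mathrm{out}}$ (where the data vanishes), apply Gr\"onwall's inequality using the monotonicity of $r$ in $u$, and conclude $\abs{\Phi - \mathring{\Phi}_0} \lesssim_u r^{-2}$, which forces $\mathring{\Phi}_{1,0}\equiv 0$. The paper works with the quantity $r^3\partial_v\Phi$ and bounds the source term via the $u$-decay of $\Phi$ rather than via the $r^{-2}$-decay of $\partial_v\varphi$ as you do, but this is only a difference of normalization, not of method.
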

\begin{proof}
Define \(\Phi{}\coloneqq{}r\varphi{}\). From \cref{dudv-rpsi}, we have
\begin{equation}
\partial{}_u\partial{}_v\Phi{} = - \frac{2(\varpi{}-\mathbf{e}^2/r)}{r^3}\kappa{}(-\nu{})\Phi{}\implies \partial{}_u(r^3\partial{}_v\Phi{}) + \frac{3}{r}(-\nu{})r^3\partial{}_v\Phi{} = -2(\varpi{}-\mathbf{e}^2/r)\kappa{}(-\nu{})\Phi{} = O(u^{-1/2+\epsilon{}}).
\end{equation}
By Grönwall's inequality and the vanishing of the data on \(C^{\text{out}}\), we have
\begin{equation}
\abs{r^3\partial{}_v\Phi{}}\lesssim u^{1/2+\epsilon{}}e^{Cu/r}.
\end{equation}
In particular, if we define \(\mathring{\Phi{}}_0(u) \coloneqq{} \lim_{v\to \infty}\Phi{}(u,v)\), then
(for \(r(u,v)\ge u\)) we have
\begin{equation}
\abs{\Phi{}(u,v) - \mathring{\Phi{}}_0(u)}\le \int_v^\infty \abs{\partial{}_v\Phi{}(u,v')}\dd{}v'\lesssim r^{-2}u^{1/2+\epsilon{}}.
\end{equation}
In particular, it follows that
\begin{equation}
\mathring{\Phi{}}_{1,0}(u) \coloneqq{} \lim_{v\to \infty} r(u,v)(\Phi{}(u,v) - \mathring{\Phi{}}_0(u))\equiv 0,
\end{equation}
as desired.
\end{proof}
\begin{theorem}[Sharp Price's law result]
There exists a universal large constant \(N\ge 0\) and small constants \(\epsilon{},\delta{} > 0\)
such that for every \(k\ge 0\), we have the following asymptotic along the event
horizon
\begin{equation}\label{tails-theorem-1}
\abs{(v\underline{\partial{}}_v)^k\varphi{}|_{\mathcal{H}}(v) - C_k\mathfrak{L}v^{-3}} = O(v^{-3-\delta{}}),
\end{equation}
where the implicit constant depends on \(\varpi_i\), \(c_{\mathcal{H}}\),
\(r_{\mathrm{min}}\), \(k\), and \(\mathfrak{D}_{\max(N,\epsilon^{-1}k)}\). Moreover,
for \(0\le k\le 2\), we have
\begin{equation}\label{tails-theorem-2}
\abs{(\bar{v}\partial{}_{\bar{v}})^k\varphi{}|_{\mathcal{H}}(v) - C_k\mathfrak{L}\bar{v}^{-3}} = O(\bar{v}^{-3-\delta{}}),
\end{equation}
where \(\bar{v}\) and \(\partial_{\bar{v}}\) are the coordinate and associated
derivative in the Eddington--Finkelstein-type gauge (see
\cref{sec:coordinate-systems}), and where the implied constant depends on the same
parameters as in \cref{tails-theorem-1}.
\label{tails-theorem-precise}
\end{theorem}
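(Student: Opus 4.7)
The plan is to deduce \cref{tails-theorem-precise} by invoking Main Theorem 4 of \cite{luk-oh-tails}, whose hypotheses have been systematically verified in \cref{sec:spacetime-elliptic-estimates}--\cref{sec:rdr-derivatives-of-scalar-field-and-geometry} and then checked against the Luk--Oh formalism in \cref{sec:late-time-tails}, combined with the vanishing of the first higher radiation field established in \cref{vanishing-radiation-field}. The global causality assumptions are immediate from the structure of the spacetime. The metric expansion assumptions $(\mathbf{gB}V1)$--$(\mathbf{gB}V3)$ follow from the three-term polyhomogeneous expansions in \cref{omega-expansion,gamma-expansion}, transferred to Lie-derivative norms via \cref{tensor-absolute-value} and combined with \cref{elliptic:rS-omega-bound,elliptic:rS-gamma-bound,Gamma-kappa-bounds}. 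The stationary estimate $(\mathrm{SE}2)$ reduces to the classical Reissner--Nordström analysis of \cite{ANGELOPOULOS2023108939} via the convergence $\varpi \to \varpi_f$ and $(-\gamma) \to 1$ at $\mathcal{I}^+$. The data hypothesis $(\mathrm{D}_{\Sigma_1})$ is immediate from compact support and the domain of dependence, and the scalar field assumption $(\mathrm{S})$ is supplied by \cref{elliptic:pointwise-scalar-field}. A direct appeal to \cite[Main Theorem 4]{luk-oh-tails} then produces
\begin{equation*}
\abs{(\conj{\tau}\mathbf{T})^k\varphi|_{\mathcal{H}}(\conj{\tau}) - C_k'\,\mathfrak{L}\,\conj{\tau}^{-3}} = O(\conj{\tau}^{-3-\delta})
\end{equation*}
for explicit nonzero constants $C_k'$ and a small $\delta > 0$, where $\mathfrak{L} = \mathfrak{L}[\varphi]$ is the explicit linear functional of characteristic data identified by the Luk--Oh framework; the leading behavior $\conj{\tau}^{-3}$ (rather than $\conj{\tau}^{-2}$) is ensured by the vanishing of $\mathring{\Phi}_{1,0}$ in \cref{vanishing-radiation-field}.

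The first translation step converts this into \cref{tails-theorem-1}. By the construction of $x^0$ in \cref{sec:late-time-tails} we have $x^0 = 2v - r$ near the horizon, so $\mathbf{T}|_{\mathcal{H}} = \tfrac12\underline{\partial}_v|_{\mathcal{H}}$ and $\conj{\tau}|_{\mathcal{H}} = 2v - r_{\mathcal{H}}$. Consequently $\conj{\tau}\mathbf{T}|_{\mathcal{H}} = v\underline{\partial}_v - \tfrac{r_{\mathcal{H}}}{2}\underline{\partial}_v$, which differs from $v\underline{\partial}_v$ by a bounded lower-order operator; iterating $k$ times and using $[v\underline{\partial}_v,\underline{\partial}_v] = -\underline{\partial}_v$ expresses $(\conj{\tau}\mathbf{T})^k|_{\mathcal{H}}$ as $(v\underline{\partial}_v)^k|_{\mathcal{H}}$ times an explicit constant (absorbed into $C_k$), plus an $O(v^{-1})$ multiple of lower-order derivatives that produce an $O(v^{-3-\delta})$ error (with possibly reduced $\delta$).

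For \cref{tails-theorem-2}, the second translation step follows the template of \cref{sec:Stovdv-translation}. Using \cref{S-vdv,S-vdv-2,S-vdv-3} together with the identity $S|_{\mathcal{H}} = v\underline{\partial}_v|_{\mathcal{H}}$, the difference $(\bar{v}\partial_{\bar{v}})^k\varphi|_{\mathcal{H}} - S^k\varphi|_{\mathcal{H}}$ decomposes schematically into a sum of terms, each containing at least one factor of $\rho = 1 - \bar{v}/(v\kappa|_{\mathcal{H}})$ or $v\lambda|_{\mathcal{H}}$, multiplied by mixed $U$- and $S$-derivatives of $\varphi|_{\mathcal{H}}$ of order $\le k$ and by bounded derivatives of $v\lambda|_{\mathcal{H}}$. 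By \cref{S-rho-bound,lambda-horizon-bound,S-lambda-horizon-decay} each such prefactor decays at least like $v^{-1}$ (indeed $\rho = O(v^{-1})$ and $v\lambda|_{\mathcal{H}} = O(v^{-3+\epsilon})$), while \cref{tails-theorem-1} combined with pointwise $U$-derivative estimates gives $v^{-3}$ decay for the $\varphi$-factor. Putting these together with $\bar{v} = v + O(1)$ (from the proof of \cref{S-vdv-comparison}), every error term is $O(v^{-3-\delta})$. The restriction $k \le 2$ arises because, for $k \ge 3$, the schematic expansion (e.g.\ \cref{S-vdv-4}) contains factors like $U^j S^{k-j}(v\lambda)|_{\mathcal{H}}$ with $j \ge 2$, which we do not expect to decay in line with the heuristic of \cref{range-of-k}.

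The main obstacle is not any of these translation steps -- which are essentially bookkeeping given the work we have already done -- but rather the earlier verification of assumption $(\mathbf{gB}V3)$, which demands a three-term polyhomogeneous expansion of $\mathbf{g}^{-1}$ at $\mathcal{I}^+$ that is uniformly differentiable in both $r\overline{\partial}_r$ and $u\overline{\partial}_u$. This required the full $(r\overline{\partial}_r)$-derivative machinery of \cref{sec:rdr-derivatives-of-scalar-field-and-geometry}, whose backbone is the spacetime elliptic estimate \cref{spacetime-elliptic-estimate} together with the Klainerman--Sobolev inequality \cref{klainerman-sobolev}, jointly converting the $L^2$ control of $S$-derivatives of $\varphi$ established earlier in the paper into $L^\infty$ control of arbitrarily many $(r\overline{\partial}_r)$-derivatives, coupled to corresponding estimates for $\varpi$ and $(-\gamma)$.
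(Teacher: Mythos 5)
Your proposal is correct and follows essentially the same route as the paper: verify the hypotheses of \cite[Main Theorem 4]{luk-oh-tails} via the $(r\overline{\partial}_r)$-derivative estimates and the vanishing of $\mathring{\Phi}_{1,0}$ from \cref{vanishing-radiation-field}, apply that theorem to obtain the $\conj{\tau}^{-3}$ asymptotic with $O_{\mathbf{\Gamma}}$ error, and translate into the $(v\underline{\partial}_v)^k$ and $(\bar{v}\partial_{\bar{v}})^k$ statements exactly as in \cref{S-vdv-comparison}. The only cosmetic difference is that the paper derives \cref{tails-theorem-1} by restricting to $\set{r\le R_0}$, Taylor-expanding $\conj{\tau}^{-3}=(2v-r)^{-3}$, and differentiating the resulting asymptotic, rather than by commuting $(\conj{\tau}\mathbf{T})^k$ against $(v\underline{\partial}_v)^k$ directly; the bookkeeping is equivalent.
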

\begin{proof}
Define \(J_{\mathfrak{f}}\coloneqq{}2\) and \(\alpha_{\mathfrak{f}} \coloneqq{} 3\). Then \cite[Eq.~(2.83)]{luk-oh-tails} holds. Evidently, the asymptotic spatial profile
defined in \cite[Eq.~(2.90)]{luk-oh-tails} is identically equal to \(1\). In
spherical symmetry and with \(K_1 = 0\), we have
\begin{equation}
\stackrel{\mathclap{\bullet{}}}{\Phi{}}_{(0)1,0}\!(u) = \,\stackrel{\mathclap{\bullet{}}}{\Phi{}}_{1,0}\!(u) = \mathring{\Phi{}}_{1,0}(u),
\end{equation}
so \cref{vanishing-radiation-field} implies \cite[Eq.~(2.73)]{luk-oh-tails}. In the
previous section we showed that all the main assumptions of \cite{luk-oh-tails}
are satisfied. An appeal to \cite[Main Theorem 4]{luk-oh-tails} shows that there
exists \(\epsilon{} > 0\) small, \(M\ge 0\) large, and an explicit constant \(C\neq{}0\) such
that for \(M_0,M_c\ge M\) and \(\delta{}>0\) sufficiently small, we have
\begin{equation}
\varphi{}(\conj{\tau{}},r) = C\mathfrak{L}\conj{\tau{}}^{-3} + O_{\mathbf{\Gamma{}}}^{\floor{\epsilon{}M}}((D+A_0)\conj{\tau{}}^{-3-\delta{}})\quad \text{in }\set{r\le \conj{\tau{}}^{1-2\delta{}}},
\end{equation}
where \(\mathfrak{L} = \int _{\mathcal{I}} 2\varpi{}|_{\mathcal{I}}(u)\varphi{}(u)\dd{}u\) is non-zero
for solutions arising from generic Cauchy data, by \cite[Thm.~4.3]{luk-oh-scc2} (to
relate this expression for \(\mathfrak{L}\) to the notion of final asymptotic
charge in \cite[Sec.~2.r5]{luk-oh-tails}, one integrates the recurrence equations
in \cite[Sec.~2.3]{luk-oh-tails}). Specializing to a region \(\set{r\le R_0}\), where
\(\conj{\tau{}} = 2v - r\), and performing a Taylor expansion, we get
\begin{equation}\label{phi-asymptotic}
\varphi{}(r,v) = C\mathfrak{L}v^{-3} + O_{\mathbf{\Gamma{}}}^{\floor{\epsilon{}M}}(v^{-3-\delta{}}),
\end{equation}
where the implicit constant depends on \(\varpi_i\), \(c_{\mathcal{H}}\),
\(r_{\text{min}}\), and \(\mathfrak{D}_M\). Differentiating this asymptotic
yields \cref{tails-theorem-1}.

For the result in the Eddington--Finkelstein-type gauge, one notes that the
argument of \cref{S-vdv-comparison} (see also
\cref{S-lambda-horizon-decay,lambda-horizon-bound}) provides, on the event horizon,
\begin{equation}
\begin{split}
S - \bar{v}\partial_{\bar{v}} &=  \rho{}S + (1-\rho{})v\lambda{}U \\
S^2 - (\bar{v}\partial{}_{\bar{v}})^2 &=_{\mathrm{s}} \sum_{\substack{a,b,c\le 1 \\ a + b + c\ge 1}}(S^{\le 1}\rho{})^{a}(S^{\le 1}(v\lambda{}))^{b}(v\lambda{}U(v\lambda{}))^{c}\set{U,S}^{1\le 2}, \\
\end{split}
\end{equation}
for some quantity \(\rho{}\), together with the estimates
\begin{equation}
\abs{S^{\le 1}\rho{}}\lesssim v^{-1},\qquad \abs{S^{\le 1}(v\lambda{})}\lesssim v^{-1},\qquad \abs{v\lambda{}U(v\lambda{})}\lesssim v^{-1}.
\end{equation}
The upshot is that for \(0\le k\le 2\),
\begin{equation}
\abs{S^k\varphi{} - (\bar{v}\partial{}_{\bar{v}})^k\varphi{}}\lesssim v^{-1}\abs{\set{U,S}^{1\le 2}\varphi{}}\lesssim v^{-4},
\end{equation}
where the final estimate is due to \cref{phi-asymptotic}. Noting that \(S =
v\underline{\partial{}}_v\) on the horizon, \cref{tails-theorem-2} now follows from
the above equation and \cref{tails-theorem-1}.
\end{proof}
\appendix
\section{Proofs of vector field commutator calculations}
\label{sec:org5969797}
\label{sec:vf-commutator-proofs}
\begin{proof}[Proof of \cref{V-dv-coordinate-change}]
From
\begin{equation}\label{V-dv-U-expression}
V = (\chi{} + \lambda{}^{-1}(1-\chi{}))\partial{}_v + \chi{}\lambda{}U,
\end{equation}
we obtain \cref{V-dv-coordinate-change-1} and
\begin{equation}
V - \overline{\partial}_r = \chi{}(\lambda{}-1)\overline{\partial}_r + \chi{}\lambda{}U,
\end{equation}
which gives \cref{V-dr-coordinate-change}. From \cref{V-dv-U-expression}, one obtains \cref{dv-in-terms-of-V-and-U}:
\begin{equation}
\partial{}_v = F^{-1}V - F^{-1}\chi{}\lambda{}U\qquad \text{for }F = (\chi{} + (1-\chi{})/\lambda{}) =_{\mathrm{s}} \mathfrak{b}_0.
\end{equation}
\end{proof}
\begin{proof}[Proof of \cref{S-in-terms-of-U-V}]
One obtains \cref{S-in-terms-of-U-V} by computing
\begin{equation}
S = \chi{}(v\partial{}_v + v\lambda{}U) + (1-\chi{})((r + u(-\nu{}))\overline{\partial}_r + u(-\nu{})U)
\end{equation}
and using \cref{dv-in-terms-of-V-and-U}.
\end{proof}
\begin{proof}[Proof of \cref{dr-comm-1}]
Use the transport equations for \((-\nu{})\) and \(\lambda{}\) to compute
\begin{equation}\label{app:dr-U-comm}
[\overline{\partial}_r,U] =_{\mathrm{s}} r^{-2}\set{1,\varpi{},(1-\mu{})^{-1}}[\overline{\partial}_r + U],
\end{equation}
and then use \(\mathbf{1}_{r\ge R_0}(1-\mu{})^{-1}=_{\mathrm{s}}\mathfrak{b}_{0}\) and \(\varpi{}=_{\mathrm{s}}
\mathfrak{g}_0\) to get \cref{dr-U-commutation}.

We have
\begin{equation}\label{dr-dv-commutation}
[\overline{\partial}_r,\underline{\partial{}}_v] = [\overline{\partial}_r,\partial_v] + [\overline{\partial}_r,\lambda{}U] = \lambda^{-2}\partial{}_v\lambda{}\partial{}_v + \lambda{}^{-1}\partial{}_v\lambda{}U + \lambda{}[\partial{}_r,U].
\end{equation}
It follows that
\begin{equation}
[\overline{\partial}_r,V] = \chi{}'(\underline{\partial{}}_v - \overline{\partial}_r) + \chi{}\lambda{}^{-2}\partial{}_v\lambda{}\partial{}_v + \chi{}\lambda{}^{-1}\partial{}_v\lambda{}U + \chi{}\lambda{}[\overline{\partial}_r,U].
\end{equation}
Now use \cref{dv-in-terms-of-V-and-U}, write \(\underline{\partial{}}_v = \lambda{}[\overline{\partial}_r +
U]\), and use \cref{dr-U-commutation} to obtain \cref{dr-comm-1-V}.

Now we compute \([\overline{\partial}_r,S]\). First, compute
\begin{equation}\label{dr-S-commutation-prep}
\begin{split}
[\overline{\partial}_r,S] &= \chi{}'(v\underline{\partial{}}_v - (r\overline{\partial}_r + u\overline{\partial}_u)) + \chi{}[\overline{\partial}_r,v\underline{\partial{}}_v] + (1-\chi{})\overline{\partial}_r \\
&= \underbrace{\chi{}'(v\underline{\partial{}}_v - (r\overline{\partial}_r + u\overline{\partial}_u))}_{\coloneqq{}\text{(I)}} + \underbrace{\chi{}v\overline{\partial}_r\Bigl(\frac{\kappa{}}{(-\gamma{})}\Bigr)\underline{\partial{}}_v}_{\coloneqq{}\text{(II)}} + (1-\chi{})\overline{\partial}_r \\
\end{split}
\end{equation}
We now treat term \(\text{(I)}\). Observe that
\begin{equation}\label{dover-in-terms-of-dunder}
\overline{\partial}_r = \lambda{}^{-1}\underline{\partial{}}_v - U\qquad \overline{\partial}_u = \frac{(-\gamma{})}{\kappa{}}\underline{\partial{}}_v.
\end{equation}
From this we claim that
\begin{equation}\label{v-u-r-vector-field}
\text{(I)} =_{\mathrm{s}} \mathbf{1}_{\Rc\le r\le 2\Rc}\mathfrak{B}_0D
\end{equation}
Indeed, we have
\begin{equation}
\begin{split}
\text{(I)} &= \chi{}'\Bigl(v - u \frac{(-\gamma{})}{\kappa{}} - \frac{r}{\lambda{}}\Bigr)\underline{\partial{}}_v + \chi{}'rU \\
&= \chi{}'(v - u - r)\underline{\partial{}}_v + \chi{}'(\frac{u}{\kappa{}}(\kappa{} - (-\gamma{})) +r(1-\lambda{}^{-1})) + \chi{}'rU \\
&=_{\mathrm{s}} \mathbf{1}_{\Rc\le r\le 2\Rc}\mathfrak{B}_0\underline{\partial{}}_v + \mathbf{1}_{\Rc\le r\le 2\Rc}U,
\end{split}
\end{equation}
and using \(\underline{\partial{}}_v = \partial_v + \lambda{}U\) and \cref{dv-in-terms-of-V-and-U} gives
\cref{v-u-r-vector-field}. For term \(\text{(II)}\), write
\begin{equation}
\text{(II)} = \chi{}v\overline{\partial}_r\Bigl(\frac{\kappa{}}{(-\gamma{})}\Bigr)\underline{\partial{}}_v =_{\mathrm{s}} \mathfrak{b}_0(\chi{}(-\gamma{})^{-1}vD\kappa{} + \chi{}\kappa{}(-\gamma{})^{-2}vD(-\gamma{}))D,
\end{equation}
so that
\begin{equation}\label{dr-S-comm-II}
\mathbf{1}_{r\ge R_0}\text{(II)}=_{\mathrm{s}} \mathbf{1}_{R_0\le r\le 2\Rc}\mathfrak{B}_VD.
\end{equation}
Combine \cref{v-u-r-vector-field,dr-S-comm-II} with \cref{dr-S-commutation-prep} to
conclude \cref{dr-S-comm}.
\end{proof}
\begin{proof}[Proof of \cref{D-comm-1}]
First we compute \([U,V]\). Since \(U\) and \(\underline{\partial{}}_v\) commute, we have
\begin{equation}\label{UV-comm-prep}
[U,V] = \chi{}'(\overline{\partial}_r-\underline{\partial{}}_v) + (1-\chi{})[U,\overline{\partial}_r].
\end{equation}
Use \cref{dr-U-commutation,dv-in-terms-of-V-and-U} to conclude.

Next, we compute \([U,S]\). Since \(U\) and \(v\underline{\partial{}}_v\) commute, we have
\begin{equation}\label{U-S-commutation-prep}
[U,S] = -\chi{}'(v\underline{\partial{}}_v - (r\overline{\partial}_r + u\overline{\partial}_{u})) + (1-\chi{})[U,r\overline{\partial}_r + u\overline{\partial}_u].
\end{equation}
As shown in \cref{v-u-r-vector-field}, the first term is of the schematic form
\(\mathbf{1}_{\Rc\le r\le 2\Rc}\mathfrak{B}_0D\). To treat the second term, we
compute using \cref{dover-in-terms-of-dunder} that
\begin{equation}
[U,\overline{\partial}_r] = -\frac{U\lambda{}}{\lambda{}^2}\underline{\partial{}}_v =_{\mathrm{s}} r^{-2}\set{1,\varpi{},(1-\mu{})^{-1}}\underline{\partial{}}_v\qquad [U,\overline{\partial}_u] = U\Bigl(\frac{(-\gamma{})}{\kappa{}}\Bigr)\underline{\partial{}}_v,
\end{equation}
which implies
\begin{equation}\label{U-rdr-udu-comm}
\begin{split}
[U,r\overline{\partial}_r + u\overline{\partial}_u] &= -\overline{\partial}_r + \frac{1}{(-\nu{})}\overline{\partial}_u +  \Bigl(- r\frac{U\lambda{}}{\lambda{}^2} + uU\Bigl(\frac{(-\gamma{})}{\kappa{}}\Bigr)\Bigr)\underline{\partial{}}_v.
\end{split}
\end{equation}
Since \(-\overline{\partial}_r + (-\nu{})^{-1}\overline{\partial}_u = U\) and the coefficient of the
final term consists has the structure \(r^{-1}\mathfrak{g}_0 +
r^{-1}\mathfrak{G}_U\), the desired \cref{U-S-comm-1} follows from \cref{U-S-commutation-prep}.

Now we compute \([V,S]\). Use \cref{dover-in-terms-of-dunder} to obtain the
following identities:
\begin{align}
[\underline{\partial{}}_v,v\underline{\partial{}}_v] &= \underline{\partial{}}_v, \\
[\underline{\partial{}}_v,r\overline{\partial}_r + u\overline{\partial}_u] &= \Bigl(r\underline{\partial{}}_v\lambda{}^{-1} + u\underline{\partial{}}_v\Bigl(\frac{(-\gamma{})}{\kappa{}}\Bigr) + 1\Bigr)\underline{\partial{}}_v, \\
[\overline{\partial}_r,v\underline{\partial{}}_v] &= v\overline{\partial}_r\Bigl(\frac{\kappa{}}{(-\gamma{})}\Bigr)\overline{\partial}_u,\\
[\overline{\partial}_r,r\overline{\partial}_r + u\overline{\partial}_u] &= \overline{\partial}_r.
\end{align}
It follows that
\begin{equation}
\chi{}[\underline{\partial{}}_v,S] - \chi{}^2\underline{\partial{}}_v = \chi{}(1-\chi{})\Bigl(r\underline{\partial{}}_v\lambda{}^{-1} + u\underline{\partial{}}_v\Bigl(\frac{(-\gamma{})}{\kappa{}}\Bigr) + 1\Bigr)\underline{\partial{}}_v =_{\mathrm{s}} \mathbf{1}_{\Rc\le r\le 2\Rc}\mathfrak{B}_VD.
\end{equation}
and
\begin{equation}
(1-\chi{})[\overline{\partial}_r,S] - (1-\chi{})^2\overline{\partial}_r = \chi{}'(1-\chi{})(v\underline{\partial{}}_v - (r\overline{\partial}_r + u\overline{\partial}_{u})) + \chi{}(1-\chi{})v\overline{\partial}_{r}\Bigl(\frac{\kappa{}}{(-\gamma{})}\Bigr)\overline{\partial}_u =_{\mathrm{s}} \mathbf{1}_{\Rc\le r\le 2\Rc}\mathfrak{B}_VD
\end{equation}
To conclude, note that
\begin{equation}
[V,S] - \chi{}^2\underline{\partial{}}_v - (1-\chi{})^2\overline{\partial}_r =_{\mathrm{s}} \mathbf{1}_{\Rc\le r\le 2\Rc}\mathfrak{b}_0D.
\end{equation}
and
\begin{equation}
\chi{}^2\underline{\partial{}}_v + (1-\chi{})^2\overline{\partial}_r - V =_{\mathrm{s}} \mathbf{1}_{\Rc\le r\le 2\Rc}\mathfrak{b}_0D.
\end{equation}
\end{proof}
\begin{proof}[Proof of \cref{D-comm}]
The strategy is to use an induction argument to simultaneously prove the
following two estimates
\begin{align}
[U,L] &=_{\mathrm{s}} \mathcal{O}(\mathfrak{B}_{\alpha{}-S+V},r^{-1}\mathfrak{G}_{\alpha{}-S+U})[U\Gamma{}^{\le \alpha{}-S} + r^{-2}\mathfrak{g}_{\alpha{}-V}D\Gamma{}^{\le \alpha{}-V} + r^{-1}\mathfrak{G}_{\alpha{}-S+U}D\Gamma{}^{\le \alpha{}-S}] \label{U-comm} \\
[V,L] &=_{\mathrm{s}} V\Gamma{}^{\le \alpha{}-S} + \mathcal{O}(\mathfrak{B}_{\alpha{}-S+V},r^{-1}\mathfrak{G}_{\alpha{}-S+U})r^{-2}D\Gamma{}^{\le \alpha{}-U}.  \label{V-comm}
\end{align}
We then prove \cref{V2-comm} assuming \cref{U-comm,V-comm}. Observe that
\cref{U-comm-weak,V-comm-weak,D-comm-weak} follow immediately from
\cref{U-comm,V-comm} and the definition of \(\mathcal{C}_\alpha{}\) (see
\cref{C-alpha-def}). To streamline the notation, we schematically write
``\(\text{good}^U_\beta{}\)'' (resp. ``\(\text{good}^V_\beta{}\)'' and
``\(\text{good}^{V^2}_\beta{}\)'') for terms appearing on the right side of
\cref{U-comm} (resp. \cref{V-comm,V2-comm}) for the multi-index \(\beta{}\). Observe
that \(\text{good}^U_\beta{} =_{\mathrm{s}}\text{good}^U_\alpha{}\) whenever \(\beta{}\le
\alpha{}\) (and the analogous statements hold for the schematic quantities
``\(\text{good}^V_\beta{}\)'' and ``\(\text{good}^{V^2}_\beta{}\)'').

\step{Step 1: Proof of \cref{U-comm,V-comm}.} Both sides of both equations vanish when
\(\abs{\alpha{}} = 0\). when \(\abs{\alpha{}} = 1\), both statements follow from
\cref{D-comm-1}.

Now suppose inductively that \cref{U-comm,V-comm} hold for multi-indices \(\le
\alpha{}\) and \(\le \alpha{}'\) (with \(\abs{\alpha{}},\abs{\alpha{}'}\ge 1\)).
We will show that they hold for multi-indices \(\le \alpha{} + \alpha{}'\).
Before proceeding, we note that \cref{U-comm,V-comm} for multi-indices \(\le
\alpha{}\) imply
\begin{equation}\label{D-comm-ind-consequence}
[D,\Gamma{}^{\le \alpha{}}] =_{\mathrm{s}} \mathcal{O}(\mathfrak{B}_{\alpha{}-S+V},r^{-1}\mathfrak{G}_{\alpha{}-S+U})D\Gamma{}^{\le \alpha{}-U}.
\end{equation}

\step{Step 1a: Proof of \cref{U-comm}.} The induction hypothesis for \cref{U-comm} with multi-indices \(\le \alpha{}\) implies
\begin{equation}\label{U-comm-ind-1}
\mathcal{O}(\mathfrak{B}_{\alpha{}+\alpha{}'-S+V},r^{-1}\mathfrak{G}_{\alpha{}+\alpha{}'-S+U})[U,\Gamma^{\le \alpha{}}]\Gamma^{\le \alpha{}'} =_{\mathrm{s}} \text{good}^U_{\alpha{}+\alpha{}'}
\end{equation}
We now use \cref{U-comm} for multi-indices \(\le \alpha{}'\) to compute
\begin{equation}\label{U-comm-ind-2}
\begin{split}
\Gamma{}^{\le \alpha{}}[U,\Gamma{}^{\le \alpha{}'}] &=_{\mathrm{s}} \mathcal{O}(\mathfrak{B}_{\alpha{}+\alpha{}'-S+V},r^{-1}\mathfrak{G}_{\alpha{}+\alpha{}'-S+U})[\Gamma{}^{\le \alpha{}}U\Gamma{}^{\le \alpha{}'-S} + r^{-2}\mathfrak{g}_{\alpha{}+\alpha{}'-V}\Gamma{}^{\le \alpha{}}D\Gamma{}^{\le \alpha{}'-V}  \\
&\qquad + r^{-1}\mathfrak{g}_{\alpha{}+\alpha{}'-S+U}\Gamma{}^{\le \alpha{}}D\Gamma{}^{\le \alpha{}'-S}] \\
&=_{\mathrm{s}} \mathcal{O}(\mathfrak{B}_{\alpha{}+\alpha{}'-S+V},r^{-1}\mathfrak{G}_{\alpha{}+\alpha{}'-S+U})[[U,\Gamma{}^{\le \alpha{}}]\Gamma{}^{\le \alpha{}'-S} + r^{-2}\mathfrak{g}_{\alpha{}+\alpha{}'-V}[D,\Gamma{}^{\le \alpha{}}]\Gamma{}^{\le \alpha{}'-V} \\
&\qquad + r^{-1}\mathfrak{g}_{\alpha{}+\alpha{}'-S+U}[D,\Gamma{}^{\le \alpha{}}]\Gamma{}^{\le \alpha{}'-S}] + \text{good}^U_{\alpha{}+\alpha{}'} \\
&=_{\mathrm{s}} \text{good}^U_{\alpha{}+\alpha{}'}.
\end{split}
\end{equation}
In passing to the last line, we noted that the first term in square brackets on
the second line is good by \cref{U-comm-ind-1}, and the other two terms are good by
\cref{D-comm-ind-consequence}. Summing \cref{U-comm-ind-1,U-comm-ind-2} concludes the
proof of \cref{U-comm} for multi-indices \(\le \alpha{} + \alpha{}'\).

\step{Step 1b: Proof of \cref{V-comm}.} The induction hypothesis for \cref{V-comm} with
multi-indices \(\le \alpha{}\) implies
\begin{equation}\label{V-comm-ind-1}
[V,\Gamma^{\le \alpha{}}]\Gamma^{\le \alpha{}'} =_{\mathrm{s}} \text{good}^V_{\alpha{}+\alpha{}'}.
\end{equation}
We now use \cref{V-comm} for multi-indices \(\le \alpha{}'\) and
\cref{V-comm-ind-1,D-comm-ind-consequence} for multi-indices \(\le \alpha{}\) to compute
\begin{equation}\label{V-comm-ind-2}
\begin{split}
\Gamma^{\le \alpha{}}[V,\Gamma^{\le \alpha{}'}] &=_{\mathrm{s}} \Gamma{}^{\le \alpha{}}V\Gamma{}^{\le \alpha{}-S} + \mathcal{O}(\mathfrak{B}_{\alpha{}+\alpha{}'-S+V},r^{-1}\mathfrak{G}_{\alpha{}+\alpha{}'-S+U})r^{-2}\Gamma{}^{\le \alpha{}}D\Gamma{}^{\le \alpha{}'-U}  =_{\mathrm{s}} \text{good}^V_{\alpha{}+\alpha{}'}.
\end{split}
\end{equation}
In passing to the final equality we used
\cref{V-comm-ind-1,D-comm-ind-consequence}. Add \cref{V-comm-ind-1,V-comm-ind-2} to
establish \cref{V-comm} for multi-indices \(\le \alpha{} + \alpha{}'\).

\step{Step 2: Proof of \cref{V2-comm} given \cref{U-comm,V-comm}.} We now show that
\cref{V2-comm} holds for multi-indices \(\le \alpha{}\) if \cref{U-comm,V-comm} do.
\begin{equation}\label{V2-comm-ind-1}
\begin{split}
V[V,\Gamma{}^{\le \alpha{}}] &=_{\mathrm{s}} V^2\Gamma{}^{\le \alpha{}-S} + \mathcal{O}(\mathfrak{B}_{\alpha{}-S+V+V},r^{-1}\mathfrak{G}_{\alpha{}-S+U+V})r^{-2}VD\Gamma{}^{\le \alpha{}-U} \\
&=_{\mathrm{s}} \text{good}_{\alpha{}}^{V^2} + \mathcal{O}(\mathfrak{B}_{\alpha{}-S+V+V},r^{-1}\mathfrak{G}_{\alpha{}-S+U+V})r^{-2}[D,V]\Gamma{}^{\le \alpha{}-U} \\
&=_{\mathrm{s}} \text{good}_{\alpha{}}^{V^2}.
\end{split}
\end{equation}
To pass to the last line we used \cref{U-V-comm} and absorbed the
\(r^{-2}\mathfrak{g}_0\) term into the \(\mathcal{O}\) term. Use
\cref{V2-comm-ind-1} to establish
\begin{equation}\label{V2-comm-ind-2}
[V,\Gamma^{\le \alpha{}}]V =_{\mathrm{s}} V\Gamma^{\le \alpha{}-S}V + \text{good}_\alpha^{V^2} =_{\mathrm{s}} V[V,\Gamma^{\le \alpha{}-S}] + \text{good}_\alpha^{V^2}
\end{equation}
Sum \cref{V2-comm-ind-1,V2-comm-ind-2} to conclude \cref{V2-comm} for multi-indices
\(\le \alpha{}\).
\end{proof}
\begin{proof}[Proof of \cref{U-rearrangement-formula}]
Write \(L = L_1UL_2\) for \(L_i\in \Gamma^{\alpha_i}\) and \(\alpha{} = \alpha_1 + \alpha_2 + U\), where
\(\alpha_i\ge 0\). Set \(L' = L_1L_2\), and use \cref{U-comm} to compute \(L - UL'
= -[U,L_1]L_2\), noting that \(L_1\in \Gamma{}^{\le \alpha{}-U}\):

\begin{equation}
\begin{split}
L-UL' &=_{\mathrm{s}} \mathcal{O}(\mathfrak{B}_{\alpha{}-U-S+V},r^{-1}\mathfrak{G}_{\alpha{}-S})[U\Gamma{}^{\le \alpha{}-U-S} + r^{-2}\mathfrak{g}_{\alpha{}-U-V}D\Gamma{}^{\le \alpha{}-U-V} + r^{-1}\mathfrak{G}_{\alpha{}-S}D\Gamma{}^{\le \alpha{}-U-S}] \\
&= \mathcal{O}(\mathfrak{B}_{\alpha{}-U-S+V},r^{-1}\mathfrak{G}_{\alpha{}-S})[U\Gamma{}^{\le \alpha{}-U-V} + r^{-2}\mathfrak{g}_{\alpha{}-U-V}V\Gamma{}^{\le \alpha{}-U-V} + r^{-1}\mathfrak{G}_{\alpha{}-S}V\Gamma{}^{\le \alpha{}-U-S}]
\end{split}
\end{equation}
\end{proof}
\begin{proof}[Proof of \cref{rearrangement-formula}]
If \(\abs{\alpha{}} \le 1\), then this is immediate because \(L = L'\). The operators \(L\)
and \(L'\) are made up of the same vector fields, but in possibly different
orders. We can reorder \(L\) into \(L'\) by performing finitely many swaps of
two adjacent vector fields. It is therefore enough to consider \(L =
L_1\Gamma_1\Gamma_2L_2\) and \(L' = L_1\Gamma_2\Gamma_1L_2\), where \(L_i\in
\Gamma^{\alpha_i}\) and \(\Gamma_i\in \Gamma^{\beta_i}\) for \(\abs{\beta_i} =
1\) and \(\alpha_i\ge 0\) and \(\alpha{} = \alpha_1 + \alpha_2 + \beta_1 +
\beta_2\). Inspecting \cref{D-comm-1} shows that
\begin{equation}
[\Gamma{}_1,\Gamma{}_2] =_{\mathrm{s}} [1 + r^{-2}\mathfrak{g}_{\beta{}_1+\beta{}_2-U-V} + r^{-1}\mathfrak{G}_{\beta{}_1+\beta{}_2-S} + \mathfrak{B}_{\beta{}_1+\beta{}_2-S}]D =_{\mathrm{s}} \mathcal{C}_{\le \beta{}_1+\beta{}_2-S}D,
\end{equation}
and so
\begin{equation}\label{rearrangement-step-1}
\begin{split}
L - L' &= L_1[\Gamma{}_1,\Gamma{}_2]L_2 =_{\mathrm{s}} \mathcal{C}_{\le \alpha{}_1+\beta{}_1+\beta{}_2-S}\Gamma{}^{\le \alpha{}_1}D\Gamma{}^{\le \alpha{}_2} =_{\mathrm{s}} \mathcal{C}_{\le \alpha{}-S}D\Gamma{}^{\le \alpha{}_1+\alpha{}_2} =_{\mathrm{s}} \mathcal{C}_{\le \alpha{}-S}D\Gamma{}^{\le \alpha{}-\beta{}_1-\beta{}_2}.
\end{split}
\end{equation}
In the final equality we used the consequence \([D,\Gamma^\beta{}] =_{\mathrm{s}}
\mathcal{C}_{<\beta{}}D\Gamma^{\le \beta{}-1}\) of \cref{D-comm}. If \(\beta_1 =
\beta_2\), then \(\Gamma_1 = \Gamma_2\), so \(L - L' = 0\). We can therefore
suppose \(\beta_1\neq{}\beta_2\), in which case \(\beta_1 + \beta_2 \ge U + V\). This concludes the
proof.
\end{proof}
\begin{proof}[Proof of \cref{UV-box-lemma}]
We first prove \cref{vU-box}. Use \cref{box-uv-gauge} to compute
\begin{equation}\label{box-uv-gauge-prep}
\partial_vU = \partial_v(-\nu{})^{-1}\partial{}_u + (-\nu{})^{-1}\partial{}_u\partial{}_v =  \kappa{}\Box{} + \frac{2(\varpi{}-\mathbf{e}^2/r)}{r^2}\kappa{}U - r^{-1}(\lambda{}U - \partial{}_v)
\end{equation}
and use \cref{dv-in-terms-of-V-and-U} to conclude.

Now we prove \cref{UV-box}. We first prove the formula with \(VU\) on the left
side. By \cref{box-H-gauge} and \(U = -\underline{\partial{}}_r\), we have
\begin{equation}\label{dv-U-box-close}
\underline{\partial{}}_vU = \kappa{}\Box{} + \kappa{}(1-\mu{})U^2 -  r^{-1}\kappa{}(2-2r^{-1}\varpi{})U + r^{-1}\underline{\partial{}}_v =_{\mathrm{s}} \mathfrak{b}_0[\Box{} + r^{-1}D + U^2].
\end{equation}
By \cref{vU-box}, we have
\begin{equation}\label{dv-U-box-far}
\mathbf{1}_{r\ge R_0}\overline{\partial}_rU =_{\mathrm{s}} \mathfrak{b}_0[\Box{} + r^{-1}D].
\end{equation}
The claim with \(UV\) on the left now follows from
\cref{dv-U-box-close,dv-U-box-far} and the expression \(V =
\chi{}\underline{\partial{}}_v + (1-\chi{})\overline{\partial}_r\). To complete
the proof, note that \([U,V] =_{\mathrm{s}} \mathfrak{b}_0r^{-1}D\) by \cref{UV-comm-prep}.
\end{proof}
\begin{proof}[Proof of \cref{wave-comm-formula-UVS}]
A direct computation shows that
\begin{equation}
\begin{split}
[\Box{},U] + \frac{2(\varpi{}-\mathbf{e}^2/r)}{r^2}U^2 &=
\frac{U\kappa{}}{\kappa{}}\Box{}  + \frac{2}{r^2}\Bigl[-1 + \frac{2\varpi{}}{r} + \frac{1}{2}\frac{rU\kappa{}}{\kappa{}}\bigl(1 - \frac{\mathbf{e}^2}{r^2}\bigr)\Bigr]U + \frac{1}{r^2}\frac{1}{\kappa{}}\underline{\partial}_v \\
&=_{\mathrm{s}} \mathfrak{b}_U[\Box{} + r^{-2}V + r^{-2}U],
\end{split}
\end{equation}
which is \cref{U-comm-formula}.

A similar computation shows that
\begin{equation}\label{V-comm-prep-1}
\begin{split}
\chi{}[\Box{},\underline{\partial{}}_v] &= \frac{\chi{}\underline{\partial}_v\kappa{}}{\kappa{}}\Box{} + \frac{\chi{}\underline{\partial{}}_v\lambda{}}{\kappa{}}U^2 - \frac{2}{r}\Bigl[-r^{-1}\chi{}\underline{\partial{}}_v\varpi{} + \frac{\chi{}\underline{\partial}_v\kappa{}}{\kappa{}}(1 - \varpi{}/r)\Bigr]U \\
&=_{\mathrm{s}} \mathbf{1}_{r\le 2\Rc}\mathfrak{b}_V[\Box{} + r^{-2}V + r^{-2}U + U^2]
\end{split}
\end{equation}
where we used the fact that \(\mathfrak{b}_V\) can depend on \(\Rc\). Similarly,
one computes
\begin{equation}\label{V-comm-prep-2}
\begin{split}
(1-\chi{})[\Box{},\overline{\partial}_r] - (1-\chi{})\frac{2(\varpi{}-\mathbf{e}^2/r)}{r^2}\overline{\partial}_r^2 &= \frac{(1-\chi{})\overline{\partial}_r(-\gamma{})}{(-\gamma{})}\Box{} + \frac{2}{r^2}\Bigl[-1 + \frac{2\varpi{}}{r} + \frac{1}{2}\frac{r\overline{\partial}_r(-\gamma{})}{(-\gamma{})}\bigl(1 - \frac{\mathbf{e}^2}{r^2}\bigr)\Bigr](1-\chi{})\overline{\partial}_r\\
&\qquad  - \frac{1}{r^2}\frac{1}{(-\gamma{})}(1-\chi{})\overline{\partial}_u \\
&=_{\mathrm{s}} \mathfrak{b}_V[\Box{} + r^{-2}\mathfrak{g}_VV + r^{-2}U].
\end{split}
\end{equation}
A direct computation shows that
\begin{equation}\label{V-comm-prep-3}
G\overline{\partial}_r^2  =V^2 + \mathcal{E}
\end{equation}
for \(G = (1 + 2\chi{}(\lambda{}-1) + \chi^2(\lambda{}-1)^2)\) and \(\mathcal{E}=_{\mathrm{s}} \mathbf{1}_{r\le
2\Rc}\mathfrak{b}_V[D + U^2]\). Since \(r(\lambda{}-1) =_{\mathrm{s}} \mathfrak{B}_0^{\circ
}\), we find that \(G\) does not vanish and \((1-\chi{})G^{-1} =_{\mathrm{s}}
\mathbf{1}_{r\ge \Rc}\mathfrak{B}_0^{\circ }\) when \(\Rc\) is large enough
depending on \(\mathfrak{B}_0^{\circ }\). We next compute
\begin{equation}\label{V-comm-prep-4}
[\Box{},V] = \chi{}[\Box{},\underline{\partial{}}_v] + (1-\chi{})[\Box{},\overline{\partial}_r] + [\Box{},\chi{}](\underline{\partial{}}_v - \overline{\partial}_r)
\end{equation}
and
\begin{equation}\label{V-comm-prep-5}
[\Box{},\chi{}] =_{\mathrm{s}} \mathbf{1}_{\Rc\le r\le 2\Rc}\mathfrak{B}_0^{\circ }[1 + D]
\end{equation}
for \(\Rc\) large enough depending on \(\mathfrak{B}_0^{\circ }\) (due to the
occurrence of \(F^{-1}\) for \(F =_{\mathrm{s}} \mathbf{1}_{r\ge \Rc}(1 +
\chi{}(1-\lambda{})^{-1})\)). Since \(\mathbf{1}_{\Rc\le r\le
2\Rc}(\underline{\partial{}}_v - \overline{\partial}_r) =_{\mathrm{s}} \mathfrak{b}_0D\), we can use
\cref{UV-box} to compute
\begin{equation}\label{V-comm-prep-6}
[\Box{},\chi{}](\underline{\partial{}}_v - \overline{\partial}_r) =_{\mathrm{s}} \mathbf{1}_{\Rc\le r\le 2\Rc}\mathfrak{B}_0^{\circ }[V^2 + D + U^2]
\end{equation}
Combine
\cref{V-comm-prep-1,V-comm-prep-2,V-comm-prep-3,V-comm-prep-4,V-comm-prep-5,V-comm-prep-6}
to obtain \cref{V-comm-formula}.

Write \(\underline{S} \coloneqq{} v\underline{\partial{}}_v\) and \(\overline{S} \coloneqq{}
r\overline{\partial}_r + u\overline{\partial}_u\). One computes explicitly
\begin{equation}
[\Box{},\underline{S}] = \Bigl(1 + \frac{\underline{S}\kappa{}}{\kappa{}}\Bigr)\Box{} - \frac{2}{r^2}\Bigl[-\underline{S}\varpi{} + \Bigl(1 + \frac{\underline{S}\kappa{}}{\kappa{}}\Bigr)(r-\varpi{})\Bigr]U + \frac{1}{\kappa{}}(\underline{S}\lambda{} + \lambda{})U^2.
\end{equation}
and
\begin{equation}\label{S-far-commutation}
[\Box{},\overline{S}] = \Bigl(2 + \frac{\overline{S}(-\gamma{})}{(-\gamma{})}\Bigr)\Box{} + \frac{\overline{S}(-\nu{})}{(-\gamma{})}\overline{\partial}_r^2 + \frac{2}{r^2}\Bigl[\varpi{} - \overline{S}\varpi{} + \frac{r\overline{S}(-\gamma{})}{(-\gamma{})}(1-r^{-1}\varpi{})\Bigr]\overline{\partial}_r.
\end{equation}
Since
\begin{equation}
[\Box{},S] = \chi{}[\Box{},\underline{S}] + (1-\chi{})[\Box{},\overline{S}] + [\Box{},\chi{}](\underline{S} - \overline{S}),
\end{equation}
and
\begin{equation}
\underline{S} - \overline{S} =_{\mathrm{s}} \mathbf{1}_{\Rc\le r\le 2\Rc}\mathfrak{b}_SDD^{\le 1},
\end{equation}
we conclude \cref{S-comm-formula} using similar arguments to those used in the computation of
\([\Box{},V]\).
\end{proof}
\begin{proof}[Proof of \cref{wave-comm-formula-UVS}]
Schematically write ``\(\text{good}_\alpha{}\)'' for terms appearing on the right side of
\cref{wave-comm-formula-equation} when \(L\in \Gamma^\alpha{}\). With this
notation, \cref{wave-comm-formula-equation} reads
\begin{equation}
[\Box{},L] + \alpha{}_Uf_UUL + \alpha{}_Vf_VVL =_{\mathrm{s}} \text{good}_\alpha{}.
\end{equation}
Let \(\alpha{}\) and \(\alpha{}'\) be multi-indices with \(\abs{\alpha{}},\abs{\alpha{}'} \ge 1\). Suppose that
\cref{wave-comm-formula-equation} holds for all operators in \(\Gamma^{\le
\alpha{}}\) and in \(\Gamma^{\le \alpha{}'}\). We will show that if \(L\in
\Gamma^\alpha{}\) and \(L'\in \Gamma^{\alpha{}'}\), then
\cref{wave-comm-formula-equation} holds for \(L L'\in \Gamma^{\alpha{} +
\alpha{}'}\).

\step{Step 0: The base cases \(\abs{\alpha{}}\le 1\).} If \(\abs{\alpha{}} = 0\), then \(L = 1\), so both sides
of \cref{wave-comm-formula-equation} vanish. If \(\abs{\alpha{}} = 1\), then \(L
\in \set{U,V,S}\), and so \cref{wave-comm-formula-equation} follows from inspection
of \cref{wave-comm-formula-UVS}. From now on we can suppose that \(\abs{\alpha{}},\abs{\alpha{}'}\ge 1\).

\step{Step 1: Differentiating the right side of \cref{wave-comm-formula-equation}.} We
will use \cref{D-comm} to show that
\begin{equation}\label{wc-ind-step-1}
\Gamma^{\le \alpha{}}\text{good}_{\alpha{}'} =_{\mathrm{s}} \text{good}_{\alpha{}+\alpha{}'}.
\end{equation}
To start, we have
\begin{equation}\label{wc-ind-1-0}
\begin{split}
\Gamma^{\le \alpha{}}\text{good}_{\alpha{}'} &=_{\mathrm{s}} \mathfrak{b}_{\alpha{}+\alpha{}'}[\underbrace{\Gamma{}^{\le \alpha{}}\Gamma{}^{\le \alpha{}'-1}\Box{}}_{\text{(I)}} + \underbrace{r^{-1}\mathfrak{g}_{\alpha{}+\alpha{}'}\Gamma{}^{\le \alpha{}}V^2\Gamma{}^{\le \alpha{}'-S}}_{\text{(II)}} \\
&\qquad + \underbrace{r^{-2}\mathfrak{g}_{\alpha{}+\alpha{}'}\Gamma{}^{\le \alpha{}}V\Gamma{}^{\le \alpha{}'-1}}_{\text{(III)}} + \underbrace{r^{-2}\mathfrak{g}_{\alpha{}+\alpha{}'}\Gamma{}^{\le \alpha{}}U\Gamma{}^{<\alpha{}'}}_{\text{(IV)}}].
\end{split}
\end{equation}
We handle the terms in the square brackets one by one. Clearly
\begin{equation}\label{wc-ind-I}
\text{(I)} =_{\mathrm{s}} \Gamma{}^{\le \alpha{}+\alpha{}'-1}\Box{} =_{\mathrm{s}} \text{good}_{\alpha{}+\alpha{}'}.
\end{equation}
We now develop term \(\text{(II)}\) using \cref{D-comm}:
\begin{equation}\label{wc-ind-II}
\begin{split}
\text{(II)} &=_{\mathrm{s}} r^{-1}\mathfrak{g}_{\alpha{}+\alpha{}'}[V^2\Gamma{}^{\le \alpha{}+\alpha{}'-S} + \mathcal{C}_{<\alpha{}+V}r^{-2}D\Gamma{}^{\le \alpha{}+\alpha{}'-U-S+V}] \\
&=_{\mathrm{s}} \text{good}_{\alpha{}+\alpha{}'} + r^{-1}\mathfrak{b}_{\alpha{}+\alpha{}'}\mathfrak{g}_{\alpha{}+\alpha{}'}r^{-2}D\Gamma{}^{\le \alpha{}+\alpha{}'-1} =_{\mathrm{s}} \text{good}_{\alpha{}+\alpha{}'}\\
\end{split}
\end{equation}
To pass to the second line we used the fact that term \(\text{(II)}\) vanishes
unless \(\alpha{}'_S > 0\), so
\begin{equation}
\mathcal{C}_{<\alpha{} + V}=_{\mathrm{s}} \mathcal{C}_{<\alpha{} + \alpha{}'-S + V} = \mathcal{C}_{<\alpha{} + \alpha{}'} =_{\mathrm{s}} \mathfrak{b}_{\alpha{}+\alpha{}'}.
\end{equation}
From \cref{D-comm}, we handle term \(\text{(III)}\):
\begin{equation}\label{wc-ind-III}
\begin{split}
\text{(III)} &=_{\mathrm{s}} r^{-2}\mathfrak{g}_{\alpha{}+\alpha{}'}[V\Gamma{}^{\le \alpha{}+\alpha{}'-1} + \mathcal{C}_{<\alpha{}}r^{-2}D\Gamma{}^{\le \alpha{}+\alpha{}'-U-1}] \\
&=_{\mathrm{s}} \text{good}_{\alpha{}+\alpha{}'} + r^{-2}\mathfrak{b}_\alpha{}\mathfrak{g}_{\alpha{}+\alpha{}'}D\Gamma{}^{\le \alpha{}+\alpha{}'-U-1} =_{\mathrm{s}} \text{good}_{\alpha{}+\alpha{}'}, \\
\end{split}
\end{equation}
where in passing to the second line we recalled that \(\mathcal{C}_{<\alpha{}}=_{\mathrm{s}}
\mathfrak{b}_\alpha{}\) (see \cref{b-alpha-def,C-alpha-def}). We handle term
\(\text{(IV)}\) similarly using \cref{D-comm}:
\begin{equation}\label{wc-ind-IV}
\begin{split}
\text{(IV)} &=_{\mathrm{s}} r^{-2}\mathfrak{g}_{\alpha{}+\alpha{}'}[U\Gamma{}^{<\alpha{}+\alpha{}'} + \mathcal{C}_{<\alpha{}}[U\Gamma{}^{<\alpha{}+\alpha{}'-S} + r^{-2}\mathfrak{G}_{<\alpha{}}D\Gamma{}^{<\alpha{}+\alpha{}'-V} + r^{-1}\mathfrak{G}_{<\alpha{}}D\Gamma{}^{<\alpha{}+\alpha{}'-S}]] \\
&=_{\mathrm{s}} \text{good}_{\alpha{}+\alpha{}'} + r^{-2}\mathfrak{b}_\alpha{}\mathfrak{g}_{\alpha{}+\alpha{}'}[U\Gamma{}^{<\alpha{}+\alpha{}'} + V\Gamma{}^{\le \alpha{}+\alpha{}'-1}] =_{\mathrm{s}} \text{good}_{\alpha{}+\alpha{}'}.
\end{split}
\end{equation}
Substitute \cref{wc-ind-I,wc-ind-II,wc-ind-III,wc-ind-IV} into \cref{wc-ind-1-0} to
obtain \cref{wc-ind-step-1}.

\step{Step 2: Differentiating \cref{wave-comm-formula-equation}.} In
this step we show that
\begin{equation}\label{wc-ind-step-2}
L[\Box{},L'] + \alpha{}_{U}'f_U ULL' + \alpha{}_{V}'f_V V L L' =_{\mathrm{s}} \text{good}_{\alpha{}+\alpha{}'}.
\end{equation}
Let \(X\in \set{U,V}\). Since \(f_U,f_V\in r^{-2}\mathfrak{B}_0^{\circ }\mathfrak{g}_0\), we have \(\Gamma^{\le
\alpha{}}f_{X}=_{\mathrm{s}}r^{-2}\mathfrak{B}_{\alpha{}}\mathfrak{g}_{\alpha{}}=_{\mathrm{s}} r^{-2}\mathfrak{G}_\alpha{}\). Use
this to compute
\begin{equation}\label{wc-ind-2}
\begin{split}
[\Gamma{}^{\le \alpha{}},f_{X}X] &=_{\mathrm{s}} [\Gamma{}^{\le \alpha{}},f_{X}]X + f_{X}[\Gamma{}^{\le \alpha{}},X] =_{\mathrm{s}} \sum_{\substack{\alpha{}_1+\alpha{}_2\le \alpha{} \\ \abs{\alpha{}_1}\ge 1}}(\Gamma{}^{\alpha{}_1}f_{X})\Gamma{}^{\alpha{}_2}X + f_{X}[\Gamma{}^{\le \alpha{}},X] \\
&=_{\mathrm{s}} \sum_{\substack{\alpha{}_1+\alpha{}_2 \le  \alpha{}}}(\Gamma{}^{\alpha{}_1}f_{X})[X,\Gamma{}^{\alpha{}_2}] + \sum_{\substack{\alpha{}_1+\alpha{}_2 \le  \alpha{} \\ \abs{\alpha{}_1}\ge 1}}(\Gamma{}^{\alpha{}_1}f_{X})X\Gamma{}^{\alpha{}_2} \\
&=_{\mathrm{s}} (\Gamma{}^{\le \alpha{}}f_{X})[X\Gamma{}^{\le \alpha{}-1} + [X,\Gamma{}^{\le \alpha{}}]] =_{\mathrm{s}} r^{-2}\mathfrak{G}_\alpha{}[X\Gamma{}^{\le \alpha{}-1} + [X,\Gamma{}^{\le \alpha{}}]].
\end{split}
\end{equation}
Using \cref{wc-ind-2,D-comm} and \(\mathfrak{G}_\alpha{} =_{\mathrm{s}} \mathfrak{g}_{\alpha{} + \alpha{}'}\)
(since \(\abs{\alpha{}'} > 1\) implies \(\alpha{} < \alpha{} + \alpha{}'\)), we compute
\begin{equation}\label{wc-ind-3}
\begin{split}
&L(\alpha{}_U'f_UUL' + \alpha{}_V'f_V V L') - \alpha{}_{U}'f_U ULL' - \alpha{}_{V}'f_V V L L' \\
&=_{\mathrm{s}} \alpha{}_U'[L,f_UU]L' + \alpha{}_{V}'[L,f_V V]L' =_{s} [L,f_DD]L' \\
&=_{\mathrm{s}} r^{-2}\mathfrak{G}_\alpha{}[X\Gamma{}^{\le \alpha{}+\alpha{}'-1} + [X,\Gamma{}^{\le \alpha{}}]L'] \\
&=_{\mathrm{s}} r^{-2}\mathfrak{g}_{\alpha{}+\alpha{}'}[X\Gamma{}^{\le \alpha{}+\alpha{}'-1} + [X,\Gamma{}^{\le \alpha{}}]L'] \\
&=_{\mathrm{s}} \text{good}_{\alpha{}+\alpha{}'} + r^{-2}\mathfrak{g}_{\alpha{}+\alpha{}'}\mathcal{C}_{<\alpha{}}D\Gamma{}^{\le \alpha{}+\alpha{}'-U} =_{\mathrm{s}} \text{good}_{\alpha{}+\alpha{}'}. \\
\end{split}
\end{equation}
To complete the proof of \cref{wc-ind-step-2}, combine
\cref{wave-comm-formula-equation} for the multi-index \(\alpha{}\) with
\cref{wc-ind-3}, and then apply \cref{wc-ind-step-1}, recalling that \(L\in \Gamma^\alpha{}\):
\begin{equation}
\begin{split}
L[\Box{},L'] + \alpha{}_{U}'f_U ULL' + \alpha{}_{V}'f_V V L L' &= L([\Box{},L'] + \alpha{}_U'f_UUL' + \alpha{}_V'f_V V L') \\
&\qquad - \bigl(L(\alpha{}_U'f_UUL' + \alpha{}_V'f_V V L') - \alpha{}_{U}'f_U ULL' - \alpha{}_{V}'f_V V L L'\bigr) \\
&=_{\mathrm{s}} L(\text{good}_{\alpha{}'}) + \text{good}_{\alpha{}+\alpha{}'} =_{\mathrm{s}} \text{good}_{\alpha{}+\alpha{}'}.
\end{split}
\end{equation}

\step{Step 3: Acting on \cref{wave-comm-formula-equation} from the right.} We now show
that
\begin{equation}\label{wc-ind-step-3}
[\Box{},L]L' + \alpha{}_Uf_UUL L' + \alpha{}_Vf_VVLL' =_{\mathrm{s}} \text{good}_{\alpha{}+\alpha{}'}.
\end{equation}
Act on both sides of \cref{wave-comm-formula-equation} for \(L\in \Gamma{}^\alpha{}\) with \(L'\in
\Gamma{}^{\alpha{}'}\) on the right to get
\begin{equation}\label{wc-ind-4}
\begin{split}
[\Box{},L]L' + \alpha{}_Uf_UUL L' + \alpha{}_Vf_VVLL' &=_{\mathrm{s}} \mathfrak{b}_\alpha{}\Gamma{}^{\le \alpha{}-1}\Box{}L' + \text{good}_{\alpha{}+\alpha{}'} =_{\mathrm{s}} \mathfrak{b}_\alpha{}\Gamma{}^{\le \alpha{}-1}[\Box{},L']  + \text{good}_{\alpha{}+\alpha{}'}.
\end{split}
\end{equation}
By \cref{wc-ind-step-2,D-comm} and the inductive hypothesis, we have \(\Gamma{}^{\le \alpha{}-1}[\Box{},L'] =
_s \text{good}_{\alpha{}+\alpha{}'}\), and substituting this into \cref{wc-ind-4}
proves \cref{wc-ind-step-3}.

\step{Step 4: Completing the induction.} Add \cref{wc-ind-step-2,wc-ind-step-3} to get
\begin{equation}
\begin{split}
[\Box{},LL'] + (\alpha{}_U + \alpha{}_U')f_U UL L ' +  (\alpha{}_V + \alpha{}_V')f_V VL L' &= L[\Box{},L'] + \alpha{}_{U}'f_U ULL' + \alpha{}_{V}'f_V V L L'\\
&\qquad + [\Box{},L]L' + \alpha{}_Uf_UUL L' + \alpha{}_Vf_VVLL' \\
&=_{\mathrm{s}} \text{good}_{\alpha{}+\alpha{}'}.
\end{split}
\end{equation}
We have now shown that \cref{wave-comm-formula-equation} for \(L\in
\Gamma^\alpha{}\) and \(L'\in \Gamma^\alpha{}\) implies
\cref{wave-comm-formula-equation} for \(L L'\in \Gamma^{\alpha{} + \alpha{}'}\).
Since we established \cref{wave-comm-formula-equation} for \(\abs{\alpha{}}\le 1\)
in Step 0, an induction argument concludes \cref{wave-comm-formula-equation} for
all \(\alpha{}\).
\end{proof}
\begin{proof}[Proof of \cref{wave-pointwise-inequalities}]
We can assume \(\abs{\alpha{}}\ge 1\), since the case \(\alpha{}=0\) is trivial.

\step{Step 1: Proof of \cref{wave-pointwise-1}.} Observe that \cref{wave-pointwise-1} is a
consequence of \cref{wave-comm-formula,V-dv-coordinate-change} and the following
two identities:
\begin{align}
V &=_{\mathrm{s}} \mathfrak{b}_0[\partial{}_v + \mathbf{1}_{r\le 2\Rc}U], \label{wave-pointwise-inequalities-step-2} \\
rV^2\Gamma{}^{\le \alpha{}-S} &=_{\mathrm{s}} \mathfrak{B}_0[\mathbf{1}_{r\ge \Rc}\partial{}_v(r\Gamma{}^{<\alpha{}}\varphi{}) + \partial{}_v\Gamma{}^{<\alpha{}}\varphi{} + \mathbf{1}_{r\le 2\Rc}U\Gamma{}^{<\alpha{}}\varphi{}] \label{wave-pointwise-inequalities-step-3}.
\end{align}
First, recall that \cref{wave-pointwise-inequalities-step-2} was proved in
\cref{V-dv-coordinate-change}. To prove \cref{wave-pointwise-inequalities-step-3}, it
is enough to consider the quantity \(\mathbf{1}_{r\ge \Rc}rV^2\Gamma{}^{\le
\alpha{}-S}\), since \(\mathbf{1}_{r\le 2\Rc}rV^2\Gamma^{\le \alpha{}-S} =_{\mathrm{s}}
\mathfrak{B}_0V\Gamma^{\le \alpha{}-S + V} =_{\mathrm{s}}
\mathfrak{B}_0V\Gamma^{<\alpha{}}\) is handled by
\cref{wave-pointwise-inequalities-step-2}. Use
\cref{wave-pointwise-inequalities-step-2} to rewrite the first \(V\) in terms of
\(\partial_v\) and \(U\), then use \(\partial{}_vr = \lambda{} =_{\mathrm{s}}
\mathfrak{b}_0\) to commute \(r\) past \(\partial_v\), and note that the term
beginning with \(U\) is supported in \(\set{r\le 2\Rc}\).

\step{Step 2: Proof of \cref{wave-pointwise-2}.} By the good sign \(f_U\ge 0\) and the
triangle inequality, we have
\begin{equation}
\begin{split}
r^{2-s}UL\varphi{}\Box{}L\varphi{} &= -\alpha_Ur^{2-s}f_U(UL\varphi{})^2 - \alpha_Vr^{2-s}f_VUL\varphi{}VL\varphi{} \\
&\qquad + r^{2-s}(UL\varphi{})(\Box{}L\varphi{} + \alpha{}_uf_U UL\varphi{} + \alpha{}_Vf_V VL\varphi{}) \\
&\le \underbrace{\alpha{}_V\abs{r^{2-s}f_V}\abs{UL\varphi{}}\abs{VL\varphi{}}}_{\coloneqq{}\text{(I)}} + \underbrace{r^{2-s}\abs{UL\varphi{}}\abs{\Box{}L\varphi{} + \alpha{}_Uf_U UL\varphi{} + \alpha{}_Vf_VVL\varphi{}}}_{\coloneqq{}\text{(II)}}.
\end{split}
\end{equation}
Term \(\text{(I)}\) vanishes unless \(\alpha_V > 0\), namely unless \(L\) contains a
 \(V\). In this case, use \cref{rearrangement-formula} and the wave equation
 \cref{UV-box} to rewrite
\begin{equation}\label{wave-pointwise-inequalities-step-4}
\mathbf{1}_{\alpha{}_{V}>0}UL\varphi{} =_{\mathrm{s}} \mathfrak{b}_\alpha{}[\Box{}\Gamma{}^{<\alpha{}}\varphi{} + D\Gamma{}^{<\alpha{}}].
\end{equation}
Control term \(\text{(I)}\) using
\cref{wave-comm-formula,wave-pointwise-inequalities-step-2,wave-pointwise-inequalities-step-4}
and \(\abs{r^{2-s}f_V} \le C(\mathcal{G}_{\alpha,s})\). For term
\(\text{(II)}\), use
\cref{wave-comm-formula,wave-pointwise-1,wave-pointwise-inequalities-step-4}.

\step{Step 3: Proof of \cref{wave-pointwise-3}.} By the triangle inequality, we have
\begin{equation}
\begin{split}
r^{2-s}\abs{\partial{}_vL\varphi{}}\abs{\Box{}L\varphi{}}&\le \underbrace{\alpha{}_U\abs{r^{2-s}f_U}\abs{UL\varphi{}}\abs{\partial{}_vL\varphi{}}}_{\coloneqq{}\text{(I)}} + \underbrace{r^{2-s}\abs{\partial{}_vL\varphi{}}\abs{\Box{}L\varphi{} + \alpha{}_Uf_U UL\varphi{} + \alpha{}_Vf_V VL\varphi{}}}_{\coloneqq{}\text{(II)}}  \\
&\qquad + \underbrace{\alpha{}_V\abs{r^{2-s}f_V}\abs{\partial{}_vL\varphi{}}\abs{VL\varphi{}}}_{\coloneqq{}{\text{(III)}}}. \\
\end{split}
\end{equation}
Handle terms \(\text{(I)}\) and \(\text{(II)}\) as in Step 2. In particular, for
term \(\text{(I)}\) we bring the \(U\) that \(L\) contains to the front using
\cref{rearrangement-formula}, and then use the wave equation \(\partial_vU =_{\mathrm{s}}
\mathfrak{b}_0[\Box{} + r^{-1}D]\) from \cref{vU-box}. It is important here that
\(\partial_vU\) produces a wave equation with no \(U^2\)-term; such a term would
produce \(U^2\Gamma{}^{\le \alpha{}-U} =_{\mathrm{s}} U\Gamma{}^{\le \alpha{}}\), and this
top-order term is not admissible. The analogous term in Step 2 was
\(U^2\Gamma^{\le \alpha{}-V} =_{\mathrm{s}} U\Gamma^{<\alpha{}}\), which is a lower order
term. For term \(\text{(III)}\), use \cref{wave-pointwise-inequalities-step-2} and
\(\abs{r^{2-s}f_{V}} \le C(\mathfrak{B}_0^{\circ },\mathfrak{g}_0)\) to get
\begin{equation}\label{wave-pointwise-inequalities-step-7}
\text{(III)}\le \mathbf{1}_{r\ge \Rc}C((\mathfrak{B}_0^{\circ },\mathfrak{g}_0,\alpha{})\abs{\partial_vL\varphi{}}(\abs{UL\varphi{}} + \abs{\partial_vL\varphi{}}).
\end{equation}
\end{proof}
\printbibliography
\par \medskip \begin{tabular}{@{}l@{}} \textsc{Department of Mathematics, Princeton University}\\ \textit{Email address}: \texttt{gm9165@princeton.edu} \end{tabular}
\end{document}